\tikzset{dotmark/.style={circle,fill,inner sep=1.5pt}}
\tikzset{emptymark/.style={circle,draw,fill=white,inner sep=1.5pt}}
\tikzset{crossmark/.style={thick,inner sep=1.5pt}}
\newcommand{\A}{\mathcal{A}}
\newcommand{\B}{\mathcal{B}}
\newcommand{\D}{\mathcal{D}}
\DeclareRobustCommand{\twoheadleadsto}{\tikz[baseline=(a.base)]{%
        \node at (.2,0) {\(\leadsto\)};%
        \fill[white] (-.1,-.1+.012) -- (.25,-.1+.012) -- (.345,0.011) -- (.25,.113) --
        (-.1,.113) --cycle;
        \node at (.125,0) {\(\leadsto\)};%
        \node (a) at (.4/2,-.0) {\phantom{\(\leadsto\)}};%
}}
\def\onto{\twoheadleadsto}
\def\ponto#1{\twoheadleadsto\kern-.3em{}_{#1}\kern.3em}
\def\intvl#1#2{\bm{[}\,#1\,\bm{,}\,#2\,\bm{]}}
\def\fragmentco#1#2{\bm{[}\,#1\,\bm{.\,.}\,#2\,\bm{)}}
\def\fragmentoc#1#2{\bm{(}\,#1\,\bm{.\,.}\,#2\,\bm{]}}
\def\fragmentoo#1#2{\bm{(}\,#1\,\bm{.\,.}\,#2\,\bm{)}}
\def\fragment#1#2{\bm{[}\,#1\,\bm{.\,.}\,#2\,\bm{]}}
\def\position#1{\bm{[}\,#1\,\bm{]}}
\def\mid{\ensuremath :}
\def\emptyset{\varnothing}
\def\Sb{\S_{\beta}}
\def\Sm{\S_{\mu}}
\def\Sf{\S_{\varphi}}
\newcommand{\Oh}{\mathcal{O}}
\newcommand{\tOh}{{\rlap{\raisebox{-0.2ex}{$\widetilde{\phantom{\Oh}}$}}\Oh}}
\newcommand{\cO}{\mathcal{O}}
\newcommand{\cOtilde}{\tOh}
\newcommand{\Ohtilde}{\tOh}
\newcommand{\per}{\operatorname{per}}
\newcommand{\rot}{\operatorname{rot}}
\newcommand{\OccEx}{\mathrm{Occ}}
\newcommand{\OccE}{\mathrm{Occ}}
\newcommand{\OccW}{\mathrm{Occ}^w}
\newcommand{\ed}{\mathsf{ed}}
\newcommand{\edw}{\mathsf{ed}^w}
\newcommand\selfed{\mathsf{self}\text{-}\mathsf{ed}}
\def\edu#1#2{\mathsf{ed}(#1,#2)}
\def\edw#1#2{\mathsf{ed}^w(#1 \rightarrow #2)}
\def\edwk#1#2#3{\mathsf{ed}^w_{\leq #1}(#2 \rightarrow #3)}
\def\edwa#1#2#3{\mathsf{ed}^w_{#1}(#2 \rightarrow #3)}
\def\w#1#2{\ensuremath w(#1 \mapsto #2)}
\newcommand{\Esigma}{\bar{\Sigma}}
\newcommand{\sqEsigma}{\bar{\Sigma}^2}
\newcommand{\edl}[2]{{\ed}(#1,{}^*\!#2^*)}
\def\edwl#1#2{\mathsf{ed}^w(#1 \rightarrow {}^*\!#2^*)}
\def\edws#1#2{\mathsf{ed}^w(#1 \rightarrow {}^*\!#2)}
\def\edwp#1#2{\mathsf{ed}^w(#1 \rightarrow #2^*)}
\newcommand{\eds}[2]{{\ed}(#1,{}^*\!#2)}
\newcommand{\edp}[2]{{\ed}(#1,#2^*)}
\def\pillar{{\tt PILLAR}\xspace}
\def\modelname{{\tt PILLAR}\xspace}
\def\lceOp#1#2{{\tt LCP}(#1, #2)}
\def\lcbOp#1#2{{\tt LCP}^R(#1, #2)}
\def\ipmOp#1#2{{\tt IPM}(#1, #2)}
\def\accOp#1#2{#1\position{#2}}
\def\ipmOpName{{\tt IPM}\xspace}
\def\accOpName{{\tt Access}\xspace}
\def\extractOpName{{\tt Extract}\xspace}
\def\lenOpName{{\tt Length}\xspace}
\def\cycEqOp#1#2{{\tt Rotations}(#1,#2)}
\newcommand{\ceil}[1]{\left\lceil #1 \right\rceil}
\newcommand{\floor}[1]{\left\lfloor #1 \right\rfloor}
\newcommand{\allOccs}{\textsc{\ref{keyxtssyqb}}\xspace}
\newcommand{\growFern}{\textsc{\ref{prob:grow_fern}}\xspace}
\newcommand{\verify}{\textsc{\ref{dputabwsur}}\xspace}
\newcommand{\algverify}{\texttt{Verify}}
\newcommand{\listalloccs}{\texttt{ListAllOccs}}
\newcommand{\apmsmallsed}{\texttt{GrowFern-SmallSED}}
\newcommand{\apmsmalled}{\texttt{GrowFern}}
\def\Q{\ensuremath Q^{\infty}}
\newcommand{\ktot}{\kappa}
\newcommand{\Z}{\mathbb{Z}}
\newcommand{\Zz}{\mathbb{Z}_{\ge 0}}
\newcommand{\Zp}{\mathbb{Z}_{> 0}}
\renewcommand{\S}{\mathcal{S}}
\def\Sb{\S_{\beta}}
\def\Sm{\S_{\mu}}
\def\Sf{\S_{\varphi}}
\newcommand{\Sr}{\hat{S}}
\def\torn{torsion\xspace}
\newcommand{\PDPM}{\textsc{\ref{tvssrtdhgs}}\xspace}
\newcommand{\DPM}{\textsc{\ref{kaqmrblnrj}}\xspace}
\newcommand{\BCDPM}{\textsc{\ref{pupolgyhot}}\xspace}
\newcommand{\I}{\mathcal{I}}
\newcommand{\J}{\mathcal{J}}
\newcommand{\tor}{{\sf tor}}
\newcommand{\shift}{{\sf shift}}
\renewcommand{\P}{\mathcal{P}}
\newcommand{\N}{\mathcal{N}}
\newcommand{\X}{\mathcal{X}}
\newcommand{\Y}{\mathcal{Y}}
\newcommand{\FF}{\mathcal{F}}
\newcommand{\QQ}{\mathcal{Q}}
\newcommand{\val}{\textsf{val}}
\def\sp#1{\textsf{Special}(#1)}
\def\spb#1{\textsf{Special}_{\beta\varphi}(#1)}
\def\rred#1{\textsf{Red}(#1)}
\newcommand{\Comp}{\textsf{Trim}}
\newcommand{\Comps}{\textsf{Tm}}
\newcommand{\comp}{\alpha}
\newcommand{\update}{\textsf{update}}
\newcommand{\rpl}{\textsf{Plain}}
\newcommand{\unpack}{\textsf{Unpack}\xspace}
\newcommand{\blanked}{\textsf{Bleached}\xspace}
\newcommand{\blact}{\textsf{BleachAct}\xspace}
\newcommand{\SM}{\textsc{\ref{prob:sm}}\xspace}
\long\def\SMproblem{
    \begin{problem}[AlignedPeriodicMatches]{Aligned\-Periodic\-Matches{\tt($P$, $T$, $k$, $Q$, $\A_P$, $\A_T$, $w$)}}
    \label{prob:sm}

    \PInput{A pattern $P$ of length $m$,\\
    a positive integer $k$,\\
    a text~$T$ of length $n \in \fragmentco{m-k}{\ceil{\threehalfs m} + k}$,\\
    a primitive string $Q$ of length $q \coloneqq |Q| \le {m}/{128k}$,\\
    an edit-distance alignment
    $\A_P: P \onto Q^\infty\fragmentco{0}{y_P}$
        of cost $d_P\coloneqq \edl{P}{Q}=\ed(P,Q^*)\le 16k$,\\
        and an edit-distance alignment
        $\A_T : T \onto Q^\infty{\fragmentco{x_T}{y_T}}$ of cost
        $d_T \coloneqq \edl{T}{Q}\le 48k$, where $x_T \in \fragmentco{0}{q}$.\\
        Additionally, oracle access to a normalized weight function
    \(w : \sqEsigma \to \intvl{0}{W}\).}

    \POutput{The set $\OccW_k(P,T)$.} 
    \end{problem}}
\newcommand{\NPM}{\textsc{\ref{prob:npm}}\xspace}
    \long\def\NPMproblem{
    \begin{problem}[NewPeriodicMatches]{New\-Periodic\-Matches{\tt($P$, $T$, $k$, $d$, $Q$, $\A_P$, $\A_T$)}}
    \label{prob:npm}

    \PInput{A pattern $P$ of length $m$,
    an integer threshold $k \in \fragment{0}{m}$,
    a positive integer $d\ge 2k$,
    a text~$T$ of length $n \in \fragmentco{m-k}{\ceil{\threehalfs m} + k}$,
    a primitive string $Q$ of length $q \coloneqq |Q| \le {m}/{8d}$, an edit-distance alignment
    $\A_P: P \onto Q^\infty\fragmentco{0}{y_P}$
        of cost $d_P\coloneqq \edl{P}{Q}=\ed(P,Q^*)\le d$, and an edit-distance alignment
        $\A_T : T \onto Q^\infty{\fragmentco{x_T}{y_T}}$ of cost
        $d_T \coloneqq \edl{T}{Q}\le 3d$, where $x_T \in \fragmentco{0}{q}$.}
        
    \POutput{The set $\OccE_k(P,T)$ represented as $\cO(d^3)$ disjoint arithmetic progressions with difference $q$.}
    \end{problem}}
\def\alphav{128}
\def\betav{8}
\def\deltavN{3}
\def\deltavD{8}
\pgfmathsetmacro{\betavh}{int(\betav/2)}
\def\threehalfs{{}^3{\mskip -4mu/\mskip -3.5mu}_2\,}
\newcommand{\AG}{\mathsf{AG}}
\newcommand{\AGW}{\mathsf{AG}^w}
\newcommand{\AGw}{\AGW}
\newcommand{\oAGw}{\overline{\mathsf{AG}}^w}
\def\emptystring{\ensuremath\varepsilon}
\def\epsilon{\gamma}
\newcommand{\dist}{\mathsf{dist}}
\newcommand{\Als}{\mathbf{A}} 
\newcommand{\concat}{\ensuremath{\mathtt{concat}}}
\newcommand{\makestring}{\ensuremath{\mathtt{makestring}}}
\newcommand{\splitOp}{\ensuremath{\mathtt{split}}}
\def\pn#1{\textsc{#1}}
\newcommand{\PMED}{\textsc{\ref{bspmdzgntc}}\xspace}
\newcommand{\PMWED}{\textsc{\ref{erbgbxwrvp}}\xspace}
\newcommand{\meq}[1]{\mathop{\smash{\stackrel{#1}{=}}\vphantom{{}^{|^|}}}}
\def\dmat#1#2#3#4{\ensuremath D_{#1,#2}^{\mathstrut
#3}\smash{\raisebox{-.025ex}{\big|}}_{#4}^{\mathstrut}}
\def\fmats#1#2#3#4#5{\ensuremath \FF_{#1,#2}^{\mathstrut #3}\smash{\raisebox{-.025ex}{\big|}}_{#4}^{\mathstrut #5}}
\newcommand{\dP}{\dot{P}}
\newcommand{\dT}{\dot{T}}
\newcommand{\dmati}[4]{\prescript{\circ}{\circ}{\vphantom{D}}\dmat{#1}{#2}{#3}{#4}}
\newcommand{\dmatl}[4]{\prescript{}{\circ}{\vphantom{D}}\dmat{#1}{#2}{#3}{#4}}
\newcommand{\dmatt}[4]{\prescript{\circ}{}{\vphantom{D}}\dmat{#1}{#2}{#3}{#4}}
\newcommand{\fmatsi}[5]{\prescript{\circ}{\circ}{\vphantom{F}}\fmats{#1}{#2}{#3}{#4}{#5}}
\newcommand{\fmatsl}[5]{\prescript{}{\circ}{\vphantom{F}}\fmats{#1}{#2}{#3}{#4}{#5}}
\newcommand{\fmatst}[5]{\prescript{\circ}{}{\vphantom{F}}\fmats{#1}{#2}{#3}{#4}{#5}}
\newcommand{\fmi}{\prescript{\circ}{\circ}{F}}
\newcommand{\fml}{\prescript{}{\circ}{F}}
\newcommand{\fmt}{\prescript{\circ}{}{F}}
\newcommand{\talt}{\hat{t}}
\newcommand{\GPd}{\ensuremath G^{P,d}}
\newcommand{\GPSd}{\ensuremath G^{P,S,d}}
\newcommand{\VPSd}{\ensuremath V^{P,S,d}}
\newcommand{\VPd}{\ensuremath V^{P,d}}
\newcommand{\DPSd}{\ensuremath D^{P,S,d}}
\newcommand{\DPd}{\ensuremath D^{P,d}}
\newcommand{\F}{\mathcal{F}}
\newcommand{\on}[1]{\operatorname{#1}}
\newcommand{\core}[1]{\on{core}(#1)}
\newcommand{\dens}[1]{#1^{\square}}
\newcommand{\mds}{\mathsf{CMO}}
\renewenvironment{cases}{%
    \matrix@check\cases\env@cases
}{%
    \endarray\right.%
}
\def\env@cases{%
    \let\@ifnextchar\new@ifnextchar
    \left\lbrace
        \def\arraystretch{1.1}%
        \array{@{\;}c@{\quad}l@{}}%
}
\newcommand\markf{\ensuremath \mathrm{mk}}
\newcommand\tradeoff{\ensuremath \eta}
\newcommand\ktotm{\ensuremath \hat{\ktot}}
\newcommand\lpref{\ensuremath \partial_P}
\newcommand{\Hv}{\mathbb{H}}
\newcommand{\light}{\mathbb{L}}
\newcommand{\LP}{\mathcal{L}^P}
\newcommand{\LT}{\mathcal{L}^T}
\newcommand{\res}[2]{#1_{#2}}
\newcommand{\Lck}{\mathcal{L}}
\newcommand{\tbd}{13}
\newcommand{\slack}{30}
\def\tbc{32}
\newcommand{\gen}{\mathit{gen}}
\newcommand{\G}{\mathcal{G}}
\newcommand{\Rp}{\mathbb{R}_{\ge 0}}
\newcommand{\Prp}[1]{P_{#1}}
\newcommand{\orig}{\textsf{orig}}
\newcommand{\lmore}{\mathcal{R}}
\newcommand{\mq}{\mathcal{F}}
\newcommand{\redsize}{\rho}
\title{Pattern Matching under Weighted Edit Distance}
\author{Panagiotis Charalampopoulos}{King's College London\\
 United Kingdom}{p.charalampopoulos@kcl.ac.uk}{https://orcid.org/0000-0002-6024-1557}{}
\author{Tomasz Kociumaka}{Max Planck Institute for Informatics\\Saarland Informatics
Campus\\Saarbrücken, Germany}{tomasz.kociumaka@mpi-inf.mpg.de}{https://orcid.org/0000-0002-2477-1702}{}
\author{Philip Wellnitz}{National Institute of Informatics\\The Graduate University for Advanced Studies, SOKENDAI\\Tokyo, Japan}{wellnitz@nii.ac.jp}{https://orcid.org/0000-0002-6482-8478}{}
\authorrunning{P. Charalampopoulos, T. Kociumaka, and P. Wellnitz}
\begin{document}
\pagenumbering{roman}
\maketitle
\begin{abstract}
    In \pn{Pattern Matching with \emph{Weighted} Edits}
    (\PMWED), we are given a pattern $P$ of length~\(m\), a text $T$ of length
    \(n\), a positive threshold $k$, and oracle access to a weight function that
    specifies the costs of edits (depending on the involved characters, and normalized
    so that the cost of each edit is at least $1$).
    The goal is to compute the starting positions of all fragments of $T$ that can be
    obtained from $P$ with edits of total cost at most $k$.
    \PMWED captures typical real-world applications more accurately than its
    unweighted variant (\PMED), where all edits have unit costs.

    Indeed, the textbook $\Oh(nm)$-time algorithm of Sellers [J. Algorithms'80], devised
    in the context of bioinformatics, already accounts for weights.
    Surprisingly, the understanding of \PMWED has not advanced in the last 45
    years.
    In contrast, significant milestones for \PMED include an $\Oh(nk)$-time
    algorithm by Landau and Vishkin [STOC'86, J.~Algorithms'89], an $\Oh(n+k^4 \cdot
    n/m)$-time algorithm by Cole and Hariharan [SODA'98, SICOMP'02], and a recent
    $\cOtilde(n+k^{3.5} \cdot n/m)$-time solution by Charalampopoulos, Kociumaka, and
    Wellnitz [FOCS'22].

    In this work, we examine whether these results can be lifted to \PMWED even though
    (1) the underlying algorithms rely on combinatorial properties specific
    to the unweighted edit distance, and (2) under standard fine-grained complexity
    assumptions, computing the weighted edit
    distance is strictly harder than computing the unweighted edit distance
    [Cassis, Kociumaka, and Wellnitz; FOCS'23]. We obtain three main results:
    \begin{itemize}
        \item a conceptually simple $\Ohtilde(nk)$-time algorithm for
            \PMWED, very different from that of Landau and Vishkin;

        \item a significantly more complicated $\cOtilde(n+k^{3.5} \cdot W^4\cdot
            n/m)$-time algorithm for \PMWED under the assumption that the
            weight function is a metric with integer values between $0$ and $W$; and

        \item an $\cOtilde(n+k^4 \cdot
            n/m)$-time algorithm for \PMWED for the case of arbitrary weights.
    \end{itemize}
    In the setting of metrics with small integer values, we nearly match the
    state of the art for \PMED where \(W=1\).\par
\end{abstract}
\begin{figure}[h!]
    \centering
    \scalebox{1}{

\begin{tikzpicture}

        \pgfmathsetmacro{\sx}{4.2}
        \pgfmathsetmacro{\sy}{1.4}

        \path[fill=color5!5] (\sx*1.2,0) -- (\sx*1.5, \sy*0.75) -- (\sx*3,\sy*3) --
        (\sx*3.2, \sy*3) decorate [decoration={snake,segment length=3mm, amplitude=.25mm}] {--(\sx*3.2,0)} -- cycle;
        \path[fill=color1!5] (0,0) -- (\sx*0.75,0) -- (\sx* 1,\sy*1) -- (\sx*3,\sy*3)
        decorate [decoration={snake,segment length=3mm,amplitude=.25mm}] {--  (\sx*3.2, \sy*3) --
        (\sx*3.2, \sy*3.2) -- (0,\sy*3.2)} -- cycle;

        \node[fill,circle, inner sep=.75pt] (o) at (0,0) {};
        \node[fill,circle, inner sep=1pt] (1) at (\sx*0.75,0) {};
        \node[fill=white,circle, inner sep=.6pt] (2) at  (\sx*0.85714285714,0) {};
        \node[fill,circle, inner sep=.75pt] (3) at  (\sx*1.5,0) {};
        \node[fill,circle, inner sep=1pt] (4) at  (\sx*1,\sy*1) {};
        \node[fill,circle, inner sep=.75pt] (4a) at  (\sx*1,0) {};
        \node[fill,circle, inner sep=.75pt] (4b) at  (0,\sy*1) {};
        \node[draw,fill=white,circle, inner sep=.6pt] (5) at  (\sx*1.2,\sy*1.2) {};
        \node[fill,circle, inner sep=1pt] (5a) at  (\sx*1.2,0) {};
        \node[fill,circle, inner sep=.6pt] (5b) at  (0,\sy*1.2) {};
        \node[fill,circle, inner sep=1pt] (6) at  (\sx*3,\sy*3) {};
        \node[fill,circle, inner sep=.75pt] (6a) at  (\sx*3,0) {};
        \node[fill,circle, inner sep=.75pt] (6b) at  (0,\sy*3) {};
        \node[fill,circle, inner sep=1pt] (7) at  (\sx*1.5,\sy*0.75) {};
        \node[fill,circle, inner sep=.75pt] (7b) at  (0,\sy*0.75) {};

        \draw[very thin,black!30] (3) -- (7) -- (7b);
        \draw[very thin,black!30] (6a) -- (6) -- (6b);
        \draw[very thin,black!30] (5a) -- (5) -- (5b);
        \draw[very thin,black!30] (4a) -- (4) -- (4b);

        \draw (0,\sy*3.2) -- (o) -- (\sx*3.2,0);

        \draw[line width=.5pt, black!70] (6b) -- node[midway, above, scale=0.7, inner sep=0.3ex]{$\Oh(n^2)$ \cite{S80}} (6) -- (\sx*3.2, \sy*3);
        \draw[thin, double=white, color=black!70, double distance=1pt] (1) -- (2)
        -- (5) node[midway, above=1pt, sloped,fill=white,fill
        opacity=.65,scale=.45,outer sep=.5ex] {\phantom{\(\tOh(n + k^{3.5})\)~\cite{ckw22}}} node[pos=.445, above=1pt,
        sloped,scale=.5,inner sep=0.3ex]{
        \(\tOh(n + k^{3.5})\)~\cite{ckw22}\qquad\mbox{}}; 
        \draw[color=c1, line width=1pt] (4) --
        (5) --
        node[pos=.2, above, sloped,
        scale=.8,inner sep=0.3ex] {\bf [This work]\vphantom{ \(()\)}}
        (6) node[midway, above, sloped,
        scale=.7,inner sep=0.3ex] {\(\Ohtilde(nk)\)}; 
        \draw[color=c1, line width=1pt] (o) -- (1)-- (4) node[pos=.318, above, sloped,scale=.7,inner sep=0.3ex] {\(\Ohtilde(n +
            k^4)\)}; 
        \draw[double=white, color=c5, double distance=1pt] (3) -- (6)
        node[midway, below=1pt,
        sloped,align=center,scale=.5,inner sep=0.5ex]
        {SETH-based LB  \cite{bi18} \quad \(k^{2-o(1)}\)}; 
        \draw[color=c5, line width=1pt] (5a) -- (7) node[pos=0.5, below,
        sloped,align=center,scale=.7,inner sep=0.5ex]
        {$k^{2.5-o(1)}$};
        \draw[color=c5, line width=1pt] (7) -- node[pos=0.3, below,
        sloped,align=center,scale=.7,inner sep=0.5ex]
        {APSP-based LB \cite{ckw23}\quad $n^{0.5}k^{1.5-o(1)}$} (6)
        ; 


        \def\slfrac#1#2{\ensuremath{}^{#1}\!/\!{}_{#2}}

        \node[anchor=north east,outer sep=.1em,scale=0.9,rotate=30] at (o)  { $k \approx 1$};
        \node[anchor=north east,outer sep=.1em,scale=0.8,rotate=30] at (1)  { $k \approx n^{1/4}$};
        \node[anchor=north east,outer sep=.1em,scale=0.8,rotate=30,black!50] at (2)  { $k \approx n^{2/7}$};
        \node[anchor=north east,outer sep=.1em,scale=0.8,rotate=30] at (4a)  { $k \approx n^{1/3}$};
        \node[anchor=north east,outer sep=.1em,scale=0.8,rotate=30] at (5a)  { $k \approx n^{2/5}$};
        \node[anchor=north east,outer sep=.1em,scale=0.8,rotate=30] at (3)  { $k \approx n^{1/2}$};
        \node[anchor=north east,outer sep=.1em,scale=0.9,rotate=30] at (6a)  { $k \approx n$};

        \node[anchor=east,outer sep=.5em,scale=0.9] at (o) { $t(n,k) \approx n$};
        \node[anchor=east,outer sep=.5em,scale=0.9] at (4b) { $t(n,k) \approx n^{4/3}$};
        \node[anchor=east,outer sep=.5em,scale=0.9,black!50] at (5b) { $t(n,k) \approx n^{7/5}$};
        \node[anchor=east,outer sep=.5em,scale=0.9] at (6b) { $t(n,k) \approx n^2$};
        \node[anchor=east,outer sep=.5em,scale=0.9] at (7b) { $t(n,k) \approx n^{5/4}$};

        \node[draw,fill=white,circle, inner sep=.6pt] (2) at  (\sx*0.85714285714,0) {};
        \node[draw,fill=white,circle, inner sep=.6pt] (5b) at  (0,\sy*1.2) {};
    \end{tikzpicture}

    \caption{The bounds on the running time $t(n, k)$ of algorithms for the \PMWED problem
        as a function of~$k$ for the important case when $m = \Theta(n)$.
        The scales are logarithmic and sub-polynomial factors are hidden.
        Doubled lines represent bounds for \PMED, included only when they differ by a
        polynomial factor from their \PMWED counterparts.
        The bounds for \PMED remain valid for \PMWED with integer edit weights between $0$ and
        $W=n^{o(1)}$.}\label{fig:plot}
\end{figure}
\clearpage
\thispagestyle{plain}
\tableofcontents

\newpage
\pagenumbering{arabic}

\section{Introduction}

\emph{Pattern matching} (string matching) models an action most of us perform daily when
we search for a term on the web or a word in a document: find all fragments of a
text $T$ that are identical to a pattern~$P$.
While time-optimal solutions for this  fundamental algorithmic problem date back to the
1970s~\cite{MP70,KMP77}, exact pattern matching remains very restrictive.
For example, a single typographical error in the pattern results in all occurrences
being missed.
Hence, both the theory and the practice communities have made great efforts to devise
algorithms that efficiently compute \emph{approximate occurrences}
of \(P\) in \(T\)---fragments of $T$ that are ``close'' to $P$.

Among many ways to quantify the notion of ``closeness'', one of the most natural and
well-studied is
to impose an upper bound $k$ on the edit distance between the pattern and its sought
approximate occurrences.
The (weighted) edit distance \(\edw{X}{Y}\) from a string \(X\) to a string \(Y\) is the
minimum cost of transforming \(X\) into \(Y\) by character edits (insertions, deletions,
and substitutions).
The cost of each edit depends on the edit type and the characters involved, and is
specified through
a weight function $w:\Esigma^2\to \mathbb{R}$, where $\Esigma$ denotes the union of the
alphabet $\Sigma$ and a special symbol $\varepsilon$ representing the empty string.
For $a,b\in \Sigma$, the cost of deleting $a$ is $\w{a}{\varepsilon}$, the cost of
inserting $b$ is $\w{\varepsilon}{b}$,
and the cost of substituting $a$ for $b$ is $\w{a}{b}$.

\begin{problem}[PMwWE]{Pattern Matching with Weighted Edits, PMwWE{\tt($P$, $T$, $k$, $w$)}}
    \label{erbgbxwrvp}
    \PInput{A pattern $P$ of length \(m\), a text $T$ of length \(n\), an
    integer $k > 0$, and oracle access to a weight function
    \(w : \sqEsigma \to \mathbb{R}\) normalized so that \(\w{a}{a} = 0\) and \(\w{a}{b}
    \geq 1\) for \(a \neq b \in \Esigma\).\footnotemark}
    \POutput{The set $\OccW_k(P,T)\coloneqq  \{i \mid \exists_{j \ge i} \;
    \edw{P}{T\fragmentco{i}{j}}\leq k\}$.}
\end{problem}

\footnotetext{Consistently with the literature~\cite{ColeH98,DGHKS23,ckw23,MBCT23},
    our normalization allows the complexity of algorithms to depend on~$k$.
Without this assumption, the weight function could be scaled arbitrarily.}

Applications of approximate pattern matching often incorporate domain-specific knowledge
in the weight function.
For example, in bioinformatics, scoring
matrices are used to
capture the frequency with which edits have been observed to happen
in nucleotide and peptide sequences (for instance, due to mutations).
Further scenarios where some edits are much more likely than others include optical
character recognition and accounting for spelling mistakes.

Unfortunately, apart from the initial \(\Oh(nm)\)-time algorithm for \PMWED~\cite{S80}, algorithmic
results and improvements were---perhaps surprisingly---limited to the important, but significantly
easier and somewhat theoretical special
case of the \emph{unweighted} edit distance \(\ed(X, Y)\), where the weight $\w{a}{b}$ for every two elements $a \neq b$ of $\Esigma$ is
uniformly set to 1.%
\footnote{This variant is often also named after Levenshtein~\cite{Lev66}.}

\begin{problem}[PMwE]{Pattern Matching with Edits, PMwE{\tt($P$, $T$, $k$)}}
    \label{bspmdzgntc}
    \PInput{A pattern $P$ of length \(m\), a text $T$ of length \(n\), and a positive
    integer $k$.}
    \POutput{The set $\OccE_k(P,T)\coloneqq  \{i \mid \exists_{j \ge i} \; \ed(P,
    T\fragmentco{i}{j})\leq k\}$.}
\end{problem}

Inspired by the landmark \(\Oh(nk)\) algorithm for \PMED~\cite{LandauV89}, we ask the following
natural question:

\begin{center}
    {{Does \PMWED admit an $\Ohtilde(nk)$-time solution?}\footnote{The $\cOtilde(\star)$
            notation suppresses factors that are polylogarithmic in
the length of the input strings.} }
\end{center}

Using a completely different and more complicated approach compared to~\cite{LandauV89},
we provide an affirmative answer.

\begin{restatable*}{mtheorem}{stalgmainnk}
    \dglabel{thm:stalgmainnk}[7.9.30-1,lem:PtoF,7.9.30-3,clm:preprocess,7.9.30-2](\PMWED is in $\Ohtilde(nk)$ time)
    \PMWED can be solved in $\cO((n\log m + m \log^2 m) \cdot k)$ time.
\end{restatable*}

While it may seem surprising that a result such as \cref{thm:stalgmainnk} is obtained only now,
good reasons prevented earlier, similar results.
Let us focus on the regime of fairly large
distances, $m^{0.5} \le k \le m$, where the $\Oh(nk)$-time complexity remains the state of
the art for \PMED.
Without much oversimplification, the Landau--Vishkin algorithm~\cite{LandauV89} can be
thought of as performing $\Oh(n/k)$ unweighted edit distance computations, in $\Oh(k^2)$
time each, between $P$ and various fragments of $T$.

To adapt this approach to \PMWED, one needs to compute weighted edit distances instead.
Applying the $\Oh(mk)$-time algorithm~\cite{Ukk85,Mye86}, which was the fastest known
until very recently, one merely recovers the $\Oh(nm)$ running time of \cite{S80}.
The $\Oh(m + k^5)$-time algorithm of Das, Gilbert, Hajiaghayi, Kociumaka, and
Saha~\cite{DGHKS23} is the earliest one that could be adapted to yield a running time of $o(nm)$ for \PMWED using this approach.
A subsequent $\Ohtilde(m + \sqrt{mk^3})$-time
solution by Cassis, Kociumaka, and
Wellnitz~\cite{ckw23} means that we can now expect to solve \PMWED in $\Ohtilde(n\sqrt{mk})$ time.
This complexity, however, constitutes a natural barrier due to the fine-grained
conditional lower bound of  $m^{0.5}k^{1.5-o(1)}$ for computing the weighted edit distance
when $m^{0.5} \le k \le m$~\cite{ckw23}.

Our key result behind \cref{thm:stalgmainnk} is an algorithm that, after $\Ohtilde(mk)$-time preprocessing
of one string ($P$ in our application), computes the weighted edit distance to any other string in $\Ohtilde(k^2)$ time.
Crucially, we use the SMAWK algorithm~\cite{SMAWK} for the
$(\min,+)$-multiplication of a Monge matrix with a vector as well as
the Multiple-Source Shortest Paths (MSSP) data structure of Klein~\cite{MSSP} for
efficient computations of distances in planar graphs.
Consult \cref{sec:techov_nk} for a more detailed description of our ideas and
\cref{sec:nk} for the full technical details.

A surprising consequence of \cref{thm:stalgmainnk} is that, for the regime of $m^{0.5} \le
k \le m$, the time complexity of \PMWED is strictly better understood than that of \PMED:
Whereas the state-of-the-art upper bounds essentially match, the fine-grained lower bound
for \PMWED exceeds its \PMED counterpart by a polynomial factor.
These two lower bounds, which we discuss next, are inherited from the problems of computing the
weighted and unweighted edit distance, respectively.

As proved in~\cite[Section 1]{ckw22}, the optimality of the $\Theta(m+k^2)$-time algorithm
by Landau and Vishkin~\cite{DBLP:journals/jcss/LandauV88} for unweighted edit distance (up
to sub-polynomial factors and conditioned on the Orthogonal Vector Hypothesis) implies an
$n+k^{2-o(1)}\cdot n/m$ lower bound for \PMED.
The $m^{0.5}k^{1.5-o(1)}$ lower bound for computing weighted edit distance~\cite{ckw23} is
conditioned on the All-Pairs Shortest Paths (APSP)
Hypothesis~\cite{DBLP:journals/jacm/WilliamsW18} and can be formally stated as
follows.

\begin{restatable*}[{\cite[Main~Theorem~2]{ckw23}}]{theoremq}{lbwed}
    \dglabel{thm:lb_wed}(Bounded weighted edit distance not in $m^{0.5}k^{1.5-o(1)}$ time, {\cite[Main~Theorem~2]{ckw23}})
    For real parameters $\delta > 0$ and $0.5 \le \kappa \le 1$,
    assuming the APSP Hypothesis, no algorithm,
    given strings $X,Y$ of length at most $m$, a real threshold $1 \le k \le m^{\kappa}$,
    and oracle access to a normalized weight function $w$,
    decides if $\edw{X}{Y}\le k$ in time $\Oh(m^{0.5+1.5\kappa - \delta})$.
\end{restatable*}

\Cref{thm:lb_wed} readily yields a $k^{1.5-o(1)}\cdot n/m^{0.5}$ lower
bound for \PMWED when $m^{0.5}\le k \le m$.

\begin{restatable*}{corollary}{lbpmwed}
    \dglabel{cor:lb_pmwed}[thm:lb_wed](\PMWED not in $\min(k^{1.5-o(1)}\cdot n/m^{0.5},
    k^{2.5-o(1)}\cdot n/m)$ time)
    For real parameters $\delta > 0$ and $0 < \kappa \le 1$,
    assuming the APSP Hypothesis, there is no algorithm that solves all instances of
    \PMWED satisfying $k \le m^{\kappa}$ in time
    $\Oh(n\cdot m^{\min(1.5\kappa-0.5,\,2.5\kappa-1)-\delta})$.
\end{restatable*}

In fact, the $\Oh(nk)$-time algorithm of Landau and Vishkin~\cite{LandauV89} remains the fastest known for \PMED
also in the case of $m^{0.4} \le k \le m^{0.5}$.
In this parameter range, \cref{cor:lb_pmwed} yields a non-trivial conditional lower bound
for \PMWED.
In contrast, no such non-trivial lower bound is known for \PMED{}---once again, the time complexity of \PMWED is
better understood than that of \PMED.
Naturally, we thus ask:
\begin{quote}
    \centering{Are there algorithms for \PMWED that nearly match the state of the art for
    \PMED when $k\le m^{0.4}$?}
\end{quote}

Before providing a (partially positive) answer to this question, let us highlight
milestone results for \PMED beyond the $\Oh(nk)$-time Landau--Vishkin
algorithm~\cite{LandauV89}.
We refer the interested reader to the thorough survey of Navarro~\cite{Navarro01} for a
review of other early results.
By exploiting the periodic structure of the involved strings and employing deterministic
coin tossing,
Sahinalp and Vishkin~\cite{SV96} developed an $\cOtilde(n
+
k^{25/3} \cdot n/ m^{1/3})$-time algorithm, which improves upon $\Oh(nk)$ for $k \ll m^{1/22}$.
Cole and Hariharan~\cite{ColeH98} extended this range to $k<m^{1/3}$ by  presenting an
$\cO(n+k^4 \cdot n/m)$-time solution shortly afterward.%
\footnote{
    Cole and Hariharan~\cite{ColeH98} incorrectly claimed that their result generalizes to
    \PMWED.
    Specifically, they wrote that ``the only change needed is to the Landau--Vishkin
    algorithm to take into account the differing costs''.
    However, in the presence of weights, there is no monotonicity in the diagonals of the
    dynamic programming table, and hence the Landau--Vishkin
    algorithm~\cite{LandauV86,LandauV89} does not work due to its greedy nature.
    This is confirmed by the lower bound of~\cite{ckw23} for the computation of bounded
    weighted edit distance, which implies that one cannot hope for a procedure that computes $J
    \cap \OccW_k(P,T)$ for a given interval $J$ of size $\cO(k)$ in time $\cO(k^2)$ after a
    linear-time preprocessing.}
The $\cO(nk)$-time algorithm of Landau and Viskhin~\cite{LandauV89} and the algorithm of
Cole and Hariharan~\cite{ColeH98} were the state of the art for over two decades, until
the recent breakthrough of Charalampopoulos, Kociumaka, and Wellnitz~\cite{ckw22}, who
presented an algorithm for \PMED that runs in time $\cOtilde(n + k^{3.5}\cdot n/m)$ and
remains the fastest known solution for \PMED when $k \ll m^{0.4}$.

The second main contribution of this work is an algorithm for \PMWED that matches the
$\Oh(n+k^{4}\cdot n/m)$ time complexity of~\cite{ColeH98} up to a small polylogarithmic
factor.
Consequently, our solutions for \PMWED essentially recover the state-of-the-art running
times for \PMED as they were before 2022.
Hence, $m^{0.25} \ll k \ll m^{0.4}$ is currently the only parameter regime when \PMED
has a significantly faster ($\cOtilde(n+k^{3.5} \cdot n/m)$-time) solution than \PMWED;
see \cref{fig:plot} for a visualization.

\begin{restatable*}{mtheoremq}{stalgmainkfour}
    \dglabel{thm:stalgmaink4}[rem:std,thm:main,thm:pilis](\PMWED in time $\Ohtilde(n + k^4\cdot n/m)$)
    \PMWED can be solved in $\cO(n + k^4\cdot n/m \cdot \log^2 (mk))$ time.
\end{restatable*}

In order to prove \cref{thm:stalgmaink4}, we exploit the structural characterization provided by
Charalampopoulos, Kociumaka, and Wellnitz~\cite{ckw20} for the approximate occurrences of
$P$ in $T$ under the \emph{unweighted} edit distance: $P$ has few approximate
occurrences or it is almost periodic.

As shown in \cite{ckw20}, in the first case, $\OccE_k(P,T)$ can be computed in $\cOtilde(n
+ k^4\cdot n/m)$ time and covered with $\Oh(k\cdot n/m)$ intervals of size $\Oh(k)$ each.
Due to the assumption that the weight function is normalized, we have $\OccW_k(P,T)
\subseteq \OccE_k(P,T)$, so it suffices to focus on the intervals covering
$\OccE_k(P,T)$.
One of the main results of Cassis, Kociumaka, and Wellnitz~\cite{ckw23} is that
$\edw{P}{T}\le k$ can be decided in $\Ohtilde(k^3)$ time assuming that one can check in
$\Ohtilde(1)$ time whether any two fragments of $P$ or $T$ match perfectly.
We generalize this result (leveraging the underlying structural insights) and show that,
if $|P|\le |T|+\Oh(k)$, then $\Ohtilde(k^3)$ time is sufficient to report all fragments of
$T$ at weighted edit distance at most $k$ from $P$.
For each interval $J$ in the cover of $\OccE_k(P,T)$, we apply this procedure for an
auxiliary text obtained by trimming $T$ to length $m+\Oh(k)$.
This way, after the standard $\Oh(n)$-time preprocessing for substring equality queries,
$\OccW_k(P,T)\cap J$ can be computed in $\Ohtilde(k^3)$ time, for a total of
$\Ohtilde(n+k^4\cdot n/m)$.

The main challenge then lies in the almost periodic case, which is also the bottleneck for
the \PMED problem.
The main contribution of \cite{ckw22} is how to compute
$\OccE_k(P,T)$ in that case in $\Ohtilde(n+k^{3.5}\cdot n/m)$ time using \emph{dynamic puzzle matching}.
We generalize an $\Ohtilde(n+k^4\cdot n/m)$-time algorithm provided in~\cite{ckw22} as a warm-up application of that technique.

The underlying idea is to preprocess a few subgraphs of the alignment graph (interpreted as \emph{puzzle
pieces}), and then stitch these subgraphs appropriately to reconstruct the distances
between $P$ and relevant fragments of~$T$.
Every piece can be associated with an $\Oh(k)\times \Oh(k)$ matrix so that stitching two
pieces corresponds to computing the $(\min,+)$-product of the underlying matrices.
Unweighted edit distance enjoys strong combinatorial properties that allow representing
each matrix in $\Oh(k)$ space and computing the $(\min,+)$-product in $\Ohtilde(k)$
time~\cite{Tis15}.
With weights, we have to resort to the naive $\Oh(k^2)$-space
representation and the $\Oh(k^2)$-time SMAWK~\cite{SMAWK} algorithm for computing the
$(\min,+)$-product of two Monge matrices.
In order to prevent the overall running time from increasing to $\Ohtilde(n+k^5\cdot
n/m)$, we refine the approach of \cite{ckw22} so that most matrix-matrix
multiplications are replaced by matrix-vector multiplications; the latter can be performed
in $\Oh(k)$ time using the SMAWK algorithm~\cite{SMAWK}.

Another challenge of a more technical nature is that the dynamic puzzle matching algorithm
of \cite{ckw22} is formalized in terms of restricted permutation matrices, meaningful only
for the unweighted edit distance.
Instead, we would ideally like to directly use distance matrices but, due to the bounded
width of the puzzle pieces, we must work with distances in various subgraphs that are difficult to describe explicitly.
In order to abstract away these technicalities, we introduce a notion of a \emph{fern
matrix}, which can be interpreted as storing distances in the entire alignment graph but
potentially failing to truthfully represent distances that are too large to be relevant
for our computations.
Further ideas behind our $\Ohtilde(n+k^4\cdot n/m)$-time algorithm are presented in
\cref{sec:techov_k4}.

A recent work of Gorbachev and Kociumaka~\cite{gk24} describes faster algorithms
for weighted edit distance with small integer weights:
they achieve a running time of $\Ohtilde(m+W\cdot k^2)$ if the cost of each edit belongs to $\fragment{1}{W}$.
For small integer weights, such as when $W=\Ohtilde(1)$, this complexity recovers (up to a polylogarithmic factor) the running time of the Landau--Vishkin algorithm~\cite{DBLP:journals/jcss/LandauV88} for unweighted edit distance.
This gives rise to the following question.

\begin{quote}
    \centering{Is it possible to solve \PMWED in $\cOtilde(n+k^{3.5}\cdot n/m)$ time when
    edit cost are integers between $0$ and $W=\Oh(1)$?}
\end{quote}

In \cref{sec:verify,sec:reduction,sec:dpm}, we incorporate the techniques of~\cite{gk24}
into a more complex version of our procedure behind \cref{thm:stalgmaink4} (building upon
the $\cOtilde(n+k^{3.5}\cdot n/m)$-time algorithm for \PMED~\cite{ckw22}, rather than its
simpler $\cOtilde(n+k^{4}\cdot n/m)$-time warm-up version) to derive the following result
under the extra assumption that $w$ is an integer metric weight function.%
\footnote{We did not try to optimize the exponent of $W$ in \cref{thm:stalgmaink35}.}

\begin{restatable*}{mtheoremq}{stalgmainkthreehalf}
    \dglabel{thm:stalgmaink35}[rem:std,thm:main,thm:pilis](\PMWED in time $\Ohtilde(n + k^{3.5} \cdot W^4 \cdot n/m)$ for integer metrics \(w:
    \sqEsigma \to \fragment{0}{W}\))
    \PMWED can be solved in $\Ohtilde(n + k^{3.5} \cdot W^4 \cdot n/m)$ time for metric weight
    functions $w$ with values in $\fragment{0}{W}$.
\end{restatable*}

It is worth noting that our solutions are meta-algorithms relying on a small set
of primitive operations.
Specifically, we work in the \pillar model of computation, introduced in \cite{ckw20} to
unify approximate pattern matching across several settings.
An efficient \pillar algorithm, together with an implementation of the primitive
operations in a given setting, yields an efficient algorithm for that setting.
\Cref{thm:stalgmaink4,thm:stalgmaink35} are corollaries of the following result.

\begin{restatable*}{mtheorem}{pillark}
    \dglabel"{thm:main}[def:puzzle,lem:red_to_SM,lem:solve_SM,lem:solve_int_SM,lem:verify](Efficient \pillar model algorithms for \PMWED)
    \PMWED on strings with $n < \threehalfs m + k$
    can be solved in the \pillar model
    \begin{itemize}
        \item in time $\cO(k^4 \log^2(mk))$ for general weights; and
        \item in time $\Ohtilde(k^{3.5}\cdot W^4)$ if $w$ is a metric weight function with values in $\fragment{0}{W}$.
    \end{itemize}
    The output set $\OccW_k(P,T)$ is represented as a collection of disjoint arithmetic progressions.
\end{restatable*}

In \cref{sec:concl}, we combine \cref{thm:main} with known efficient implementations of
the \pillar model to obtain efficient algorithms for approximate pattern matching in a
plethora of settings, such as: (1) the compressed setting, where $P$ and $T$ are given in
compressed form, and we wish to compute $\OccW_k(P,T)$ without decompressing them; (2) the
dynamic setting; and (3) the quantum setting.

\subsection*{Future Directions and Open Problems}

Let us briefly discuss potential future directions.
\begin{itemize}
    \item The most obvious yet the most important open problem stemming from this work is
        to narrow the gap between the upper and the lower bounds for \PMWED.
        In particular, it is unsatisfactory that the best lower bounds for both
        \PMED and \PMWED are direct corollaries of tight lower bounds for edit distance
        computation and weighted edit distance computation, respectively.
        It would be exciting to obtain a separation between \PMWED and the problem of
        weighted edit distance computation.
    \item Regarding faster algorithms, the most natural question is whether the
        $\Ohtilde(n+k^{3.5}\cdot n/m)$ time complexity of \cite{ckw22} for \PMED can be
        lifted to \PMWED.
        To derive this from an $\Ohtilde(k^{3.5}\cdot n/m)$-time \pillar algorithm (as in
        \cite{ckw22}), we would need a better combinatorial characterization of the set
        $\OccW_k(P,T)$:
        As shown in \cite{ckw20}, the set $\OccE_k(P,T)$ can be represented as the union
        of $\Oh(k^3\cdot n/m)$ arithmetic progressions; for $\OccW_k(P,T)$,
        \cref{thm:main} yields a representation using $\Oh(k^4\cdot n/m)$ progressions.
    \item Gorbachev and Kociumaka~\cite{gk24} also showed how to compute weighted edit distance
        in $\Ohtilde(m+k^{2.5})$ time for arbitrarily large integer weights.
        It remains open whether \PMWED can be solved faster in this regime compared to the
        general setting allowing fractional weights.
    \item The output of \PMWED, $\OccW_k(P,T)$, consists of the \emph{starting positions}
        of all fragments $T\fragmentco{i}{j}$ such that $\edw{P}{T\fragmentco{i}{j}} \leq k$.%
        \footnote{The algorithms underlying
            \cref{thm:stalgmainnk,thm:stalgmaink4,thm:stalgmaink35} can
            also compute, for each $i \in \OccW_k(P,T)$, the minimum distance $\min_j
            \edw{P}{T\fragmentco{i}{j}}$ and the position $j$ for which this minimum is attained.}
        A natural harder variant asks for all \emph{fragments}
        $T\fragmentco{i}{j}$ such that $\edw{P}{T\fragmentco{i}{j}} \leq k$.
        As noted in \cref{sec:verify}, a straightforward application of our tools yields
        an $\Ohtilde(nk^2)$-time solution of this problem.
        Can we improve this to $\Ohtilde(nk)$?
        For unweighted edit distance, the analogous variant can be solved in $\cO(nk)$
        time~\cite{DBLP:journals/siamcomp/LandauMS98}.
    \item Recently, Kociumaka, Nogler, and Wellnitz~\cite{KNW24,KNW25} investigated the
        communication complexity of \PMED.
        Their techniques also led to a quantum algorithm for \PMED with almost optimal query complexity
        and (for $k\ll m^{1/6}$) almost optimal running time.
        It would be interesting to investigate whether (any of) these results can be lifted to \PMWED.
        Our quantum algorithm for \PMWED (see~\cref{cor:quantum}) can be viewed as a first step in this direction.
\end{itemize}

\section{Technical Overview}
To describe our new algorithms, let us first briefly introduce crucial
notation.
A matrix $M\in \mathbb{R}^{p\times q}$ is \emph{Monge} if $M\position{i,j}+M\position{i + 1,j + 1} \le
M\position{i,j + 1}+M\position{i + 1,j}$ holds for all integers $0\le i <p-1$ and $0\le j <q-1$.
Given $\cO(1)$-time access to the entries of a matrix $M\in \mathbb{R}^{p\times q}$ and a
vector~$v\in  \mathbb{R}^q$,
the SMAWK algorithm~\cite{SMAWK} computes their $(\min,+)$-product $M\oplus v$ in $\cO(p+q)$ time.

For strings $X,Y\in \Sigma^*$ and a weight function $w: \sqEsigma \to \mathbb{R}_{\ge 0}$,
we define the \emph{alignment graph} $\AGw(X,Y)$ as a grid graph with vertices
$\fragment{0}{|X|}\times \fragment{0}{|Y|}$ and the following edges:
\begin{itemize}
    \item vertical edges $(x,y)\to (x+1,y)$ of cost $\w{X\position{x}}{\emptystring}$
        for $(x,y)\in \fragmentco{0}{|X|}\times \fragment{0}{|Y|}$;
    \item horizontal edges $(x,y)\to (x,y+1)$ of cost $\w{\emptystring}{Y\position{y}}$
        for $(x,y)\in \fragment{0}{|X|}\times \fragmentco{0}{|Y|}$;
    \item diagonal edges $(x,y)\to (x+1,y+1)$ of cost $\w{X\position{x}}{Y\position{y}}$
        for $(x,y)\in \fragmentco{0}{|X|}\times \fragmentco{0}{|Y|}$.
\end{itemize}
An \emph{alignment} of $X$ onto $Y$, denoted by $\A : X\onto Y$, is a path from $(0,0)$ to
$(|X|,|Y|)$ in $\AGw(X,Y)$.
In such a path, vertical edges correspond to deletions, horizontal edges correspond to
insertions, whereas diagonal edges correspond to matches and substitutions.
The weighted edit distance $\edw{X}{Y}$ is defined as the minimum cost (length) of an
alignment $\A : X\onto Y$.
We regard $(0,0)$ as the top-left vertex of the grid and
$(|X|,|Y|)$ as the bottom-right vertex of the grid, and we refer to the $|X|+1$ rows and the
$|Y|+1$ columns of the grid in a natural way.
As each insertion and deletion costs at least $1$, any path of cost at most $k$ intersects at most $k+1$ diagonals.

\subsection{The \texorpdfstring{\boldmath $\cOtilde(nk)$}{Õ(nk)}-time Algorithm}\label{sec:techov_nk}
Consider a collection $\S$ of $\cO(n/k)$ fragments of the form
$T\fragmentco{ik}{\max\{(i+2)k+m,n\}}$; observe that each $(k,w)$-error occurrences of $P$ and, in general, each fragment of $T$ of length at most
$m+k$ is fully contained in at least one fragment $S\in \mathcal{S}$.
Hence, it suffices to compute the
$(k,w)$-error occurrences of $P$ in all fragments $S\in \mathcal{S}$ and then merge the partial results.

Let us focus on computing $\OccW_k(P,S)$ for an arbitrary $S\in \mathcal{S}$.
One may use the MSSP data structure of Klein \cite{MSSP} on a width-\(\Oh(k)\)
diagonal band that is a subgraph of \(\AG^w(P, S)\) for this purpose (see
\cref{fig:techo_nk1}), that is, in order to compute each
top-to-bottom shortest-path distance in this band and check if it is at most
\(k\).\footnote{Given a planar graph with $N$ vertices and a distinguished infinite face,
the multiple-source shortest paths (MSSP) data structure~\cite{MSSP} can be constructed in
$\cO(N\log N)$ time and can report the distance from (or to) any vertex on the infinite
face to (or\ from) any other vertex in the graph in $\cO(\log N)$ time.}
However, this
diagonal band contains \(\Oh(mk)\) vertices---hence, this approach costs \(\Oh(mk \log m)\) time
per \(S \in \S\), or \(\Oh(mn \log m)\) time in total across all \(S \in \S\), which is prohibitive when $m \gg k$.

We improve upon this basic approach by making two observations.
\begin{itemize}
    \item Instead of computing the shortest-path distances all at once, we may
        equivalently split the (diagonal band subgraph of the) alignment graph along a
        separator $V$, compute
        shortest-path distances from the sources (top vertices of the band) to the
        separator and from the separator to the sinks (bottom vertices of the band), and
        then compose these
        distances using min-plus multiplication.
        Specifically, two instances of the the MSSP data structure provide $\cO(\log
        m)$-time access to, respectively, a distance matrix that captures all pairwise
        distances from the sources to the vertices of $V$ and a distance matrix that
        captures all pairwise distances from the vertices of $V$ to the sinks.
        Crucially, these distance matrices are \emph{Monge}, and this allows for min-plus
        multiplying them efficiently
        using the SMAWK algorithm~\cite{SMAWK}.%
        \footnote{As matrix-matrix multiplications are not fast enough for our
        purposes, we show how to make do with just matrix-vector multiplications which
        take time linear in the output size in our case.}
    \item In the non-trivial case when \(m \gg k\), any cost-\(k\) (weighted) alignment has to match exactly
        large parts of \(P\) and \(S\). We show that, with an \(\Oh(mk \log^2 m)\)-time preprocessing
        on \(P\), we may compute (a representation of) suitable distance matrices for any such pairs of equal parts for
        all \(S \in \S\) at once. In particular, for a single \(S \in \S\), the task then
        reduces to recomputing distances for graphs of total size \(\Oh(k^2)\)
        (the part corresponding to actual edits;  we can use Klein's algorithm \cite{MSSP}
        here, which takes time \(\Oh(k^2 \log m)\))
        plus querying \(\Oh(k)\) of our precomputed distance matrices \(\Oh(k)\) times
        each (the exactly matched
        parts between edits; where each such query costs \(\Oh(\log m)\)).
\end{itemize}

For a more detailed description of our algorithm, first observe that
if there is a $(k,w)$-error occurrence of $P$ in the fragment $S$, then $\edu{P}{S} \leq
4k$: We may extend an alignment $\A:P \onto S\fragmentco{f}{f'}$ with at most $k$ edits
to an alignment $\B:P \onto S$ by inserting at most~$3k$ excess characters, namely those comprising
$S\fragmentco{0}{f}$ and $S\fragmentco{f'}{|S|}$.

Now, after an $\Oh(m)$-time preprocessing of $P$ and~$T$, we check in $\Oh(k^2)$ time whether
$\edu{P}{S}\le 4k$ and, if so, compute an optimal unweighted alignment.
Let us henceforth assume that we have such an alignment $\B : P \onto S$ of unweighted
cost at most $4k$---which lets us
identify fragments of \(P\) that are matched exactly to fragments of \(S\).

For a simplified overview, let us assume that $\B:P \onto S$ consists only of
substitutions; this case illustrates the main ideas of the algorithm whilst avoiding
technicalities.

Now, observe that $\edw{P}{S\fragmentco{a}{b}}\leq k$ and $|S|=|P|$ imply $a \in \fragment{0}{k}$ and
$b \in \fragment{m-k}{m}$.
Hence, we consider $k+1$ source vertices and $k+1$ target vertices in
$\AG^w(P, S)$ as shown in~\cref{fig:techo_nk1}.
Moreover, the weighted edit distance of two strings $X$ and $Y$ is lower-bounded by
$\big||X|-|Y|\big|$,
so it suffices to consider paths confined in a band~$B$ comprised by $2k+1$ diagonals of
\(\AG^w(P, S)\) as shown in~\cref{fig:techo_nk1}.

Now, the alignment $\B : P \onto S$ yields at most $4k$ positions $i_0 < i_1 < \cdots <
i_{r-1}$
such that
\[
    S = P\fragmentco{0}{i_0} S\position{i_0} P\fragmentoo{i_0}{i_{1}} \cdots
S\position{i_r} P\fragmentoo{i_r}{m}.
\]
Let us also decompose the band $B$ into parts $B_0,B_1,\ldots, B_{s}$ so that $B_0$
corresponds to the first \(k\) columns (the left part) and
$B_{s}$ corresponds to the last $k$ columns (the right part),
whereas the intermediate parts correspond to substrings of $P$ and single characters
according to the decomposition of $S$; see~\cref{fig:techo_nk1}.
Further, let $V_j$ be the set of vertices shared by parts $B_{j}$ and $B_{j+1}$---these
vertices indeed form a separator of the band \(B\).

\begin{figure}[t!]
    \centering
    \includegraphics[page=4,height=.45\linewidth]{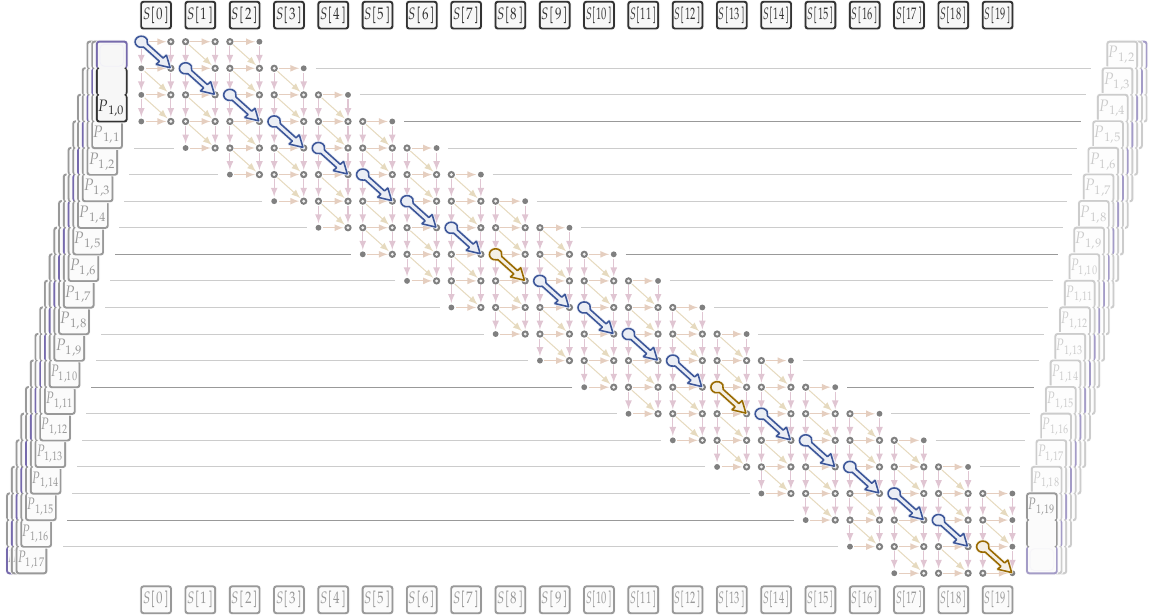}
    \caption{
        The band \(B\) of the alignment graph \(\AG^w(P, S)\) for a pattern $P$ of length $20$,
        a threshold $k=3$, and a fragment $S =
        P\fragmentco{0}{8}\texttt{a}P\fragmentco{9}{13}\texttt{c}P\fragmentco{14}{19}\texttt{c}$
        of $T$. The source vertices $(0,0), (0,1),
        (0,2), (0,3)$ in the top left and the target vertices $(20,17), (20,18), (20,19), (20,20)$
        in the bottom right are marked in light purple.
        The distinguished band~$B$ is decomposed to seven parts.
        The left and right parts are colored gray; \(B_i\) that also occur in
        \(\AG^w(P,P)\) are merged into \emph{pure parts} and are colored blue.
        We highlight the portal vertices on the boundaries of each part of \(B\).
    }\label{fig:techo_nk1}
    \vskip-4ex
\end{figure}

For a vertex $u$, let $d(u)$ denote the minimum distance (within the band $B$) from $u$ to
any target vertex.
We compute $d(u)$ for the vertices of the rightmost separator $V_{s-1}$ using the
algorithm of Klein \cite{MSSP} on \(B_s\). As \(B_s\) has a size of \(\Oh(k^2)\),
Klein's algorithm runs in time $\cO(k^2 \log m)$.

Next, we process the separators from right to left.
We compute the values $d(u)$ for all vertices $u \in V_{j-1}$ given $d(v)$ for all $v \in
V_{j}$ differently based on whether \(B_j\) is a subgraph
of \(\AG^w(P, P)\).

\begin{itemize}
    \item If \(B_j\) is not a subgraph of \(\AG^w(P, P)\), we can again use the algorithm of
        Klein \cite{MSSP} on \(B_j\) to obtain \(\Oh(\log n)\)-time random access to
        a Monge matrix \(D_{j-1,j}\) that stores the
        distances from the vertices of $V_{j-1}$ to the vertices of $V_j$ in \(B_j\).
        We then min-plus multiply \(D_{j-1,j}\) with a
        vector that stores $d(u)$ for $u \in V_j$ to obtain (a vector stores) \(d(u)\) for
        each \(u \in V_{j-1}\).
        Recall that, using the SMAWK algorithm, such a multiplication
        can be done in the time required for $\cO(k)$ accesses (at
        arbitrary positions) to the input matrices (interpreting the vector that stores
        $d(u)$ for $u\in V_{j}$ as an $m \times 1$ matrix).
    \item If $B_j$ is subgraph of \(\AG^w(P, P)\), we rely on a global $\cOtilde(mk)$-time preprocessing of $P$
        that constructs several instances of the MSSP data structure \cite{MSSP} for
        subgraphs of \(\AG^w(P, P)\). In particular, we obtain $\cO(\log m)$-time
        random access to two Monge matrices $D_{j-1, j-0.5}$ and  $D_{j-0.5,j}$
        with a min-plus product of \(D_{j-1,j}\).
        We first min-plus multiply \(D_{j-0.5,j}\) with a
        vector that stores $d(u)$ for $u \in V_j$ (to obtain an intermediate vector \(w\))
        and then min-plus multiply \(w\) with $D_{j-1, j-0.5}$ to obtain (a vector that
        stores) \(d(u)\) for each
        \(u \in V_{j-1}\).
        Recall that, using the SMAWK algorithm, such a multiplication
        can be done in the time required for $\cO(k)$ accesses (at
        arbitrary positions) to the input matrices.
\end{itemize}
Finally, we again rely in the algorithm of Klein \cite{MSSP} for the part between the
leftmost separator and the sources.

Over all $\cO(k)$ parts $B_{s-1},\ldots, B_1$, the above procedure takes $\cOtilde(k^2)$
time.
The final complexity is then $\Oh(n) + \cOtilde(mk) + |\mathcal{S}| \cdot \cOtilde(k^2) =
\cOtilde(n+mk + n/k \cdot k^2) = \cOtilde(nk)$.

\subsection{The \texorpdfstring{\boldmath $\cOtilde(n+k^4 \cdot n/m)$}{Õ(n+k⁴·n/m)}-time
Algorithm}\label{sec:techov_k4}
In this section, we provide an overview of our $\Ohtilde(n+k^4)$-time algorithm solving
\PMWED under the simplifying assumption that $n < \threehalfs m + k$ (as in
\cref{thm:main}).
The reduction from the general case relies on the \emph{standard trick}, which yields
$\Theta(n/m)$ independent instances of the restricted problem.

The starting point of our solution is the structural characterization of~\cite{ckw20} for
\PMED: unless~$P$ is almost periodic, it has few
approximate occurrences in $T$.

\paragraph*{The Non-Periodic Case}
In the first of the two cases of~\cite{ckw20}, the set $\{\lfloor{i/k}\rfloor : i \in
\OccE_k(P,T)\}$ is of size $\Oh(k)$ and can be computed in $\Oh(n+k^4)$ time.
This gives $\Oh(k)$ intervals of length $k$ whose union is a superset of
$\OccE_k(P,T)\supseteq \OccW_k(P,T)$.
The remaining task is to compute $\OccW_k(P,T)\cap I$ for each interval $I$ in this
collection.
For this, we design an efficient solution of the following \verify problem.

\begin{problem}[Verify]{Verify{\tt($P$, $T$, $k$, $I$, $w$)}}
    \label{dputabwsur}

    \PInput{A text $T$ of length $n$, a pattern $P$ of length $m$, an integer
    threshold $k >0$, an integer interval~$I$, and oracle access to a
    normalized weight function
    \(w:\sqEsigma \to \intvl{0}{W}\).}
    \POutput{$\OccW_k(P,T)\cap I$ and, for each $i\in \OccW_k(P,T)\cap I$, the distance $\min_{j\in
    \fragment{i}{n}}\edw{P}{T\fragmentco{i}{j}}$.}
\end{problem}

Our approach for \verify computes more than we need \emph{here}: it reports all relevant
$(k,w)$-error occurrences of $P$ in $T$ (rather than just their starting positions) and,
for each of them, outputs the corresponding weighted edit distance from $P$.
These extra features are useful later and come essentially~for~free.

In this overview, we present an $\Ohtilde(k^3)$-time \modelname model algorithm for \verify
for the case when $n\le m+2k$.
In general, we partition $I$ into $\ceil{|I|/k}$ pieces of size at most $k$, and, for each
piece $I'$, we trim the text to $T\fragment{\min I'}{\min\{n,\max I'+m-k\}}$.
Now, our problem generalizes the task considered in \cite{ckw23}: instead of computing the
edit distance from $P$ to the entire $T$, we need to consider $\Oh(k^2)$ fragments
$T\fragmentco{i}{j}$ with $j-i \ge m-k$.

In our solution, we apply the notion of \emph{self-edit distance}, introduced in
\cite{ckw23} to quantify the locality of edit distance.
The self-edit distance of a string $X$ is defined as the distance from $(0,0)$ to
$(|X|,|X|)$ in the \emph{unweighted} alignment graph $\AG(X,X)$ with edges on the main
diagonal removed.
The algorithm of \cite{ckw23} consists of a solution for the special case of
$\selfed(P)=\Oh(k)$ and a divide-and-conquer recursive procedure that constitutes a
reduction to this case.
The first of these two components generalizes to our setting quite easily, so
we focus on the case when $\selfed(P)=\omega(k)$.

In this case, we identify vertices $(p,t)$ and $(p',t')$ in $\AGW(P,T)$ with
$\selfed(P\fragmentco{0}{p})=\Oh(k)$ and
$\selfed(P\fragmentco{p'}{m})=\Oh(k)$
that lie on a $w$-optimal alignment $P\onto
T\fragmentco{i}{j}$ for every $(k,w)$-error occurrence of $P$ in $T$.
For every such $(k,w)$-error occurrence, the value $\edw{P}{T\fragmentco{i}{j}}$
decomposes to
\[\edw{P\fragmentco{0}{p}}{T\fragmentco{i}{t}}+\edw{P\fragmentco{p}{p'}}{T\fragmentco{t}{t'}}+\edw{P\fragmentco{p'}{m}}{T\fragmentco{t'}{j}}.\]
We compute the middle term using the algorithm of \cite{ckw23};
for the remaining two terms, we use our subroutine restricted to patterns of self-edit
distance $\Oh(k)$.

It is not obvious that such vertices \((p,t)\) and  \((p',t')\) exist.
Next, we show that the existence of such vertices follows from the central property of
self-edit distance: If alignments $\A,\B: X \onto Y$ of cost at most $k$ do not share
diagonal edges, then $\selfed(X)\le 2k$.
(The path in $\AG(X,X)$ witnessing $\selfed(X)\le 2k$ is obtained as the composition of
$\A$ and $\B^{-1}$; see~\cite{ckw23}.)

Fix an arbitrary $(k,w)$-error occurrence $T\fragmentco{i}{j}$ of $P$ in $T$ and a
$w$-optimal alignment $\A : P \onto T\fragmentco{i}{j}$.
We pick $(p,t),(p',t')\in \A$ so that
$\selfed(P\fragmentco{0}{p})=\selfed(P\fragmentco{p'}{m})=9k$.
We have $p \geq 9k-1$ as for any $q \in \fragment{1}{m}$, the $(0,0)\to (0,1)\to
(1,2)\to \cdots \to (q-1,q)\to (q,q)$ path in $\AG(P\fragmentco{0}{q},P\fragmentco{0}{q})$
has cost at most $q+1$.

To show that this selection is valid, consider another $(k,w)$-error occurrence
$T\fragmentco{i'}{j'}$ of $P$ in~$T$ and a $w$-optimal alignment $\A' : P \onto
T\fragmentco{i'}{j'}$.
We claim that $\A$ intersects $\A'$ both before $(p,t)$ and after $(p',t')$.
The subpaths of both $\A$ and $\A'$ are shortest paths between these two
intersection points,
so we may reroute $\A'$ (without increasing its cost) using $\A$ to pass through both
$(p,t)$ and $(p',t')$.

Due to $n \le m + 2k$, both of the (unique) extensions of $\A$ and $\A'$ to alignments $\B,\B' : P
\onto T$ have \emph{unweighted} cost at most $4k$.
Let $(\bar{p},\bar{t})$ be the first point in the intersection $\B\cap \B'$ with
$\bar{p}>0$.
If $\bar{p}=1$, then, in particular, $\bar{p}<9k-1\leq p$, and hence $(\bar{p},\bar{t})\in \A\cap \A'$
lies before $(p,t)$.
Otherwise, $\B$ and $\B'$ do not share any diagonal edge between $(0,0)$ and
$(\bar{p},\bar{t})$ and, restricted to this segment, still have unweighted costs of at
most $4k$ each.
Thus, $\selfed(P\fragmentco{0}{\bar{p}})\le 8k < 9k = \selfed(P\fragmentco{0}{p})$, which
implies $\bar{p}<p$ as $\selfed(\cdot)$ is monotone.

The construction in \cref{sec:verify} follows the same spirit but cannot rely on the
(unknown) arbitrary $(k,w)$-error occurrence $T\fragmentco{i}{j}$ of $P$ in $T$.
Instead, we define $(p,t)$ and $(p',t')$ using $(k,w)$-error occurrences
of a prefix $P_p$ and a suffix $P_s$ of $P$, respectively, with small self-edit distance; see
\cref{fig:crossing_alignments}.
These two $(k,w)$-error occurrences can be efficiently computed using the subroutine restricted to
patterns of self-edit distance $\Oh(k)$.

\paragraph*{The Periodic Case}
In the almost-periodic case, the characterization of \cite{ckw20} yields a primitive string
$Q$ of length $|Q|\ll m/k$ and an alignment from $P$ to a substring of $Q^\infty$
with unweighted cost $\Oh(k)$.
Further results of \cite{ckw20} let us trim $T$ (without losing $k$-error occurrences)
so that it admits a similar alignment.

To keep this overview simple, we assume that both substrings of $Q^\infty$ are integer
powers of $Q$, that is, $\edu{P}{Q^{p}}=\Oh(k)$ and $\edu{T}{Q^{t}}=\Oh(k)$ hold for some
integer exponents $p,t\gg k$.
We decompose $P = P_1\cdots P_p$ and $T = T_1 \cdots T_{t}$, where the $P_i$s and $T_j$s
are potentially edited copies of the string $Q$.
In this case, all elements of $\OccE_k(P,T)$ must be within distance $\cO(k)$ from the
start of some~$T_j$.
Following~\cite{ckw22}, we then think of shifting $P$ over $T$ in steps of roughly $|Q|$ positions at
a time, in order to align the starting position of $P$ with the start of each~$T_j$.
Intuitively, for subsequent $j\in \fragment{0}{t-p}$, we wish to compute
$\OccE_k(P_1P_2\cdots P_p, T_jT_{j+1} \cdots T_{j+p-1})$ (up to extending $T_j$ to the
left and~$T_{j+p-1}$ to the right by $\cO(k)$ positions).

\subparagraph*{Applying \DPM to \PMED.}
The solution of \cite{ckw22} is formalized through the \DPM problem, whose
(oversimplified) formulation asks to maintain a dynamic sequence $\I = (U_1,V_1)(U_2,V_2)
\cdots (U_z,V_z)$ of \emph{puzzle pieces} subject to updates (insertions, deletions, and
substitutions of pieces) so that, upon a query, one can efficiently compute $\OccE_k(U,V)$
for $U=U_1U_2\cdots U_z$ and $V=V_1V_2 \cdots V_z$.
In reality, the strings $V_i$ overlap, but we mostly ignore these overlaps in this overview.

A naive reduction to DPM constructs and queries $\I_j \coloneqq (P_1, T_{j+1}) (P_2, T_{j+2})
\cdots (P_p, T_{j+p})$ for subsequent integers $j\in \fragment{0}{t-p}$.
Since all but $\Oh(k)$ strings $P_i$ and $T_j$ are equal to $Q$, updating $\I_j$ to
$\I_{j+1}$ requires $\Oh(k)$ substitutions.
Each of these substitutions can be processed in $\Ohtilde(k)$ time each.
Unfortunately, this is too much as $t-p$ can be $\Theta(n/k)$.

To lay ground for further improvements, the algorithm of \cite{ckw22} issues different
updates, most of which increment or decrement the exponents of \emph{runs of plain pieces}
of the form $(Q,Q)^e$ for $e\in \mathbb{Z}_{\ge 0}$.

\begin{example}
    Suppose that $P=Q^{99}P_{100}Q^{99}$ and $T=Q^{149}T_{150}Q^{149}$.
    Then,
    \[
        \I_j = \begin{cases} (Q,Q)^{99}\cdot (P_{100},Q)\cdot (Q,Q)^{49-j}\cdot
            (Q,T_{150})\cdot (Q,Q)^{49+j} & \text{if }j\in \fragment{0}{49},\\
            (Q,Q)^{99}\cdot (P_{100},T_{150})\cdot (Q,Q)^{99} & \text{if
            }j=50,\\(Q,Q)^{149-j}\cdot (Q,T_{150})\cdot (Q,Q)^{j-51}\cdot (P_{100},Q)\cdot (Q,Q)^{99}
        & \text{if }j\in \fragment{51}{100}.\end{cases}
    \]
    For $j\in \fragment{0}{48}\cup \fragment{51}{99}$, one can update $\I_j$ to $\I_{j+1}$
    by incrementing or decrementing the exponents of two runs of plain pieces.
    Effectively, this allows moving the special piece $(Q,T_{150})$ to the left.
\end{example}

Formally, as we issue updates to iterate over $\I_j$ for subsequent indices $j\in
\fragment{0}{t-p}$, all but $\Oh(k^2)$ updates increment or decrement the exponent of
a run of plain pieces.
Furthermore, each of the $\Oh(k^2)$ runs existing throughout the lifetime of the algorithm
admits a simple structure of updates:
its exponent is first incremented for some subsequent indices $j$, then it remains fixed,
and then it is decremented.

This insight allows reducing the number of updates when combined with the following
observation:
For every run $(Q,Q)^e$ of plain pieces, the exponent can be capped to
$e=\min\{e,\alpha\}$ for $\alpha=\Oh(k)$ without affecting the output of DPM queries.
If, instead of $\I_j$, we maintain sequences $\I'_j$ with the exponents capped at
$\alpha$, each run needs to be updated only $\Oh(\alpha)$ times, for a total of $\Oh(k^2
\alpha)$ updates that can be processed in $\Ohtilde(k^3\alpha)=\Ohtilde(k^4)$~time in total.

The key insight to achieve $\Ohtilde(k^{3.5})$ time in \cite{ckw22} is that picking $\alpha\le k$ preserves answers
to \textsc{DPM} queries for all but $\Oh(k^2/\alpha)$ sequences $\I_j$.
Unfortunately, the underlying combinatorial arguments
do not seem to generalize to the case of arbitrary weight functions,
so we fix $\alpha=\Theta(k)$ for the remainder of
this discussion.

\subparagraph*{Applying \BCDPM to \PMWED.}
The aforementioned reduction to \DPM translates from \PMED to \PMWED with minor changes.
Unfortunately, the overall running time increases to $\Ohtilde(k^5)$ because each DPM update
takes $\Ohtilde(k^2)$ time instead of $\Ohtilde(k)$ time.

Intuitively, for each puzzle piece $(X,Y)$, the DPM data structure of \cite{ckw22} stores
a matrix $D_{X,Y}$ whose entries represent distances $\edu{X}{Y\fragmentco{i}{j}}$ for all
$k$-error occurrences $Y\fragmentco{i}{j}$ of $X$ in $Y$.
Since $|Y|\le |X|+\Oh(k)$, these are matrices of size $\Oh(k)\times \Oh(k)$.
The composition of two pieces $(X,Y)$ and $(X',Y')$ can be implemented by
$(\min,+)$-multiplying matrices $D_{X,Y}$ and $D_{X',Y'}$, due to the fact that $Y$ and
$Y'$ actually have a sufficiently large overlap.
Under mild technical assumptions, the matrix $D\coloneqq\bigoplus_{i=1}^{z} D_{U_i,V_i}$ can be
used to identify all $k$-error occurrences of $U=U_1\cdots U_z$ in $V=V_1\cdots V_z$.
For the starting positions, that is, $\OccE_k(U,V)$, it suffices to compute the
$(\min,+)$-product of $D$ and a $\mathbf{0}$-vector of appropriate dimension.

For unweighted edit distance (or, to be precise, for deletion distance, which admits a
simple isometric embedding from unweighted edit distance), all the matrices in question
are \emph{unit-Monge matrices} and the results of Tiskin~\cite{Tis15} allow storing each
of them in $\Ohtilde(k)$ space so that the $(\min,+)$-products can be computed in
$\Ohtilde(k)$ time.
Thus, the global product $D$ can be maintained in $\Ohtilde(k)$ time assuming the
individual matrices $D_{X,Y}$ have been precomputed for all plausible puzzle pieces $(X,Y)$.
In the presence of weights, the matrices are merely Monge matrices, which means that we
have to resort to an $\Oh(k^2)$-space representation and the $\Oh(k^2)$-time
$(\min,+)$-multiplication algorithm of~\cite{SMAWK}.

What comes to the rescue is that the queries require $D\oplus \mathbf{0}$ and the SMAWK
algorithm is able to compute a matrix-vector product in $\Oh(k)$ time.
We can offload some update cost to the query algorithm: if we lazily represent $D$ as a
product of $d$ matrices, the query time increases from $\Oh(k)$ to $\Oh(kd)$.
The challenge is to avoid matrix-matrix products for most updates without compromising the
query time.

For this, recall that all but $\Oh(k^2)$ updates adjust the exponent of a run of plain
pieces, and each such run is active (meaning that its exponent changes between subsequent
queries) for two blocks of $\alpha=\Oh(k)$ queries each.
Thus, we precompute the matrix for each possible exponent of a run of plain pieces and, in
the maintained decomposition of $D$, we make sure that each active run is represented with
a separate matrix that can be substituted in $\Ohtilde(1)$ time when the exponent changes.
The remaining $\Oh(k^2)$ updates take $\Ohtilde(k^2)$ time each.
The query time becomes $\Oh(k\cdot (1+r))$, where $r$ is the number of active runs
at the moment, but each of the $\Oh(k^2)$ runs is active during $\Oh(k)$ queries, so the
total query time is $\Oh(k^4)$.

In \cref{sec:reduction}, we state the \BCDPM problem (a variant of DPM with appropriately
extended interface) and provide an efficient reduction from \PMWED.
The solution to BCDPM is described in \cref{sec:dpm}.
The main challenge there is to formally define the matrices $D_{X,Y}$ so that they can be
efficiently constructed (using the techniques behind \verify) and so that their $(\min,+)$-product is
meaningful.
In~\cite{ckw22}, a custom \emph{restriction} operation is introduced so that $D_{X,Y}$ is
uniquely defined yet computable in $\Oh(k^2)$ time even if $X$ and $Y$ are very long.
Unfortunately, this operation relies on the \emph{seaweed monoid} of Tiskin~\cite{Tis15},
which loses its meaning in the presence of weights.
In this work, we instead introduce a notion of a \emph{fern matrix} that leaves a lot of
freedom in the particular choice of $D_{X,Y}$ (in particular, for the values that do not
correspond to $(k,w)$-error occurrences of $X$ in $Y$) and is nevertheless sufficient for us to
recover the $(k,w)$-error occurrences of $U=U_1\cdots U_z$ in $V=V_1\cdots V_z$ from the
$(\min,+)$-product of fern matrices \(D_{U_i,V_i}\).

\subsection{The \texorpdfstring{\boldmath $\cOtilde(n+k^{3.5}\cdot W^4 \cdot n/m)$}{Õ(n+k³˙⁵·W⁴·n/m)}-time
Algorithm}\label{sec:techov_k35}

We obtain \cref{thm:stalgmaink35}
by combining the approach outlined in~\cref{sec:techov_k4} with a generalization of
techniques from \cite{ckw22} and several results and insights from the recent (static and
dynamic) algorithms for edit distance with small integer weights~\cite{gk24}.
On the way to achieving this result, we make the following technical contributions.
First, we devise an $\cOtilde(k^2W)$-time \pillar algorithm for \verify, generalizing both
the conditionally optimal $\cO(k^2)$-time \pillar algorithm for \verify in the unweighted
setting and the aforementioned algorithm of Gorbachev and Kociumaka~\cite{gk24}.
This solution to \verify is instrumental in both the non-periodic case and the periodic
case of the algorithm underlying \cref{thm:stalgmaink35}.
For instance, in the periodic case, it allows us to compute optimal weighted alignments
from $P$ and $T$ to substrings of $Q^\infty$; the algorithm outlined in
\cref{sec:techov_k4} utilizes their unweighted counterparts.
Second, we adapt an algorithm from \cite{ckw20} for the computation of so-called
\emph{locked fragments} to the weighted setting;
this might be of independent interest as locked fragments have also recently been used in
obtaining solutions for a circular variant of \PMED~\cite{kCPM}.
Third, we lift combinatorial insights for \PMED from~\cite{ckw22} to \PMWED---our
assumption that the weight function is a metric is required in this part of the solution
since it ensures that the weighted edit distance satisfies the triangle inequality.

\section{Preliminaries}
\dglabel{sec:prelim}[7-16-1,fct:ali,def:alignment-graph,lem:oagw,7-16-3,lem:MSSP,thm:smawk,7-16-2,fact:simple](Definitions and generally useful facts
and results)

\paragraph*{Sets and Arithmetic Progressions}\label{sec:prel}
For integers \(i\) and \(j\),
we write $\fragment{i}{j}$ to denote the set $\{i, \dots, j\}$, and we write
$\fragmentco{i}{j}$ to denote the set $\{i ,\dots, j - 1\}$.
Similarly, we define $\fragmentoc{i}{j}$ and $\fragmentoo{i}{j}$.

For integers \(a\), \(d\), and \(\ell\) with $d\ne 0$ and $\ell\ge 0$,
the set $\{ a + j \cdot d \mid j \in \fragmentco{0}{\ell}\}$
is an \emph{arithmetic progression} with starting value $a$, difference $d$, and
length~$\ell$.
Whenever we use arithmetic progressions in an algorithm, we store them
as triples $(a,d,\ell)$ consisting of their starting value, difference, and length.

For a set $X\subseteq \mathbb{Z}$ and an integer $s \in \mathbb{Z}$, we write
$s+X$ and $X+s$ for the set $\{s+x : x\in X\}$ of all elements of $X$ incremented by $s$.

\paragraph*{Strings}

We write $T=T\position{0}\, T\position{1}\cdots T\position{n-1}$ to denote a \textit{string} of
length $|T|=n$ over an alphabet $\Sigma$. The elements of~$\Sigma$ are called \textit{characters}.
We write $\varepsilon$ to denote the \emph{empty string}.
In a slight abuse of notation, we write $\Esigma \coloneqq \Sigma
\cup \{\emptystring\}$ for the alphabet $\Sigma$ that
is extended by $\emptystring$ (which we use to represent the lack of a character).

A string $P$ is a \emph{substring} of a string~$T$
 if we have $P=T\position{i}\cdots T\position{j-1}$ for some integers $i,j$ with $0\le i \le j \le |T|$.
In this case, we say that there is an \emph{exact occurrence} of~$P$ at position $i$
in~$T$, or, more simply, that $P$ \emph{exactly occurs in} $T$.
We write $T\fragmentco{i}{j}$ for this particular occurrence of \(P\) in \(T\),
which is formally a \emph{fragment} of $T$ specified by the two endpoints $i,j$.
For notational convenience, we may also refer to this fragment as $T\fragment{i}{j-1}$,
$T\fragmentoc{i-1}{j-1}$, or $T\fragmentoo{i-1}{j}$.
Two fragments (perhaps of different strings) \emph{match} if they are occurrences of the same strings.

A \emph{prefix} of~a string $T$ is a fragment that starts at position~$0$ (that is, a
prefix is a fragment of~the form $T\fragmentco{0}{j}$ for some $j\in \fragment{0}{|T|}$).
A \emph{suffix} of~a string $T$ is a fragment that ends at position ${|T|-1}$ (that is,
a suffix is a fragment of~the form $T\fragmentco{i}{|T|}$ for some $i\in \fragment{0}{|T|}$).

For two strings $U$ and $V$, we write $UV$ or $U\odot V$ to denote their concatenation;
we also write $\bigodot_{i=1}^{m} X_i$ for $X_1 \odot \cdots \odot X_{m}$ and set $U^k
\coloneqq \bigodot_{i=1}^{k} U$.
Similarly, $U^\infty$ denotes an infinite string obtained by concatenating
infinitely many copies of~$U$. At certain points, we may access
such an infinite repetition of \(U\) also at negative positions; hence for an integer \(j \in
\fragmentco{0}{|U|}\) and a (possibly negative) integer \(i\), we formally set \(U^{\infty}\position{i
\cdot |U| + j} \coloneqq U\position{j}\).

A string $T$ is \emph{primitive} if it cannot be expressed as $T=U^k$
for any string~$U$ and any integer~$k > 1$.
A positive integer $p$ is a \emph{period} of~a string $T$ if $T\position{i} =
T\position{i + p}$ for
all $i \in \fragmentco{0}{|T|-p}$. We refer to the smallest
period as \emph{the period} $\per(T)$ of~the string.

For a string $T$ of length $n$, we define the following \emph{rotation} operations.
The operation $\rot(\star)$ takes as input a string, and moves its last character to the
front; that is,~$\rot(T) \coloneqq  T\position{n-1}T\fragment{0}{n-2}$.
The inverse operation $\rot^{-1}(\star)$ takes as input a string and
moves its initial character to the end; that is,~$\rot^{-1}(T) \coloneqq  T\fragment{1}{n-1}T\position{0}$.
Observe that a primitive string $T$ does not match any of~its non-trivial rotations,
that is, we have $T=\rot^j(T)$ if and only if $j \equiv 0 \pmod{n}$.

\paragraph*{Alignment Graphs and (Weighted) Edit Distances}
A \emph{weight function} $w : \sqEsigma \to \mathbb{R}$ is a (not necessarily symmetric)
function that assigns cost to character edits.
A weight function $w$ is \emph{normalized} if $\w{a}{a}=0$ and $\w{a}{b} \ge 1$ holds for all $a,b\in \Esigma$;
as in previous works, we work only with normalized weight functions.
Moreover, we assume that $w : \sqEsigma \to \intvl{0}{W}$, where $W$ is some upper bound
known to the algorithms.

\begin{figure}[t]
    \begin{subfigure}[t]{.675\linewidth}
        \centering
        \includegraphics[page=1,width=\linewidth]{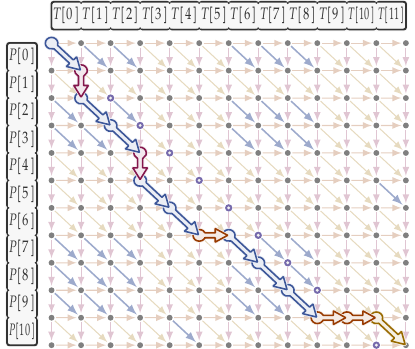}
        \caption{An example of an alignment graph of strings \(P\) and \(T\).
            Blue edges correspond to edges of weight \(0\), all other edges have a nonzero
            weight.
            We highlight an alignment
            \(\A : P \onto T\) (a path from \((0,0)\) to \((|P|, |T|)\))
            that makes \(6\) edits to transform \(P\) into \(T\). This alignment
            corresponds to the first part of the alignment in \cref{7.10.12-1}.
            The vertices of the \(0\)-th diagonal are depicted purple (unless they are
            part of \(\A\)).}
            \label{sfqpycukkb}
    \end{subfigure}\hfill%
    \begin{minipage}[b]{.31\linewidth}
     \begin{subfigure}[t]{\linewidth}
        \centering
        \includegraphics[page=2,scale=1.28]{figs/g02}
        \caption{A detailed view of a deletion and the associated
        cost.}
    \end{subfigure}\vskip.5ex
      \begin{subfigure}[t]{\linewidth}
        \centering
        \includegraphics[page=3,scale=1.28]{figs/g02}
        \caption{A detailed view of an insertion and the associated
        cost.}
    \end{subfigure}\vskip.5ex
       \begin{subfigure}[t]{\linewidth}
        \centering
        \includegraphics[page=4,scale=1.28]{figs/g02}
        \caption{A detailed view of a substitution (yellow; if \(P\position{p} \neq T\position{t}\)) and a
            match (blue; if \(P\position{p} = T\position{t}\)) and the associated costs.}
    \end{subfigure}
    \end{minipage}
    \vskip-1ex
    \caption{The alignment graph of two strings \(P\) and \(T\); as well as a detailed
        view of the edge costs. To obtain the corresponding augmented alignment graph, we add
        a back-edge of weight \(W\) for every depicted edge.\vskip-4ex}
    \label{oiuoqannlf}
\end{figure}

Observe that \(w\) does not need to satisfy the triangle inequality nor does \(w\) need to be
symmetric.

\begin{definition}[The alignment graph \(\AGw(P, T)\) of a weight function \(w\) and
    strings \(P\), \(T\),~{\cite{ckw23}}]
    \dglabel{def:alignment-graph}
    For strings $P, T\in \Sigma^*$ and a weight function $w: \sqEsigma \to \intvl{0}{W}$,
    we define the \emph{alignment graph} $\AGw(P, T)$ as a grid graph with vertices
    $\fragment{0}{|P|}\times \fragment{0}{|T|}$ and the following edges:
    \begin{itemize}
        \item vertical edges $(p,t)\to (p+1,t)$ of cost $\w{P\position{p}}{\emptystring}$
            for $(p,t)\in \fragmentco{0}{|P|}\times \fragment{0}{|T|}$,
        \item horizontal edges $(p,t)\to (p,t+1)$ of cost $\w{\emptystring}{T\position{t}}$
            for $(p,t)\in \fragment{0}{|P|}\times \fragmentco{0}{|T|}$, and
        \item diagonal edges $(p,t)\to (p+1,t+1)$ of cost $\w{P\position{p}}{T\position{t}}$
            for $(p,t)\in \fragmentco{0}{|P|}\times \fragmentco{0}{|T|}$.
            \qedhere
    \end{itemize}
\end{definition}

We visualize the alignment graph $\AG^w(P, T)$ as a grid graph with $|T|+1$ columns
and $|P|+1$ rows that grows down and to the right, that is,
the vertex \((0,0)\) is at the top-left, the vertex \((0,|T|)\) is at the top-right,
and the vertex \((|P|,|T|)\) is at the bottom-right.
Consult \cref{oiuoqannlf} for a visualization.

A \emph{diagonal} is a subgraph of the form \((i,j) \to (i+1,j+1) \to (i + 2, j + 2) \to
\cdots\) that starts at the top-left boundary of \(\AGw(P, T)\) and extends to the bottom-right
boundary of \(\AGw(P, T)\); we typically refer to this diagonal as the $(j-i)$-th diagonal.
\begin{example}
    The 0-th diagonal is the subgraph \((0,0)\to (1,1) \to (2,2) \to \cdots\),
    the $1$-st diagonal is the subgraph \((0,1)\to (1,2) \to (2,3) \to \cdots\), and
    the \((-1)\)-st diagonal is the subgraph \((1,0)\to (2,1) \to (3,2) \to \cdots\).
    Also consider \cref{sfqpycukkb} for a visualization.
\end{example}

Observe that every vertex $(p,t)$ of \(\AGw(P, T)\) lies on exactly one diagonal---the
diagonal $t-p$; hence, we may refer to \emph{the} diagonal passing through a vertex.
We say that a path in \(\AGw(P, T)\)
\emph{intersects} a diagonal if the path shares a vertex with the diagonal.

The alignment graph allows for a concise definition of an \emph{alignment}.

\begin{definition}
    \dglabel{7-16-1}(An alignment $\A: P\fragmentco{p}{p'} \onto T\fragmentco{t}{t'}$ for
    strings \(P\) and \(T\))
    For strings $P, T\in \Sigma^*$ and a weight function $w: \sqEsigma \to \intvl{0}{W}$,
    an \emph{alignment} of $P\fragmentco{p}{p'}$ onto
    $T\fragmentco{t}{t'}$, denoted by $\A: P\fragmentco{p}{p'} \onto T\fragmentco{t}{t'}$,
    is a path from \((p,t)\) to \((p',t')\) in \(\AGw(P, T)\), interpreted as a sequence of vertices.
    The \emph{cost} $\edwa{\A}{P\fragmentco{p}{p'}}{T\fragmentco{t}{t'}}$ of an alignment
    \(\A\) is the cost of the corresponding path in
    \(\AGw(P, T)\).

    We write
    $\Als(P\fragmentco{p}{p'}, T\fragmentco{t}{t'})$
    for the set of all alignments of $P\fragmentco{p}{p'}$ onto $T\fragmentco{t}{t'}$.
\end{definition}

For an alignment $\A = (p_j, t_j)_{j=0}^r\in \Als(P\fragmentco{p}{p'},
T\fragmentco{t}{t'})$ and an index $j \in \fragmentco{0}{r}$, we say that
\begin{itemize}
    \item $\A$ \emph{deletes} $P\position{p_j}$,
        denoted by \(P\position{p_j} \ponto{\A} \varepsilon\),
        if $(p_{j+1}, t_{j+1}) = (p_j+1, t_j)$;
    \item $\A$ \emph{inserts} $T\position{t_j}$,
        denoted by \(\varepsilon \ponto{\A} T\position{t_i}\),
        if $(p_{j+1}, t_{j+1}) = (p_j, t_j+1)$;
    \item $\A$ \emph{aligns} $P\position{p_j}$ to $T\position{t_j}$, denoted by
        $P\position{p_j} \ponto{\A} T\position{t_j}$
    if $(p_{j+1}, t_{j+1}) = (p_j+1, t_j+1)$;
    \item $\A$ \emph{matches} $P\position{p_j}$ with $T\position{t_j}$
        if $P\position{p_j}
        \ponto{\A} T\position{t_j}$ and
    $P\position{p_j} = T\position{t_j}$;
    \item $\A$ \emph{substitutes} $P\position{p_j}$ with $T\position{t_j}$
        if $P\position{p_j}
        \ponto{\A} T\position{t_j}$ but
    $P\position{p_j} \neq T\position{t_j}$.
\end{itemize}
Insertions, deletions, and substitutions are jointly called (character) \emph{edits}.

The \emph{breakpoint representation} of an alignment \[
    \A=((0,\varepsilon), (0,\varepsilon)), ((p_j,P\position{p_j}),(t_j, T\position{t_j}))_{t=0}^r\in \Als(P\fragmentco{p}{p'},
T\fragmentco{t}{t'}), ((|P|, \varepsilon), (|T|, \varepsilon)) \]
is the subsequence of $\A$ consisting of pairs $((p_j,
P\position{p_j}),(t_j,T\position{t_j}))$ such that
$\A$ does not match $P\position{p_j}$ with $T\position{t_j}$, padded with dummy entries at
the front and at the back.
Observe that the size of the breakpoint representation is $2+\ed_\A(P,T)$ and that it
can be used to retrieve the entire alignment:
for any two consecutive elements $((x', \cdot), (y', \cdot)),((x,\cdot),(y, \cdot))$ of
the breakpoint representation, it
suffices to add $(x-\delta,y-\delta)$ for $\delta \in \fragmentoo{0}{\max(x-x',y-y')}$.
Consult \cref{byutjsdglx} for a visualization of an example.
Unless stated otherwise, throughout this work, we represent alignments via their
breakpoint representations without always explicitly specifying it.

\begin{figure}[t]
    \begin{subfigure}[t]{\linewidth}
        \centering
        \includegraphics[page=1,width=\textwidth]{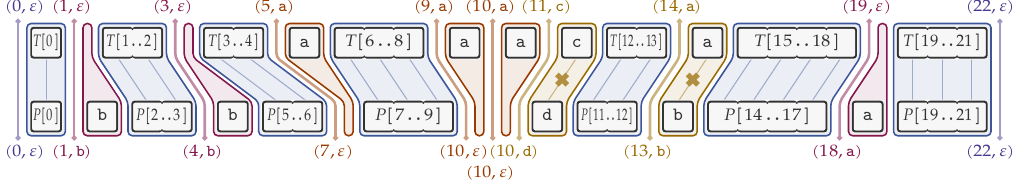}
        \caption{ An alignment \(\A: P \onto T\) of cost \(8\). Observe that each
        breakpoint (except for the first and last) corresponds to an edit operation that
        immediately follows.  }
        \label{7.10.12-1}
    \end{subfigure}
    \begin{subfigure}[t]{\linewidth}
        \centering
        \includegraphics[page=2,scale=1]{figs/g01}
        \includegraphics[page=3,scale=1]{figs/g01}
        \caption{The alignment \(\A\) translates a partition of
            \(P = P\fragmentco{0}{6} P\fragmentco{6}{10} P\fragmentco{10}{13}
            P\fragmentco{13}{22}\)
            into a partition
            of \(T = \A(P) =
            \A(P\fragmentco{0}{6}) \A( P\fragmentco{6}{10}) \A( P\fragmentco{10}{13} )
            \A( P\fragmentco{13}{22} ) =
            T\fragmentco{0}{4} T\fragmentco{4}{9} T\fragmentco{9}{14}
            T\fragmentco{14}{22}\).}
    \end{subfigure}%
    \caption{The breakpoint representation of an alignment \(\A : P \onto T\) and the
        corresponding strings, as well as an example how \(\A\) translates fragments (and
        partitions) of \(P\). The (edit operations of the) alignment are depicted as
        colored boxes; breakpoints are depicted as vertical lines separating the boxes of
        the alignment. For edit operations, we depict the characters involved; otherwise
        we depict the fragments of \(P\) and \(T\) that are matched.
        Consult \cref{sfqpycukkb} for a visualization of the alignment graph of
        \(P\fragmentco{0}{11}\) and \(T\fragmentco{0}{12}\), as well as the corresponding
        part of \(\A\).
    }\label{byutjsdglx}
\end{figure}

Given $\A = (p_j, t_j)_{j=0}^r \in \Als(P, T)$, we define the \emph{inverse alignment}
as $\A^{-1} \coloneqq (t_j, p_j)_{j=0}^r \in \Als(T, P)$.

We define the \emph{weighted edit distance} of strings $P, T \in \Sigma^*$
with respect to a weight function $w$ as
\[\edw{P}{T} \coloneqq \min_{\A \in \Als(P, T)} \edwa{\A}{P}{T}; \]
that is, as the length of a shortest path from \((p,t)\) to \((p',t')\) in \(\AGw(P, T)\).

\begin{fact}[{\cite[Fact 2.5]{DGHKS23}}]
    \dglabel{7-16-2}(The weighted edit distance is a metric for metric weight functions,~\cite[Fact 2.5]{DGHKS23})
    If $w$ is a metric on $\Esigma$, then $\mathsf{ed}^w$ is a metric on $\Esigma^*$.
\end{fact}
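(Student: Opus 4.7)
The plan is to verify the four metric axioms for $\ed^w$ on $\Sigma^*$ directly from the definition as a shortest-path length in $\AG^w(X,Y)$, exploiting the corresponding axioms for $w$ on $\Esigma$. Non-negativity is immediate since every edge weight $w(\cdot,\cdot)$ is non-negative. For the identity of indiscernibles, note that the ``staircase'' diagonal alignment witnesses $\ed^w(X,X) = 0$ because $w(a,a)=0$ for every $a\in\Esigma$; conversely, if $\A \in \Als(X,Y)$ has cost~$0$, then since $w$ is a metric, every horizontal edge (cost $w(\varepsilon, Y[t])>0$), every vertical edge (cost $w(X[p],\varepsilon)>0$), and every diagonal substitution (cost $w(X[p],Y[t])>0$ when $X[p]\ne Y[t]$) is forbidden, so $\A$ consists solely of matching diagonal steps, forcing $|X|=|Y|$ and $X=Y$.

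Symmetry follows by taking any optimal alignment $\A\in\Als(X,Y)$ and passing to $\A^{-1}\in\Als(Y,X)$ (as defined right after the breakpoint representation in the preliminaries). Under inversion, vertical edges become horizontal and vice versa, while diagonal edges remain diagonal; the edge costs transform as $w(X[p],\varepsilon)\mapsto w(\varepsilon, X[p])$, $w(\varepsilon,Y[t])\mapsto w(Y[t],\varepsilon)$, and $w(X[p],Y[t])\mapsto w(Y[t],X[p])$. Since $w$ is symmetric, these costs are unchanged, so $\ed^w_{\A^{-1}}(Y,X) = \ed^w_{\A}(X,Y)$, and optimising over $\A$ (in both directions) gives $\ed^w(X,Y)=\ed^w(Y,X)$.

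The main obstacle, and the heart of the argument, is the triangle inequality $\ed^w(X,Z)\le \ed^w(X,Y)+\ed^w(Y,Z)$. Given optimal alignments $\A : X\onto Y$ and $\B : Y\onto Z$, I will construct a composite alignment $\mathcal{C} : X\onto Z$ whose cost does not exceed $\ed^w_{\A}(X,Y)+\ed^w_{\B}(Y,Z)$. The construction walks along $Y$ and, for each index $j\in\fragmentco{0}{|Y|}$, pairs up the way $\A$ produces $Y[j]$ with the way $\B$ consumes $Y[j]$, producing one edit of $\mathcal{C}$ in each of four cases:
\begin{itemize}
    \item $X[i]\ponto{\A} Y[j]$ and $Y[j]\ponto{\B} Z[k]$: $\mathcal{C}$ aligns $X[i]$ with $Z[k]$ at cost $w(X[i],Z[k]) \le w(X[i],Y[j]) + w(Y[j],Z[k])$.
    \item $X[i]\ponto{\A} Y[j]$ and $Y[j]\ponto{\B}\varepsilon$: $\mathcal{C}$ deletes $X[i]$ at cost $w(X[i],\varepsilon)\le w(X[i],Y[j]) + w(Y[j],\varepsilon)$.
    \item $\varepsilon\ponto{\A} Y[j]$ and $Y[j]\ponto{\B} Z[k]$: $\mathcal{C}$ inserts $Z[k]$ at cost $w(\varepsilon,Z[k])\le w(\varepsilon,Y[j]) + w(Y[j],Z[k])$.
    \item $\varepsilon\ponto{\A} Y[j]$ and $Y[j]\ponto{\B}\varepsilon$: $\mathcal{C}$ does nothing for $Y[j]$, with cost $0\le w(\varepsilon,Y[j])+w(Y[j],\varepsilon)$.
\end{itemize}
Characters $X[i]$ deleted by $\A$ remain deleted in $\mathcal{C}$, and characters $Z[k]$ inserted by $\B$ remain inserted in $\mathcal{C}$, each contributing to both sides equally. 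Since every edge of $\A$ and $\B$ is charged exactly once on the right-hand side and each corresponding edge of $\mathcal{C}$ is bounded on the left by the triangle inequality of $w$, summing over all $j$ and all unmatched positions yields $\ed^w_{\mathcal{C}}(X,Z)\le \ed^w_{\A}(X,Y)+\ed^w_{\B}(Y,Z)$, which gives $\ed^w(X,Z)\le \ed^w(X,Y)+\ed^w(Y,Z)$. The only delicate point is verifying that the above pairing is well-defined and that $\mathcal{C}$ is indeed a valid path from $(0,0)$ to $(|X|,|Z|)$ in $\AG^w(X,Z)$; this follows by reading off $\mathcal{C}$ in the order dictated by the combined $\A,\B$ traversals.
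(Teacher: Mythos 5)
Your proof is correct, and it is the standard argument. Note, however, that the paper does not supply its own proof of this fact: it simply cites it (as Fact 2.5 of \cite{DGHKS23}), so there is no in-paper proof to compare against. Your verification of the four metric axioms is sound; in particular, the triangle-inequality argument --- composing an optimal $\A : X \onto Y$ with an optimal $\B : Y \onto Z$ by pairing, for each $j$, how $\A$ produces $Y\position{j}$ with how $\B$ consumes it, and bounding each resulting step of $\mathcal{C}$ via the triangle inequality of $w$ on $\Esigma$ --- is exactly the textbook composition of alignments, and you correctly flag (and could easily fill in) the only point that deserves care, namely that the interleaved sequence is a monotone $(0,0) \leadsto (|X|,|Z|)$ path in $\AG^w(X,Z)$.
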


For an integer $k \geq 0$, we also define a capped version
\[
    \edwk{k}{P}{T} \coloneqq
        \begin{cases}
            \edw{P}{T} & \text{if } \edw{P}{T} \leq k,\\
            \infty & \text{otherwise}.
        \end{cases}
\]
We say that an alignment $\A \in \Als(P, T)$ is $w$-optimal (or just optimal if there is
no ambiguity) if $\edwa{\A}{P}{T} = \edw{P}{T}$.

For an alignment $\A:X\fragmentco{x}{x'}\onto Y\fragmentco{y}{y'}$ and a fragment
$X\fragmentco{\bar{x}}{\bar{x}'}$ that is contained in $X\fragmentco{x}{x'}$,
we write $\A(X\fragmentco{\bar{x}}{\bar{x}'})$ for the fragment
$Y\fragmentco{\bar{y}}{\bar{y}'}$ that is contained in $Y\fragmentco{y}{y'}$ which $\A$
aligns against $X\fragmentco{\bar{x}}{\bar{x}'}$.
To avoid ambiguities, we formally set
\[\bar{y} \coloneqq \min\{\hat{y} : (\bar{x},\hat{y})\in \A\}\quad\text{and}\quad
    \bar{y}' \coloneqq \left\{\begin{array}{c l}
            y' & \text{if }\bar{x}' = x',\\
            \min\{\hat{y}' : (\bar{x}',\hat{y}')\in \A\} & \text{otherwise}.
    \end{array}\right.\]
This particular choice satisfies the following decomposition property.
\begin{fact}[Alignments transfer decomposistions, {\cite{ckw22,ckw23}}]
    \dglabel{fct:ali}
    For any alignment $\A$ of $X$ onto $Y$ and a decomposition $X=X_1\cdots X_t$ into $t$
    fragments, $Y=\A(X_1)\cdots \A(X_t)$ is a decomposition into $t$ fragments with
    \[\edwa{\A}{X}{Y}\ge \sum_{i=1}^t \edw{X_i}{\A(X_i)}.\]

    Further, if \(\A\) is an optimal alignment, then we have equality:
    $\edwa{\A}{X}{Y} = \sum_{i=1}^t \edw{X_i}{\A(X_i)}$.
\end{fact}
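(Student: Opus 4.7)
The plan is to view $\A$ as a monotone lattice path in $\AGw(X,Y)$ from $(0,0)$ to $(|X|,|Y|)$ and to cut it along the horizontal rows induced by the decomposition $X = X_1 \cdots X_t$. Writing $X_i = X\fragmentco{x_{i-1}}{x_i}$ with $0 = x_0 \le \cdots \le x_t = |X|$, I would set $y_0 \coloneqq 0$, $y_t \coloneqq |Y|$, and $y_i \coloneqq \min\{\hat y : (x_i,\hat y) \in \A\}$ for each intermediate $i$. These values are well-defined because $\A$ is monotone in both coordinates and visits every row between $0$ and $|X|$. The convention introduced immediately before the statement then gives $\A(X_i) = Y\fragmentco{y_{i-1}}{y_i}$, so $Y = \A(X_1) \cdots \A(X_t)$ is a decomposition into $t$ fragments.

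Next I would verify that the portion $\A_i$ of $\A$ from $(x_{i-1}, y_{i-1})$ to $(x_i, y_i)$ lies entirely in the rectangle $\fragment{x_{i-1}}{x_i} \times \fragment{y_{i-1}}{y_i}$, so that, after translation, it is a legitimate alignment in $\Als(X_i, \A(X_i))$. Monotonicity of $\A$ confines the row coordinate automatically, and the crucial observation for the column coordinate is that the exit column of any row is at most the entry column of the next row; iterating this bound shows that no column visited by $\A$ in rows $\fragment{x_{i-1}}{x_i - 1}$ can exceed $y_i = \min\{\hat y : (x_i, \hat y) \in \A\}$. Since the $t$ sub-paths $\A_1, \dots, \A_t$ jointly partition the edges of $\A$, summing the per-piece cost inequalities $\edwa{\A_i}{X_i}{\A(X_i)} \ge \edw{X_i}{\A(X_i)}$ yields $\edwa{\A}{X}{Y} \ge \sum_{i=1}^t \edw{X_i}{\A(X_i)}$, as required.

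For the equality in the optimal case, I would apply a straightforward exchange argument: if some $\A_i$ had cost strictly greater than $\edw{X_i}{\A(X_i)}$, then replacing $\A_i$ by any optimal alignment in $\Als(X_i, \A(X_i))$ would produce a legal path in $\AGw(X,Y)$ from $(0,0)$ to $(|X|,|Y|)$ of strictly smaller total cost than $\edwa{\A}{X}{Y} = \edw{X}{Y}$, contradicting the optimality of $\A$. I expect the only genuine bookkeeping issue to be the asymmetric endpoint convention for the last fragment, where $y_t = |Y|$ is forced rather than defined by a minimum: the horizontal tail of $\A$ along row $|X|$ (if any) gets absorbed into $\A_t$, but this is consistent because that tail only inserts characters that already lie in $\A(X_t) = Y\fragmentco{y_{t-1}}{|Y|}$. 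Empty fragments $X_i = \varepsilon$ are similarly harmless: then $y_{i-1} = y_i$, $\A_i$ collapses to a single vertex of cost $0$, and both sides contribute $0$ at position $i$.
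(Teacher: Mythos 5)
The paper does not actually prove \cref{fct:ali}: it is stated as a fact and imported verbatim from \cite{ckw22,ckw23}, so there is no proof in this source for your proposal to be compared against. Judged on its own merits, your argument is correct and is the standard one: split the path $\A$ at the vertices $(x_i,y_i)$, observe that monotonicity of $\A$ confines each sub-path $\A_i$ to the subgrid $\fragment{x_{i-1}}{x_i}\times\fragment{y_{i-1}}{y_i}$ (which, after translation, is $\AGw(X_i,\A(X_i))$), note that edge costs partition across the $t$ sub-paths, and conclude by summing $\edwa{\A_i}{X_i}{\A(X_i)}\ge\edw{X_i}{\A(X_i)}$. The exchange argument for the equality case is the right one.

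Two small remarks. First, once you know $(x_{i-1},y_{i-1})$ and $(x_i,y_i)$ are both on the monotone path $\A$ in that order, every intermediate vertex $(r,c)$ automatically satisfies $x_{i-1}\le r\le x_i$ and $y_{i-1}\le c\le y_i$; the iterated exit-column/entry-column chain you describe is valid but heavier than needed. Second, you declare $y_i\coloneqq\min\{\hat y:(x_i,\hat y)\in\A\}$ only for $0<i<t$ and force $y_t\coloneqq |Y|$. This does \emph{not} match the paper's $\A(\cdot)$-operator literally when some $x_i$ with $i<t$ already equals $|X|$, because the paper sets $\bar y'\coloneqq y'$ whenever $\bar x'=x'$. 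However, in that degenerate regime (several trailing empty fragments and a nonempty horizontal tail in $\A$) the paper's operator would assign the same tail $Y\fragmentco{\cdot}{|Y|}$ to multiple consecutive $\A(X_i)$, so $\A(X_1)\cdots\A(X_t)$ would not literally be a decomposition of $Y$; your choice of $y_i$ is therefore the one under which the statement is actually true, and your parenthetical ``the tail gets absorbed into $\A_t$'' is the right repair. The issue is cosmetic and vanishes whenever only the last fragment can be a trailing empty piece, which is the case the paper has in mind.
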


For a text $T$, a pattern $P$, a weight function $w$, and an integer threshold $k$,
a \emph{\((k,w)\)-error occurrence of \(P\) in \(T\)} is a position \(i\) such that
for some \(j \ge i\) we have \(\edw{P}{T\fragmentco{i}{j}}\leq k\).%
\footnote{We assume that $k$ is integer as we often
wish to define objects in terms of $k$ and the usage of $\lceil k \rceil$ would create
unnecessary clutter. Our algorithms may be adapted to handle any $k \in \mathbb{R}_{\ge 0}$.}
We write \[
    \OccW_k(P,T)\coloneqq  \{i \mid \exists_{j \ge i} \; \edw{P}{T\fragmentco{i}{j}}\leq k\}.
\] for the set of all $(k,w)$-error occurrence of \(P\) in \(T\).

\begin{remark}
    If for every $a, b \in \Esigma$ we have that $\w{a}{b} = 1$ if $a \neq b$
    and $\w{a}{b} = 0$ otherwise, then $\edw{X}{Y}$ corresponds
    to the standard \emph{unweighted} edit distance (also known as Levenshtein distance~\cite{Lev66}).
    For this case, we drop the superscript $w$ in
    $\AGw(X,Y)$,  $\ed^w$, $\ed^w_\A$, $\ed^w_{\le k}$, and $\OccW_k(P,T)$.

    Additionally, since the unweighted edit distance is symmetric, we may write $\ed(X,Y)$
    instead of $\ed(X \to Y)$, and, similarly for $\ed_\A$ and $\ed_{\leq k}$.
\end{remark}

For any two strings $X,Y$, any normalized weight function $w$, and any alignment $\A: X
\onto Y$, we have $\ed^w_\A(X \to Y) \geq \ed_{\A}(X,Y)$.
This simple observation yields the following fact.

\begin{fact}
    \dglabel{fact:simple}(For normalized weight functions, weighted occurrences are unweighted occurrences)
    For any strings $P$, $T$, any integer threshold $k$, and any normalized weight function
    $w$, we have $\OccW_k(P,T) \subseteq \OccE_k(P,T)$.
\end{fact}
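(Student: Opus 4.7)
The plan is to prove the inclusion $\OccW_k(P,T) \subseteq \OccE_k(P,T)$ pointwise. I will take an arbitrary $i \in \OccW_k(P,T)$ and exhibit a witness $j \ge i$ showing $i \in \OccE_k(P,T)$ using essentially the same $j$ that witnesses membership in $\OccW_k(P,T)$.

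More concretely, fix $i \in \OccW_k(P,T)$. By definition, there exists $j \ge i$ with $\edw{P}{T\fragmentco{i}{j}} \le k$. Let $\A \in \Als(P, T\fragmentco{i}{j})$ be a $w$-optimal alignment, so $\edwa{\A}{P}{T\fragmentco{i}{j}} \le k$. I then invoke the observation stated immediately before the fact, namely that $\edwa{\A}{P}{T\fragmentco{i}{j}} \ge \ed_\A(P, T\fragmentco{i}{j})$ for any normalized $w$. Chaining these inequalities yields $\ed(P, T\fragmentco{i}{j}) \le \ed_\A(P, T\fragmentco{i}{j}) \le k$, so $i \in \OccE_k(P,T)$.

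The only substantive content is the per-alignment inequality $\edwa{\A}{P}{Y} \ge \ed_\A(P, Y)$. This follows edge by edge along the path corresponding to $\A$ in $\AGw(P, Y)$: normalization guarantees $\w{a}{b} \ge 1$ for $a \ne b \in \Esigma$ (covering insertions, deletions, and substitutions), while $\w{a}{a} = 0$ for matches; since unweighted cost counts each non-match edge with weight exactly $1$ and matches with weight $0$, the weighted cost of every edge dominates its unweighted cost. Summing over the edges of $\A$ gives the claim.

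There is no real obstacle here: the statement is essentially a restatement of monotonicity of alignment cost under replacing weights by their unweighted counterparts, combined with normalization of $w$. The entire argument fits in a couple of lines once the preceding observation is cited.
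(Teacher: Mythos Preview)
Your proposal is correct and follows exactly the approach the paper intends: the paper simply remarks that the per-alignment inequality $\ed^w_\A(X \to Y) \ge \ed_\A(X,Y)$ for normalized $w$ ``yields'' the fact, and your argument spells this out by picking a witness $j$ and a $w$-optimal alignment $\A$, then chaining $\ed(P,T\fragmentco{i}{j}) \le \ed_\A(P,T\fragmentco{i}{j}) \le \edwa{\A}{P}{T\fragmentco{i}{j}} \le k$.
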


We write \[
    \edp{S}{T} \coloneqq  \min \{\ed(S,T^\infty\fragmentco{0}{j}) : j \in \mathbb{Z}_{\ge
0}\}\]  for the minimum edit
distance between a string $S$ and any prefix of~a string $T^\infty$.
Further, we write
\[\edl{S}{T}\coloneqq  \min\{\ed(S,T^\infty\fragmentco{i}{j}) : i, j \in \mathbb{Z}_{\ge
0}, i \le j\}\]
for the minimum edit distance between $S$ and any substring of~$T^\infty$,
and we set
\[\eds{S}{T} \coloneqq   \min\{\ed(S,T^\infty\fragmentco{i}{j|T|}) : i, j \in
\mathbb{Z}_{\ge 0}, i \le j|T|\}.\]

Finally, we discuss a notion for the similarity of strings from a family of strings.
\begin{definition}
    \dglabel{def:median_ed}(The median edit distance \(\ed(\S)\) of a family of strings
    \(\S\))
    The \emph{median edit distance}
    of a family $\S$ of strings over an alphabet $\Sigma$
    is \[
        \ed(\S) \coloneqq \min_{\Sr\in \Sigma^*} \sum_{S\in \S}\ed(S,\Sr).
        \qedhere
    \]
\end{definition}
\begin{remark}[Bounded median edit distance implies small size and small pairwise
    edit distance]
    \dglabel{dzefaaseda}[def:median_ed]
    As the unweighted edit distance has a triangle inequality, a bound \(\ed(\S) \le d\)
    implies \(||S|-|T|| \le \ed(S,T) \le d\) for any \(S, T \in \S\).
    Further, as two non-equal strings have an edit distance of at least \(1\), we also have
    \(|\S| = \Oh(d)\).
\end{remark}

\paragraph*{Planar Graphs, the Monge Property, and \boldmath $(\min,+)$-Multiplication}

Alignments graphs are \emph{planar}. The \emph{multiple-source shortest paths} (MSSP) data
structure of Klein~\cite{MSSP} represents all shortest path trees rooted at the vertices
of a single face $f$ in a planar graph $G$ of size $n$ using a persistent dynamic tree.

\begin{theoremq}[Efficient MSSP on planar graphs, {\cite{MSSP}}]
    \dglabel{lem:MSSP}
    Given a directed planar graph $G$ of size $n$ with a distinguished face $f$ and
    non-negative edge weights, we can construct in $\cO(n \log n)$ time an $\cO(n\log
    n)$-space data structure that, for any two vertices $u,v\in V(G)$, at least one of which
    is lies on $f$, computes the distance $\dist_G(u,v)$ in $\Oh(\log n)$ time.
    Moreover, the shortest $u\leadsto v$ path $P$ can be reported in $\Oh(|P|\log
    \log\Delta(G))$ time, where $\Delta(G)$ is the maximum degree of \(G\).
\end{theoremq}

Pairwise distances of vertices that lie on the outer face of a strongly connected planar
graph satisfy the \emph{Monge property}.

\begin{fact}[Distance matrices of planar graphs are Monge, {\cite[Section 2.3]{FR06}}]
    \dglabel{fct:monge}
    Consider a strongly connected directed planar graph $G$ with non-negative edge
    weights. For vertices
    $u_0,\ldots,u_{p-1},v_{q-1},\ldots,v_0$ that (in this cyclic order) lie on the outer
    face
    of $G$, define a $p\times q$ matrix $D$ with $D\position{i,j}=\dist_G(u_i,v_j)$.

    Then, $D$ satisfies the \emph{Monge property}, that is,
    \[D\position{i,j}+D\position{i',j'} \le D\position{i,j'}+D\position{i',j}\]
    holds for all integers $0\le i\le i' <p$ and $0\le j\le j'<q$.
\end{fact}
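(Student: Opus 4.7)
The plan is to prove the Monge inequality via the classical \emph{non-crossing paths} argument for shortest paths in a planar embedding. Fix indices $0 \le i \le i' < p$ and $0 \le j \le j' < q$. If $i = i'$ or $j = j'$, the inequality trivializes to an equality, so assume $i < i'$ and $j < j'$. Let $\Pi_1$ be a shortest $u_i \leadsto v_{j'}$ path, so that $|\Pi_1| = D\position{i,j'}$, and let $\Pi_2$ be a shortest $u_{i'} \leadsto v_j$ path, so that $|\Pi_2| = D\position{i',j}$; both paths exist because $G$ is strongly connected with non-negative weights.

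The first step is to show that $\Pi_1$ and $\Pi_2$ share a common vertex. I would fix a planar embedding of $G$ whose outer face has $u_0,\ldots,u_{p-1},v_{q-1},\ldots,v_0$ on its boundary in this cyclic order. Given this order, one arc of the outer face cycle between $u_i$ and $v_{j'}$ passes through $u_{i'}$ (as $i<i'<p$), while the other arc passes through $v_j$ (as $0\le j<j'$). Hence the endpoint pairs $\{u_i,v_{j'}\}$ and $\{u_{i'},v_j\}$ \emph{interleave} along the outer face. Closing each $\Pi_\ell$ into a simple closed curve by routing a connecting arc through the exterior of the outer face, the two resulting Jordan curves must cross; since planarity forbids curve-crossings at interior points of edges, the crossing must occur at a vertex $w$ that belongs to both $\Pi_1$ and $\Pi_2$.

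The second step is the combinatorial swap at $w$. Decompose $\Pi_1 = \Pi_1^{\mathrm{pre}} \cdot \Pi_1^{\mathrm{suf}}$, where $\Pi_1^{\mathrm{pre}}$ runs from $u_i$ to $w$ and $\Pi_1^{\mathrm{suf}}$ from $w$ to $v_{j'}$; decompose $\Pi_2 = \Pi_2^{\mathrm{pre}} \cdot \Pi_2^{\mathrm{suf}}$ at $w$ analogously. Then $\Pi_1^{\mathrm{pre}} \cdot \Pi_2^{\mathrm{suf}}$ is a $u_i \leadsto v_j$ walk and $\Pi_2^{\mathrm{pre}} \cdot \Pi_1^{\mathrm{suf}}$ is a $u_{i'} \leadsto v_{j'}$ walk; because edge weights are non-negative, walk lengths dominate the respective shortest-path distances. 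Summing the two bounds,
\[
D\position{i,j} + D\position{i',j'} \le \bigl(|\Pi_1^{\mathrm{pre}}|+|\Pi_2^{\mathrm{suf}}|\bigr) + \bigl(|\Pi_2^{\mathrm{pre}}|+|\Pi_1^{\mathrm{suf}}|\bigr) = |\Pi_1| + |\Pi_2| = D\position{i,j'} + D\position{i',j},
\]
which is exactly the Monge inequality.

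The main obstacle is making the crossing step rigorous. Subtleties arise when $\Pi_1$ and $\Pi_2$ coincide along a subpath rather than meeting transversally, when either path revisits the outer face, or when either path touches a boundary vertex other than its endpoints; all of these are, however, swallowed by the Jordan curve formulation, because the two closed curves obtained by adjoining exterior connecting arcs must share a point and that point must be a vertex of $G$ (an interior edge intersection is forbidden by planarity, while a shared edge still yields shared vertices). Once this geometric claim is in place, the rest of the argument is a purely combinatorial recombination that uses only non-negativity of edge weights.
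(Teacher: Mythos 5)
The paper states this as a Fact with a citation to \cite[Section 2.3]{FR06} and gives no proof of its own, so there is no in-paper argument to compare against. Your proposal is the standard non-crossing-shortest-paths argument, which is exactly the one used in the cited reference: the cyclic order $u_0,\ldots,u_{p-1},v_{q-1},\ldots,v_0$ places $u_i,u_{i'},v_{j'},v_j$ in interleaving positions when $i<i'$ and $j<j'$, the Jordan-curve closure forces the shortest $u_i\leadsto v_{j'}$ and $u_{i'}\leadsto v_j$ paths to share a vertex $w$, and the recombination at $w$ together with non-negativity of the weights gives the Monge inequality. The proof is correct, and your remark on the subtleties (tangential meetings, shared subpaths, boundary revisits) is the right list of things one must check to make the topological step fully rigorous.
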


The composition of shortest paths in graphs can be interpreted as the
$(\min,+)$-multiplication of the corresponding distance matrices.
Formally, given an $p \times q$ matrix $A$ and an $q \times r$ matrix $B$, their $(\min,+)$-product is
the $p \times r$ matrix $C=A\oplus B$ is defined as
\[C\position{i,j} \coloneqq \min_{k\in \fragmentco{0}{q}} \big( A\position{i,k} + B\position{k, j}\big).\]
By $A^{\oplus (t)}$,
we denote the matrix equal to the $(\min,+)$-product of $t$ copies of a square matrix $A$.

In a seminal work~\cite{SMAWK}, Aggarwal, Klawe, Moran, Shor, and Wilber presented an
efficient algorithm to compute the $(\min,+)$-product of Monge matrices given
$\Oh(1)$-time random access to \mbox{their entries}.

\begin{theorem}[\cite{SMAWK}]
    \dglabel{thm:smawk}(SMAWK algorithm for efficient $(\min,+)$-product computation,~\cite{SMAWK})
    We can compute the $(\min,+)$-product of an $p\times q$ Monge matrix
    and an $q\times r$ Monge matrix using $\Oh(p\cdot r\cdot (1+\log(q/\max(p,r))))$
    queries to single positions of the input matrices.
    \lipicsEnd
\end{theorem}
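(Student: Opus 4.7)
The plan is to reduce the $(\min,+)$-product computation to repeated invocations of the classical SMAWK row-minima procedure on totally monotone matrices, exploiting the Monge structure of \emph{both} inputs rather than just one.

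First, I would verify a structural reduction: for any fixed row index $i$, the auxiliary $q\times r$ matrix $N_i$ defined by $N_i\position{k,j} \coloneqq A\position{i,k} + B\position{k,j}$ is Monge. This is because $B$ is Monge and adding a function depending only on the row index~$k$ preserves the Monge property. Consequently, the $i$-th row of $C = A \oplus B$ equals the column-minima vector of $N_i$, which can be extracted in $\Oh(q+r)$ queries via SMAWK row-minima. This already yields an $\Oh(p(q+r))$-query algorithm, which matches the claimed bound up to the logarithmic factor in the regime $q \le \max(p,r)$. For the extreme case $p=r=1$, the function $k \mapsto A\position{0,k}+B\position{k,0}$ is unimodal (even convex) as a direct consequence of the Monge property of both inputs, so a single minimum can be located by ternary/binary search in $\Oh(\log q)$ queries, which matches the bound in that limit.

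Second, to obtain the full bound when $q$ dominates $\max(p,r)$, I would set up a divide-and-conquer on the $k$-axis: split $\fragmentco{0}{q}$ at its midpoint, recursively compute the two partial $(\min,+)$-products (each a $p\times r$ matrix), and combine via an entry-wise minimum using $\Oh(pr)$ queries to the partial results (not to $A$ or $B$). This gives a recursion of depth $\Theta(1 + \log(q/\max(p,r)))$ before the subproblems reach a $k$-range of size $\Oh(\max(p,r))$, at which point the basic SMAWK layer absorbs them within $\Oh(pr)$ queries. In parallel, I would use the monotonicity of the witness function $F(i,j) \coloneqq \min\{k : C\position{i,j} = A\position{i,k}+B\position{k,j}\}$ in both $i$ and $j$---a direct consequence of the Monge property of both $A$ and $B$---to prune the recursion so that, at each level, the aggregated $k$-ranges across all live subproblems sum to $\Oh(q)$ plus $\Oh(\max(p,r))$ boundary overhead. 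Combined with the $\Oh(pr)$ merging cost per level, this yields the target bound $\Oh(pr\cdot(1+\log(q/\max(p,r))))$.

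The main obstacle will be the amortized query analysis tying the two ingredients together: one must argue that the row-wise monotonicity of $F$ (inherited from $A$'s Monge property) and the column-wise monotonicity (from $B$'s) simultaneously prune subproblems so that per-level work stays $\Oh(pr)$, without double-counting queries to entries of $A$ and $B$ that fall in overlapping $k$-windows. This delicate bookkeeping is the heart of the SMAWK-style argument of Aggarwal, Klawe, Moran, Shor, and Wilber, and is ultimately what separates the $\Oh(pr\log(q/\max(p,r)))$ bound from the weaker $\Oh(p(q+r))$ one obtained by naive row-by-row application of SMAWK.
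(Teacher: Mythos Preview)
The paper does not prove this theorem; it is stated with a citation to~\cite{SMAWK} and no proof is given, so there is no in-paper argument to compare against. In the paper's actual uses (square matrix--matrix products and matrix--vector products with $p=q$), the bound reduces to $\Oh(pr)$ and follows from one SMAWK row-minima pass per output row or column---exactly your first reduction.

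Your attempt at the general bound has a concrete error. For $p=r=1$ you claim that $k\mapsto A\position{0,k}+B\position{k,0}$ is unimodal ``as a direct consequence of the Monge property of both inputs.'' This is false: a $1\times q$ matrix (and likewise a $q\times 1$ matrix) is \emph{vacuously} Monge, since the defining $2\times 2$ inequality has no instances; hence $A\position{0,\cdot}$ and $B\position{\cdot,0}$ are completely unconstrained, and an adversary argument forces $\Omega(q)$ queries to find the minimum of their sum---not $\Oh(\log q)$. (This in fact shows the stated bound cannot literally hold for $p=r=1$, an imprecision that is harmless for the paper's applications but fatal for your proof plan.) The same misconception undermines your divide-and-conquer on the $k$-axis: without pruning, the $q/\max(p,r)$ leaf subproblems together with the entry-wise merges already cost $\Omega(q\cdot\min(p,r))$ in aggregate, and the pruning you sketch---bounding the \emph{sum} of live $k$-ranges---does not control the \emph{number} of live subproblems, which is what drives the $\Oh(pr)$-per-subproblem cost. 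A correct argument recurses on the output coordinates $(i,j)$ rather than on $k$, using the two-dimensional monotonicity of the witnesses to partition the $p\times r$ output so that each recursion level touches $\Oh(pr)$ entries; the Monge property of $A$ and $B$ is exploited there, not on degenerate thin slices where it is vacuous.
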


Furthermore, if matrices $A$ and $B$ are Monge, then their $(\min,+)$-product $A \oplus B$
is also Monge.

\begin{fact}[{\cite[Theorem 2]{Tis15}}]
    \dglabel{fct:monge-product-is-monge}(The $(\min,+)$-product of Monge matrices is Monge, \cite[Theorem 2]{Tis15})
    Let $A, B$, and $C$ be matrices, such that $A \oplus B = C$.
    If $A$ and $B$ are Monge, then $C$ is also Monge.
\end{fact}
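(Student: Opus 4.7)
The plan is to verify the Monge inequality for $C$ directly from the definition of the $(\min,+)$-product via a short witness argument, splitting into two cases depending on where the minima defining the right-hand side are attained.

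Fix indices $i$ and $j$, and let $k_1, k_2 \in \fragmentco{0}{q}$ attain the minima defining $C\position{i, j+1} = A\position{i, k_1} + B\position{k_1, j+1}$ and $C\position{i+1, j} = A\position{i+1, k_2} + B\position{k_2, j}$, respectively. Since $C\position{i,j}$ and $C\position{i+1, j+1}$ are infima over the middle index, each of them can be upper-bounded by plugging in any witness in $\fragmentco{0}{q}$. The strategy is to pick the witnesses from $\{k_1, k_2\}$ so that closing the inequality reduces to a single application of the Monge property, invoked on whichever of $A$ or $B$ is appropriate to the case.

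In the case $k_1 \le k_2$, I would take $k_1$ as the witness for $C\position{i,j}$ and $k_2$ as the witness for $C\position{i+1, j+1}$, obtaining
\[ C\position{i,j} + C\position{i+1, j+1} \le A\position{i, k_1} + B\position{k_1, j} + A\position{i+1, k_2} + B\position{k_2, j+1}. \]
The Monge property of $B$ applied on rows $k_1 \le k_2$ and columns $j < j+1$ yields $B\position{k_1, j} + B\position{k_2, j+1} \le B\position{k_1, j+1} + B\position{k_2, j}$, and substituting turns the right-hand side above into exactly $C\position{i, j+1} + C\position{i+1, j}$. In the symmetric case $k_1 > k_2$, I would swap the witness assignment (using $k_2$ for $C\position{i,j}$ and $k_1$ for $C\position{i+1, j+1}$) and invoke the Monge property of $A$ on rows $i < i+1$ and columns $k_2 < k_1$; the same substitution closes the inequality.

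Since this is a standard textbook argument and every step is a one-line invocation of a definition or a hypothesis, I do not anticipate any real obstacle. The only care required is pairing each case with the correct matrix whose Monge inequality is needed to close the estimate, which is forced by the ordering of $k_1$ and $k_2$.
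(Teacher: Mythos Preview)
Your argument is correct and is the standard proof of this classical fact. The paper itself does not prove this statement; it is stated as a \emph{Fact} with a citation to \cite[Theorem~2]{Tis15}, so there is no proof in the paper to compare against.
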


\paragraph*{Augmented Alignment Graphs}

Although the alignment graph $\AGw(P,T)$ is planar, it is not strongly connected, so we
cannot apply \cref{fct:monge}.
To circumvent this issue, we augment it with back edges of sufficiently large weight.

\begin{definition}[The augmented alignment graph $\oAGw(P, T)$ for a weight function
    \(w\) and strings \(P\) and \(T\), {\cite{gk24}}]
    \dglabel{7-16-3}
    For strings $P, T\in \Sigma^*$ and a weight function $w: \sqEsigma \to \intvl{0}{W}$,
    we define the \emph{augmented alignment graph} $\oAGw(P, T)$ obtained from $\AGw(X,Y)$
    by adding, for every
    edge of $\AGw(P, T)$, a back edge of weight $W+1$.
\end{definition}

\begin{remark}
    For our purpose of turning the alignment graph into a strongly connected graph, we
    technically do not need the diagonal back edges: insisting on both horizontal and
    vertical edges elegantly captures the special cases of either string being empty, but
    thereby also renders the diagonal edges superfluous.
    Nevertheless, we keep the diagonal back edges to stay in line with \cite{gk24}; which
    in particular allows us to use their results in an opaque manner.
\end{remark}

We proceed with a couple of useful properties of (shortest paths in) the augmented
alignment graph.

\begin{lemmaq}[Basic properties of the alignment graph,~{\cite[Lemma 5.2]{gk24}}]
    \dglabel{lem:oagw}
    Consider strings $P, T \in \Sigma^{*}$ and a weight function $w : \sqEsigma \to \intvl{0}{W}$.
    Any two vertices $(p, t)$ and $(p', t')$ of the graph $\oAGw(P,T)$ satisfy the
    following properties.
    \begin{description}
        \item[Monotonicity.] Shortest paths $(p, t) \leadsto (p', t')$ in
            $\overline{\AG}^w(P, T)$ are (non-strictly) monotone in both coordinates.
        \item[Distance preservation.] If $p \le p'$ and $t \le t'$, then
            \[\dist_{\AG^w(P, T)}((p, t), (p', t')) = \dist_{\overline{\AG}^w(P, T)}((p, t),
            (p', t')).\]
        \item[Path irrelevance.] If ($p \le p'$ and $t \ge t'$) or ($p \ge p'$ and $t \le
            t'$), every path from $(p, t)$ to $(p', t')$ in $\oAGw(P, T)$, that is monotone in
            both coordinates, is a shortest path between these two vertices.
            \qedhere
    \end{description}
\end{lemmaq}

\begin{remark}
    \label{rem:basic}
    Consider a path \(\P\) of cost at most \(k\) in the augmented alignment graph
    \(\oAGw(P, T)\) that originates at a vertex \((0,i)\) and
    terminates at a vertex \((|P|, j)\) for some $i,j\in \fragment{0}{|T|}$.

    Since $w$ is normalized, non-diagonal edges of \(\oAGw(P, T)\) are of cost at least
    $1$ each and connect adjacent diagonals.
    Consequently, \(\P\) intersects at most \(k+1\) subsequent diagonals: it may only
    intersect the diagonals that are at most \(k\) to the left or to
    the right of both the \(i\)-th diagonal of the origin \((0,i)\) and the \((j-|P|)\)-th
    diagonal of the target \((|P|,j)\).
    Thus, every vertex \((p,t)\) of the path \(\P\) satisfies
    \(t-p \in \fragment{i - k}{i + k} \cap \fragment{j -|P| - k}{j-|P| + k}\),%
    \footnote{We could potentially restrict the interval for \(t-p\) even
    further: transferring from diagonal \(i\) to diagonal \(j-|P|\) incurs a cost of at
    least \(|j -|P|- i|\); deviating to outside the diagonals between \(i\) and \(j-|P|\)
    requires a return to the said diagonals. Hence, \(\P\) may in fact deviate by only at
    most \(\lfloor (k - |j-|P|-i|)/2 \rfloor \le k/2\) from the diagonals between \(i\) and
    \(j-|P|\).
    However, to decrease the technical complexity a tiny little bit, we do not use
    this further refinement here.}
    and there is at least one such vertex for each \(p\in \fragment{0}{|P|}\).
\end{remark}

We summarize \cref{rem:basic} in \cref{obs:band}.

\begin{lemmaq}[Paths of cost \(k\) that connect the top and bottom of the alignment graph
    may use at most \(k\) non-diagonal edges]
    \dglabel{obs:band}
    Consider strings $P\in \Sigma^m$ and $T\in \Sigma^n$, and a threshold $k\in \Zz$.
    Let $\mathcal{P}$ denote a path of cost at most $k$ that connects the top boundary to
    the bottom boundary of $\oAGw(P,T)$.

    For every vertex $(p,t)\in \mathcal{P}$, we have $-k \le t-p \le n-m+k$.
\end{lemmaq}

We may use \cref{lem:MSSP} to compute boundary-to-boundary paths in the alignment graph.

\begin{corollary}
    \dglabel{lem:MSSPverify}[lem:MSSP](Using MSSP to compute boundary-to-boundary paths in
    the alignment graph)
    Given a string $P$ of length $m$, a string $T$ of length $n$, and oracle access to a
    normalized weight function $w: \sqEsigma \to \intvl{0}{W}$,
    after an $\cO(nm \log (nm))$-time preprocessing, we can compute the distance between
    any two boundary vertices of $\oAGw(X,Y)$ in $\Oh(\log(nm))$ time.
\end{corollary}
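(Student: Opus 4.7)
The proof is essentially a direct invocation of Klein's MSSP result (\cref{lem:MSSP}) applied to the augmented alignment graph. First I would observe that $\oAGw(P,T)$ has size $\Theta(nm)$: it has $(n+1)(m+1)$ vertices and $\Theta(nm)$ edges (the original forward edges together with one back edge per original edge, as per \cref{7-16-3}). Moreover, $\oAGw(P,T)$ is a planar graph, since it can be drawn as a grid of $(n+1)(m+1)$ vertices with local horizontal, vertical, and diagonal edges (each augmented by a parallel back edge), which admits a planar embedding in the obvious way.

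The crucial structural point is that in this planar embedding, every boundary vertex of $\oAGw(P,T)$---that is, every vertex $(p,t)$ with $p\in\{0,|P|\}$ or $t\in\{0,|T|\}$---lies on the outer face. This is precisely the set of vertices along the top row, bottom row, leftmost column, and rightmost column of the grid, and these trace out the outer face when $\oAGw(P,T)$ is drawn in the natural way. Consequently, any two boundary vertices jointly satisfy the hypothesis of \cref{lem:MSSP} (at least one endpoint lies on the distinguished face, which we take to be the outer face).

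I would then feed $\oAGw(P,T)$, together with its outer face, to Klein's data structure from \cref{lem:MSSP}. Since the graph has $N=\Theta(nm)$ vertices and edges, and edge weights are non-negative (the forward edges have weight in $\intvl{0}{W}$ and the back edges have weight $W+1\ge 0$), the preprocessing runs in $\cO(N\log N)=\cO(nm\log(nm))$ time, and each subsequent distance query between two vertices---at least one of which lies on the outer face---is answered in $\cO(\log N)=\cO(\log(nm))$ time. Since both endpoints in our queries are boundary vertices, both lie on the outer face, so the hypothesis is satisfied a fortiori, and the stated bounds follow.

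There is essentially no obstacle here beyond verifying the planar embedding and identifying the outer face with the set of boundary vertices. The only mild care needed is to confirm that adding the weight-$(W+1)$ back edges (which turn $\oAGw(P,T)$ into a strongly connected graph, in contrast to the plain $\AGw(P,T)$) does not break planarity; this is immediate because each back edge can be routed along (or infinitesimally parallel to) its forward counterpart in the grid embedding.
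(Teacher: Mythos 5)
Your proposal is correct and takes essentially the same approach as the paper: build Klein's MSSP data structure over $\oAGw(P,T)$ with the infinite (outer) face as the distinguished face, so that boundary-vertex queries run in $\cO(\log(nm))$ time after $\cO(nm\log(nm))$ preprocessing. The paper's one-line proof does exactly this; you merely spell out the planarity and outer-face identification that the paper leaves implicit.
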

\begin{proof}
    At preprocessing, we build an MSSP data structure over $\oAGw(P,T)$, with the infinite
    face being the distinguished face.
\end{proof}

\paragraph*{Our Model of Computation and the \pillar Model}\label{sec:pillar}

Throughout this work, we consider the Real RAM model of computation that, besides typical
word RAM operations, allows us to store, add, subtract, and compare real numbers in
constant time (we do not use multiplication and other advanced operations)---this is a
standard assumption for algorithms computing distances in graphs
with real weights.
If all weights admit an $\Oh(\log n)$-bit fixed-point arithmetic representation, standard
word RAM is sufficient to implement our algorithms.

We use the  \pillar model of \cite{ckw20} to
express our algorithms for the \verify problem and our algorithm for \PMWED in
\cref{sec:reduction}.
In particular, we bound the running times of said algorithms in terms of the number of
calls to a small set of very common operations (the primitive \pillar operations) on
strings and calls to standard operations supported constant time in the Real RAM model of computation.

Combining \cref{thm:main} with efficient implementations of said primitive
operations, we then obtain (fast) algorithms for \PMWED in a
plethora of settings: the standard setting, the compressed
setting, the dynamic setting, the quantum setting, the packed setting, and the read-only
setting.
See \cref{sec:concl} for details.

We continue with a brief summary of the \pillar model.
In the \pillar model, we maintain a collection of
strings~\(\X\); the operations in the \pillar model work on fragments
\(X\fragmentco{\ell}{r}\) of \(X \in \X\), which are represented via a
\emph{handle}.\footnote{The implementation details depend on the specific setting.
For example, in the standard setting, a fragment $X\fragmentco{\ell}{r}$ is represented by
a reference to $X$ and the endpoints $\ell,r$.}
At the start of the computation, the \pillar model provides a handle to each \(X \in
\X\), which represents \(X\fragmentco{0}{|X|}\). Using an \extractOpName operation, we may
obtain handles to other fragments of the strings in \(\X\)~\cite{ckw20}.
\begin{itemize}
    \item $\extractOpName(S,\ell,r)$: Given a fragment $S$ and positions $0 \le \ell \le r
        \le |S|$, extract the (sub)fragment $S\fragmentco{\ell}{r}$. If
        $S=X\fragmentco{\ell'}{r'}$ for $X\in \X$, then $S\fragmentco{\ell}{r}$ is defined
        as $X\fragmentco{\ell'+\ell}{\ell'+r}$.
\end{itemize}
The other primitive \pillar model operations read as follows~\cite{ckw20}.
\begin{itemize}
    \item $\lceOp{S}{T}$: Compute the length of~the longest common prefix of~$S$ and $T$.
    \item $\lcbOp{S}{T}$: Compute the length of~the longest common suffix of~$S$ and $T$.
    \item $\ipmOp{P}{T}$: Assuming that $|T|\le 2|P|$, compute $\OccEx(P,T)$ (represented
        as an arithmetic progression with difference $\per(P)$).
    \item $\accOpName(S,i)$: Assuming $i\in \fragmentco{0}{|S|}$, retrieve the character $\accOp{S}{i}$.
    \item $\lenOpName(S)$: Retrieve the length $|S|$ of~the string $S$.
\end{itemize}

\SetKwFunction{alignment}{Alignment}
\SetKwFunction{weighteded}{WeightedED}

The Landau--Vishkin algorithm for computing the edit distance of two strings
\cite{DBLP:journals/jcss/LandauV88} can be reformulated as follows in the \pillar model
(see \cite[Lemma 6.1]{ckw20}).

\begin{lemmaq}[{\tt \protect\alignment($S$, $T$)}, {\cite{DBLP:journals/jcss/LandauV88}}]
    \dglabel{fct:LValignment}(Computing an optimal alignment of two strings of cost at most \(d\)
    (or deciding that no such alignment exists) takes
    $\Oh(1+d^2)$ time in the \pillar model, {\tt \protect\alignment($S$, $T$)},
{\cite{DBLP:journals/jcss/LandauV88}})
    Given strings $S$ and $T$ and a threshold \(d\), we can construct (the breakpoint representation of) an
    alignment $\A: S\onto T$ of optimal cost $\ed(S,T) \le d$ in $\Oh(1+d^2)$ time in
    the \pillar model (or decide in the same time that no such alignment may exist).
\end{lemmaq}

An efficient procedure in the \pillar for computing an optimal alignment under a
normalized weight function was presented in~\cite{ckw23}.

\begin{lemmaq}[{\tt\protect\weighteded($S$, $T$, $w$)}, {\cite[Theorem 4.1]{ckw23}}]
    \dglabel{lem:wed}(Computing the weighted edit distance of two strings takes
    $\cO(1+k^3 \log^2 n)$ time in the \pillar model, {\tt\protect\weighteded($S$, $T$,
    $w$)}, {\cite[Theorem 4.1]{ckw23}})
    Given strings $S$ and $T$ of length at most $n$ and oracle access to a normalized weight
    function $w$, we can compute the value $k = \edw{S}{T}$ in
    time $\cO(1+k^3 \log^2 n)$ in the \pillar model.
\end{lemmaq}

Finally, the adaptation of the Landau--Vishkin algorithm~\cite{LandauV89}
already provided in~\cite{ckw22} lets us efficiently compute an optimal alignment of a string $S$
onto a substring of $Q^\infty$ starting at a given position~$x$.

\begin{lemma}[{\tt\protect\alignment($S$, $Q$, $x$)}, {\cite[Lemma 2.10]{ckw22}}]
    \dglabel{fct:alignment}(Computing an optimal alignment of a string to an infinitely repeating
    string takes $\Oh(1+d^2)$ time in the \pillar model, {\tt\protect\alignment($S$, $Q$,
    $x$)}, {\cite[Lemma 2.10]{ckw22}})
    Consider non-empty strings $S,Q$ and an integer $x\in \Z$.
    We can construct (the breakpoint representation of) an alignment
    $\A: S\onto Q^\infty\fragmentco{x}{y}$ of optimal cost $d \coloneqq \edp{S}{\rot^{-x}(Q)}$
    in $\Oh(1+d^2)$ time in the \pillar model.\lipicsEnd
\end{lemma}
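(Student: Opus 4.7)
The plan is to mirror the \pillar-model realization of the Landau--Vishkin algorithm from \cref{fct:LValignment}, with the only change being that the text is the semi-infinite periodic string $Q^{\infty}\fragmentco{x}{\cdot}$ rather than a finite string $T$. Recall that Landau--Vishkin iterates $e = 0, 1, \ldots, d$; for each such $e$ and each diagonal $\delta \in \fragment{-e}{e}$ it computes the largest row that can be reached on diagonal $\delta$ using at most $e$ edits, and the update step performs a single longest-common-prefix extension along that diagonal. The algorithm halts at the smallest $e$ for which some diagonal reaches row $|S|$, whereupon $d \coloneqq e$ is the optimal cost and the terminal column, which plays the role of $y$, can be read off directly. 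This is exactly what we need, since $y$ is not fixed in the statement.

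The key new ingredient is an $\Oh(1)$-\pillar-operation subroutine that computes $\lcp\big(S\fragmentco{i}{|S|},\, Q^{\infty}\fragmentco{j}{\cdot}\big)$; by periodicity I may assume $j \in \fragmentco{0}{|Q|}$. First, I would obtain the rotation $\rot^{-j}(Q)$ as the fragment $QQ\fragmentco{j}{j+|Q|}$ using $\extractOpName$ and call $\ell_1 \coloneqq \lceOp{S\fragmentco{i}{|S|}}{\rot^{-j}(Q)}$. If $\ell_1 < |Q|$, return $\ell_1$. Otherwise $S\fragmentco{i}{i+|Q|} = \rot^{-j}(Q)$, and the required LCP equals the length of the longest prefix of $S\fragmentco{i}{|S|}$ that has period $|Q|$; this in turn equals $\min\bigl(|S|-i,\; \ell_2 + |Q|\bigr)$, where $\ell_2 \coloneqq \lceOp{S\fragmentco{i}{|S|}}{S\fragmentco{i+|Q|}{|S|}}$. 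Correctness when the minimum is attained by the second term is the one point requiring a brief check: maximality of $\ell_2$ gives $S\position{i+\ell_2} \ne S\position{i+\ell_2+|Q|}$, while $Q^{\infty}\position{j+\ell_2+|Q|} = Q^{\infty}\position{j+\ell_2} = S\position{i+\ell_2}$ by the already-established periodic match of the first $\ell_2+|Q|$ characters, so $S$ and $Q^{\infty}$ first disagree at offset $\ell_2 + |Q|$.

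Putting this together, each of the $\Oh(d^2)$ cells of the Landau--Vishkin table is filled with $\Oh(1)$ work, giving $\Oh(1 + d^2)$ \pillar operations overall; back-tracking through the table in additional $\Oh(d)$ time yields the breakpoint representation of the optimal alignment together with the terminal column $y$. I expect the main obstacle to be precisely the correctness argument for the two-query LCP subroutine sketched above; the remaining bookkeeping (diagonals falling off the left edge of $S$, empty prefixes, recovering $y$ from the halting cell) is entirely standard and mirrors \cref{fct:LValignment}.
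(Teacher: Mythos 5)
This lemma is imported from~\cite{ckw22} (Lemma~2.10) and the present paper states it without giving its own proof; your reconstruction matches the expected approach: run the diagonal Landau--Vishkin dynamic program for $S$ against the semi-infinite text $Q^\infty\fragmentco{x}{\cdot}$, seeding only the diagonal through $(0,x)$ at cost $0$, and reduce each diagonal-extension step to $\Oh(1)$ \pillar operations. Your LCP subroutine is sound, and your inductive argument that a full match of the first $|Q|$ characters reduces the remaining extension to the self-comparison $\ell_2 = \lceOp{S\fragmentco{i}{|S|}}{S\fragmentco{i+|Q|}{|S|}}$ is exactly the needed check. The one loose end is implementational: realizing $\rot^{-j}(Q)$ as $QQ\fragmentco{j}{j+|Q|}$ via $\extractOpName$ presumes a handle on the auxiliary string $QQ$, which is not a fragment of $S$ or of $Q$ and hence is not directly available in the bare \pillar interface. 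The standard repair stays within the same $\Oh(1)$-operation budget: replace your $\ell_1$ computation by $\lceOp{S\fragmentco{i}{|S|}}{Q\fragmentco{j}{|Q|}}$ and, only if this match has full length $|Q|-j$, a second query $\lceOp{S\fragmentco{i+|Q|-j}{|S|}}{Q}$; together these determine whether the first $|Q|$ characters of $S\fragmentco{i}{|S|}$ agree with $Q^\infty\fragmentco{j}{\cdot}$, after which your $\ell_2$ step applies verbatim. With that adjustment, the $\Oh(1+d^2)$ \pillar-time bound and the readout of the terminal column $y$ from the halting cell go through as you describe.
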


\section{Pattern Matching with Weighted Edits in \texorpdfstring{\boldmath
$\cOtilde(nk)$}{Õ(nk)} Time}\label{sec:nk}

In this section, we prove \cref{thm:stalgmainnk}. As a first simple step, we split the
text \(T\) into a set \(\S\) of overlapping fragments, such that it suffices to compute, for each
\(S \in \S\), the positions of \(S\) in an interval \(\fragment{0}{\Oh(k)}\) where
a $(k,w)$-error occurrence of \(P\) starts.

\begin{remark}
    \dglabel{7.9.30-1}(Standard Trick: Can reduce to \(\Oh(n/k)\) instances \((P,
    S)\) where \(S = T\fragmentco{i}{i + m + 2k}\) with \(i \equiv_k 0\))
    For a pattern \(P\) of length \(m\), a text \(T\) of length \(n\), and a threshold
    \(k > 1\) write \(\S\) for a set of $\cO(n/k)$ fragments $S_i \coloneqq T\fragmentco{i}{\max\{i + m + 2k,n\}}$ such
    that $i \equiv 0 \pmod k$.

    We have \(\OccW_k(P, T) = \bigcup_{S_i \in \S} i + \OccW_k(P, S_i)\), as
    each fragment $U$ of $T$ with $\edw{P}{U}\leq k$
    has a length in $\fragment{m-k}{m+k}$ and hence is contained in at least one and at most
    four fragments \(S \in \S\).
\end{remark}

\begin{figure}[t!]
    \centering
    \begin{subfigure}[t]{\linewidth}
        \centering
        \includegraphics[page=1,width=.9\linewidth]{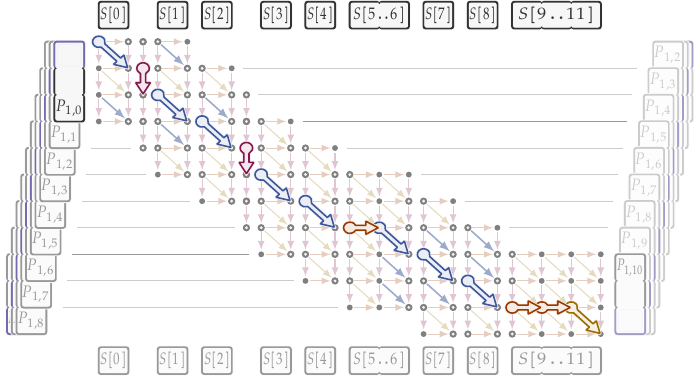}
        \caption{For each \(i \in \fragment{0}{|P|}\), we depict the slice \(\GPSd_{i,i}\) as a
            subgraph of \(\oAGw(P,S)\), the corresponding fragments of
            \(P\) and \(S\), as well as the corresponding part of the alignment \(\A\).
            For each slice, we highlight in gray its portal vertices.
            Observe that portal vertices of consecutive slices correspond to the same
            vertices in \(\oAGw(P,S)\).}
            \label{mjhlpavqep}
    \end{subfigure}\vskip1ex
    \begin{subfigure}[t]{.48\linewidth}
        \centering
        \includegraphics[page=1,height=.8\linewidth,width=\linewidth]{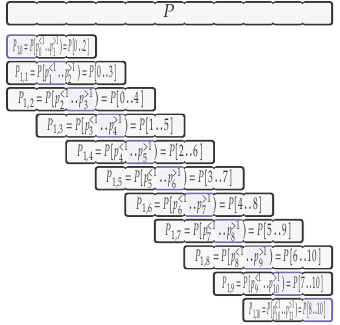}
        \caption{The $d$-slices of $P$ in detail. For each slice \(P_{d,i}\), we highlight the character
        \(P\position{i}\).}
    \end{subfigure}\hfill%
    \begin{subfigure}[t]{.47\linewidth}
        \centering
        \includegraphics[page=5,width=\linewidth]{figs/g03}
        \caption{The graph \(\GPSd\), which is obtained by identifying the portal vertices
            of consecutive slices of \cref{mjhlpavqep}. Again, we highlight portal
            vertices in gray.
        }\label{fig:gpsd}
    \end{subfigure}%
    \vskip-2ex%
    \caption{The $d$-slices \(P_{d,i}\) of $P$, the $d$-slices \(\GPSd_{i,i}\) of
        \(\oAGw(P,S)\), and the graph \(\GPSd\) with respect to alignment $\A: P
        \onto S$ from \cref{oiuoqannlf,byutjsdglx}.
        In order to conserve space but still have a meaningful example,
        we depict slices for \(d = 1\), even though alignment alignment $\A$ is of cost \(6\).
        For the depicted example this does not cause any issues.
        In general, the alignment needs to be
        fully contained in the slices---which holds when the alignment is of
        cost at most \(d\).\vskip-3ex}
\label{fig:7.10.18-1}
\end{figure}

In a similarly standard fashion, we may filter out any \(S \in \S\) that is
too far from \(P\) under the unweighted edit distance to contain any $(k,w)$-error occurrences.
For the remaining $S \in \S$, we obtain a cheap unweighted alignment \(P \onto S\)
as a useful byproduct.

\begin{lemma}
    \dglabel{lem:PtoF}[fct:LValignment,thm:pilis](After preprocessing a string \(T\) of
    length \(n\) and a pattern \(P\) of length \(m \le n + k\) for \(\Oh(n)\) time,
    can compute for any fragment \(S\) of \(T\) of length at most \((m+2k)\)
    an unweighted alignment \(\A : P \onto S\) of cost at
    most \(d = 4k\) or determine that \(\OccW_k(P,S) = \emptyset\) in time \(\Oh(d^2)\) for a
    single \(S\))
    Consider a threshold \(k > 0\), a text \(T\) of length \(n\), and a pattern \(P\) of length \(m \le n + k\).
    \begin{itemize}
        \item For any fragment \(S\) of \(T\) of length at most \(m+2k\),
            if $\ed(P,S) >  4k$ then $\OccW_k(P,S) = \emptyset$.
        \item After preprocessing \(T\) for \(\Oh(n)\) time, given any fragment \(S\) of
            \(T\) of length at most \(m+2k\), we can in $\cO(k^2)$ time check whether $\ed(P,S)\leq 4k$, and,
            if so, compute the breakpoint representation of an optimal unweighted alignment $\A: P
            \onto S$.
    \end{itemize}
\end{lemma}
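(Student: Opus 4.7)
The plan is to prove the two bullets essentially independently: the first by a clean padding argument, and the second by a black-box invocation of the Landau--Vishkin procedure in the \pillar model on top of a standard linear-time preprocessing.

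For the first bullet, I would assume $\OccW_k(P,S)\ne\emptyset$ and derive $\ed(P,S)\le 4k$. Fix $i\in\OccW_k(P,S)$ witnessed by some $j\ge i$ with $\edw{P}{S\fragmentco{i}{j}}\le k$. By \cref{fact:simple} (and because each unit cost is bounded below by $1$ in a normalized weight function), we have $\ed(P,S\fragmentco{i}{j})\le k$, and since insertions/deletions change the length by one each, $\bigl||P|-(j-i)\bigr|\le k$, i.e.\ $j-i\ge m-k$. Consequently,
\[
i + (|S|-j) \;=\; |S|-(j-i) \;\le\; (m+2k)-(m-k) \;=\; 3k.
\]
Now extend any optimal unweighted alignment $P\onto S\fragmentco{i}{j}$ to one of $P\onto S$ by prepending $i$ insertions for $S\fragmentco{0}{i}$ and appending $|S|-j$ insertions for $S\fragmentco{j}{|S|}$. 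The resulting alignment has cost at most $k+3k=4k$, so $\ed(P,S)\le 4k$. Contrapositively, $\ed(P,S)>4k$ forces $\OccW_k(P,S)=\emptyset$.

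For the second bullet, the preprocessing step is to build, in $\Oh(n)$ time, the standard \pillar-model data structures over both $T$ and (separately) $P$; since $m\le n+k$ we may assume $m=\Oh(n)$, and by the standard implementation (as recorded in \cref{thm:pilis}) every \pillar primitive on fragments of $T$ or of $P$ then takes $\Oh(1)$ time. Given a query fragment $S$ of $T$ with $|S|\le m+2k$, we invoke \texttt{Alignment}$(P,S)$ from \cref{fct:LValignment} with threshold $d=4k$. It either produces the breakpoint representation of an unweighted alignment $\A:P\onto S$ of cost at most $4k$, or certifies that $\ed(P,S)>4k$; in the latter case the first bullet tells us that $\OccW_k(P,S)=\emptyset$ and we return so. The running time is $\Oh(1+d^2)=\Oh(k^2)$ in the \pillar model, which translates to $\Oh(k^2)$ real time given the preprocessing.

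I do not anticipate a deep obstacle; the only subtlety worth double-checking is the length bookkeeping in the padding argument---in particular that the bound $|S|\le m+2k$ is tight enough to guarantee $|S|-(j-i)\le 3k$ regardless of where the occurrence sits inside $S$. Once this is in place, both bullets reduce to invocations of results already available in the excerpt (\cref{fact:simple} and \cref{fct:LValignment}), so no further combinatorial work is required.
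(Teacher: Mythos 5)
Your proof is correct and takes essentially the same route as the paper: the first bullet is the identical padding argument (using $\edw{P}{S\fragmentco{i}{j}}\le k$ to bound the unweighted cost of the restricted alignment, then bounding the extra insertions by $|S|-(j-i)\le 3k$), and the second bullet is the same direct appeal to \cref{fct:LValignment} in the \pillar model with $d=4k$. One minor nit: you cite \cref{fact:simple} for the inequality $\ed(P,S\fragmentco{i}{j})\le k$, but that fact only says the occurrence \emph{set} is contained in the unweighted one; the correct (and trivial) justification, which you also mention parenthetically, is simply that $\ed_\A\le\ed^w_\A$ for any alignment $\A$ because $w$ is normalized.
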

\begin{proof}
    We prove the contraposition of the first item.
    Fix a fragment \(S\) of \(T\) and suppose that \(\OccW_k(P,S) \neq \emptyset\);
    that is, suppose that there are positions $\ell$ and $r$ and an alignment $\B:P \onto
    S\fragmentco{\ell}{r}$ of weighted cost at most $k$.
    We argue that \(\B\) may be transformed into an unweighted alignment \(\A : P \onto S\) of cost at most
    \(4k\): consider the (unweighted) alignment \(\A : P \onto S\) that is obtained as an extension
    from \(\B\) by adding insertions for $S\fragmentco{0}{\ell}$ and $S\fragmentco{r}{|S|}$ appropriately.

    First, from $\ed(P,S\fragmentco{\ell}{r}) \leq \edw{P}{S\fragmentco{\ell}{r}} \le k$
    we conclude that \(\B\) contributes a cost of at most \(k\) to \(\A\).
    Next, \(\edw{P}{S\fragmentco{\ell}{r}} \le k\) additionally implies \(|S\fragmentco{\ell}{r}|\geq
    m-k\); which in turn implies \(|S|-|S\fragmentco{\ell}{r}| \leq (m + 2k) - (m - k) =
    3k\).
    Hence, the extra insertions incur a total cost of at most \(3k\).
    In total, \(\A\) has thus a cost of at most \(4k\), completing the proof.

    The second item follows directly from the combination of
    \cref{fct:LValignment,thm:pilis}.
\end{proof}

\begin{figure}[t!]
    \begin{subfigure}[t]{\linewidth}
        \centering
        \includegraphics[page=1,width=.9\linewidth]{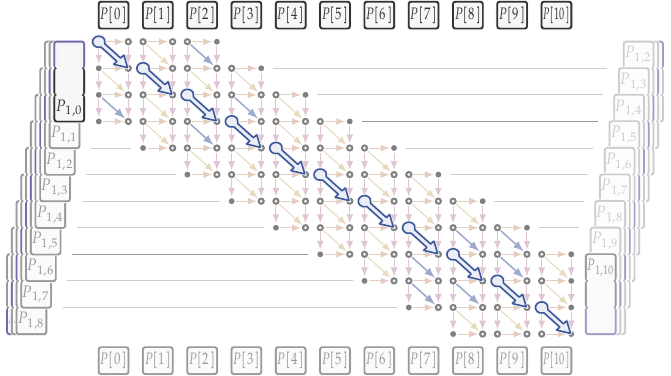}
        \caption{For each \(i \in \fragment{0}{|P|}\), we depict the slice \(\GPd_{i,i}\) as a
            subgraph of \(\oAGw(P,P)\), the corresponding fragments of
            \(P\), as well as the corresponding part of the alignment \(\A\);
            compare \cref{mjhlpavqep}.
            For each slice, we highlight in gray its portal vertices.
            Observe that portal vertices of consecutive slices correspond to the same
            vertices in \(\oAGw(P,S)\).}
            \label{fyleajysdq}
    \end{subfigure}\vskip1ex
    \begin{subfigure}[t]{.46\linewidth}
        \centering
        \includegraphics[page=5,width=.85\linewidth]{figs/g03b}
        \caption{The graph \(\GPSd\), which is obtained by identifying the portal vertices
            of consecutive slices of \cref{fyleajysdq}; compare \cref{fig:gpsd} Again, we highlight portal
            vertices in gray.}
    \end{subfigure}\hfill%
    \begin{subfigure}[t]{.51\linewidth}
        \centering
        \includegraphics[page=2,scale=1.15]{figs/g04}
        \caption{The subgraph \(\GPSd_{5,6}\) in detail.
            As the alignment $\A: P \onto S$ matches \(P\fragment{5}{6} = S\fragment{3}{4}\),
        we have \(\GPSd_{5,6} = \GPd_{5,6}\).}\label{evatvwbtfb}
    \end{subfigure}%
    \caption{The $d$-slices \(\GPd_{i,i}\) of
        \(\oAGw(P,P)\), and the graph \(\GPd\) with respect to the trivial identity
        alignment; compare \cref{fig:7.10.18-1}.
        We also depict a detailed view of the subgraph \(\GPSd_{5,6} = \GPd_{5,6}\).
    }\label{fig:7.10.18-2}
\end{figure}

In a next step, we intend to use the cheap unweighted alignments \(P \onto S\), and in
particular that any such alignment matches large parts of \(P\) with \(S\) without any edits. As many of
said matched parts are shared across the different \(S \in \S\), we intend to save on
computation time by essentially precomputing boundary-to-boundary shortest-path distances
for the parts where fragments of \(P\) are matched to fragments of \(P\).
For a formal exposition, we start with naming parts of the corresponding alignment graphs.
Consult \cref{fig:7.10.18-1,fig:7.10.18-2} for a visualization of an example.

\begin{definition}[The \(i\)-th \(d\)-slice \(P_{d,i}\) of a string \(P\), the graphs
    \(\GPSd_{\ell,r}\) (and \(\GPd_{\ell,r}\)) as subgraphs of \(\oAGw(P,S)\) (and
    \(\oAGw(P,P)\)) and the corresponding distance
    matrices \(\DPSd_{\ell,r}\) (and \(\DPd_{\ell,r}\))]
    \dglabel{7.9.30-3}\hfill
    Consider a string \(P\) of length \(m\), a threshold \(d > 0\), and an alignment \(\A
    : P \onto S\) of cost at most \(d\).
    \begin{itemize}
        \item For \(i \in \fragmentco{0}{m}\), the \emph{\(i\)-th \(d\)-slice of \(P\)} is the string
            \(P_{d,i} \coloneqq P\fragmentco{p^{<d}_i}{p^{>d}_{i+1}}\), where
            \(p^{<d}_{i} \coloneqq \max(0,i-2d)\) and \(p^{>d}_{i + 1} \coloneqq
            {\min(m,i+ 1 + 2d)}\).
            Further, we write
            \(s_i \coloneqq \min\{ s  \mid (i, s) \in \A \}\) for \(i \in \fragmentco{0}{m}\) and \(s_m \coloneqq |S|\).

        \item For \(i \in \fragmentco{0}{m}\), the \emph{\(i\)-th \(d\)-slice of
            \(\oAGw(P,S)\)}
            is the subgraph \(\GPSd_{i,i} \coloneqq \oAGw(P_{d,i},\A(P\position{i}))\) (induced by
            $\fragment{p^{<d}_i}{p^{>d}_{i+1}}\times \fragment{s_i}{s_{i+1}}$).

            For \(0 \le \ell < r < m\), we write \(\GPSd_{\ell,r} \coloneqq \bigcup_{\ell
            \le j \le r} \GPSd_{j,j}\) and we write \(\GPSd \coloneqq \GPSd_{0,m}\).

            We also write $\VPSd_i \coloneqq \fragment{p^{<d}_i}{p^{>d}_{i}}\times \{s_i\}$ for the
            \(i\)-th set of portal vertices of \(\GPSd\); we index the vertices top-to-bottom,
            that is, \( \VPSd_i = \{ \VPSd_i\position{0}, \dots,
            \VPSd_i\position{p^{>d}_{i} - p^{<d}_i}\}\) with \(\VPSd_i\position{j}
            \coloneqq (p^{<d}_i + j,s_i)\).

        \item For \(0 \le \ell \le r \le m\), we write \(\DPSd_{\ell,r}\) for the matrix
            of distances from \(\VPSd_{\ell}\) to \(\VPSd_{r}\), that is,
            \[\DPSd_{\ell,r}\position{a,b} \coloneqq \dist_{\GPSd} ((p^{<d}_\ell - a,
            s_\ell), (p^{>d}_r - b, s_r))\] for $a \in
            \fragment{0}{p^{>d}_\ell-p^{<d}_\ell}$
            and $b \in \fragment{0}{p^{>d}_r-p^{<d}_r}$.
    \end{itemize}
    For the special case of \(S = P\) and \(\A = \mathrm{id}\), we omit the
    superscript \(S\) and write \(\GPd\), \(\VPd\), and \(\DPd\).
\end{definition}

Next, we formally define the data structure problem that enables us to efficiently compute
$\OccW_k(P, S)$ for each $S \in \S$.

\begin{problem}[DMVO]{Distance Matrix Vector Oracle, DMVO{\tt($P$, $d$, $w$)}}
    \label{prob:dmvo}
    \PInput{{\tt DMVO-Init($P$, $d$, $w$)}: Given a pattern \(P\) of length \(m\), a
        positive integer \(d\), and
        oracle access to a normalized weight function $w:\sqEsigma \to \intvl{0}{W}$,
        preprocess \(P\) and \(d\).}

    \PItem{Queries}{{\tt DMVO-Query($i$, $j$, $v$)}: For integers $0\le i \le j \le m$
        and a vector $v$ of dimension $(p^{>d}_j-p^{<d}_j+1)$,
        return the vector $\DPd_{i,j}\oplus v$.
    }
\end{problem}

\begin{remark}
    One might wonder why a {\tt DMVO-Query} computes a matrix-vector product instead of
    allowing for direct random access to $\DPd_{i,j}$.
    Crucially, this allows us to exploit that $\DPd_{i,j}\oplus v$ may be computed via
    $\DPd_{i,j}\oplus v = \DPd_{i,c}\oplus (\DPd_{c,j}\oplus v)$ for \(c \in
    \fragment{i}{j}\)---which ultimately allows us to save a logarithmic factor compared to an approach that
    first computes (single elements of) \(\DPd_{i,c} \oplus \DPd_{c,j}\). Indeed,
    given \(\Oh(\log m)\)-time random access to \(\DPd_{i,c}\) and \(\DPd_{c,j}\),
    one may obtain \(\Oh(\log^2 m)\)-time random access to \(\DPd_{i,c} \oplus
    \DPd_{c,j}\) by rewriting each such random access as a multiplication of a \(1
    \times d\) and a \(d \times 1\) (Monge) matrix and employing \cref{thm:smawk} (that is, the SMAWK algorithm).
\end{remark}

\begin{lemma}
    \dglabel{clm:preprocess}(\(\Oh(d\log m)\)-time
    computation of any \(\DPd_{\ell,r} \oplus v\) ({\tt DMVO-Query}) after global
    $\Oh(md\log^2 m)$-time preprocessing ({\tt DMVO-Init}))
    We can implement \ref{prob:dmvo} such that {\tt DMVO-Init} takes time \(\Oh(md\log^2
    m)\) and a single {\tt DMVO-Query} takes time \(\Oh(d \log m)\).
\end{lemma}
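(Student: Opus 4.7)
}

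The plan is to construct a balanced binary tree $\T$ over the positions $\{0, 1, \ldots, m\}$, where each node $v$ covers a dyadic interval $[\ell_v, r_v]$. For each node $v$, I build a Klein MSSP data structure (\cref{lem:MSSP}) on the induced subgraph $\GPd_{\ell_v, r_v}$ of $\oAGw(P, P)$, using the outer face of this planar subgraph as the distinguished face. Because the portal sets $\VPd_{\ell_v}$ and $\VPd_{r_v}$ lie on the top and bottom boundaries of the band, they sit on this outer face, so the MSSP supports $\Oh(\log m)$-time distance queries between any pair $(u, v) \in \VPd_{\ell_v}\times \VPd_{r_v}$. Crucially, by the monotonicity of shortest paths in $\oAGw(P, P)$ (\cref{lem:oagw}) every shortest $u\leadsto v$ path in $\oAGw(P,P)$ is confined to $\GPd_{\ell_v, r_v}$, so these local distances agree with the entries of $\DPd_{\ell_v, r_v}$; moreover, \cref{fct:monge} certifies that $\DPd_{\ell_v, r_v}$ is a Monge matrix.

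For the preprocessing bound, observe that $\GPd_{\ell_v, r_v}$ contains $\Oh(d\cdot|r_v - \ell_v|)$ vertices whenever $|r_v - \ell_v| \ge d$, and $\Oh(d^2)$ vertices otherwise. Summing the $\Oh(N\log N)$ MSSP construction cost over all tree nodes at a single level yields $\Oh(md\log m)$ once $|r_v - \ell_v| \ge d$; iterating over the $\Oh(\log m)$ such levels gives the target $\Oh(md \log^2 m)$. For nodes with $|r_v - \ell_v| < d$, I would only materialize MSSPs for ranges of length $\Theta(d)$ (a single scale of total work $\Oh(md \log m)$) and route short queries through these coarser structures.

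For a {\tt DMVO-Query}$(i, j, v)$, I decompose $[i, j]$ into the $\Oh(\log m)$ canonical intervals $[i_0, i_1], \ldots, [i_{s-1}, i_s]$ determined by the tree. I initialize $v_s := v$ and process right-to-left, at each step computing $v_{k-1} := \DPd_{i_{k-1}, i_k} \oplus v_k$ by invoking the SMAWK algorithm (\cref{thm:smawk}) on the Monge matrix $\DPd_{i_{k-1}, i_k}$ (whose entries are provided by the MSSP of the corresponding tree node) and the vector $v_k$. Each such matrix-vector product uses $\Oh(d)$ random accesses to the matrix, so a single canonical piece costs $\Oh(d\log m)$; the output $v_0$ is the desired $\DPd_{i,j}\oplus v$.

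The main obstacle is saving the extra logarithmic factor that a naive charging would contribute per canonical piece: combining $\Oh(\log m)$ pieces at $\Oh(d\log m)$ each gives $\Oh(d \log^2 m)$ rather than $\Oh(d\log m)$. To close this gap I would exploit that the SMAWK algorithm probes each matrix in a structured, monotone order on the outer face, so consecutive MSSP distance queries share almost all of the persistent shortest-path tree of \cite{MSSP}: the $\Oh(\log m)$ search in the persistent tree is paid only once per canonical piece, while the remaining $\Oh(d)$ probes are serviced in amortized constant time. Summing the per-piece cost $\Oh(d + \log m)$ over the $\Oh(\log m)$ pieces then yields the claimed $\Oh(d\log m)$ query time.
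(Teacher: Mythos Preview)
Your preprocessing and the basic query decomposition are sound; the gap is in the final step where you try to shave the extra logarithmic factor. The amortization you propose---that SMAWK's probing order lets consecutive MSSP distance queries run in amortized $\Oh(1)$ after an initial $\Oh(\log m)$ cost---is not a property of Klein's data structure as stated in \cref{lem:MSSP}. Each distance query into the persistent shortest-path tree costs $\Oh(\log m)$ independently; the persistence is over the \emph{source} moving along the face, not over structured target sequences for a fixed source, and SMAWK's recursive row-halving does not produce a monotone walk along a single root-to-leaf path in the persistent tree that would enable such amortization. (Even granting the claim, your arithmetic gives $\Oh((d+\log m)\log m)=\Oh(d\log m+\log^2 m)$, which overshoots the target when $d=o(\log m)$.) So as written, your query time stays at $\Oh(d\log^2 m)$.

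The paper closes the gap differently and more simply. The key observation you are missing is that \cref{lem:MSSP} answers $\dist_G(u,v)$ whenever \emph{at least one} of $u,v$ lies on the distinguished face. Hence the MSSP built on $\GPd_{\ell,r}$ provides $\Oh(\log m)$-time access not only to $\DPd_{\ell,r}$ but also to $\DPd_{\ell,i}$ and $\DPd_{i,r}$ for every intermediate $i\in\fragment{\ell}{r}$, since $\VPd_\ell$ and $\VPd_r$ sit on the outer face even though $\VPd_i$ is interior. For a query $\fragmentco{i}{j}$, the paper then picks a \emph{single} dyadic separator $x=c\cdot 2^s$ with $s$ maximal such that $x\in\fragmentco{i}{j}$; maximality forces $x-2^s<i$ and $j\le x+2^s$. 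The two precomputed MSSPs for $\fragmentco{x-2^s}{x}$ and $\fragmentco{x}{x+2^s}$ therefore supply $\DPd_{i,x}$ and $\DPd_{x,j}$ directly, and two SMAWK calls compute $\DPd_{i,x}\oplus(\DPd_{x,j}\oplus v)$ in $\Oh(d\log m)$ time---no $\Oh(\log m)$-piece decomposition and no amortization are needed.
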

\begin{proof}
    {\tt DMVO-Init}. For each \(s \in \fragment{0}{\lceil \log m\rceil}\), we
    partition \(\fragmentco{0}{m}\) into subintervals of length \(2^s\)\!,%
    \footnote{For each \(s\), the last subinterval might be shorter.}
    and
    for each such subinterval \(\fragmentco{\ell}{r}\),
    we use Klein's algorithm (\cref{lem:MSSP}) on $\GPd_{\ell,r}$
    to compute a data structure with
    \(\Oh(\log m)\)-time random access to \(\DPd_{\ell,r}\)\!,
    and---crucially---\(\DPd_{\ell,i}\) and \(\DPd_{i,r}\) for every \(i\in
    \fragment{\ell}{r}\).

    {\tt DMVO-Query}. Suppose that we are given integers $0\le i \le j \le m$ and a vector
    \(v\).
    We first compute the largest integer $s$ such that $\fragmentco{i}{j}$ contains an
    integer $x \coloneqq c \cdot 2^s$ (for an integer \(c \ge 0\)).
    Then, we consider two consecutive length-\(2^s\) intervals
    \(\fragmentco{x - 2^s}{x}\) and \(\fragmentco{x}{x + 2^s}\)
    for which
    we have precomputed (random access to) the corresponding distance matrices \(\DPd_{x-2^s,x}\) and
    \(\DPd_{x, x+2^s}\)---and, crucially, \(\DPd_{i,x}\) and \(\DPd_{x,j}\).
    Using \cref{thm:smawk} (the SMAWK algorithm) twice, we compute and return
    $\DPd_{i,x}\oplus (\DPd_{x,j}\oplus v)$.

    {\bf Correctness.}
    First observe that every path from $\VPd_i$ to $\VPd_j$ in
    $\GPd$ crosses $\VPd_x$ for each \(x \in \fragmentoo{i}{j}\). Thus, we know that
    $\DPd_{i,j} = \DPd_{i,x}\oplus \DPd_{x,j}$.
    Further, by \cref{fct:monge}, $\DPd_{i,x}$ and $\DPd_{x,j}$ are Monge.

    Even though we compute distances in graphs $\GPd_{\ell,r}$ for different pairs $(\ell,r)$, they are also distances in~$\GPd$;
    this is implied by \cref{lem:oagw} (path monotonicity in $\oAGw(P,P)$) and the
    fact that $\VPd_\ell$ and $\VPd_r$ separate $\GPd_{\ell,r}$ from the remainder of $\GPd$.
    In particular, such a preprocessing of $\GPd_{\ell,r}$ yields $\Oh(\log m)$-time
    random access to $\DPd_{\ell,i}$ and $\DPd_{i,r}$ for every $i\in \fragmentoo{\ell}{r}$.

    Recall that for an interval \(\fragmentco{i}{j}\)  we choose \(s\)  as the maximal
    integer for which an integer \(c\) exists  with   \(i \le x = c \cdot 2^s < j\).
    Observe that this choice implies that \(\fragmentco{i}{j}\) contains neither \((c-1)
    \cdot 2^s\) nor \((c+1) \cdot 2^s\), that is, we have
    \(x - 2^s = (c-1) \cdot 2^s < i\) and \(j \le (c+1) \cdot 2^s = x + 2^s\).
    Hence, our precomputation of \(\DPd_{x-2^s,x}\) and
    \(\DPd_{x, x+2^s}\) indeed includes random access to \(\DPd_{i,x}\) and \(\DPd_{x,j}\).

    {\bf Running time.}
    For the running time of {\tt DMVO-Init},
    observe that for fixed \(\ell\) and \(r\), applying \cref{lem:MSSP} (Klein's algorithm)
    to $\GPd_{\ell,r}$ costs
    $\Oh(d\cdot (r-\ell+1)\cdot \log m)$ time; which yields $\Oh(md\cdot \log m)$
    total time for all intervals of a fixed length.
    Across the \(\Oh(\log m)\) different interval lengths, the preprocessing thus takes
    time $\Oh(md\log^2 m)$ in total.

    The running time of {\tt DMVO-Query} is as claimed as it is dominated by a constant number of calls
    to \cref{thm:smawk}
    for $\Oh(d)\times \Oh(d)$ Monge matrices to which we have $\Oh(\log m)$-time random access.
\end{proof}

Next, we establish how to use \ref{prob:dmvo} to obtain a fast algorithm to compute
\((k,w)\)-error occurrences of \(P\) in a string \(S \in \S\).

\begin{algorithm}[t]
    \SetKwBlock{Begin}{}{end}
    \SetKwFunction{DMVOqeury}{DMVO-Query}
    \boosted{$\D$ $\gets$ {\tt DMVO($P$, $d$, $w$)}, $\A : P \onto S$ $\gets$
        \(\{((0,\varepsilon),(0,\varepsilon)),\dots, ((m,\varepsilon),(|S|,\varepsilon)) \}\), $d$, $k$, $w$}\Begin{
        Using a left-to-right pass over the breakpoints in \(\A\), compute
        \Begin{
            the source vertices \(V^\top \coloneqq \{ v^\top_i  \mid i \in
            \fragment{0}{s_{2d + 1}} \}\) with \(v^\top_i \coloneqq (i, 0)\)\;
            the left-center boundary portal-set \( \VPSd_{2d + 1}\)\;
            the set of important portal positions \(I \coloneqq \{ i_0 = 2d+1, \dots,
            i_{q} = m - 2d - 1\}\),
            where a portal position \(i_j\) is
            \emph{marked} if it is to the right of a pure range, that is, if
            \(\A(P\fragmentco{i_{j-1}}{i_j}) = P\fragmentco{i_{j-1}}{i_j}\)\;
            the center-right boundary portal-set \( \VPSd_{m - 2d - 1} \) \;
            and the target vertices \(V^\bot \coloneqq \{ v^\bot_i  \mid i \in
            \fragment{0}{|S| - s_{m-2d - 1}} \}\) with \(v^\bot_i \coloneqq (i + s_{m-2d-1},
            m)\).
        }

        Use Klein's algorithm (\cref{lem:MSSP})
        on \(\GPSd_{m - 2d - 1, m}\)
        to obtain \(D_{m - 2d - 1, \bot}\),
        where \(
            D_{m - 2d - 1, \bot}\position{i, j}  \coloneqq \dist_{\GPSd}( \VPSd_{m - 2d -
            1}\position{i}, v^\bot_j)
        \)\;
        Use SMAWK (\cref{thm:smawk}) for $v$ $\gets$ $D_{m - 2d - 1, \bot} \oplus \vec{0}$\;

        \ForEach{$j \gets q, \dots, 1$}{
            \leIf{portal position \(i_j\) is marked}{
                $v$ $\gets$ \DMVOqeury{$i_{j-1}$, $i_j$, $v$}\;
            }{
                Use Klein's algorithm
                on \(\GPSd_{i_{j-1}, i_j}\)
                to obtain \(\DPSd_{i_{j-1}, i_j}\); then SMAWK for
                $v$ $\gets$ $\DPSd_{i_{j-1}, i_j} \oplus v$
            }
        }

        Use Klein's algo.\
        on \(\GPSd_{0, 2d + 1}\)
        to obtain \(D_{\top, 2d + 1}\),
        where \(
            D_{\top, 2d + 1}\position{i, j}  \coloneqq \dist_{\GPSd}( v^\top_i, \VPSd_{2d
                + 1}\position{j})
        \); then SMAWK for $v$ $\gets$ $D_{\top, 2d + 1} \oplus v$\;

        \Return{$ \{ i  \mid v\position{i} \le k \} $}\;
    }
    \caption{Using \ref{prob:dmvo} for a sped-up computation of \((k,w)\)-error
        occurrences of a string \(P\) in a string \(S\) that is given via a breakpoint
        representation of an alignment of cost at most \(d\). Observe that without using
        \ref{prob:dmvo}, the algorithm would amount to computing the top-to-bottom shortest-path
        distances in \(\GPSd\) using Klein's algorithm (\cref{lem:MSSP}). We achieve a speed-up
        by using precomputed shortest-path distances for the subgraphs where \(\A\) matches
        fragments exactly.}\label{alg:boostedPM}
\end{algorithm}
\begin{lemma}
    \dglabel{7.9.30-2}[obs:band](Given alignment \(\A : P \onto S\)
    of cost at most \(d\) and a \(k \in\fragment{0}{d}\), can compute \(\OccW_k(P,
    S)\) using \(\Oh(d)\) calls to {\tt DMVO-Query} plus \(\Ohtilde(d^2)\) extra time)
    Suppose that we are given an instance of a \ref{prob:dmvo} data structure for a
    positive integer threshold \(d\),
    a pattern \(P\) of length \(m > 4d + 1\),
    and a normalized weight function $w:\sqEsigma \to \intvl{0}{W}$.

    Given the breakpoint representation of an alignment $\A : P\onto S$ of unweighted cost
    at most $d$ (implicitly defining a string $S\in \Sigma^*$) and a threshold $k\in
    \fragment{0}{d}$, we can compute $\OccW_{k}(P,S)$ using \(\Oh(d)\) calls to {\tt
    DMVO-Query} and \(\Oh(d^2 \log m)\) extra time.
\end{lemma}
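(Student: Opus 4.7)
The strategy is to implement Algorithm~\ref{alg:boostedPM}. The starting observation is that \(i \in \OccW_k(P,S)\) iff for some \(j\) the shortest path from \((0,i)\) to \((|P|,j)\) in \(\oAGw(P,S)\) has cost at most \(k \le d\). By \cref{obs:band}, any such path stays within the diagonal band of width \(2d+1\) around the diagonals of the alignment \(\A\), and the portal sets \(\VPSd_i\) (defined from \(\A\)) lie inside and separate this band. Thus it suffices to compute, for every source vertex on the top boundary of the band, its minimum distance to any target vertex on the bottom boundary, by sweeping shortest-path information through the separators \(\VPSd_i\) from right to left and, at the end, reading off those sources whose stored value is at most \(k\).

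Next, I split the central strip of \(\GPSd\) at ``important portal positions'' \(i_0 < \cdots < i_q\) chosen from the breakpoints of \(\A\) so that each slice \(\GPSd_{i_{j-1},i_j}\) is either (i) \emph{pure}, meaning \(\A(P\fragmentco{i_{j-1}}{i_j}) = P\fragmentco{i_{j-1}}{i_j}\), in which case \(\GPSd_{i_{j-1},i_j}=\GPd_{i_{j-1},i_j}\) and its Monge distance matrix is accessible via a single \verb|DMVO-Query| through \(\D\); or (ii) \emph{non-pure} and containing exactly one edit of \(\A\). Since \(\A\) has unweighted cost at most \(d\), both the number of pure and of non-pure slices is \(O(d)\), and the total \(S\)-span covered by the non-pure slices is \(O(d)\) (bounded by the number of insertions in \(\A\)). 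For each non-pure slice, and for the left/right caps \(\GPSd_{0,2d+1}\) and \(\GPSd_{m-2d-1,m}\), I apply Klein's MSSP (\cref{lem:MSSP}) directly to obtain \(\Oh(\log m)\)-time random access to the relevant Monge distance matrix. The right-to-left sweep then \((\min,+)\)-multiplies the running vector by each slice's distance matrix, either via \verb|DMVO-Query| (pure slices) or via SMAWK on the Klein-provided access (non-pure slices and caps).

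Correctness rests on three ingredients: (i) the portal set \(\VPSd_i\) separates the parts of \(\GPSd\) to its left and to its right, so overall top-to-bottom shortest-path distances factor as \((\min,+)\)-products of the per-slice distance matrices; (ii) the monotonicity and distance-preservation properties of \cref{lem:oagw} guarantee that distances computed in the individual sliced subgraphs agree with distances in \(\GPSd\) (and, for pure slices, with distances in \(\GPd\) as stored by \(\D\)); (iii) \cref{obs:band} justifies restricting attention to the band. Monge-ness of all factors is ensured by \cref{fct:monge}, and is preserved under \((\min,+)\)-products by \cref{fct:monge-product-is-monge}, so that SMAWK (\cref{thm:smawk}) applies throughout. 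For the running time, each Klein call runs in \(O(N \log N)\) time on a graph with \(N\) vertices; the two caps have \(O(d^2)\) vertices each, and the non-pure slices have \(O(d)\) rows each and \(O(d)\) columns in total, for a combined vertex count of \(O(d^2)\), contributing \(O(d^2 \log m)\). Every SMAWK call on an \(O(d)\times O(d)\) Monge matrix uses \(O(d)\) accesses, each of cost \(\Oh(\log m)\), for \(O(d \log m)\) per slice and \(O(d^2 \log m)\) in total across the \(O(d)\) slices. Finally, the algorithm issues at most \(O(d)\) \verb|DMVO-Query| calls, matching the claim. The main technical obstacle will be pinning down the partition into pure/non-pure slices cleanly from the breakpoint representation of \(\A\) so that each non-pure slice really does contribute only \(O(d^2)\) vertices in total and each pure slice corresponds exactly to an input interval \((i,j)\) of a single \verb|DMVO-Query|.
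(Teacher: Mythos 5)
Your proposal is correct and matches the paper's proof closely: a right-to-left sweep through the separators $\VPSd_{i_j}$, using Klein's MSSP on the two caps and on the non-pure slices, and DMVO-Queries for the pure slices, with correctness via the band-restriction claim, separator property, and the isomorphism $\GPSd_{\text{pure}} \cong \GPd$. One small imprecision: the non-pure slices of the partition should be \emph{maximal} stretches of non-pure center positions rather than ``containing exactly one edit'' --- adjacent edits will land in a single non-pure slice --- but since the total $S$-span of all non-pure positions is still $\Oh(d)$, this does not affect the $\Oh(d^2\log m)$ bound, and it is exactly the grouping that resolves the ``technical obstacle'' you flag at the end.
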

\begin{proof}
    \textbf{Notations.}
    We say that a position (or slice) \(i\) is \emph{left} if $p^{<d}_i = 0$ (or \(i \le 2d\)) and
    we say that a position (or slice) \(i\)  is \emph{right} if $p^{>d}_{i+1} = m$ (or \(i
    \ge m - 2d - 1\)).
    A slice is \emph{center} if it is neither left nor right.
    Our assumption \(m > 4d + 1\) ensures that each position (and slice) is at most one of
    left, right, and center.

    We say that a (center) position \(i\) with \(\A(P\position{i}) = P\position{i}\) is
    \emph{pure}.
    We group maximal stretches of pure (center) positions into \emph{pure stretches}.
    We say that a position where a pure stretch starts is \emph{important}, and we say that a position
    where a pure stretch ends is \emph{important and marked}.
    We write \(I \coloneqq \{ i_0 = 2d+1, \dots,
    i_{q} = m - 2d - 1\}\)
    for the set of important portal positions, where we artificially add \(i_0\) and
    \(i_q\) in case they are not present.

    Next, we name important vertices of \(\GPSd\).
    The leftmost center portal-set $\VPSd_{i_0}$ is the \emph{left-center boundary}; the
    rightmost center portal-set $\VPSd_{i_q}$ is
    the \emph{center-right boundary}.
    The \emph{source} vertices are the topmost vertices of left slices,
    that is, $\VPSd_\top\coloneqq \{0\}\times
    \fragment{s_0}{s_{2d + 1}}$;
    the \emph{target} vertices are the bottommost vertices of right
    slices, that is, $\VPSd_\bot\coloneqq \{m\}\times
    \fragment{s_{m-2d -1}}{s_m}$.
    Observe that
    $\VPSd_\bot$ and $\VPSd_\top$ are the only vertices in $\GPSd$ with the first coordinate equal to
    $m$ and $0$, respectively.

    Consult \cref{fig:nk1} for a visualization of an example.

    \begin{figure}[t!]
        \centering
        \begin{subfigure}[t]{\linewidth}
            \centering
            \includegraphics[page=2,width=.95\linewidth]{figs/g03}
            \caption{
            For each \(i \in \fragment{0}{|P|}\), we depict the slice \(\GPSd_{i,i}\) in a
            stylized manner, the corresponding fragments of
            \(P\) and \(S\), as well as the corresponding part of the alignment \(\A\);
            compare \cref{mjhlpavqep}.
            For each slice, we highlight in gray its portal vertices; we also highlight
            source vertices and target vertices in light purple.
            Left and right slices are colored gray, the (center) pure slices are colored blue, while the remaining slices are colored yellow.
            }\label{tjcwmetzrj}
        \end{subfigure}\vskip1ex
        \begin{subfigure}[t]{.62\linewidth}
            \centering
            \includegraphics[page=4,width=.9\linewidth]{figs/g03}
            \caption{The graphs \(\GPSd_{i_{q},m}\) and \(\GPSd_{0,i_{0}}\) (colored
                gray), as well as for each maximal interval $\fragment{i_{j-1}}{i_j}$ of
                center positions that are pure (colored blue) and not pure (colored
                yellow) the graph \(\GPSd_{i_{j-1},i_j}\) as a
                union of slices from \cref{tjcwmetzrj}.
                We again highlight portal vertices, as well as source and target vertices.}
        \end{subfigure}\hfill%
        \begin{subfigure}[t]{.35\linewidth}
            \centering
            \includegraphics[page=3,scale=1.4]{figs/g04}
            \caption{The subgraph \(\GPSd_{5,6}\) in a stylized manner; compare
                \cref{evatvwbtfb}.\vphantom{\(\GPSd_{5,6}\)}
                As the alignment $\A: P \onto S$ matches \(P\fragment{5}{6} = S\fragment{3}{4}\),
                we have \(\GPSd_{5,6} = \GPd_{5,6}\).}
        \end{subfigure}
        \caption{Stylized variants of the $d$-slices \(\GPSd_{i,i}\) of \(\oAGw(P,S)\) and
            the subgraphs considered by the algorithm underlying \cref{7.9.30-2}.
            The source vertices in the top left and the target vertices
            in the bottom right are marked in light purple.
            Consecutive slices with the same color are merged.
            The portals are exactly those vertices that belong to subgraphs with different
            colors.}
        \label{fig:nk1}
    \end{figure}

    \textbf{Algorithm description.}
    We first use a single pass over the breakpoint description of \(\A\) to identify
    the source vertices, the left-center boundary, the boundaries of (center) pure
    stretches (and the corresponding portal-sets $\VPSd_{i_j}$ for $j \in
    \fragmentoo{0}{q}$), the center-right boundary, and the target vertices.
    Additionally, using (the breakpoint representation of) $\A$, we explicitly construct
    \(\GPSd_{i_{q},m}\), \(\GPSd_{0,i_{0}}\), as well as \(\GPSd_{i_{j-1},i_j}\) for each
    maximal interval $\fragment{i_{j-1}}{i_j}$ of center positions that are not pure in a
    straightforward manner in time linear in the total size of these graphs.

    Using Klein's algorithm (\cref{lem:MSSP}), we compute (random
    access to) the matrix of distances from any center-right boundary
    vertex to any target vertex in \(\GPSd_{i_{q},m}\); we then min-plus multiply the obtained matrix with a
    suitably-sized zero vector using \cref{thm:smawk} (SMAWK algorithm) to obtain a vector \(v\).

    Next, we process the center positions in $I$ from right to left. For a pure stretch
    \((i_{j-1},i_j)\), that is, when $i_j$ is marked, we use a
    {\tt DMVO-Query} to compute the min-plus product \(\DPd_{i_{j-1},i_j} \oplus v\).
    For a non-pure stretch \((i_{j-1},i_j)\),
    that is, when $i_j$ is not marked, we use Klein's algorithm (\cref{lem:MSSP})
    on \(\GPSd_{i_{j-1},i_j}\) to obtain (random access to) \(\DPSd_{i_{j-1},i_j}\) and use
    \cref{thm:smawk} to compute \(\DPSd_{i_{j-1},i_j} \oplus v\).

    Finally, again using Klein's algorithm (\cref{lem:MSSP}), we compute (random
    access to) the matrix of distances from any source vertex to any left-center boundary
    vertex in \(\GPSd_{0,i_0}\); we then min-plus multiply the obtained matrix with \(v\)
    using \cref{thm:smawk} (SMAWK algorithm) and return any position of the resulting
    vector that has value at most \(k\).

    Consult \cref{alg:boostedPM} for a detailed description as pseudo-code.

    \textbf{Correctness.}
    We argue about the correctness in two steps. First, we show that each shortest
    $(0,\ell)$-to-$(m,r)$ path of length at most \(k\) in \(\GPSd\) indeed corresponds to an
    alignment of \(P\) onto \(S\fragmentco{\ell}{r}\) of the same cost.
    We then argue that portal-sets act as separators of \(\GPSd\) and thus allow us to split the
    distance computations and reuse the precomputed distances via \ref{prob:dmvo} whenever the
    corresponding subgraph of \(\GPSd\) is isomorphic to a subgraph of \(\GPd\) (encapsulated in
    \ref{prob:dmvo}).

    \begin{claim}
        \label{clm:restrict}
        For each fragment $S\fragmentco{\ell}{r}$ of $S$ and each $k\in\intvl{0}{d}$,
        the following two statements are equivalent:
        \begin{enumerate}
            \item\label{it:restrict:a} $\edw{P}{S\fragmentco{\ell}{r}}\le k$; \; and
            \item\label{it:restrict:b} $(0,\ell)$ is a source, $(m,r)$ is a target, and
                $\dist_{\GPSd}((0,\ell), (m,r))\le k$.
        \end{enumerate}
    \end{claim}
    \begin{claimproof}
        The $\eqref{it:restrict:a} \Leftarrow \eqref{it:restrict:b}$ implication follows
        directly from \cref{lem:oagw} (distance preservation between $\AGW(P,S)$ and
        $\oAGw(P,S)$) and the fact that $\GPSd$ is a subgraph of $\oAGw(P,S)$.

        For the converse implication, suppose that there is an alignment $\cO:P\onto
        S\fragmentco{\ell}{r}$ of weighted cost at most $k\le d$.
        It suffices to prove that $\cO$ (interpreted as a path from $(0,\ell)$ to $(m,r)$
        in $\oAGw(P,S)$) is contained in $\GPSd$.
        By \cref{obs:band}, every vertex $(x,y)\in \cO$ satisfies $y\in \fragment{x-d}{x+|S|-m+d}$.
        Let us focus on a single edge $(x,y)\to (x',y')$ of $\cO$ with
        $x'\in\{x,x+1\}$ and $y'\in \{y,y+1\}$.
        Let us fix an index $i\in \fragmentco{0}{m}$ such that $\fragment{y}{y'}\subseteq
        \fragment{s_i}{s_{i+1}}$.

        Since the path corresponding to $\A$ intersects at most $d$ diagonals, we have $(i-s_i)-(m-|S|)\le d$.
        Additionally, $s_i \le y \le x+|S|-m+d$.
        Combining these two inequalities, we get $i \le x+2d$, that is, $x \ge i-2d$.
        Since $x\ge 0$, we conclude that $x\ge \max(0,i-2d)=p^{<d}_i$.

        Similarly, we have $(s_{i+1}-(i+1))-(0-0)\le d$ and $s_{i+1}\ge y' \ge x'-d$.
        Combining these two inequalities, we get $x'-d-(i+1)\le d$, that is, $x' \le
        (i+1)+2d$.
        Since $x\le m$, we conclude that $x \le \min(m,i+1+2d)=p^{>d}_{i+1}$.
        Thus, the edge $(x,y)\to (x',y')$ is contained in $\oAGw(P'_i,S_i)$ and hence in
        \(\GPSd\).
    \end{claimproof}

    Next, observe that for pure positions, the
    corresponding \(d\)-slices of \(\oAGw(P, S)\) and \(\oAGw(P, P)\) are
    indeed isomorphic, that is, we have \( \GPSd_{i,s_i} = \GPd_{i,i}\)---naturally, this
    observation extends to pure stretches.
    Hence, our calls to {\tt DMVO-Query} for a pure stretch \((i_{j-1},i_j)\) compute in fact
    \(\DPSd_{i_{j-1},i_j} \oplus v = \DPd_{i_{j-1},i_j} \oplus v \).

    Finally, observe that each portal-set $\VPSd_{i_j}$ is indeed a separator of \(\GPSd\)---which allows us to
    split shortest path computations along this set and then combine the results using a
    min-plus product.

    In total, we indeed compute, for each source vertex, its minimum shortest-path distance to a
    target vertex.

    {\bf Running time.} First observe that the subgraphs
    corresponding to left and right positions (of which there are \(\Oh(d)\) in total)
    each have a size of \(\Oh(d^2)\).
    Next, as the alignment \(\A\) makes at most \(d\) edits, the total size of subgraphs
    between pure stretches is again \(\Oh(d^2)\) (as they span \(\Oh(d))\) positions of $S$).
    Similarly, we may bound the number of pure stretches by \(\Oh(d)\)---which yields the
    number of calls to {\tt DMVO-Query}.
    Finally, we perform \(\Oh(d)\) min-plus matrix-vector multiplications using SMAWK
    (\cref{thm:smawk}).

    In total, our calls to Klein's algorithm (\cref{lem:MSSP}) take time \(\Oh(d^2 \log
    m)\); all performed min-plus multiplications take the same time in total.
\end{proof}

Finally, we combine the various intermediate results of this section to obtain
our first main result.
\stalgmainnk
\begin{proof}
    We first proceed as in \cref{7.9.30-1} to obtain a set \(\S\) of \(\Oh(n/k)\) strings,
    each of length at most \(m + 2k\).
    It suffices to compute $\OccW_k(P,S)$ for all $S \in \S$ and merge the results in $\cO(n)$
    time via bucket sort.

    Next, we preprocess \(T\) for \cref{lem:PtoF} and
    we initialize \cref{clm:preprocess} on \(P\), \(d
    \coloneqq 4k\), and \(w\).

    For each \(S \in \S\), we proceed as follows.
    First, if \(m \le 4d + 1\), we use the classical dynamic programming algorithm to
    compute \(\OccW_k(P,S)\).
    Otherwise, we use \cref{lem:PtoF} to obtain a weighted alignment \(\A_S :
    P \onto S\) of cost at most \(d \coloneqq 4k\) or conclude that $\OccW_k(P,S) = \emptyset$.
    Next, we use \cref{7.9.30-2} on \(\A_S\), with the initialized \ref{prob:dmvo} data
    structure, \(d \coloneqq 4k\) and \(k\).

    We briefly argue about the running time.
    Our global preprocessing takes time \(\Oh(km \log^2 m)\).
    For a fixed $S \in \S$, \cref{lem:PtoF,7.9.30-2,clm:preprocess} combined run in
    time \(\Oh(k^2 \log m)\); in the case when we run the classic DP algorithm, we have
    \(|S| \le m + 2k <  m + d = \Oh(d)\) and thus the DP algorithm takes time
    $\Oh((|P|+1)\cdot (|S|+1))=\Oh(d^2) = \Oh(k^2)$.
    Over all $\cO(n/k)$ elements of $\S$, this takes $\cO((n/k)(k^2 + k^2 \log m)) =
    \cO(nk\log m)$ time. In total, we obtain the claimed running time.
\end{proof}

\section{Toolbox for Monge Matrices and Ferns}
\dglabel{sec:monge_toolbox}[7-16-4,7-16-5,def:bd,def:bdm,wfwfocnbjn,def:fern,7-16-6,7-16-7,def:fernprop,qvhuraqrwd](Definitions and generally useful facts and results for
Monge matrices and ferns)

	In this section, we recall useful results on the representation, the structure, and the multiplication of Monge matrices.
	Further, we introduce the notion of \emph{fern matrices}.
	A fern matrix represents some pairwise distances from a set of leftmost nodes in the top boundary of an alignment graph to a set of rightmost nodes in the bottom boundary of this alignment graph.
	Specifically, a fern matrix and the corresponding actual boundary-to-boundary distance matrix must agree on all values that are at most~$k$, that is, the two matrices are $k$-equivalent.
	Fern matrices are crucial for the following reason.
	While we may not be able to efficiently compute the actual boundary-to-boundary distance matrix, we show that we can efficiently compute a ($k$-equivalent) fern matrix that is Monge.

\subsection{Toolbox for Monge Matrices}

Explicitly storing Monge matrices is in general too expensive for our purposes.
Fortunately, their structure often allows us to represent them using much less space.

\begin{definition}
    \dglabel{7-16-4}(The density matrix \(\dens{A}\) and the core \(\core{A}\) of size
    \(\delta(A) \coloneqq |\core{A}|\)) of a matrix \(A\); the condensed representation of \(A\))
    For a matrix $A$ of size $p \times q$, the \emph{density matrix} of $A$, denoted
    $\dens{A}$, is a matrix of size $(p - 1) \times (q - 1)$ such that
    \[
        \dens{A}\position{i, j} = A\position{i, j + 1} + A\position{i + 1, j} - A\position{i,
        j} - A\position{i + 1, j + 1}
    \] for $i \in \fragmentco{0}{p-1}$ and $j \in
    \fragmentco{0}{q-1}$.

    The \emph{core} of a matrix $A$ is the set
    \[
        \core{A} \coloneqq \{(i, j, \dens{A}\position{i,
        j}) \mid i \in \fragmentco{0}{p-1}, j \in \fragmentco{0}{q-1}, \dens{A}\position{i, j}
        \neq 0 \}.\]
    We denote the \emph{size of the core} of $A$ by $\delta(A) \coloneqq |\core{A}|$.

    For a matrix $A$, the \emph{condensed representation of $A$} comprises in the values
    in the topmost row and the leftmost column of $A$ and the core of $A$.
\end{definition}

\begin{fact}
    \dglabel{remark:submatrix_smallcore}(The core of any contiguous submatrix of a matrix
    $A$ is at most $\delta(A)$)
    The core of any contiguous submatrix of a matrix $A$ is at most $\delta(A)$.
\end{fact}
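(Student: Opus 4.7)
The plan is to reduce the claim to a single observation about how the density matrix behaves under taking contiguous submatrices. Fix the matrix $A$ of size $p\times q$ and let $A'$ denote an arbitrary contiguous submatrix of $A$, say $A' = A\submatrix{i_0}{i_1}{j_0}{j_1}$ with $0\le i_0\le i_1\le p-1$ and $0\le j_0\le j_1\le q-1$. I would first unfold \cref{7-16-4} to observe that, for every $(i,j)\in \fragmentco{0}{i_1-i_0}\times \fragmentco{0}{j_1-j_0}$, the value $\dens{A'}\position{i,j}$ depends only on the $2\times 2$ window $A'\position{i..i+1,\,j..j+1}$, which is in turn the window $A\position{i_0+i\,..\,i_0+i+1,\,j_0+j\,..\,j_0+j+1}$. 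Consequently,
\[
    \dens{A'}\position{i,j} = \dens{A}\position{i_0+i,\,j_0+j},
\]
so $\dens{A'}$ is exactly the contiguous submatrix of $\dens{A}$ with rows in $\fragmentco{i_0}{i_1}$ and columns in $\fragmentco{j_0}{j_1}$.

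Given this identity, the proof finishes in one line: the set of positions where $\dens{A'}$ is nonzero is in bijection (via the shift $(i,j)\mapsto (i_0+i, j_0+j)$) with a subset of the positions where $\dens{A}$ is nonzero. Hence
\[
    \delta(A') = |\core{A'}| \le |\core{A}| = \delta(A),
\]
which is the claimed inequality. There is no real obstacle here; the entire content lies in recognizing that the density operation is defined by a local $2\times 2$ stencil and therefore commutes with restriction to contiguous submatrices. I would simply make this restriction-commutation explicit in one or two sentences and cite \cref{7-16-4} for the definitions of $\dens{\cdot}$, $\core{\cdot}$, and $\delta(\cdot)$.
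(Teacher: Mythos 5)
Your proof is correct, and the argument is the natural one. The paper states this as a Fact without supplying a proof; what you have written — that the density operator is a local $2\times 2$ stencil, so $\dens{A'}$ is the corresponding contiguous submatrix of $\dens{A}$, and hence the nonzero entries of $\dens{A'}$ inject (via a coordinate shift) into those of $\dens{A}$ — is exactly the reason the authors regard the statement as immediate.
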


Monge matrices with small core admit small representations that allow fast access to their entries.

\begin{definition}[{\cite[Lemma 4.4]{gk24}}]
    \dglabel{lm:build_mtx_ds}(The Core-based Matrix Oracle $\mds(A)$ of a
    matrix $A$,~\cite[Lemma 4.4]{gk24})
    The \emph{Core-based Matrix Oracle data structure} $\mds(A)$ of a matrix $A$ of size
    $p \times q$ with $N \coloneqq \max(p,q)$ provides the following interface.
    \begin{description}
        \item[Random access:] given $i \in \fragmentco{0}{p}$ and $j \in
            \fragmentco{0}{q}$, in time $\Oh(\log N)$ returns $A\position{i, j}$.
        \item[Full core listing:] in time $\Oh(1 + \delta(A))$ returns $\core{A}$.\qedhere
    \end{description}
\end{definition}

\begin{lemma}[{\cite[Lemma 4.4]{gk24}}]
    \dglabel{lm:build_mtx_dscomp}(Given a matrix \(A\) of size \(p \times q\), we can construct a core-based matrix
    oracle for \(A\) in time \(\Oh(pq + \delta(A) \log(p + q))\), {\cite[Lemma 4.4]{gk24}})
    Given a matrix \(A\) of size \(p \times q\), we can construct a core-based matrix
    oracle for \(A\) in time \(\Oh(pq + \delta(A) \log(p + q))\).
\end{lemma}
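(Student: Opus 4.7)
The plan is to reduce both interfaces to storing the condensed representation of $A$ (the first row, the first column, $A\position{0,0}$, and the core) together with a two-dimensional prefix-sum structure over the core. A routine telescoping of the definition of $\dens{A}$ gives the identity
\[A\position{i,j}=A\position{0,j}+A\position{i,0}-A\position{0,0}-\sum_{0\le i'<i,\;0\le j'<j}\dens{A}\position{i',j'},\]
so every random-access query reduces to evaluating a two-dimensional prefix sum over the (at most $\delta(A)$) nonzero entries of $\dens{A}$.

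I would begin by computing $\dens{A}$ in $\Oh(pq)$ time via a single row-major pass over $A$, simultaneously emitting the triples $(i',j',\dens{A}\position{i',j'})$ with nonzero value into an array. Because the pass visits positions in lexicographic order, the array is already sorted and directly supports full core listing in $\Oh(1+\delta(A))$ time. The first row, the first column, and $A\position{0,0}$ are retained explicitly, adding $\Oh(p+q)$ space.

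For random access I would build a persistent segment tree indexed by the column coordinate $\fragmentco{0}{q-1}$, constructing version $i$ from version $i-1$ by performing a point update for each nonzero $\dens{A}\position{i-1,\cdot}$; versions corresponding to rows with no nonzero density are shared with their predecessor. Each point update costs $\Oh(\log(p+q))$ time, so the total cost over all $\delta(A)$ updates is $\Oh(\delta(A)\log(p+q))$. A query on entry $(i,j)$ then asks version $i$ for its prefix sum on $\fragmentco{0}{j}$ in $\Oh(\log(p+q))=\Oh(\log N)$ time, after which the identity above produces $A\position{i,j}$ in $\Oh(1)$ additional time.

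The only real technical step is verifying the prefix-sum identity and implementing the persistent structure within the claimed budget; both are standard. The $\Oh(pq)$ term is unavoidable since we must at least read $A$ (equivalently, compute all entries of $\dens{A}$) to identify the core, and the $\Oh(\delta(A)\log(p+q))$ term is dictated by inserting $\delta(A)$ points into the persistent tree. Together this matches the $\Oh(pq+\delta(A)\log(p+q))$ construction bound and the $\Oh(\log N)$ random-access and $\Oh(1+\delta(A))$ full-core-listing guarantees promised by $\mds(A)$.
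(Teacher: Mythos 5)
Your proof is correct, but it takes a genuinely different route from the paper's. The paper's proof is a one-liner: it invokes the cited result (which, given the condensed representation---first row, first column, and core---builds the oracle in $\Oh(\delta(A)\log(p+q))$ time), and simply prepends an $\Oh(pq)$ scan to extract that condensed representation from the explicit matrix. You instead construct the oracle from scratch: the telescoping identity $A\position{i,j}=A\position{0,j}+A\position{i,0}-A\position{0,0}-\sum_{i'<i,\,j'<j}\dens{A}\position{i',j'}$ reduces random access to a two-dimensional dominance sum over the core, and a column-indexed persistent segment tree with one version per row answers such queries in $\Oh(\log(p+q))$ time after $\Oh(\delta(A)\log(p+q))$ total update cost. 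The identity is easily verified by induction on $i+j$, and the rest is the standard persistence machinery, so the construction is sound and matches the stated bounds (including $\Oh(p)$ extra time and space for the version pointers, which is absorbed by the $\Oh(pq)$ term). What your approach buys is self-containment---you do not treat the cited lemma as a black box---at the cost of re-deriving a data structure the paper is content to cite. Both yield the same interface and the same complexity.
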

\begin{proof}
    Compared to \cite[Lemma 4.4]{gk24}, we pay \(\Oh(pq)\) to first compute the core of
    \(A\) which (together with the first row and column of \(A\)) use as the input to the
    original algorithm.
\end{proof}

Given the core-based matrix oracle data structure of two matrices, one can efficiently
obtain the core-based matrix oracle data structure of their min-plus product.

\begin{lemmaq}[{\cite[Lemma 4.6]{gk24}}]
    \dglabel{lm:matrix_mult_w}(Given $\mds(A)$ and $\mds(B)$ for Monge \(A\) and \(B\),
    can compute $\mds(A \oplus B)$ in time $\Oh((N + \delta(A) + \delta(B)) \log N)$,
    where \(N\) is the maximum dimension of \(A\) and \(B\),~\cite[Lemma 4.6]{gk24})
    There is an algorithm that, given $\mds(A)$ and $\mds(B)$ for Monge matrices
    $A \in \mathbb{Z}_{\ge 0}^{p \times q}$ and $B \in \mathbb{Z}_{\ge 0}^{q \times r}$
    with $N \coloneqq \max(p,q,r)$, builds
    $\mds(A \oplus B)$ in time $\Oh((N + \delta(A) + \delta(B)) \log N)$.
\end{lemmaq}

The following operation enables us to efficiently retrieve representations of contiguous
submatrices of Monge matrices with small core.

\begin{lemmaq}[{\cite[Lemma 4.7]{gk24}}]
    \dglabel{lm:ds_tools}(Given $\mds(A)$ for some Monge matrix $A$, can compute $\mds(C)$
    for any contiguous submatrix $C$ of~$A$ in time  $\Oh((N + \delta(A)) \log N)$ where
    \(N\) is the maximum dimension of \(A\))
    There is an algorithm that, given $\mds(A)$ for a Monge matrix $A$ of maximum dimension
    $N$, builds $\mds(C)$ in time $\Oh((N + \delta(A)) \log N)$ for any contiguous
    submatrix $C$ of~$A$.
\end{lemmaq}

For our purposes, in the interest of efficiency, instead of exactly computing a desired
distance matrix $D$, it suffices to compute a different matrix $D'$ that agrees with $D$
on all values that do not exceed some threshold $k$.

\begin{definition}
    \dglabel{7-16-5}(\(k\)-equivalence of integers, $a\meq{k} b$)
    For a real threshold $k\in \mathbb{R}$, we say that real numbers $a,b\in \mathbb{R}$
    are \emph{$k$-equivalent}, denoted $a\meq{k} b$, if $a=b$ or $k\le \min\{a,b\}$.

    Two matrices $A,B\in \mathbb{R}^{p\times q}$ are $k$-equivalent, denoted $A \meq{k}
    B$, if all their entries are $k$-equivalent, that is, $A\position{i,j}\meq{k}
    B\position{i,j}$ holds for each $i\in \fragmentco{0}{p}$ and $j\in \fragmentco{0}{q}$.
\end{definition}
\begin{remark}
    Equivalently, two numbers \(a\) and \(b\) are \(k\)-equivalent if they are equal
    whenever at least one of them is at most \(k\).

    The $\meq{k}$ relation is a \emph{congruence} with respect to the $(\min,+)$ algebra on
    $\mathbb{R}_{\ge 0}$: it is an equivalence relation such that, for every $a,a',b,b'\in
    \mathbb{R}_{\ge 0}$, if $a\meq{k}b$ and $a'\meq{k}b'$, then $a+a'\meq{k}b+b'$ and
    $\min\{a,a'\}\meq{k} \min\{b,b'\}$.
    Thus, $\meq{k}$ is also a congruence with respect to the $(\min,+)$-product.
\end{remark}

Consider a scenario where we have to compute a matrix that is $k$-equivalent to the
product of a (long) size-$\ell$ sequence of integer Monge matrices of dimensions at most
$N$ with cores at most $t$.
In this scenario, we can use the following lemma to ensure that the core of all computed
matrices does not exceed $\cOtilde(\sqrt{k})$ and that the output can be produced in time
$\cOtilde(\ell \cdot N \cdot (t+\sqrt{k}))$.

\begin{lemmaq}[{\cite[Lemma 4.14]{gk24}}]
    \dglabel{lm:matrix_mult_k}(Given \(k\) and $\mds(A)$ and $\mds(B)$ for Monge matrices \(A\) and
    \(B\), can compute $\mds(C')$ of a matrix \(C'\) that is \(k\) equivalent to \(A
    \oplus B\) and $\delta(C') \le \Oh(N \sqrt k)$,
    in time $\Oh((N + \delta(A) + \delta(B)) \log N)$, where \(N\) is the
    maximum dimension of \(A\) and \(B\),~\cite[Lemma 4.14]{gk24})
    There is an algorithm that, given $\mds(A)$ and $\mds(B)$ for Monge matrices
    $A \in \mathbb{Z}_{\ge 0}^{p \times q}$ and $B \in \mathbb{Z}_{\ge 0}^{q \times r}$ with $N
    \coloneqq \max
    (p,q,r)$ and a $k \in \mathbb{Z}_{> 0}$, in time $\Oh((N + \delta(A) + \delta(B)) \log N)$
    builds $\mds(C')$, where $C' \in \mathbb{Z}_{\ge 0}^{p \times r}$ is a Monge matrix
    that satisfies $\delta(C') \le \Oh(N \sqrt k)$ and $C' \meq{k} A \oplus B$.
\end{lemmaq}

In the regime of small integer edit weights, we often operate on bounded-difference
matrices.

\begin{definition}[{\cite[Definition 4.8]{gk24}}]
    \dglabel{def:bd}(The \(\beta\)-bounded difference property of an integer matrix)
    Let $\beta\in \Zp$ be a positive integer and let $A\in \mathbb{Z}^{p\times q}$ be an
    integer matrix.
    We say that $A$ is $\beta$-bounded-difference if
    \begin{itemize}
        \item $|A\position{i,j}-A\position{i,j-1}|\le \beta$ holds for each $i\in
            \fragmentco{0}{p}$ and $j\in \fragmentco{1}{q}$, and
        \item $|A\position{i,j}-A\position{i-1,j}|\le \beta$ holds for each $i\in
            \fragmentco{1}{p}$ and $j\in \fragmentco{0}{q}$.\qedhere
    \end{itemize}
\end{definition}

\begin{fact}
    \dglabel{remark:submatrix_bd}(Contiguous submatrices of integer-valued
    $\beta$-bounded-difference matrices are integer-valued and $\beta$-bounded-difference)
    Any contiguous submatrix of an integer-valued $\beta$-bounded-difference matrix is also an
    integer-valued $\beta$-bounded-difference matrix.
\end{fact}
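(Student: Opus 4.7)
The plan is to observe that the bounded-difference property is \emph{inherited} by any contiguous submatrix in a completely pointwise fashion, so there is essentially nothing to do beyond unwinding the definitions. Let $A \in \mathbb{Z}^{p \times q}$ be an integer-valued $\beta$-bounded-difference matrix, and let $C$ be a contiguous submatrix of $A$, say of size $p' \times q'$, obtained by selecting rows $\fragmentco{r_0}{r_0 + p'}$ and columns $\fragmentco{c_0}{c_0 + q'}$. Then by construction $C\position{i,j} = A\position{r_0 + i, c_0 + j}$ for all $i \in \fragmentco{0}{p'}$ and $j \in \fragmentco{0}{q'}$, and in particular $C$ is integer-valued since $A$ is.

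For the horizontal differences, fix $i \in \fragmentco{0}{p'}$ and $j \in \fragmentco{1}{q'}$. Then
\[
    |C\position{i,j} - C\position{i,j-1}| = |A\position{r_0+i, c_0+j} - A\position{r_0+i, c_0+j-1}| \le \beta,
\]
where the inequality is an instance of the horizontal $\beta$-bounded-difference property of $A$ applied to the valid indices $r_0 + i \in \fragmentco{0}{p}$ and $c_0 + j \in \fragmentco{1}{q}$ (both in range because $C$ is a contiguous submatrix of $A$). The vertical differences are handled symmetrically, using the vertical $\beta$-bounded-difference property of $A$.

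Thus $C$ is an integer-valued $\beta$-bounded-difference matrix, as claimed. The only step that requires a nontrivial observation is that the shifted indices $r_0 + i$ and $c_0 + j$ remain within the admissible range for invoking the bounded-difference property of $A$; this is immediate from the fact that $C$ was obtained by contiguous row and column ranges. No obstacle beyond bookkeeping is expected.
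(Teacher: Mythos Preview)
Your proof is correct. The paper states this as a \emph{Fact} without proof, treating it as immediate from the definition; your argument is exactly the routine pointwise verification one would expect, so there is nothing to compare.
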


Crucially, integer-valued bounded-difference Monge matrices have small cores.

\begin{lemmaq}[{\cite[Lemma 4.9]{gk24}}]
    \dglabel{lem:bdtocore}($\beta$-bounded-difference implies $\delta(A)\le
    2\beta(\min\{p,q\}-1)$,~\cite[Lemma 4.9]{gk24})
    If a Monge matrix $A\in \mathbb{Z}^{p\times q}$ is $\beta$-bounded-difference, then
    $\delta(A)\le 2\beta(\min\{p,q\}-1)$.
\end{lemmaq}

Moreover, the bounded-difference property is preserved under the min-plus product.

\begin{lemmaq}[{\cite[Lemma 4.10]{gk24}}]
    \dglabel{lem:bdproduct}(The $(\min,+)$-product of \(\beta\)-bounded difference
    matrices is \(\beta\)-bounded difference,~\cite[Lemma 4.10]{gk24})
    Let $A\in \mathbb{Z}^{p\times q}$ and $B\in \mathbb{Z}^{q\times r}$ be
    $\beta$-bounded-difference matrices.
    Then, $C\coloneqq A\oplus B$ is also a $\beta$-bounded-difference matrix.
\end{lemmaq}

\subsection{Ferns and Distance Matrices}

\begin{definition}[{\cite[Definitions 5.5 and 5.6]{gk24}}]
    \dglabel{def:bdm}
    For two strings $X,Y \in \Sigma^*$, the \emph{boundary distance matrix} $D^w_{X,Y}$
    stores the distances from every \emph{input} vertex on the top-left boundary to every
    \emph{output} vertex on the bottom-right boundary of the alignment graph $\AGw(X,Y)$.
    Formally, consider the sequence of points \((\lambda_i)_{i\in \fragment{0}{|X|+|Y|}}\) with
    \[
        \lambda_i \coloneqq \begin{cases}
            (|X|-i,0) & \text{for } i\in \fragment{0}{|X|},\\
            (0, i-|X|) & \text{for }i\in \fragment{|X|}{|X|+|Y|},\\
      \end{cases}.
    \]
    and the sequence of points \((\rho_i)_{i\in \fragment{0}{|X|+|Y|}}\) with
    \[
        \rho_i \coloneqq \begin{cases}
            (|X|,i) & \text{for }i\in \fragment{0}{|Y|},\\
            (|X|+|Y|-i,|Y|) & \text{for }i\in \fragment{|Y|}{|X|+|Y|}.
        \end{cases}
    \]
    Then, $D^w_{X,Y}$ is an $(|X|+|Y|+1) \times (|X|+|Y|+1)$ distance matrix with
    \[
        D^w_{X,Y}\position{a,b} \coloneqq
        \dist_{\AGw(X,Y)}(\lambda_a,\rho_b).
        \qedhere
    \]
\end{definition}

The following observation allows us to utilize the toolbox described in
\cref{sec:monge_toolbox}.

\begin{fact}[{\cite[Observation 5.7]{gk24}}] \dglabel{lm:bounded_core}
    For two strings $X,Y \in \Sigma^*$ and a weight function $w : \sqEsigma \to
    \fragment{0}{W}$,
    the matrix $D^w_{X,Y}$ is $(W + 1)$-bounded-difference and hence $\delta(D^w_{X,Y}) =
    \cO(W \cdot (|X|+ |Y|))$.
\end{fact}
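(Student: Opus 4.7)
My plan is to reduce the fact to \cref{lem:bdtocore}, which will immediately give $\delta(D^w_{X,Y}) \le 2(W+1)\cdot(|X|+|Y|) = \cO(W\cdot(|X|+|Y|))$ the moment I establish that $D^w_{X,Y}$ is both Monge and $(W+1)$-bounded-difference. Monge-ness will come essentially for free: by \cref{fct:monge} applied to the strongly connected planar graph $\oAGw(X,Y)$, whose outer face contains $\lambda_0,\dots,\lambda_{|X|+|Y|},\rho_{|X|+|Y|-1},\dots,\rho_1$ in the correct cyclic order, the matrix of pairwise $\oAGw(X,Y)$-distances from the $\lambda$-vertices to the $\rho$-vertices is Monge; and by the distance-preservation clause of \cref{lem:oagw}, it coincides with $D^w_{X,Y}$ on every pair where $\lambda_a$ sits monotonically above-left of $\rho_b$.

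For the $(W+1)$-bounded-difference property, my key observation is that any two consecutive input vertices $\lambda_a,\lambda_{a+1}$ are at $\oAGw(X,Y)$-distance at most $W+1$ in \emph{both} directions. By inspecting the definition of $\lambda_i$, these two points differ by one vertical grid step when $a<|X|$ and by one horizontal grid step when $a\ge |X|$; in either case they are joined by a single forward grid edge (of cost at most $W$, since $w$ takes values in $\intvl{0}{W}$) in one direction and by the corresponding back edge (of cost exactly $W+1$) in the other. The exact same reasoning applies verbatim to consecutive output vertices $\rho_b,\rho_{b+1}$. Plugging these two bounds into the triangle inequality in $\oAGw(X,Y)$ yields $|D^w_{X,Y}[a,b]-D^w_{X,Y}[a',b']|\le W+1$ whenever $(a,b)$ and $(a',b')$ agree in one coordinate and differ by one in the other, which is precisely the $(W+1)$-bounded-difference property.

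The one point that I expect to require care is the discrepancy between the $\AGw(X,Y)$-distances appearing in the formal definition of $D^w_{X,Y}$ (some of which can be $\infty$ when no forward path exists from $\lambda_a$ to $\rho_b$, such as from the bottom-left corner to a vertex strictly above the bottom-right corner) and the uniformly finite $\oAGw(X,Y)$-distances that I use in the triangle inequality. The cleanest way forward is to first run the bounded-difference argument on the $\oAGw(X,Y)$-distance matrix, which is finite throughout and satisfies the inequality directly, and then to identify its finite entries with those of $D^w_{X,Y}$ via distance preservation; the residual $\infty$-entries of $D^w_{X,Y}$ occupy a contiguous boundary block on which the bounded-difference inequality is either vacuous or trivially consistent. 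With both Monge-ness and $(W+1)$-bounded-difference established, invoking \cref{lem:bdtocore} with $\beta=W+1$ and $\min\{p,q\}=|X|+|Y|+1$ delivers the claimed core bound.
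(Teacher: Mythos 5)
Your proof is correct, and it is the natural argument for this fact (which the paper states without proof, citing \cite[Observation 5.7]{gk24}). The three ingredients are the right ones: planar-graph Monge property (\cref{fct:monge}) for the Monge structure, triangle inequality between $\oAGw$-adjacent boundary vertices (joined by a forward edge of cost at most $W$ in one direction and a back edge of cost $W+1$ in the other) for the $(W+1)$-bounded-difference property, and \cref{lem:bdtocore} to convert bounded-difference plus Monge into the core bound. Note also that \cref{lem:bdtocore} needs the matrix to be Monge, which the fact statement does not say explicitly — you were right to establish it.

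The one place I would push back slightly is your treatment of the $\AGw$-vs-$\oAGw$ discrepancy. You are right that, read literally with $\AGw(X,Y)$-distances, $D^w_{X,Y}$ has $\infty$ entries whenever $\lambda_a$ is not coordinate-wise above-left of $\rho_b$ (concretely, whenever $b-a > |Y|$ or $a-b > |X|$, which happens unless $X$ or $Y$ is empty). With $\infty$ entries, the matrix is not integer-valued, so "$(W+1)$-bounded-difference" in the sense of \cref{def:bd} is not even well-typed, and neither is $\delta(D^w_{X,Y})$ (its density matrix would involve $\infty-\infty$). Your phrasing that the $\infty$-entries "occupy a contiguous boundary block on which the inequality is vacuous or trivially consistent" glosses over this: the inequality is not vacuous, it is ill-defined. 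The correct and clean resolution is the one you give first: the statement must be read as being about the $\oAGw(X,Y)$-distance matrix (equivalently, one can take $D^w_{X,Y}$ as defined in \cite{gk24}, where the augmented graph is used), which coincides with the $\AGw$-distances on the feasible block by distance preservation and is finite everywhere else. With that reading your argument is airtight. This is more a comment on the paper's transcription of the definition than on your proof, but the wording "vacuous or trivially consistent" would not survive as written.
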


Recall from the technical overview that a \emph{puzzle piece} is a pair of strings; to
keep in line with our intuition that the said strings are substrings of a pattern \(P\)
and a text \(T\), we typically denote puzzle pieces as \((\dP, \dT)\).

\begin{definition}
    \dglabel{tprplazkfz}
    For a puzzle piece \((\dP, \dT)\), write $G\coloneqq \oAGw(\dP, \dT)$
    for the corresponding augmented alignment
    graph and \(\dist_G\) for the shortest path distances in \(G\).
    For \((\dP, \dT)\), and integer $\Delta, \nabla \in \fragment{0}{|\dT|}$,
    the \emph{$(\Delta\to\nabla)$-restricted distance matrix} of \((\dP, \dT)\) with respect
    to $w$ is a $(\Delta+1)\times (\nabla+1)$ matrix $\dmat{\dP}{\dT}{w}{\Delta\to\nabla}$
    where, for each $i\in \fragment{0}{\Delta}$ and \(j\in\fragment{0}{\nabla}\), we set
    \[
        \dmat{\dP}{\dT}{w}{\Delta\to\nabla}\position{i,j} \coloneqq \dist_G((0,i),
        (|\dP|,|\dT|-\nabla+j)).
    \]
    If \(\Delta=\nabla\), we simplify our notation and write \(\Delta\) instead of
    \(\Delta\to\nabla\).
\end{definition}

\begin{remark}
    \dglabel{wfwfocnbjn}[tprplazkfz,fct:monge]
    The \((\Delta\to\nabla)\)-restricted distance matrix of \cref{tprplazkfz} stores
    selected costs
    to traverse the alignment graph from the top boundary to the bottom boundary.

    In particular, if $i \le |\dT|-\nabla+j$, then $\dmat{\dP}{\dT}{w}{\Delta\to\nabla}\position{i,j} =
    \edw{\dP}{\dT\fragmentco{i}{|\dT|-\nabla+j}}$.
    Consequently, $\dmat{\dP}{\dT}{w}{\Delta\to\nabla}$ encodes the edit distances from
    $\dP$ to all fragments of
    $\dT$ that start within the $\Delta+1$ leftmost positions of $\dT$ and end within the
    $\nabla+1$ rightmost positions of $\dT$.
    As long as $k \le |\dP|-|\dT|+\min\{\Delta, \nabla\}$, this includes all fragments of length at least
    $|\dP|-k$, and hence all $(k,w)$-error occurrences of $\dP$ in $\dT$.

    From \cref{fct:monge}, we also learn that $\dmat{\dP}{\dT}{w}{\Delta\to\nabla}$ is indeed a Monge
    matrix, as it is a submatrix of the boundary distance matrix.
\end{remark}

Unfortunately, we are not able to directly compute the matrix \(\dmat{\dP}{\dT}{w}{\Delta}\)
in an efficient manner. Nevertheless, for our purposes, it suffices to compute a suitable
approximation---which we can do efficiently.
We formalize the required concepts next.

\begin{definition}\dglabel{def:fern}
    For a puzzle piece \((\dP, \dT)\), integer $\Delta, \nabla  \in
    \fragment{0}{|\dT|}$, and a threshold $k\in \mathbb{R}_{\ge 0}$,
    a \emph{$(\Delta \to \nabla, k)$-fern}%
    \footnote{In the unweighted setting, a similar structure called \emph{seaweed}
        fulfills a similar purpose when paired with its corresponding \emph{seaweed product}.
        Hence, we chose the name \emph{fern}, as common knowledge tells us that (liquid)
        seaweed (extract) is a good fertilizer to cultivate many species of ferns.}
    of strings $\dP$ and $\dT$ with respect to \(w\)
    is a matrix that is $k$-equivalent to $\dmat{\dP}{\dT}{w}{\Delta\to\nabla}$.
    We write \(\fmats{\dP}{\dT}{w}{\Delta\to\nabla}{k}\) for the set of all
    \((\Delta\to\nabla,k)\)-ferns of \(\dP\) and \(\dT\) with respect to \(w\).

    Whenever \(\Delta = \nabla\), we simplify our notation and write \(\Delta\) instead of
    \(\Delta\to\nabla\).
\end{definition}
\begin{remark}
    Ferns are not uniquely defined;
    that is,
    typically we have \(|\fmats{\dP}{\dT}{w}{\Delta\to\nabla}{k}| > 1\).
    This is intentional to allow for
    efficient algorithms that compute and multiply ferns.
\end{remark}

\begin{lemma}
    \dglabel{lem:trim}[remark:submatrix_smallcore,remark:submatrix_bd]
    Consider a puzzle piece \((\dP, \dT)\), integers $\Delta', \Delta, \nabla', \nabla \in
    \fragment{0}{|\dT|}$ with $\Delta' \leq \Delta$ and $\nabla' \leq \nabla$, and
    thresholds $k', k\in \mathbb{R}_{\ge 0}$ with $k'\leq k$.
    For any $F \in \fmats{\dP}{\dT}{w}{\Delta\to\nabla}{k}$,
    we have
    \[
        F' \coloneqq F\position{\fragment{0}{\Delta'},\fragment{\nabla-\nabla'}{\nabla}} \in
          \fmats{\dP}{\dT}{w}{\Delta'\to\nabla'}{k'}.
    \]

    Further the core of $F'$ is at most $\delta(F)$ and, if \(F\) is integer-valued and
    \(\beta\)-bounded difference, then so is \(F'\).
\end{lemma}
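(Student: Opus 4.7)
The plan is to verify the three assertions about $F'$ one by one, each of which reduces to a simple check combined with one of the cited submatrix facts.

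First I would establish the key submatrix identity
\[
    \dmat{\dP}{\dT}{w}{\Delta'\to\nabla'} = \dmat{\dP}{\dT}{w}{\Delta\to\nabla}\bigl[\fragment{0}{\Delta'},\fragment{\nabla-\nabla'}{\nabla}\bigr].
\]
To see this, I would read off \cref{tprplazkfz} on both sides: the row index $i\in\fragment{0}{\Delta'}$ corresponds on both sides to the top boundary vertex $(0,i)$, while a column index $j\in\fragment{0}{\nabla'}$ on the left-hand side corresponds to $(|\dP|,|\dT|-\nabla'+j)$, which equals the vertex $(|\dP|,|\dT|-\nabla+j'')$ associated with column $j''=\nabla-\nabla'+j$ on the right-hand side. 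So the two matrices index the same distances in $\oAGw(\dP,\dT)$ entry by entry.

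Next I would use this to show $F'\in \fmats{\dP}{\dT}{w}{\Delta'\to\nabla'}{k'}$. Since $F$ is $k$-equivalent to $\dmat{\dP}{\dT}{w}{\Delta\to\nabla}$ by the definition of a fern, the entrywise relation $\meq{k}$ is inherited by the submatrix $F'$ against the corresponding submatrix $\dmat{\dP}{\dT}{w}{\Delta\to\nabla}\bigl[\fragment{0}{\Delta'},\fragment{\nabla-\nabla'}{\nabla}\bigr] = \dmat{\dP}{\dT}{w}{\Delta'\to\nabla'}$. Finally I would observe from \cref{7-16-5} that $a \meq{k} b$ and $k'\le k$ together imply $a\meq{k'} b$: either $a=b$ (immediate), or $\min\{a,b\}\ge k\ge k'$. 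Hence $F'$ is $k'$-equivalent to $\dmat{\dP}{\dT}{w}{\Delta'\to\nabla'}$ and is therefore a $(\Delta'\to\nabla',k')$-fern.

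The two remaining claims are direct invocations of previously established facts applied to the contiguous submatrix $F'$ of $F$: \cref{remark:submatrix_smallcore} gives $\delta(F')\le \delta(F)$, while \cref{remark:submatrix_bd} shows that if $F$ is integer-valued and $\beta$-bounded-difference, then so is $F'$. There is no substantive obstacle here; the only thing to be careful about is correctly lining up the column index shift by $\nabla-\nabla'$ in the definition of $\dmat{\dP}{\dT}{w}{\Delta\to\nabla}$, and verifying that $k$-equivalence is monotone in $k$.
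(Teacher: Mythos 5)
Your proposal is correct and follows essentially the same approach as the paper's proof: both align the column index shift $\nabla-\nabla'$ to match the submatrix of $F$ with $\dmat{\dP}{\dT}{w}{\Delta'\to\nabla'}$ entrywise, then invoke the two cited submatrix facts for the core-size and bounded-difference claims. Your proof is slightly more explicit in factoring out the submatrix identity as a standalone step and in spelling out why $\meq{k}$ implies $\meq{k'}$ when $k'\le k$ (which the paper states as ``and hence $k'$-equivalent'' without elaboration), but the substance is identical.
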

\begin{proof}
    For all $i \in \fragment{0}{\Delta'}$ and $j \in \fragment{0}{\nabla'}$, we have
    \begin{align*}
        F\position{i, \nabla - \nabla' + j} & \meq{k} \dist_G((0,i),
        (|\dP|,|\dT|-\nabla + \nabla - \nabla' + j))\\
                                            & = \dist_G((0,i),
                                            (|\dP|,|\dT|- \nabla' + j)).
    \end{align*}
    This implies that $F'$ is $k$-equivalent (and hence $k'$-equivalent) to $\dmat{\dP}{\dT}{w}{\Delta'\to\nabla'}$.

    The claimed upper bounds on the core and the difference of $F'$ follow instantly from
    \cref{remark:submatrix_smallcore} and \cref{remark:submatrix_bd}, respectively.
\end{proof}

For a puzzle piece \((\dot{P},\dot{T})\), we define its \emph{shift} as $\shift(\dot{P},\dot{T})
\coloneqq |\dot{T}|-|\dot{P}|$ and its \emph{torsion} as $\tor(\dot{P},\dot{T})
\coloneqq |\shift(\dot{P},\dot{T})|$.
The torsion of a sequence of puzzle pieces is the sum of the torsions of the individual
pieces.

We ultimately intend to fit together multiple puzzle pieces into longer strings, so-called \(\Delta\)-puzzles.
We formalize related concepts next.

\begin{definition}
    \dglabel{7-16-6}
    For an integer $\Delta>0$, we say that strings $S, S'\in \Sigma^{\ge \Delta}$ fit
    (with respect to \(\Delta\))
    if they have an overlap of size $\Delta$, that is,
    if
    \[
        S\fragmentco{|S|-\Delta}{|S|}=S'\fragmentco{0}{\Delta}.
    \]
    In this case, we denote their \emph{$\Delta$-concatenation} as
    \[
         S\odot_{\Delta} S' \coloneqq  S\odot S'\fragmentco{\Delta}{|S'|} = S\fragmentco{0}{|S|-\Delta} \odot S'.
    \]

    Two puzzle pieces \((\dot{P},\dot{T})\) and \((\ddot{P},\ddot{T})\)
    \emph{fit} (with respect to \(\Delta\)),
    denoted by \((\dot{P},\dot{T}) \mapsto_{\Delta} (\ddot{P},\ddot{T})\),
    if \[
        \dot{T}\fragmentco{|\dot{T}|-\Delta}{|\dot{T}|} = \ddot{T}\fragmentco{0}{\Delta}
        \quad\text{and}\quad
        \dot{P}\fragmentco{|\dot{P}|-\Delta}{|\dot{P}|} = \ddot{P}\fragmentco{0}{\Delta}.
    \]
    For two fitting puzzle pieces \((\dot{P},\dot{T}) \mapsto_{\Delta} (\ddot{P},\ddot{T})\),
    we define the \emph{stitched piece} as
    \((\dot{P}\odot_{\Delta} \ddot{P},\dot{T}\odot_{\Delta}\ddot{T})\).
\end{definition}
\begin{remark}
    For a sequence of puzzle pieces \(\P \coloneqq (\dot{P}_1, \dot{T}_1), \dots,
    (\dot{P}_z, \dot{T}_z)\) where consecutive pieces \(\Delta\)-fit, we also say that the
    sequences \(\dot{P}_1, \dots, \dot{P}_z\) and \(\dot{T}_1, \dots, \dot{T}_z\) form
    \(\Delta\)-puzzles; we write
    \(P \coloneqq \val_\Delta(\dot{P}_1, \dots, \dot{P}_z)\) and
    \(T \coloneqq \val_\Delta(\dot{T}_1, \dots, \dot{T}_z)\) for the resulting stitched
    strings.
    For simplicity, we also say that the sequence \(\P\) has the value \((P, T)\).
\end{remark}

For a puzzle piece \((\dot{P},\dot{T})\) and a fixed \(\Delta\), the first
\(\Delta\) and the last \(\Delta\) characters of either string may end up overlapping
corresponding parts of a neighboring puzzle piece.
Hence, when dealing with the alignment graphs of pieces (and when computing shortest
paths therein), we have to ensure that we avoid duplicate vertices along the paths
that we consider: if we were to naively compute top-to-bottom distances for each
piece, we would be unable to combine such distances for consecutive pieces.
This motivates the following definition.

\begin{definition}
    \dglabel{7-16-7}
    For a puzzle piece $(\dP,\dT)$ and an integer \(\Delta\), besides
    $\dmat{\dP}{\dT}{w}{\Delta}$, we define
    \begin{itemize}
        \item the \emph{internal} distance matrix $\dmati{\dP}{\dT}{w}{\Delta} \coloneqq
            \dmat{\dP\fragmentco{\floor{\Delta/2}}{|\dP|-\ceil{\Delta/2}}}{\dT}{w}{\Delta}$;
        \item the \emph{leading} distance matrix $\dmatl{\dP}{\dT}{w}{\Delta} \coloneqq
            \dmat{\dP\fragmentco{0}{|\dP|-\ceil{\Delta/2}}}{\dT}{w}{\Delta}$;
        \item the \emph{trailing} distance matrix $\dmatt{\dP}{\dT}{w}{\Delta} \coloneqq
            \dmat{\dP\fragmentco{\floor{\Delta/2}}{|\dP|}}{\dT}{w}{\Delta}$.
    \end{itemize}
    For a threshold $k\in \mathbb{R}_{\ge 0}$,
    \begin{itemize}
        \item an internal \((\Delta,k)\)-fern is a matrix \(k\)-equivalent to
            \(\dmati{\dP}{\dT}{w}{\Delta}\); we use \(\fmatsi{\dP}{\dT}{w}{\Delta}{k}\)
            for the set of all such ferns.
        \item a leading \((\Delta,k)\)-fern is a matrix \(k\)-equivalent to
            \(\dmatl{\dP}{\dT}{w}{\Delta}\); we use \(\fmatsl{\dP}{\dT}{w}{\Delta}{k}\)
            for the set of all such ferns.
        \item a trailing \((\Delta,k)\)-fern is a matrix \(k\)-equivalent to
            \(\dmatt{\dP}{\dT}{w}{\Delta}\); we use \(\fmatst{\dP}{\dT}{w}{\Delta}{k}\)
            for the set of all such ferns.
            \qedhere
    \end{itemize}
\end{definition}
\begin{remark}
    For our purposes, it suffices to define internal, leading, and trailing distance
    matrices only for \(\Delta = \nabla\), allowing for a slightly simplified definition
    compared to \cref{tprplazkfz,def:fern}.
\end{remark}

We conclude with discussing the main property of ferns: they decompose under the
\((\min,+)\)-product.
We argue via the corresponding distance matrices.

\begin{theorem}
    \dglabel{cor:dproduct}[obs:band]
    Consider a sequence of puzzle pieces \(\P \coloneqq (\dot{P}_0, \dot{T}_0), \dots,
    (\dot{P}_{z-1}, \dot{T}_{z-1})\) that forms a pair of \(\Delta\)-puzzles of size $z \ge 2$ with values
    \(P \coloneqq \dP_0\odot_\Delta \cdots \odot_\Delta \dP_{z-1}\) and
    \(T \coloneqq \dT_0 \odot_\Delta \cdots \odot_\Delta \dT_{z-1}\).

    For every integer $k \le \floor{\Delta/2} - \tor(\P)$, we have
    \[
        \dmat{P}{T}{w}{\Delta} \meq{k} \dmatl{\dP_0}{\dT_0}{w}{\Delta}\oplus
        \Big(\bigoplus_{s=1}^{z-2} \dmati{\dP_s}{\dT_s}{w}{\Delta}\Big) \oplus
        \dmatt{\dP_{z-1}}{\dT_{z-1}}{w}{\Delta}.
    \]
\end{theorem}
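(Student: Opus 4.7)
The plan is to establish the $k$-equivalence by proving $C[i,j]\ge \dmat{P}{T}{w}{\Delta}[i,j]$ entrywise and unconditionally, and $C[i,j]\le \dmat{P}{T}{w}{\Delta}[i,j]$ whenever the latter is at most $k$, where $C$ denotes the right-hand side of the claimed identity. Let $S_s\coloneqq \sum_{u=0}^{s}\shift(\dot P_u,\dot T_u)$ and define cumulative cut positions $\pi^P_0\coloneqq |\dot P_0|-\lceil\Delta/2\rceil$, $\pi^P_s\coloneqq \pi^P_{s-1}+|\dot P_s|-\Delta$ for $s\in\fragment{1}{z-2}$, and $\tau^T_{s+1}\coloneqq \sum_{u=0}^{s}(|\dot T_u|-\Delta)$ for $s\in\fragment{0}{z-2}$; a direct computation yields the identity $\tau^T_{s+1}-\pi^P_s=S_s-\lfloor\Delta/2\rfloor$. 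By the $\Delta$-puzzle structure, the leading, internal, and trailing subgraphs of the individual pieces are canonically isomorphic (with identical edge weights) to the subgraphs of $\oAGw(P,T)$ induced by the $P$-strip $\fragment{\pi^P_{s-1}}{\pi^P_s}$ and the $T$-strip $\fragment{\tau^T_s}{\tau^T_s+|\dot T_s|}$, with input/output vertices $(\pi^P_{s-1},\tau^T_s+j_{s-1})$ and $(\pi^P_s,\tau^T_{s+1}+j_s)$ for $j_{s-1},j_s\in\fragment{0}{\Delta}$.

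For the $\ge$ direction, for every choice of intermediate indices $j_0,\dots,j_{z-2}\in\fragment{0}{\Delta}$, concatenating the $w$-shortest paths that realize the individual matrix entries produces a monotone path in $\oAGw(P,T)$ from $(0,i)$ to $(|P|,|T|-\Delta+j)$ (monotonicity by \cref{lem:oagw}) whose cost equals the sum of piece distances; hence $C[i,j]\ge \dmat{P}{T}{w}{\Delta}[i,j]$.

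For the $\le$ direction assuming $\dmat{P}{T}{w}{\Delta}[i,j]\le k$, take a monotone $w$-optimal path $\pi$ of cost at most $k$ from $(0,i)$ to $(|P|,|T|-\Delta+j)$ and, for each $s\in\fragment{0}{z-2}$, let $(\pi^P_s,\tau_s)$ be the first vertex of $\pi$ at $P$-column $\pi^P_s$. The key claim is $\tau_s\in\fragment{\tau^T_{s+1}}{\tau^T_{s+1}+\Delta}$; granted this, setting $j_s\coloneqq \tau_s-\tau^T_{s+1}\in\fragment{0}{\Delta}$ and splitting $\pi$ at the portal vertices $(\pi^P_s,\tau_s)$ exhibits a choice of intermediate indices witnessing $C[i,j]\le $ cost of $\pi=\dmat{P}{T}{w}{\Delta}[i,j]$. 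To prove the claim, split $\pi$ at $(\pi^P_s,\tau_s)$ into a prefix of cost $\ell_s$ and a suffix of cost $r_s$ with $\ell_s+r_s\le k$ (each non-diagonal edge costs at least $1$ by normalization of $w$, cf.\ \cref{obs:band}). Tracking the counts $h_s,v_s$ of horizontal and vertical edges in the prefix gives $\tau_s-\pi^P_s=i+h_s-v_s\ge -\ell_s$ (using $i\ge 0$, $h_s,v_s\ge 0$, $h_s+v_s\le \ell_s$), while the analogous analysis of the suffix yields $\tau_s-\pi^P_s=S_{z-1}-\Delta+j+v'_s-h'_s\le S_{z-1}+r_s$ (using $j\le \Delta$ and $h'_s+v'_s\le r_s$). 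Subtracting $S_s$ and invoking $|S_s|,|S_{z-1}-S_s|\le \tor(\P)$ together with $k+\tor(\P)\le \lfloor\Delta/2\rfloor$ gives $\tau_s-\pi^P_s-S_s\in\fragment{-\lfloor\Delta/2\rfloor}{\lfloor\Delta/2\rfloor}\subseteq \fragment{-\lfloor\Delta/2\rfloor}{\lceil\Delta/2\rceil}$, which via $\tau^T_{s+1}-\pi^P_s=S_s-\lfloor\Delta/2\rfloor$ translates exactly to the claim.

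The hard part is the diagonal-band argument underlying the key claim: the hypothesis $k+\tor(\P)\le \lfloor\Delta/2\rfloor$ must be used to simultaneously absorb the cumulative shift $S_s$ on the prefix side and the complementary shift $S_{z-1}-S_s$ on the suffix side, exhausting the $\lfloor\Delta/2\rfloor$ slack available on each side of the cut. Once the claim is established, $k$-equivalence is immediate: entries with $\dmat{P}{T}{w}{\Delta}[i,j]\le k$ are equal by combining both directions, while entries with $\dmat{P}{T}{w}{\Delta}[i,j]>k$ automatically satisfy $C[i,j]\ge \dmat{P}{T}{w}{\Delta}[i,j]>k$ by the $\ge$ direction alone.
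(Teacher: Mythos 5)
Your proposal is correct and follows essentially the same approach as the paper's proof: it cuts the global alignment graph at the separator columns $\pi^P_s$, proves the $\ge$ inequality by concatenating per-piece shortest paths, and proves the $\le$ inequality (when the LHS entry is at most $k$) by locating where a monotone shortest path crosses each separator column and showing, via a diagonal-band argument driven by the hypothesis $k + \tor(\P) \le \lfloor \Delta/2 \rfloor$, that the crossing point lands in the $\Delta+1$ portal vertices. Your cumulative-shift identity $\tau^T_{s+1} - \pi^P_s = S_s - \lfloor \Delta/2 \rfloor$ is exactly the content of the paper's Claim~\ref{xtagmnrjyl} (stated there as $k \le p_s - t_s$ and $k \le (|P|-p_s)-(|T|-t_s)+\Delta$), and the explicit bookkeeping of horizontal/vertical edge counts in the prefix and suffix reproves Lemma~\ref{obs:band} inline rather than invoking it; these differences are cosmetic. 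One small point to make fully airtight in a write-up: in the augmented alignment graph $\oAGw(P,T)$ a shortest path is monotone in each coordinate but may be \emph{decreasing} in the second coordinate (using backward edges) when $i > |T| - \Delta + j$; your edge-count identities still hold with signs flipped and yield the same bounds, but the derivation you wrote tacitly assumes the forward regime, which is where appealing directly to \cref{obs:band}, as the paper does, spares some case analysis.
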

\begin{proof}
    It is instructive to unfold the definitions of the leading, internal, and trailing
    distance matrices; for notational simplicity, we do so by defining a ``cut'' version
    of the sequence \((\dP_i)\) that removes the overlaps of consecutive strings.
    Formally, we set
    \begin{align*}
        P_0 &\coloneqq
        \dP_0\fragmentco{0}{|\dP_0|-\ceil{\Delta/2}},\\
        P_s &\coloneqq
        \dP_s\fragmentco{\floor{\Delta/2}}{|\dP_s|-\ceil{\Delta/2}} \quad\text{for every
        \(s \in \fragmentoo{0}{z-1}\), and}\\
        P_{z-1} &\coloneqq
        \dP_{z-1}\fragmentco{\floor{\Delta/2}}{|\dP_{z-1}|}.
    \end{align*}

    Now, we start by analyzing how the torsion of \(\P\) evolves in the span of the
    sequence.
    To that end, define sequences $(p_s)_{s=0}^z$ with $p_s \coloneqq |P_0|+\cdots
    +|P_{s-1}|=|P_0\odot \cdots \odot P_{s-1}|$
    and $(t_s)_{s=0}^z$ with $t_s \coloneqq |\dT_0|+\cdots +|\dT_{s-1}|-s\cdot \Delta$, which
    equals $|\dT_0 \odot_\Delta \cdots \odot_\Delta \dT_{s-1}|-\Delta$ for $s\in
    \fragmentoc{0}{z}$.

    \begin{claim}
        \label{xtagmnrjyl}
        For each $s \in \fragmentoo{0}{z}$, we have
        \[
            k \le \min\{p_s-t_s, (|P|-p_s)-(|T|-t_s)+\Delta\}.
        \]
    \end{claim}
    \begin{claimproof}
        Observe that, indeed, we have
        \[
            P= \dP_0\odot_\Delta \cdots \odot_\Delta \dP_{z-1}=P_0\odot \cdots \odot P_{z-1}.
        \]
        In particular, for every $s \in \fragmentoo{0}{z}$,
        we have
        \[
            p_s = |\dP_0\odot\cdots \odot \dP_{s-1}|-\ceil{\Delta/2}
            \quad \text{and} \quad
            |P| - p_s
            = |\dP_s\odot\cdots \odot \dP_{z-1}|-\floor{\Delta/2}.
        \]
        Consequently, we obtain
        \begin{align*}
            p_s - t_s + \Delta
            &= |\dP_0\odot\cdots \odot \dP_{s-1}|-|\dT_0\odot_\Delta \cdots \odot_\Delta
            \dT_{s-1}|+\floor{\Delta/2} \\
            &= \floor{\Delta/2} - \sum_{t=0}^{s-1}\left(|\dT_t|-|\dP_t|\right) \ge
            \floor{\Delta/2} - \tor(\P) \ge k
        \end{align*}
        and
        \begin{align*}
             (|P|-p_s)-(|T|-t_s)
            + \Delta
            &= |\dP_s\odot\cdots \odot \dP_{z-1}| - |\dT_s\odot_\Delta \cdots \odot_\Delta
            \dT_{z-1}|+\ceil{\Delta/2} \\
            &= \ceil{\Delta/2} - \sum_{t=s}^{z-1}\left(|\dT_t|-|\dP_t|\right) \ge
            \floor{\Delta/2} - \tor(\P) \ge k;
        \end{align*}
        which completes the proof of the claim.
    \end{claimproof}

    Next, consider the alignment graph $G \coloneqq \oAGw(P,T)$ and observe that, for each
    $s\in \fragmentco{0}{z}$, the subgraph induced by $\fragment{p_s}{p_{s+1}}\times
    \fragment{t_s}{t_{s+1}+\Delta}$ is isomorphic to $G_s \coloneqq \oAGw(P_s,\dT_s)$.

    \begin{claim}
        \label{mstxosxeuz}
        For any sequence $(i_s)_{s=0}^z$ with $i_s\in \fragment{0}{\Delta}$ for $s\in
        \fragment{0}{z}$,
        we have
        $\dmat{P}{T}{w}{\Delta}\position{i_0,i_z} \le \sum_{s=0}^{z-1}
        \dmat{P_s}{\dT_s}{w}{\Delta}\position{i_s,i_{s+1}}$.
    \end{claim}
    \begin{claimproof}
        By triangle inequality, we have
        \begin{align*}
            \dmat{P}{T}{w}{\Delta}\position{i_0,i_z}
            &= \dist_{G}((0,i_0),(|P|,|T|-\Delta+i_z))
            = \dist_G((p_0,t_0+i_0),(p_z,t_z+i_z)) \\
            &\le \sum_{s=0}^{z-1} \dist_G((p_s,t_s+i_s),(p_{s+1},t_{s+1}+i_{s+1})).
        \end{align*}
        Moreover, the aforementioned embedding of $G_s$ into $G$ yields
        \begin{align*}
            &\dist_G((p_s,t_s+i_s),(p_{s+1},t_{s+1}+i_{s+1}))\\
            &\quad\le\dist_{G_s}((0,i_s),(p_{s+1}-p_s,t_{s+1}-t_s+i_{s+1}))
            = \dmat{P_s}{\dT_s}{w}{\Delta}\position{i_s,i_{s+1}}.
        \end{align*}
        Combining the inequalities yields the claim.
    \end{claimproof}

    As the sequence \((i_s)\) is arbitrary, \cref{mstxosxeuz} implies
    \[
        \dmat{P}{T}{w}{\Delta}\position{i_0,i_z} \le
        \Big(\bigoplus_{s=0}^{z-1}
        \dmat{P_s}{\dT_s}{w}{\Delta}\Big)\position{i_0,i_z};
    \] that is, the natural statement that forcing a path to contain certain vertices may
    only increase its length.
    We conclude with a proof that we may indeed describe also a \emph{shortest} path of
    length at most \(k\) in a similar manner.

    \begin{claim}
        For any \(i_0, i_z \in \fragment{0}{\Delta}\),
        if
        $\dmat{P}{T}{w}{\Delta}\position{i_0,i_z} \le k$, then we also have
        \[
            \Big(\bigoplus_{s=0}^{z-1}
            \dmat{P_s}{\dT_s}{w}{\Delta}\Big)\position{i_0,i_z} \le
            \dmat{P}{T}{w}{\Delta}\position{i_0,i_z}.
        \]
    \end{claim}
    \begin{claimproof}
        Consider a shortest path $\A$ from $(0,i_0)=(p_0,t_0+i_0)$ to
        $(|P|,|T|-\Delta+i_z)=(p_z,t_z+i_z)$ and assume that its length does not exceed
        $k$.
        For each $s\in \fragmentoo{0}{z}$, let $(p_s,\talt_s)\in \A$ be an arbitrary
        vertex on~$\A$ whose first coordinate is equal to $p_s$.

        By \cref{obs:band}, we have $\talt_s \in \fragment{p_s-k}{p_s+|T|-|P|+k}$.
        In particular, we have
        \begin{align*}
            \talt_s &\ge p_s - k \ge p_s - (p_s-t_s) = t_s\quad \text{and}\\
            \talt_s &\le p_s+|T|-|P|+k \le p_s+|T|-|P|+
            (|P|-p_s)-(|T|-t_s)+\Delta=t_s+\Delta.
        \end{align*}
        Hence, \cref{xtagmnrjyl} yields
        $\talt_s = t_s + i_s$ for some $i_s \in \fragment{0}{\Delta}$.
        By \cref{lem:oagw}, we have
        \begin{align*}
            &\dist_G((p_s,t_s+i_s),(p_{s+1},t_{s+1}+i_{s+1}))\\
            &\quad = \dist_{G_s}((0,i_s),(p_{s+1}-p_s,t_{s+1}-t_s+i_{s+1}))\\
            &\quad =\dmat{P_s}{\dT_s}{w}{\Delta}\position{i_s,i_{s+1}}.
        \end{align*}
        Thus, $\dmat{P}{T}{w}{\Delta}\position{i_0,i_z}\le \sum_{s=0}^{z-1}
        \dmat{P_s}{\dT_s}{w}{\Delta}\position{i_s,i_{s+1}}$ holds for some
        $i_1,\ldots,i_{z-1}\in \fragment{0}{\Delta}$, which in turn implies the claim.
    \end{claimproof}

    In total, we obtain
    $\dmat{P}{T}{w}{\Delta}\meq{k} \bigoplus_{s=0}^{z-1} \dmat{P_s}{\dT_s}{w}{\Delta}$
    and thus the claim.
\end{proof}

\begin{remark}
    Since $\meq{k}$ is a congruence relation, in the context of \cref{cor:dproduct}, a
    $(\Delta,k)$-fern of $(P,T)$ can be obtained as the $(\min,+)$-product of a leading
    $(\Delta,k)$-fern of $(\dP_0,\dT_0)$, internal $(\Delta,k)$-ferns of $(\dP_s,\dT_s)$ for
    $s\in \fragmentco{1}{z-1}$, and a trailing $(\Delta,k)$-fern of $(\dP_{z-1},\dT_{z-1}$).
\end{remark}

\begin{remark}
    We may intuitively understand the proof of \cref{cor:dproduct} as follows: a shortest
    path of length at most \(k\) may use at most \(k\) horizontal or vertical edges of the
    alignment graph. Thus, all shortest path that we care about are contained in a
    diagonal band that allows for a ``safety margin'' of at least \(k\) to the left and
    to the right.

    Now, when cutting the original fern into small ferns, we just have to ensure that
    the inputs and outputs of every small fern fully contains the safety margin---for this,
    we need the bound on the torsion of the sequence of puzzle pieces.
\end{remark}

\subsection{Algorithms for Multiplying Matrices}

\begin{definition}
    \dglabel{def:fernprop}
    We say that $P$ is a \emph{matrix property} if the following holds for every dimension $q\in \Zp$:
    There is a family $\Prp{q}\subseteq \Rp^{q\times q}$ of $q\times q$ matrices that \emph{satisfy $P$},
    and this family is compatible with the  $(\min,+)$-product,
    that is, for every two matrices $A,B\in \Prp{q}$, we have \( A \oplus B\in \Prp{q}\).

    We say that $P$ admits \emph{\(T_M^P(q)\)-time matrix multiplication} if there is an algorithm
    that, given two matrices $A,B\in \Prp{q}$, in time \(T_M^P(q)\)
    computes $A \oplus B$.

    We say that $P$ admits \emph{\(T_V^P(q)\)-time vector multiplication} if there is an algorithm
    that, given a matrix \(A\in \Prp{q}\) and a vector \(v\in \Rp^{q\times 1}\),
    in time \(T_V^P(q)\) computes a vector $A \oplus v \in \Rp^{q\times 1}$.
\end{definition}
\begin{remark}
    In this work, we typically study the matrix properties of being Monge or bounded
    difference; they are matrix
    properties due to \cite[Theorem~2]{Tis15} and \cref{lem:bdproduct}.
\end{remark}

\begin{lemma}
    \dglabel{uogiuvnbph}[fct:monge-product-is-monge]
    The Monge property is a matrix property
    that admits matrix multiplication in $\Oh(q^2)$ time and vector multiplication in $\Oh(q)$ time.
\end{lemma}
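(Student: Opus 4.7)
The plan is to unpack the three requirements of \cref{def:fernprop} and dispatch each of them by invoking results already stated in this section or cited from the literature.

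First, I would verify that the Monge property is indeed a matrix property. For each $q \in \Zp$, let $\Prp{q}$ be the set of $q \times q$ Monge matrices with non-negative real entries. Closure under $(\min,+)$-product is exactly the content of \cref{fct:monge-product-is-monge}: for $A, B \in \Prp{q}$, the product $A \oplus B$ is again Monge (and clearly stays in $\Rp^{q \times q}$ since the input entries are non-negative). This establishes the first part.

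Second, for matrix multiplication in $\Oh(q^2)$ time, I would directly invoke the SMAWK algorithm (\cref{thm:smawk}) with $p = q = r$. Since we have $\Oh(1)$-time random access to every entry of the inputs (both matrices are given explicitly), the total running time is $\Oh(p \cdot r \cdot (1 + \log(q/\max(p,r)))) = \Oh(q^2 \cdot (1 + \log 1)) = \Oh(q^2)$, as required.

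Third, for vector multiplication in $\Oh(q)$ time, I would reinterpret the input vector $v \in \Rp^{q \times 1}$ as a $q \times 1$ matrix. Such a matrix is trivially Monge: the Monge inequality is quantified over indices $0 \le j \le j' < r - 1 = 0$, an empty range, so it holds vacuously. Applying \cref{thm:smawk} with $p = q$, $r = 1$ yields running time $\Oh(q \cdot 1 \cdot (1 + \log(q/\max(q,1)))) = \Oh(q)$, giving the claimed bound.

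There is no real obstacle here; the only subtlety worth a sentence is the observation that one-column matrices are automatically Monge, which lets the vector case reuse SMAWK without any additional machinery.
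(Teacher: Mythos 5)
Your proof is correct and follows the same approach as the paper's one-line proof: invoke SMAWK (\cref{thm:smawk}) for both products and \cref{fct:monge-product-is-monge} for closure. You simply spell out the parameter substitutions ($p=q=r$ for matrices, $r=1$ for vectors) and the vacuous Monge check for one-column matrices, which the paper leaves implicit.
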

\begin{proof}
    We use \cite{SMAWK} (see \cref{thm:smawk}); the $(\min,+)$-product of Monge
    matrices is Monge again by \cite[Theorem~2]{Tis15}.
\end{proof}
\begin{remark}
    \dglabel{qvhuraqrwd}[lem:small_ed,lem:MSSPverify]
    Observe that we may use \cite{MSSP}  (see \cref{lem:MSSP}) to compute Monge ferns for
    a pair of strings (see \cref{lem:small_ed} which follows \cref{lem:MSSPverify}).
    Unfortunately, this is fast enough only in very specific scenarios.
    Hence, we later devise more sophisticated algorithms to efficiently construct Monge ferns.
\end{remark}

\begin{lemma}
    \dglabel{azjqbhozka}[lem:bdtocore,lm:matrix_mult_k,fct:monge-product-is-monge]
    For every threshold $\beta \in \Zp$, the \(\beta\)-bounded difference (integer) Monge matrices
    define a matrix property. Represented as core-based  matrix oracles, they admit
    \(\Oh(\beta q \log q)\)-time matrix multiplication and
    \(\Oh(q \log q)\)-time vector multiplication.
\end{lemma}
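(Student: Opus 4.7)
The plan is to derive the statement by composing several facts already established in the toolbox, so that the work reduces to bookkeeping and an application of SMAWK.

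First I would verify that the class of integer $\beta$-bounded-difference Monge matrices really is a matrix property in the sense of \cref{def:fernprop}. Closure under the $(\min,+)$-product follows by combining two earlier facts: \cref{fct:monge-product-is-monge} guarantees that the $(\min,+)$-product of two Monge matrices is Monge, while \cref{lem:bdproduct} guarantees that the $(\min,+)$-product of two $\beta$-bounded-difference integer matrices is again $\beta$-bounded-difference. Together, the intersection class $\Prp{q}$ of $q\times q$ matrices that are simultaneously Monge and $\beta$-bounded-difference is closed under $\oplus$, as required.

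For matrix multiplication, suppose $A,B\in\Prp{q}$ are presented as core-based matrix oracles $\mds(A)$ and $\mds(B)$. By \cref{lem:bdtocore}, the $\beta$-bounded-difference property together with the Monge property forces $\delta(A),\delta(B)\le 2\beta(q-1)=\Oh(\beta q)$. Applying \cref{lm:matrix_mult_w} to $\mds(A)$ and $\mds(B)$ (with $N=q$) therefore constructs $\mds(A\oplus B)$ in time $\Oh\!\left((q+\delta(A)+\delta(B))\log q\right)=\Oh(\beta q\log q)$, which gives $T_M^P(q)=\Oh(\beta q\log q)$ and, by the closure argument above, the output is once more a core-based oracle for a matrix in $\Prp{q}$.

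For vector multiplication, I would treat the input vector $v\in\Rp^{q\times 1}$ as a $q\times 1$ matrix, which is vacuously Monge, and invoke the SMAWK algorithm of \cref{thm:smawk} to compute $A\oplus v$. With $p=q$, $r=1$, SMAWK performs $\Oh(p\cdot r\cdot(1+\log(q/\max(p,r))))=\Oh(q)$ queries to entries of $A$ and $v$. Each access to $v$ is $\Oh(1)$, while each access to $A$ costs $\Oh(\log q)$ through the core-based oracle (cf.\ \cref{lm:build_mtx_ds}). The total running time is thus $\Oh(q\log q)$, giving $T_V^P(q)=\Oh(q\log q)$.

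There is no real obstacle here; each ingredient is already packaged in \cref{sec:monge_toolbox}. The only subtlety worth noting is that the vector bound is independent of $\beta$ because the bounded-difference hypothesis is used only to keep the \emph{core} of $A$ small (which helps random access), but SMAWK queries the entries of $A$ as a black box, so the dependence on $\beta$ enters only through the $\Oh(\log q)$ oracle cost and is absorbed.
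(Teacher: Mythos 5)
Your proof is correct and follows essentially the same route as the paper: closure under $\oplus$ from \cref{lem:bdproduct} and Tiskin's Theorem 2 (\cref{fct:monge-product-is-monge}), the core bound $\delta(A),\delta(B)=\Oh(\beta q)$ from \cref{lem:bdtocore} driving the matrix-multiplication time, and SMAWK with $\Oh(\log q)$-time oracle access for the $\Oh(q\log q)$ vector multiplication.

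The one divergence is which multiplication routine from the toolbox you invoke: you use \cref{lm:matrix_mult_w} (the exact product, $\mds(A\oplus B)$ in $\Oh((N+\delta(A)+\delta(B))\log N)$ time), whereas the paper's proof cites \cref{lm:matrix_mult_k}. Since \cref{lm:matrix_mult_k} only returns a $k$-equivalent approximation $C'\meq{k}A\oplus B$ with $\delta(C')=\Oh(N\sqrt{k})$, it does not literally meet the definition of matrix multiplication in \cref{def:fernprop} (which requires computing $A\oplus B$ itself), nor would the output be guaranteed $\beta$-bounded-difference via \cref{lem:bdproduct}. Both lemmas have the same running-time bound, so the complexity claim is unaffected either way, but your choice of \cref{lm:matrix_mult_w} is the one that actually matches what the statement and the property definition require; the paper's reference appears to be the adjacent lemma by typo. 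Your observation that the $\beta$-dependence enters only through keeping the core (and hence the oracle construction) small, while the SMAWK query count is $\beta$-free, is also a correct reading of why the vector bound is $\Oh(q\log q)$ independent of $\beta$.
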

\begin{proof}
    By \cref{lem:bdtocore}, our input matrices have a core of \(\Oh(\beta q)\); hence,
    \cref{lm:matrix_mult_k} allows us to multiply them in time \(\Oh(\beta q \log
    q)\).
    Finally, the resulting matrix is again \(\beta\)-bounded difference by
    \cref{lem:bdproduct} and Monge by \cite[Theorem~2]{Tis15}.

    For the vector multiplication, we proceed as for general weights; however, we
    afterward list the resulting vector explicitly using \(\Oh(q)\) queries to the
    core-based matrix oracle.
\end{proof}

Finally, it is instructive to restate (a simplified) \cite[Theorem~5.10]{gk24} in our new terminology.

\begin{lemma}[{\cite[Theorem~5.10, simplified]{gk24}}]
    \dglabel{lem:etdyn}[lm:ds_tools,lm:build_mtx_ds,lem:bdtocore]
    There is a data structure that supports all of the following operations.
    \begin{description}
        \item[Initialization.]
            Given non-empty strings \(X, Y \in \Sigma^+\) of total length \(n\) and oracle access to an integer weight
            function \(w : \sqEsigma \to \fragment{0}{W}\), initialize the data structure
            in time \(\Oh(Wn^2 \log n)\).
        \item[Fern retrieval.]
            Given integer \(\Delta\) and \(k\) return a \((W+1)\)-bounded difference
            Monge \((\Delta,k)\)-fern (represented as a core-based matrix oracle of size
            and core \(\Oh(W\Delta)\)) in time \(\Oh(W\Delta \log n)\).
        \item[Character edits.]
            Apply a single character edit to either \(X\) or \(Y\) in time \(\Oh(Wn \log
            n)\).
    \end{description}
\end{lemma}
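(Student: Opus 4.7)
The plan is to invoke the dynamic distance oracle of \cite[Theorem~5.10]{gk24} essentially verbatim, with the only substantive adjustment being a translation from the language of exact restricted distance matrices used there to the language of ferns introduced in this section. Concretely, I would argue that the exact restricted distance matrix \(\dmat{X}{Y}{w}{\Delta}\) itself qualifies as the required object: it is trivially \(k\)-equivalent to itself for every \(k\), so it is a valid \((\Delta,k)\)-fern in the sense of \cref{def:fern}, and the parameter \(k\) therefore drops out of the retrieval procedure entirely.

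The required structural properties are then obtained from facts already established in this section. The matrix \(\dmat{X}{Y}{w}{\Delta}\) is a contiguous submatrix of the boundary distance matrix \(D^w_{X,Y}\) (by \cref{wfwfocnbjn}), which is Monge by \cref{fct:monge} and \((W{+}1)\)-bounded-difference by \cref{lm:bounded_core}. Both properties descend to contiguous submatrices: the Monge property trivially from the definition, and the bounded-difference property by \cref{remark:submatrix_bd}. Consequently, the output is \((W{+}1)\)-bounded-difference and Monge, and \cref{lem:bdtocore} bounds its core by \(\Oh(W\Delta)\), so the core-based matrix oracle returned by \cref{lm:build_mtx_ds} has the claimed size and is built in the claimed time.

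For initialization and character edits, I would invoke the data structure of \cite[Theorem~5.10]{gk24} directly with our weight function \(w\); the complexity bounds \(\Oh(Wn^2 \log n)\) and \(\Oh(Wn \log n)\) match verbatim with our claim. For retrieval, once the internal representation is accessed, \cref{lm:ds_tools} produces a core-based matrix oracle for the \((\Delta{+}1)\times(\Delta{+}1)\) contiguous submatrix \(\dmat{X}{Y}{w}{\Delta}\) in time proportional to its maximum dimension plus its core, both of which are \(\Oh(W\Delta)\); an extra logarithmic factor covers the data-structural overhead, yielding the stated \(\Oh(W\Delta\log n)\) bound.

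The main obstacle, and the reason this is labelled as a simplified restatement, is the bookkeeping needed to strip down the richer interface of \cite[Theorem~5.10]{gk24} (which supports additional features such as locking constraints and variable dimensions) to our narrower specification, and to verify that the stated bounds survive the simplification. No new algorithmic ingredient beyond the machinery already present in \cite{gk24}, together with the Monge/bounded-difference toolbox recalled in this section, is required.
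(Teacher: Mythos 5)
Your proposal matches the paper's own proof in every essential step: both delegate initialization and edits to the dynamic oracle of \cite[Theorem~5.10]{gk24}, both obtain the fern by extracting the relevant contiguous submatrix via \cref{lm:ds_tools}, both argue that the \((W{+}1)\)-bounded-difference and Monge properties descend to that submatrix, and both bound the core by \(\Oh(W\Delta)\) via \cref{lem:bdtocore}, with running times read off from \cref{lm:build_mtx_ds}. The only cosmetic difference is that you explicitly note the exact restricted distance matrix is trivially a \((\Delta,k)\)-fern so that \(k\) drops out of retrieval, which the paper leaves implicit.
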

\begin{proof}
    Compared to \cite{gk24}, we output a fern instead of a Core-based matrix oracle of the
    whole distance matrix.
    To that end, we use \cref{lm:ds_tools} to extract the core corresponding to the fern
    (submatrix) of the distance matrix that is returned by the original data structure.

    Now, in particular, we argue about the core of the output fern; the running time guarantees
    then follow from \cref{lm:build_mtx_ds}.
    First, as a contiguous submatrix of a \((W+1)\)-bounded difference Monge fern, the output fern has the
    same properties.
    Next, by \cref{lem:bdtocore}, the fern thus has a core of at most \(\Oh(\Delta W)\);
    completing the proof.
\end{proof}

\section{Efficient Solutions for Growing Ferns, Verifying Occurrences, and Listing Fragments}\label{sec:verify}

As mentioned in the introduction, the Landau--Vishkin algorithm for pattern matching
under unweighted edit distance~\cite{LandauV89} can be viewed as a repeated application of
a subroutine
that, given an $\cO(k)$-size interval $I$ of positions in $T$, computes $I \cap
\OccE_k(P,T)$.
A solution for this problem, called \verify, that takes $\cO(k^2)$ time in the \pillar
model can be found in a work~\cite{ColeH98} of Cole and Hariharan, who used it as a
subroutine in their $\cO(n+k^4\cdot n/m)$ algorithm for \PMED.
Unfortunately, under weighted edit distance, one cannot hope to obtain a solution with the
same complexity: an algorithm for \verify under weighted edit distance taking $\cO(k^2)$
time in the \pillar model would imply an $\cO(n+k^2)$-time algorithm for deciding whether
the weighted edit distance of two strings of length at most $n$ is at most $k$, thus refuting the
lower bound of
Cassis, Kociumaka, and Wellnitz~\cite{ckw23} for this problem---see \cref{thm:lb_wed}.
In this section, we show that \verify can be solved in $\cOtilde(k^3)$ time in the \pillar
model in the case of weighted edit distance.
Further, if the weights are integers upper bounded by some threshold $W$, we show that
\verify can be solved in $\cOtilde(\min (k^{2.5}, W\cdot k^2))$ time in the \pillar model.

Our techniques also lead to an efficient algorithm for the problem of computing all
fragments~$U$ of $T$ with $\edw{P}{U} \leq k$, as well as $\edw{P}{U}$.
This problem is formally defined below.

\begin{problem}[ListAllOccs]{ListAllOccs($P$, $T$, $k$, $w$)}
    \label{keyxtssyqb}

    \PInput{A text $T$ of length $n$, a pattern $P$ of length $m$, an integer
        threshold $k >0$, and oracle access to a
        normalized weight function
    \(w:\sqEsigma \to \intvl{0}{W}\).}
    \POutput{The set $\{(i,j, \edw{P}{T\fragmentco{i}{j}}) : 0 \le i \le j \le n \text{
    and } \edw{P}{T\fragmentco{i}{j}} \le k \}$.}
\end{problem}

On the way to our solutions for the \verify problem and the \allOccs problem, we derive an
efficient procedure for constructing ferns of $P$ and $T$ (recall \cref{def:fern}), which
plays a critical role in \cref{sec:dpm}.

\begin{problem}[GrowFern]{GrowFern($P$, $T$, $k$, $w$)}
    \label{prob:grow_fern}
    \PInput{A text $T$ of length $n$, a pattern $P$ of length $m$,
    an integer threshold $k > 0$, and oracle access to a normalized weight function
    \(w:\sqEsigma \to \intvl{0}{W}\).}
    \POutput{A $(\min(n,k),k)$-fern of $P$ and $T$ with respect to $w$.}
\end{problem}

Instances of the \growFern problem where the involved strings are of length \(\Oh(k)\) can
be solved efficiently using an MSSP data structure \cref{lem:MSSPverify}.

\begin{lemma}
    \dglabel{gucbenbjuj}[lem:MSSPverify](Any instance of \growFern with \(|P|+|T|=\Oh(k)\)
    can be solved in $\Oh(k^2 \log k)$)
    Any instance of \growFern with \(|P|+|T|=\Oh(k)\) can be solved in
    time $\Oh(k^2 \log k)$.
\end{lemma}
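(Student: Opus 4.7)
The plan is to simply compute the exact restricted distance matrix $\dmat{P}{T}{w}{\min(n,k)}$ entry-by-entry using the MSSP data structure, and return it as the fern. Since $\dmat{P}{T}{w}{\min(n,k)}$ is trivially $k$-equivalent to itself, it is a valid $(\min(n,k),k)$-fern of $P$ and $T$ by \cref{def:fern}.

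First, I would invoke the preprocessing step of \cref{lem:MSSPverify} on $P$ and $T$. Under the assumption $|P|+|T|=\Oh(k)$, we have $|P|\cdot|T| = \Oh(k^2)$, so the preprocessing cost is $\Oh(k^2 \log k)$, and subsequently we obtain $\Oh(\log k)$-time access to the distance in $\oAGw(P,T)$ between any two vertices on the infinite face (in particular, any two boundary vertices).

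Next, I would explicitly fill in the $(\min(n,k)+1)\times(\min(n,k)+1)$ matrix $\dmat{P}{T}{w}{\min(n,k)}$. By \cref{tprplazkfz}, each entry at position $(a,b)$ equals $\dist_{\oAGw(P,T)}\bigl((0,a),(|P|,|T|-\min(n,k)+b)\bigr)$, which connects two boundary vertices and can therefore be answered in $\Oh(\log k)$ time via a single MSSP query. Since the matrix has $\Oh(k)\times \Oh(k) = \Oh(k^2)$ entries, filling it in takes $\Oh(k^2 \log k)$ time, matching the preprocessing cost.

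There is no real obstacle here: the lemma essentially packages the observation that when both strings have size $\Oh(k)$, the alignment graph is small enough that a direct MSSP-based computation is already within the desired budget. The only mild point to check is that every entry of the restricted distance matrix corresponds to a boundary-to-boundary distance in $\oAGw(P,T)$, so all queries fall under the interface of \cref{lem:MSSPverify}; this is immediate from the definition of $\lambda_i$ and $\rho_i$ in \cref{def:bdm} applied to the relevant submatrix. The total running time is therefore $\Oh(k^2 \log k)$, as claimed.
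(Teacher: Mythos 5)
Your proposal matches the paper's proof: build the MSSP data structure of \cref{lem:MSSPverify} over $\oAGw(P,T)$ in $\cO(k^2\log k)$ time, then fill in the $(\min(n,k)+1)\times(\min(n,k)+1)$ restricted distance matrix entry-by-entry with $\cO(k^2)$ boundary-to-boundary queries at $\cO(\log k)$ each. No discrepancies.
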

\begin{proof}
    We first build the data structure of \cref{lem:MSSPverify} for $\oAGw(P,T)$ in
    $\cO(k^2 \log k)$ time.
    Then,
    we compute $\edw{P}{T\fragmentco{i}{j}}$ for all $(i,j) \in \fragment{0}{\min(n,k)}
    \times \fragment{n-\min(n,k)}{n}$ using said data structure in $\Oh(k^2 \log k)$
    time---we issue
    queries asking for the distance from $(0,i)$ to $(m,n-j)$ for every $(i,j)\in
    \fragment{0}{\min(n,k)}^2$.
\end{proof}

\subsection{Growing Ferns: the Case of Small Edit and Self-edit Distance}

Let us recall the definition of \emph{self-edit distance}, a notion introduced
in~\cite{ckw23}.
We say that an alignment $\A : X \onto X$ is a \emph{self-alignment}
if $\A$ does not align any character $X\position{x}$ to itself.
We define the \emph{self-edit distance} of $X$ as $\selfed(X) \coloneqq \min_\A
\ed_{\A}(X, X)$,
where the minimization ranges over self-alignments $\A : X \onto X$. In words,
$\selfed(X)$ is the minimum (unweighted) cost of a self-alignment.
We can interpret a self-alignment as a
$(0, 0) \leadsto (|X|, |X|)$ path in the alignment graph $\AG(X, X)$
that does not contain any edge of the main diagonal.
Self-edit distance enjoys the following useful properties.

\begin{lemmaq}[Properties of $\selfed{}$, {\cite[Lemma 4.2]{ckw23}} and {\cite[Lemma
    6.2]{gk24}}]
    \dglabel{fct:selfed-properties}
    Let $X \in \Sigma^*$ denote a string. Then, all of the following hold:
    \begin{description}
        \item[Monotonicity.\!\!\!\!] For any $\ell' \le \ell \le r \le r'\in
            \fragment{0}{|X|}$, we have
            \[
                \selfed(X\fragmentco{\ell}{r}) \leq \selfed(X\fragmentco{\ell'}{r'}).
            \]
        \item[Sub-additivity.\!\!\!\!] For any $\mu \in \fragment{0}{|X|}$, we have
            \[
                \selfed(X) \leq \selfed(X\fragmentco{0}{\mu}) +
                \selfed(X\fragmentco{\mu}{|X|}).
            \]
        \item[Triangle inequality.\!\!\!\!] For any $Y\in \Sigma^*$, we have
            \(
            \selfed(Y) \le \selfed(X)+2\ed(X,Y).
            \)
        \item[Continuity.\!\!\!\!] For any $i\in \fragmentoo{0}{|X|}$, we have
            \(
            \selfed(X\fragment{0}{i})-\selfed(X\fragmentco{0}{i})  \in \{0,1\}.
            \)
            \qedhere
    \end{description}
\end{lemmaq}

The following lemma gives a structural characterization of a pair of strings that are at
small edit distance and have small self-edit distance.

\begin{lemmaq}[{\cite[Lemma 7.1]{gk24}; see also~\cite[Lemma 4.6 and Claim 4.11]{ckw23}}]
    \dglabel{lem:decomp}
    There is a \pillar algorithm that, given strings $X,Y$ and a positive integer $k$
    satisfying $\selfed(X)\le k \le |X|$ and $\ed(X,Y)\le k$, in $\Oh(k^2)$ time builds a
    decomposition $X=\bigodot_{i=0}^{z-1} X_i$ and a sequence of fragments
    $(Y_i)_{i=0}^{z-1}$ of $Y$ that satisfies all of the following.
    \begin{itemize}
        \item Each phrase $X_i=X\fragmentco{x_i}{x_{i+1}}$ is of length $x_{i+1}-x_i\in
            \fragmentco{k}{2k}$ and each fragment $Y_i = Y\fragmentco{y_i}{y'_{i+1}}$
            satisfies $y_i=\max(0, x_i-k)$ and $y'_{i+1}=\min(x_{i+1}+3k,|Y|)$.
        \item There is a set $F\subseteq \fragmentco{0}{z}$ of size $|F|=\Oh(k)$ such that
            $X\fragmentco{x_i}{x_{i+1}}=X\fragmentco{x_{i-1}}{x_i}$ and
            $Y\fragmentco{y_i}{y'_{i+1}}=Y\fragmentco{y_{i-1}}{y'_i}$ hold for each $i\in
            \fragmentco{0}{z}\setminus F$ (in particular, $0\in F$).
    \end{itemize}
    The algorithm returns the set $F$ and, for each $i\in F$, the endpoints of
    $X_i=X\fragmentco{x_i}{x_{i+1}}$.\footnote{This determines the whole decomposition
        because $(x_i)_{i=\ell}^r$ is an arithmetic progression for every
    $\fragmentoc{\ell}{r}\subseteq \fragmentoc{0}{z}\setminus F$.}
\end{lemmaq}

In the case when the weights are small integers, we utilize the following lemma from
\cite{gk24} that allows us to efficiently compute, for each $i$, a representation of
$D^w_{X_i,Y_i}$.\footnote{\cite[Lemma 7.2]{gk24} returns an intricate data structure that
is outlined in \cite[Theorem 5.10]{gk24}. In \cref{lem:addtoboxds}, instead of describing
this data structure in detail, we specify as output only the part of said data structure
that is useful to us.}

\begin{lemmaq}[{\cite[see Lemma 7.2]{gk24}}]
    \dglabel{lem:addtoboxds}
    Consider a weight function $w : \sqEsigma \to \fragment{0}{W}$, a positive
    integer~$k$, and two strings $X, Y \in \Sigma^{\le n}$ such that $\selfed(X) \le k \le
    |X|$ and $\ed(X,Y)\le k$.
    Moreover, suppose that $X=\bigodot_{i=0}^{m-1}X_i$,  $(Y_i)_{i=0}^{m-1}$, and
    $F\subseteq\fragmentco{0}{m}$ satisfy the conditions in the statement of
    \cref{lem:decomp}.

    There is an algorithm that, given oracle access to $w$, the integer $k$, the strings
    $X,Y$, the set~$F$, and the endpoints of $X_i$ for each $i\in F$, takes $\Oh(W\cdot
    k^2\log^2n)$ time plus $\Oh(k^2)$ \pillar operations and returns a reference to
    $\mds(D^w_{X_i,Y_i})$ for each $i \in F$.

    A variant of this algorithm that takes $\Oh(k^{2.5}\log^2 n)$ time plus $\Oh(k^2)$
    \modelname{} operations returns, for each $i \in F$, a reference to $\mds(M_i)$, where
    $M_i$ is a Monge matrix~$M_i$ such that $M_i \meq{k} D^w_{X_i,Y_i}$ and
    $\delta(M_i)=\cO(k^{1.5})$.
\end{lemmaq}

In the next lemma, we show an efficient solution for \allOccs in the case when $\selfed(P)
\le k$ and $\edu{P}{T} \le k$.
The proof of said lemma closely follows the proof of \cite[Lemma 7.3]{gk24}.

\begin{lemma}[{\tt\apmsmallsed($P$, $T$, $k$, $w$)}, {\cite[adapted from Lemma
    7.3]{gk24}}]
    \dglabel"{lem:small_sed}[gucbenbjuj,lem:decomp,lm:build_mtx_ds,fct:monge,lem:addtoboxds,lm:ds_tools,lm:bounded_core,lm:matrix_mult_w,lem:bdproduct,lm:matrix_mult_k]
    Any instance of the \growFern problem where $\selfed(P) \le k$ and $\edu{P}{T} \le k$ can be solved in
    \begin{enumerate}
        \item $\Oh(k^3 \log (nk))$ time in the \pillar model, with the returned fern being
            Monge;
        \item $\Oh(W\cdot k^2 \log^2 (nk))$ time in the \pillar model if all the weights
            are integers, with the returned fern being Monge and
            $(W+1)$-bounded-difference;
        \item $\cO(k^{2.5} \log^2 (nk))$ time in the \pillar model if all the weights are
            integers, with the returned fern~$\Phi$ being Monge and satisfying
            $\delta(\Phi) = \cO(k^{1.5})$.
    \end{enumerate}
\end{lemma}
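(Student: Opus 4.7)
The plan is to mirror the strategy of \cite[Lemma 7.3]{gk24} for the integer-weight case, swapping its matrix toolbox for the appropriate variant at each step. I would first apply \cref{lem:decomp} (applicable since $\selfed(P) \le k \le |P|$ and $\edu{P}{T} \le k$) to obtain, in $\Oh(k^2)$ \pillar time, the decomposition $P = \bigodot_{i=0}^{z-1} X_i$, the sequence of fragments $(Y_i)_{i=0}^{z-1}$ of $T$, and the set $F \subseteq \fragmentco{0}{z}$ with $|F| = \Oh(k)$. Because each piece $(X_i, Y_i)$ with $i \notin F$ coincides with $(X_{i-1}, Y_{i-1})$, only $\Oh(k)$ distinct pieces arise; for each representative $i \in F$, I would construct a Monge representation of (or a matrix $k$-equivalent to) $D^w_{X_i, Y_i}$. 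For variant~(i), the small size $|X_i|+|Y_i|=\Oh(k)$ of each piece lets me invoke \cref{gucbenbjuj} in $\Oh(k^2 \log k)$ per piece, then wrap the result via \cref{lm:build_mtx_dscomp} to obtain $\mds(D^w_{X_i,Y_i})$. For variants~(ii) and~(iii), I would invoke the two modes of \cref{lem:addtoboxds}, yielding either $(W+1)$-bounded-difference Monge matrices or their $k$-equivalent core-capped counterparts of core $\Oh(k^{1.5})$, at total cost $\Oh(Wk^2\log^2 n)$ and $\Oh(k^{2.5}\log^2 n)$, respectively.

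Next, I would combine the $z$ per-piece matrices in order via iterated $(\min,+)$-multiplication. The structure of $F$ ($\Oh(k)$ special pieces interleaved with plain runs that each reuse a single base matrix) lets me apply repeated squaring inside each plain run, bounding the total number of multiplications by $\Oh(k \log(m/k)) = \Oh(k \log(nk))$. The per-multiplication cost depends on the variant: $\Oh(k^2)$ via SMAWK (\cref{thm:smawk}) with explicit Monge matrices in variant~(i); $\Oh(Wk \log k)$ via \cref{lm:matrix_mult_w} with $(W+1)$-bounded-difference Monge matrices in variant~(ii), where \cref{lem:bdproduct} together with \cref{fct:monge-product-is-monge} guarantees that the product inherits both properties; and $\Oh(k^{1.5}\log k)$ via \cref{lm:matrix_mult_k} with threshold $k$ in variant~(iii), which keeps the product's core at $\Oh(k^{1.5})$ while preserving Mongeness. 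Multiplying these gives the claimed running times. Finally, the requested $(\min(n,k), k)$-fern is extracted as the appropriate contiguous submatrix of the final product using \cref{lem:trim}, which preserves Mongeness, bounded-difference, and core-size bounds.

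The main technical obstacle is justifying that this iterated $(\min,+)$-product is $k$-equivalent to $\dmat{P}{T}{w}{\min(n,k)}$. The decomposition from \cref{lem:decomp} does not literally form a $\Delta$-puzzle in the sense of \cref{7-16-6}: the $X_i$ abut without overlap, while the $Y_i$ overlap by $\Theta(k)$. I would argue as in \cite[Lemma 7.3]{gk24}: by \cref{obs:band}, any path in $\oAGw(P,T)$ of cost at most $k$ lies in a diagonal band of width $\Oh(k)$, so it crosses each intended cut line within the $\Oh(k)$-wide overlap window between consecutive $Y_i$'s; inside that window the cut lines are genuine separators of the band, forcing the path to decompose into subpaths whose costs are exactly those encoded by the boundary distance matrices $D^w_{X_i, Y_i}$. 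Since $\meq{k}$ is a congruence with respect to $(\min,+)$-multiplication, this per-piece correctness propagates through the entire composition, yielding the required $k$-equivalence.
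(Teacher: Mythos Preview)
Your high-level plan---decompose via \cref{lem:decomp}, build per-piece distance matrices, compose with repeated squaring using the variant-appropriate multiplication---is exactly the paper's. The gap is in the composition step: you cannot $(\min,+)$-multiply the full boundary matrices $D^w_{X_i,Y_i}$ (nor the ferns returned by \cref{gucbenbjuj}, which are yet another submatrix). The output boundary of piece~$i$ and the input boundary of piece~$i{+}1$ index different vertex sets in $\oAGw(P,T)$; column index~$j$ of the former and row index~$j$ of the latter generally refer to different vertices, so the product does not encode path composition (and the matrices need not even have compatible dimensions). The only vertices shared by the two subgrids $\fragment{p_i}{p_{i+1}}\times\fragment{t_i}{t'_{i+1}}$ and $\fragment{p_{i+1}}{p_{i+2}}\times\fragment{t_{i+1}}{t'_{i+2}}$ are those in $V_{i+1}=\{p_{i+1}\}\times\fragment{t_{i+1}}{t'_{i+1}}$. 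The paper therefore first extracts from each piece the contiguous submatrix $D_{i,i+1}$ whose rows are indexed by $V_i$ and columns by $V_{i+1}$, using \cref{lm:ds_tools} in the integer variants and a direct MSSP build (\cref{lem:MSSP}) on the subgrid $G_i$ in variant~(i). With these matrices the composition identity $D_{0,z}=\bigoplus_i D_{j_i,j_i+1}^{\oplus(j_{i+1}-j_i)}$ holds exactly (this is \cref{clm:D-formula}), and your repeated-squaring scheme then goes through unchanged. Your band argument via \cref{obs:band} is correct and is precisely what shows the $V_i$ are separators for every cost-$\le k$ path (\cref{clm:ver:restrict}); it just does not justify multiplying the full $D^w_{X_i,Y_i}$.

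You also skip the base case $m<20k$, where \cref{lem:decomp} may be inapplicable (it requires $k\le|P|$) and the separator construction degenerates; the paper dispatches that case directly via \cref{gucbenbjuj}.
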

\begin{proof}
    First, observe that $|n - m| \leq \edu{P}{T} \leq k$.

    If $m < 20k$, we have $n < 19k$ and it hence suffices to use \cref{gucbenbjuj}.

    We henceforth assume that $m\ge 20 k$, which implies that $n \geq 19k$.
    We first run the algorithm of \cref{lem:decomp} for $P$, $T$, and~$k$, arriving at a
    decomposition $P=\bigodot_{i=0}^{z - 1}P_i$
    and a sequence of fragments $(T_i)_{i=0}^{z-1}$, such that
    $P_i=P\fragmentco{p_i}{p_{i+1}}$ and $T_i=T\fragmentco{t_i}{t'_{i+1}}$ for
    $t_i=\max\{p_i-k,0\}$ and $t'_{i+1}=\min\{p_{i+1}+3k,n\}$ for all $i\in
    \fragmentco{0}{z}$.
    The decomposition is represented using a set $F$ of size $\Oh(k)$ such that
    $P_i=P_{i-1}$ and $T_i=T_{i-1}$ holds for each $i\in \fragmentco{0}{z}\setminus F$ and
    the endpoints of $P_i$ for $i\in F$.
    To easily handle corner cases, we set $t'_0\coloneqq t'_1$ and $t_z \coloneqq t_{z-1}$
    (so that the sequences $(t_i)_{i=0}^z$ and $(t'_i)_{i=0}^z$ remain monotone), and we
    assume that $\fragmentco{0}{z}\setminus F\subseteq \fragment{2}{z-5}$; if the
    original set $F$ does not satisfy the latter condition, we insert to it the missing $\Oh(1)$
    elements.

    For each $i\in \fragmentco{0}{z}$, consider a subgraph $G_i$ of $\oAGw(P,T)$ induced
    by $\fragment{p_i}{p_{i+1}}\times \fragment{t_i}{t'_{i+1}}$.
    Denote by $G$ the union
    of all subgraphs $G_i$.
    For each $i \in \fragment{0}{z}$, denote $V_i \coloneqq \{p_i\}\times
    \fragment{t_i}{t'_i}$.
    Note that $V_0 \supseteq \{0\}\times \fragment{0}{2k}$.
    Additionally, $V_z \supseteq \{m\}\times
    \fragment{n-k}{n}$ since $p_z - p_{z-1} \geq k$ implies $p_{z-1} \leq m-k$, $t_{z-1}=
    p_{z-1}-k \leq m-2k$, and $n \geq m-k$.
    Further, $V_i = V(G_i) \cap V(G_{i - 1})$ for $i \in
    \fragmentoo{0}{z}$.
    Moreover, for integers $0\le i \le  j \le z$, let $D_{i, j}$ denote the matrix of
    pairwise distances from $V_i$ to $V_j$ in \(G\), where rows of $D_{i, j}$ represent
    vertices of $V_i$ in the increasing order of the second coordinate, and columns of
    $D_{i, j}$ represent vertices of $V_j$ in the increasing order of the second
    coordinate.
    Note that $D_{i,j}$ is a Monge matrix by \cref{fct:monge}.

    \begin{claim}\label{claim:dii}
        \begin{enumerate}
            \item We can compute $D_{i,i+1}$ for all $i \in F$ in $\cO(k^3 \log k)$ time
                in total in the \pillar model.\label{dii:gen}
            \item If all the weights are integers, we can compute $\mds(D_{i,i+1})$, for
                all $i \in F$, in $\Oh(W \cdot k^2 \log^2 n)$ time in the \pillar
                model;\label{dii:intw}
            \item If all the weights are integers, we can compute $\mds(Z_{i,i+1})$, for
                all $i \in F$, where $Z_{i,i+1}$ is a Monge matrix such that $Z_{i,i+1}
                \meq{k} D_{i,i+1}$ and $\delta(Z_{i,i+1}) = \cO(k^{1.5})$, in $\Oh(k^{2.5}
                \log^2 n)$ time in the \pillar model.\label{dii:intsqrtk}
        \end{enumerate}
    \end{claim}
    \begin{claimproof}
        By \cref{lem:oagw} (path monotonicity), the shortest paths (in $G$) between
        vertices of
        $G_i$ stay within $G_i$.
        Hence, $D_{i, i + 1}$ can be computed in $\Oh(k^2 \log k)$
        time using an MSSP data structure (\cref{lem:MSSP}) for each $i\in F$, for a total
        time of $\Oh(k^3 \log k)$.

        In the case of integer weights, we can instead use either of the following
        approaches.
        \begin{itemize}
            \item We can use \cref{lem:addtoboxds} to construct $\mds(D^w_{P_i,T_i})$ for
                all $i\in F$ using $\Oh(W \cdot k^2 \log^2 n)$ total time in the \pillar
                model.
                Then, for each $i \in F$, we can extract $\mds(D_{i,i+1})$ in $\Oh(W \cdot
                k \log k)$, time using \cref{lm:ds_tools}, for a total of $\cO(W \cdot k^2
                \log k)$ time, noting that the core of each considered matrix is of size
                $\Oh(W\cdot k)$ due to \cref{lm:bounded_core}.
            \item We can use \cref{lem:addtoboxds} to construct, for each $i\in F$,
                $\mds(M_i)$ for a Monge matrix $M_i$ such that $M_i \meq{k} D^w_{P_i,T_i}$
                and $\delta(M_i) = \cO(k^{1.5})$ in $\cO(k^{2.5} \log^2 n)$ total time in
                the \pillar model.
                Then, for each $i \in F$, we can extract $\mds(Z_{i,i+1})$ in $\Oh(k^{1.5}
                \log k)$, for a total of $\cO(k^{2.5} \log k)$ time, using
                \cref{lm:ds_tools}.\claimqedhere
        \end{itemize}
    \end{claimproof}

    As shown in the proof of {\cite[Claim 7.4]{gk24}}, our definition of $F$ ensures that
    $D_{i, i + 1} = D_{i-1,i}$ for each $i \in \fragmentco{0}{z}\setminus F$. We rely on
    the following fact to prove \cref{claim:d0z}.

    \begin{claimq}[{\cite[Claim 7.4]{gk24}}] \label{clm:D-formula}
        Let $0=j_0 < j_1 < \ldots < j_{|F| - 1} < j_{|F|} = z$ be the elements of $F\cup
        \{z\}$.
        We have
        \[D_{0, z} = \bigoplus_{i=0}^{|F|-1} D_{j_i, j_i + 1}^{\oplus (j_{i + 1} -
        j_i)}. \claimqedhere\]
    \end{claimq}

    \begin{claim}\label{claim:d0z}
        \begin{enumerate}
            \item We can compute $D_{0,z}$ in $\Oh(k^3 \log n)$ time in the \pillar model.
            \item If all weights are integers, $D_{0,z}$ is $(W+1)$-bounded difference and
                we can compute it in time $\cO(W\cdot k^2 \log^2 n)$ time in the \pillar
                model.
            \item If all weights are integers, in $\cO(k^{2.5} \log^2 n)$ time in the
                \pillar model, we can compute a Monge matrix $Z_{0,z}$ such that $Z_{0,z}
                \meq{k} D_{0,z}$
                and $\delta(Z_{0,z}) = \cO(k^{1.5})$.
        \end{enumerate}
    \end{claim}
    \begin{claimproof}
        In the case of arbitrary weights, we first employ \cref{claim:dii}(\ref{dii:gen})
        to compute, for each $i \in \fragmentco{0}{|F|}$,  the matrix $D_{j_i, j_i + 1}$,
        in $\cO(k^3 \log k)$ time in the \pillar model in total.
        Given $D_{a, b}$ and $D_{b, c}$ for any integers $0\le a \le b \le c \le z$, we can
        calculate $D_{a, c}$ using \cref{thm:smawk} in time $\Oh(k^2)$.
        Therefore, using binary exponentiation, we can calculate $M_i$ where $M_i \coloneqq
        D_{j_i, j_i + 1}^{\oplus (j_{i + 1} - j_i)}$ in time $\Oh(k^2 \log n)$.
        Doing this for all $i \in \fragmentco{0}{|F|}$ requires $\Oh(k^3 \log n)$ time.
        We then calculate $D_{0, z}$ using the fact that $D_{0, z} = M_0 \oplus M_1 \oplus
        \cdots \oplus M_{|F| - 1}$ in time $\Oh(|F| \cdot k^2) = \Oh(k^3)$.

        In the case of integer weights, we present two algorithms.
        \begin{itemize}
            \item \emph{An $\cO(W\cdot k^2 \log^2 n)$-\pillar{}-time algorithm.} We employ
                \cref{claim:dii}(\ref{dii:intw}) to compute $\mds(D_{j_i, j_i + 1})$ for
                all $i \in \fragmentco{0}{|F|}$, in $\cO(W \cdot k^2 \log^2 n)$ total time
                in the \pillar model.
                We then use \cref{lm:matrix_mult_w} instead of \cref{thm:smawk} and binary
                exponentiation to compute $\mds(D_{0,z})$.
                This requires $\cO(W \cdot k^2 \log^2 n)$ time in total as each matrix
                $D_{j_i, j_i + 1}$ is $(W+1)$-bounded-difference due to
                \cref{lm:bounded_core}, each computed matrix is $(W+1)$-bounded-difference
                due to \cref{lem:bdproduct}, and thus the core of each considered matrix
                is $\cO(W \cdot k)$ due to \cref{lem:bdtocore}.
                In the end, we extract the $(W+1)$-bounded-difference matrix $D_{0,z}$
                from $\mds(D_{0,z})$ in $\cO(k^2 \log k)$ time (see
                \cref{lm:build_mtx_ds}).
            \item \emph{An $\cO(k^{2.5} \log^2 n)$-\pillar{}-time algorithm.} We employ
                \cref{claim:dii}(\ref{dii:intsqrtk}) to compute, for each $i \in
                \fragmentco{0}{|F|}$, $\mds(Z_{j_i, j_i + 1})$ for a Monge matrix $Z_{j_i,
                j_i + 1}$ that is $k$-equivalent to $D_{j_i, j_i + 1}$ and its core has
                size $\cO(k^{1.5})$ in total time $\cO(k^{2.5} \log^2 n)$ in the \pillar
                model.
                We then use \cref{lm:matrix_mult_k} instead of \cref{thm:smawk} and binary
                exponentiation to compute a matrix $Z_{0,z}$ that is $k$-equivalent to
                $D_{0,z}$.
                \cref{lm:matrix_mult_k} guarantees that the size of the core of each
                computed matrix is $\cO(k^{1.5})$ and hence the total running time for all
                $\cO(k \log n)$ min-plus multiplications is $\cO(k^{2.5} \log^2 n)$.
                In the end, we again extract the matrix from its $\mds$ representation in
                $\cO(k^2 \log k)$ time, noting that its core is of size $\cO(k^{1.5})$.
                \claimqedhere
        \end{itemize}
    \end{claimproof}

    \begin{claim}[See also \cref{clm:restrict}]\label{clm:ver:restrict}
        For every fragment $T\fragmentco{\ell}{r}$ of $T$ and threshold $d\in [0,k]$, the
        following are equivalent:
        \begin{enumerate}[(a)]
            \item\label{it:ver:restrict:a} $\edw{P}{T\fragmentco{\ell}{r}}\le d$;
            \item\label{it:ver:restrict:b} $\ell\le t'_0$, $r\ge t_z$, and $\dist_{G}((0,\ell),(m,r))\le d$.
        \end{enumerate}
    \end{claim}
    \begin{claimproof}
        The $\ref{it:ver:restrict:a} \Leftarrow \ref{it:ver:restrict:b}$ implication
        follows directly from \cref{lem:oagw} (distance preservation between $\AGW(P,T)$
        and $\oAGw(P,T)$) and the fact that $G$ is a subgraph of $\oAGw(P,T)$.

        For the converse implication, suppose that there is an alignment $\cO:P\onto
        T\fragmentco{\ell}{r}$ of weighted cost at most $d\le k$.
        It suffices to prove that $\cO$ (interpreted as a path from $(0,\ell)$ to $(m,r)$
        in $\oAGw(P,T)$) is contained in $G$.
        For this, let us focus on a single edge $(p,t)\to (p',t')$ of $\cO$ with
        $p'\in\{p,p+1\}$ and $t'\in \{t,t+1\}$.
        Let us fix an index $i\in \fragmentco{0}{z}$ such that $\fragment{p}{p'}\subseteq
        \fragment{p_i}{p_{i+1}}$.
        Note that $(t'-p')-(r-m)\le k$ implies $t' \le p'+k+r-m\le p'+k+n-m\le p'+2k\le
        p_{i+1}+2k$.
        Since $t' \le n$, we conclude that $t'\le t'_{i+1}$.
        Further, note that $(p-t)-(0-\ell)\le k$ implies $t \ge p-k-\ell \ge p-k \ge p_{i}-k$.
        Since $t \ge 0$, we conclude that $t \ge t_i$.
        Thus, since $t,t' \in \fragment{t_i}{t'_{i+1}}$, the edge $(p,t)\to (p',t')$ is
        contained in $G_i$ and hence in $G$.
    \end{claimproof}

    The above claim guarantees that the matrix returned by each of the algorithms
    encapsulated in \cref{claim:d0z} is an $\cO(k) \times \cO(k)$ Monge matrix whose
    top-right contiguous $(k+1) \times (k+1)$ submatrix $\Phi$ is in
    $\fmats{P}{T}{w}{k}{k}$ due to the fact that $V_0 \supseteq \{0\}\times
    \fragment{0}{k}$ and $V_z \supseteq \{m\}\times
    \fragment{n-k}{n}$.
    We extract $\Phi$ in $\cO(k^2 \log k)$ time (see \cref{lm:build_mtx_ds}) and return
    it.
    In the case when $m \geq 20k$, the time complexity of the algorithm is dominated by
    the running time of the procedure encapsulated in \cref{claim:d0z}.
    This concludes the proof of the lemma.
\end{proof}

\subsection{Growing Ferns: The General Case}

Before proving the main lemma of this section, let us state some useful lemmas from \cite{ckw23,gk24}.

\begin{lemmaq}[{\cite[Lemma 4.5]{ckw23}}]
    \dglabel{lem:selfed}
    There is an $\Oh(k^2)$-time \pillar algorithm that, given a string $X\in \Sigma^n$
    and an integer $k\in \mathbb{Z}_{\ge 0}$ determines whether $\selfed(X)\le k$ and, if so,
    retrieves (the breakpoint representation of) an optimal self-alignment $\A : X \onto X$.
\end{lemmaq}

\begin{lemmaq}[{\cite[part of Lemma 4.9]{ckw23}}]
    \dglabel{lem:alg-periodic}
    There is an algorithm that, given two strings $X,Y\in \Sigma^{\le n}$ and integers
    $1\le d \le k\le n$ such that $\selfed(X)\le k$ and $\big||X|-|Y|\big|\le
    2d$, as well as (oracle access to) a normalized weight function
    $w : \sqEsigma \to \intvl{0}{W}$,
    computes $\edwk{d}{X}{Y}$ as well as the breakpoint representation of the
    underlying alignment (if this value is not $\infty$) in $\Oh(k^2 d\log n)$ time in the
    \pillar model.
\end{lemmaq}

\begin{lemmaq}[{\cite[Lemma 4.4]{ckw23}}]
    \dglabel{prop:disjoint-alignments-bound-sed}
    For two strings $X, Y \in \Sigma^*$ and positions
    $i \le j, i' \le j' \in \fragment{0}{|Y|}$,
    write $\A\in \Als(X,Y\fragmentco{i}{j})$ and $\A'\in \Als(X,Y\fragmentco{i'}{j'})$.
    If the alignments $\A$ and $\A'$, interpreted as paths in $\AG(X, Y)$, do not contain any common diagonal edge,
    then \[
        \selfed(X)\le
        |i-i'|+\ed_\A(X,Y\fragmentco{i}{j})
        +\ed_{\A'}(X,Y\fragmentco{i'}{j'})+|j-j'|.\tag*{\qedhere}
    \]
\end{lemmaq}

\begin{lemmaq}[{\cite[Lemma 7.6]{gk24}}]\dglabel{lem:int_wed}
    There is a \pillar algorithm that, given two strings $X, Y \in \Sigma^{\le n}$ as well
    as oracle access to a weight function $w : \sqEsigma \to \fragment{0}{W}$, finds
    $\edw{X}{Y}$ as well as the breakpoint representation of a $w$-optimal alignment $\A :
    X \onto Y$.
    The running time of the algorithm is $\Oh(\min\{k^2 \cdot W \log^2 n, k^{2.5} \log^3
    n\})$ time plus $\Oh(k^2 \log \min \{n, W + 1\})$ \pillar operations, where $k =
    \edw{X}{Y}$.
\end{lemmaq}

\begin{lemma}[{\tt\apmsmalled($P$, $T$, $k$, $w$)}]
    \dglabel"{lem:small_ed}[fct:LValignment,lem:trim,lem:selfed,lem:small_sed,fct:selfed-properties,obs:band,lem:alg-periodic,lem:int_wed,prop:disjoint-alignments-bound-sed,lem:wed,lm:matrix_mult_w,lm:matrix_mult_k]
    Any instance of the \growFern problem can be solved in
    \begin{enumerate}
        \item $\Oh(k^3 \log (nk) \log n)$ time in the \pillar model with the returned fern
            being Monge;
        \item $\Oh(W \cdot k^2 \log^2 (nk))$ time in the \pillar model if all weights are
            integers, with the returned fern being Monge and $(W+1)$-bounded-difference;
        \item $\Oh(k^{2.5} \log^3 (nk))$ time in the \pillar model if all weights are
            integers, with the returned fern~$\Phi$ being Monge and satisfying
            $\delta(\Phi) = \cO(k^{1.5})$.
    \end{enumerate}
\end{lemma}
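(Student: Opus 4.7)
The plan is to dispatch on the self-edit distance of $P$. In the regime $\selfed(P)=O(k)$, I apply \cref{lem:small_sed} directly after first checking $\edu{P}{T}=O(k)$ via \cref{fct:LValignment}; otherwise every entry of the target fern exceeds $k$ (since $\edw\ge \edu$ and by the band bound $\edu{P}{T\fragmentco{i}{n-k+j}}\ge \edu{P}{T}-2k$ for $i,j\in\fragment{0}{k}$), so I can return an all-$\infty$ fern. The same preliminary check lets me assume $|n-m|=O(k)$. After trimming to $(\min(n,k)+1)\times(\min(n,k)+1)$ via \cref{lem:trim}, the fern inherits the Monge, bounded-difference, and small-core properties with the stated running times.

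Assume henceforth $\selfed(P)>9k$. Using continuity of $\selfed$ from \cref{fct:selfed-properties} together with $O(\log m)$ binary-search calls to \cref{lem:selfed}, I find the minimal $p$ with $\selfed(P\fragmentco{0}{p})=9k$ and the maximal $p'$ with $\selfed(P\fragmentco{p'}{m})=9k$; sub-additivity yields $p<p'$. Writing $P_p=P\fragmentco{0}{p}$ and $P_s=P\fragmentco{p'}{m}$, I invoke \cref{lem:small_sed} on $(P_p, T\fragmentco{0}{p+2k})$ with threshold $O(k)$ (its preconditions hold because $\selfed(P_p)=9k$ and $\edu{P_p}{\cdot}=O(k)$), and scan the returned fern for an entry at most $k$; if none, no $(k,w)$-error occurrence of $P$ can start in $\fragment{0}{k}$ by \cref{obs:band}, so I return the all-$\infty$ fern. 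Otherwise this entry yields a $(k,w)$-error occurrence $T\fragmentco{i_p}{t}$ of $P_p$, fixing the pivot $t$; analogously I obtain $t'$ on the right.

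The crux of the proof, and the main obstacle, is the structural claim: for every $(k,w)$-error occurrence $T\fragmentco{i}{j}$ of $P$ in $T$, there is a $w$-optimal alignment $\A':P\onto T\fragmentco{i}{j}$ passing through both $(p,t)$ and $(p',t')$. I adapt the overview argument: extend both $\A'$ and the occurrence-defining alignment $\A:P\onto T\fragmentco{i_P}{j_P}$ (for some $(k,w)$-error occurrence of $P$ obtained, e.g., by running the general weighted verifier on the window pinned by $(i_p,t)$, which can be taken to pass through $(p,t)$ and $(p',t')$ by construction) to unweighted alignments $\B,\B':P\onto T$ of cost $O(k)$ — legitimate thanks to $|n-m|=O(k)$. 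Let $(\bar p,\bar t)$ be the first shared internal vertex. If $\bar p\ge p$, then the $(0,0)\to(\bar p,\bar t)$ prefixes of $\B,\B'$ share no diagonal edge and \cref{prop:disjoint-alignments-bound-sed} gives $\selfed(P\fragmentco{0}{\bar p})\le 8k$, contradicting the minimality of $p$; hence $\bar p<p$. The subpaths of $\A$ and $\A'$ between any two shared vertices are optimal alignments of the same pair of substrings, and hence of identical weighted cost by \cref{fct:ali}, so I may reroute $\A'$ along $\A$'s segment through $(p,t)$; a symmetric argument handles $(p',t')$. Consequently, whenever $\dist_{\oAGw(P,T)}((0,i),(m,n-k+j))\le k$,
\[
    \dist((0,i),(m,n-k+j)) \;=\; L[i] + d + R[j],
\]
where $L[i]\coloneqq\dist((0,i),(p,t))$, $R[j]\coloneqq\dist((p',t'),(m,n-k+j))$, and $d\coloneqq\edw{P\fragmentco{p}{p'}}{T\fragmentco{t}{t'}}$.

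I compute $L$ as the rightmost column of the fern returned by \cref{lem:small_sed} applied to $(P_p,T\fragmentco{0}{t})$ — preconditions hold since $\selfed(P_p)=9k$ and $\edu{P_p}{T\fragmentco{0}{t}}\le i_p+k=O(k)$ — extracted via \cref{lem:trim}; $R$ is handled symmetrically. The scalar $d$ is computed by a $k$-capped invocation of \cref{lem:wed} for general weights, or \cref{lem:int_wed} for integer weights, returning the all-$\infty$ fern if $d>k$. I output $\Phi[i,j]\coloneqq L[i]+d+R[j]$, truncating any entry above $k$ to $\infty$. Since the density of $\Phi$ vanishes identically, $\Phi$ is trivially Monge with $\delta(\Phi)=0$, and the $(W+1)$-bounded-difference property of $L$ and $R$ (inherited via \cref{lm:bounded_core} and \cref{lem:trim}) transfers to $\Phi$ in both coordinates. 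The $k$-equivalence of $\Phi$ to the true boundary distance matrix follows from the decomposition above combined with the trivial upper bound $\dist\le L+d+R$. Summing the costs — $O(k^2\log m)$ for the selfed binary searches, a constant number of invocations of \cref{lem:small_sed}, and a single weighted-edit-distance computation for $d$ — matches all three stated running-time bounds.
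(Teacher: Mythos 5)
Your high-level plan matches the paper's: dispatch on $\selfed(P)$, handle the small-$\selfed$ case via \cref{lem:small_sed}, and for large $\selfed(P)$ decompose into a prefix fern, a scalar middle distance, and a suffix fern. But the crux—the structural claim that every $(k,w)$-error alignment of $P$ decomposes through fixed pivots—has several genuine gaps.

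\begin{itemize}
\item \emph{Circularity in the anchor alignment.} Your reroute argument requires an alignment $\A : P \onto T\fragmentco{i_P}{j_P}$ of cost at most $k$ that passes through $(p,t)$ and $(p',t')$, and you justify its existence by saying it ``can be taken to pass through $(p,t)$ and $(p',t')$ by construction.'' No such construction is given: the naive candidate (concatenating the prefix alignment, a shortest $(p,t)\to(p',t')$ path, and the suffix alignment) may have total cost exceeding $k$, so it need not be a $(k,w)$-error alignment at all. The paper's \cref*{claim:cross} avoids this entirely: it never postulates an anchor alignment of the full $P$. Instead it shows that an arbitrary $(k,w)$-error alignment $\A$ intersects the \emph{computable} prefix alignment $\A_p$ (of $P_p$ only) at two points, and the reroute follows $\A_p$ between those two intersection points.

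\item \emph{Wrong pivot, single intersection.} You pivot at $(p,t)=(|P_p|,t)$, the right endpoint of $\A_p$, and only argue for one shared vertex before $(p,t)$. But a reroute needs two shared vertices straddling the pivot, and $\A_p$ stops at $(p,t)$, so no second intersection within $\A_p$ can lie strictly past your pivot. The paper fixes this by pivoting at the \emph{interior} point $(|P_p'|,p_2)$ with $|P_p'|<|P_p|$, so there is room inside $P_p$ for both intersection points; determining $p_2$ requires the breakpoint representation of $\A_p$, which is why the paper invokes \cref{lem:alg-periodic} (general weights) or \cref{lem:int_wed} (integer weights) there.

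\item \emph{Insufficient thresholds.} You use a single threshold $9k$. The two-sided intersection argument requires two nested prefixes: the paper uses $\selfed(P_p')=14k$ for the ``before-pivot'' intersection (contradiction with the bound $\le 10k$ from \cref{prop:disjoint-alignments-bound-sed}) and $\selfed(P_p)=28k$ combined with sub-additivity for the ``after-pivot'' intersection. Moreover, your assumption $\selfed(P)>9k$ does not ensure $p<p'$: sub-additivity only gives a contradiction when $\selfed(P)$ exceeds $\selfed(P_p)+\selfed(P_s)$ (plus slack), so you need a case split at roughly $2\cdot 9k$ at minimum, and the paper uses $57k$.

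\item \emph{Truncation to $\infty$ can break Monge.} The raw output $\Phi[i,j]=L[i]+d+R[j]$ has density zero and is trivially Monge, as you say. But truncating entries above $k$ to $\infty$ destroys that: e.g.\ $L=(0,3)$, $R=(0,3)$, $d=0$, $k=3$ gives a truncated $\bigl(\begin{smallmatrix}0&3\\3&\infty\end{smallmatrix}\bigr)$, which violates the Monge inequality $\Phi[0,0]+\Phi[1,1]\le\Phi[0,1]+\Phi[1,0]$. The paper simply returns $L[i]+d+R[j]$ untruncated—$k$-equivalence already allows the entries that exceed $k$ to be whatever they are.
\end{itemize}

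The remaining bookkeeping (checking $\edu{P}{T}\le \cO(k)$ first, trimming via \cref{lem:trim}, and the running-time accounting) is consistent with the paper.
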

\begin{proof}
    We first compute $\edu{P}{T}$ in $\cO(k^2)$ time in the \pillar model using
    \cref{fct:LValignment}.

    If $\edu{P}{T} > 3k$, then we can output a trivial fern all of whose entries are set
    to $k+1$. This follows from the fact that if there exist $i \leq k$ and $j \geq n-k$
    such that the $(0,i)$-to-$(m,j)$ shortest path in $\oAGw(P,T)$ has weight at most $k$,
    then
    $\edu{P}{T} \leq |T\fragmentco{0}{i}| +
    \edw{P}{T\fragmentco{i}{j}} + |T\fragmentco{j}{n}| \leq 3k$.

    If $\edu{P}{T} \in \fragmentoc{k}{3k}$, we simply replace $k$ with $\edu{P}{T}$, thus
    reducing to the case when $k\geq \edu{P}{T}$. In the end, we simply trim the returned
    fern in $\cO(k^2)$ time according to \cref{lem:trim}.

    We can thus henceforth assume that $\edu{P}{T} \le k$
    and distinguish between two cases depending on whether $\selfed(P) \leq 57k$; this can
    be determined in $\cO(k^2)$ time in the \pillar model using \cref{lem:selfed}.

    \emph{Case I: $\selfed(P) \leq 57k$.}
    We call \apmsmallsed{\tt($P$, $T$, $57k$, $w$)} from \cref{lem:small_sed}, which,
    within the stated bounds, returns a Monge fern $\Phi \in
    \fmats{P}{T}{w}{\min(n,57k)}{57k}$.
    It suffices to trim this fern in $\cO(k^2)$ time according to \cref{lem:trim} thus
    obtaining a Monge $(\min(n,k),k)$-fern of $P$ and~$T$ with respect to $w$ whose
    difference bounds and core size are inherited from the corresponding constraints on
    $\Phi$.

    \emph{Case II: $\selfed(P) > 57k$.}
    The monotonicity and continuity properties of self-edit distance
    (\cref{fct:selfed-properties})
    imply that, using $\cO(\log m)$ invocations of \cref{lem:selfed} in a binary search
    fashion, we
    can compute the following fragments in $\cO(k^2 \log m)$ time in the \pillar model:
    \begin{itemize}
        \item a prefix $P_p$ of $P$ with self-edit distance $28k$;
        \item a prefix $P_p'$ of $P$ with self-edit distance $14k$;
        \item a suffix $P_s$ of $P$ with self-edit distance $28k$;
        \item a suffix $P_s'$ of $P$ with self-edit distance $14k$.
    \end{itemize}
    Due to the monotonicity, sub-additivity, and triangle inequality properties of
    self-edit distance (\cref{fct:selfed-properties}), the fragments $P_p$ and $P_s$ do
    not overlap, that is, we have $P = P_p P_\mu P_s$ for some string $P_\mu$ of self-edit
    distance at least $57k-28k-28k= k$.

    We then intend to compute $\min_{0\le u \le v \le
    |P_p|+n-m+k}\edwk{k}{P_p}{T\fragmentco{u}{v}}$ along with a witness fragment
    $T\fragmentco{p_1}{p_3}$.
    Note that $|P_p|+n-m+k \leq |P_p|+2k$ and hence, if this minimum value is at most $k$,
    we have $p_1 \in \fragment{0}{3k}$ and $p_3 \in
    \fragmentco{|P_p|+n-m-2k}{|P_p|+n-m+k}$.
    It thus suffices to call \apmsmallsed{\tt($P_p$, $T\fragmentco{0}{|P_p|+n-m+k}$,
    $28k$, $w$)} using \cref{lem:small_sed} and iterate over its entries.
    If the computed value is $\infty$, we output a trivial fern all of whose entries are
    set to $k+1$.
    Otherwise, that is, if the computed value is not $\infty$, we then compute the
    breakpoint representation of a witness alignment~$\A_p$, and we set $T_p \coloneqq
    T\fragmentco{p_1}{p_3} = \A_p(P_p)$.
    To see that this is correct, suppose that there exists an alignment $\A: P \onto
    T\fragmentco{\ell}{r}$ of weighted cost at most $k$.
    Let $T\fragmentco{\ell}{v}\coloneqq \A(P_p)$.
    By \cref{obs:band}, we have $v \le |P_p| + n-m+k$, so we have
    $\edw{P_p}{T\fragmentco{\ell}{v}}\le k$ for $0\le i \le v \le |P_p|+n-m+k$.
    It remains to explain how we compute the breakpoint representation of $\A_p$.
    In the case of general weights, we do so using \cref{lem:alg-periodic} in $\cO(k^3
    \log m)$ time in the \pillar model.
    In the case of integer weights, we employ \cref{lem:int_wed} which requires
    $\Oh(\min\{k^2 \cdot W \log^2 n, k^{2.5} \log^3 n\})$ time plus $\Oh(k^2 \log \min
    \{n, W + 1\})$ \pillar operations.

    Symmetrically, we compute
    $\min_{m-k-|P_s| \le u \le v \le n}\edwk{k}{P_s}{T\fragmentco{u}{v}}$.
    As above, if this value is $\infty$, we output a trivial fern all of whose entries are
    set to $k+1$.
    Otherwise, that is, if this value is not $\infty$, we also obtain the breakpoint
    representation of a witness alignment $\A_s$, and we set
    $T_s \coloneqq T\fragmentco{s_1}{s_3} \coloneqq \A_s(P_s)$.
    The time complexity and the correctness of this procedure follow by arguments
    symmetric to the one made in the previous paragraph.

    We henceforth assume that we have alignments $\A_p$ and $\A_s$ at hand and set $T_p'
    \coloneqq T\fragmentco{p_1}{p_2} \coloneqq \A_p(P_p')$ and $T_s' \coloneqq
    T\fragmentco{s_1}{s_2} \coloneqq \A_s(P_s')$.

    \begin{claim}\label{claim:cross}
        For any two positions $\beta,\gamma\in \fragment{0}{n}$ such that
        $\edw{P}{T\fragmentco{\beta}{\gamma}}\le k$,
        \begin{multline*}
            \edw{P}{T\fragmentco{\beta}{\gamma}} = \edw{P\fragmentco{0}{|P'_p|}}{T\fragmentco{\beta}{p_2}} \\ +
            \edw{P\fragmentco{|P'_p|}{m-|P'_s|}}{T\fragmentco{p_2}{s_2}} \\ +
        \edw{P\fragmentco{m-|P'_s|}{m}}{T\fragmentco{s_2}{\gamma}}.\end{multline*}
    \end{claim}
    \begin{claimproof}
        We say that a vertex $(r_1,c_1)$ dominates a vertex $(r_2,c_2)$ if $r_1 \geq r_2$ and $c_1 \geq c_2$.

        Let us consider an optimal alignment $\A: P \onto T\fragmentco{\beta}{\gamma}$.
        Our goal is to show that, when we interpret alignments $\A$, $\A_p$, and $\A_s$ as
        paths in $\AGW(P,T)$, each of the following holds.
        \begin{enumerate}
            \item $\A$ and $\A_p$ share a vertex that is dominated by
                $(|P_p'|,p_2)$.\label{it:cross1}
            \item $\A$ and $\A_p$ share a vertex that dominates $(|P_p'|,p_2)$ and is
                dominated by $(|P_p|,p_3)$.\label{it:cross2}
            \item $\A$ and $\A_s$ share a vertex that dominates $(m-|P_s|,s_1)$ and is
                dominated by $(m-|P_s'|,s_2)$.\label{it:cross3}
            \item $\A$ and $\A_s$ share a vertex that dominates
                $(m-|P_s'|,s_2)$.\label{it:cross4}
        \end{enumerate}
        The setting is illustrated in \cref{fig:crossing_alignments}.
        We prove the existence of the two vertices specified in
        \cref{it:cross1,it:cross2}---the existence of the remaining ones follows by
        symmetry.

        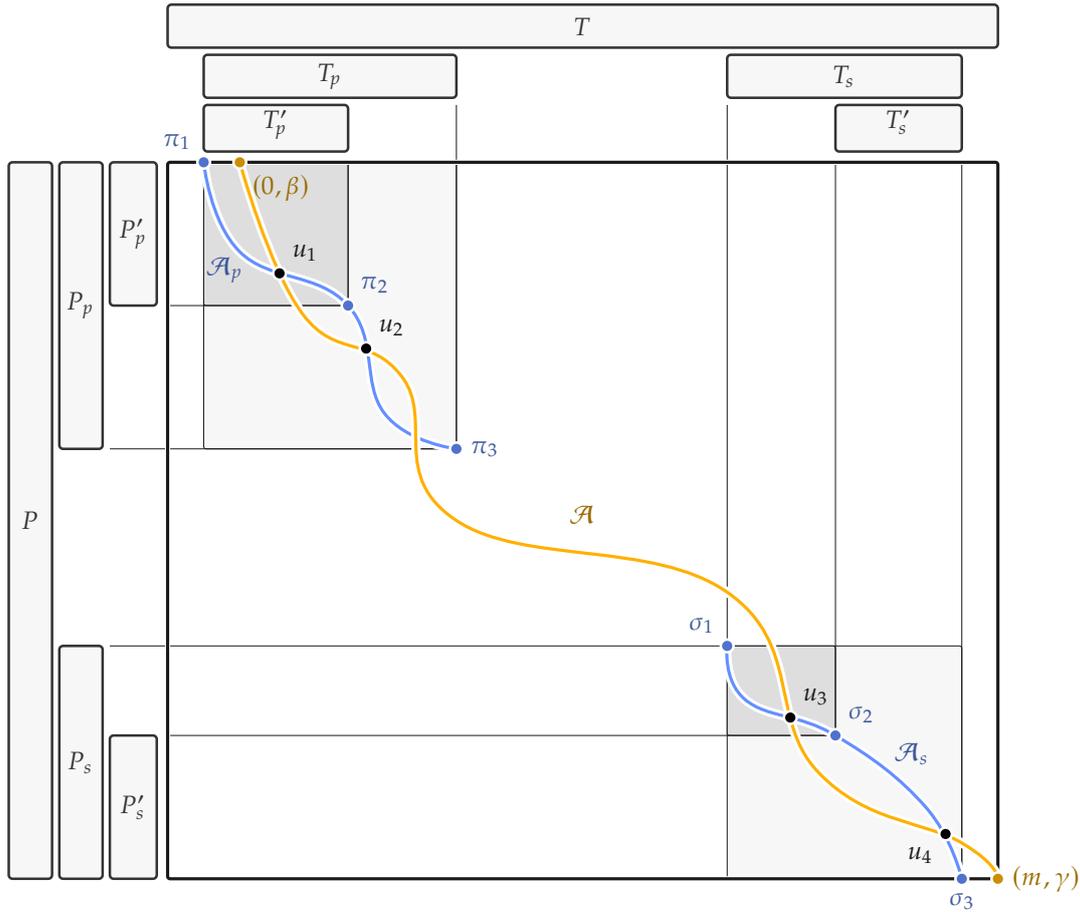
\begin{figure}[t!]
            \centering
            \scalebox{.95}{\begin{tikzpicture}[xscale=.5, yscale=-.5, black!90]
    \draw[rounded corners=1pt, thin,fill=black!3!white] (1,0) rectangle (8, 8);
    \draw[rounded corners=1pt, thin,fill=black!13!white] (1,0) rectangle (5, 4);
    \draw[rounded corners=1pt, thin,fill=black!3!white] (15.5,13.5) rectangle (22, 20);
    \draw[rounded corners=1pt, thin,fill=black!13!white] (15.5,13.5) rectangle (18.5,16);

    \draw[line width = 1pt, rounded corners = 1.5pt, black!80!white,
    fill=black!80!white!4!white] (0, -3.2) rectangle node{$T$} (23, -4.4);
    \draw[inner sep=-1pt, line width = 1pt, rounded corners = 1.5pt, black!80!white,
    fill=black!80!white!4!white] (1, -1.8) rectangle node{$T_p$} (8, -3);
    \draw[inner sep=-1pt, line width = 1pt, rounded corners = 1.5pt, black!80!white,
    fill=black!80!white!4!white] (1, -0.3) rectangle node{\smash{$T_p'$}} (5, -1.6);
    \draw[inner sep=-1pt, line width = 1pt, rounded corners = 1.5pt, black!80!white,
    fill=black!80!white!4!white] (15.5, -1.8) rectangle node{$T_s$} (22, -3);
    \draw[inner sep=-1pt, line width = 1pt, rounded corners = 1.5pt, black!80!white,
    fill=black!80!white!4!white] (18.5, -0.3) rectangle node{\smash{$T_s'$}} (22, -1.6);
    \draw[very thin] (22,-0) -- (22,20);

    \draw[line width = 1pt, rounded corners = 1.5pt, black!80!white,
    fill=black!80!white!4!white] (-3.2, 0) rectangle node{$P$} (-4.4, 20);
    \draw[line width = 1pt, rounded corners = 1.5pt, black!80!white,
    fill=black!80!white!4!white] (-1.8, 0) rectangle node{$P_p$} (-3, 8);
    \draw[very thin] (-1.6,8) -- (8,8) -- (8,-1.6);
    \draw[line width = 1pt, rounded corners = 1.5pt, black!80!white,
    fill=black!80!white!4!white] (-0.3, 0) rectangle node{$P_p'$} (-1.6, 4);
    \draw[very thin] (-0,4) -- (5,4) -- (5,-0);
    \draw[line width = 1pt, rounded corners = 1.5pt, black!80!white,
    fill=black!80!white!4!white] (-1.8, 13.5) rectangle node{$P_s$} (-3, 20);
    \draw[very thin] (-1.6,13.5) -- (15.5,13.5) -- (15.5,-1.6);
    \draw[line width = 1pt, rounded corners = 1.5pt, black!80!white,
    fill=black!80!white!4!white] (-0.3, 16) rectangle node{$P_s'$} (-1.6, 20);
    \draw[very thin] (-0,16) -- (18.5,16) -- (18.5,-0);

    \draw[rounded corners=1pt, double distance=1.25pt, white, double=black!90] (0,0) rectangle (23, 20);

    \draw [white,double=color1, double distance=1.25pt, very thick] plot [smooth, tension=.75] coordinates { (1,0) (2,2.5) (5,4) (6,7) (8,8)};
    \node[circle, inner sep=.05cm, label=left:$\color{c1}{\A_p}$] at (2.5,3) {};

    \draw [white,double=color5, double distance=1.25pt, very thick] plot [smooth, tension=.75] coordinates { (2,0) (3.75,4.25) (6.5,6) (8,10) (15.5,12) (18,17) (22,19) (23,20)};
    \node[circle, inner sep=.05cm, label=above:$\color{c5}{\A}$] at (11.5,10.5) {};

    \draw [white,double=color1, double distance=1.25pt, very thick] plot [smooth, tension=.75] coordinates { (15.5,13.5) (16,15) (18.5,16) (21,18) (22,20)};
    \node[circle, inner sep=.05cm, label=right:$\color{c1}{\A_s}$] at (19.75,16.5) {};

    \node[fill = white, circle, inner sep=.075cm] at (1,0) {};
    \node[fill = c1!50!color1,circle, inner sep=.05cm, label={above left}:$\color{c1}{\pi_1}$] at (1,0) {};
    \node[fill = white, circle, inner sep=.075cm] at (5,4) {};
    \node[fill = c1!50!color1,circle, inner sep=.05cm, label={above right}:$\color{c1}{\pi_2}$] at (5,4) {};
    \node[fill = white, circle, inner sep=.075cm] at (8,8) {};
    \node[fill = c1!50!color1,circle, inner sep=.05cm, label=right:$\color{c1}{\pi_3}$] at (8,8) {};

    \node[fill = white, circle, inner sep=.075cm] at (15.5,13.5) {};
    \node[fill = c1!50!color1,circle, inner sep=.05cm, label={above left}:$\color{c1}{\sigma_1}$] at (15.5,13.5) {};
    \node[fill = white, circle, inner sep=.075cm] at (18.5,16) {};
    \node[fill = c1!50!color1,circle, inner sep=.05cm, label={above right}:$\color{c1}{\sigma_2}$] at (18.5,16) {};
    \node[fill = white, circle, inner sep=.075cm] at (22,20) {};
    \node[fill = c1!50!color1,circle, inner sep=.05cm, label=below:$\color{c1}{\sigma_3}$] at (22,20) {};

    \node[fill = white, circle, inner sep=.075cm] at (2,0) {};
    \node[fill = c5!50!color5, circle, inner sep=.05cm, label={below right}:$\color{c5}{(0,\beta)}$] at (2,0) {};
    \node[fill = white, circle, inner sep=.075cm] at (23,20) {};
    \node[fill = c5!50!color5, circle, inner sep=.05cm, label=right:$\color{c5}{(m,\gamma)}$] at (23,20) {};

    \node[fill = white, circle, inner sep=.075cm] at (3.1,3.1) {};
    \node[fill = black, circle, inner sep=.05cm, label={above right}:$u_1$] at (3.1,3.1) {};
    \node[fill = white, circle, inner sep=.075cm] at (5.5,5.2) {};
    \node[fill = black, circle, inner sep=.05cm, label={above right}:$u_2$] at (5.5,5.2) {};
    \node[fill = white, circle, inner sep=.075cm] at (17.25,15.5) {};
    \node[fill = black, circle, inner sep=.05cm, label={above right}:$u_3$] at (17.25,15.5) {};
    \node[fill = white, circle, inner sep=.075cm] at (21.55,18.75) {};
    \node[fill = black, circle, inner sep=.05cm, label={below left}:$u_4$] at (21.55,18.75) {};

\end{tikzpicture}}
            \caption{An illustration of the setting in the proof of \cref{claim:cross}
                with $\pi_1=(0,p_1)$, $\pi_2=(|P_p'|,p_2)$, $\pi_3=(|P_p|,p_3)$, $\sigma_1
                = (m-|P_s|,s_1)$, $\sigma_2 = (m-|P_s'|,s_2)$, and $\sigma_3 = (m,s_3)$.
                Vertices $u_1$, $u_2$, $u_3$, and $u_4$ satisfy the requirements of
                \cref{it:cross1}, \cref{it:cross2}, \cref{it:cross3}, and
                \cref{it:cross4}, respectively; observe that there are vertices that
                satisfy the requirements of \cref{it:cross2}.
                An alignment that contains $(|P_p'|,p_2)$ and $(m-|P_s'|,s_2)$ and hence
                witnesses the equality at the statement of the lemma can be obtained as
                follows:
                follow $\A$ from $(0,\beta)$ to $u_1$,
                $\A_p$ from $u_1$ to $u_2$,
                $\A$ from $u_2$ to $u_3$,
                $\A_s$ from $u_3$ to $u_4$, and
                $\A$ from $u_4$ to $(m,\gamma)$.
            }\label{fig:crossing_alignments}
        \end{figure}

        \emph{Proof of \cref{it:cross1}:} Toward a contradiction, suppose that $\A$ and
        $\A_p$ do not share any vertex dominated by $(|P_p'|,p_2)$.
        Then, in particular, $\A$ and $\A_p$ do not share any diagonal edge with an
        endpoint dominated by $(|P_p'|,p_2)$.
        The condition of \cref{prop:disjoint-alignments-bound-sed} is thus satisfied and
        we have that
        \[
            \selfed(P_p')\le
            |\beta-p_1|+
            \ed_\A(P_p',\A(P_p'))
            +\ed_{\A_p}(P_p',T_p')+
            |\beta + |\A(P_p')| - p_2 | \le 3k + k + k + 5k = 10k,
        \]
        since $p_1,\beta \in \fragment{0}{3k}$ and the lengths of $\A(P_p')$
        and $T_p'$ differ by at most $k$.
        This implies that $\selfed(P_p') \leq 10k<14k$, which contradicts the definition of $P'_p$.

        \emph{Proof of \cref{it:cross2}:} Toward a contradiction, suppose that $\A$ and
        $\A_p$ do not share any vertex that dominates $(|P_p'|,p_2)$ and is dominated by
        $(|P_p|,p_3)$.
        Then, in particular, these two alignments do not share any diagonal edge with such
        an endpoint, and we can again use \cref{prop:disjoint-alignments-bound-sed}.
        Let us denote $P\fragmentco{|P_p'|}{|P_p|}$ by~$P_p''$.
        We have
        \begin{multline*}
            \selfed(P_p'')\le
            |\beta + |\A(P_p')| - p_2|+
            \ed_\A(P_p'',\A(P_p''))
            +\ed_{\A_p}(P_p'',\A_p(P_p''))+
            |\beta + |\A(P_p)| - p_3 | \\
            \le 5k + k + k + 5k = 12k,
        \end{multline*}
        similarly to before.
        By the sub-additivity and triangle inequality properties of self-edit distance
        from \cref{fct:selfed-properties}, we then have that $\selfed(P_p) \leq
        \selfed(P_p') + \selfed(P_p'') \le 14k+12k<28k$,
        which contradicts the definition of $P_p$.

        We are now ready to conclude the proof of the claim. We tweak alignment $\A$ by
        making it behave like alignment $\A_p$ between a pair of vertices $u_1$ and $u_2$
        that satisfy the conditions specified in \cref{it:cross1} and \cref{it:cross2},
        respectively, and like alignment $\A_s$ between a pair of vertices $u_3$ and $u_4$
        that satisfy the conditions specified in \cref{it:cross3} and \cref{it:cross4}.
        Since global alignments are also locally optimal, the cost of the obtained
        alignment $\A'$ remains unchanged.
        As $\A'$ is an optimal alignment and contains $(|P_p'|,p_2)$ and $(m-|P_s'|,s_2)$,
        the statement of the claim follows.
    \end{claimproof}

    We are now ready to compute the output.
    We first call \apmsmallsed{\tt($P_p'$, $T\fragmentco{0}{p_2}$, $14k$, $w$)} using the
    appropriate variant of \cref{lem:small_sed}, noting
    that its conditions are satisfied:
    $\selfed(P'_p)=14k$ and $\edu{P'_p}{T\fragmentco{0}{p_2}}\le p_1+k\le 4k$.
    Let us denote the returned fern by~$\Phi_p$.
    We also call \apmsmallsed{\tt($P_s'$, $T\fragmentco{s_2}{n}$, $14k$, $w$)},
    obtaining a fern $\Phi_s$.
    We also compute $\psi \coloneqq
    \edw{P\fragmentco{|P'_p|}{|P|-|P'_s|}}{T\fragmentco{p_2}{s_2}}$ in
    $\cO(k^3 \log^2 (nk))$ time in the \pillar model using \cref{lem:wed} in the case of
    general weights and
    in $\Oh(\min\{k^2 \cdot W \log^2 n, k^{2.5} \log^3 n\})$ time plus $\Oh(k^2 \log \min
    \{n, W + 1\})$ \pillar operations using \cref{lem:int_wed} in the case of integer
    weights.
    Finally, we compute a Monge matrix that is $k$-equivalent to the min-plus product of
    the last column $\Phi_p\position{\fragment{0}{k}, 14k}$ of $\Phi_P$ with the first row
    $\Phi_s\position{0, \fragment{13k}{14k}}$ of $\Phi_s$ and increment all entries of
    their min-plus product by $\psi$, thus obtaining a matrix $\Psi$.
    Due to \cref{claim:cross}, $\Psi \in \fmats{P}{T}{w}{k}{k}$.
    The time required for computing $\Psi$ (using one of
    \cref{thm:smawk,lm:matrix_mult_w,lm:matrix_mult_k} depending on the case) is dominated
    by the time required by the calls to the appropriate variant of \apmsmallsed.
\end{proof}

\subsection{Efficient Solutions for the \allOccs and \verify Problems}

\begin{corollary}[{\tt\listalloccs($P$, $T$, $k$, $w$)}]
    \dglabel"{cor:fragments}[lem:MSSPverify,fct:LValignment,lem:small_sed]
    \allOccs can be solved in $\cO(\min (\max(k,n-m)\cdot k^2 \log^2 (mk) , n (m+k) \log
    (nm)) )$ time in the \pillar model.

    In the case of integer weights, \allOccs can also be solved in $\cO(\max(k,n-m) \cdot
    k \log^2 (mk) \cdot \min(\sqrt{k} \log (mk),W))$ time in the \pillar model.
\end{corollary}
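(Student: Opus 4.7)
The plan is to prove the two bounds in the minimum by two complementary approaches: a direct MSSP-based algorithm for the $\cO(n(m+k)\log(nm))$ bound, and a fragment-based reduction to \growFern via \cref{lem:small_ed} for the $\cO(\max(k,n-m)\cdot k^2 \log^2(mk))$ bound (and its integer-weight refinements).

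For the direct bound, the first step is to invoke \cref{lem:MSSPverify} to build, in $\cO(nm\log(nm))$ time, a data structure over $\oAGw(P,T)$ that answers boundary-to-boundary distance queries in $\cO(\log(nm))$ time each. Because $\edw{P}{T\fragmentco{i}{j}}\le k$ forces $|j-i-m|\le k$ (each non-diagonal edge costs at least $1$ by normalization), only $\cO(nk)$ candidate pairs $(i,j)$ need to be queried; querying each and reporting those with distance at most $k$ together with the exact value gives total time $\cO((nm+nk)\log(nm))=\cO(n(m+k)\log(nm))$.

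For the fragment-based bound, the plan is to apply the standard trick as in \cref{7.9.30-1} to cover $T$ by $\cO(\max(k,n-m)/k)$ overlapping fragments $S_\ell$ of length $m+2k$, shifted by $k$. Every valid pair $(i,j)$ with $\edw{P}{T\fragmentco{i}{j}}\le k$ then lies in a unique $S_\ell$ with $i$ among its first $k+1$ positions and $j-i\in\fragment{m-k}{m+k}$. For each $S_\ell$, the plan is to invoke \apmsmalled from \cref{lem:small_ed} with threshold $3k$ to obtain a $(\min(|S_\ell|,3k),3k)$-fern of $P$ and $S_\ell$. Since $3k$-equivalence preserves all values of magnitude at most $k$, sweeping the $\cO(k^2)$ entries of each fern and filtering by value $\le k$ produces all valid pairs originating in $S_\ell$ together with their exact weighted edit distances; deduplication across fragments is straightforward (e.g., by attributing each output pair to the first $S_\ell$ that contains it). Per fragment, \cref{lem:small_ed} costs $\cO(k^3 \log^2(mk))$ for general weights, giving the claimed $\cO(\max(k,n-m)\cdot k^2\log^2(mk))$ in total; the two integer-weight variants of \cref{lem:small_ed} ($\cO(Wk^2\log^2(mk))$ and $\cO(k^{2.5}\log^3(mk))$) combine multiplicatively with the fragment count to yield $\cO(\max(k,n-m)\cdot k\log^2(mk)\cdot\min(W,\sqrt{k}\log(mk)))$.

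The only subtle point is calibrating the fragment size and the \growFern threshold so that one fern per fragment actually captures all relevant pairs. With $|S_\ell|=m+2k$ and shift $k$, the local ending coordinates of valid pairs lie in $\fragment{m-k}{m+2k-1}$, which is exactly the window covered by the trailing part of a $(3k,3k)$-fern; symmetrically, the $\fragment{0}{k}\subseteq\fragment{0}{3k}$ window covers all valid starts. The expected main obstacle is thus purely bookkeeping: ensuring that no valid fragment is dropped due to the cutoff at the fern boundary, that the $3k$-equivalence is sufficient for a threshold-$k$ output, and that the deduplication across overlapping $S_\ell$ does not introduce an extra logarithmic factor. None of these raise a fundamental difficulty.
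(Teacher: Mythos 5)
Your proposal is correct and follows essentially the same two-pronged approach as the paper: the $\cO(n(m+k)\log(nm))$ bound via a single MSSP data structure over $\oAGw(P,T)$ with $\cO(nk)$ anti-diagonal band queries, and the other bounds via the standard trick decomposing $T$ into $\cO(\max(k,n-m)/k)$ length-$(m+2k)$ fragments with a call to \cref{lem:small_ed} per fragment followed by a $\cO(k^2)$ sweep of the returned fern and deduplication by attributing each pair to a single fragment based on where its starting position falls. The only substantive difference is cosmetic: you call \growFern with threshold $3k$ whereas the paper uses $4k$; both suffice because a fragment $S_\ell$ harboring any $(k,w)$-error occurrence satisfies $\ed(P,S_\ell)\le 4k$ (well below either guard of $9k$ or $12k$ in \cref{lem:small_ed}), the resulting $(\min(|S_\ell|,3k),3k)$-fern already covers all local starts $\le k$ and all local ends $\ge m-k$, and $3k$-equivalence preserves every entry of magnitude at most $k$. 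You should add the trivial guard that if $n<m-k$ the answer is empty (so the fragment count is well-defined), and note that your deduplication criterion must be strict (e.g., local start strictly below the shift $k$) to avoid double-reporting at fragment boundaries, as the paper implicitly does via $i'\in\fragmentco{0}{k}$; these are bookkeeping refinements rather than gaps.
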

\begin{proof}
    If $n< m-k$, then $P$ has no $(k,w)$-error occurrences in $T$, so we henceforth assume $n\ge m-k$.

    First, we observe that \cref{lem:MSSPverify} directly yields an algorithm with running
    time $\cO(n(m+k) \log (nm))$.
    It suffices to preprocess the graph $\oAGw(P,T)$ according to \cref{lem:MSSPverify} in
    $\Oh(nm\log(nm))$ time and ask $\Oh(nk)$ queries in $\Oh(\log(nm))$ time each:
    For each $i\in \fragment{0}{n}$, we ask $\Oh(k)$ queries to retrieve
    $\edw{P}{T\fragmentco{i}{j}}$ for all $j\in \fragment{i}{n}\cap
    \fragment{i+m-k}{i+m+k}$.

    Let us now describe efficient algorithms for the case when $n\le m+2k$.
    Observe that if there are positions $i,j \in \fragmentco{0}{n}$ such that
    $\edw{P}{T\fragmentco{i}{j}}\leq k$, we have $\ed(P,T) \leq |T\fragmentco{0}{i}| +
    \edw{P}{T\fragmentco{i}{j}} + |T\fragmentco{j}{n}| \leq 4k$.
    We can check whether
    $\ed(P,T) \leq 4k$ in $\cO(k^2)$ time in the \pillar model using
    \cref{fct:LValignment}, and, if this is not the case return the empty set.
    In the case when $\ed(P,T) \leq 4k$, we call \apmsmalled($P$, $T$, $4k$, $w$) from
    \cref{lem:small_sed} thus obtaining a fern $\Phi \in \fmats{P}{T}{w}{\Delta}{4k}$ for
    $\Delta = \min(n,4k)$.
    In order to compute the desired set of occurrences, it then suffices to iterate over
    the entries of $\Phi$ and, for each entry $\Phi\position{\ell, r} \le k$ such that
    $\ell  \le n-\Delta+r$,
    report the fragment $T\fragmentco{\ell}{n - \Delta + r}$ along with the value $d$.
    This post-processing takes $\Oh(k^2)$ time.

    Next, we present efficient algorithms for the case when $n\ge m+2k$.
    We consider fragments $T_t\coloneqq T\fragmentco{tk}{\min\{m+(t+2)k,n\}}$ for all
    $t\in \fragment{0}{\floor{(n-m+k)/k}}$.
    Each of them is of length at most $m+2k$, so we can apply the procedure from the
    previous case to list all $(k,w)$-error occurrences of $P$ in $T_t$, along with the
    underlying costs.
    For each $(k,w)$-error occurrence $T_t\fragmentco{i'}{j'}$, if $i'\le k$, we report a
    $(k,w)$-error occurrence $T\fragmentco{i' + tk}{j' + tk}$ of the same cost.

    Every $(k,w)$-error occurrence $T_t\fragmentco{i'}{j'}$ corresponds to a $(k,w)$-error
    occurrence $T\fragmentco{i'+tk}{j'+tk}$ of the same cost, and thus the reported
    occurrences are correct.
    Moreover, $i'\in \fragmentco{0}{k}$ implies $i'+tk\in \fragmentco{tk}{(t+1)k}$, so the
    reported occurrences are distinct.
    It remains to prove that all $(k,w)$-error occurrences are indeed reported.
    For this, consider a $(k,w)$-error occurrence $T\fragmentco{i}{j}$.
    The condition $j-i\ge m-k$ implies $i \le n-m+k$. Hence, $t\coloneqq \floor{i/k}\in
    \fragment{0}{\floor{(n-m+k)/k}}$.
    Furthermore, the condition $j-i \le m+k$ implies $j \le i+m+k < (t+1)k + m+k =
    m+(t+2)k$.
    Thus, $T\fragmentco{i}{j}=T_t\fragmentco{i'}{j'}$ for $i'\coloneqq i-tk\in
    \fragmentco{0}{k}$ and $j'\coloneqq j-tk$.
    Consequently, our subroutine for $T_t$ reports $T_t\fragmentco{i'}{j'}$ as a
    $(k,w)$-occurrence of~$P$, and thus we also report $T\fragmentco{i}{j}$.

    The running time is dominated by listing the $(k,w)$-error occurrences of $P$ in the
    fragments~$T_t$.
    Due to the assumption $n\ge m+2k$, the number of such fragments can be written as
    $1+\floor{(n-m+k)/k}\le (n-m+2k)/k\le 2(n-m)/k$.
    For each of them, the procedure takes $\Oh(k^3 \log^2(mk))$ time in the \pillar model
    in the general case and $\Oh(k^2 \log^2 (mk) \cdot \min(\sqrt{k}\log(mk),W))$ time in
    the \pillar model in the case of integer weights. The stated bounds follow.
\end{proof}

\begin{corollary}[{\tt\algverify($P$, $T$, $k$, $I$, $w$)}]
    \dglabel{lem:verify}[cor:fragments]
    The \verify problem can be solved in the \pillar model in time $\cO(k^2(k + |I|)
    \log^2 (mk))$.

    In the case of integer weights, The \verify problem can be solved in the \pillar model
    in time $\cO((k+|I|) \cdot k \log^2 (mk) \cdot \min(\sqrt{k} \log (mk),W))$.
\end{corollary}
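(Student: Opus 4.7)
The plan is to reduce \verify to the \allOccs problem of \cref{cor:fragments} by partitioning the query interval $I$ into shorter pieces on which \allOccs runs cheaply. Concretely, write $I = I_1 \cup \cdots \cup I_q$ as a disjoint union of $q = \lceil |I|/k \rceil$ intervals, each of size at most $k$. For each piece $I_s = \fragment{a_s}{b_s}$, I would form the trimmed text $T_s \coloneqq T\fragmentco{a_s}{\min\{n,\, b_s + m + k + 1\}}$ (reindexed so that position $0$ of $T_s$ corresponds to position $a_s$ of $T$), and invoke \listalloccs{\tt($P$, $T_s$, $k$, $w$)}.

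The correctness of this reduction is straightforward: every $(k,w)$-error occurrence $i \in I_s$ of $P$ in $T$ satisfies $i \ge a_s$ and is witnessed by some $j \in \fragment{i+m-k}{i+m+k}$, so $j \le b_s+m+k$, and hence $T\fragmentco{i}{j} = T_s\fragmentco{i-a_s}{j-a_s}$ is reported by \listalloccs on $(P,T_s)$. Conversely, every triple $(i',j',d)$ returned by that call with $i'<|I_s|$ corresponds to an occurrence at position $i'+a_s$ of $T$. Having gathered all such triples across the $q$ pieces, I would bucket them by starting position in $T$ and, for each $i \in \OccW_k(P,T)\cap I$, keep the smallest returned $d$, which by definition equals $\min_{j\ge i}\edw{P}{T\fragmentco{i}{j}}$. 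Bucketing over $\Oh(|I|)$ positions costs only $\Oh(|I|)$ extra time.

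For the running time, note that each $T_s$ has length at most $|I_s| + m + k \le m + 2k$, so $\max(k, |T_s|-m) = \Oh(k)$. Plugging into \cref{cor:fragments}, a single call takes $\Oh(k^3\log^2(mk))$ time in the \pillar model in the general case and $\Oh(k^2 \log^2(mk)\cdot \min(\sqrt{k}\log(mk), W))$ time in the integer-weighted case. Summing over the $q = \lceil |I|/k\rceil$ pieces yields the two claimed bounds
\[
\Oh\!\left(\tfrac{|I|+k}{k}\cdot k^3\log^2(mk)\right) = \Oh\!\left(k^2(k+|I|)\log^2(mk)\right)
\]
and
\[
\Oh\!\left((k+|I|)\cdot k\log^2(mk)\cdot \min(\sqrt{k}\log(mk), W)\right),
\]
which match the statement of the corollary.

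There is no real combinatorial obstacle here; the only points that require care are (i) choosing the right-hand endpoint of $T_s$ so that it is long enough to contain the full image of every candidate alignment yet short enough that $|T_s|-m = \Oh(k)$, and (ii) taking the componentwise minimum over the different pieces to recover $\min_{j\ge i}\edw{P}{T\fragmentco{i}{j}}$ rather than merely the value witnessed inside one particular $T_s$.
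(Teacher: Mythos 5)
Your proposal is correct. It differs from the paper's proof only in how the reduction to \cref{cor:fragments} is factored: you partition $I$ into $\lceil|I|/k\rceil$ pieces of size at most $k$ and invoke \listalloccs once per piece on a trimmed text of length $\cO(m+k)$, whereas the paper's formal proof makes a \emph{single} call to \listalloccs on the trimmed text $T'\coloneqq T\fragmentco{\min I}{\min\{n,\max I + m+k\}}$ of length at most $|I|+m+k$ and then filters the output by starting position. Both routes give the same bound because the complexity of \cref{cor:fragments} is already parameterized by $\max(k,n-m)$, and the paper's proof of \cref{cor:fragments} achieves that bound internally, when $n\ge m+2k$, by carrying out essentially the same partitioning into length-$(m+2k)$ windows that you do here. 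Your version thus pushes the partitioning into the reduction while the paper delegates it to \cref{cor:fragments}, which makes the paper's proof slightly shorter; notably, your version matches the informal description given in the paper's technical overview (Section 2.2), so it is not an unintended route. The remaining details in your write-up—the choice of right endpoint of each $T_s$, the filter $i'<|I_s|$, and taking the componentwise minimum cost per starting position $i$—are handled correctly, and the final bucketing step (whose true cost is $\cO(|I|\cdot k)$ since each call may return $\Theta(k^2)$ triples, rather than $\cO(|I|)$) is in any case dominated by the main term.
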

\begin{proof}
    We assume $I\subseteq \fragment{0}{n}$ without loss of generality (otherwise, we set
    $I\coloneqq I\cap \fragment{0}{n}$).
    We define $T'\coloneqq T\fragmentco{\min I}{\min\{n,\max I + m+k\}}$ and use
    \cref{cor:fragments} to list all $(k,w)$-error occurrences of $P$ in $T'$
    along with their costs.
    While processing these $(k,w)$-error occurrences $T'\fragmentco{i'}{j'}$, we skip
    those with $i'\ge |I|$, and group the remaining ones by the starting position $i'$.
    For each starting position $i'\in \fragmentco{0}{|I|}$ with at least one $(k,w)$-error
    occurrence, we report a pair $(i,d)$, where $i\coloneqq i'+\min I$ and $d$ is the
    minimum cost among the $(k,w)$-error occurrences $T'\fragmentco{i'}{j'}$.

    Clearly, whenever we report a pair $(i,d)$, the underlying $(k,w)$-error occurrence
    $T'\fragmentco{i'}{j'}$ corresponds to a $(k,w)$-error occurrence $T\fragmentco{i}{j}$
    of cost $d$, where $i\coloneqq i'+\min I$ and $j\coloneqq j'+\min I$.
    Conversely, if $i\in I$ and $T\fragmentco{i}{j}$ is a $(k,w)$-error occurrence, then
    $j-i \le m+k$ implies $j\le i+m+k \le \max I + m+k$, and thus $T'\fragmentco{i'}{j'}$,
    where $j'\coloneqq j-\min I$, is a $(k,w)$-error occurrence.
    It is listed by the subroutine of \cref{cor:fragments}, so our algorithm reports a
    pair $(i,d)$ for $d\le \edw{P}{T\fragmentco{i}{j}}=\edw{P}{T'\fragmentco{i'}{j'}}$.

    As far as the running time is concerned, observe that $n'\coloneqq |T'|\le |I|+m+k$,
    and thus the algorithm of \cref{cor:fragments} takes $\Oh(\max(k,|I|+k)\cdot
    k^2\log(mk))=\Oh((k+|I|)k^2\log^2(mk))$ time.
    In the case of integer weights, the algorithm of \cref{cor:fragments} takes
    $\cO((k+|I|) \cdot k \log^2 (mk) \cdot \min(\sqrt{k} \log (mk),W))$ time.
\end{proof}

\section{An \texorpdfstring{\boldmath $\cOtilde(n+k^4 \cdot n/m)$}{Õ(n+k⁴·n/m)}-time Algorithm}\label{sec:reduction}

In line with several works on pattern matching, we focus on a bounded-ratio version
of \PMWED, that is, we assume that the length of the text is smaller than $\threehalfs m +
k$

\begin{remark}[The standard trick to reduce the text-length to be roughly the pattern-length]
    \dglabel{rem:std}
    Given a text of arbitrary length $n$, we can reduce the problem to $\floor{2n/m}$
    instances of the bounded-ratio version of the problem
    using the so-called \emph{standard trick}~\cite{Abrahamson}.
    That is,
    we consider the overlapping fragments
    \[T_i \coloneqq  T\fragmentco{\floor{i\cdot {m}/2}} {\min\{n, \floor{(i+3)\cdot {m}/2} + k
    - 1\}},\]
    for \(i \in \fragmentco{0}{\floor{2n/m}-1}\),
    and compute the
    $(k,w)$-error occurrences of $P$ in each of $T_0, \dots, T_{\floor{2n/m}-1}$.
    We then merge the obtained partial results.
    Thus, an algorithm that solves the bounded-ratio version of
    \PMWED in time $f(m,k,W)$ in the \pillar model (where $W$ is an upper-bound on the weights),
    directly yields an algorithm that solves an arbitrary instance of
    \PMWED in time $\cO(n/m)\cdot f(m,k,W)$ in the \pillar model.
\end{remark}

In \cref{sec:red_to_SM}, we show that we can focus on a variant of \PMWED where the
pattern is almost periodic,
that is, the pattern is at small edit distance from a string with small period.
In particular, we show how an algorithm from~\cite{ckw20,ckw22} for
\PMED can be used to reduce \PMWED to several instances of (weighted)
\verify (which we can solve using \cref{lem:verify}) and an instance of the following problem.%
\footnote{The unweighted analogue of \SM from \cite{ckw22} is called
\NPM, is formally stated below, and has an extra parameter $d$ which we have set to $16k$ in
this work.}

\SMproblem

We formalize our reduction as follows.

\begin{restatable*}[Reduction of \PMWED to \SM and \verify]{lemma}{redtoSMV}
    \dglabel{lem:red_to_SM}[prp:EIalg,cor:breaks,cor:rep_reg,lem:approx_per]
    An instance of \PMWED with $n < \threehalfs m + k$ can be reduced in
    $\cO(k^{3.5}\sqrt{\log m \log k})$ time in the \pillar model to one of the following:
    \begin{itemize}
        \item an instance \SM{\tt($P$, $T'$, $k$, $Q$, $\A_{P}$, $\A_{T'}$, $w$)},
            where $T'$ is a fragment of $T$;
        \item $\cO(k)$ instances \verify with the same $P$, $T$, $k$, and $w$, each with
            an interval of size $\cO(k)$.\qedhere
    \end{itemize}
\end{restatable*}

Our goal is to then solve \SM
by reducing it to an instance of (a variant of) the so-called
dynamic puzzle matching problem, in a similar fashion
to~\cite{ckw22}.
To this end, let us recall the notion of a puzzle from~\cite{ckw22}.

\begin{definition}[A \(\Delta\)-puzzle of strings and the value of a \(\Delta\)-puzzle, {\cite[Definition 1.4]{ckw22}}]
    \dglabel{def:puzzle}
    For a $\Delta \in \Zz$, we say that $z\ge 2$ strings $S_1,\ldots,S_z$ form a
    \emph{$\Delta$-puzzle} if
    \begin{itemize}
        \item $|S_i|\ge \Delta$ for each $i\in \fragment{1}{z}$, and
        \item $S_i\fragmentco{|S_i|-\Delta}{|S_i|} = S_{i+1}\fragmentco{0}{\Delta}$ for
            each $i\in \fragmentco{1}{z}$.
    \end{itemize}
    The \emph{value} of the puzzle is  $\val_{\Delta}(S_1,\ldots,S_z)\coloneqq S_1
    \odot S_2\fragmentco{\Delta}{|S_2|}\odot S_3\fragmentco{\Delta}{|S_3|}\cdots
    S_z\fragmentco{\Delta}{|S_z|}$.
\end{definition}

Below, we provide a weighted variant of the dynamic puzzle matching problem
from~\cite{ckw22}.
While we use a more intricate variant of this problem,
this simpler version allows us to
introduce useful notions in its (simpler) context and
help the reader develop intuition.

\begin{problem}[Dynamic\-Puzzle\-Matching]{Dynamic\-Puzzle\-Matching{\tt($k$, $\Delta$,
    $w$, $\Sb$, $\Sm$, $\Sf$)}}
    \label{kaqmrblnrj}

    \PInput{Positive integers $k$ and $\Delta$, oracle access to a normalized weight function
        \(w : \sqEsigma \to \intvl{0}{W}\),
        as well as string families $\Sb$, $\Sm$, and $\Sf$ of \emph{leading}, \emph{internal},
    and \emph{trailing} pieces, respectively.}

    \PObject{A sequence $\I=(U_1,V_1)(U_2,V_2)\cdots(U_z,V_z)$ of
        ordered pairs of strings (a \emph{DPM-sequence}), that additionally satisfies
        the following two conditions at initialization time and after each update.
        \begin{enumerate}
            \item $U_1,V_1 \in \Sb$, $U_z, V_z  \in \Sf$,
                and, for all $i\in \fragmentoo{1}{z}$, $U_i,V_i \in \Sm$,\label{it:orig}
            \item the \torn \(\tor(\I) \coloneqq \sum_{i=1}^z \big||U_i|-|V_i|\big|\)
                satisfies \(\tor(\I) \le \Delta/2 - k\).
                \label{it:unique}
    \end{enumerate}}

    \PInit{{\tt DPM-Init($\I'$)}: Initialize $\I$ as $\I'$.}

    \PUpdate{\begin{description}[left=0em..0em,itemindent=-.5em]
            \item {\tt DPM-Delete(\(i\))}: Delete the $i$-th pair of strings.
            \item {\tt DPM-Insert($(U',V')$, \(i\))}:
                Insert the pair of strings $(U',V')$ after the {$i$-th} pair of strings.
            \item {\tt DPM-Substitute($(U',V')$, \(i\)):}
                Substitute the $i$-th pair of strings with the pair of strings $(U',V')$.
    \end{description}}

    \PQuery{{\tt DPM-Query}: Return
        $\OccW_k(\I) \coloneqq
        \OccW_k(\val_{\Delta}(U_1,\ldots, U_z), \val_{\Delta}(V_1,\ldots, V_z))$
    under a promise that $U_1, \ldots, U_z$ and  $V_1, \ldots, V_z$ are $\Delta$-puzzles.}
\end{problem}

\begin{remark}
    A rather straightforward adaptation of a reduction from~\cite{ckw22} would yield an
    instance of \DPM
    with $\cO(k^3)$ update operations.
    While each of these operations can be handled in $\cOtilde(k)$ time in the unweighted
    case,
    in the weighted case we only know how to perform each of them in $\cO(k^2)$ time.
    This way, we could obtain an algorithm for \SM and, in turn, for \PMWED,
    running in time $\cOtilde(k^5)$ in the \pillar model.
\end{remark}

The overhead for processing the introduced DPM update operations in the weighted case
essentially boils down to the fact that the distance matrices we consider are not
unit-Monge (in particular, constant-bounded-difference) as they are in the unweighted
case:
while the $(\min,+)$-multiplication of $k \times k$ \emph{unit-Monge matrices} can be
performed in $\cOtilde(k)$ time~\cite{Tis15},
the $(\min,+)$-multiplication of arbitrary $k \times k$ Monge matrices requires
$\cOtilde(k^2)$ time (see~\cref{thm:smawk}).
We address the challenge posed by the absence of the unit-Monge property---and
thereby devise an algorithm that runs in $\cOtilde(k^4)$ time in the \pillar model---as
follows.
\begin{itemize}
    \item We refine \DPM to \BCDPM,
        allowing for several extra operations.
    \item We reduce the \SM problem to \BCDPM.
    \item While the total number of DPM operations in our instance of \BCDPM
        is still $\cO(k^3)$, we perform them in total time $\cOtilde(k^4)$ by
        employing our efficient implementation of \BCDPM from \cref{sec:dpm} (see
        \cref{xxdsyqwfxu}) which reduces them to $\cO(k^2)$ matrix-matrix multiplications
        and
        $\cO(k^3)$ (cheaper) matrix-vector multiplications.
\end{itemize}
We obtain the following solution for the \SM problem.

\begin{restatable*}{lemma}{solveSM}
    \dglabel"{lem:solve_SM}[lem:smallbeta,fct:rj,fct:puzzleP,fct:puzzleT,fact:druns,lem:swaps,lem:specialbound-simpl,xxdsyqwfxu]
(\SM is in time $\cO(k^4 \log^2(mk))$ in the \pillar model)
    \SM can be solved in $\cO(k^4 \log^2(mk))$ time in the \pillar model, with the output
    set $\OccW_k(P,T)$ represented as $\Oh(k^4)$ disjoint arithmetic progressions.
\end{restatable*}

Further, in \cref{sec:marking}, we obtain the following result for integer metric weight
functions.

\begin{restatable*}{lemma}{solveintSM}
    \dglabel"{lem:solve_int_SM}[lem:solve_SM,fct:rj,sus:heavy,lem:swaps,cor:klocked,lem:heavy-alg,def:fjgj,lem:correctlight,lem:heavy-total,lem:fewreds,lem:swaps,xxdsyqwfxu,lem:heavy-alg,lem:heavy-bound](\SM is in time $\Ohtilde(k^{3.5} W^4)$ in the \pillar model for metric integer weight functions)
    \SM can be solved in $\Ohtilde(k^{3.5} W^4)$ time in the \pillar model if $w$ is an
    integer metric weight function, with the output set $\OccW_k(P,T)$ represented as a
    collection of disjoint arithmetic progressions.
\end{restatable*}

All in all,
a combination of \cref{lem:red_to_SM} with \cref{lem:solve_SM,lem:solve_int_SM} and our
efficient algorithms for the \verify problem (\cref{lem:verify}) yields \cref{thm:main}.

\pillark
\begin{proof}
    Consider an instance of \PMWED with $n \leq \threehalfs m + k$.
    Using \cref{lem:red_to_SM},
    we reduce this instance, in $\cO(k^{3.5}\sqrt{\log m \log k})$ time in the \pillar model to
    \begin{itemize}
        \item an instance \SM{\tt($P$, $T'$, $k$ $Q$, $\A_{P}$, $\A_{T'}$, $w$)},
            where $T'$ is a fragment of~$T$,
        \item $\cO(k)$ instances of \verify with the same $P$, $T$, $k$, and $w$, each with an interval of size $\cO(k)$.
    \end{itemize}

    In the case of arbitrary weights, the obtained instance of \SM can be solved in
    $\cO(k^4 \log^2 (mk))$ time in the \pillar model using \cref{lem:solve_SM}, while the
    obtained instances of \verify can be solved in $\cO(k^4 \log^2 (mk))$ time in total in
    the \pillar model due to \cref{lem:verify}.
    \PMWED can thus be solved in $\cO(k^4 \log^2 (mk))$ time in the \pillar model in the
    case of arbitrary weights.

    If $w$ is an integer metric weight function, the obtained instance of \SM can be
    solved in $\Oh(k^{3.5} W \log^2 (mk))$ time in the \pillar model using
    \cref{lem:solve_int_SM}, while the obtained instances of \verify can be solved in
    $\cO(k^{3.5} W \log^2 (mk))$ time in total in the \pillar model due to
    \cref{lem:verify}.
    \PMWED can thus be solved in $\cO(k^{3.5}W \log^2 (mk))$ time in the \pillar model if
    $w$ is an integer metric weight function.
\end{proof}

\subsection{Reduction to Almost Periodic Patterns}\label{sec:red_to_SM}

As a first step in solving the bounded-ratio version of
\PMWED,
we analyze the pattern according to \cite[Lemma 6.4]{ckw20}.

\begin{lemma}[{\tt Analyze($P$, $k$)},~{\cite[Lemma 6.4]{ckw20}}]
    \dglabel{prp:EIalg}({\tt Analyze($P$, $k$)}, $\Oh(k^2)$ time \pillar model algorithm for structurally
    analyzing $P$ to obtain either breaks, repetitive regions, or to conclude that \(P\) is
    almost periodic~{\cite[Lemma 6.4]{ckw20}})
    Let $P$ denote a string of~length $m$ and let $k \le m$ denote a positive integer.
    Then, there is an algorithm that computes one of~the following.
    \begin{enumerate}
        \item A set of $2k$ disjoint fragments $B_1,\ldots, B_{2k}$ in $P$, called \emph{breaks},
            each having period $\per(B_i)> m/\alphav k$ and length $|B_i| = \lfloor
            m/\betav k\rfloor$.
        \item Disjoint fragments $H_1,\ldots, H_{r}$ in $P$, called \emph{repetitive regions},
            of~total length $\sum_{i=1}^r |H_i| \ge \deltavN/\deltavD \cdot m$ such
            that each region~$H_i$ satisfies
            $|H_i| \ge m/\betav k$ and is constructed along with a primitive approximate
            period $Q_i$
            such that $|Q_i| \le m/\alphav k$ and $\edl{H_i}{Q_i} = \ceil{\betav k/m\cdot
            |H_i|}$.\footnote{Observe that we have renamed the repetitive regions to
            \(H_{\star}\), as we use \(R_{\star}\) later with a different meaning.}
        \item A primitive approximate period $Q$ of~$P$
            with $|Q|\le m/\alphav k$ and $\edl{P}{Q} < \betav k$.\label{item:alm-per}
    \end{enumerate}
    \noindent The algorithm runs in $\Oh(k^2)$ time in the \pillar
    model.\lipicsEnd
\end{lemma}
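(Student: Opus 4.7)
The statement is attributed to \cite[Lemma 6.4]{ckw20}, so the ``proof'' in the current paper is most likely a one-line citation. Nevertheless, let me outline the structural argument one would follow to reprove it from scratch.

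The plan is to run a greedy procedure that either succeeds in producing $2k$ disjoint short, highly aperiodic fragments (the breaks), or fails and thereby exposes a global periodic structure. Concretely, fix the target length $\ell \coloneqq \lfloor m / \alphav k \rfloor$ of a break candidate. Sweep a pointer $i$ from left to right: at each step, test whether $P\fragmentco{i}{i+\ell}$ has period at most $m/\alphav k$. Using the \pillar primitives, this test reduces to a constant number of \lceOpName/\ipmOpName queries (since a small period is witnessed by a short repetitive structure that \ipmOpName can detect). If the period exceeds $m/\alphav k$, record $B_j \coloneqq P\fragmentco{i}{i+\lfloor m/\betav k\rfloor}$ as the next break, advance $i$ by $\lfloor m/\betav k \rfloor$, and otherwise advance $i$ by one. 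If $2k$ breaks are collected, return them (Case 1).

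Otherwise, the sweep has identified a long collection of overlapping positions each covered by a window of small period. By the Fine--Wilf periodicity lemma, overlapping windows with small periods share a common period, so these windows merge into maximal \emph{repetitive regions} $H_1, \ldots, H_r$, each with a primitive approximate period $Q_i$ of length at most $m/\alphav k$. A counting argument bounds $\sum_i |H_i|$ from below whenever fewer than $2k$ breaks are found: each non-repetitive position must contribute to some candidate position that was rejected, which in turn consumed a unit of the budget. If the total length $\sum_i |H_i|$ meets the threshold $(\deltavN/\deltavD) \cdot m$, return the $H_i$'s together with their $Q_i$'s (Case 2). For each $H_i$ separately, the distance $\edl{H_i}{Q_i}$ can be computed (or tightly bounded) in $\Oh(k)$ time in the \pillar model using \cref{fct:alignment}.

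Otherwise, the bulk of $P$ lies in a single repetitive region whose approximate period generalises to one for all of $P$. Formally, one shows that the dominant $H_i$ extends to an approximate period $Q$ of $P$ satisfying $|Q| \le m/\alphav k$ and $\edl{P}{Q} < \betav k$; return this $Q$ (Case 3). The verification that $\edl{P}{Q} < \betav k$ uses \cref{fct:alignment} once more, in $\Oh(k^2)$ time. Since the sweep performs $\Oh(k)$ constant-time primitive queries per position but only $\Oh(k)$ advances of length one per unit of break budget, the total running time is $\Oh(k^2)$ in the \pillar model.

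The main obstacle is the bookkeeping that turns ``not enough breaks'' into a quantitatively useful bound on $\sum_i |H_i|$, and then into a single dominant region in Case 3; this is the step where the specific constants $\alphav$, $\betav$, $\deltavN$, $\deltavD$ are calibrated, and it relies on the periodicity lemma being applied carefully to merging windows with possibly distinct short periods.
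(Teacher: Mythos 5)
The paper gives no proof here: the lemma is restated verbatim from {\cite[Lemma 6.4]{ckw20}} and closed with \texttt{\textbackslash lipicsEnd}, so your observation that the ``proof'' is a citation is correct, and your sketch is of the missing argument from scratch. Your high-level plan---a left-to-right sweep that collects breaks and otherwise coalesces short-period windows into repetitive regions, falling back to a single global approximate period when one region dominates---does match the shape of the CKW20 algorithm. However, the sketch has two concrete problems that would prevent it from working as stated.

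First, your test window is the wrong length. You set $\ell \coloneqq \lfloor m/\alphav k\rfloor$ and then test whether $P\fragmentco{i}{i+\ell}$ has period exceeding $m/\alphav k$. A string of length $\ell$ always has a period of at most $\ell$, so the test is vacuous and can never produce a break. To match the break specification in the lemma (period $> m/\alphav k$, length $\lfloor m/\betav k\rfloor$), the window to be tested should be the break candidate itself, of length $\lfloor m/\betav k\rfloor$; with $\betav = 8$ and $\alphav = 128$ this is $16$ times longer than the period threshold, which is precisely what makes a nontrivial aperiodicity test possible via \ipmOpName.

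Second, and more seriously, ``otherwise advance $i$ by one'' breaks the running-time bound. If most of $P$ consists of short-period windows, your sweep does $\Theta(m)$ iterations and hence $\Theta(m)$ \pillar{} queries, not $\Oh(k^2)$. The actual algorithm never advances by one: when a short-period window is found at $i$, one greedily extends the periodicity to the right (and usually also to the left) using \lceOpName{} queries against the periodic extension, thereby jumping ahead by possibly $\Theta(m/k)$ positions at a time. The extension is stopped as soon as the number of accumulated ``errors'' (positions where the periodicity fails) reaches the budget $\lceil \betav k/m\cdot |H_i|\rceil$; this is exactly what manufactures the equality $\edl{H_i}{Q_i} = \lceil \betav k/m\cdot |H_i|\rceil$ in the statement, an equality your sketch posits but never constructs. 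Bounding iterations then comes from the fact that each iteration produces a break (at most $2k$ of them), or pays into the global error budget of $\Oh(k)$, or terminates. Your final accounting sentence (``\,$\Oh(k)$ constant-time primitive queries per position but only $\Oh(k)$ advances of length one per unit of break budget'') does not yield $\Oh(k^2)$ for the sweep as you have described it, because the bound on unit advances is never established.

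Minor: computing $\edl{H_i}{Q_i}$ via \cref{fct:alignment} costs $\Oh(1+d^2)$ time where $d = \edl{H_i}{Q_i}$ can be $\Theta(k)$, so ``\,$\Oh(k)$ time per region'' is too optimistic; but the deeper point is that CKW20 does not post-compute these distances at all---it constructs the regions so that the error count is known by construction during the sweep.
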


If the analysis of the pattern using \cref{prp:EIalg}
yields breaks we employ the following lemma.

\begin{fact}[$\Oh(k^3)$-time \pillar model algorithm for computing candidate positions in the case of breaks, {\cite[(proofs of) Lemmas 5.21 and 6.12]{ckw20}}]\dglabel{lm:impEdA}
    Suppose that we are given a threshold $k$,
    a pattern~$P$ of~length $m$,
    disjoint breaks $B_1,\dots,B_{2k}$ in $P$, each satisfying $\per(B_i) \ge m / \alphav
    k$,
    and a text $T$ of~length $n < \threehalfs m + k$.
    Then, we can compute a set of $\cO(k)$ intervals, each of length $k$, whose union is a
    superset of $\OccE_k(P, T)$ in $\Oh(k^3)$
    time in the \pillar model.
\end{fact}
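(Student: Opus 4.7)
The plan is to use the $2k$ breaks as rigid anchors that must almost all match exactly in any unweighted $k$-error occurrence. Since the breaks are disjoint and a $k$-error alignment uses at most $k$ edits, and each edit can affect at most one break, at least $2k-k=k$ breaks must be matched exactly to substrings of $T$ in any alignment witnessing $i\in \OccE_k(P,T)$. This pigeonhole phenomenon gives at least $k$ ``votes'' per $(k,w)$-error occurrence, which I would exploit to isolate a small candidate set.

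First, for each break $B_j = P\fragmentco{p_j}{p_j+|B_j|}$, I would compute all exact occurrences of $B_j$ in $T$. Because $|B_j| \le m/\betav k$ while $\per(B_j) > m/\alphav k$, we have $|B_j| < (\alphav/\betav)\per(B_j) = 16\,\per(B_j)$, so a single \ipmOpName call on any text window of length $2|B_j|$ returns the occurrences of $B_j$ inside the window as one arithmetic progression with common difference $\per(B_j)$. Covering $T$ with $\Oh(n/|B_j|) = \Oh(k)$ overlapping windows yields all exact occurrences of $B_j$ as a union of $\Oh(k)$ arithmetic progressions in $\Oh(k)$ \pillar operations. Over all $2k$ breaks, this costs $\Oh(k^2)$ \pillar operations and produces a total of $\Oh(k^2)$ implied ``vote centers'' $t - p_j$, where $(j,t)$ ranges over all (break, exact occurrence) pairs.

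Second, observe that if $i \in \OccE_k(P,T)$ and an optimal alignment matches $B_j$ exactly to $T\fragmentco{t}{t+|B_j|}$, then $|i + p_j - t| \le k$ because the alignment has at most $k$ insertions/deletions before position $p_j$. Thus each vote center $t-p_j$ certifies an interval $\intvl{t-p_j-k}{t-p_j+k}$ of plausible starting positions of $P$; call $i$ a \emph{candidate} if it lies inside at least $k$ such certifying intervals coming from distinct breaks. By the pigeonhole argument above, $\OccE_k(P,T)$ is contained in the candidate set. To aggregate the $\Oh(k^2)$ vote intervals efficiently, I would sort the vote centers and run a sweep that maintains, for each position, the number of distinct breaks whose vote interval currently covers it; the maximal runs of positions where this counter reaches $k$ are exactly the candidate clusters. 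This sweep runs in $\Oh(k^2 \log k)$ time and returns each maximal cluster together with its diameter.

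The heart of the argument, and what I expect to be the main obstacle, is to show that the candidate set admits a covering by $\Oh(k)$ intervals of length $k$. I would argue this by charging: each maximal candidate cluster consumes at least $k$ distinct break-vote intervals, and a single break's votes are separated in $T$ by at least $\per(B_j) - 2k$, so a bounded number of votes from the same break can contribute to a single cluster; together with the $\Oh(k^2)$ bound on the total number of votes, this caps the number of candidate clusters at $\Oh(k)$. Every cluster of diameter $\le k$ is already an interval of length $k$; clusters of larger diameter would be subdivided further using the structural claims behind \cite[Lemmas~5.21 and 6.12]{ckw20}, which show that long candidate runs force an approximate period on $P$, contradicting the presence of $2k$ breaks. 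The running time is dominated by the $\Oh(k^2)$ \pillar calls plus the $\Oh(k^3)$ cost of materializing and sweeping the vote intervals, matching the claimed bound.
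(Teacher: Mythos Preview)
Your overall strategy---exploit that at least $k$ of the $2k$ disjoint breaks must be matched exactly in any $k$-error occurrence, enumerate all exact occurrences of each break via $\Oh(k)$ \ipmOpName calls, and let each occurrence cast a vote for starting positions in a window of radius~$k$---is exactly the approach of~\cite{ckw20}, and your containment argument $\OccE_k(P,T)\subseteq\{\text{positions with }\ge k\text{ votes}\}$ is correct.

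The gap is in the second half, where you bound the candidate set by $\Oh(k)$ length-$k$ intervals. Your charging argument (``each maximal candidate cluster consumes at least $k$ distinct break-vote intervals'') does not work as stated: nothing prevents two separate clusters from being covered by the \emph{same} $k$ vote intervals, so you cannot simply divide the $\Oh(k^2)$ total intervals by~$k$. Your side claim that consecutive vote intervals of a single break are separated by $\per(B_j)-2k$ is only useful if this quantity is positive, which fails once $k\gtrsim\sqrt{m/256}$ (recall $\per(B_j)>m/128k$). Finally, the fallback---``long candidate runs force an approximate period on~$P$''---is a statement about actual $k$-error occurrences, not about positions that merely received~$k$ votes; and citing Lemmas~5.21 and~6.12 of~\cite{ckw20} here is circular, since those are precisely the lemmas you are reconstructing.

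The clean fix, which is what~\cite{ckw20} actually does, is to discretize \emph{before} voting: partition $\fragment{0}{n-m+k}$ into blocks of length~$k$, and for each (break, occurrence) pair $(B_j,t)$ place one mark on each of the at most three blocks meeting $\fragment{t-p_j-k}{t-p_j+k}$. Every $i\in\OccE_k(P,T)$ then lies in a block with $\ge k$ marks (one per exactly-matched break), and since there are $2k$ breaks with $\Oh(k)$ occurrences each, the total number of marks is $\Oh(k^2)$; hence at most $\Oh(k)$ blocks receive $\ge k$ marks. No clustering, no periodicity contradiction. The running time is $\Oh(k^2)$ \pillar operations plus $\Oh(k^2\log k)$ RAM time, comfortably within the stated $\Oh(k^3)$.
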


\begin{corollary}[$\cO(k^3)$-time \pillar model reduction from \PMWED to \verify in the case of breaks]
    \dglabel{cor:breaks}[lm:impEdA]
    Suppose that we are given a threshold $k$,
    a pattern $P$ of~length $m$,
    disjoint breaks $B_1,\dots,B_{2k}$ in~$P$, each satisfying $\per(B_i) \ge m / \alphav k$,
    a text $T$ of~length $n < \threehalfs m + k$,
    and oracle access to a normalized weight function $w : \sqEsigma \to \intvl{0}{W}$.

    Then, computing $\OccW_k(P,T)$ can be reduced in $\cO(k^3)$ time in the \pillar model
    to solving $\cO(k)$ instances of \verify with the same $P$, $T$, $k$, and $w$, each
    with an interval of size $\cO(k)$.
\end{corollary}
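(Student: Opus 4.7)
The plan is to combine \cref{lm:impEdA} with \cref{fact:simple} in essentially one step, since the statement has been constructed so that the reduction almost falls out of them. First I would invoke \cref{lm:impEdA} on $P$, $T$, $k$, and the given breaks $B_1,\dots,B_{2k}$ to obtain, in $\cO(k^3)$ time in the \pillar model, a collection $\mathcal{I}$ of $\cO(k)$ intervals, each of length $k$, whose union contains $\OccE_k(P,T)$. Note that this step alone dominates the claimed $\cO(k^3)$ running time, and that \cref{lm:impEdA} requires the hypothesis $\per(B_i) \ge m/\alphav k$, which is provided.

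Next, I would use \cref{fact:simple}, which states that for any normalized weight function $w$ we have $\OccW_k(P,T) \subseteq \OccE_k(P,T)$. Consequently, the union of the intervals in $\mathcal{I}$ is also a superset of $\OccW_k(P,T)$, and therefore
\[
    \OccW_k(P,T) = \bigcup_{I \in \mathcal{I}} \bigl(\OccW_k(P,T) \cap I\bigr).
\]
Computing each set on the right-hand side is exactly an instance of the \verify problem on $(P, T, k, I, w)$, and since $|\mathcal{I}| = \cO(k)$ and every $I \in \mathcal{I}$ has size $k$, these are the $\cO(k)$ \verify instances claimed in the statement.

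Finally, I would note that bookkeeping (storing the intervals, feeding them to the \verify subroutine, and taking the union of the outputs) is cheap: producing the list of intervals costs $\cO(k)$ on top of the $\cO(k^3)$ from \cref{lm:impEdA}, and the \verify calls themselves are not counted against our $\cO(k^3)$ budget. I do not foresee any real obstacle; the only thing to be slightly careful about is simply verifying that the size promise ``interval of size $\cO(k)$'' in the \verify instances matches the length-$k$ intervals produced by \cref{lm:impEdA}, which it does.
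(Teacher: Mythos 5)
Your proposal is correct and takes essentially the same approach as the paper's proof: apply \cref{lm:impEdA} to obtain the $\cO(k)$ length-$k$ intervals covering $\OccE_k(P,T)$ in $\cO(k^3)$ time, then invoke \cref{fact:simple} to conclude $\OccW_k(P,T) \subseteq \OccE_k(P,T)$ and reduce to one \verify instance per interval.
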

\begin{proof}
    We first apply \cref{lm:impEdA} to compute $\cO(k)$ $k$-length intervals $I_1, \ldots
    , I_t$ whose union is a
    superset of $\OccE_k(P, T)$ in $\Oh(k^3)$ time in the \pillar model.
    Then, as $\OccW_k(P,T) \subseteq \OccE_k(P,T)$ (see also~\cref{fact:simple}), it suffices
    to solve $\cO(k)$ instances of \verify with the same $P$, $T$, $k$, and $w$, each with
    an interval of size $\cO(k)$, and merge the results.
\end{proof}

Let us now consider the case when  the analysis of the pattern using \cref{prp:EIalg}
returns disjoint repetitive regions.
To complete the reduction in this case, we need the following result for the \NPM problem
from \cite[Section 1.3]{ckw22} that is formally defined below and is an unweighted
analogue of the \SM problem.

\NPMproblem

\begin{fact}[{\cite[Lemma 1.3]{ckw22}}]
    \dglabel^{fastnpm}(\NPM is in time $\cOtilde(d^{3.5})$ in the \pillar model)
    We can solve the \NPM problem in $\cO(d^{3.5} \sqrt{\log n \log d})$ time
    in the \modelname model.
\end{fact}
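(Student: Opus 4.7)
The plan is to reduce \NPM to a carefully designed dynamic data structure problem, exploiting the fact that both $P$ and $T$ are close to powers of the same primitive string $Q$. The high-level strategy mirrors the approach outlined in the technical overview for the weighted case (\cref{sec:techov_k4}) but uses the stronger combinatorial properties of unweighted edit distance---since this fact is cited from~\cite{ckw22}, I only sketch the broad strategy.

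First, using the alignments $\A_P$ and $\A_T$, I would decompose $P = P_1 \cdots P_p$ and $T = T_1 \cdots T_t$ into pieces, each corresponding to one block of $Q$ in $Q^\infty$ together with the edits incident to it. Since $|Q| \le m/(8d)$ and the alignments have cost $\Oh(d)$, we get $p = \Theta(m/q)$, $t = \Theta(n/q)$, and all but $\Oh(d)$ of the $P_i$ and $T_j$ equal $Q$. By \cref{obs:band}, any $k$-error occurrence of $P$ in $T$ starts within $\Oh(d)$ positions of the beginning of some $T_j$, so it suffices to examine, for each shift $j \in \fragment{0}{t-p}$, the sequence $\I_j = (P_1, T_{j+1}) \cdots (P_p, T_{j+p})$ (with the first and last pieces suitably padded to account for edits at the boundary).

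Next, I would feed the sequences $\I_j$ to a dynamic puzzle matching (\DPM) data structure, issuing updates as $j$ increases. Since most pieces within each $\I_j$ are $(Q,Q)$, they form \emph{plain runs} whose contributions to the global $(\min,+)$-product can be precomputed. Updating $\I_j$ to $\I_{j+1}$ essentially shifts the $\Oh(d)$ ``special'' pieces by one position, which translates to $\Oh(1)$ exponent changes on plain runs plus $\Oh(1)$ non-trivial substitutions per shift; globally, at most $\Oh(d^2)$ substitutions are issued. In the unweighted setting, the relevant $\Oh(d)\times \Oh(d)$ distance matrices are unit-Monge, so Tiskin's seaweed representation~\cite{Tis15} stores each in $\Ohtilde(d)$ space and supports $(\min,+)$-multiplication in $\Ohtilde(d)$ time. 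This alone already yields an $\Ohtilde(d^4)$-time bound (and the $\Ohtilde(d)$ per-query time needed for the output).

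The hard part, and the genuine contribution of~\cite{ckw22}, is the improvement from $\Ohtilde(d^4)$ to $\Ohtilde(d^{3.5})$. This requires capping every plain-run exponent at some threshold $\alpha \le d$ and proving the combinatorial statement that, for $\alpha = \Theta(\sqrt{d})$, the truncation changes the DPM answer for only $\Oh(d^2/\alpha) = \Oh(d^{1.5})$ of the $\Theta(t-p)$ shifts; the few ``bad'' shifts are handled separately by a more expensive subroutine. A careful amortization then balances the $\Oh(d^{2.5})$ cheap updates (each costing $\Ohtilde(d)$) against the $\Oh(d^{1.5})$ bad shifts. Finally, since consecutive $\I_j$ differ only in plain-run exponents, the per-shift answers merge naturally and express $\OccE_k(P,T)$ as $\Oh(d^3)$ disjoint arithmetic progressions with common difference~$q$.
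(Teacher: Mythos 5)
The paper does not prove this Fact; it simply cites it as Lemma~1.3 of \cite{ckw22}. Your sketch correctly captures the strategy of that cited proof as it is also summarized in the present paper's technical overview (\cref{sec:techov_k4}): decompose $P$ and $T$ into pieces corresponding to blocks of $Q$, reduce to \DPM with unit-Monge distance matrices handled via Tiskin's seaweed representation, cap plain-run exponents at $\alpha = \Theta(\sqrt d)$ so that only $\Oh(d^2/\alpha)$ shifts are affected, and balance the resulting costs to obtain $\cOtilde(d^{3.5})$.
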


The following result is shown in \cite[Sections 2.3 and 3.1]{ckw22}; here, we provide a
sketch of its proof.

\begin{lemma}[$\Ohtilde(k^{3.5})$-time \pillar model algorithm for computing candidate positions in the case of repetitive regions, {\cite[Sections 2.3 and 3.1]{ckw22}}]
    \dglabel{lm:impEdB}[fastnpm]
    Suppose that we are given a threshold~$k$,
    a pattern~$P$ of length $m$,
    and a text $T$ of~length $n < \threehalfs m + k$.
    Suppose that we are also given disjoint repetitive regions $H_1,\ldots, H_{r}$ in $P$
    of~total length at least $\sum_{i=1}^r |H_i| \ge {}^{\deltavN}\!/\!{}_{\deltavD}\, m$
    such that each region $H_i$ satisfies $|H_i| \ge m/\betav k$ and has a
    primitive approximate period~$Q_i$
    with $|Q_i| \le m/\alphav k$ and $\ed(H_i,Q_i^*) = \ceil{\betav k/m\cdot |H_i|}$.

    Then, we can compute a set of $\cO(k)$ intervals, each of length $k$, whose union is a
    superset of $\OccE_k(P, T)$ in $\Oh(k^{3.5} \sqrt{\log m \log k})$ time in the \pillar
    model.
\end{lemma}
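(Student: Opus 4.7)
Plan: The plan is to follow the construction from [ckw22, Sections 2.3 and 3.1] and reduce the problem to $\Oh(1)$ invocations of the \NPM algorithm (\cref{fastnpm}) with parameter $d = \Theta(k)$, each costing $\Oh(k^{3.5}\sqrt{\log m \log k})$ time.

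The key structural observation is that in any $k$-error occurrence of $P$ at position $s$ in $T$, every repetitive region $H_i$ with offset $p_i$ in $P$ must align to a fragment $T\fragmentco{s+p_i^-}{s+p_i^+}$ with unweighted edit distance at most $k$, where $p_i^-$ and $p_i^+$ each deviate from the nominal offsets by at most $k$. Thus, a single well-chosen anchor region $H = H_j$ with primitive approximate period $Q = Q_j$ suffices to produce a candidate set for the starting positions of $P$: locate all $k$-error occurrences of $H$ in $T$ and translate them by $-p_j$ with $\Oh(k)$ uncertainty.

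First, I would compute (via \cref{fct:alignment} in $\Oh(k^2)$ time) the alignment $\A_H : H \onto Q^\infty\fragmentco{0}{y_H}$ of cost $\ceil{\betav k \cdot |H|/m} = \Oh(k)$ guaranteed by the hypothesis. I would then identify a length-$\Oh(m)$ fragment $T'$ of $T$ that contains every candidate alignment of $H$, together with an alignment $\A_{T'} : T' \onto Q^\infty$ of cost $\Oh(k)$, also via \cref{fct:alignment}. The hypotheses for invoking \NPM (namely $|Q| \le m/(\alphav k) \le |H|/(8d)$ and the required bounds on the alignment costs) are satisfied for $d = \Theta(k)$ given the constraints on the repetitive regions. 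Second, I would call \NPM on this sub-instance to obtain, in $\Oh(k^{3.5}\sqrt{\log m \log k})$ time, all $k$-error occurrences of $H$ in $T'$ encoded as $\Oh(k^3)$ disjoint arithmetic progressions of common difference $|Q|$. Translating each such occurrence at position $a$ to the window $\fragmentoc{a - p_j - k}{a - p_j + k}$ of candidate starting positions of $P$ in $T$ then yields a superset of $\OccE_k(P,T)$.

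The main obstacle, which I expect to require the most careful argument, is to cover this candidate set by only $\Oh(k)$ intervals of length $k$ despite the $\Oh(k^3)$ arithmetic progressions output by \NPM. The key is a periodicity-based clustering argument from [ckw22, Section 2.3]: the length constraint $n < \threehalfs m + k$ combined with the approximate periodicity of $H$ in both $P$ and $T$ forces the translated candidates to concentrate in $\Oh(k)$ groups, each fitting within a single length-$k$ interval. A secondary technical point is constructing the alignment $\A_{T'}$ with cost $\Oh(k)$: one locates the appropriate subfragment of $T$ by scanning for the approximate periodic structure induced by $Q$ (using a constant number of \cref{fct:alignment} calls guided by the location of $H$ in $P$ and the length constraint on $T$), and one has to verify that if no suitable $T'$ exists then $\OccE_k(P,T)=\emptyset$ and an empty collection of intervals may be returned.
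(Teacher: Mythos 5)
Your single-anchor approach has a genuine gap that the ``periodicity-based clustering'' you invoke cannot close. Here is the issue concretely. Suppose $P$ contains a region $H_j$ that is (approximately) a long power of a short string $Q_j$, and suppose $T$ coincidentally contains an even longer stretch that is also (approximately) a power of $Q_j$. Then $\OccE_{k}(H_j,T)$ consists of roughly $|T|/|Q_j|$ positions spread out across that stretch, which after translating by $-p_j$ and fattening by $\pm k$ cannot be covered by only $\Oh(k)$ intervals of length $k$; nothing in the bound $n < \threehalfs m + k$ or the periodicity of $H_j$ prevents this. What rules out such spurious candidates is that the \emph{other} regions $H_i$, $i\neq j$, must \emph{also} approximately occur near the corresponding shifted positions, and that the total error across all regions is at most $k$. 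The paper's proof therefore calls \cref{fastnpm} once per region $H_i$ and then prunes via a marking (pigeonhole/voting) scheme, which is the step you are missing.

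There is a second, more technical problem with using the threshold $k$ (rather than a scaled one) in your \NPM call. \cref{fastnpm} requires the primitive period $Q$ to satisfy $|Q| \le m'/(8d)$ for pattern length $m'$ and parameter $d$. With pattern $H_j$ of length $|H_j| \ge m/(8k)$ and $d = \Theta(k)$, the right-hand side is $|H_j|/\Theta(k)$, and the only guarantee you have on $|Q_j|$ is $|Q_j| \le m/(128 k)$ — but $|H_j|/\Theta(k)$ can be as small as $\Theta(m/k^2)$, which is far below $m/(128k)$. So the preconditions of \NPM are not met. This is precisely why the paper uses thresholds $k_i = \lfloor \betavh k |H_i|/m\rfloor$ and $d_i = \lceil \betav k |H_i|/m\rceil$ that shrink linearly with $|H_i|$: then $|H_i|/(8 d_i) \ge m/(128 k) \ge |Q_i|$ does hold, the $\sum_i d_i^{3.5}$ bound telescopes to $\Oh(k^{3.5})$, and the scaled thresholds are exactly what makes the subsequent marking argument work (each genuine $k$-error occurrence of $P$ must place a $k_i$-error occurrence of at least ``half'' of the regions, in a weighted sense).
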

\begin{proof}[Proof sketch]
    For each repetitive region $H_i$, set $k_i \coloneqq \floor{\betavh \cdot k/m \cdot |H_i|}$
    and $d_i \coloneqq \ceil{\betav\cdot k/m \cdot |H_i|}$.
    We can compute $\OccE_{k_i}(H_i,T)$ by making several calls to the
    procedure encapsulated \cref{fastnpm}; consult the
    discussion following \cite[Fact 2.14]{ckw22} as well as \cite[Section 3.1]{ckw22} for
    a rigorous description of these instances.
    The total time taken by all these calls is $\Oh(k^{3.5} \sqrt{\log m \log k})$ due to \cref{fastnpm}.

    Next, using the sets $\OccE_{k_i}(H_i,T)$ we can identify in $\cO(k^2 \log \log
    k)$ time
    $\cO(k)$ length-$k$ intervals whose union is a superset of $\OccE_{k}(P,T)$ using a marking scheme.
\end{proof}

\begin{corollary}[$\Ohtilde(k^{3.5})$-time \pillar model reduction from \PMWED to \verify
    in the case of repetitive regions]
    \dglabel{cor:rep_reg}[lm:impEdB]
    Suppose that we are given a threshold $k$,
    a pattern $P$ of~length $m$,
    a text $T$ of~length $n < \threehalfs m + k$,
    and oracle access to a normalized weight function $w :\sqEsigma \to \intvl{0}{W}$.
    Suppose that we are also given disjoint repetitive regions $H_1,\ldots, H_{r}$ in $P$
    of~total length at least $\sum_{i=1}^r |H_i| \ge {}^{\deltavN}\!/\!{}_{\deltavD}\, m$
    such that each region $H_i$ satisfies $|H_i| \ge m/\betav k$ and has a
    primitive approximate period~$Q_i$
    with $|Q_i| \le m/\alphav k$ and $\ed(H_i,Q_i^*) = \ceil{\betav k/m\cdot |H_i|}$.

    Then, computing $\OccW_k(P,T)$ can be reduced in $\Oh(k^{3.5} \sqrt{\log m \log k})$
    time in the \pillar model to solving $\cO(k)$ instances of \verify with the same $P$,
    $T$, $k$, and $w$, each with an interval of size $\cO(k)$.
\end{corollary}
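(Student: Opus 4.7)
The plan is to mirror the proof of \cref{cor:breaks} essentially verbatim, since the two corollaries differ only in which structural analysis of $P$ (breaks versus repetitive regions) produces the candidate intervals. First I would invoke \cref{lm:impEdB} with the given pattern $P$, text $T$, threshold $k$, and the supplied repetitive regions $H_1,\ldots,H_r$ together with their approximate periods $Q_1,\ldots,Q_r$; this produces a collection of $\cO(k)$ intervals $I_1,\ldots,I_t$ of length $k$ whose union is a superset of $\OccE_k(P,T)$, and costs $\Oh(k^{3.5}\sqrt{\log m \log k})$ time in the \pillar model.

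The key observation that bridges the unweighted guarantee of \cref{lm:impEdB} to the weighted output we actually want is \cref{fact:simple}: since $w$ is normalized, every $(k,w)$-error occurrence is also a $k$-error occurrence, so $\OccW_k(P,T) \subseteq \OccE_k(P,T) \subseteq \bigcup_{j=1}^t I_j$. Consequently, it suffices to compute $\OccW_k(P,T) \cap I_j$ for each $j$, which is exactly the output of a \verify instance on $(P, T, k, I_j, w)$.

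The reduction therefore outputs the $\cO(k)$ \verify instances (one per interval $I_j$), all sharing the same $P$, $T$, $k$, and $w$, with each interval of size $k=\cO(k)$. Merging the outputs of these \verify instances (a simple set union, since the intervals are disjoint after trimming to their portions inside $\fragment{0}{n}$, or in any case a linear pass over the returned positions) recovers $\OccW_k(P,T)$ without additional asymptotic cost.

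The only step with any real running-time content is the invocation of \cref{lm:impEdB}, which dominates at $\Oh(k^{3.5}\sqrt{\log m\log k})$; extracting the intervals and preparing the \verify calls is trivially $\Oh(k)$ extra work. There is no genuine obstacle here beyond citing \cref{fact:simple} to pass from unweighted candidate intervals to a valid superset for the weighted problem, so the proposal is essentially a two-line argument once \cref{lm:impEdB} is in hand.
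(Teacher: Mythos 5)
Your proposal is correct and takes exactly the same route as the paper's proof: invoke \cref{lm:impEdB} to obtain $\cO(k)$ length-$k$ intervals covering $\OccE_k(P,T)$, note via \cref{fact:simple} that $\OccW_k(P,T) \subseteq \OccE_k(P,T)$, and reduce to one \verify instance per interval. Nothing to add.
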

\begin{proof}
    We apply \cref{lm:impEdB} to compute $\cO(k)$ $k$-length intervals $I_1, \ldots , I_t$ whose union is a
    superset of $\OccE_k(P, T)$ in $\Oh(k^{3.5} \sqrt{\log m \log k})$ time in the \pillar model.
    Then, as $\OccW_k(P,T) \subseteq \OccE_k(P,T)$ (see also~\cref{fact:simple}),
    it suffices to solve $\cO(k)$ instances of \verify with the same $P$, $T$, $k$, and
    $w$, each with an interval of size $\cO(k)$, and merge the results.
\end{proof}

Finally, we consider the case when the analysis of the pattern using~\cref{prp:EIalg}
returns a primitive approximate period $Q$ of~$P$ with $|Q|\le m/\alphav k$ and
$\edl{P}{Q} < \betav k$.
We show that, in this case, \PMWED
can be reduced, in $\cO(k^2)$ time in the \pillar model, to an instance \SM($P$, $T'$,
$k$, $Q$, $\A_P$, $\A_T$, $w$),
where $T'$ is a fragment of $T$.
We use the following two facts.
The first one is a simplified version of \cite[Lemma 6.8]{ckw20}, while the second one is
a simplified version of \cite[Lemma 6.5]{ckw20}

\begin{fact}[\texttt{FindRelevantFragment($P$, $T$, $k$, $d$, $Q$)},
    Trimming $T$ to an approximately periodic fragment with all approximate occurrences,~{\cite[compare Lemma
    6.8]{ckw20}}]
    \dglabel{lem:erelevant}($\Oh(d^2)$ \pillar algorithm for computing approximately
    periodic fragment of $T$ containing all approximate occurrences,~{\cite[compare Lemma 6.8]{ckw20}})
    Let $P$ denote a pattern of length~$m$, let $T$ denote a text of~length $n$,
    and let $0 \le k\le m$ denote a threshold such that $n<\threehalfs m+k$.
    Further, let $d\ge 2k$ denote a positive integer and let
    $Q$ denote a primitive string that satisfies $|Q|\le m/8d$ and $\edl{P}{Q}\le d$.

    Then, there is an algorithm that computes a fragment $T'$ of $T$ such that
    $\edl{T'}{Q}\le 3d$ and $|\OccE_k(P,T)|=|\OccE_k(P,T')|$.
    The algorithm runs in $\Oh(d^2)$ time in the \pillar model.
\end{fact}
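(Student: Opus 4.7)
The plan is to trim $T$ from both ends so that the resulting fragment $T'$ is approximately periodic (close to a substring of $Q^\infty$) while still containing all unweighted $k$-error occurrences of $P$. My approach will mirror the strategy behind~\cite[Lemma 6.8]{ckw20}, adapted to this slightly simplified setting.

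First, I would use \cref{fct:alignment} to compute an optimal alignment $\A_P : P \onto Q^\infty\fragmentco{x_P}{y_P}$ of cost at most $d$; by the stated guarantees this takes $\Oh(1+d^2)$ time in the \pillar model and, in particular, identifies the preferred shift $x_P \in \fragmentco{0}{|Q|}$ of $Q^\infty$ against which $P$ aligns cheaply. The key triangle-inequality observation is that for every (unweighted) $k$-error occurrence $T\fragmentco{i}{j}$ of $P$ in $T$ we have
\[
    \edl{T\fragmentco{i}{j}}{Q} \;\le\; \ed(T\fragmentco{i}{j},P) + \edl{P}{Q} \;\le\; k + d \;\le\; \tfrac{3}{2}d,
\]
so the portion of $T$ that carries occurrences is automatically $\tfrac{3}{2}d$-close to $Q^\infty$ at the correct rotation.

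Next, I would invoke \cref{fct:alignment} twice more, once from the left and once from the right, with the rotated periodic string $\rot^{-x_P}(Q)$ and a cost cap of $2d$. Each call extends the alignment $\A_P$ outward along $Q^\infty$ as far as possible within the cost budget, identifying a leftmost position $i_L$ and a rightmost position $j_R$ beyond which any additional character of $T$ would push the periodic alignment cost above $2d$. Because each call only produces output of cost $\Oh(d)$, both invocations run in $\Oh(d^2)$ time. I then set $T' \coloneqq T\fragmentco{i_L}{j_R}$, possibly padded by $\Oh(|Q|)$ characters on each side so that $T'$ aligns to a full block boundary of $Q^\infty$ (which only inflates costs by an additive $\Oh(d)$).

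It then remains to verify the two output guarantees. For $\edl{T'}{Q} \le 3d$, concatenating the left-extension alignment (cost $\le d$), the alignment witnessing the approximate periodicity of the occurrence-bearing core of $T$ (cost $\le d$, because $\A_P$ extended along $Q^\infty$ covers it), and the right-extension alignment (cost $\le d$) yields an alignment of $T'$ onto a substring of $Q^\infty$ of total cost $\le 3d$. For $|\OccE_k(P,T)| = |\OccE_k(P,T')|$, the triangle inequality observation above together with $n < \tfrac{3}{2}m+k$ and $|Q|\le m/(8d)$ implies that any $k$-error occurrence of $P$ in $T$ forces a $\tfrac{3}{2}d$-cheap periodic alignment around it, so its starting and ending positions cannot lie outside $\fragmentco{i_L}{j_R}$; otherwise, the maximality of $i_L$ (respectively $j_R$) would be contradicted.

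The main obstacle will be showing that these $i_L$ and $j_R$ chosen only from single-sided extensions of $\A_P$ jointly capture every $k$-error occurrence. The delicate point is ruling out the situation where a long ``bad'' stretch of $T$ (far from $Q^\infty$) sits in the middle of the text and yet is straddled by an occurrence of $P$; this is where the bounds $n < \tfrac{3}{2}m+k$ and $|Q| \le m/(8d)$ together with $d \ge 2k$ kick in, as they force any would-be occurrence to admit an essentially fully-periodic alignment whose excess over $d$ is at most $k \le d/2$, hence well within the cost budget of the left/right extensions. Once this structural lemma is in place, the total running time is dominated by the three Landau--Vishkin-style calls, each costing $\Oh(d^2)$, giving the claimed $\Oh(d^2)$ bound.
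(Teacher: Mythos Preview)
The paper does not prove this statement; it is imported as a \texttt{fact} from~\cite[Lemma~6.8]{ckw20} and used as a black box in the proof of \cref{lem:approx_per}. There is thus no in-paper argument to compare against.

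On its own merits, your sketch has a genuine gap at the second step. The alignment $\A_P$ and the rotation $x_P$ you compute first relate $P$ to $Q^\infty$; they carry no information about positions in $T$. When you then ``invoke \cref{fct:alignment} twice more \dots\ with the rotated periodic string $\rot^{-x_P}(Q)$'' and speak of ``extending $\A_P$ outward'' to locate $i_L,j_R$ in $T$, you are implicitly assuming that $T$ aligns to $Q^\infty$ at the same phase $x_P$ as $P$ does. But the correct phase for $T$ is (roughly) $x_P - i \pmod{|Q|}$, where $i$ is the unknown starting position of some $k$-error occurrence of $P$ in $T$; since $i$ ranges over an interval of length $\Theta(m)\gg |Q|$, every residue in $\fragmentco{0}{|Q|}$ is a priori possible. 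Without first anchoring $T$ against $Q^\infty$---for instance by locating a long exact run of $Q$'s inside $T$, or by running a witness-finding procedure such as \cref{lem:witness} directly on (a long enough fragment of) $T$---the positions $i_L,j_R$ you describe are not well-defined, and nothing downstream can proceed. The ``main obstacle'' you flag (an occurrence straddling a bad stretch) is secondary to this missing anchoring step; as written, the two extra calls to \cref{fct:alignment} have no specified first argument and no justified choice of phase.
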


\begin{fact}[\texttt{FindAWitness($k$, $Q$, $S$)}, $\Oh(k^2)$-time \pillar algorithm that can be used to rotate $Q$ such that $\edl{P}{Q} = \edp{P}{Q}$,~{\cite[compare Lemma
    6.5]{ckw20}}]\dglabel{lem:witness}
    Let $k$ denote a positive integer,
    let~$S$ denote a string,
    and let $Q$ denote a primitive string that satisfies $|S|\ge (2k+1)|Q|$.

    Then, we can compute a \emph{witness} $x\in \Zz$
    such that $\edp{S}{\rot^{-x}(Q)}=\edl{S}{Q}\le k$,
    or report that $\edl{S}{Q}>k$.
    The algorithm takes $\Oh(k^2)$ time in the \pillar model.
\end{fact}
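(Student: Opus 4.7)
The plan is to reduce the problem to an $\Oh(1)$ number of calls to \cref{fct:alignment}, each costing $\Oh(1{+}k^2)$. The key combinatorial observation is that if $\edl{S}{Q} \le k$, then an optimal alignment $\A : S \onto Q^\infty\fragmentco{i}{j}$ has cost at most~$k$; partitioning the prefix $S\fragmentco{0}{(2k+1)|Q|}$ into $2k{+}1$ consecutive length-$|Q|$ blocks $B_0, \ldots, B_{2k}$ leaves at least $k{+}1$ blocks matched by~$\A$ without any edit, and each such clean block, by primitivity of~$Q$, equals a uniquely determined rotation of~$Q$.

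Concretely, I would first issue $\ipmOp{Q}{S\fragmentco{j|Q|}{\min((j{+}2)|Q|, |S|)}}$ for each $j \in \fragmentco{0}{2k{+}1}$, detecting in $\Oh(1)$ time per block whether $B_j$ is a rotation of~$Q$ and, if so, recovering the rotation index $y_j \in \fragmentco{0}{|Q|}$; this uses $\Oh(k)$ \pillar operations. If none of the blocks is a rotation of~$Q$, then the combinatorial observation fails and we may safely report $\edl{S}{Q} > k$. Otherwise, the index~$y_j$ from the smallest-indexed clean block satisfies $y_j \equiv (x_0 + s_j) \pmod{|Q|}$, where $x_0 = i \bmod |Q|$ is the true witness and~$s_j \in \fragment{-k}{k}$ is the net shift accumulated by~$\A$ strictly before position~$j|Q|$. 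I then call \cref{fct:alignment}$(S, Q, y_j)$; since $\edp{S}{\rot^{-y_j}(Q)} \le \edl{S}{Q} + 2|y_j - x_0| \le 3k$, this call completes in $\Oh(1{+}k^2)$ time and returns an alignment whose initial prefix of insertions or deletions encodes the net shift~$s_j$. From this we extract $x_0 \equiv y_j - s_j \pmod{|Q|}$, and a final call to \cref{fct:alignment}$(S, Q, x_0)$ in $\Oh(1{+}k^2)$ time confirms $\edp{S}{\rot^{-x_0}(Q)} = \edl{S}{Q} \le k$ and returns the witness, or reports $\edl{S}{Q} > k$ if this value exceeds~$k$.

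The main obstacle is the extraction of~$x_0$ from the alignment returned by \cref{fct:alignment}$(S, Q, y_j)$ when $y_j \ne x_0$. The key insight is that any optimal alignment starting at offset~$y_j$ must absorb the discrepancy $|y_j - x_0|$ through roughly $|y_j - x_0|$ additional insertions or deletions at its boundary; by primitivity of~$Q$, these corrective operations cannot be freely redistributed throughout the alignment without incurring additional cost, so they appear as a structured prefix (or suffix) of the breakpoint representation returned by \cref{fct:alignment}. Inspecting this prefix therefore reveals~$s_j$, and hence~$x_0$, in time linear in its length, which is dominated by the $\Oh(1{+}k^2)$ cost of \cref{fct:alignment} itself. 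The total running time is therefore $\Oh(k^2)$ as claimed.
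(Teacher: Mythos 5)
Your proposal has a genuine gap and would return incorrect answers. Two distinct steps are at fault.

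First, the $\ipmOpName$ queries only detect whether a block $B_j$ \emph{is} a rotation of~$Q$, not whether the (unknown) optimal alignment $\A$ aligns $B_j$ without edits; yet the relation $y_j \equiv (x_0 + s_j) \pmod{|Q|}$ with $|s_j|\le k$ is justified only in the latter situation. A block touched by~$\A$ may still coincidentally equal $\rot^{-z}(Q)$ for a $z$ that is far from~$x_0$, and choosing the smallest-indexed detected block offers no protection against this. (A smaller issue: the correct abort condition is ``fewer than $k{+}1$ blocks are rotations of~$Q$,'' not ``none is.'')

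Second, and independently, the claim that the corrective indels of $\alignment(S,Q,y_j)$ appear at the prefix (or suffix) of its breakpoint representation is false; primitivity of~$Q$ does not force them there. Concretely, take $Q = \texttt{aaaaab}$, $k = 2$, and
\[
S \;=\; \rot^{-3}(Q)\cdot Q^{4} \;=\; \texttt{aabaaa}\,(\texttt{aaaaab})^{4},
\]
so that $|S| = (2k{+}1)|Q|$. One checks $\edl{S}{Q} = 2$, attained at offset $x_0 = 0$ by aligning $B_0 = \texttt{aabaaa}$ with $Q = \texttt{aaaaab}$ via two substitutions (positions~$2$ and~$5$) and matching the remaining four blocks exactly. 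But $B_0 = \rot^{-3}(Q)$, so your algorithm sets $y_0 = 3$. Further, $\edp{S}{\rot^{-3}(Q)} = 3$, and \emph{every} cost-$3$ alignment from offset~$3$ begins and ends with a match: for instance, match $B_0$ cleanly against $Q^\infty\fragmentco{3}{9}$, then delete three characters of~$S$ to re-enter the $0\bmod 6$ phase, then match the rest. The three corrective deletions sit strictly in the interior. Reading off the prefix therefore gives $s_0 = 0$, so you extract the offset~$3$; the confirming call returns cost $3 > k = 2$, and the algorithm wrongly declares $\edl{S}{Q} > k$. (For larger $k$ it would instead wrongly certify offset~$3$ at cost~$3$ rather than offset~$0$ at cost~$2$.)

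The intended argument --- \cite[Lemma 6.5]{ckw20}, whose weighted analogue is spelled out as \cref{lm:witness} --- does not read the offset off a single block: it \emph{votes}. For each of the $2k{+}1$ blocks~$B_i$, it records the (unique modulo~$|Q|$, by primitivity, or empty) offset class at which $B_i$ exactly matches~$Q^\infty$. A length-$(k{+}1)$ window $\fragment{p}{p+k}$ of offsets is declared a \emph{candidate} iff at least $k{+}1$ blocks record an offset inside it. The $k{+}1$ blocks left untouched by~$\A$ guarantee that the true witness lies in a winning window, while a primitivity argument bounds the union of winning windows modulo~$|Q|$ by $3k{+}1$. Directly checking those $\Oh(k)$ surviving offsets then yields the $\Oh(k^2)$ bound. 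Your shortcut of inferring the witness from a single alignment call does not replicate this filtering.
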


\begin{lemma}[$\cO(k^2)$-time \pillar model reduction from \PMWED to \SM in the case of approximate periodicity]
    \dglabel{lem:approx_per}[lem:erelevant,lem:witness,fct:alignment]
    Suppose that we are given a threshold $k$,
    a pattern $P$ of length~$m$,
    a primitive approximate period $Q$ of~$P$
            with $|Q|\le m/\alphav k$ and $\edl{P}{Q} < \betav k$,
    a text $T$ of~length $n < \threehalfs m + k$,
    and oracle access to a normalized weight function $w : \sqEsigma \to \intvl{0}{W}$.

    Then, computing $\OccW_k(P,T)$ can be reduced in $\cO(k^2)$ time in the \pillar model to solving an instance \SM{\tt($P$, $T'$, $k$, $Q$, $\A_{P}$, $\A_{T'}$, $w$)},
            where $T'$ is a fragment of $T$.
\end{lemma}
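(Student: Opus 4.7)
The plan is to produce the three objects demanded by the \SM interface (a suitable fragment $T'$, an alignment $\A_P$, and an alignment $\A_{T'}$) by invoking the three facts cited in the statement in sequence, each costing $\cO(k^2)$ \pillar time.

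First, apply \cref{lem:erelevant} with $d \coloneqq 8k$ to trim $T$ to a fragment $T'$. The prerequisites hold: $d = 8k \ge 2k$; $|Q| \le m/128k = m/(16 d) \le m/(8d)$; and $\edl{P}{Q} < 8k = d$. The output satisfies $\edl{T'}{Q} \le 3d = 24k \le 48k$ and preserves the $(k)$-error occurrences of $P$ in $T$ up to the fixed shift given by the starting position of $T'$ in $T$. By \cref{fact:simple}, $\OccW_k(P,T) \subseteq \OccE_k(P,T)$, so the same shift recovers $\OccW_k(P,T)$ from $\OccW_k(P,T')$. The upper length bound $|T'| \le |T| < \threehalfs m + k$ is inherited from $T$; if $|T'| < m-k$, then $\OccW_k(P,T') = \varnothing$ is reported directly without invoking \SM.

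Second, apply \cref{lem:witness} to $(P, Q)$ with threshold $8k$ and to $(T', Q)$ with threshold $24k$ to obtain witnesses $x_P, x_{T'} \in \mathbb{Z}_{\ge 0}$ realizing $\edp{P}{\rot^{-x_P}(Q)} = \edl{P}{Q}$ and $\edp{T'}{\rot^{-x_{T'}}(Q)} = \edl{T'}{Q}$. The length prerequisites $|P| \ge 17|Q|$ and $|T'| \ge 49|Q|$ hold easily since $|Q| \le m/128k$ and $|T'| \ge m - k$. Third, feed these witnesses into \cref{fct:alignment} to construct the breakpoint representations of $\A_P : P \onto Q^\infty\fragmentco{x_P}{y_P}$ of cost $d_P = \edl{P}{Q} < 8k \le 16k$ and $\A_{T'} : T' \onto Q^\infty\fragmentco{x_{T'}}{y_{T'}}$ of cost $d_T = \edl{T'}{Q} \le 24k \le 48k$, both in $\cO(k^2)$ time. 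Finally, reduce $x_T \coloneqq x_{T'} \bmod q$ so that $x_T \in \fragmentco{0}{q}$, shifting $\A_{T'}$ by an integer multiple of $q$ accordingly (this leaves its cost unchanged since $Q^\infty$ is periodic).

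The resulting tuple $(P, T', k, Q, \A_P, \A_{T'}, w)$ satisfies every constraint in the definition of \SM, and $\OccW_k(P,T)$ is recovered from the returned set by the aforementioned shift. The total \pillar running time is dominated by the three $\cO(k^2)$ subroutines. The only subtle point (and the main obstacle) is verifying that the trimming of \cref{lem:erelevant} genuinely transports occurrences by a shift rather than merely preserving their number; this follows from the construction in~\cite{ckw20}, where $T'$ is obtained by discarding prefixes and suffixes of $T$ guaranteed to contain no $(k)$-error occurrences of $P$, so no occurrence is lost or duplicated and each retained one is merely re-indexed.
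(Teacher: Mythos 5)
Your proposal omits a step that the \SM interface makes non-negotiable: you produce $\A_P : P \onto Q^\infty\fragmentco{x_P}{y_P}$ for the \emph{original} $Q$, with $x_P$ the witness returned by {\tt FindAWitness}, but \SM requires an alignment $\A_P : P \onto Q^\infty\fragmentco{0}{y_P}$ to a \emph{prefix} of $Q^\infty$, and furthermore stipulates $d_P = \edl{P}{Q} = \ed(P,Q^*) = \edp{P}{Q}$. For a generic $Q$ neither holds---the optimal substring of $Q^\infty$ need not start at position $0$. The fix (which is exactly what the paper's proof does, following~\cite{ckw22}) is to first call {\tt FindAWitness} and {\tt Alignment} on $P$ to find the optimal starting offset $x$, then \emph{extract a fragment $Q'$ of $P$} that the resulting unweighted alignment matches edit-free against a full period $Q^\infty\fragmentco{x'}{x'+q}$ with $x'\equiv x\pmod q$, and replace $Q$ by $Q'$. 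This serves two purposes at once: it rotates $Q$ so that $\edl{P}{Q}=\edp{P}{Q}$ (making $x_P=0$ and satisfying the \SM constraint), and it realizes the rotated string as a fragment of $P$, which is essential because \SM needs a \pillar handle to $Q$ and rotation is not a primitive \pillar operation. Only after this replacement are $\A_P$ and $\A_{T'}$ recomputed (for the new $Q$), with the text witness reduced modulo $q$ as you describe. Without this step, the tuple $(P,T',k,Q,\A_P,\A_{T'},w)$ you hand to \SM violates its input specification, so the reduction is incomplete rather than merely inefficient.

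Two smaller remarks. The paper calls \cref{lem:erelevant} with $d=16k$, which is the choice that makes its prerequisite $|Q|\le m/(8d)$ coincide exactly with the hypothesis $|Q|\le m/(128k)$; your $d=8k$ also works (it demands only $|Q|\le m/(64k)$ and yields $\edl{T'}{Q}\le 24k\le 48k$), so this is a stylistic divergence rather than an error. However, your stated length prerequisites ``$|P|\ge 17|Q|$ and $|T'|\ge 49|Q|$'' misread \cref{lem:witness}: its condition is $|S|\ge (2k'+1)|Q|$ where $k'$ is the \emph{threshold argument}, not the constant in front of $k$. With thresholds $8k$ and $24k$ one needs $|P|\ge (16k+1)|Q|$ and $|T'|\ge(48k+1)|Q|$, both of which do hold given $m\ge 128k|Q|$ and $|T'|\ge m-k$, but the arithmetic you wrote is not what is actually required.
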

\begin{proof}
    We first use \cref{lem:erelevant} with $d=16k$ in order to replace~$T$ by its fragment
    $T'$
    which contains all $k$-error occurrences of $P$ (and hence all $(k,w)$-error
    occurrences by \cref{fact:simple}) and is at small edit distance from a fragment of
    $\Q$;
    it is readily verified that all conditions of \cref{lem:erelevant} are satisfied in
    this call.
    We can assume that the obtained text is of length at least $m-k$ as otherwise
    $\OccW_k(P,T) = \emptyset$.

    We then use \cref{lem:witness} to rotate $Q$ so that it satisfies $\edl{P}{Q} =
    \edp{P}{Q}$ as follows (similarly to~\cite{ckw22}).
    We make a call {\tt FindAWitness($16k$, $Q$, $P$)}
    to derive $x\in \Zz$ such that $\edp{P}{\rot^{-x}(Q)}=\edl{P}{Q}$ in $\cO(k^2)$ time in the \pillar model,
    noting that $|P| \ge 128k|Q| \ge (2\cdot 16k + 1)|Q|$.
    Then, we make a call \alignment{$P$, $Q$, $x$} to the function of
    \cref{fct:alignment},
    which yields an optimal (unweighted) alignment $\A : P \onto
    Q^\infty\fragmentco{x}{y}$ of cost $\edl{P}{Q}$
    in $\cO(k^2)$ time in the \pillar model.
    Next, we extract a fragment $Q'$ of $P$
    that $\A$ matches without any edits against $Q^\infty\fragmentco{x'}{x'+|Q|}$
    for some $x'\equiv x \bmod {|Q|}$.
    Finally, we replace $Q$ with~$Q'$, and set $q \coloneqq |Q|$.

    In order to complete the reduction, it suffices to obtain optimal (unweighted)
    alignments from \(P\) and~\(T\) to fragments of \(\Q\),
    since we have already shown that all the assumptions of \SM are satisfied.
    A call \alignment{$P$, $Q$, $x$} returns an optimal alignment $\A_P: P \onto
    Q^\infty\fragmentco{0}{y_P}$ of cost $d_P = \edl{P}{Q}$
    (for some $y_P\in \Zz$) in $\cO(k^2)$ time in the \pillar model.
    Then, we perform a call {\tt FindAWitness($48k$, $Q$, $T$)}
    to compute $x\in \Zz$ such that $\edp{P}{\rot^{-x}(Q)}=\edl{P}{Q}$,
    noting that $|T| \ge m - k \geq 128k|Q| - k \ge (2\cdot 48k +1)|Q|$.
    This call takes $\cO(k^2)$ time in the \pillar model.
    We set $x_T$ to be the unique integer in $\fragmentco{0}{q}$ that is equivalent to $x$
    modulo $q$.
    Then, a call \alignment{$T$, $Q$, $x_T$}
    returns an optimal alignment $\A_T : T \onto Q^\infty{\fragmentco{x_T}{y_T}}$ of cost
    $d_T = \edl{T}{Q}$
    (for some $y_T\in \Zz$) in $\Oh(k^2)$ time in the \pillar model.
\end{proof}

We are now ready to complete the reduction.

\redtoSMV
\begin{proof}
    We analyze the pattern according to \cref{prp:EIalg}.
    Depending on whether this analysis returns breaks, repetitive regions, or an
    approximate period, we employ \cref{cor:breaks}, \cref{cor:rep_reg}, or
    \cref{lem:approx_per}, respectively.
\end{proof}

\subsection{Puzzle Pieces for $P$ and $T$}\label{subsec:puzzles}

For the rest of \cref{sec:reduction}, we fix an instance \SM{\tt($P$, $T$, $k$, $Q$, $\A_P$, $\A_T$, $w$)}
and set {$\kappa \coloneqq k+d_P+d_T$}
and {$\tau \coloneqq q\ceil{{\kappa}/{2q}}$}.

After providing intuition, we encapsulate the main ideas of \cite[Section 3.2]{ckw22} in \cref{fct:rj}.
For the sake of an intuitive explanation, let us assume here that $q \gg k$.
Due to the primitivity of $Q$, the fact that $d_P$ and $d_T$ are $\cO(k)$, and the large
number of $Q$'s exact occurrences in both $P$ and any fragment of~$T$ of length roughly
$m$, each $k$-error occurrence of the pattern much start $\cO(k)$ positions away from a
position of $T$ that $\A_T$ aligns with a position of $\Q$ equivalent to $0 \pmod q$.
This allows us to create $\cO(m/q)$ substrings $R_j$ of $T$, each of length $m+\cO(k)$,
that contain all $k$-error occurrences of $P$ in $T$.
This intuition is formalized below (for arbitrary values of $q$ and $k$).

\begin{lemmaq}[$\cO(d_T+1)$-time algorithm for computing a collection of short substrings $R_j$ of $T$ that capture all $k$-edit occurrences and there are only a few distinct ones, {\cite[Section 3.2]{ckw22}}]\dglabel{fct:rj}
    Consider an instance of the \SM problem.
    Let $\kappa \coloneqq k+d_P+d_T$, $\tau \coloneqq q\ceil{\kappa/2q}$, and
    \[J \coloneqq \fragment{\ceil{{(x_T-\kappa)}/{\tau}}}{\floor{{(y_T-y_P+\kappa)}/{\tau}}}
    =\{j\in \mathbb{Z} \mid x_T \le j\tau+\kappa  \text{ and } j\tau + y_P -\kappa \le y_T\}.\]
    There exists a family $(R_j)_{j\in J}$ of substrings of $T$, where
    $R_j = T\fragmentco{r_j}{r'_j}$ for all $j\in J$, such that:
    \begin{itemize}
        \item $|R_j| \leq m + 3\kappa - k$ for all $j \in J$, and
        \item $\OccE_k(P, T)=\bigcup_{j\in J} \big(\OccE_k(P, R_j) + r_j \big)\subseteq
            \bigcup_{j\in \Z} \fragment{jq-x_t-\kappa-d_T}{jq-x_T+\kappa+d_T}$.
    \end{itemize}

    Given the alignment $\A_T$, in \(\Oh(d_T + 1)\) time, we can construct the sequences
    $(r_j)_{j\in J}$ and $(r'_j)_{j\in J}$, represented as concatenations of $\Oh(d_T+1)$
    arithmetic progressions with difference $\tau$.%
    \footnote{Parameters $\kappa$ and $\tau$ are defined in the first sentence of
    \cite[Section 3.2]{ckw22}, $J$ and the $R_j$s are defined in \cite[Definition
    3.6]{ckw22}, the bound on the lengths of the $R_j$s is from \cite[Remark 3.9]{ckw22},
    while $\OccE_k(P, T)=\bigcup_{j\in J} \big(\OccE_k(P, R_j) + r_j \big)$ is from
    \cite[Corollary 3.11]{ckw22}. The efficient computation of sequences $(r_j)_{j\in J}$
    and
    $(r'_j)_{j\in J}$ is from \cite[Lemma 3.12]{ckw22}.}
\end{lemmaq}

Finally, let us present a first algorithm for \SM, which is particularly useful for small
sets
\(J\), that is, when \(q \gg k\).

\begin{lemma}[Efficient algorithm for the \SM problem for the case $q \gg k$]
    \dglabel{fct:verifyRj}[fct:rj,lem:verify]
    For each $j\in J$, we can compute the set $\OccW_k(P,R_j)$
    in $\cO(k^3 \log^2 (mk))$ time in the \pillar model.
    In particular, we can solve the \SM problem in $\cO(k^3|J| \log^2 (mk))$ time in the
    \pillar model.

    In the case of integer weights, for each $j\in J$, we can compute the set
    $\OccW_k(P,R_j)$
    in $\cO(k^{2} W \log^2 (mk))$ time in the \pillar model and hence solve the \SM
    problem in $\cO(k^{2} W |J| \log^2 (mk))$ time in the \pillar model.

    In either case, we explicitly return the output set $\OccW_k(P,T)$, which is of size
    $\cO(k|J|)$.
\end{lemma}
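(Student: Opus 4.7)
The plan is to reduce each fragment $R_j$ to a single call of \verify on a short interval and then union the results. First I would invoke \cref{fct:rj} to compute the sequences $(r_j)_{j \in J}$ and $(r'_j)_{j \in J}$ in $\Oh(d_T+1) = \cO(k)$ time, obtaining \pillar handles to the fragments $R_j = T\fragmentco{r_j}{r'_j}$ via \extractOpName. Because $d_P \le 16k$ and $d_T \le 48k$, we have $\kappa = k+d_P+d_T = \cO(k)$, and hence $|R_j| \le m + 3\kappa - k = m + \cO(k)$.

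The key structural observation is that $\OccW_k(P, R_j)$ is contained in an interval of length $\cO(k)$. Indeed, any $i \in \OccW_k(P, R_j)$ admits a witness $R_j\fragmentco{i}{i'}$ with $\edw{P}{R_j\fragmentco{i}{i'}} \le k$, so by normalization of $w$ its length satisfies $i' - i \ge m - k$, and coupled with $i' \le |R_j| = m + \cO(k)$ this forces $i \le \cO(k)$. I would therefore set $I_j \coloneqq \fragment{0}{|R_j| - m + k}$ (an interval of size $\cO(k)$) and invoke \cref{lem:verify} on the instance $(P, R_j, k, I_j, w)$. The general bound of \cref{lem:verify} yields $\cO(k^2(k+|I_j|)\log^2(mk)) = \cO(k^3 \log^2(mk))$ per $j$, while its $W$-branch gives $\cO((k+|I_j|) \cdot kW \log^2(mk)) = \cO(k^2 W \log^2(mk))$ per $j$ in the integer-weight case.

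Finally, I would aggregate the output as
\[
    \OccW_k(P, T) = \bigcup_{j \in J} \bigl(\OccW_k(P, R_j) + r_j\bigr).
\]
The $\supseteq$ direction is immediate since each $R_j$ is a fragment of $T$ and any weighted witness inside $R_j$ lifts to one in $T$; the $\subseteq$ direction follows by the same reasoning that underlies the unweighted identity in \cref{fct:rj}, using that every $(k,w)$-error occurrence of $P$ in $T$ spans a fragment of length at most $m+k$ and that the $R_j$ are by construction long enough to accommodate such a fragment whenever its starting position is covered. Summing the per-$j$ running time across $J$ produces the two claimed bounds, and since each $|I_j| = \cO(k)$ the explicit output has size $\cO(k|J|)$. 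The whole argument is essentially a routine wrapper around \cref{fct:rj,lem:verify}, so I do not anticipate any real obstacle beyond the elementary $\cO(k)$ bound on $|I_j|$.
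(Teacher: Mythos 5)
Your proof follows essentially the same route as the paper's: split $T$ into the fragments $R_j$ from \cref{fct:rj}, apply \verify (\cref{lem:verify}) to each, and union the results, using \cref{fact:simple} to transfer the covering identity from the unweighted to the weighted setting. The paper leaves implicit the $\cO(k)$-sized interval argument that must be passed to \verify for the $|I|$-dependent bounds of \cref{lem:verify} to yield the claimed per-$j$ running times; your explicit choice $I_j = \fragment{0}{|R_j|-m+k}$, together with the observation that $\OccW_k(P,R_j) \subseteq I_j$, supplies exactly that detail and is the right way to instantiate the call.
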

\begin{proof}
    A combination of \cref{fct:rj} with \cref{fact:simple} yields that
    $\OccW_k(P, T)=\bigcup_{j\in J} \big(\OccW_k(P, R_j) + r_j \big)$.
    Now, recall from \cref{fct:rj} that, for each $j \in J$, we have $|R_j| \leq m+3\kappa-k$.
    Thus, for each $j \in J$, we can compute $\OccW_k(P,R_j)$ using
    \algverify{} from \cref{lem:verify}.

    For general weights, each call to \algverify{} takes
    $\cO(k^3 \log^2 (mk))$ time in the \pillar model
    and hence, by merging the partial results, we obtain a \pillar algorithm with running
    time $\cO(k^3|J| \log^2 (mk))$.

    In the case of integer weights, each call to \algverify{} takes
    $\cO(k^2 W \log^2 (mk))$ time in the \pillar model
    and hence, by merging the partial results, we obtain a \pillar algorithm with running
    time $\cO(k^{2} W |J| \log^2 (mk))$.
\end{proof}

We next recall the definitions (from \cite{ckw22}) of suitable puzzle pieces for the strings \(P\) and~\(T\)
toward a reduction from \SM to \DPM,
as well as several results about their properties and efficient computation.

We define partitions of \(P\) and \(T\) into
\emph{tiles} from which appropriate puzzle pieces can be then constructed.

\begin{definition}[$\tau$-tile partition of a string with respect to a primitive string, {\cite[Definition 4.1]{ckw22}}]
    \label{def:can_part}
    Consider a string $S$, a primitive string $Q$ of length $q$, an integer $\tau\in \Zp$
    divisible by $q$, and an alignment $\A_S : S \onto Q^\infty\fragmentco{x_S}{y_S}$, where
    $x_S\in \fragmentco{0}{q}$ and $y_S\ge x_S$.

    Partition \(\Q\) into blocks of length \(\tau\) and number them starting from \(1\).
    For the \(j\)-th block \(Q_j \coloneqq \Q\fragmentco{\max\{x_S, (j-1)\tau\}}{\min\{y_S,
    j\tau\}}\), we define the \emph{\(j\)-th tile of \(S\) (with respect to \(\A_S\))} as\[
        S\fragmentco{s_{i-1}}{s_i} \coloneqq \A_S^{-1}(Q_j).
    \]
    Further, we define the \emph{$\tau$-tile partition of $S$ with respect to $\A_S$}
    as the partition \[
    S = \bigodot_{i=1}^{\beta_S} S\fragmentco{s_{i-1}}{s_i},\]
    where \(\beta_S \coloneqq \ceil{{y_S}/{\tau}}\) is the index of the last non-empty tile of \(S\).
\end{definition}

\begin{fact}[Upper bound on the number of substrings constructed by \cref{fct:rj} with
    respect to the number of tiles in $P$, {\cite[proof of Lemma
    4.5]{ckw22}}]\label{fact:Jbetap}
    We have $|J|=\Oh(\beta_P)$.
\end{fact}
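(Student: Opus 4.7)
The plan is a direct arithmetic verification chaining the definitions of $|J|$, $\beta_P$, $\tau$, and $\kappa$, together with the problem's length bound $n < \threehalfs m + k$ and the cost bounds on the alignments $\A_P$ and $\A_T$.

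First, I would unfold the definition of $J$ and bound its size by the ceiling/floor difference:
\[
|J| \;\le\; \frac{y_T - y_P - x_T + 2\kappa}{\tau} + 1.
\]
Next, I would translate the alignment cost bounds into length bounds. Since $\A_T : T \onto Q^\infty\fragmentco{x_T}{y_T}$ has cost $d_T$, we have $|n - (y_T - x_T)| \le d_T$, and hence $y_T - x_T \le n + d_T$. Similarly, since $\A_P : P \onto Q^\infty\fragmentco{0}{y_P}$ has cost $d_P$, we have $y_P \ge m - d_P$. Combining these (and recalling $\kappa = k + d_P + d_T$ and $n < \threehalfs m + k$), I obtain
\[
y_T - y_P - x_T \;\le\; (n + d_T) - (m - d_P) \;\le\; n - m + \kappa \;<\; \tfrac{m}{2} + k + \kappa \;\le\; \tfrac{m}{2} + 2\kappa.
\]
Plugging this into the previous display yields $|J| \le \tfrac{m}{2\tau} + \tfrac{4\kappa}{\tau} + 1$.

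To relate this to $\beta_P = \ceil{y_P/\tau}$, I would use $y_P \ge m - d_P \ge m - \kappa$, which gives $m \le \beta_P \tau + \kappa$, and hence
\[
|J| \;\le\; \frac{\beta_P \tau + \kappa}{2\tau} + \frac{4\kappa}{\tau} + 1 \;=\; \frac{\beta_P}{2} + \frac{9\kappa}{2\tau} + 1.
\]
The final ingredient is a constant upper bound on $\kappa/\tau$. By definition, $\tau = q\ceil{\kappa/(2q)}$: when $\kappa \ge 2q$ this gives $\tau \ge \kappa/2$, and when $\kappa < 2q$ this gives $\tau = q > \kappa/2$; either way $\kappa/\tau \le 2$. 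Substituting back yields $|J| \le \beta_P/2 + 10 = \Oh(\beta_P)$ (using $\beta_P \ge 1$, which holds since $y_P \ge m - 16k > 0$ under the problem's assumption $q \le m/128k$). I do not foresee any serious obstacle; the only subtlety is the case analysis needed to bound $\kappa/\tau$ by a constant, and ensuring $\beta_P \ge 1$ in the degenerate regime.
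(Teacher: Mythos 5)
Your proof is correct, and the direct arithmetic chaining of the definitions of $J$, $\beta_P$, $\tau$, and $\kappa$ with the length bounds implied by the alignment costs and $n < \threehalfs m + k$ is exactly the intended argument (the paper cites it from the proof of Lemma 4.5 of \cite{ckw22} rather than reproducing it). All the intermediate inequalities check out, including the constant bound $\kappa/\tau \le 2$ via the two cases on $\tau = q\ceil{\kappa/(2q)}$ and the observation that $\beta_P \ge 1$ under the problem's constraints.
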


A combination of \cref{fact:Jbetap} and \cref{fct:verifyRj},
yields that, if $\beta_P  =\ceil{{y_P}/{\tau}}$ is very small
(that is if the tiles are very long),
then we can already efficiently solve the \SM problem.
\begin{lemmaq}[Efficient solution for the \SM problem parameterized by the number of tiles
    in $P$]\dglabel{lem:smallbeta}[fct:verifyRj,fact:Jbetap]
    We can solve \SM in time $\cO(k^3 \beta_P \log^2 (mk))$ in the \pillar model
    returning explicitly the output set $\OccW_k(P,T)$, which is of size $\cO(k \beta_P)$.

    In the case of integer weights, we can solve \SM in time $\cO(k^2 W \beta_P \log^2
    (mk))$ in the \pillar model
    returning explicitly the output set $\OccW_k(P,T)$, which is of size $\cO(k \beta_P)$.
\end{lemmaq}
In particular, \cref{lem:smallbeta} allows us to assume (without loss of generality) that
\(\beta_P \ge 20\) as, otherwise, \cref{lem:smallbeta} implies \cref{lem:solve_SM}.
We make this assumption in the remainder of \cref{subsec:puzzles} and in
\cref{subsec:redDPM}.

Now, let $P = \bigodot_{i=1}^{\beta_P} P\fragmentco{p_{i-1}}{p_i}$
denote the $\tau$-tile partition of $P$ with respect to $\A_P$,
and let $T=\bigodot_{i=1}^{\beta_T} T\fragmentco{t_{i-1}}{t_i}$ denote
the $\tau$-tile partition of $T$ with respect to $\A_T$.

Let us set $\Delta \coloneqq  6\kappa$ and $z \coloneqq \beta_P - 17$.
By (essentially) extending the tiles from the \(\tau\)-tile partition of \(P\) by an additional
\(\Delta\) characters, we obtain a \(\Delta\)-puzzle with value \(P\).

\begin{lemmaq}[$\Delta$-puzzle for $P$ from its tile partition, {\cite[Lemma 4.6]{ckw22}}]\dglabel{fct:puzzleP}
    The following sequence $P_1,\ldots,P_z$ forms a $\Delta$-puzzle with value $P$:
    \begin{itemize}
        \item $P_1 \coloneqq P\fragmentco{p_0}{p_2+\Delta}$;
        \item $P_{i} \coloneqq P\fragmentco{p_{i}}{p_{i+1}+\Delta}$ for $i\in \fragmentoo{1}{z}$;
        \item $P_z \coloneqq P\fragmentco{p_{z}}{|P|}$.\qedhere
    \end{itemize}
\end{lemmaq}

Similarly, we obtain \(\Delta\)-puzzles for \(T\).
In particular, we use the $\tau$-tile partition of $T$ in order to represent the fragments
$R_j$ (from \cref{fct:rj}) as $\Delta$-puzzles.

\begin{lemmaq}[$\Delta$-puzzle for $T$ from its tile partition, {\cite[Lemma 4.7]{ckw22}}]
    \dglabel{fct:puzzleT}
    For each $j\in J$, the following sequence $T_{j,1},\ldots,T_{j,z}$ forms a
    $\Delta$-puzzle with value $R_j$:
    \begin{itemize}
        \item $T_{j,1} \coloneqq T\fragmentco{r_j}{t_{j+2}+\Delta}$;
        \item $T_{j,i} \coloneqq T\fragmentco{t_{j+i}}{t_{j+i+1}+\Delta}$ for $i\in \fragmentoo{1}{z}$;
        \item $T_{j,z} \coloneqq T\fragmentco{t_{j+z}}{r'_j}$.\qedhere
    \end{itemize}
\end{lemmaq}

Taken together, we obtain the families of puzzle pieces that we use in the remainder of this
work.

\begin{definition}[{\cite[Definition 4.8]{ckw22}}]
    \dglabel^{def:puzzleset}(Families of leading, internal, and trailing puzzle pieces, {\cite[Definition 4.8]{ckw22}})
    \begin{itemize}
        \item We write \(\Sb \coloneqq \{P_1\} \cup \{T_{j,1} \mid j\in J\}\) for the
            family of
            \emph{leading} puzzle pieces.
        \item We write \(\Sm \coloneqq \{P_{i} : i\in \fragmentoo{1}{z}\} \cup \{T_{i} : i \in
            \fragmentoo{\min J +1}{\max J +z} \}\) for the family of
            \emph{internal} puzzle pieces.
        \item We write \(\Sf \coloneqq \{P_z\} \cup \{T_{j,z} : j\in J\}\) for the family of
            \emph{trailing} puzzle pieces.\qedhere
    \end{itemize}
\end{definition}

\begin{remark}[{\cite[Remark 4.9]{ckw22}}]
    \dglabel^{rem:notconv}(Notational convention for puzzle pieces,~\cite[Remark 4.9]{ckw22})
    For convenience, for $i\in \fragmentoo{\min J + 1}{\max J +z}$, we write
    $T_i\coloneqq T\fragmentco{t_i}{t_{i+1}+\Delta}$.
    Observe that by construction, we have \(T_i = T_{j,i'}\) for all \(j + i' = i\) with
    \(i' \in \fragmentoo{1}{z}\); that is,
    overlapping parts of different \(R_j\)'s share their internal pieces.
    Observe further that this is an essential property for our approach to work: when
    moving from \(R_j\) to \(R_{j + 1}\), we exploit that we need to only shift the pieces
    \(T_i\), and not recompute them altogether.
\end{remark}

Finally, the following lemma establishes that the pairs of pieces \(P_i\) and \(T_{j,i}\)
from \cref{def:puzzleset}
are roughly of the same length on average.

\begin{lemmaq}[{\cite[Lemma 4.10]{ckw22}}]
    \dglabel{fact:lengths_diff}[def:puzzleset](Upper bound on the cumulative difference of length between
    puzzle pieces comprising $R_j$ and $P$,~{\cite[Lemma 4.10]{ckw22}})
    For each $j\in J$, we have $\sum_{i=1}^z \big| |T_{j,i}|-|P_i|\big| \le 3\kappa-k$.
\end{lemmaq}

\paragraph*{Special Puzzle Pieces}

In order for puzzle pieces to be useful to us, we need to be able to efficiently
compute the families \(\Sb\), \(\Sm\), and \(\Sf\) specified in \cref{def:puzzleset}.
As it turns out, when considering instances of \SM, most puzzle pieces are substrings of
\(\Q\)---this means that it suffices to locate and compute the \emph{special} pieces, that is, the pieces that
are different from some specific substrings of \(\Q\).
We start with a formal definition of special puzzle pieces.

\begin{definition}[Special puzzle pieces, {\cite[Definition 4.11]{ckw22}}]
    \dglabel^{def:specialpiece}
    We say that an internal piece is \emph{special} if and only if it is different from
    \(\Q\fragmentco{0}{\tau+\Delta}\). We write \(\sp{P}\) for the set of \emph{special
    internal pieces} of $P$ and \(\sp{T}\) for the set of \emph{special internal
    pieces} of $T$; that is, we set
    \begin{align*}
        \sp{P} &\coloneqq \{P_i : i \in \fragmentoo{1}{z}
        \text{ and } P_i \neq Q^\infty\fragmentco{0}{\tau+\Delta}\} \quad\text{and}\\
            \sp{T} &\coloneqq \{T_i :i \in \fragmentoo{1 + \min J}{z + \max J}
            \text{ and } T_i \neq Q^\infty\fragmentco{0}{\tau+\Delta}\}.
    \end{align*}
    Further, we say that a leading piece \(T_{j,1}\) is \emph{special} if and only if it is different from
    \(Q^\infty\fragmentco{-\kappa}{2\tau+\Delta}\) and that a trailing piece \(T_{j,z}\)
    is \emph{special} if and only if it is different from \(Q^\infty\fragmentco{z\tau}{y_P+\kappa}\).
    Similar to before, we write \(\spb{T}\) for the set of special leading and trailing
    pieces of \(T\).
\end{definition}

Indeed, there are only very few special pieces.
We compute them in $\cO(k)$ time in the \pillar model using the following lemma.

\begin{lemma}[{\cite[Lemma 1.6]{ckw22}}]
    \dglabel{lem:specialbound-simpl}(The median edit ditance of families of leading,
    internal, and trailing puzzle pieces is small, there are $\cO(k)$ special puzzle
    pieces that can be computed in $\cO(k)$ time,~{\cite[Lemma 1.6]{ckw22}})
    The median edit distance of each of the families \(\Sb\), \(\Sm\), and $\Sf$ is \(\Oh(k)\).
    Further, each of the multisets \(\sp{P}\),
    \(\sp{T}\), and \(\spb{T}\) is of size \(\Oh(k)\) and can be computed
    in \(\Oh(k)\) time in the \pillar model.
    \lipicsEnd
\end{lemma}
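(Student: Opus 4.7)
The plan is to leverage the fact that the alignments $\A_P$ and $\A_T$ have very low cost ($d_P\le 16k$ and $d_T\le 48k$), so the tile partitions of $P$ and of $T$ mostly consist of unedited copies of $Q$-blocks. Concretely, a tile can fail to equal its canonical default substring of $Q^\infty$ only if the portion of $\A_P$ (respectively $\A_T$) covering that tile performs at least one edit; and since the alignments perform only $\Oh(k)$ edits in total, only $\Oh(k)$ tiles can deviate from the default.

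First, for the median-edit-distance bounds, I will choose the natural default as a candidate $\Sr$: for $\Sm$ I take $\Sr=Q^\infty\fragmentco{0}{\tau+\Delta}$, and for $\Sb$ and $\Sf$ I take the canonical leading/trailing defaults from \cref{def:specialpiece}. For an internal piece $P_i=P\fragmentco{p_i}{p_{i+1}+\Delta}$ I split it into its \emph{core} $P\fragmentco{p_i}{p_{i+1}}$ and its \emph{overhang} $P\fragmentco{p_{i+1}}{p_{i+1}+\Delta}$, and I bound $\ed(P_i,\Sr)$ by the sum of the costs that $\A_P$ pays inside these two fragments (this uses \cref{fct:ali} applied to the tile partition of $P$, together with the fact that the images under $\A_P$ of these fragments are substrings of $Q^\infty$ starting at positions that are multiples of $q$, hence equal to the corresponding prefixes of $Q^\infty$). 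Because the tile partition of $P$ (without overhangs) is a genuine partition of $P$, summing over $i$ gives $\sum_i \ed(P_i,\Sr)\le 2 d_P$; adding the analogous contribution from $T$ for the internal $T$-pieces and noting the overhangs each contribute to at most one further tile yields $\ed(\Sm)=\Oh(d_P+d_T)=\Oh(k)$. The leading/trailing families are handled analogously, with an extra $\Oh(\kappa)=\Oh(k)$ charge per family for the $\kappa$-prefixes/$\kappa$-suffixes of $T_{j,1}$/$T_{j,z}$ that go beyond the aligned range but still compare to a substring of $Q^\infty$.

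Second, the cardinality bound on $\sp{P}$, $\sp{T}$, and $\spb{T}$ follows immediately: a piece belongs to the special set iff its edit distance to the default is at least $1$, so the size of the special set is bounded by the median-edit-distance sum above, which is $\Oh(k)$. For the $\Oh(k)$-time \pillar computation I would scan the breakpoint representations of $\A_P$ and $\A_T$ (each of size $\Oh(k)$ by \cref{fct:LValignment,fct:alignment}, which are available from the input to \SM). Using the tile endpoints $p_i$ and $t_i$ (which, between consecutive breakpoints, form arithmetic progressions with difference $\tau$, because the relevant portion of the alignment performs only diagonal matches), I determine in $\Oh(1)$ time per breakpoint the at-most-constantly-many tiles of $P$ or $T$ that intersect the edit region of that breakpoint and that therefore may be special. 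For each such candidate tile, I emit a handle obtained by a single \extractOpName call on $P$ or $T$, and I verify speciality by comparing against the default using \lceOpName on the corresponding substring of $Q^\infty$ (constant time per piece, since $|Q|\le m/(128k)$ and the default is an explicit fragment of $Q^\infty$). Summing over all $\Oh(k)$ breakpoints gives the claimed $\Oh(k)$ time.

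The main obstacle is the careful accounting around the $\Delta$-overlap of consecutive pieces and around the boundary pieces $P_1,P_z,T_{j,1},T_{j,z}$ whose lengths differ from those of internal pieces: I must ensure that every edit of $\A_P$ or $\A_T$ is charged to $\Oh(1)$ piece-vs-default comparisons (not more), and that the $\kappa$-sized "extension" appearing in the leading/trailing defaults is correctly absorbed by the argument without double counting or blowing up the sum beyond $\Oh(k)$. Once this bookkeeping is in place, both the median-edit-distance bound and the size bound follow by the same telescoping.
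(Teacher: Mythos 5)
This statement is cited from \cite{ckw22} and given no proof in the present paper (the lemma ends with \lipicsEnd and is immediately followed by a new subsection), so there is no in-paper proof to compare against. Your proposal is nevertheless the natural argument and appears essentially sound. Charging the cost of $\A_P$ (resp.\ $\A_T$) across the tile-plus-$\Delta$-overhang structure, together with the observation that $\tau\ge\kappa/2$ gives $\Delta/\tau\le 12$ so that each edit lies in $\Oh(1)$ pieces, yields $\sum_S\ed(S,\text{default})=\Oh(d_P+d_T)=\Oh(k)$; this simultaneously controls both the median edit distance and the multiset cardinalities, since a piece is special exactly when its term in that sum is positive. The $\Oh(k)$-time enumeration by scanning the breakpoint representations of $\A_P$ and $\A_T$ and expanding each breakpoint into the $\Oh(1)$ tiles covering it is also the right approach.

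Two points to tighten. First, a careful reading of your per-piece bound gives $\ed(P_i,\Sr)\le 2\cdot(\text{cost of }\A_P\text{ inside }P_i)$ rather than the cost itself: one term comes from the alignment restricted to $P_i$, and a second (by triangle inequality between two substrings of $Q^\infty$) from the discrepancy between $|\A_P(P_i)|$ and $\tau+\Delta$; this does not affect the asymptotic conclusion, since the cost is still charged against $\Oh(1)$ pieces per edit. Second, $Q^\infty$ is not a string in the \pillar collection, so you cannot literally call \lceOpName{} ``on the corresponding substring of $Q^\infty$''. The speciality test should instead be implemented by checking $\lenOpName(P_i)=\tau+\Delta$, $\lceOp{P_i}{Q}\ge q$, and $\lceOp{P_i}{P_i\fragmentco{q}{|P_i|}}=|P_i|-q$, exploiting that the default is precisely the length-$(\tau+\Delta)$ substring of $Q^\infty$ that begins with $Q$ and has period $q$. (Comparing against a reference piece known to equal the default is not safe in every parameter regime: when $q\approx\kappa/2$ one can have $\beta_P=\Theta(k)$ while $d_P=\Theta(k)$, so all internal pieces of $P$ may simultaneously contain an edit.) With these two fixes your blueprint should give a complete argument.
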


\subsection{Reduction to (Bleach-Commit) Dynamic Puzzle Matching}\label{subsec:redDPM}

Let us sketch a warm-up reduction from \SM to \DPM. In an instance of \DPM, we iterate
over sequences $\I_j$ of puzzle pieces,
such that the value of the $\Delta$-puzzle consisting of first components is $P$ and that
of the second components is~$R_j$,
for each $j\in J$. Then, it would suffice to ask a {\tt DPM-Query} for each~$\I_j$ and
merge the results.
A formal definition of the $\I_j$s is provided below.

\begin{definition}[Sequence of pairs of puzzle pieces for $T_j$ and $P$, {\cite[Definition 4.21]{ckw22}}]
    \label{def:seq-i}
    For each $j \in J$, set \(
    \I_j \coloneqq (P_1, T_{j,1}) (P_2, T_{j,2}) \cdots (P_z, T_{j,z}).
    \)
\end{definition}
Note that, due to \cref{fct:puzzleP,fct:puzzleT}, we have
\(\val_{\Delta}(P_1,\ldots,P_z)=P\)
and \(\val_{\Delta}(T_{j,1},\ldots,T_{j,z})=R_j\).

In order to achieve a more efficient solution, we distinguish so-called \emph{canonical
pairs} in a DPM-sequence~$\I$,
with respect to our fixed instance of the \SM problem; recall that
\(\Delta \coloneqq 6(k + d_P + d_T)\) and \(\tau \coloneqq q\ceil{\kappa/2q}\).

\begin{definition}[Canonical pairs of pieces, {\cite[Definition 4.23]{ckw22}}]
    \label{def:plain}
    We call a pair of pieces \emph{canonical} if it equals \[
        \mathcal{Q}\coloneqq (\Q\fragmentco0{\tau+\Delta},\Q\fragmentco0{\tau+\Delta}).
        \tag*{\qedhere}
    \]
\end{definition}
A crucial combinatorial property exploited in~\cite{ckw22}, is the fact that, for unweighted edit distance,
one can ``trim'' long runs of canonical pairs without affecting the output of a {\tt DPM-Query}.
Here, we extend this observation to weighted edit distance.
To this end, let us recall the definition of the \(\Comp\) operator for DPM-sequences.
We call a maximal contiguous subsequence of internal pairs that satisfy some property
an \emph{internal run} of pairs that satisfy said property.

\begin{definition}[Trimmed sequences of pairs of pieces according to plain pairs and a threshold, {\cite[Definition 4.27]{ckw22}}]
    Consider a DPM-sequence \(\I\),
    where the elements of some subset \(\rpl(\I)\) of the internal canonical DPM-pairs are
    labeled as \emph{plain}.
    (We call all other pieces \emph{non-plain}.)

    For a positive integer \(\comp\), we write \(\Comps(\I, \rpl(\I), \comp)\) for
    the DPM-sequence obtained from \(\I\) by removing exactly one plain DPM-pair from any internal run
    of plain DPM-pairs of length at least \(\comp + 1\);
    if \(\I\) does not contain any such run, we set \(\Comps(\I, \rpl(\I),
    \comp) \coloneqq \I\).

    Further, we write \(\Comp(\I, \rpl(\I), \comp) \coloneqq \Comps^{\star}(\I,
    \rpl(\I), \comp)\) for an iterated application of \(\Comps\) until the DPM-sequence
    remains unchanged. (That is, in \(\Comp(\I, \rpl(\I), \comp)\) every
    internal run of plain DPM-pairs is of length at most \(\comp\).)
\end{definition}

For technical reasons, we may not want to include all canonical pairs
    \(\mathcal{Q}\) in the set \(\rpl(\I)\).

\begin{remark}[{\cite[Remark 4.28]{ckw22}}]
    \dglabel^{rem:plaincan}(Red pieces and plain pairs, trimmed sequences of pairs with respect to red pieces)
     Write \(\rred{P} \supseteq \sp{P}\) for
    a set of \emph{red} pieces of \(P\) and write \(\rred{T} \supseteq \sp{T}\) for a set
    of red pieces of \(T\). Now, for a \(j \in J\), we set \[
        \rpl(\I_j) \coloneqq \{ (P_i, T_{j,i}) \mid i \in \fragmentoo{1}{z} \text{ and }
            P_i \notin \rred{P} \text{ and } T_{j,i} \notin \rred{T}.
        \] Abusing notation, we write \(\Comp(\I_j, \rred{P}, \rred{T}, \comp) \coloneqq
    \Comp(\I_j, \rpl(\I_j), \comp)\).
\end{remark}

The following lemma, whose proof is similar to the proof of \cite[Lemma 4.32]{ckw22},
implies that we can cap the exponent of internal runs of canonical pieces to $k+2$ without altering
the set of $(k,w)$-error occurrences.
The proof of \cite[Lemma 4.32]{ckw22} relied on the greedy nature of
optimal unweighted alignments to show an analogue of the inclusion
$\OccW_k(\I) \subseteq \OccW_k(\I^+)$ shown in \cref{lem:redundantk}.
In the presence of weights, we have to take a different approach;
this is the main difference between the proof of the following lemma and
the proof of \cite[Lemma 4.32]{ckw22}.

\begin{lemma}
    \dglabel{lem:redundantk}(If the torsion is small, we can trim/expand sequences of at
    least $k+2$ plain pairs without changing the set of $k$-edit occurrences)
    Fix a string \(\hat{Q}\), integers \(k \geq 0\) and \(\Delta > 0\), and a normalized
    weight function $w: \sqEsigma \to \intvl{0}{W}$.
    Further, consider a DPM-sequence \(\I=(U_1, V_1)(U_2, V_2)\cdots(U_z, V_z)\) whose pieces form \(\Delta\)-puzzles and
    that has a \torn of \(\tor(\I) \le \Delta/2 - k\), and a set \(\rpl(\I)\) of
    DPM-pairs labeled as plain, where \(\rpl(\I) \subseteq \{ (U_i, V_i) \in \I \mid
        i\in\fragmentoo1z \text{ and } U_i =
    V_i = \hat{Q}\} \).

    Suppose that $\I$ contains an internal run
    \(\X \coloneqq (U_{x},V_{x})\cdots(U_{x + k'}, V_{x + k'})\)
    such that all DPM-pairs in $\X$ are in \(\rpl(\I)\) and $k' \geq k+1$.

    Let $\I^-$ be the DPM-sequence obtained by replacing $\X$ with $\X^- \coloneqq
    (U_{x},V_{x})\cdots(U_{x + k'-1}, V_{x + k'-1})$
    and $\I^+$ be the DPM-sequence obtained by replacing $\X$ with $\X^+=\X \odot
    (\hat{Q},\hat{Q})$.
    We have $\OccW_k(\I) \subseteq \OccW_k(\I^-)$ and $\OccW_k(\I) \subseteq
    \OccW_k(\I^+)$.
\end{lemma}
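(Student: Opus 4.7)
Fix $i \in \OccW_k(\I)$ witnessed by an optimal alignment $\A : P \onto T\fragmentco{i}{j}$ of cost at most $k$, where $P = \val_\Delta(U_1, \ldots, U_z)$ and $T = \val_\Delta(V_1, \ldots, V_z)$. Let $p_s$ denote the end position in $P$ of the non-overlapping contribution of piece $U_s$ (so $p_0 = 0$, $p_1 = |U_1|$, and $p_s - p_{s-1} = |U_s| - \Delta$ for $s\ge 2$), and define $t_s$ analogously for $T$. Since $U_s = V_s = \hat{Q}$ for every $s \in \fragment{x}{x+k'}$, each chunk $P\fragmentco{p_{s-1}}{p_s}$ and $T\fragmentco{t_{s-1}}{t_s}$ with $s\in \fragmentoc{x}{x+k'}$ coincides with $\hat{Q}\fragmentco{\Delta}{|\hat{Q}|}$ and in particular has length $|\hat{Q}|-\Delta$. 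The strategy is to locate, by pigeonhole, a chunk inside the plain run on which $\A$ incurs no cost, and then to splice one chunk in or out of $\A$ while preserving the cost bound.

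\textbf{Locating a cost-free chunk.} For each $s \in \fragment{x}{x+k'}$, pick a vertex $(p_s, \tau_s)$ that $\A$ visits in column $p_s$; such a vertex exists because $\A$ spans every column, while the torsion bound $\tor(\I) \le \Delta/2 - k$ guarantees, via an argument analogous to Claim~\ref{xtagmnrjyl} in the proof of \cref{cor:dproduct}, that the chosen $\tau_s$ sits inside the subgraph of $\AGw(P, T\fragmentco{i}{j})$ corresponding to piece~$s$. The $k'$ subpaths cut out by these vertices have nonnegative costs $c_s$ for $s\in \fragmentoc{x}{x+k'}$ summing to at most $\edwa{\A}{P}{T\fragmentco{i}{j}} \le k$. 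Because $w$ is normalized, every non-match edge costs at least $1$, so each $c_s$ is either $0$ or at least~$1$; since $k'\ge k+1$, pigeonhole forces some $s^{*}$ with $c_{s^{*}} = 0$. That subpath then consists entirely of zero-cost diagonal edges and is thus the identity alignment of $P\fragmentco{p_{s^{*}-1}}{p_{s^{*}}} = \hat{Q}\fragmentco{\Delta}{|\hat{Q}|}$ onto a matching fragment of $T$ that, by the torsion control on $\tau_{s^{*}-1}$ and $\tau_{s^{*}}$, must coincide with $T\fragmentco{t_{s^{*}-1}}{t_{s^{*}}} = \hat{Q}\fragmentco{\Delta}{|\hat{Q}|}$.

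\textbf{Splicing the alignment.} Since every pair in $\X$ equals $(\hat{Q}, \hat{Q})$, deleting or inserting any single plain pair yields the same puzzle values $P^{\pm}$ and $T^{\pm}$; we may therefore pretend that $\I^{-}$ is obtained by deleting the pair at index $s^{*}$ and that $\I^{+}$ is obtained by inserting a new pair immediately after index $s^{*}$. For $\I^{-}$, splicing the cost-$0$ subpath out of $\A$ produces a path $\A^{-}$ of cost $\edwa{\A}{P}{T\fragmentco{i}{j}} - c_{s^{*}} \le k$ that witnesses a $(k,w)$-error occurrence of $P^{-}$ in $T^{-}$. For $\I^{+}$, inserting at $(p_{s^{*}}, \tau_{s^{*}})$ an extra identity subpath of length $|\hat{Q}| - \Delta$ that matches the newly inserted $\hat{Q}$-chunk of $P^{+}$ against the newly inserted $\hat{Q}$-chunk of $T^{+}$ produces an alignment $\A^{+}$ of the same cost as $\A$, hence at most $k$. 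In both cases the starting position of the new occurrence differs from $i$ only by the canonical shift induced by removing or inserting a plain piece (and equals $i$ for occurrences beginning before the spliced chunk), yielding the two desired inclusions.

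\textbf{Main obstacle.} The pigeonhole step itself is routine; the delicate part is pinning down the vertices $(p_s, \tau_s)$. Because $\A$ is not a priori forced to respect the puzzle's chunk boundaries on the $T$-side, one must invoke the torsion bound to ensure that the chosen $\tau_s$ lies inside the piece-$s$ band of $\AGw(P, T\fragmentco{i}{j})$ for every $s\in \fragment{x}{x+k'}$; without this, the subpath-cost decomposition would not correctly localize a cost-free chunk inside $\X$, and the splicing argument would break. A secondary bookkeeping task is to check uniformly the correspondence between positions in $T$ and in $T^{\pm}$ for occurrences beginning before, inside, or after the plain run.
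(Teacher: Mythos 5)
Your overall plan matches the paper's proof: pigeonhole on the chunks of the plain run to find a portion aligned without cost, then splice that portion out (or splice a new one in) while preserving both the cost bound and the starting position. The chunking is slightly different (you cut each piece at its trailing $|\hat{Q}|-\Delta$ characters, the paper cuts out the middle $|\hat{Q}|-\Delta$ characters via $\hat{U}_j := U\fragmentco{u_j+\lfloor\Delta/2\rfloor}{u'_j-\lceil\Delta/2\rceil}$), and you pigeonhole on $k'\ge k+1$ subpaths whereas the paper pigeonholes on $k'+1\ge k+2$ chunks, but these are cosmetic differences.

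There is, however, a genuine gap in the splicing step. You assert that, ``by the torsion control on $\tau_{s^*-1}$ and $\tau_{s^*}$,'' the cost-free subpath's image in $T$ ``must coincide with $T\fragmentco{t_{s^*-1}}{t_{s^*}}$.'' That is false: the torsion bound (the analogue of Claim~\ref{xtagmnrjyl}/\ref{clm:cutout2}) only pins $\tau_{s^*-1}$ and $\tau_{s^*}$ within the $s^*$-th piece's band in $T$, not at the canonical boundaries $t_{s^*-1}, t_{s^*}$. The aligned fragment $T\fragmentco{\tau_{s^*-1}}{\tau_{s^*}}$ may be a shifted copy of $\hat{Q}\fragmentco{\Delta}{|\hat{Q}|}$ somewhere inside the plain run rather than the canonical one. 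As written, your splice then removes (or inserts) a period-length chunk at a position that is \emph{not} the canonical one, so it is not immediate that the resulting text equals $T^- = \val_\Delta(V_1,\ldots,\widehat{V_{s^*}},\ldots,V_z)$ (resp.\ $T^+$). The missing ingredient is the observation the paper makes explicit: $\val_\Delta(V_x,\ldots,V_{x+k'})$ has period $|\hat{Q}|-\Delta$, so deleting or inserting a single period-length block anywhere within this periodic region produces the \emph{same} string as deleting or inserting the canonical block. Only with this periodicity argument does the spliced alignment witness a $(k,w)$-occurrence of $P^\pm$ in $T^\pm$ at the original starting position $i$. Your own ``main obstacle'' paragraph correctly identifies that torsion gives containment in a band rather than exact alignment, which makes the unsupported positional-coincidence claim in the preceding paragraph internally inconsistent; replacing it with the containment-plus-periodicity argument closes the gap and recovers the paper's proof.
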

\begin{proof}
    The following claim states that the deletion of an internal piece from a $\Delta$-puzzle
    yields a $\Delta$-puzzle.

    \begin{claim}[{\cite[see Claim 4.33]{ckw22}}]
        \label{clm:cutout}
        Given a \(\Delta\)-puzzle \(\mathcal{X} \coloneqq X_1,\dots,X_i, X_{i + 1},\dots,X_z\)
        with \(X_i = X_{i + 1}\) and value \(X \coloneqq \val_{\Delta}(\mathcal{X})\),
        the sequence \[
            \mathcal{X}' \coloneqq
            X_1,\dots,X_{i-1},X_{i+1},\dots,X_z
            = X_1,\dots,X_{i},X_{i+2},\dots,X_z
            \] forms a \(\Delta\)-puzzle with value \[
            \val_{\Delta}(\mathcal{X}') =
            X\fragmentco{0}{\xi_{i} + \floor{\Delta/2}} X\fragmentco{\xi_{i} + |X_{i}| -
            \ceil{\Delta/2}}{|X|} =
            X\fragmentco{0}{\xi_i + p} X\fragmentco{\xi_{i+1} + p}{|X|},
        \] where
        \(\xi_i \coloneqq \sum_{t=1}^{i-1} |X_t| - \Delta\) and \(p \in \fragment{0}{|X_{i}|
        - \Delta}\).\claimqedhere
    \end{claim}

    Let \(U \coloneqq \val_{\Delta}(U_1,\dots, U_z)\) and
    \(V \coloneqq \val_{\Delta}(V_1,\dots, V_z)\).
    For each \(j \in \fragmentoo{1}{z}\), write $U\fragmentco{u_j}{u'_j} = U_j$ and
    $V\fragmentco{v_j}{v'_j} = V_j$;
    that is $u_j = \sum_{t=1}^{j-1}(|U_t|-\Delta)$ and
    $v_j = \sum_{t=1}^{j-1}(|V_t|-\Delta)$.
    With \cref{clm:cutout} in mind,
    we also write
    \(\hat{U}_j \coloneqq U\fragmentco{\hat{u}_j}{\hat{u}'_j}\), where
    \begin{align*}
        \hat{u}_j &\coloneqq u_j + \floor{\Delta/2}
        = \sum_{t=1}^{j-1}(|U_t|-\Delta) + \floor{\Delta/2} \qquad\text{and}\\
        \hat{u}'_j &\coloneqq \hat{u}_j + |U_j| - \Delta
        = \sum_{t=1}^{j-1}(|U_t|-\Delta) + |U_j|
        - \ceil{\Delta/2}
        = u'_j - \ceil{\Delta/2}.
    \end{align*}
    For convenience, we define \(\hat{u}'_1\) and \(\hat{u}_z\) analogously.
    Observe that we have \(\hat{u}'_j = \hat{u}_{j+1}\),
    that is, we obtain a partition
    \(U = U\fragmentco{0}{\hat{u}'_1} \hat{U}_2 \cdots \hat{U}_{z-1}
    U\fragmentco{\hat{u}_z}{|U|}\).

    Consider an alignment $\A : U \onto V\fragmentco{a}{b}$ that satisfies
    $\edwa{\A}{U}{V\fragmentco{a}{b}}=\edwk{k}{U}{V\fragmentco{a}{b}}$.

    \begin{claim}[{\cite[see Claim 4.34]{ckw22}}]\label{clm:cutout2}
        For any \(j \in \fragmentoo1z\), the fragment $\hat{V}_j
        \coloneqq V\fragmentco{\hat{v}_j}{\hat{v}'_j}
        \coloneqq \A(\hat{U}_j)$ is contained in $V_j$.
    \end{claim}
    \begin{claimproof}
        Observe that, by construction, we have
        \begin{align*}
            v_j &= u_j + \sum_{t=1}^{j-1} (|V_t|-|U_t|)\\
                &\le u_j + \sum_{t=1}^z \big||U_t|-|V_t|\big|\\
                &= u_j + \tor(\I)\\
                &\le u_j + \floor{\Delta/2} - k\\
                &\le a+\hat{u}_j - k \le \hat{v}_j.
        \end{align*}
        Symmetrically, we obtain
        \begin{align*}
            \hat{v}'_j \le b+\hat{u}'_j-|U| +k
            &\le |V|-|U|+u'_j-\ceil{\Delta/2}+k\\
            &\le |V|-|U|+u'_j-\tor(\I) = |V|-|U|+u'_j-\sum_{t=1}^z \big||U_t|-|V_t|\big|)\\
            &\le |V|-|U|+u'_j - \sum_{t=j+1}^z (|V_t|-|U_t|)= v'_j,
        \end{align*}
        thus completing the proof.
    \end{claimproof}

    As the alignment \(\A\) can make at most \(k\) edits, at least two of the first components of the
    (at least $k+2$) DPM-pairs of~\(\X\) get aligned without any edits.
    We may thus assume without loss of generality that \(\hat{V}_{i} = \hat{U}_{i}\) for some $i \in \fragmentco{x}{x+k'}$.
    \cref{clm:cutout2} ensures that \(\hat{V}_i\) is contained
    in \(V_i\).

    We first prove that $\OccW_k(\I) \subseteq \OccW_k(\I^-)$.
    Let us remove the pair $(U_i,V_i)$ from $\I$, thus obtaining $\I^-$.
    Let \(U^- \coloneqq \val_{\Delta}(U_1, \dots, U_{i-1}, U_{i+1}, \dots, U_z)\) and
    \(V^- \coloneqq \val_{\Delta}(V_1,\dots, V_{i-1}, V_{i+1}, \dots, V_z)\).
    Since $\val_\Delta(V_x, \dots, V_{x+k'})$ has a period $|U_i|-\Delta$,
    and $\hat{V}_i = \hat{U}_i$ is of length $|U_i|-\Delta$,
    we conclude that
    \begin{align*}
        \edw{U}{V\fragmentco{a}{b}}
        &=\edw{(U\fragmentco{0}{\hat{u}_i}}{V\fragmentco{a}{\hat{v}_i}}\\
        &\qquad+ \edw{U\fragmentco{\hat{u}_i+|U_i|-\Delta}{|U|}}{V\fragmentco{\hat{v}_i+|U_i|-\Delta}{b}}\\
        &= \edw{U^-\fragmentco{0}{\hat{u}_i}}{V^-\fragmentco{a}{\hat{v}_i}}\\
        &\qquad+ \edw{U^-\fragmentco{\hat{u}_i}{|U^-|}}{V^-\fragmentco{\hat{v}_i}{b-|U_i|+\Delta}}\\
        &\ge \edw{U^-}{V^-\fragmentco{a}{b-|U_i|+\Delta}}.
    \end{align*}
    We can thus obtain an alignment \(\A^- : U^- \onto
    V^-\fragmentco{a}{b-|U_i|+\Delta}\) of weighted cost at most \(k\), completing the
    proof of $\OccW_k(\I) \subseteq \OccW_k(\I^-)$.

    We now prove $\OccW_k(\I) \subseteq \OccW_k(\I^+)$.
    Let us replace $\X$ by $\X^+$, thus obtaining $\I^+$, by inserting $(\hat{Q},\hat{Q})$ just after
    \((U_i, V_i)\).
    Set
    \[ U^+ \coloneqq \val_{\Delta}(U_1, \dots, U_{i}, \hat{Q}, U_{i+1}, \dots, U_z)
        \quad\text{and}\quad
        V^+ \coloneqq \val_{\Delta}(V_1, \dots, V_{i}, \hat{Q}, V_{i+1}, \dots, V_z).
    \]
    Similarly to before, due to the periodicity of $\val_\Delta(V_x, \dots, V_{x+k'})$, we have
    \begin{align*}
        \edw{U}{V\fragmentco{a}{b}}
        &=\edw{U\fragmentco{0}{\hat{u}'_i}}{V\fragmentco{a}{\hat{v}'_i}}
        + \edw{U\fragmentco{\hat{u}'_i}{|U|}}{V\fragmentco{\hat{v}'_i}{b}}\\
        &=\edw{U\fragmentco{0}{\hat{u}'_i}}{V\fragmentco{a}{\hat{v}'_i}} \\
        &\quad+ \edw{\hat{Q}\fragmentco{0}{|\hat{Q}|-\Delta}}{\hat{Q}\fragmentco{0}{|\hat{Q}|-\Delta}}\\
        &\quad+ \edw{U\fragmentco{\hat{u}'_i}{|U|}}{V\fragmentco{\hat{v}'_i}{b}}\\
        &=\edw{U^+\fragmentco{0}{\hat{u}'_i}}{V^+\fragmentco{a}{\hat{v}'_i}}\\
        &\quad+
        \edw{U^+\fragmentco{\hat{u}'_i}{\hat{u}'_i+|\hat{Q}|-\Delta}}{V^+\fragmentco{\hat{v}'_i}{\hat{v}'_i+|\hat{Q}|-\Delta}}
        \\
        &\quad+
        \edw{U^+\fragmentco{\hat{u}'_i+|\hat{Q}|-\Delta}{|U^+|}}{V^+\fragmentco{\hat{v}'_i+|\hat{Q}|-\Delta}{b+|\hat{Q}|-\Delta}}\\
        &\ge \edw{U^+}{V^+\fragmentco{a}{b+|\hat{Q}| - \Delta}}.
    \end{align*}
    We can thus obtain an alignment \(\A^+ : U^+ \onto
    V^+\fragmentco{a}{b+|\hat{Q}|-\Delta}\) of weighted cost at most \(k\), completing the
    proof of $\OccW_k(\I) \subseteq \OccW_k(\I^+)$.
    \lipicsEnd
\end{proof}

Noting that $\Delta = 6\kappa$
and hence $\tor(\I_j)\leq 3\kappa-k = \Delta/2 - k$, a combination of
\cref{fact:lengths_diff,lem:redundantk}
yields the following result, which is an analogue of
\cite[Lemma 4.30]{ckw22} for weighted edit distance.

\begin{corollary}
    \dglabel{fact:druns}[fact:lengths_diff,lem:redundantk]
    For any \(j \in J\), we have
    \[
        \OccW_k(P, R_j) = \OccW_k(\I_j)
        = \OccW_k( \Comp(\I_j, \sp{P}, \sp{T}, k + 2) ).
        \tag*{\lipicsEnd}
    \]
\end{corollary}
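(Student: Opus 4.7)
The plan is to establish the two equalities separately. The first equality $\OccW_k(P, R_j) = \OccW_k(\I_j)$ is essentially a definitional unpacking: by \cref{def:seq-i}, $\I_j = (P_1, T_{j,1})\cdots(P_z, T_{j,z})$, and \cref{fct:puzzleP,fct:puzzleT} guarantee that $\val_\Delta(P_1, \ldots, P_z) = P$ and $\val_\Delta(T_{j,1}, \ldots, T_{j,z}) = R_j$. Combined with the promise of the {\tt DPM-Query} in \DPM, this gives $\OccW_k(\I_j) = \OccW_k(P, R_j)$. I should also briefly verify that $\I_j$ is a valid DPM-sequence---the membership conditions follow from \cref{def:puzzleset}, and the torsion bound is the key observation detailed next.

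The second equality requires \cref{lem:redundantk}. First I verify its hypotheses: setting $\hat{Q} \coloneqq Q^\infty\fragmentco{0}{\tau+\Delta}$, the canonical pair $\mathcal{Q}$ satisfies $U = V = \hat{Q}$, and via \cref{rem:plaincan} the plain set $\rpl(\I_j)$ consists of those internal canonical pairs $(P_i, T_{j,i}) = \mathcal{Q}$ with $P_i \notin \sp{P}$ and $T_{j,i} \notin \sp{T}$; thus $\rpl(\I_j) \subseteq \{(U_i, V_i) : i \in \fragmentoo{1}{z},\ U_i = V_i = \hat{Q}\}$, as required. The torsion bound is supplied by \cref{fact:lengths_diff} together with $\Delta = 6\kappa$: we have
\[
    \tor(\I_j) = \sum_{i=1}^z \bigl||T_{j,i}| - |P_i|\bigr| \le 3\kappa - k = \Delta/2 - k.
\]
Crucially, inserting or deleting canonical pairs does not change the torsion (since $|U_i| = |V_i|$ for them), so every intermediate sequence encountered during the $\Comp$ procedure and its reverse also satisfies this bound, keeping \cref{lem:redundantk} applicable throughout.

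For the inclusion $\OccW_k(\I_j) \subseteq \OccW_k(\Comp(\I_j, \sp{P}, \sp{T}, k+2))$, I iterate the $\I \mapsto \I^-$ direction of \cref{lem:redundantk}: each application of $\Comps$ inside $\Comp$ removes one plain pair from an internal plain run of length at least $k+3$, which corresponds to $\X$ with $k' \ge k+2 \ge k+1$ in the lemma and therefore yields $\OccW_k(\cdot) \subseteq \OccW_k(\cdot^-)$. Composing all these inclusions gives the desired direction. For the reverse inclusion, I rebuild $\I_j$ from $\Comp(\I_j, \sp{P}, \sp{T}, k+2)$ one pair at a time via the $\I \mapsto \I^+$ direction: any run that was shortened by $\Comp$ ends up at length exactly $k+2$ in the trimmed sequence, and each $\I \mapsto \I^+$ step only requires a run of length $\ge k+2$ (i.e.\ $k' \ge k+1$), so we can re-extend the run by successive insertions of $(\hat{Q}, \hat{Q})$ until the original length is restored, each step contributing an inclusion $\OccW_k(\cdot) \subseteq \OccW_k(\cdot^+)$. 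Chaining these yields $\OccW_k(\Comp(\I_j, \sp{P}, \sp{T}, k+2)) \subseteq \OccW_k(\I_j)$, completing the equality.

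The main subtlety to be careful about is the threshold matching: \cref{lem:redundantk} requires $k' \ge k+1$ (run of at least $k+2$ plain pairs) in \emph{both} directions, while $\Comp$ is defined to stop when all plain runs have length at most $k+2$. This is not an accident---it is exactly the minimum length that still allows the reverse $\I \mapsto \I^+$ step to apply, so the choice of parameter $k+2$ in the statement of \cref{fact:druns} is tight for this argument. Once this alignment is checked, the rest is routine bookkeeping over repeated applications of \cref{lem:redundantk}.
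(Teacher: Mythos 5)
Your proof is correct and matches the paper's intended approach: the paper merely remarks that the corollary follows from \cref{fact:lengths_diff,lem:redundantk}, and you supply exactly the expected details. You correctly observe that removing or inserting the canonical pair $(\hat{Q},\hat{Q})$ preserves the torsion bound so that the lemma's hypotheses remain valid throughout, and you correctly match up the parameter: each $\Comps$ step removes from a run of at least $\comp+1 = k+3$ pairs (so $k' \geq k+2$), while re-extension starts from a run of exactly $k+2$ pairs (so $k' = k+1$), both satisfying the $k' \geq k+1$ requirement.
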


\begin{remark}
    The $k+2$ in \cref{fact:druns} can be optimized to $k+1$,
    which is a more intuitive parameter;
    we opted to not implement this optimization as the proof of an
    analogously refined variant of
    \cref{lem:redundantk} would be lengthier that that of \cref{lem:redundantk}.
\end{remark}

We next argue that, even with \cref{fact:druns} at hand, we cannot
iterate over all distinct DPM-sequences
$\Comp(\I_j, \sp{P}, \sp{T}, k + 2)$, for $j \in J$, with $o(k^3)$ update operations in the worst case.
Let us fix a pair $(P_x,T_y) \in \sp{P} \times \sp{T}$ and consider the maintenance of a DPM-sequence $\I$
that iterates over all specified sequences.
Suppose that at some point, the length-$(k+4)$ sequence
\[(P_x,Q^\infty\fragmentco{0}{\tau+\Delta}) \mathcal{Q}^{k+2} (Q^\infty\fragmentco{0}{\tau+\Delta},T_y)\]
is a contiguous subsequence of $\I$.
As we, intuitively, shift $P$ over $T$, the exponent of the internal run of canonical
pairs enclosed by the DPM-pairs of strings containing $P_x$ and $T_y$
might change $\Omega(k)$ times.
It is actually not hard to construct an instance of \SM where $\sp{P} \times \sp{T}$ is of
size $\Omega(k^2)$
and each pair in $\sp{P} \times \sp{T}$ is ``responsible'' for $\Omega(k)$ updates as in
the considered example for $P_x$ and $T_y$.
In order to circumvent this issue, we refine \DPM by introducing some further update
operations that admit faster implementations
than those in the original problem (while offloading some computation to the query
procedure).
A refined reduction allows us to only use $\cO(k^2)$ \emph{expensive} update operations
and $\cO(k^3)$ \emph{cheap} update operations.

\begin{definition}[Blank pieces for representing runs of canonical pairs]
    For a positive integer $j$, a \emph{blank piece of order} $j$ is a compact
    representation of a sequence of $j$ copies of $\mathcal{Q}$ as $(\mathcal{Q},j)$ .
\end{definition}

\begin{definition}
    \dglabel^{def:unpack}(Operation \unpack for replacing blank pieces with canonical pairs)
    Operation \unpack takes as input a sequence $\J$ of ordered pairs of strings and blank
    pieces and replaces each blank piece of the form $(\mathcal{Q},j)$ with $j$ copies of
    $\mathcal{Q}$.
    The output is denoted by $\unpack(\J)$.

    We say that a sequence $\J$ is a \blanked-representation of a sequence $\I$ of ordered
    pairs of strings if $\I=\unpack(\J)$.
\end{definition}

\begin{problem}{Bleach\-CommitDPM}
    \label{pupolgyhot}

    \PObject{A sequence $\Y$ of puzzle pieces and blank pieces (a
    \emph{BCDPM-sequence}).}

    \PInit{{\tt BCDPM-Init($\Y'$, $\alpha$, $k$, $\Delta$, $\Sb$, $\Sm$, $\Sf$, $\mathcal{Q}$, $w$)}:
        Given
        \begin{itemize}
            \item a BCDPM-sequence $\Y'$,
            \item positive integers $k$, $\Delta = \cO(k)$,
                and $\alpha = \cO(k)$,
            \item string families $\Sb$, $\Sm$, and $\Sf$ of \emph{leading}, \emph{internal}, and
                \emph{trailing} pieces, respectively,
                that satisfy \(\ed(\Sb) + \ed(\Sm) + \ed(\Sf) = \Oh(k)\),
            \item a canonical pair of strings $\mathcal{Q}$,
            \item and oracle access to a normalized weight function
            \(w : \sqEsigma \to \intvl{0}{W}\),
        \end{itemize}
        initialize $\Y$ as $\Y'$.}

    \PInvariant{At initialization time and after each update,
        all blank pieces in $\Y$ are of order at most $\alpha$
        and the DPM-sequence \[
            \I \coloneqq \unpack(\Y) =(U_1,V_1)(U_2,V_2)\cdots(U_z,V_z)
        \] satisfies
        $U_1,V_1 \in \Sb$, $U_z, V_z  \in \Sf$,
        and, for all $i\in \fragmentoo{1}{z}$, $U_i,V_i \in \Sm$.
    }

    \PUpdate{\begin{description}[left=0em..0em,itemindent=-.5em]
            \item {\tt BCDPM-Delete(\(i\))}:
                Delete the $i$-th element of $\Y$.
            \item {\tt BCDPM-Insert($(U',V')$, \(i\))}:
                Insert the pair $(U',V')$ of strings after the $i$-th element of $\Y$.
            \item {\tt BCDPM-Substitute($(U',V')$, \(i\)):}
                Substitute the $i$-th element of $\Y$, which is a pair of strings, with the pair of strings $(U',V')$.
            \item {\tt BCDPM-Commit(\(i\), $\zeta$):}
                Replace the $i$-th element of $\Y$, which is a blank piece, with $\zeta$ copies of $\mathcal{Q}$.
            \item {\tt BCDPM-Bleach(\(i\), \(\zeta\)):}
                Replace the $\zeta$ pieces, that follow the $i$-th element of~$\Y$ and are
                all equal to $\mathcal{Q}$, with a single blank piece
                $(\mathcal{Q},\zeta)$.
            \item {\tt BCDPM-Set(\(i\), $\zeta$):}
                Substitute the $i$-th element of $\Y$, which is a blank piece, with blank piece $(\mathcal{Q},\zeta)$.
    \end{description}}

    \PQuery{{\tt BCDPM-Query}: Return
        \[\OccW_k(\Y) \coloneqq
        \OccW_k(\val_{\Delta}(U_1,\ldots, U_z), \val_{\Delta}(V_1,\ldots, V_z))\]
        under a promise that $U_1, \ldots, U_z$ and  $V_1, \ldots, V_z$ are
        $\Delta$-puzzles and that \(\tor(\I) \le \Delta/2 - k\).}
\end{problem}

The following theorem is analogous to \cite[Lemma 4.36]{ckw22}.
The authors of~\cite{ckw22} showed that one can iterate over \(\I'_j \coloneqq
\Comp(\I_j, \rred{P}, \rred{T}, \comp)\),
for all $j \in J\setminus \{\min J\}$ satisfying
$\I'_{j-1} \neq \I'_j$, in an instance of \DPM using a small number of updates.
They showed that if we skip over some \(\I'_j\) that is equal to \(\I'_{j + 1}\) can be handled by
extending, by one element, some maintained arithmetic progressions (with difference $\tau$)
of $(k,w)$-error occurrences.
The crucial novelties in the proof of the following theorem are the definition of
BCDPM-sequences $\Y'_j$ that satisfy $\unpack(\Y'_j) = \I'_j$, the replacement of all DPM
update operations apart from $\cO(k^2)$ of them with (cheap) updates of the type {\tt
BCDPM-Set} and (the proof of) \cref{claim:few_blank}.
To achieve the latter, we crucially observe that most operations either extend or shrink
an internal run of plain pairs: we maintain such runs using blank pieces, and spend extra
time during the query to treat them.

\begin{theorem}[\protect\swaps{$T$, $P$, $k$, $w$, $Q$, $\A_P$, $\A_T$, $\rred{P}$,
    $\rred{T}$ $\comp$}, Reduction from the \SM problem to the \BCDPM problem]
    \dglabel{lem:swaps}[def:puzzleset,lem:specialbound-simpl,fct:rj,fact:lengths_diff]
    Suppose that we are given
    an instance \SM{\tt($P$, $T$, $k$, $Q$, $\A_P$, $\A_T$, $w$)} with
    \[\beta_P \coloneqq \ceil{{y_P}/q\ceil{{(k+d_P+d_T)}/{2q}}} \geq 20,\]
    and a positive integer \(\comp = \cO(k)\)
    as well as sets of puzzle
    pieces \(\rred{P} \supseteq \sp{P}\) and \(\rred{T} \supseteq \sp{T}\) of size at most $\redsize$ each.

    In time \(\Oh(\redsize^2 \comp (k+\log n))\) in the \pillar
    model, we can reduce the computation of a representation of \[
        \mathcal{G} \coloneqq \bigcup_{j \in J}
        (r_j+\OccW_k(\Comp(\I_j, \rred{P}, \rred{T}, \comp)))
    \]
    that consists in \(\Oh(k \redsize^2 \comp)\) arithmetic
    progressions with difference \(\tau = \Oh(\max\{q,k\})\)
    to an instance of the \BCDPM problem
    with the maintained BCDPM-sequence initialized with a call
    {\tt BCDPM-Init($\Y'$, $\comp$, $k$, $\Delta$, $\Sb$, $\Sm$, $\Sf$, $\mathcal{Q}$,
    $w$)}, where $\Y'$ is an BCDPM-sequence of length \(\Oh( \redsize \comp )\),
    under
    \begin{itemize}
        \item \(\Oh(\redsize^2)\) calls to each of {\tt BCDPM-Delete}, {\tt
            BCDPM-Insert}, {\tt BCDPM-Substitute}, {\tt BCDPM-Commit}, and {\tt BCDPM-Bleach}, and
        \item \(\Oh(\redsize^2\comp) \) calls to each of {\tt BCDPM-Set} and {\tt BCDPM-Query}, such that
            the total number of blank pieces over all calls to {\tt BCDPM-Query} is $\cO(\redsize^2 \comp)$.
    \end{itemize}
\end{theorem}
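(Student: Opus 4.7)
The plan is to iterate $j$ over $J$ in increasing order while maintaining a BCDPM-sequence $\Y_j$ satisfying $\unpack(\Y_j) = \Comp(\I_j, \rred{P}, \rred{T}, \comp)$. By \cref{fact:druns}, it suffices to support this compacted variant. I represent each maximal internal run of plain pairs (i.e.\ canonical pairs neither of whose components is a red piece) either explicitly (if the run has length at most $\comp$) or as a single blank piece $(\mathcal{Q},\zeta)$ with $\zeta\in\fragment{0}{\comp}$ (if the run would otherwise be longer than $\comp$). To initialize $\Y_{\min J}$, I enumerate the red positions in $\I_{\min J}$ from $\rred{P}$ and $\rred{T}$ (using the convention from \cref{rem:notconv}, where $T_i$ appears at position $i-j$ in $\I_j$), place red pieces and the $z$ leading/trailing pieces explicitly, and insert blank pieces between them. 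The resulting initial sequence $\Y'$ has length $\Oh(\redsize \comp)$, and construction in the \pillar model takes $\Oh(\redsize \comp (k+\log n))$ time, since we have at most $\redsize$ anchors and $\Oh(\comp)$ plain slots around each.

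Next, I describe the transition from $\Y_j$ to $\Y_{j+1}$. By \cref{rem:notconv}, each internal red $T$-piece $T_y$ slides one position to the left in $\I_{j+1}$ relative to $\I_j$, while every red $P$-piece stays fixed. In typical steps this just changes the lengths of the two blank pieces adjacent to each red $T$-piece by $\pm 1$, which I implement with {\tt BCDPM-Set}; crucially, once a blank piece reaches the cap $\comp$, further growth is absorbed by the $\Comp$ trimming and no update is needed. Only four kinds of events require the expensive operations {\tt BCDPM-Delete}, {\tt BCDPM-Insert}, {\tt BCDPM-Substitute}, {\tt BCDPM-Commit}, {\tt BCDPM-Bleach}: (i) the leading/trailing $T$-pieces $T_{j,1}, T_{j,z}$ are swapped for those of $j+1$; (ii) a red $T$-piece enters or leaves the window; (iii) a blank piece between two red pieces shrinks to $0$, causing a collision; and (iv) a previously saturated blank piece first becomes smaller than $\comp$ (requiring a {\tt BCDPM-Commit} to explicitly materialize $\comp$ plain pairs). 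After applying the updates for step $j$, I issue {\tt BCDPM-Query} only if $\Y_j$ differs from $\Y_{j-1}$; otherwise I just extend, by one step of length $\tau$, every arithmetic progression obtained from the previous query.

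For the counting, observe that each ordered pair $(P_x,T_y)\in \rred{P}\times \rred{T}$ contributes only $\cO(1)$ structural events during the whole iteration (the red $T$-pieces strictly drift leftward, so each such pair can collide at most once), and the boundary cases in (i)--(ii) contribute $\Oh(\redsize)$ events each; altogether this gives the claimed $\Oh(\redsize^2)$ expensive operations. Between two consecutive expensive events affecting a fixed red $T$-piece, the two adjacent blank pieces can only change size in $\fragment{0}{\comp}$, and thus undergo $\Oh(\comp)$ {\tt BCDPM-Set} calls before the next collision, yielding $\Oh(\redsize^2 \comp)$ {\tt BCDPM-Set} calls and, by the same token, $\Oh(\redsize^2 \comp)$ queries. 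The total-blank-pieces bound follows because a blank piece lives only in a window between two adjacent red pieces, which is present for $\Oh(\comp)$ consecutive queries. Since each query returns $\Oh(k)$ starting positions, and consecutive queries with identical $\Y_j$ yield occurrences spaced by $r_{j+1}-r_j=\tau$, the set $\mathcal{G}$ is output as $\Oh(k\redsize^2\comp)$ arithmetic progressions with difference $\tau$. Per-operation bookkeeping (identifying the affected pieces, comparing fragments of length $\Oh(k+q)$, and maintaining a sorted index of anchor positions) costs $\Oh(k+\log n)$ in the \pillar model, matching the claimed $\Oh(\redsize^2 \comp(k+\log n))$ total reduction time.

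The main obstacle will be (a) formally verifying that the sequence of update events produced by the transition rule is well-defined and always preserves the invariant $\unpack(\Y_j)=\Comp(\I_j,\rred{P},\rred{T},\comp)$---particularly at the boundary pieces in $\Sb$ and $\Sf$, which are handled by treating the elements of $\spb{T}$ as red as well---and (b) showing that the amortized argument yields not only $\Oh(\redsize^2\comp)$ queries but also that the \emph{sum} of blank piece counts across queries is $\Oh(\redsize^2\comp)$ rather than something like $\Oh(\redsize^3\comp)$; this requires charging each blank piece to the unique interaction window of red pieces in which it lives, rather than counting per query.
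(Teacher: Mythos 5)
Your high-level plan — iterate $j$ over $J$, maintain a blank/explicit mixed representation of $\Comp(\I_j, \rred{P}, \rred{T}, \comp)$, use {\tt BCDPM-Set} for the frequent length changes and reserve expensive operations for structural events, then charge blank pieces to ``interaction windows'' of red-piece pairs — is exactly the route the paper takes. You also correctly identify the two load-bearing pieces of the argument and flag them as open: (a) a well-defined update-generation procedure preserving the invariant, and (b) the $\cO(\redsize^2\comp)$ bound on the \emph{sum} of blank-piece counts across queries. The issue is that you leave both of these as ``obstacles,'' and your stated representation rule would not resolve them.

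Concretely, you write that a maximal internal run of plain pairs is kept \emph{explicit} if its length is at most $\comp$, and becomes a blank piece only if it would otherwise be longer than $\comp$. This is inverted. Under this rule, the runs whose lengths are actively changing as a red $T$-piece drifts — which have size $<\comp$ during the window — would be explicit, so shrinking or growing them requires {\tt BCDPM-Insert}/{\tt BCDPM-Delete} rather than {\tt BCDPM-Set}, giving $\Theta(\redsize^2\comp)$ expensive updates and blowing the time budget. Conversely, the far-away saturated runs would all be blank pieces, each present in the sequence for $\Theta(\redsize^2\comp)$ queries, potentially giving $\Theta(\redsize^3\comp)$ total blank pieces. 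Your own narrative later says that runs adjacent to red $T$-pieces are blank pieces that you update via {\tt BCDPM-Set} — which contradicts the stated rule. What the paper actually does is define a run as \emph{active} (hence represented by a blank piece) precisely when it has size strictly smaller than $\comp$ \emph{and} is enclosed by a leading/trailing pair and a $\rred{T}$-pair, or by a $\rred{P}$-pair on one side and a $\rred{T}$-pair on the other; all other runs, including the saturated ones, stay explicit. With that definition both (a) and (b) go through: the paper generates four explicit update types via lists $L_P$, $L_T$ and a BST-backed rank structure, and it proves (Claim~\ref{claim:few_blank}) that each $(P_x,T_y)\in\rred{P}\times\rred{T}$ is adjacent to an active run for only $\cO(\comp)$ values of $j$, giving the $\cO(\redsize^2\comp)$ total. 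So your intuition about interaction windows is right, but without the correct active/inactive split the charging argument does not close, and the proposal as stated has a genuine gap.
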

\begin{proof}
    We set $\mathcal{S}_\beta$, $\mathcal{S}_\mu$, $\mathcal{S}_{\varphi}$
    according to \cref{def:puzzleset},
    choose $\Delta \coloneqq 6\kappa = 6(k + d_P +
    d_T)=\cO(k)$,
    and set $\mathcal{Q}\coloneqq (\Q\fragmentco0{\tau+\Delta},\Q\fragmentco0{\tau+\Delta})$.
    We compute sets $\sp{P}$ and $\sp{T}$ in $\cO(k)$ time in the \pillar model using \cref{lem:specialbound-simpl}.
    Note that \cref{lem:specialbound-simpl} ensures that \(\ed(\Sb) + \ed(\Sm) + \ed(\Sf) = \Oh(k)\).

    We initialize the \BCDPM data structure with a call to the operation
    {\tt BCDPM-Init($\Y'_{\min J}$, $\comp$, $k$, $\Delta$, $\Sb$, $\Sm$, $\Sf$, $\mathcal{Q}$, $w$)}.
    We denote the maintained BCDPM-sequence by \(\Y\).
    Then, we wish to transform \(\Y\) to be equal to \(\Y'_j\), for each $j \in J$, so that we can
    call a {\tt BCDPM-Query} to compute $\OccW_k(\Y'_j)$ and construct arithmetic
    progressions of $(k,w)$-occurrences, akin to~\cite{ckw22}.
    In the next claim, we generate sequences $\update_j$ of \BCDPM updates operations
    (BCDPM-updates)
    that allow us to do exactly that and satisfy the conditions of the statement of the
    theorem.

    \begin{claim}\label{claim:kaupdates}
        In $\cO(\redsize^2\comp \log n)$ time in total, we can compute,
        for all $j \in J\setminus \{\min J\}$ with
        $\Y'_{j-1} \neq \Y'_j$,
        a sequence $\update_j$ of \BCDPM update operations that transforms
        $\Y'_{j-1}$
        to
        $\Y'_j$,
        such that the required invariants are maintained at all times.
        Further, these sequences of updates are returned in increasing order with respect to $j$
        and contain in total
        \begin{itemize}
            \item \(\Oh(\redsize^2)\) calls to each of {\tt BCDPM-Delete}, {\tt
                BCDPM-Insert}, {\tt BCDPM-Substitute}, {\tt BCDPM-Commit}, and
                {\tt BCDPM-Bleach}, and
            \item \(\Oh(\redsize^2\comp)\) calls to {\tt BCDPM-Set}.
        \end{itemize}
    \end{claim}
    \begin{claimproof}
        In order to develop some intuition, let us examine how the length of the run
        of plain pairs that succeeds the
        head of $\I$ changes in the course of the algorithm.
        This length may decrease as a DPM pair~$A$ containing some $T_t \in \rred{T}$
        approaches the head of $\I$ while we slide $P$ on $T$.
        When such a DPM-pair disappears in the process of transforming
        $\I_{t-1}$ to $\I_{t}$,
        the length of the considered run might increase depending on where the leftmost
        pair with a red piece in $\I_t$ is;
        roughly speaking, the run of plain pairs after $A$ in $\I_{t-1}$ becomes the run of
        plain pairs that succeeds the head of $\I_t$.

        We say that a run of plain pairs is \emph{enclosed} by the two DPM-pairs
        that are adjacent to it.
        Suppose that in each of $\I_{j-1}$ and $\I_j$ there is a run of plain pairs that is
        enclosed by a pair that contains a fragment $P_i \in \rred{P}\cup \{P_1, P_z\}$
        and a pair that contains a fragment $T_t \in \rred{T}$.
        The lengths of these two runs differ by at most one; details are provided below.
        Observe that any run of plain pairs in $\I_{j-1}$ that is not enclosed by a pair
        that contains a fragment $P_i \in \rred{P}\cup \{P_1, P_z\}$ and a pair that contains
        a fragment $T_t \in \rred{T}$
        either remains intact in $\I_j$ or is shifted to the left by one position.
        In particular, each change to the length of a run of plain pairs as we transform $\I'_{j-1}$ to~$\I'_j$ can
        be attributed to a single pair of fragments $P_i \in \rred{P}\cup \{P_1, P_z\}$
        and $T_t \in \rred{T}$
        getting closer to (or farther from) each other.

        Let us call an internal run of plain pairs of a DPM-sequence \emph{active} if it
        is of size smaller than~$\comp$ and either of the following holds:
        \begin{itemize}
            \item it is enclosed by the leading pair and a pair containing a piece of
                $\rred{T}$,
            \item it is enclosed by the trailing pair and a pair containing a piece of
                $\rred{T}$,
            \item it is enclosed by distinct DPM-pairs such that one contains a piece of
                $\rred{P}$ and no piece from $\rred{T}$
                and the other contains a piece of $\rred{T}$ and no piece from $\rred{P}$.
        \end{itemize}
        Internal runs of plain pairs that are not active are called \emph{inactive}.
        For a DPM-sequence~$\S$ with some canonical pairs marked as plain, let us denote
        by $\blact(\S)$ the \blanked-representation of $\S$,
        obtained by replacing each active internal run of $r$ plain pairs
        by a blank piece $(\mathcal{Q},r)$.
        For each \(j\in J\),
        let $\Y'_j \coloneqq \blact(\I'_j)$.
        Note that $\Y'_{j-1} \neq \Y'_j$ if and only if $\I'_{j-1} \neq \I'_j$ for each $j \in J\setminus \{\min J\}$.

        For convenience, apart from \(\Y\), we also explicitly maintain the sequence \(\I
        = \unpack(\Y)\), initialized as \(\I'_{\min J}\), that we transform to be equal to
        each \(\I'_{j}\).
        In particular, we first compute updates for $\I$ and then transform them to
        updates for $\Y$, ensuring that, at all times, $\Y \coloneqq \blact(\I)$.
        We store $\Y$ and $\I$ as doubly-linked lists $Y$ and $I$, respectively, and
        maintain bidirectional pointers between pairs in $Y$ and $I$ that contain a common
        red piece, labelled with said red piece(s).
        (We do not explicitly mention when such pointers need to be updated.)

        We compute sequences of updates for $I$. Each of them is of one of the following
        forms:
        \begin{itemize}
            \item $\textsf{list-sub}(\textsf{pointer}, (P_i,T_{j,i}))$:
                given a handle $\textsf{pointer}$ to an element of $I$,
                substitute $(P_i,T_{j,i})$ for this element;
            \item $\textsf{list-ins}(\textsf{pointer}, (P_i,T_{j,i}))$:
                given a handle $\textsf{pointer}$ to an element of $I$,
                insert $(P_i,T_{j,i})$ before this element;
            \item $\textsf{list-del}(\textsf{pointer})$:
                given a handle $\textsf{pointer}$ to an element of $I$,
                delete its successor in $I$.
        \end{itemize}
        Further, each update is of the following types, and is annotated with its type:
        \begin{itemize}
            \item Type 1: A substitution of the head/tail of the sequence or an update
                that only involve pairs that contain a red piece.
            \item Type 2: An insertion or a deletion of a plain pair that results in the
                creation or the destruction of a run of plain pairs.
            \item Type 3: An insertion or a deletion of a plain pair that changes whether
                a run of plain pairs is active.
            \item Type 4: An insertions or a deletion of a plain pair that
                shrinks or expand an active internal run that remains active after the update.
        \end{itemize}
        In total, our sequences will contain $\cO(\redsize^2)$ updates of Types 1-3 and
        $\cO(\redsize^2 \comp)$ updates of Type 4.
        We show how to maintain $\Y$ subject to each update on $I$, such that $\Y
        \coloneqq \blact(\I)$ by transforming, in $\cO(\log \redsize)$ time, each update
        for $I$ to
        \begin{itemize}
            \item $\cO(1)$ BCDPM-updates for $\Y$ if the update is of Type 1, 2, or 3, or
            \item to a single {\tt BCDPM-Set} if the update is of Type 4.
        \end{itemize}
        We maintain $Y$ analogously.

        In order to facilitate the efficient transformation of updates for $I$ to updates
        for $\Y$, we use a balanced binary search tree to support the following operation
        in $\cO(\log \redsize)$ time at the cost of an $\cO(\log \redsize)$-time additive
        overhead for each update operation on $Y$:
        given a handle to an element of $Y$, return its rank, that is,
        the number of elements that precede it in~$Y$.
        Additionally, in $\cO(\redsize \log \redsize)$ time, we insert the elements of $\rred{P} \cup \{P_1,
        P_z\}$ in a doubly linked list $L_P$ and the elements of
        \(\rred{T}\) in a doubly linked list $L_T$,
        sorted with respect to their starting positions.
        At all times, we store a bidirectional pointer between each element of each
        of $L_P$ and $L_T$ and the DPM-pair that contains it in $I$.
        (We do not explicitly mention when such pointers need to be updated.)
        For both \(P\) and \(T\), we write \(\textsf{pointer}(P_i)\) (or
        \(\textsf{pointer}(T_i)\))
        for the pointer from the element of $L_P$ (or \(L_T\)) corresponding to $P_i$ (or
        \(T_i\)) to the element of $I$ that contains $P_i$ (or \(T_i\)).

        Throughout the course of the algorithm, to ensure the correct and efficient
        maintenance of our pointers, we only update pairs that contain a piece in
        $\rred{P}$ in each of $I$ and $Y$ using substitution operations.

        We now explain how to map each computed update for $I$ to a
        BCDPM-update for~$\Y$.
        Observe that using our pointers and rank structure, given a handle to an element
        of $I$ that contains a red piece, we can compute in $\cO(\log \rho)$ time the rank
        of the pair that contains the same red piece in $Y$ and, hence, in $\Y$.
        We transform an update for $I$ to an update for $Y$ and $\Y$ as follows:
        \begin{itemize}
            \item If the update is of Type 1, we simply change the index of the update
                based on the computed rank.
            \item If the update is of Type 2, we consider two cases:
                \begin{itemize}
                    \item if the update results in the creation of a run, we issue a {\tt
                        BCDPM-Insert} update, followed by a {\tt BCDPM-Bleach} update;
                    \item if the update results in the destruction of a run, we issue a
                        {\tt BCDPM-Delete} update.
                \end{itemize}
            \item For a Type 4 update, we also consider two cases:
                \begin{itemize}
                    \item if the update is of the form $\textsf{list-del}$ and changes the
                        status of an inactive run to active, we issue a {\tt BCDPM-Bleach}
                        update followed by a {\tt BCDPM-Set} update;
                    \item if the update is of the form $\textsf{list-ins}$ and changes the
                        status of an active run to inactive, we issue a {\tt BCDPM-Commit}
                        update.
                \end{itemize}
            \item If the update is of Type 5, we transform it to a {\tt BCDPM-Set} update.
        \end{itemize}

        We now explain how to compute the desired updates for $I$, which we insert in a
        global min-heap with keys in
        $\fragmentoc{\min J}{\max J}$.
        When we are done generating updates, we pop them from the heap, appending
        each update with key $j$ to the sequence $\update_j$.
        This process indeed returns sequences
        $\textsf{update}_j$ in increasing order with respect to~$j$.

        We first issue updates for the head and the tail of the sequence.\footnote{Some
        of the issued updates might be redundant or duplicate, but this is not a problem.}
        To this end, we first compute the union $\textsf{HeadTail}$ of the sets
        \[
            \{ j \in J : T_{j,1} \neq Q^\infty\fragmentco{-\kappa}{2\tau+\Delta} \}
            \quad\text{and}\quad
            \{ j \in J : T_{j,z} \neq Q^\infty\fragmentco{z\tau}{y_P+\kappa} \},
        \]
        in $\cO(k)$ time using \cref{lem:specialbound-simpl}.
        For each $j \in \textsf{HeadTail} \setminus \min{J}$, for each $e \in \{1,z\}$,
        we issue an update
        $\textsf{list-sub}(\textsf{pointer}(P_e), (P_e,T_{j,e}))$
        with key $j$.
        Further, for each $j \in \textsf{HeadTail} \setminus \max{J}$, for each $e \in \{1,z\}$,
        we issue update $\textsf{list-sub}(\textsf{pointer}(P_e), (P_e,T_{j+1,e}))$
        with key $j+1$.
        The number of such issued updates is $\cO(k)=\cO(\redsize)$ and their all of Type
        1.

        Let us now issue all updates that involve some red piece.
        For each pair of fragments $P_i \in \rred{P}$ and $T_t \in \rred{T}$ we do the following.
        \begin{itemize}
            \item If $t-i \in \fragmentoc{\min J}{\max J}$, we issue updates:
                \begin{itemize}
                    \item $\textsf{list-sub}(\textsf{pointer}(P_i), (P_i,T_t))$
                        with key $t-i$,  substituting $(P_i, T_{t-1})$ with $(P_i, T_t)$
                        and
                    \item if $i+1<z$ and $P_{i+1}\not\in \rred{P}$,
                        $\textsf{list-del}(\textsf{pointer}(P_i))$
                        with key $t-i$, deleting the pair  $(P_{i+1}, T_t)$.
                \end{itemize}
            \item If $t-i+1 \in \fragmentoc{\min J}{\max J}$, we issue updates:
                \begin{itemize}
                    \item if $i>2$ and $P_{i-1} \not\in \rred{P}$,
                        $\textsf{list-ins}(\textsf{pointer}(P_i), (P_{i-1},T_t))$ with key
                        $t-i+1$, inserting $(P_{i-1}, T_t)$ just before the pair
                        containing $P_i$, and
                    \item if $T_{t+1} \not \in \rred{T}$,  $\textsf{list-sub}(\textsf{pointer}(P_i), (P_i,T_{t+1}))$
                        with key $t-i+1$, replacing $(P_i, T_{t})$ with $(P_i, T_{t+1})$.
                \end{itemize}
        \end{itemize}
        Over all pairs of red pieces, we thus issue $\cO(\redsize^2)$ such updates in
        total, and each of these updates only involves pairs that contain a red piece. We
        thus annotate these updates as Type 1.

        Next, we compute updates that enable the efficient maintenance of the (trimmed)
        run of plain pairs that is enclosed by a pair
        containing some $P_i \in \rred{P}\cup\{P_1, P_z\}$
        and a pair containing some $T_t \in \rred{T}$, when such a run exists.

        First, we consider the case when the pair containing $P_i$ precedes the pair
        containing $T_t=T_{j-1,t-(j-1)}$ in some $\I'_{j-1}$
        and these two pairs enclose a non-empty run of plain pairs.
        In this case, the run would either shrink in $\I'_j$ or retain its length $\comp$.
        Let the successor of $P_i$ in $L_P$ be $P_v$.
        Further, if $T_t$ is not the first element in $L_T$ let its predecessor in $L_T$
        be $T_u = T_{j-1,u-(j-1)}$; otherwise, set $u\coloneqq -\infty$.
        The following conditions must be satisfied:
        \begin{enumerate}[(i)]
            \item $t - (j - 1) \leq v$, which is equivalent to $j > t - v$, so that $P_v$
                is not in a pair strictly between the two pairs and hence none of the
                elements of $\rred{P}$ is,
            \item $u - (j - 1) \leq i $, which is equivalent to $ j > u - i$, so that
                there is no non-plain pair containing some element of $\rred{T}$ between
                the two pairs, and
            \item $t - (j - 1) - i > 1 $, which is equivalent to $ j \leq t - i - 1$, so
                that the precedence condition is satisfied and the two pairs are not
                adjacent.
        \end{enumerate}
        Further, the run should shrink in $\I'_j$ only if the two pairs are close enough,
        that is, if
        $t - (j - 1) - i -1 \leq \comp$, which is equivalent to $j > t - i - \comp - 1$.
        Hence, for each \[
            j \in \fragmentoc{\max\{ \min J, t - v, u - i, t - i - \comp - 1  \}}{\min \{
        \max J, t - i - 1 \}},\] we issue an update
        $\textsf{list-del}(\textsf{pointer}(P_i))$ with key $j$, deleting a plain pair
        from the run succeeding the pair that contains $P_i$
        with key $j$.
        Observe that only the last update may result in a destruction of the run, while
        only the first update may change its status by turning it from inactive to active.
        We can compute whether this is the case for these two updates in $\cO(\log
        \redsize)$ time, and hence annotate all updates appropriately (each of them is of
        Type 2, 3, or 4).
        Observe that the total number of issued updates is at most $(t - i - 1)-(t - i -
        \comp - 1) = \comp$ and at most two of them are not of Type 4.

        We now consider the complementary case when in each of $\I'_{j-1}$ and $\I'_j$,
        the pair containing $P_i$ succeeds the pair containing $T_t$
        and these two pairs either enclose a non-empty run of plain pairs or are
        adjacent.\footnote{Observe that the pairs can only be adjacent in $\I'_{j-1}$.}
        In this case, the run would either be expanded by one plain pair in $\I'_j$ or
        retain its length $\comp$.
        Let the predecessor of $P_i$ in $L_P$ be $P_y$.
        Further, if $T_t$ is not the last element in $L_T$ let its successor in $L_T$ be
        $T_x = T_{j-1,x-(j-1)}$; otherwise, set $x\coloneqq \infty$.
        The following conditions must be satisfied:
        \begin{enumerate}[(i)]
            \item $t - (j - 1) > y $, which is equivalent to $ j \leq t - y$, so that
                $P_x$ is strictly to the left of both pairs in $\I'_{j-1}$, as if this is
                not the case then
                the condition on $\I'_j$ would not be satisfied,
            \item $x - (j - 1) > i $, which is equivalent to $ j \leq x - i$, so that
                $T_y$ (if it exists) is strictly to the right of both pairs in
                $\I'_{j-1}$, as if this is not the case then
                the condition on $\I'_j$ would not be satisfied,
            \item $t - (j - 1) < i $, which is equivalent to $ j > t - i + 1$, so that the
                precedence condition is satisfied.
        \end{enumerate}
        Further, the run should be expanded in $\I'_j$ only if the two pairs are close enough,
        that is, if
        $i - (t - (j - 1)) -1 < \comp $, which is equivalent to $ j \leq \comp + t - i +
        1$.
        Hence, for each \[
            j \in \fragmentoc{\max\{ \min J, t - i + 1 \}}{\min \{ \max J, t - y, x - i,
        \comp + t - i + 1 \}},\] we issue an update
        $\textsf{list-ins}(\textsf{pointer}(P_i),
        (Q^\infty\fragmentco{0}{\tau+\Delta},Q^\infty\fragmentco{0}{\tau+\Delta}))$, inserting a
        plain pair in the (potentially empty) run preceding the pair that contains $P_i$
        with key $j$.
        Observe that only the first update may result in the creation of a run, while only
        the first update may change its status from active to inactive.
        We can compute whether this is the case for these two updates in $\cO(\log
        \redsize)$ time, and hence annotate all updates appropriately (each of them is of
        Type 2, 3, or 4).
        Observe that the total number of such issued updates is at most $(\comp + t - i +
        1)-(t - i + 1) = \comp$ and at most two of them are not of Type 4.

        We can compute all updates for a given pair (in the intermediate interface)
        in $\cO(\comp)$ time using lists $L_P$ and $L_T$.

        The invariants of \BCDPM are clearly satisfied at initialization.
        By construction, throughout the course of the algorithm each blank piece in $\Y$
        is of order at most $\comp$.
        Additionally the second invariant is satisfied at all times, since all the updates
        involving the head or tail of $\Y$
        only involve pairs in $\mathcal{S}_\beta \times \mathcal{S}_\beta$ and
        $\mathcal{S}_{\varphi} \times \mathcal{S}_{\varphi}$, respectively,
        while all remaining updates involve pairs in $\mathcal{S}_\mu \times
        \mathcal{S}_\mu$.
    \end{claimproof}

    We are now ready to complete the reduction to \BCDPM; see~\cref{alg:arprogr} for a
    pseudocode implementation.

    Using \cref{claim:kaupdates}, we initialize the
    set \[
        \mathcal{U}=\{\textsf{update}_j : j\in \fragmentoc{\min J}{\max J} \text{ and }
        \Y'_{j-1} \neq \Y'_j\}.
    \]
    To simplify the construction of certain arithmetic progressions later in the proof,
    we insert to $\mathcal{U}$ the following single-element (void) sequences of updates.
    For each $j \in \fragmentoc{\min J}{\max J}$ such that $r_j - r_{j-1}\neq \tau$ and $\Y'_{j-1} = \Y'_j$,
    we insert to $\mathcal{U}$ the sequence $\textsf{update}_j$ that consists of a single
    element: {\tt BCDPM-Substitute(\((P_1,T_{j,1})\),\(1\))}, that is,
    the substitution of the first pair of the maintained sequence with $(P_1,T_{j,1})$.
    By \cref{fct:rj}, we only generate $\cO(d_T+1)$ new updates.

    Now, we initialize the sequence $\Y$ with $\Y'_{\min J}$.
    We consider the sequences of updates $\textsf{update}_j$ in~$\mathcal{U}$
    in increasing order with respect to~$j$
    and apply them to the maintained sequence~$\Y$.
    Prior to performing the sequence of updates $\textsf{update}_j$ we do the following.
    Suppose that the maintained sequence~$\Y$ corresponds to $\Y'_{t}$, that is,
    either $\textsf{update}_j$ is the first element of $\mathcal{U}$ and $t=\min J$
    or the previously applied sequence of updates was $\textsf{update}_{t}$.
    First, we compute $\OccW_k(\Y'_t)$ using a \texttt{BCDPM-Query}, noting that
    \(\tor(\I'_t) \le \tor(\I_t) \le \Delta/2 - k\) is satisfied at all times due to the
    fact that $\I'_j$ is a subsequence of~$\I_j$ and~\cref{fact:lengths_diff}.
    Now, observe that we have $\Y'_t = \cdots = \Y'_{j-1}$
    and hence $\OccW_k(\Y'_t) = \cdots = \OccW_k(\Y'_{j-1})$.
    Further, for each
    $i \in \fragmentoo{t}{j}$, we have $r_i-r_{i-1} = \tau$ and hence
    $r_i + \OccW_k(\Y'_i) = r_t + \OccW_k(\Y'_t) + (i-t)\tau$.
    We can thus efficiently return
    $\bigcup_{i = t}^{j-1} (r_i+\OccW_k(\Y'_t))$
    as the union of $\cO(|\OccW_k(\Y'_i)|)$ arithmetic progressions
    \[
        \bigcup_{a \in \OccW_k(\Y'_t)} \{r_t+a + i \cdot \tau : i \in \fragmentco{0}{j-t}\}.
    \]
    (Observe that in the case when $t=j-1$, each of the constructed arithmetic
    progressions consists of a single element.)
    Finally, we apply the sequence of updates $\textsf{update}_j$.
    The case when there are no more updates to be processed is treated analogously (with
    $j$ set to $\max J+1$).
    The upper bound on the number of returned arithmetic progressions is embedded in the
    analysis of the time complexity of the algorithm.

    \begin{algorithm}[t!]
        \SetKwBlock{Begin}{}{end}
        \SetKwFunction{DPMquery}{BCDPM-Query}
        \swaps{$T$, $P$, $k$, $w$, $Q$, $\A_P$, $\A_T$, $\rred{P}$, $\rred{T}$, $\comp$}\Begin{
            $\mathcal{G} \gets \emptyset$\;
            Construct set $\mathcal{U}$ of sequences $\textsf{update}_j$ specified in
            \cref{claim:kaupdates}\tcp*{$\cO(\redsize^2 \comp \log n)$ time}
            \ForEach{$j \in \fragmentoc{\min J}{\max J}$ such that $r_j - r_{j-1}\neq
                \tau$ and $\Y'_{j-1} = \Y'_j$}{\label{line:void1}
                $\textsf{update}_j \gets \{${\tt BCDPM-Substitute(\((P_1,T_{j,1})\),\(1\))}\(\}\)\;
                $\mathcal{U} \gets \mathcal{U} \cup \{\textsf{update}_j\}$\;\label{line:void2}
            }
            {\tt BCDPM-Init($\Y'_{\min J}$, $\comp$, $k$, $\Delta$, $\Sb$, $\Sm$, $\Sf$,
            $\mathcal{Q}$, $w$)}\tcp*{$\cO(\redsize \comp + \redsize \log k)$ time}
            $t \gets \min J$\;
            \ForEach{$j \in \{ i \in \fragmentoc{\min J}{\max J + 1} : \textsf{update}_i
                \neq \emptyset \text{ or } i = \max J + 1\}$ in
                incr.~order}{\label{line:for1}
                $\OccW_k(\Y'_j) \gets \DPMquery$\;\label{line:dpmq}
                \ForEach{$a \in \OccW_k(\Y'_j)$}{
                    $\mathcal{G} \gets \mathcal{G} \cup \{r_{t} + a + i \cdot \tau : i \in
                    \fragmentco{0}{j-t}\}$\;\label{line:arprogr}
                }
                \If {$j \leq \max J$}{
                    Apply the sequence of updates $\textsf{update}_{j}$ on $\Y$\;
                    $t \gets j$\;\label{line:for2}
                }
            }
            \Return{$\mathcal{G}$}\;
        }
        \caption{Reduction of computing $\mathcal{G}$ to an instance of \BCDPM{}.}\label{alg:arprogr}
    \end{algorithm}

    We now proceed to analyze the time required by \cref{alg:arprogr}.

    First, recall that the elements of $\mathcal{U}$ as returned by \cref{claim:kaupdates}
    in $\cO(k^2 \comp \log n)$ time are sorted in increasing order with respect to $j$ and
    a representation of $(r_j)_{j\in J}$
    as the concatenation of $\Oh(d_T+1)$ arithmetic progressions with difference $\tau$
    can be computed in $\Oh(d_T+1)$ time in the \pillar model due to \cref{fct:rj}.
    Thus, Lines~\ref{line:void1}--\ref{line:void2} can be implemented in $\cO(k^2 \comp)$
    time in a parallel left-to-right scan of $\mathcal{U}$ and the aforementioned
    representation of $(r_j)_{j\in J}$,
    maintaining the sortedness of $\mathcal{U}$.

    Observe that each $\I'_{j}$, for $j \in J$ is of length $\cO(k \comp)$ as it consists
    of a head,
    a tail, $\cO(\redsize)$ non-plain pairs and $\cO(\redsize)$ internal runs of plain
    pairs, each
    of length at most $\comp$.
    Sequence $\Y'_{\min J}$ can be thus computed in $\cO(\redsize \comp + \redsize \log
    \redsize)$
    time in the \pillar model: the head and the tail are computed in $\cO(k)$ time in the
    \pillar model due to
    \cref{lem:specialbound-simpl}, while the pairs consisting of internal pieces can be
    computed
    by processing the elements of $\rred{P}$ and $\rred{T}$ in a left-to-right manner,
    after sorting them
    with respect to their starting positions in $\cO(\redsize \log \redsize)$ time.
    The BCDPM sequence in the call to {\tt BCDPM-Init} is thus a sequence of length
    $\cO(\redsize \comp)$.

    The for-loop of Lines~\ref{line:for1}--\ref{line:for2} is executed $\cO(\redsize^2
    \comp)$ times
    as this is an upper bound on the number of sequences of updates that are processed and
    hence on the calls to {\tt BCDPM-Query}.
    For each $j\in J$, due to \cref{fact:lengths_diff}, we have
    that the difference of the two $\Delta$-puzzles represented by the first and second
    components of the elements
    of each $\I'_j$ is at most $3\kappa-k$ and hence $|\OccW_k(\Y'_j)|=\cO(k)$.
    Thus, each query in \cref{line:dpmq} returns a set of size $\cO(k)$.
    Hence, apart from the time required for \BCDPM updates and queries, each iteration of
    the for-loop takes $\cO(k)$ time,
    for a total of $\cO(k \redsize^2 \comp)$ time.
    It readily follows that the algorithm outputs $\cO(k \redsize^2 \comp)$ arithmetic
    progressions.

    In total, in $\cO(k \redsize^2 \comp + \redsize^2\comp \log n)$ time in the \pillar
    model,
    the problem in scope reduces to an instance of \BCDPM
    with $\mathcal{S}_\beta+\mathcal{S}_\mu+\mathcal{S}_{\varphi}=\cO(k)$ (due to
    \cref{lem:specialbound-simpl}), $\Delta = 6\kappa = \Oh(k)$,
    $\Y$ initialized as a sequence of length $\cO(\redsize \comp)$, and
    updates satisfying the condition in the lemma's statement due to
    \cref{claim:kaupdates}.

    It only remains to bound the total number of blank pieces over all calls to {\tt
    DPM-Query}.
    \begin{claim}\label{claim:few_blank}
        The total number of blank pieces over all calls to {\tt DPM-Query} is
        $\cO(\redsize^2 \comp)$.
    \end{claim}
    \begin{claimproof}
        In a sequence $\Y'_j$, a blank piece is adjacent either to the leading element of
        $\Y'_j$, or to the trailing element of $\Y'_j$, or
        to a pair of strings containing a piece from $\rred{P}$ and a pair of strings
        containing a piece from $\rred{T}$.

        We have already established that the number of calls to {\tt BCDPM-Query} are
        $\cO(\redsize^2 \comp)$.
        Thus, we get a trivial bound of $\cO(\redsize^2 \comp)$ for the number of
        blank pieces adjacent to the
        leading/trailing element of the BCDPM-sequences involved in all calls to {\tt
        BCDPM-Query}.
        It thus remains to bound, over all calls to {\tt BCDPM-Query}, the number of blank pieces
        adjacent to two internal elements of the BCDPM-sequence
        such that one contains a piece from $\rred{P}$ and the other contains a piece from $\rred{T}$.
        For each of the $\cO(\redsize^2)$ pairs $(P_x,T_y) \in \rred{P} \times \rred{T}$,
        a pair containing $P_x$ and a pair containing $T_y$ can be adjacent at the same
        time to an internal run of canonical pieces
        of size smaller than $\comp$ in $\I'_j$ for $\cO(\comp)$ values of $j$.
        As only internal runs of $\I'_j$ of size smaller than $\comp$ are represented by blank pieces, the bound follows.
    \end{claimproof}
    This concludes the proof of the theorem.
\end{proof}

\subsection{Wrap-up: Solution for Almost Periodic Patterns}

In order to complete the description of our efficient algorithm for \SM,
we use the data structure encapsulated in the following theorem, which is proved in
\cref{sec:dpm}.

\begin{restatable*}[Efficient implementation for \BCDPM]{theorem}{rdpmds}
    \dglabel{xxdsyqwfxu}[lemma2,lemma1,11-3-1,lem:trim,lem:small_ed,uogiuvnbph,azjqbhozka,cor:dproduct]
    There is an implementation for \BCDPM that satisfies all of the following.
    \begin{itemize}
        \item {\tt BCDPM-Init($\Y'_{\min J}$, $\comp$, $k$, $\Delta$, $\Sb$, $\Sm$, $\Sf$,
            $\mathcal{Q}$, $w$)} can be performed in time
            \begin{itemize}
                \item \(\Oh(k^4 \log^2 (k \lambda) + |\Y'| k^2 )\) for general weights and
                    in time
                \item \(\Oh(k^3 W \log^2 (k \lambda) + |\Y'| k W \log k )\) for integer weights,
            \end{itemize}
            where $\lambda$ is the length of the longest string in $\Sb \cup \Sm \cup \Sf$;
        \item any call to {\tt BCDPM-Set} can be performed in constant time;
        \item any call to {\tt BCDPM-Query} can be performed in \(\Oh(k \rho)\) time for
            general weights and in  \(\Oh(k \rho \log \Delta)\) time for integer weights,
            where \(\rho\) is the number of blank pieces at the time of the query;
        \item any call to any other operation can be performed in time
            \begin{itemize}
                \item \(\Oh(k^2 \log(k + |\Y|))\) for general weights and in time
                \item  \(\Oh(kW \log^2(k + |\Y|))\) for integer weights,
            \end{itemize}
            where $\Y$ is the
            BCDPM sequence to which the operation is applied.\qedhere
    \end{itemize}
\end{restatable*}

We are now ready to prove \cref{lem:solve_SM}.

\solveSM
\begin{proof}
    Let us set
    {$\tau \coloneqq q\ceil{{k+d_P+d_T}/{2q}}$},
    and $\Delta \coloneqq 6\kappa$.

    If $\beta_P  \coloneqq \ceil{{y_P}/{\tau}}<20$, we are done, as we can solve the \SM
    problem in $\cO(k^3 \beta_P \log^2 (mk))$ time in the \pillar model, returning an
    $\cO(k)$-size explicit representation of $\OccW_k(P,T)$, due to \cref{lem:smallbeta}.
    We henceforth suppose that $\beta_P > 20$, in which case all the statements of
    \cref{subsec:puzzles,subsec:redDPM} hold.

    \cref{fct:rj} states that $\OccE_k(P, T) = \bigcup_{j\in J} \big(\OccE_k(P, R_j) + r_j
    \big)$.
    Due to \cref{fact:simple}, we also have $\OccW_k(P, T) \subseteq \bigcup_{j\in J}
    \big(\OccW_k(P, R_j) + r_j \big)$.
    Then, \cref{fct:puzzleP,fct:puzzleT} state that $\val_{\Delta}(P_1,\ldots,P_z) = P$
    and $\val_{\Delta}(T_{j,1},\ldots,T_{j,z}) = R_j$, respectively.
    Further, due to \cref{fact:druns}, we have that,
    for all \(j \in J\), $\OccW_k(P, R_j) = \OccW_k(\I_j) = \OccW_k( \Comp(\I_j, \sp{P},
    \sp{T}, k + 2)$.
    It thus suffices to compute \[
        \mathcal{G} \coloneqq \bigcup_{j \in J}
        (r_j+\OccW_k(\Comp(\I_j, \sp{P}, \sp{T}, k+2))).
    \]

    To this end, we call procedure \swaps{$T$, $P$, $k$, $w$, $Q$, $\A_P$, $\A_T$,
        $\sp{P}$, $\sp{T}$,
    $k+2$} from \cref{lem:swaps}, which computes $\mathcal{G}$ represented as \(\Oh(k^4)\)
    arithmetic progressions with difference \(\tau\) since we have $\sp{P}+\sp{T}=\cO(k)$
    from \cref{lem:specialbound-simpl}.
    In what follows, we use the implementation of \BCDPM from \cref{xxdsyqwfxu} for
    arbitrary weight functions.
    Let us analyze the time complexity of this call.
    The reduction takes time \( \Oh( k^3 (k + \log n)) \) in the \pillar model.
    We next analyze the time required by our calls to BCDPM operations:
    \begin{itemize}
        \item We call
            {\tt BCDPM-Init($\Y'$, $k+2$, $k$, $\Delta$, $\Sb$, $\Sm$, $\Sf$, $\mathcal{Q}$, $w$)}
            with $\Y'$ being an BCDPM sequence of size $\cO(k^2)$.
            This requires $\cO(k^4 \log^2 (mk) + k^2 \cdot k^2) = \cO(k^4 \log^2 (mk))$
            time, as the lengths of all strings in $\Sb \cup \Sm \cup \Sf$ are $\cO(m)$.
        \item We make \(\Oh(k^2)\) calls to each of {\tt BCDPM-Delete}, {\tt
            BCDPM-Insert}, {\tt BCDPM-Substitute}, {\tt BCDPM-Commit}, and {\tt
            BCDPM-Bleach}.
            These calls are processed in total time $\cO(k^2 \cdot k^2 \log k) = \cO(k^4
            \log k)$ time, since the maintained sequence $\Y$ is of length polynomial in
            $k$
            throughout the course of the algorithm.
        \item We make \(\Oh(k^3)\) calls to {\tt BCDPM-Set} and {\tt BCDPM-Query}, with
            the total number of blank pieces over all calls to {\tt BCDPM-Query} being
            $\cO(k^3)$.
            Let us denote the number number of calls to {\tt BCDPM-Query} by $\nu$, and
            the number of blank pieces in the $i$-th call to the {\tt BCDPM-Query} by
            $b_i$.
            The time required for the \(\Oh(k^3)\) calls to {\tt BCDPM-Set} and {\tt
            BCDPM-Query} is then $\cO(k\cdot \big(k^3 + \sum\nolimits_{i \in \fragment{1}{\nu}}
            b_i\big)\big) = \cO(k^4)$.
    \end{itemize}
    In total, our call to \swaps thus takes $\cO(k^4 \log^2 (mk))$ time in the \pillar model.
\end{proof}

\section{An \texorpdfstring{\boldmath $\cOtilde(n+k^{3.5}\cdot W^4 \cdot n/m)$}{Õ(n+k³˙⁵·W⁴·n/m)}-time Algorithm for Metric Integer Weights}\label{sec:marking}

In this section, we incorporate the results of \cref{sec:verify,sec:dpm} into the approach
of \cite{ckw22} in order to prove the following result, which is the crucial step on the
way to proving \cref{thm:main} for integer metric weights.
\solveintSM*
Most statements in this section have direct unweighted counterparts in \cite{ckw20,ckw22}.

\subsection{Toolbox}

\begin{lemma}[\texttt{FindAWitness($k$, $Q$, $S$, $w$)}, {\cite[see Lemma
    6.5]{ckw20}}]
    \dglabel^{lm:witness}[cor:fragments,lem:int_wed]
    Let $k$ denote a positive integer,
    let $S$ denote a string,
    let $Q$ denote a primitive string that satisfies $|S|\ge (2k+1)|Q|$ or $|Q|\le 3k+1$,
    and let $w:\Esigma^2 \to \fragment{0}{W}$ be a weight function.

    Then, we can compute a \emph{witness} $Q^\infty\fragmentco{x}{y}$
    such that $\edw{S}{Q^\infty\fragmentco{x}{y}}=\edwl{S}{Q}\le k$,
    or report that $\edwl{S}{Q}>k$.
    In the former case, the algorithm returns the breakpoint representation of a
    $w$-optimal alignment $\A : S \onto Q^\infty\fragmentco{x}{y}$.
    The algorithm takes $\Ohtilde(k^2\cdot W)$ time in the \modelname model.
\end{lemma}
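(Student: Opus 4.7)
The plan is to reduce \texttt{FindAWitness} to an instance of \allOccs (\cref{cor:fragments}) and then recover the breakpoint representation via \cref{lem:int_wed}.

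First I would handle the case $|Q|\le 3k+1$. Here I build the text $T \coloneqq Q^\infty\fragmentco{0}{|S|+k+|Q|}$, which can be simulated in the \pillar model via oracle access to $Q$. Since $|T|-|S|=\Oh(k)$, invoking \cref{cor:fragments} on $P=S$, text $T$, threshold $k$, and the weight function $w$ takes $\Ohtilde(k^2 W)$ time in the integer-weight regime. Any substring $Q^\infty\fragmentco{x'}{y'}$ of length at most $|S|+k$ appears as $T\fragmentco{x}{y}$ for $x \coloneqq x' \bmod |Q| \in \fragmentco{0}{|Q|}$, so the triples $(i,j,d)$ in the output of \allOccs with $i\in\fragmentco{0}{|Q|}$ are in bijection with candidate witnesses of cost $d\le k$. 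I filter to these triples, pick the one with minimum~$d$ as $(x^*,y^*,d^*)$, and use this as the witness; if none exists, I report $\edwl{S}{Q}>k$.

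For the case $|S|\ge(2k+1)|Q|$, I first reduce to a single rotation class. A Fine--Wilf-type argument shows that, with $Q$ primitive, there is at most one $x_0\in\fragmentco{0}{|Q|}$ with $\edp{S}{\rot^{-x_0}(Q)}\le k$: otherwise two distinct rotations would yield cost-$\le k$ alignments, and the corresponding substrings of $Q^\infty$---both of length at least $(2k+1)|Q|-k$---would be at mutual unweighted edit distance at most $2k$ by the triangle inequality, contradicting primitivity since two differently-shifted length-$\ell$ substrings of $Q^\infty$ with $\ell\ge 2|Q|$ differ in at least $\lfloor\ell/|Q|\rfloor\gg 2k$ positions. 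Because $w$ is normalized, any weighted witness also satisfies $\edp{S}{\rot^{-x_0}(Q)}\le k$. Hence I locate $x_0$ using the unweighted \cref{lem:witness} of~\cite{ckw20} in $\Oh(k^2)$ \pillar time (or report $\edl{S}{Q}>k$ and halt), and then invoke \cref{cor:fragments} on the shorter text $T\coloneqq Q^\infty\fragmentco{x_0}{x_0+|S|+k}$, filtering to triples with $i=0$. Since $|T|-|S|=k$, this again costs $\Ohtilde(k^2 W)$.

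Finally, once I have the optimal witness $(x^*,y^*)$ and its cost $d^*\le k$, I recover the breakpoint representation of a $w$-optimal alignment $\A:S\onto Q^\infty\fragmentco{x^*}{y^*}$ by invoking \cref{lem:int_wed} on $S$ and the fragment $Q^\infty\fragmentco{x^*}{y^*}$, which fits within the $\Ohtilde(k^2 W)$ budget. The main obstacle is the uniqueness argument in the long-$S$ case: it must carefully combine the triangle inequality for unweighted edit distance with a standard periodicity lemma to rule out alignments to multiple rotation classes. The rest is a direct composition of tools already developed in \cref{sec:verify} and~\cite{gk24}, together with the routine \pillar-model simulation of fragments of $Q^\infty$ via repeated copies of $Q$.
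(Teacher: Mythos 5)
Your plan for the case $|Q|\le 3k+1$ is sound and essentially what the paper does (there the paper also takes $J=\fragmentco{0}{|Q|}$ and invokes \cref{cor:fragments}). The critical gap is in the case $|S|\ge(2k+1)|Q|$: the claimed uniqueness of the rotation class $x_0$ is false, and the argument given for it conflates Hamming distance with edit distance. If two witnesses $Q^\infty\fragmentco{x_0}{y_0}$ and $Q^\infty\fragmentco{x_1}{y_1}$ each have (unweighted) edit distance at most $k$ from $S$, the triangle inequality bounds their mutual \emph{edit} distance by $2k$, but this gives no lower bound on the number of positions where they differ character-by-character. Differently-shifted substrings of $Q^\infty$ can be at constant \emph{edit} distance despite having Hamming distance $\Theta(\ell)$: for example, $Q^\infty\fragmentco{0}{\ell}$ and $Q^\infty\fragmentco{1}{\ell+1}$ have edit distance $2$ (one deletion, one insertion) for any $\ell$. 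So the Fine--Wilf-style contradiction you invoke never arises.

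Concretely, taking $S=Q^\infty\fragmentco{0}{m}$, one has $\edp{S}{\rot^{-x}(Q)}\le x$ for every $x\le k$, so $\Theta(k)$ distinct rotation classes are viable. Your algorithm would compute a single $x_0$ via the unweighted \cref{lem:witness} and then search only in that class, but the weighted optimum may live in a different one: the unweighted witness returned by \cref{lem:witness} need not coincide with the weighted-optimal shift (the weight function can heavily penalize the specific edits incurred at $x_0$ while another shift within cost~$k$ avoids them). This is exactly why the paper's proof does not try to pin down a single $x$; it instead constructs a length-$\Oh(k)$ interval $J$ of candidate shifts, by decomposing $S\fragmentco{0}{(2k+1)|Q|}$ into $2k+1$ blocks, using $\cycEqOpName$ queries to identify exact rotational matches, and showing (Claims~\ref{cl:wt0} and~\ref{cl:wt1}) that at least $k+1$ blocks must align exactly in any cost-$k$ alignment, forcing the starting shift modulo $|Q|$ into an interval of length at most $3k+1$. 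You would need an argument of that kind (or some other way to enumerate $\Oh(k)$ candidate shifts) to make the second case correct within the $\Ohtilde(k^2W)$ budget; the uniqueness shortcut does not hold.
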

\begin{proof}
    For a set $A\subseteq \mathbb{Z}$ and an integer $p>0$, we define
    $A\bmod p := \{a \bmod p \mid a\in A\}$.
    We first compute a (short) interval $J$ such that $\OccE_k(S,Q^\infty)\bmod |Q| \subseteq J
    \bmod |Q|$.
    If $|Q|\le 3k+1$, then we simply set $J=\fragmentco{0}{|Q|}$.
    Otherwise, we decompose the prefix $S\fragmentco{0}{(2k+1)|Q|}$ into $2k+1$ fragments
    $S_i$ of length~$|Q|$.
    For each fragment $S_i$, we construct $\cycEqOp{S_i}{Q}=\{j \in \Z : Q = \rot^{j}(S_i)\}$.
    Next, we define an auxiliary
    set $I$ as the union of~intervals $\fragment{p}{p+k}$
    such that for at least $k + 1$ fragments $S_i$ of~$S$,
    we have $\fragment{p}{p+k}\cap \cycEqOp{S_i}{Q}\ne \emptyset$.
    Finally, we set $J\subseteq \fragmentco{0}{2|Q|}$ to be a shortest interval satisfying
    $I\bmod |Q|\subseteq J\bmod |Q|$.
    Here, $J\bmod |Q|$ can be interpreted as a shortest cyclic interval (modulo $|Q|$)
    containing $I\bmod |Q|$.

    Having computed the set $J$, we use \cref{cor:fragments} to determine
    $\edwk{k}{S}{Q^\infty\fragmentco{x}{y}}$
    for all positions $x\in J$ and $y\in J + \fragment{|S|-k}{|S|+k}$.
    If all the values are $\infty$, we report that $\edwl{S}{Q}>k$.
    Otherwise, we pick a fragment $Q^\infty\fragmentco{x}{y}$ minimizing
    $\edwk{k}{S}{Q^\infty\fragmentco{x}{y}}$,
    and we find an $w$-optimal alignment $\A : S \onto Q^\infty\fragmentco{x}{y}$ using
    \cref{lem:int_wed}.

    The correctness is based on the aforementioned characterization of~$J$:
    \begin{claim}\label{cl:wt0}
        The interval $J$ satisfies
        $\OccW_k(S,Q^\infty)\bmod |Q| \subseteq J \bmod |Q|$.
    \end{claim}
    \begin{claimproof}
        The claim trivially holds if $|Q|\le 3k+1$, so we assume that $|Q|> 3k+1$.
        For every $i \in \fragment{0}{2k}$, define $S_i \coloneqq S\fragmentco{i\,|Q|}{(i+1)\,|Q|}$.
        Consider an optimum alignment between $S$ and its $(k,w)$-error occurrence
        $Q^\infty\fragmentco{x}{y}$,
        and let $Q_i=Q^\infty\fragmentco{x_i}{x_{i+1}}$ denote the fragment aligned to $S_i$.
        Consider the multi-set $R:= \bigcup_i \cycEqOp{S_i}{Q}$.
        Next, consider the values $\delta_i := x_i-i|Q|$ for $i \in \fragment{0}{2k}$.
        We have $\delta_0=x$, and $\delta_{i+1}=\delta_i+|Q_i|-|S_i|$ for $i>0$.
        Since \[
            \sum_{i=0}^{2k}\big||Q_i|-|S_i|\big|\le \sum_{i=0}^{2k}\edw{Q_i}{S_i}\le k,
        \]
        all values $\delta_i$ belong to an interval of~the form $\fragment{p}{p+k}$ for some integer $p$.
        Moreover, note that $Q_i=S_i$ holds for at least $k+1$ fragments $Q_i$;
        these fragments satisfy $Q=\rot^{\delta_i}(S_i)$ and thus contribute $\delta_i$ to $R$.
        We conclude that there is an interval $\fragment{p}{p+k}$ containing $x$ and at least
        $k+1$ elements of~$R$.
        Consequently, we have $x\in I$. By definition of~$J$, this yields $x\bmod |Q| \in
        J \bmod |Q|$.
    \end{claimproof}

    Now, let $Q^\infty\fragmentco{x}{y}$ denote a witness that satisfies
    $\edwl{S}{Q}=\edw{S}{Q^\infty\fragmentco{x}{y}}$.
    By \cref{cl:wt0}, there is a matching fragment
    $Q^\infty\fragmentco{x'}{y'}=Q^\infty\fragmentco{x}{y}$
    starting at $x'\in J$. Thus, we may assume without loss of~generality that $x\in J$.
    As we verify all possible starting positions in $J$ using \cref{cor:fragments}, we
    correctly compute the starting position $x$  and the ending position $y$ of the
    minimum-cost occurrence of $S$ in $Q^\infty$.

    As for the running time, we prove the following characterization of~$J$:
    \begin{claim}\label{cl:wt1}
        The interval $J$ satisfies $|J|\le 3k+1$.
    \end{claim}
    \begin{claimproof}
        The claim trivially holds if $|Q|\le 3k+1$, so we assume that $|Q|>3k+1$.
        Recall that the multiset~$R:= \bigcup_{i=0}^{2k} \cycEqOp{S_i}{Q}$.
        As the string $Q$ is primitive, $R$ is the union of~at most $2k+1$ infinite
        arithmetic progressions with difference $|Q|$.
        In particular, if $\fragment{p}{p+k}$ and $\fragment{p'}{p'+k}$
        contain at least $k+1$ elements of~$R$ each, then $(\fragment{p}{p+k}\bmod |Q|)
        \cap (\fragment{p'}{p'+k} \bmod |Q|) \ne \emptyset$, and thus $\fragment{p'}{p'+k}
        \bmod |Q| \subseteq \fragment{p-k}{p+2k}\bmod |Q|$.
        Since $J$ is the union of~such intervals $\fragment{p'}{p'+k}$,
        we have $J \bmod |Q| \subseteq \fragment{p-k}{p+2k}\bmod |Q|$.
        By definition of~$I$, we conclude that $|I| \le |\fragment{p-k}{p+2k}|=3k+1$.
    \end{claimproof}

    Now, observe that computing the multiset $R$ (represented as the union of~infinite
    arithmetic progressions modulo $|Q|$) takes $\Oh(k)$ time in the \modelname
    model; computing the sets $I$ and $J$ can be done in $\Oh(k \log\log k)$ time
    by sorting $R$ (restricted to $\fragmentco{0}{|Q|}$) and a subsequent cyclic scan over~$R$.
    Further, by \cref{cl:wt1}, we call \cref{cor:fragments} for a substring of $Q^\infty$ of length $|S|+\Oh(k)$.
    This costs $\Ohtilde(k^2 W)$ time, just like the application of \cref{lem:int_wed}.
\end{proof}

\subsection{Locked Fragments and their Properties}

For strings $S$ and $Q$ and a weight function $w$, we write $\edwp{S}{Q} := \min
\{\edw{S}{Q^\infty\fragmentco{0}{j}} \mid j \in \mathbb{Z}_{\ge 0}\}$  to denote the minimum edit
distance between a string $S$ and any prefix of~a string $Q^\infty$.
Further, we write $\edwl{S}{Q}\coloneqq \min\{\edw{S}{Q^\infty\fragmentco{i}{j}} \mid i, j
\in \mathbb{Z}_{\ge 0}, i \le j\}$
to denote the minimum edit distance between $S$ and any substring of~$Q^\infty$,
and we set $\edws{S}{Q} \coloneqq \min\{\edw{S}{Q^\infty\fragmentco{i}{j|Q|}} \mid i, j
\in \mathbb{Z}_{\ge 0}, i \le j|Q|\}$.

\begin{definition}[{\cite[see Definition~5.5]{ckw20}}]
    Let $S$ denote a string, let $Q$ denote a primitive string, and let $w$ be a  weight function.
    We say that a fragment $L$ of~$S$ is \emph{locked} (with respect to $Q$ and $w$)
    if at least one of~the following holds:
    \begin{itemize}
        \item For some integer $\alpha$, we have $\edwl{L}{Q}=\edw{L}{Q^\alpha}$.
        \item The fragment $L$ is a suffix of~$S$ and $\edwl{L}{Q}=\edwp{L}{Q}$.
        \item The fragment $L$ is a prefix of~$S$ and $\edwl{L}{Q}=\edws{L}{Q}$.
        \item We have $L = S$.
        \qedhere
    \end{itemize}
\end{definition}

\begin{lemma}[{\cite[see Lemma~5.6]{ckw20}}]
    \dglabel{lem:locked}
    Let $S$ denote a string, let $Q$ denote a primitive string, and let $w$ be an integer
    weight function.
    There exist disjoint locked fragments $L_1,\ldots,L_{\ell}$ of $S$
    with $\edwl{L_i}{Q} > 0$ such that \[
        \edwl{S}{Q}=\sum_{i=1}^{\ell} \edwl{L_i}{Q}\quad
        \text{and}\quad\sum_{i=1}^{\ell}|L_i| \le (5|Q|+1)\edwl{S}{Q}.\]
\end{lemma}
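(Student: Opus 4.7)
The plan is to lift the proof of Lemma~5.6 from \cite{ckw20} to the weighted setting, leveraging that the key combinatorial ingredient of that proof---the $|Q|$-shift-invariance of $Q^\infty$ forced by the primitivity of $Q$---is independent of the weight function $w$. First, I would fix an optimal alignment $\A : S \onto Q^\infty\fragmentco{x}{y}$ realizing $\edwl{S}{Q}$ (obtainable via \cref{lm:witness} applied with a sufficiently large threshold). For every $j \in \fragment{x}{y}$ with $j \equiv 0 \pmod{|Q|}$, I would choose a canonical cut point $(s_j,j) \in \A$ (e.g., with smallest $s_j$); these cut points partition $S$ into \emph{atomic blocks}, each (except possibly an initial and a terminal partial block) aligned by $\A$ to a single copy of $Q$. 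I would classify each atomic block as \emph{edited} (if $\A$ restricted to it has positive weighted cost) or \emph{unedited} (equal to $Q$ exactly), and form the candidate locked fragments $L_1,\ldots,L_\ell$ as the maximal contiguous runs of edited atomic blocks, with the first and last fragments extended to absorb the initial and terminal partial blocks so that they cover a prefix and a suffix of $S$ respectively.

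The core step is a replacement argument showing, for each $L_i$, that $\edwl{L_i}{Q} = \edwa{\A}{L_i}{\A(L_i)}$ together with the appropriate locked condition. Suppose toward a contradiction that $\B : L_i \onto Q^\infty\fragmentco{x'}{y'}$ is a strictly cheaper alignment. Splicing $\B$ into $\A$ requires reconciling $x'$ and $y'$ with the target endpoints in $Q^\infty$ fixed by the adjacent unedited blocks; since each such block is literally $Q$, shifting its target in $Q^\infty$ by $\delta \not\equiv 0 \pmod{|Q|}$ replaces $Q$ by the non-trivial rotation $\rot^{-\delta}(Q)\neq Q$ (by primitivity). If $\B$ aligns $L_i$ to some $Q^{\alpha'}$, substitution works directly and contradicts optimality of $\A$, since changing the overall target length by $(\alpha'-\alpha_i)|Q|$ still yields a valid, strictly cheaper alignment of $S$ to a substring of $Q^\infty$. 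For internal $L_i$ this forces existence of an $\alpha$ with $\edwl{L_i}{Q} = \edw{L_i}{Q^\alpha}$, i.e., the internal-locked condition; the analogous reasoning delivers the prefix- or suffix-locked condition when $L_i$ touches the boundary of $S$, because then one of the two adjacent unedited blocks is absent and the corresponding endpoint of the witness is unconstrained. Cost additivity $\edwl{S}{Q} = \sum_i \edwl{L_i}{Q}$ is then immediate, as the intervening unedited atomic blocks contribute $0$. For the size bound, each atomic block inside $L_i$ is edited and so contributes weighted cost at least $1$ to $\edwl{L_i}{Q}$ (using integrality of $w$ and the normalization $w(a,b)\ge 1$ for $a\neq b$); consequently, $L_i$ spans at most $\edwl{L_i}{Q} + O(1)$ atomic blocks of length $|Q|$, and accounting for boundary padding of at most $O(|Q|)$ yields $|L_i| \le (5|Q|+1)\cdot \edwl{L_i}{Q}$.

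The main obstacle I anticipate is completing the replacement argument in the case where the witness $\B$ does \emph{not} align $L_i$ to any $Q^{\alpha'}$: naively splicing $\B$ into $\A$ then requires rotating an adjacent unedited block, which can incur weighted cost up to $|Q|\cdot W$ and might swamp the savings from $\B$. The resolution (and the likely source of the $5|Q|+1$ factor, rather than the cleaner $|Q|+1$) is to extend $L_i$ by up to a handful of atomic blocks on each side, so that within the extended fragment every candidate improvement either aligns to a true $Q^{\alpha'}$-power or can be strictly undercut by absorbing the rotation into an enlarged interior; the extension cost is paid for by the boundary $O(|Q|)$ padding in the size bound, and the integrality of $w$ ensures that each enforced substitution in a rotated neighbor contributes at least $1$, matching the granularity of potential savings.
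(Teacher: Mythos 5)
Your high-level plan matches the paper's proof: both start from a $w$-optimal alignment of $S$ onto a substring of $Q^\infty$, carve $S$ into per-copy-of-$Q$ blocks, group edited blocks into candidate locked fragments, and then argue (i) each candidate is locked, (ii) the costs add up, and (iii) the total length is bounded. You also correctly identify the central obstacle: when a cheaper local witness for a candidate $L_i$ does not end on a multiple of $|Q|$, naively splicing it into the global alignment would force a rotation of an adjacent exact $Q$-block, which can cost up to $W|Q|$.

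The gap is in your proposed resolution of that obstacle. You suggest extending $L_i$ ``by up to a handful of atomic blocks on each side,'' implying a constant-size padding per fragment, and then assert a per-fragment bound $|L_i| \le (5|Q|+1)\cdot\edwl{L_i}{Q}$. Neither step is justified and the first is, as a constant, false in general. The paper instead runs an iterated merging process in which each candidate carries a \emph{budget} $\Delta_i$, initialized to the cost that the global alignment pays on that block; each time an adjacent exact $Q$-block is absorbed, the budget is decremented by $1$ (this is where integrality of $w$ is used), and a careful case analysis shows that either the absorbed fragment's witness already snaps to a multiple of $|Q|$ (by primitivity), or a $Q^\alpha$ witness within budget still exists. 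Since a single fragment's initial budget can be as large as $\edwl{S}{Q}$, a single $L_i$ may need to absorb $\Theta(\edwl{S}{Q})$ extra $Q$-blocks---not $O(1)$. Correspondingly, the paper only proves the \emph{global} length bound, via the potential function $\lambda^{(t)} + 2|Q|\sum_i\Delta_i^{(t)}$ which is shown to be nonincreasing under merges and at most $(5|Q|+1)\edwl{S}{Q}$ initially; a per-fragment bound of the form you state is not established and is not needed. The merging also handles the case where absorbing an exact $Q$-block joins two previously separated runs of edited blocks, which your ``maximal contiguous runs'' framing misses. To make your argument go through you would essentially need to reinvent this budget bookkeeping; without it, the ``handful of blocks'' extension neither guarantees lockedness nor yields the claimed length bound.
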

\begin{proof}
    Let us choose integers $x\le y$ so that $\edwl{S}{Q}=\edw{S}{Q^\infty\fragmentco{x}{y}}$.
    Without loss of~generality, we may assume that $x\in \fragmentco{0}{|Q|}$.
    If $y \le |Q|$, then $|S|\le |Q|+\edwl{S}{Q}$; thus setting the whole string~$S$
    as the only locked fragment satisfies the claimed conditions.
    Hence, we may assume that $y > |Q|$.

    An arbitrary $w$-optimum alignment of~$S$ and $Q^\infty\fragmentco{x}{y}$ yields
    a partition  $S=S_0^{(0)}\cdots S_{\!s^{(0)}}^{(0)}$
    with $s^{(0)}=\floor{(y-1)/|Q|}$
    such that  $\edw{S}{Q^\infty\fragmentco{x}{y}} = \sum_{i=0}^{s^{(0)}} \Delta^{(0)}_i$,
    where
    \[\Delta^{(0)}_i = \begin{cases}
        \edw{S^{(0)}_0}{Q\fragmentco{x}{|Q|}} &\text{if }i=0,\\
        \edw{S^{(0)}_i}{Q} &\text{if }0 < i < s^{(0)},\\
        \edw{S^{(0)}_{s^{(0)}}}{Q\fragmentco{0}{y-s^{(0)}|Q|}} & \text{if }i = s^{(0)}.
    \end{cases}\]

    We start with this partition and then coarsen it by exhaustively applying
    the merging rules specified below, where each rule is applied only if the previous rules cannot
    be applied.
    In each case, we re-index the unchanged
    fragments $S^{(t)}_i$ to obtain a new partition $S = S^{(t+1)}_0\cdots
    S^{(t+1)}_{\!s^{(t+1)}}$ and re-index the corresponding values $\Delta^{(t)}_i$ accordingly.
    We say that a fragment $S^{(t)}_i$ is \emph{interesting}
    if $i=0$, $i=s^{(t)}$, $S^{(t)}_i\ne Q$, or $\Delta_i^{(t)}>0$.
    \begin{enumerate}
        \item\label{it:type1} If subsequent fragments $S^{(t)}_i$ and $S^{(t)}_{i+1}$ are
            both interesting, then merge $S^{(t)}_i$ and $S^{(t)}_{i+1}$, obtaining
            $S^{(t+1)}_i := S^{(t)}_i S^{(t)}_{i+1}$ and $\Delta^{(t+1)}_i :=
            \Delta^{(t)}_i + \Delta^{(t)}_{i+1}$.
        \item\label{it:type2} If $0 < i < s^{(t)}$ and $\Delta^{(t)}_i>0$,
            then merge the subsequent fragments $S^{(t)}_{i-1}=Q$, $S^{(t)}_i$, and
            $S^{(t)}_{i+1}=Q$, obtaining $S^{(t+1)}_{i-1} := S^{(t)}_{i-1} S^{(t)}_i
            S^{(t)}_{i+1}$, and set $\Delta^{(t+1)}_{i-1}:=\Delta^{(t)}_{i}-1$.
        \item\label{it:type3} If $0 < i = s^{(t)}$ and $\Delta^{(t)}_{i}>0$,
            then merge the subsequent fragments $S^{(t)}_{i-1}=Q$ and
            $S^{(t)}_{i}$, obtaining $S^{(t+1)}_{i-1} := S^{(t)}_{i-1}S^{(t)}_{i}$, and
            set $\Delta^{(t+1)}_{i-1} := \Delta^{(t)}_{i}-1$.
        \item\label{it:type4} If $0 = i < s^{(t)}$ and $\Delta^{(t)}_i>0$,
            then merge the subsequent fragments $S^{(t)}_i$ and $S^{(t)}_{i+1}=Q$,
            obtaining $S^{(t+1)}_i := S^{(t)}_i S^{(t)}_{i+1}$,
            and set $\Delta^{(t+1)}_{i}:=\Delta^{(t)}_{i}-1$.
    \end{enumerate}

    Let $S=S_0\cdots S_{s}$ denote the obtained final partition.
    We select as locked fragments all the fragments~$S_i$ with $\edwl{S_i}{Q}> 0$.
    Below,  we show that this selection satisfies the desired properties.
    We start by proving that we indeed picked locked fragments.
    \begin{claim}\label{clm:locked}
        Each fragment $S_i^{(t)}$ of~each partition $S=S_0^{(t)}\cdots S_{s^{(t)}}^{(t)}$
        satisfies at least one of~the following:
        \begin{itemize}
            \item $\edw{S^{(t)}_i}{Q^\alpha}\le \edwl{S^{(t)}_i}{Q}+\Delta^{(t)}_i$ for some
                integer $\alpha$;
            \item $i=s^{(t)}$ and $\edwp{S^{(t)}_i}{Q}\le \edwl{S^{(t)}_i}{Q}+\Delta^{(t)}_i$;
            \item $i=0$ and $\edws{S^{(t)}_i}{Q}\le \edwl{S^{(t)}_i}{Q}+\Delta^{(t)}_i$;
            \item $i=0=s^{(t)}$.
        \end{itemize}
    \end{claim}
    \begin{claimproof}
        We proceed by induction on $t$. The base case follows from the definition of~the
        values $\Delta^{(0)}_i$.

        As for the inductive step, we assume that the claim holds for all fragments
        $S^{(t)}_i$
        and we prove that it holds for all fragments $S^{(t+1)}_i$. We consider several
        cases based on the merge rule applied.
        \begin{enumerate}
            \item For a type-\ref{it:type1} merge of~interesting fragments $S^{(t)}_i$ and
                $S^{(t)}_{i+1}$ into $S^{(t+1)}_i$, it suffices to prove that
                $S^{(t+1)}_i$ satisfies the claim.
                \begin{itemize}
                    \item If $0 < i < s^{(t+1)}$, then $\edw{S^{(t)}_i}{Q^\alpha}\le
                        \edwl{S^{(t)}_i}{Q}+\Delta^{(t)}_i$ and
                        $\edw{S^{(t)}_{i+1}}{Q^{\alpha'}}\le
                        \edwl{S^{(t)}_{i+1}}{Q}+\Delta^{(t)}_{i+1}$ hold by the inductive
                        assumption for some integers $\alpha,\alpha'$. Consequently,
                        \begin{multline*}\edw{S^{(t+1)}_i}{Q^{\alpha+\alpha'}}=\edw{S^{(t)}_i
                            S^{(t)}_{i+1}}{Q^{\alpha} Q^{\alpha'}}\le
                            \edw{S^{(t)}_i}{Q^\alpha}+\edw{S^{(t)}_{i+1}}{Q^{\alpha'}}\\
                            \le
                            \edwl{S^{(t)}_i}{Q}+\Delta^{(t)}_i+\edwl{S^{(t)}_{i+1}}{Q}+\Delta^{(t)}_{i+1}\le
                            \edwl{S^{(t+1)}_i}{Q}+\Delta^{(t+1)}_i.
                        \end{multline*}
                    \item If $0 < i = s^{(t+1)}$, then $\edw{S^{(t)}_i}{Q^\alpha}\le
                        \edwl{S^{(t)}_i}{Q}+\Delta^{(t)}_i$ and $\edwp{S^{(t)}_{i+1}}{Q}\le
                        \edwl{S^{(t)}_{i+1}}{Q}+\Delta^{(t)}_{i+1}$ hold by the inductive
                        assumption for some integer $\alpha$. Consequently,
                        \begin{multline*}\edwp{S^{(t+1)}_i}{Q}=\edwp{S^{(t)}_i S^{(t)}_{i+1}}{Q}
                            \le \edw{S^{(t)}_i}{Q^\alpha}+\edwp{S^{(t)}_{i+1}}{Q}\\
                            \le
                            \edwl{S^{(t)}_i}{Q}+\Delta^{(t)}_i+\edwl{S^{(t)}_{i+1}}{Q}+\Delta^{(t)}_{i+1}\le
                            \edwl{S^{(t+1)}_i}{Q}+\Delta^{(t+1)}_i.
                        \end{multline*}
                    \item The analysis of~the case that $0 = i < s^{(t+1)}$ is symmetric to
                        that of~the above case---this can be seen by reversing all strings in
                        scope.
                    \item If $0 = i = s^{(t+1)}$, then the claim holds trivially.
                \end{itemize}
            \item For a type-\ref{it:type2} merge of~$S^{(t)}_{i-1}$, $S^{(t)}_i$, and
                $S^{(t)}_{i+1}$ into $S^{(t+1)}_{i-1}$, it suffices to prove that
                $S^{(t+1)}_{i-1}$ satisfies the claim.
                \begin{itemize} \item If $\edwl{S^{(t+1)}_{i-1}}{Q}\ge \edwl{S^{(t)}_i}{Q}+1$,
                    we observe that  $\edw{S^{(t)}_i}{Q^\alpha}\le
                    \edwl{S^{(t)}_i}{Q}+\Delta^{(t)}_i$ holds by the inductive assumption for
                    some integer $\alpha$. Consequently,
                    \begin{multline*}
                        \edw{S^{(t+1)}_{i-1}}{Q^{\alpha+2}}=\edw{QS^{(t)}_iQ}{QQ^{\alpha}Q}\le
                        \edw{S^{(t)}_i}{Q^\alpha}\\
                        \le \edwl{S^{(t)}_i}{Q}+\Delta^{(t)}_i \le
                        \edwl{S^{(t+1)}_{i-1}}{Q} - 1 +
                        \Delta^{(t)}_i=\edwl{S^{(t+1)}_{i-1}}{Q} + \Delta^{(t+1)}_{i-1}.
                    \end{multline*}
                    \item If $\edwl{S^{(t+1)}_{i-1}}{Q} < \edwl{S^{(t)}_i}{Q}+1$, then let $x' \le
                        y'$ denote integers that satisfy
                        $\edwl{S^{(t+1)}_{i-1}}{Q}=\edw{S^{(t+1)}_{i-1}}{Q^\infty\fragmentco{x'}{y'}}$.
                        This also yields integers $x'',y''$ with
                        $x'\le x'' \le y'' \le y'$ such that\begin{multline*}
                            \edw{S^{(t+1)}_{i-1}}{Q^\infty\fragmentco{x'}{y'}} \\=
                            \edw{Q}{Q^\infty\fragmentco{x'}{x''}}
                            +\edw{S^{(t)}_i}{Q^\infty\fragmentco{x''}{y''}}
                            +\edw{Q}{Q^\infty\fragmentco{y''}{y'}}.
                        \end{multline*}
                        Due to\begin{multline*}
                            \edwl{S^{(t)}_i}{Q}
                            \le \edw{S^{(t)}_i}{Q^\infty\fragmentco{x''}{y''}}\\
                            \le \edw{S^{(t+1)}_{i-1}}{Q^\infty\fragmentco{x'}{y'}}
                            = \edwl{S^{(t+1)}_{i-1}}{Q} < \edwl{S^{(t)}_i}{Q}+1,
                        \end{multline*}
                        we have
                        $\edw{Q}{Q^\infty\fragmentco{x'}{x''}}+\edw{Q}{Q^\infty\fragmentco{y''}{y'}}<1$.
                        As the string $Q$ is primitive, this means that $x',x'',y'',y'$
                        are all multiples
                        of~$|Q|$.
                        Consequently,
                        \begin{multline*}
                            \edw{S^{(t+1)}_{i-1}}{Q^{(y'-x')/|Q|}}
                            =\edw{S^{(t+1)}_{i-1}}{Q^\infty\fragmentco{x'}{y'}} \\=
                            \edwl{S^{(t+1)}_{i-1}}{Q} \le \edwl{S^{(t+1)}_{i-1}}{Q} +
                            \Delta^{(t-1)}_{i-1}.
                        \end{multline*}
                \end{itemize}
            \item For a type-\ref{it:type3} merge of~$S^{(t)}_{i-1}$ and $S^{(t)}_i$ into
                $S^{(t+1)}_{i-1}$, it suffices to prove that $S^{(t+1)}_{i-1}$ satisfies
                the claim.
                \begin{itemize}
                    \item If $\edwl{S^{(t+1)}_{i-1}}{Q}\ge\edwl{S^{(t)}_i}{Q}+1$, we
                        observe that  $\edwp{S^{(t)}_{i}}{Q}\le
                        \edwl{S^{(t)}_{i}}{Q}+\Delta^{(t)}_{i}$ holds by the inductive
                        assumption. Consequently,
                        \begin{multline*}
                            \edwp{S^{(t+1)}_{i-1}}{Q}=\edwp{QS^{(t)}_i}{Q} \le
                            \edwp{S^{(t)}_i}{Q}\\
                            \le \edwl{S^{(t)}_{i}}{Q}+\Delta^{(t)}_{i} \le
                            \edwl{S^{(t+1)}_{i-1}}{Q} - 1 +
                            \Delta^{(t)}_i=\edwl{S^{(t+1)}_{i-1}}{Q} +
                            \Delta^{(t+1)}_{i-1}.
                        \end{multline*}
                    \item If $\edwl{S^{(t+1)}_{i-1}}{Q}<\edwl{S^{(t)}_i}{Q}+1$, then let
                        $x' \le
                        y'$ denote integers that satisfy
                        $\edwl{S^{(t+1)}_{i-1}}{Q}=\edw{S^{(t+1)}_{i-1}}{Q^\infty\fragmentco{x'}{y'}}$.
                        This also yields an integer $x''$ with
                        $x'\le x'' \le y'$ such that\[
                            \edw{S^{(t+1)}_{i-1}}{Q^\infty\fragmentco{x'}{y'}} =
                            \edw{Q}{Q^\infty\fragmentco{x'}{x''}}
                            +\edw{S^{(t)}_i}{Q^\infty\fragmentco{x''}{y'}}.
                        \]
                        Due to\begin{multline*}
                            \edwl{S^{(t)}_i}{Q}
                            \le \edw{S^{(t)}_i}{Q^\infty\fragmentco{x''}{y'}}
                            \le \edw{S^{(t+1)}_{i-1}}{Q^\infty\fragmentco{x'}{y'}}\\
                            = \edwl{S^{(t+1)}_{i-1}}{Q} < \edwl{S^{(t)}_i}{Q}+1,
                        \end{multline*}
                        we have
                        $\edw{Q}{Q^\infty\fragmentco{x'}{x''}} < 1$.
                        As the string $Q$ is primitive, this means that $x',x''$ are both
                        multiples
                        of~$|Q|$.
                        Consequently,
                        \begin{multline*}
                            \edwp{S^{(t+1)}_{i-1}}{Q}
                            =\edw{S^{(t+1)}_{i-1}}{Q^\infty\fragmentco{x'}{y'}}=
                            \edwl{S^{(t+1)}_{i-1}}{Q}\\ \le \edwl{S^{(t+1)}_{i-1}}{Q} +
                            \Delta^{(t-1)}_{i-1}.
                        \end{multline*}
                \end{itemize}
            \item The analysis of~type-\ref{it:type4} merges is symmetrical to that of
                type-\ref{it:type3} merges---this can be seen by reversing all strings in
                scope.
        \end{enumerate}

        This completes the proof~of~the inductive step.
    \end{claimproof}

    Observe that if no merge rule can be applied to a partition $S=S_{0}^{(t)}\cdots
    S_{s^{(t)}}^{(t)}$,
    then $s^{(t)}=0$ or $\Delta_{0}^{(t)}= \cdots = \Delta_{s^{(t)}}^{(t)}=0$.
    Consequently, \cref{clm:locked} implies that all fragments $S_i$ in the final
    partition $S=S_0\cdots S_s$
    are locked.

    \begin{claim}\label{clm:locked_short}
        For each partition $S=S_0^{(t)}\cdots S_{\!s^{(t)}}^{(t)}$,
        the total length $\lambda^{(t)}$ of~interesting fragments satisfies\[
            \lambda^{(t)} + 2|Q|\sum_{i=0}^{s^{(t)}} \Delta_i^{(t)} \le
            (5|Q|+1)\edwl{S}{Q}.
        \]
    \end{claim}
    \begin{claimproof}
        We proceed by induction on $t$.
        In the base case of~$t=0$, each interesting fragment other than $S_{0}^{(0)}$ and
        $S_{s^{(0)}}^{(0)}$
        satisfies $\Delta_i^{(0)}>0$. Hence, the number of~interesting fragments is at
        most $2+\sum_{i=0}^{s^{(0)}} \Delta_i^{(0)}= 2+\edwl{S}{Q}$.
        Moreover, the length of~each fragment $S^{(0)}_i$ does not exceed
        $|Q|+\Delta_i^{(0)}$.
        Consequently,
        \[
            \lambda^{(0)} + 2|Q|\sum_{i=0}^{s^{(0)}} \Delta_i^{(0)} \le
            (2+\edwl{S}{Q})|Q|+(2|Q|+1)\sum_{i=0}^{s^{(0)}}\Delta_i^{(0)}\le(5|Q|+1)\,\edwl{S}{Q}.
        \]This completes the proof~in the base case.

        As for the inductive step, it suffices to prove that
        $\lambda^{(t+1)}+2|Q|\sum_{i=0}^{s^{(t+1)}} \Delta_i^{(t+1)} \le
        \lambda^{(t)}+2|Q|\sum_{i=0}^{s^{(t)}} \Delta_i^{(t)}$:
        \begin{itemize}
            \item For a type-\ref{it:type1} merge
                (where we merge two interesting fragments),
                we have
                \[\lambda^{(t+1)} +2|Q|\sum_{i=0}^{s^{(t+1)}} \Delta_i^{(t+1)}
                =  \lambda^{(t)} + 2|Q|\sum_{i=0}^{s^{(t)}} \Delta_i^{(t)}.\]
            \item For a type-\ref{it:type2}, type-\ref{it:type3}, or type-\ref{it:type4}
                merge
                (where we merge a fragment with its one or two non-interesting neighbors),
                we have
                \[\lambda^{(t+1)} +2|Q|\sum_{i=0}^{s^{(t+1)}} \Delta_i^{(t+1)}
                \le \lambda^{(t)} + 2|Q|+2|Q|\sum_{i=0}^{s^{(t+1)}} \Delta_i^{(t+1)}
                =\lambda^{(t)} + 2|Q|\sum_{i=0}^{s^{(t)}} \Delta_i^{(t)}.\]
        \end{itemize}
        Overall, we obtain the claimed bound.
    \end{claimproof}
    We conclude that the total length of~interesting fragments $S_i$ does not exceed
    $(5|Q|+1)\edwl{S}{Q}$.

    \begin{claim}\label{clm:locked_whole}
        We have $\edwl{S}{Q}=\sum_{i=0}^{s} \edwl{S_i}{Q}$.
    \end{claim}
    \begin{claimproof}
        The claim is immediate if $s = 0$; hence, assume that $s \ge 1$.
        Observe that the inequality $\sum_{i=0}^{s} \edwl{S_i}{Q}\le \edwl{S}{Q}$
        easily follows from disjointness of~fragments $S_i$;
        thus, we focus on proving $\edwl{S}{Q}\le \sum_{i=0}^{s} \edwl{S_i}{Q}$.

        For $0\le i \le s$, let $Q_i$ denote a substring of~$Q^\infty$ that satisfies
        $\edwl{S_i}{Q}=\edw{S_i}{Q_i}$.
        Since each $S_i$ is locked (by \cref{clm:locked}),
        we may assume that for $0 < i < s$ the substring $Q_i$ is a power of~$Q$,
        the substring $Q_s$ is a prefix of~a power of~$Q$,
        and the substring $Q_0$ is a suffix of~a power of~$Q$.
        Consequently, $Q_0\cdots Q_s$ is a substring of~$Q^\infty$,
        and we have\[
            \edwl{S}{Q} \le \edw{S_0\cdots S_s}{Q_0\cdots Q_s}
                       \le \sum_{i=0}^s \edw{S_i}{Q_i} = \sum_{i=0}^s \edwl{S_i}{Q},
                   \] thus completing the proof.
    \end{claimproof}

    The locked fragments created satisfy $\edwl{S}{Q}=\sum_{i=1}^\ell \edwl{L_i}{Q}$
    due to \cref{clm:locked_whole}.
    Moreover, since $\edwl{S_i}{Q}>0$ holds only for interesting fragments,
    \cref{clm:locked_short}
    yields $\sum_{i=1}^{\ell} |L_i| \le (5|Q|+1)\,\edwl{S}{Q}$, completing the proof.
\end{proof}

\begin{definition}[{\cite[see Definition~5.10]{ckw20}}]\label{def:klocked}
    Let $S$ denote a string, let $Q$ denote a primitive string,
    let $h\ge 0$ denote an integer, and let $w$ be a weight function.
    We say that a prefix $L$ of~$S$ is \emph{$h$-locked} (with respect to~$Q$)
    if at least one of~the following holds:
    \begin{itemize}
        \item For every $p\in \fragmentco{0}{|Q|}$, if $\edwp{L}{\rot^p(Q)}\le h$, then
            $\edwp{L}{\rot^p(Q)}=\edw{L}{Q^\infty\fragmentco{-p}{j|Q|}}$ for some \(j\in\Z\).
        \item We have $L=S$.
            \qedhere
    \end{itemize}
\end{definition}

\begin{lemma}[{\cite[see Lemma~5.11]{ckw20}}]
    \dglabel{lem:klocked_comb}[lem:locked]
    Let $S$ denote a string, let $Q$ denote a primitive string, let $h\ge 0$ be an
    integer, and $w$ be an integer weight function.
    There are disjoint locked fragments $L_1,\ldots,L_{\ell} \preceq S$,
    such that $L_1$ is a $k$-locked prefix of~$S$, $L_{\ell}$ is a suffix of~$S$,
    $\edwl{L_i}{Q} > 0$ for $1 < i < \ell$, \[
        \edwl{S}{Q}=\sum_{i=1}^{\ell} \edwl{L_i}{Q},\quad
        \text{and}\quad\sum_{i=1}^{\ell}|L_i| \le (5|Q|+1)\edwl{S}{Q}+2(h+1)|Q|.\]
\end{lemma}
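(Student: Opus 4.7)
I start by applying \cref{lem:locked} to $S$ to obtain disjoint locked fragments $L'_1,\ldots,L'_{\ell'}$ with $\edwl{L'_i}{Q}>0$, $\edwl{S}{Q}=\sum_i\edwl{L'_i}{Q}$, and $\sum_i|L'_i|\le(5|Q|+1)\edwl{S}{Q}$. Writing $L'_1=S\fragmentco{a}{b}$ and $L'_{\ell'}=S\fragmentco{c}{d}$, inspection of the proof of \cref{lem:locked} shows that, along the optimal alignment $\A:S\onto Q^\infty\fragmentco{x}{y}$ witnessing $\edwl{S}{Q}$, both $S\fragmentco{0}{a}$ and $S\fragmentco{d}{|S|}$ are aligned exactly to substrings of $Q^\infty$ of length $<|Q|$. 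Hence prepending $S\fragmentco{0}{a}$ to $L'_1$ or appending $S\fragmentco{d}{|S|}$ to $L'_{\ell'}$ does not change any of the involved values of $\edwl{\cdot}{Q}$ nor the additive decomposition of $\edwl{S}{Q}$, and costs at most $|Q|$ extra length on each side.

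\textbf{Extending the prefix to be $h$-locked.} The non-trivial task is to turn the putative first fragment $\hat L_1\coloneqq S\fragmentco{0}{b}$ into an $h$-locked one by extending it further to the right along $\A$ to a position $\mu\ge b$ at which $\A$ crosses a boundary $Q^\infty\fragmentco{x}{j|Q|}$; this preserves additivity of edit distances because the fragment $S\fragmentco{b}{\mu}$ is then still aligned exactly by $\A$ to a concatenation of full $Q$-blocks, so $\edwl{S\fragmentco{0}{\mu}}{Q}+\edwl{S\fragmentco{\mu}{|S|}}{Q}=\edwl{S}{Q}$ continues to hold. Let $\mu_0<\mu_1<\cdots$ denote the successive such boundary positions starting from $\mu_0=b$. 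For each candidate $L^{(i)}\coloneqq S\fragmentco{0}{\mu_i}$, call $p\in\fragmentco{0}{|Q|}$ a \emph{witness against $L^{(i)}$} if $\edwp{L^{(i)}}{\rot^p(Q)}\le h$ and this minimum is not attained at a multiple of $|Q|$. I plan to prove the key claim that among $L^{(0)},L^{(1)},\ldots,L^{(h+1)}$ (or up to the first one equal to $S$) there exists an $h$-locked candidate, giving $\mu\le\mu_{h+1}\le b+(h+1)|Q|$. The argument is: if $p$ is a witness against some $L^{(i)}$, then the optimal alignment to $\rot^p(Q)^\infty$ has cost $\le h$, so it uses at most $h$ non-diagonal edges and terminates within $h$ of $|L^{(i)}|-p$; combined with primitivity of $Q$, this tightly constrains the set of valid witnesses. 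Under the natural monotonicity of extending $L^{(i)}$ by one full $Q$-period (which aligns the extra characters by a cost-zero diagonal appended to the witness alignment), a $p$ that is a witness against $L^{(i+1)}$ must already have been a witness against $L^{(i)}$; together with a counting argument that bounds the number of simultaneous witnesses by $h+1$, this yields the claim.

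\textbf{Assembling the output.} Set $L_1\coloneqq S\fragmentco{0}{\mu}$ (which is a prefix and $h$-locked, with $|L_1|\le|L'_1|+a+(h+1)|Q|$ and $\edwl{L_1}{Q}=\edwl{L'_1}{Q}$) and $L_\ell\coloneqq S\fragmentco{c}{|S|}$ (a suffix with $|L_\ell|\le|L'_{\ell'}|+|Q|\le|L'_{\ell'}|+(h+1)|Q|$ and $\edwl{L_\ell}{Q}=\edwl{L'_{\ell'}}{Q}$). The intermediate fragments $L_2,\ldots,L_{\ell-1}$ are taken to be those locked fragments among $L'_2,\ldots,L'_{\ell'-1}$ that are disjoint from $L_1$ (which by construction of $\mu$ along $\A$ is at most $L'_1$ plus possibly the first few $L'_j$ that happen to fall inside $\fragmentco{b}{\mu}$; for those swallowed into $L_1$ the contribution to $\edwl{\cdot}{Q}$ simply migrates from $L'_j$ to $L_1$ by additivity along $\A$). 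The edit-distance decomposition then follows from \cref{lem:locked} together with the boundary-alignment property of $\mu$ and $c$, and the total length is bounded by $(5|Q|+1)\edwl{S}{Q}+2(h+1)|Q|$.

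\textbf{Expected main obstacle.} The delicate part is the bound $\mu\le b+(h+1)|Q|$: one must simultaneously control all rotations $p$ for which $\edwp{L^{(i)}}{\rot^p(Q)}\le h$ and argue that successive $|Q|$-period extensions along $\A$ monotonically shrink the set of witnesses. This is where primitivity of $Q$ and the integrality/normalization of $w$ enter crucially, and where the unweighted argument of \cite[Lemma~5.11]{ckw20} needs the most care to lift to arbitrary integer weight functions.
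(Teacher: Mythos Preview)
Your plan has the right destination but two concrete gaps.

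First, the claim that $S\fragmentco{0}{a}$ is aligned by $\A$ to a substring of $Q^\infty$ of length $<|Q|$ is false. In the final partition $S=S_0\cdots S_s$ produced inside the proof of \cref{lem:locked}, the fragment $S_0$ is always a prefix of $S$, but it is returned as a locked fragment only when $\edwl{S_0}{Q}>0$. When $\edwl{S_0}{Q}=0$, the leftmost returned fragment $L'_1$ can be preceded by $S_0$ together with arbitrarily many exact copies of $Q$, so $a$ is unbounded. Your ``$\le|Q|$ extra length on each side'' therefore does not hold, and your boundary $\mu_0=b$ need not sit at a $Q$-boundary of $\A$.

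Second, and more importantly, your termination argument for $h$-lockedness does not work as stated. Monotone shrinking of the witness set together with a bound on its size does \emph{not} imply emptiness after $h{+}1$ steps: the set could be a singleton $\{p_0\}$ at every step. What you actually need is that, for each fixed $p$, the value $\edwp{L^{(i)}}{\rot^p(Q)}$ \emph{strictly} increases while $p$ remains a witness. Indeed, if $\edwp{L^{(i+1)}}{\rot^p(Q)}=\edwp{L^{(i)}}{\rot^p(Q)}=:c$, then restricting an optimal alignment of $L^{(i+1)}=L^{(i)}Q$ onto $Q^\infty\fragmentco{-p}{r'}$ yields a cost-$c$ alignment of $L^{(i)}$ onto $Q^\infty\fragmentco{-p}{r''}$ and a cost-$0$ alignment of $Q$ onto $Q^\infty\fragmentco{r''}{r'}$; primitivity of $Q$ forces $r''\equiv 0\pmod{|Q|}$, contradicting that $p$ is a witness against $L^{(i)}$. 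Integrality of $w$ then gives an increase of at least $1$ per step, so after $h{+}1$ steps the value exceeds $h$ and $p$ drops out; combined with your (correct) monotonicity, $L^{(h+1)}$ is $h$-locked. Note, however, that this argument requires $L^{(i+1)}=L^{(i)}\cdot Q$ exactly; once the extension runs into the next locked fragment $L'_2$, you no longer append a clean $Q$, and your ``swallowing'' paragraph does not explain how the strict-increase step survives that.

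The paper avoids both issues by not post-processing at all: it reruns the merge procedure of \cref{lem:locked} with $\Delta_0^{(0)}$ artificially increased by $h{+}1$, and proves an invariant (through all merge types, including those absorbing a neighbouring interesting fragment) that $S_0^{(t)}$ is ``$h$-locked up to slack $\Delta_0^{(t)}$''. Since the process terminates with $\Delta_0^{(t)}=0$ or $S_0^{(t)}=S$, the final prefix is genuinely $h$-locked, and the extra $2(h{+}1)|Q|$ in the length bound comes out of the enlarged initial budget with no separate analysis.
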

\begin{proof}
    We proceed as in the proof~of~\cref{lem:locked} except that $\Delta_{0}^{(0)}$
    is artificially increased by $h+1$, the prefix $S_0$ in the final partition
    is included as $L_1$ among the locked fragments even if $\edwl{S_0}{Q}=0$,
    and the suffix $S_s$ is included as $L_{\ell}$ among the locked fragments even if $\edwl{S_s}{Q}=0$.

    It is easy to see that \cref{clm:locked,clm:locked_whole} remain satisfied,
    whereas the upper bound in \cref{clm:locked_short} is increased by $2(h+1)|Q|$.
    We only need to prove that $S_0$ is a $h$-locked prefix of~$S$.
    For this, we prove the following claim using induction.

    \begin{claim}\label{clm:klocked}
        For each partition $S=S_0^{(t)}\cdots S_{s^{(t)}}^{(t)}$, at least one of~the
        following holds.
        \begin{itemize}
                \item For every $p\in \fragmentco{0}{|Q|}$, if
                    $\edwp{S_0^{(t)}}{\rot^p(Q)}\le h-\Delta_0^{(t)}$, then
                    $\edw{S_0^{(t)}}{Q^\infty\fragmentco{|Q|-p}{j|Q|}} \le
                    \edwp{S_0^{(t)}}{\rot^p(Q)}+\Delta_0^{(t)}$ holds for some integer
                    $j$.
                \item We have $S_0^{(t)}=S$.
        \end{itemize}
    \end{claim}
    \begin{claimproof}
        We proceed by induction on $t$. In the base case of~$t=0$, the claim holds
        trivially since
        $\edwp{S_0^{(0)}}{\rot^p(Q)} \ge 0 > h-\Delta_0^{(0)}$ holds for every $p$ due to
        $\Delta_0^{(0)}\ge h+1$.

        As for the induction step, we assume that the claim holds for $S_0^{(t)}$ and we
        prove that it holds
        for $S_0^{(t+1)}$. The claim holds trivially if the merge rule applied did not
        affect $S_0^{(t)}$.
        Given that  $S_0^{(t)}$ is interesting by definition, the merges that might affect
        $S_0^{(t)}$ are of~type~\ref{it:type1} (if $S_1^{(t)}$ is interesting)
        or~\ref{it:type4} (otherwise).

        \begin{enumerate}
            \item Consider a type-\ref{it:type1} merge of~$S_0^{(t)}$ and $S_1^{(t)}$. If
                $s^{(t)}=1$, then $S_0^{(t+1)}=S$ satisfies the claim trivially.
                Hence, we may assume that $1 < s^{(t)}$ so that \cref{clm:locked} yields
                $\edw{S_1^{(t)}}{Q^{\alpha}}\le \edwl{S_1^{(t)}}{Q} + \Delta_{1}^{(t)}$ for some integer $\alpha$.
                Let us fix $p\in \fragmentco{0}{|Q|}$ with $\edwp{S_0^{(t+1)}}{\rot^p(Q)}\le h-\Delta_0^{(t+1)}$.
                Due to $\Delta_0^{(t+1)}\ge \Delta_0^{(t)}$, this yields
                $\edwp{S_0^{(t)}}{\rot^p(Q)}\le h-\Delta_0^{(t)}$, so the inductive assumption
                implies $\edw{S_0^{(t)}}{Q^\infty\fragmentco{|Q|-p}{j|Q|}} \le
                \edwp{S_0^{(t)}}{\rot^p(Q)}+\Delta_0^{(t)}$ for some integer $j$.
                Consequently,
                \begin{multline*}
                    \edw{S_0^{(t+1)}}{Q^\infty\fragmentco{|Q|-p}{(j+\alpha)|Q|}}
                    = \edw{S_0^{(t)}S_1^{(t)}}{Q^\infty\fragmentco{|Q|-p}{j|Q|}Q^{\alpha}} \\
                    \le \edw{S_0^{(t)}}{Q^\infty\fragmentco{|Q|-p}{j|Q|}} + \edw{S_1^{(t)}}{Q^{\alpha}}\\
                    \le \edwp{S_0^{(t)}}{\rot^p(Q)}+\Delta_0^{(t)}+\edwl{S_1^{(t)}}{Q} + \Delta_{1}^{(t)}
                    \le \edwp{S_0^{(t+1)}}{\rot^p(Q)} + \Delta_0^{(t+1)}.
                \end{multline*}
            \item Consider a type-\ref{it:type4} merge of~$S_0^{(t)}$ and $S_1^{(t)}$.
                Let us fix $p\in \fragmentco{0}{|Q|}$ with $\edwp{S_0^{(t+1)}}{\rot^p(Q)}\le h-\Delta_0^{(t+1)}$.
                \begin{itemize}
                    \item If $\edwp{S_0^{(t+1)}}{\rot^p(Q)}\ge\edwp{S_0^{(t)}}{\rot^p(Q)}+1$,
                        then \[\edwp{S_0^{(t)}}{\rot^p(Q)} \le\edwp{S_0^{(t+1)}}{\rot^p(Q)} -1 \le h-\Delta_0^{(t+1)}-1=
                        h-\Delta_0^{(t)}.\]
                        Hence, the inductive assumption implies
                        $\edw{S_0^{(t)}}{Q^\infty\fragmentco{|Q|-p}{j|Q|}} \le
                        \edwp{S_0^{(t)}}{\rot^p(Q)}+\Delta_0^{(t)}$ for some integer $j$.
                        Consequently,
                        \begin{multline*}
                            \edw{S_0^{(t+1)}}{Q^\infty\fragmentco{|Q|-p}{(j+1)|Q|}}
                            = \edw{S_0^{(t)}Q}{Q^\infty\fragmentco{|Q|-p}{j|Q|}Q}\\
                            \le \edw{S_0^{(t)}}{Q^\infty\fragmentco{|Q|-p}{j|Q|}}
                            \le \edwp{S_0^{(t)}}{\rot^p(Q)}+\Delta_0^{(t)}\\
                            \le \edwp{S_0^{(t+1)}}{\rot^p(Q)}-1 + \Delta_0^{(t)}
                            = \edwp{S_0^{(t+1)}}{\rot^p(Q)} + \Delta_0^{(t+1)}.
                        \end{multline*}
                    \item If $\edwp{S_0^{(t+1)}}{\rot^p(Q)}<\edwp{S_0^{(t)}}{\rot^p(Q)}+1$,
                        then let $y'$ denote an arbitrary integer that satisfies
                        $\edwp{S_0^{(t+1)}}{\rot^p(Q)}=\edw{S_0^{(t+1)}}{Q^\infty\fragmentco{|Q|-p}{y'}}$.
                        This also yields an integer $y''$ with
                        $|Q|-p\le y'' \le y'$ such that\[
                            \edw{S^{(t+1)}_{0}}{Q^\infty\fragmentco{|Q|-p}{y'}} =
                            \edw{S^{(t)}_0}{Q^\infty\fragmentco{|Q|-p}{y''}} +
                            \edw{Q}{Q^\infty\fragmentco{y''}{y'}}.
                        \]
                        Due to\begin{multline*}
                            \edwp{S_0^{(t)}}{\rot^p(Q)}
                            \le  \edw{S^{(t)}_0}{Q^\infty\fragmentco{|Q|-p}{y''}}
                            \le \edw{S^{(t+1)}_{0}}{Q^\infty\fragmentco{|Q|-p}{y'}}\\
                            = \edwp{S_0^{(t+1)}}{\rot^p(Q)}<\edwp{S_0^{(t)}}{\rot^p(Q)}+1,
                        \end{multline*}
                        we have
                        $\edw{Q}{Q^\infty\fragmentco{y''}{y''}}<1$.
                        As the string $Q$ is primitive, this means that $y'',y'$ are both multiples
                        of~$|Q|$.
                        Consequently,
                        \[\edw{S^{(t+1)}_{0}}{Q^\infty\fragmentco{|Q|-p}{y'}} =
                        \edwp{S_0^{(t+1)}}{\rot^p(Q)} \le \edwp{S_0^{(t+1)}}{\rot^p(Q)} +
                    \Delta^{(t+1)}_{0}.\]
                \end{itemize}
        \end{enumerate}
        This completes the proof~of~the inductive step.
    \end{claimproof}
    Given that the final partition $S=S_{0}^{(t)}\cdots S_{s^{(t)}}^{(t)}$
    satisfies $s^{(t)}=0$ or $\Delta_{0}^{(t)}= 0$, we conclude that $S_0$ is indeed $h$-locked.
\end{proof}

\begin{lemma}[{\tt Locked($S$, $Q$, $d$, $h$,$w$)}, {see \cite[Lemma 6.9]{ckw20}}]
    \dglabel{lem:klocked}[lem:klocked_comb,lem:witness,lem:locked]
    Let $S$ denote a string, let $Q$ denote a primitive string,  let $w: \Esigma^2 \to
    \fragment{0}{W}$ be an integer weight function,
    let $d$ denote a positive integer such that $\edwl{S}{Q}\le d$
    and $|S| \ge (2d+1)|Q|$, and let $h\in \Zz$.

    Then, there is an algorithm that computes
    disjoint locked fragments $L_1,\ldots,L_{\ell} \preceq S$
    such that
        \begin{itemize}
        \item $S=L_1 \cdot \bigodot_{i=1}^{\ell-1} (Q^{\alpha_i} L_{i+1})$ for
            positive integers $\alpha_1,\ldots,\alpha_{\ell-1}$;
        \item $L_1$ is an $h$-locked prefix of $S$ and $L_{\ell}$ is a suffix of~$S$;
        \item
            \(\edwl{S}{Q}=\sum_{i=1}^{\ell} \edwl{L_i}{Q}\) and  $\edwl{L_i}{Q} > 0$ for $i\in \fragmentoo{1}{\ell}$; and
        \item \(\displaystyle
                \sum_{i=1}^\ell |L_i| \le (5|Q|+1)\edwl{S}{Q} + 2(h+1)|Q|.
            \)
    \end{itemize}
    The algorithm takes $\Ohtilde(d^2\cdot W+h)$ time in the \pillar model.
\end{lemma}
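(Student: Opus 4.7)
The plan is to turn the combinatorial construction from the proof of \cref{lem:klocked_comb} into an algorithm, using \cref{lem:witness} to produce the initial alignment and then maintaining only a compact representation of the partition throughout the merging process.

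First, I would call \texttt{FindAWitness}$(d,Q,S,w)$ from \cref{lem:witness}; since $|S|\ge (2d+1)|Q|$ and $\edwl{S}{Q}\le d$, this returns in $\Ohtilde(d^2W)$ time in the \pillar model the breakpoint representation of a $w$-optimal alignment $\A : S \onto Q^\infty\fragmentco{x}{y}$. From this breakpoint representation, I extract the initial partition $S=S_0^{(0)}\cdots S_{s^{(0)}}^{(0)}$ used in the proof of \cref{lem:klocked_comb}, cutting $S$ at each position where $\A$ crosses a boundary $j\cdot|Q|$ of $Q^\infty$. Although $s^{(0)}$ can be as large as $|S|/|Q|$, only $\cO(d)$ fragments are \emph{interesting} (the two boundary fragments, those whose $\Delta_i^{(0)}$ is nonzero, and those not equal to $Q$); the breakpoint representation of $\A$ touches only $\cO(d)$ positions of $Q^\infty$, so all interesting fragments together with their $\Delta_i^{(0)}$ values can be read off in $\cO(d)$ time, and the remaining non-interesting fragments, each equal to $Q$ with $\Delta_i^{(0)}=0$, are grouped into maximal \emph{runs} represented as (count,$Q$) pairs. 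Finally, I inflate $\Delta_0^{(0)}$ by $h+1$ to force the $h$-locked property on the leftmost piece.

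The main algorithmic step is to simulate the four merging rules on this compact representation. I maintain a doubly-linked list whose nodes alternate between interesting fragments (each storing its endpoints in $S$ and its current $\Delta$) and runs of plain $Q$'s (each storing a nonnegative length). Each application of the first merging rule combines two adjacent interesting fragments in $\cO(1)$ time and reduces the number of interesting nodes, so it triggers $\cO(d)$ times overall. Each application of one of the remaining rules consumes a single plain $Q$ from a neighboring run and decrements the associated $\Delta$ by one; batching consecutive applications of the same rule on the same interesting fragment into a single arithmetic update on a run-length counter lets us process a block of $r$ such merges in $\cO(1)$ time. Since every $\Delta$-decrement must be charged to the initial $\sum_i \Delta_i^{(0)} + (h+1) = \cO(d+h)$ budget, the total cost of this phase is $\cO(d+h)$. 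After exhaustively applying the rules, the resulting list yields the desired decomposition $S=L_1\cdot \bigodot_{i=1}^{\ell-1}(Q^{\alpha_i} L_{i+1})$ directly from the node data.

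All output properties---that the $L_i$ are locked (with $L_1$ moreover $h$-locked and $L_\ell$ a suffix of $S$), that $\edwl{S}{Q}=\sum_i\edwl{L_i}{Q}$, and that $\sum_i|L_i|\le (5|Q|+1)\edwl{S}{Q}+2(h+1)|Q|$---are inherited directly from the combinatorial statement of \cref{lem:klocked_comb}, which in turn extends \cref{lem:locked} precisely by the $\Delta_0^{(0)}$ inflation together with keeping $L_1$ and $L_\ell$ in the output unconditionally. The main obstacle is therefore algorithmic rather than combinatorial: since the naive representation of the initial partition has $\Theta(|S|/|Q|)$ pieces, the compact run-encoding of plain-$Q$ blocks and the batched handling of the Rule-$4$ cascade on the $h$-inflated $L_1$ (which may absorb $\Theta(h)$ copies of $Q$) are what keep the merging phase within the $\cO(d+h)$ budget. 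Combined with the $\Ohtilde(d^2W)$ cost of \texttt{FindAWitness}, the total running time is $\Ohtilde(d^2W+h)$ in the \pillar model, matching the statement.
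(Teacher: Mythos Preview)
Your proposal is correct and follows essentially the same approach as the paper: call \cref{lm:witness} to obtain the optimal alignment, extract only the $\cO(d)$ interesting fragments (implicitly compressing the runs of plain $Q$'s), inflate $\Delta_0^{(0)}$ by $h+1$, and simulate the merge rules of \cref{lem:klocked_comb} within the $\cO(d+h)$ budget. The paper uses a stack (processed fragments with $\Delta=0$) plus a queue (unprocessed fragments) rather than your doubly-linked list with explicit run counters, and it does not batch Rule~2--4 applications---it simply observes that each iteration decrements either the number of interesting fragments or the total budget---but these are implementation details that do not affect the argument or the final $\Ohtilde(d^2W+h)$ bound.
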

\begin{proof}
    We implement the process from the proof of \cref{lem:klocked_comb}.

    First, we use \cref{lem:witness} to construct a $w$-optimal alignment $\A : S \onto
    Q^\infty\fragmentco{x}{y}$ between $S$ and a substring of~$Q^\infty$.
    Then, based on the alignment $\A$, we construct a decomposition $S=S_0^{(0)}\cdots
    S^{(0)}_{s^{(0)}}$
    such that $S^{(0)}_i$ is aligned against \[
        Q^{(0)}_i :=
        Q^\infty\fragmentco{\max(x,|Q|(\lceil{x/|Q|\rceil}+i-1))}{\min(|Q|(\lceil{x/|Q|\rceil}+i),y)}
    \] by alignment $\A$, and a sequence $\Delta^{(0)}_i$ such that we have
     $\Delta^{(0)}_i=\edw{S^{(0)}_i}{Q^{(0)}_i}$ for $i>0$ and
        $\Delta^{(0)}_0=\edw{S^{(0)}_0}{Q^{(0)}_0}+h+1.$
    Since this sequence might be long, we only generate \emph{interesting} fragments~$S^{(0)}_i$
    and store them, along with the values $\Delta^{(0)}_i$, in a queue $F$ in
    left-to-right order.
    (Recall that $S^{(t)}_i$ is interesting if $i=0$, $i=s^{(t)}$, $S_i^{(t)}\ne Q$, or
    $\Delta_i^{(t)}>0$.)

    The process of constructing the interesting fragments $S^{(0)}_i$ is somewhat tedious.
    We maintain a fragment $Q^\infty\fragmentco{\ell_Q}{r_Q}$, interpreted as $Q^{(0)}_i$
    for increasing values of $i$, a fragment $S\fragmentco{\ell_S}{r_S}$, interpreted as a
    candidate for $S^{(0)}_i$,
    and an integer $\Delta$, interpreted as $\Delta^{(0)}_i$.
    They are initialized to $Q^\infty\fragmentco{x}{|Q|\lceil{x/|Q|\rceil}}$,
    $S\fragmentco{0}{Q|\lceil{x/|Q|\rceil}-x}$, and $k+1$, respectively.

    Next, we process pairs $(s,q)$ corresponding to subsequent errors in the alignment $A$.
    The interpretation of the $j$-th pair $(s,q)$ is that $S\fragmentco{0}{s}$ is aligned
    with $Q^\infty\fragmentco{x}{x+q}$ with $j$ errors so that the $j$-th error is a
    substitution of $S\position{s}$ into $Q^\infty\position{x+q}$, and insertion of
    $Q^\infty\position{x+q}$, or a deletion of $S\position{s}$.

    The first step of processing $(s,q)$ is only performed if
    $Q^\infty\fragmentco{x}{x+q}$ is not (yet)
    contained in $Q^\infty\fragmentco{\ell_Q}{r_Q}$.
    If this is not the case, then we push $S\fragmentco{\ell_S}{r_S}$ with budget $\Delta$
    to the queue $F$ of interesting fragments, and we update the maintained data:
    The fragment $Q^\infty\fragmentco{\ell_Q}{r_Q}$ is set to be the fragment of~$Q^\infty$
    matching $Q$ and containing $Q^\infty\position{x+q}$;
    between the previous and the current value of $Q^\infty\fragmentco{\ell_Q}{r_Q}$,
    there are zero or more copies of $Q$ aligned in $A$ without error. Hence, we skip the
    same number of copies of $Q$ in $S$
    (these are the uninteresting fragments $S^{(0)}_i$)
    and set $S\fragmentco{\ell_S}{r_S}$ to be the subsequent fragment of length $|Q|$.
    Finally, the budget $\Delta$ is reset to $0$.

    In the second step, we update $r_S$ according to the type of the currently processed error:
    We increment~$r_S$ in case of deletion of $S\position{s}$ and
    we decrement $r_S$  in case of insertion of
    $Q^\infty\position{x+q}$. This way, we guarantee that
    $|S\fragmentoo{s}{r_S}|=|Q^\infty\fragmentoo{x+q}{r_Q}|$,
    and that $A$ aligns $S\fragmentco{\ell_S}{r_S}$ with
    $Q^\infty\fragmentco{\ell_Q}{r_Q}$ provided
    that we have already processed all errors involving $Q^\infty\fragmentco{\ell_Q}{r_Q}$.
    Additionally, we increase $\Delta$ to acknowledge the currently processed error between
    $S\fragmentco{\ell_S}{r_S}$ and $Q^\infty\fragmentco{\ell_Q}{r_Q}$.

    In a similar way, we process $(s,q)=(|S|,y)$, interpreting it as extra substitution.
    This time, however, we do not increase $\Delta$ (because this is a not a real error).
    Finally, we push $S\fragmentco{\ell_S}{|S|}=S^{(0)}_{s^{(0)}}$ with budget~$\Delta$ to
    the queue $F$.

    In the second phase of the algorithm, we transform the decomposition
    $S=S_0^{(0)}\cdots S^{(0)}_{s^{(0)}}$ and the sequence $\Delta_0^{(0)}\cdots
    \Delta^{(0)}_{s^{(0)}}$
    using the four types of merge operations described in the proof of \cref{lem:locked}.

    We maintain an invariant that a stack $L$ contains already processed interesting fragments,
    all with budget equal to $0$, in left-to-right order (so that the top of $L$
    represents the rightmost one), while $F$ contains fragments that have not been
    processed yet (and may have positive budgets) also in the left-to-right order (so that
    the front of $F$ represents the leftmost one).
    Additionally, the currently processed fragment $S\fragmentco{\ell}{r}$ is guaranteed
    to be to the right of all fragments in $L$ and to the left of all fragments in~$F$.
    The fragments in $L$, the fragment $S\fragmentco{\ell}{r}$, and the fragments in $F$
    form the sequence of all interesting fragments in the current decomposition
    $S=S_0^{(t)}\cdots S^{(t)}_{s^{(t)}}$.

    In each iteration of the main loop, we pop the front fragment $S\fragmentco{\ell}{r}$
    with budget $\Delta$ from the queue~$F$ and exhaustively perform merge operations
    involving it:
    We first try applying a type-\ref{it:type1} merge with the fragment to the left (which
    must be on the top of $L$).
    If this is not possible, we type applying a type-\ref{it:type1} merge with the
    fragment to the right (which must be on the front of $F$).
    If also this is not possible, then $S\fragmentco{\ell}{r}$ is surrounded by
    uninteresting fragments.
    In this case, we perform a type-\ref{it:type2}, type-\ref{it:type3}, or \ref{it:type4}
    merge provided that $\Delta > 0$. Otherwise, we push $S\fragmentco{\ell}{r}$ to $L$
    and proceed to the next iteration.

    Finally, the algorithm returns the sequence of (locked) fragments represented in the stack $L$.

    The correctness of the algorithm follows from \cref{lem:klocked}; no deep insight is needed
    to prove that our implementation indeed follows the procedure described in the proof
    of \cref{lem:locked} and extended in the proof of \cref{lem:klocked}.

       As the alignment $\A$ is of size $|\A|\le d$, initially, the are
    $\Oh(d)$ interesting fragments with total budget $\Oh(d+h)$.
    Each subsequent iteration decreases
    the number of interesting locked fragments or their total budget, so there are
    at most $\Oh(d + k)$ iterations in total. Overall the algorithm runs in $\Oh(d+k)$ time
    in the \modelname model, on top of the cost of constructing $\A$, which is $\Ohtilde(d^2 W)$ due to \cref{lem:witness}.
\end{proof}
\begin{corollary}
    \dglabel{cor:klocked}[lem:verify,lem:klocked]
    Consider an instance of \SM, where $w$ is an integer metric weight function.
    The values \(d^w_P \coloneqq \edwl{P}{Q}\) and \(d^w_T \coloneqq \edwl{T}{Q}\) satisfy $d_w^P+d_w^T = \Oh(kW)$
    and can be computed in $\Ohtilde(k^2 W^3)$ time.
    Moreover, for any parameter $\lpref\in \Zp$, the sets \(\mathcal{L}^P \coloneqq
    \locked(P,Q,d^w_P,\lpref,w)\) and \(\mathcal{L}^T \coloneqq \locked(T,Q,d^w_T,0,w)\)
    can be constructed in $\Ohtilde(k^2 W^3 + \lpref)$ time.
\end{corollary}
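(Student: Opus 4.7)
\medskip
\noindent\textbf{Proof plan for \cref{cor:klocked}.}
The plan is to first establish the bound on the weighted distances, then compute them via exponential search using \texttt{FindAWitness}, and finally invoke \texttt{Locked}. For the bound, observe that an \SM instance supplies alignments $\A_P : P \onto Q^\infty\fragmentco{0}{y_P}$ and $\A_T : T \onto Q^\infty\fragmentco{x_T}{y_T}$ of \emph{unweighted} costs $d_P \le 16k$ and $d_T \le 48k$, respectively. Since each of these unweighted edits has a weighted cost of at most $W$, interpreting $\A_P$ and $\A_T$ as weighted alignments yields
\[
    d^w_P = \edwl{P}{Q} \le \edw{P}{Q^\infty\fragmentco{0}{y_P}} \le W\,d_P \le 16kW,
\]
and similarly $d^w_T \le 48kW$, giving $d^w_P + d^w_T = \Oh(kW)$.

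Next, I would compute each of $d^w_P$ and $d^w_T$ via a doubling search with \cref{lm:witness}: call \texttt{FindAWitness($2^i$, $Q$, $S$, $w$)} for $i = 0,1,2,\ldots$ until a witness is returned, then read off $\edwl{S}{Q}$ from its cost. Since each call costs $\Ohtilde(2^{2i} \cdot W)$ time in the \pillar model and the search terminates at the first $i$ with $2^i \ge d^w_S$, the geometric sum is dominated by the last call and totals $\Ohtilde((d^w_S)^2 \cdot W) = \Ohtilde(k^2 W^3)$ per string. Doing this for $S = P$ and $S = T$ yields both values within the claimed budget.

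With $d^w_P$ and $d^w_T$ in hand, I would invoke \cref{lem:klocked} twice: once as $\locked(P, Q, d^w_P, \lpref, w)$ and once as $\locked(T, Q, d^w_T, 0, w)$. The first call runs in $\Ohtilde((d^w_P)^2 W + \lpref) = \Ohtilde(k^2 W^3 + \lpref)$ time, the second in $\Ohtilde((d^w_T)^2 W) = \Ohtilde(k^2 W^3)$ time, matching the stated budget.

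The main technical point will be verifying the length precondition $|S| \ge (2d+1)|Q|$ required by \cref{lm:witness,lem:klocked} (or, alternatively, the fallback $|Q| \le 3d+1$). The \SM input guarantees $|Q| \le m/128k$, so $|P|/|Q| \ge 128k$ and $|T|/|Q| = \Omega(m/|Q|) = \Omega(k)$; these suffice to cover the range of thresholds $d = \Oh(kW)$ that occurs in our applications of the two lemmas, with the fallback condition $|Q| \le 3d+1$ absorbing any remaining edge cases where $|Q|$ is small relative to the threshold. Assembling the three pieces yields the corollary.
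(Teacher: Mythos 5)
Your overall structure---bound $d^w_P+d^w_T=\Oh(kW)$ via the unweighted alignments, compute the weighted distances, then invoke \cref{lem:klocked} twice---matches the paper's proof. The bound argument is exactly the paper's (any unweighted edit costs at most $W$ under $w$). The one genuine difference is the middle step: the paper's one-line proof appeals to \cref{lem:verify}, whereas you propose a doubling search with \texttt{FindAWitness} (\cref{lm:witness}). Your route is arguably the more natural one here, since \cref{lm:witness} is precisely the subroutine that locates the best starting rotation of $Q$ and directly returns a $w$-optimal alignment whose cost is $\edwl{S}{Q}$; in fact \cref{lem:klocked} calls it internally. Note, though, that the doubling search is unnecessary: you already know $d^w_P\le W d_P\le 16kW$ and $d^w_T\le W d_T\le 48kW$ in advance, so a single call to \texttt{FindAWitness} with threshold $\Oh(kW)$ suffices and costs the same $\Ohtilde((kW)^2 W)=\Ohtilde(k^2 W^3)$.

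Your discussion of the length precondition is the weakest part. You write that $|P|/|Q|\ge 128k$ together with the fallback $|Q|\le 3d+1$ ``suffices to cover the range of thresholds $d=\Oh(kW)$,'' but this is not verified: with $|Q|$ as large as $m/128k$ and $d=\Theta(kW)$, the condition $|P|\ge (2d+1)|Q|$ requires roughly $128k\ge 32kW$, i.e.\ $W\le 4$, and the alternative $|Q|\le 3d+1$ requires $m=\Oh(k^2W)$; for $W$ and $m$ both moderately large neither holds. Moreover, the fallback $|Q|\le 3d+1$ appears in the statement of \cref{lm:witness} but \emph{not} in the statement of \cref{lem:klocked}, so attributing it to ``both lemmas'' is not supported as written (it is plausibly true, since \cref{lem:klocked}'s algorithm just wraps \cref{lm:witness}, but you would need to say so). To be fair, the paper's own proof does not address this precondition at all, so this is not a gap relative to the paper---but since you explicitly raise the point, the resolution should be made precise rather than waved away.
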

\begin{proof}
    Since the weights are upper-bounded by $w$,
    we have $d^w_P = \edwl{P}{Q} \le W\cdot \edwl{P}{Q} = W\cdot d_P = \Oh(kW)$ and
    analogously for $d^w_T$.
    The values $d^w_P$ and $d^w_T$ can be computed in $\Ohtilde(k^2W^3)$ time using
    \cref{lem:verify}.
    The cost of using \cref{lem:klocked} is $\Ohtilde(k^2W^3+\lpref)$.
\end{proof}

\subsection{Analyzing the Text Using Locked Fragments}

In what follows, we adapt the marking scheme of \cite{ckw22} to \SM.

\begin{definition}[{\cite[see Definition~5.5]{ckw22}}]
    \dglabel{def:marking}
    For a weight function $w:\Esigma^2\to \fragment{0}{W}$, a text \(T\), a pattern \(P\),
    a primitive string \(Q\) with \(\edwl{P}{Q} =
    d^w_P\) and \(\edwl{T}{Q} = d^w_T\), and corresponding sets of locked
    fragments \(\mathcal{L}^P \coloneqq \locked(P,Q,d^w_P,\lpref,w)\) and \(\mathcal{L}^T
    \coloneqq \locked(T,Q,d^w_T,0,w)\),
    write \(\markf\) for the function that
    maps an integer~$v$ to a (weighted) number of
    locked fragments in \(\mathcal{L}^P\) that
    (almost) overlap locked fragments in \(\mathcal{L}^T\) when aligning \(P\) to
    position $v$.

    Formally, we first define a function
    \(\markf: \Z \times \Zz \times \mathcal{L}^P\times  \mathcal{L}^T \to \Zz \) by
    \begin{multline*}
        \markf(v,\ktotm, L^P = P\fragmentco{\ell_P}{r_P}, L^T = T\fragmentco{\ell_T}{r_T}
        )\\
         \coloneqq \begin{cases}
            \edwl{L^T}{Q},&\text{if } \ell_P=0 \text{ and }  v\in \fragmentoo{\ell_T -
            \ktotm-r_P}{r_T + \ktotm-\ell_P}\text{;}\\
            \min\{ \edwl{L^T}{Q}, \edwl{L^P}{Q} \}, &\text{if } \ell_P \neq 0 \text{ and }
            v\in \fragmentoo{\ell_T - \ktotm-r_P}{r_T + \ktotm-\ell_P} \text{;}\\
        0,& \text{otherwise.}
            \end{cases}
        \end{multline*}
     Now, set
        \[\markf(v,\ktotm,\mathcal{L}^P,\mathcal{L}^T)
        \coloneqq \sum_{L^P \in \mathcal{L}^P} \sum_{L^T \in \mathcal{L}^T}
        \markf(v,\ktotm, L^P, L^T ).\tag*{\qedhere}\]
\end{definition}

When $\ktotm$, $\mathcal{L}^P$, and $\mathcal{L}^T$ are clear from context we may just
write
$\markf(v)$ for $\markf(v,\ktotm,\mathcal{L}^P,\mathcal{L}^T)$.

We continue with a set of useful observations about our marking scheme.
Fix an instance of \SM and sets of locked
fragments \(\mathcal{L}^P \coloneqq \locked(P,Q,d^w_P,\lpref,w)\) and \(\mathcal{L}^T \coloneqq \locked(T,Q,d^w_T,0,w)\),
with $\lpref = 2kW$.
Let $\mu \coloneqq \max(d^w_P,d^w_T,\lpref,1) = \cO(k W)$.

\begin{lemma}[{\cite[see Lemma~5.6]{ckw22}}]
    \dglabel{lem:mark-pos}
    For every $\ktotm\in \Zz$, we have
    \[\sum_{v\in \Z} \markf(v,\ktotm,\mathcal{L}^P,\mathcal{L}^T)
    = \cO(W^2 \cdot k^2\cdot (\ktotm + |Q|)).\]
\end{lemma}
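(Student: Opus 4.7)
The plan is to swap the order of summation, sum over $v$ first for each pair $(L^P, L^T) \in \mathcal{L}^P \times \mathcal{L}^T$, and then factorize the resulting double sum. For a fixed pair $L^P = P\fragmentco{\ell_P}{r_P}$ and $L^T = T\fragmentco{\ell_T}{r_T}$, the set of positions $v$ where $\markf(v,\ktotm,L^P,L^T)$ is non-zero is an interval of length $|L^P| + |L^T| + 2\ktotm - 1$, and at each such position the contribution equals either $\min\{\edwl{L^P}{Q},\edwl{L^T}{Q}\}$ (generic case, $\ell_P \neq 0$) or $\edwl{L^T}{Q}$ (boundary case, $\ell_P = 0$, which applies only when $L^P$ is the first fragment $L_1$ of $\mathcal{L}^P$).

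For the generic case, I will split the factor $|L^P| + |L^T| + 2\ktotm$ into three terms and bound each by choosing the convenient side of the $\min$ so that the sum factorizes across $\mathcal{L}^P$ and $\mathcal{L}^T$. Using $\min\{a,b\}\le b$ in the first term, $\min\{a,b\}\le a$ in the second, and either bound in the third yields, respectively,
\[
\sum_{L^P}|L^P| \cdot \sum_{L^T}\edwl{L^T}{Q},\qquad
\sum_{L^T}|L^T| \cdot \sum_{L^P}\edwl{L^P}{Q},\qquad
2\ktotm \cdot |\mathcal{L}^T|\cdot\sum_{L^P}\edwl{L^P}{Q}.
\]
I then invoke \cref{lem:klocked_comb} and \cref{cor:klocked} to bound the relevant quantities: $\sum_{L^T} \edwl{L^T}{Q} = d^w_T$, $\sum_{L^P}\edwl{L^P}{Q} = d^w_P$, $\sum_{L^P} |L^P| = \Oh(|Q|(d^w_P + \lpref))$, $\sum_{L^T} |L^T| = \Oh(|Q|\,d^w_T)$, and $|\mathcal{L}^T|\le d^w_T + 2$ (since the interior locked fragments contribute at least $1$ each to $d^w_T$); \cref{cor:klocked} plus $\lpref = \Oh(kW)$ further gives $d^w_P + d^w_T + \lpref = \Oh(kW)$. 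Multiplying through yields $\Oh(W^2 k^2 (\ktotm + |Q|))$ for the generic case.

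For the boundary case ($L^P = L_1$), the per-position value $\edwl{L^T}{Q}$ already does not involve $L^P$, so the sum is just $\sum_{L^T}(|L_1| + |L^T| + 2\ktotm)\cdot \edwl{L^T}{Q}$. The first term is bounded by $|L_1|\cdot d^w_T \le (\sum_{L^P}|L^P|)\cdot d^w_T = \Oh(|Q| k^2 W^2)$; the second by $(\sum_{L^T} |L^T|)\cdot d^w_T = \Oh(|Q| k^2 W^2)$; the third by $2\ktotm\cdot d^w_T = \Oh(\ktotm\, kW)$. Combining the two cases gives the claimed bound. The only mildly delicate point is the boundary case $\ell_P = 0$: it cannot be covered by the generic min-bound because $\edwl{L_1}{Q}$ may be $0$, but handling it separately is straightforward since it only involves a single $L^P$.
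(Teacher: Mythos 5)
Your proof is correct and follows essentially the same strategy as the paper's: swap the order of summation, bound the per-pair contribution over the length-$(|L^P|+|L^T|+2\ktotm)$ support interval, choose the convenient side of the $\min$ term by term so the double sum factorizes, and then invoke the locked-fragment bounds $\sum\edwl{L}{Q}=d^w_\bullet$, $\sum|L|=\Oh(|Q|\cdot\mu)$, $|\mathcal{L}|=\Oh(\mu)$ from \cref{lem:klocked_comb} together with $\mu=\Oh(kW)$ from \cref{cor:klocked}. The only differences from the paper are cosmetic: you put $|\mathcal{L}^T|$ where the paper uses $|\mathcal{L}^P|$ for the $2\ktotm$ term, and you treat the $\ell_P=0$ boundary case as a fully separate sum rather than folding it into the double sum with an extra additive term, but these choices are interchangeable and lead to the same $\Oh(W^2k^2(\ktotm+|Q|))$ bound.
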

\begin{proof}
    Fix a locked fragment
    \(L^P = P\fragmentco{\ell_P}{r_P}\in \mathcal{L}^P\) and
    a locked fragment
    \(L^T = P\fragmentco{\ell_T}{r_T}\in \mathcal{L}^T\).
    If $\ell_P\ne 0$, then the definition of \(\markf\) yields
    \begin{align*}
        \sum_{v\in \Z}
        \markf(v,\ktotm, L^P, L^T)
            & \le \sum_{v\in \fragmentoo{\ell_T-\ktotm -r_P}{r_T+\ktotm-\ell_P}}
            \min\{ \edwl{L^T}{Q}, \edwl{L^P}{Q} \} \\
            & \le \min\{ \edwl{L^T}{Q}, \edwl{L^P}{Q} \} (|L^T| + |L^P| + 2\ktotm)\\
            & \le \edwl{L^T}{Q}(|L^P| + 2\ktotm) + \edwl{L^P}{Q}|L^T|.
    \end{align*}
    Similarly, if $\ell_P =0$, then the definition of \(\markf\) yields
    \begin{align*}
        \sum_{v\in \Z}
        \markf(v,\ktotm, L^P, L^T)
            & \le \sum_{v\in \fragmentoo{\ell_T-\ktotm -r_P}{r_T+\ktotm-\ell_P}}
            \edwl{L^T}{Q} \\
            & \le \edwl{L^T}{Q}(|L^T| + |L^P| + 2\ktotm)\\
            & \le \edwl{L^T}{Q}(|L^P| + 2\ktotm) + d^w_T|L^T|.
    \end{align*}
    Overall, we have
    \begin{align*}
        \sum_{v\in \Z} \markf(v,\ktotm,\mathcal{L}^P,\mathcal{L}^T)& \le \sum_{L^P \in \mathcal{L}^P} \sum_{L^T \in \mathcal{L}^T} (\edwl{L^T}{Q}(|L^P| + 2\ktotm)+\edwl{L^P}{Q}|L^T|) + d^w_T\sum_{L^T \in \mathcal{L}^T}|L^T|\\
                                                                   & \le d^w_T\sum_{L^P \in \mathcal{L}^P}(|L^P|+2\ktotm) + (d^w_P+d^w_T)\sum_{L^T \in \mathcal{L}^T}|L^T|\\
                                                                   & \le d^w_T(d^w_P+2)2\ktotm + d^w_T\sum_{L^P \in \mathcal{L}^P}|L^P| + (d^w_P+d^w_T)\sum_{L^T \in \mathcal{L}^T}|L^T|\\
                                                                   & = \Oh(\mu^2 \ktotm + d^w_T(d^w_T+\lpref)|Q|+(d^w_P+d^w_T)d^w_T|Q|) \\
                                                                   & = \Oh(\mu^2 (\ktotm+|Q|))\\
                                                                   & = \Oh(k^2 W^2 (\ktotm + |Q|)).
    \end{align*}
\end{proof}

\begin{definition}\label{def:restrset}
    For a set $\mathcal{U}$ of fragments of a string $S$ and an interval $I\subseteq \mathbb{Z}$,
    we write
    $\res{\mathcal{U}}{I} \coloneqq \{S\fragmentco{x}{y} \in \mathcal{U} : \fragmentco{x}{y} \cap I \neq \emptyset\}$.
\end{definition}

\begin{definition}[{\cite[Definition~5.8]{ckw22}}]\dglabel{def:light}
    For any fixed thresholds $\ktotm\in \Zz$ and $\tradeoff\in \Zp$, we say that a position \(v\in \fragment{0}{n-m+k}\) is
    \emph{light} if the following conditions are simultaneously satisfied:
    \begin{itemize}
        \item $\markf(v, \ktotm,\mathcal{L}^P,\mathcal{L}^T) < \tradeoff$,
        \item $\res{\mathcal{L}^T}{\fragmentco{v-\ktotm}{v+\ktotm}}=\emptyset$,
        \item $\res{\mathcal{L}^T}{\fragmentco{v+m-\ktotm}{v+m+\ktotm}}=\emptyset$, and
        \item $\res{\mathcal{L}^T}{\fragmentco{v+|L_1^P|-\ktotm}{v+|L_1^P|+\ktotm}}=\emptyset$.
    \end{itemize}
    Otherwise, the position $v\in \fragment{0}{n-m+k}$ is called \emph{heavy}.
    We denote the sets of heavy and light positions by $\Hv$ and $\light$, respectively.
\end{definition}

\SetKwFunction{heavy}{Heavy}
\begin{lemma}[$\protect\heavy(P,T,k,w,Q,\LP,\LT,\ktotm,\tradeoff)$, {\cite[see Lemma~5.9]{ckw22}}]
    \dglabel{lem:heavy-alg}[def:marking,def:light,lem:witness]
    Consider an instance of \SM with an integer metric weight function,
    families $\LP = \locked(P,Q,d^w_P,\lpref,w)$ and
    $\LT = \locked(T,Q,d^w_T,0,w)$, as well as thresholds $\ktot\in \Zz$ and $\tradeoff\in
    \Zp$,

    The set $\Hv$ of heavy positions, represented as the union of $\Oh(k^2 W^2)$ disjoint
    integer ranges,
    can be computed in $\Ohtilde(k^2 W^3)$ time.
\end{lemma}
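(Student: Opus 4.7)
The plan is to split the heavy condition into its four constituent parts (the marking bound plus the three emptiness conditions) and handle each separately before merging. Conditions (2), (3), and (4) of \cref{def:light} are purely local to individual locked fragments: for each $L^T = T\fragmentco{\ell_T}{r_T} \in \LT$, the violation sets are exactly the open intervals $\fragmentoo{\ell_T - \ktotm}{r_T + \ktotm}$, $\fragmentoo{\ell_T - m - \ktotm}{r_T - m + \ktotm}$, and $\fragmentoo{\ell_T - |L_1^P| - \ktotm}{r_T - |L_1^P| + \ktotm}$, respectively. By \cref{cor:klocked}, we have $|\LT| = \Oh(kW)$, so these three conditions together contribute at most $\Oh(kW)$ intervals, computable in constant time per fragment.

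For condition (1), I will exploit that, by \cref{def:marking}, the function $\markf(\,\cdot\,,\ktotm,\LP,\LT)$ is a sum of $|\LP|\cdot|\LT| = \Oh(k^2 W^2)$ contributions, each of which is a constant (either $\edwl{L^T}{Q}$ or $\min\{\edwl{L^T}{Q},\edwl{L^P}{Q}\}$) on an explicit open interval $\fragmentoo{\ell_T-\ktotm-r_P}{r_T+\ktotm-\ell_P}$ and zero elsewhere. Consequently $\markf$ is a step function with $\Oh(k^2 W^2)$ breakpoints. I will generate, for each of the $\Oh(k^2 W^2)$ pairs $(L^P,L^T)$, a $(+c)$-event at the interval's left endpoint and a $(-c)$-event at its right endpoint, sort all events in $\Ohtilde(k^2 W^2)$ time, and then scan left to right maintaining the running value of $\markf(v)$. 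This yields $\Oh(k^2 W^2)$ maximal intervals on which $\markf \ge \tradeoff$. Unioning these with the $\Oh(kW)$ intervals from conditions (2)--(4) and merging overlaps produces $\Hv$ as $\Oh(k^2 W^2)$ disjoint integer ranges.

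The only nontrivial cost is obtaining the weighted edit distances $\edwl{L^P}{Q}$ for each $L^P\in\LP$ and $\edwl{L^T}{Q}$ for each $L^T\in \LT$, which feed the event weights. I will compute each via \cref{lem:witness}, which costs $\Ohtilde(\ell^2 W)$ for a single fragment of weighted distance $\ell$. The key observation is that, by \cref{lem:klocked_comb},
\[
    \sum_{L \in \LP \cup \LT} \edwl{L}{Q}^2
    \le \Big(\max_L \edwl{L}{Q}\Big) \cdot (d^w_P + d^w_T) = \Oh(k^2 W^2),
\]
so the total time spent in \cref{lem:witness} is $\Ohtilde(k^2 W^3)$, matching the claimed bound and dominating the $\Ohtilde(k^2 W^2)$ sweep.

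The main obstacle will be the careful bookkeeping to ensure that the marking contributions are attributed to the correct intervals (especially the corner case $\ell_P = 0$ for the leading locked prefix $L_1^P$) and that the three emptiness conditions and the level set $\{\markf \ge \tradeoff\}$ are merged consistently into a single sorted list of disjoint ranges. Provided this is done cleanly, the overall complexity follows from the fragment-count bound $\Oh(kW)$ and the pair-count bound $\Oh(k^2 W^2)$, combined with the sum-of-squares estimate above.
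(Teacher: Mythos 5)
Your proposal is correct and follows essentially the same approach as the paper's proof: both interpret $\markf$ as a step function induced by $\Oh(k^2W^2)$ weighted intervals (one per pair of locked fragments), run a sweep-line over interval endpoints, and recognize that the bottleneck is computing the locked costs $\edwl{L}{Q}$ via \cref{lem:witness}, bounded by the same sum-of-squares estimate $\sum_L c_L^2 \le (\max_L c_L)\sum_L c_L = \Oh(k^2W^2)$ giving $\Ohtilde(k^2W^3)$ overall. The only cosmetic difference is that the paper handles conditions (2)--(4) of \cref{def:light} within the same sweep by assigning each corresponding interval an artificial weight of $\tradeoff$, whereas you treat them as a separate set of $\Oh(kW)$ forced-heavy intervals and union them with the level set $\{\markf\ge\tradeoff\}$ afterward; both are equivalent in complexity and output size. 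One small omission worth flagging: the final answer must be intersected with $\fragment{0}{n-m+k}$, since $\Hv$ is defined only over that range, which the paper mentions explicitly and your write-up does not.
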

\begin{proof}
    We implement the marking process according to \cref{def:marking}, assigning
    $\tradeoff$ extra marks to all positions made heavy due to the last three items in
    \cref{def:light}.
    Formally, we produce the following $\Oh(\mu^2)=\Oh(k^2 W^2)$ weighted intervals:
    \begin{itemize}
        \item $\fragmentoo{\ell_T-\ktotm-r_P}{r_T+\ktotm-\ell_P}$ of weight $\min\{
            \edwl{L^T}{Q}, \edwl{L^P}{Q} \}$ for each locked fragment
            $L^P=P\fragmentco{\ell_P}{r_P}\in \LP$ with $\ell_P\ne 0$ and
            $L_T=T\fragmentco{\ell_T}{r_T}\in \LT$,
        \item $\fragmentoo{\ell_T-\ktotm-r_P}{r_T+\ktotm-\ell_P}$ of weight $\edwl{L^T}{Q}$
            for each locked fragment $L^P=P\fragmentco{\ell_P}{r_P}\in \LP$ with
            $\ell_P=0$ and $L_T=T\fragmentco{\ell_T}{r_T}\in \LT$,
        \item $\fragmentoo{\ell_T-\ktotm}{r_T+\ktotm}$ of weight $\tradeoff$ for each
            locked fragment $L_T=T\fragmentco{\ell_T}{r_T}\in \LT$,
        \item $\fragmentoo{\ell_T-\ktotm-m}{r_T+\ktotm-m}$ of weight $\tradeoff$ for each
            locked fragment $L_T=T\fragmentco{\ell_T}{r_T}\in \LT$,
        \item $\fragmentoo{\ell_T-\ktotm-|L_1^P|}{r_T+\ktotm-|L_1^P|}$ of weight
            $\tradeoff$ for each locked fragment $L_T=T\fragmentco{\ell_T}{r_T}\in \LT$.
    \end{itemize}
    The heavy positions are exactly those positions in $\fragment{0}{n-m+k}$ that are
    contained in intervals of total weight at least $\tradeoff$;
    they can be computed using a sweep-line procedure with events corresponding to
    interval endpoints.
    The number of events is $\Oh(k^2W^2)$, so the output consists of $\Oh(k^2W^2)$ disjoint
    ranges.
    In terms of the running time, the bottleneck is computing the locked costs; each such
    cost $c$ can be computed in $\Ohtilde(c^2 W)$ time using \cref{lem:witness}, and this
    sums up to $\Ohtilde(k^2 W^3)$ because $\sum c = \Oh(\mu)= \Oh(kW)$.
\end{proof}

\begin{lemma}[{\cite[see Lemma~5.10]{ckw22}}]
    \dglabel{lem:heavy-bound}[lem:mark-pos,def:light,lem:heavy-alg]
    The total number of heavy positions does not exceed
    \[|\Hv|
               =\cO((kW/\tradeoff + 1)\cdot kW \cdot (\ktotm+|Q|)).\]
    Moreover, for any integer $b\in \Zp$,
    \[\left|\bigcup_{v\in \Hv}\fragment{v-b}{v+b} \right|
        = \cO((kW/\tradeoff + 1)\cdot kW \cdot (\ktotm+b+|Q|)).\]
\end{lemma}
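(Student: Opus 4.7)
The plan is to bound $|\Hv|$ by splitting $\Hv$ into two parts according to \cref{def:light}: the \emph{mark-heavy} positions with $\markf(v,\ktotm,\LP,\LT)\geq \tradeoff$, and the \emph{proximity-heavy} positions that violate one of the latter three conditions (i.e., a locked $T$-fragment intersects one of the windows $\fragmentco{v+c-\ktotm}{v+c+\ktotm}$ for $c\in\{0,|L_1^P|,m\}$). By \cref{lem:mark-pos}, $\sum_v \markf(v,\ktotm,\LP,\LT)=\cO(k^2W^2(\ktotm+|Q|))$, so the number of mark-heavy positions is at most $\cO(k^2W^2(\ktotm+|Q|)/\tradeoff)$. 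For proximity-heavy positions, I would invoke \cref{lem:klocked} to get $|\LT|=\cO(d^w_T)=\cO(kW)$ and $\sum_{L^T\in\LT}|L^T|=\cO(d^w_T\cdot|Q|)=\cO(kW|Q|)$. Each locked $T$-fragment $L^T$ forces at most three windows of length $|L^T|+2\ktotm$ to be proximity-heavy, summing to $\cO(kW(\ktotm+|Q|))$. Combining the two contributions gives $|\Hv|=\cO((kW/\tradeoff+1)\cdot kW\cdot(\ktotm+|Q|))$, as claimed.

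For the $b$-expansion bound, the key observation is that enlarging $\ktotm$ by $b$ absorbs the $b$-expansion. Specifically, every mark interval $\fragmentoo{\ell_T-\ktotm-r_P}{r_T+\ktotm-\ell_P}$ from \cref{def:marking} enlarges, under a $b$-expansion, to $\fragmentoo{\ell_T-(\ktotm+b)-r_P}{r_T+(\ktotm+b)-\ell_P}$, which is exactly the corresponding mark interval at parameter $\ktotm+b$. Hence whenever $|u-v|\le b$ and $\markf(v,\ktotm,L^P,L^T)>0$, we have $\markf(u,\ktotm+b,L^P,L^T)=\markf(v,\ktotm,L^P,L^T)$. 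The analogous inclusion holds for each of the three proximity windows in \cref{def:light}. Therefore the $b$-expansion $\bigcup_{v\in\Hv}\fragment{v-b}{v+b}$ is contained in the heavy set defined with parameter $\ktotm+b$ in place of $\ktotm$ (ignoring $\cO(b)$ boundary effects at the endpoints of $\fragment{0}{n-m+k}$, which are dominated by the main term). Applying the first part of the lemma with $\ktotm$ replaced by $\ktotm+b$ then yields the desired bound $\cO((kW/\tradeoff+1)\cdot kW\cdot(\ktotm+b+|Q|))$.

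The main (mild) obstacle is verifying the structural inclusion that every $b$-neighbor of a heavy position is heavy under the enlarged parameter. This requires carefully re-indexing each of the four defining intervals in \cref{def:marking,def:light} and checking that the shift absorbs uniformly into the $\ktotm$-parameter. Once this observation is in place, both parts reduce to routine counting using \cref{lem:mark-pos} together with the length and cardinality estimates for $\LT$ provided by \cref{lem:klocked}.
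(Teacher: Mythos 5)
Your proof is correct and follows essentially the same decomposition as the paper: bound the mark-heavy positions via \cref{lem:mark-pos}, bound the proximity-heavy positions by charging $\cO(|L^T|+\ktotm)$ per locked $T$-fragment and summing using $\sum|L^T|=\cO(d_T^w|Q|)$ and $|\LT|=\cO(d_T^w)=\cO(kW)$, and handle the $b$-expansion by observing that every interval in \cref{def:light} has a $-\ktotm$ on its left endpoint and a $+\ktotm$ on its right endpoint, so replacing $\ktotm$ by $\ktotm+b$ absorbs the expansion (modulo a $2b$ boundary term outside $\fragment{0}{n-m+k}$). No gaps.
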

\begin{proof}
    By \cref{lem:mark-pos}, the number of positions violating the first condition of
    \cref{def:light}
    does not exceed $\cO((kW)^2\cdot (\ktotm+|Q|)/\tradeoff)$.
    As for the remaining conditions, each locked fragment $L^T\in \mathcal{L}^T$ may yield
    at most $3(|L^T|+2\ktotm)$
    heavy positions, for a total of:
    \[\sum_{L^T\in \mathcal{L}^T} 3(|L^T|+2\ktotm)
    	= \Oh((|Q|+\ktotm)d^w_T) = \Oh(kW(|Q|+\ktotm)).\]

    As for the second claim, observe that if $v$ is heavy, then all positions in
    $\fragment{v-b}{v+b} \cap \fragment{0}{n-m+k}$ would
    be heavy if we increased the threshold $\ktotm$ to $\ktotm+b$.
    This is because the left endpoint of all intervals considered in the proof of
    \cref{lem:heavy-alg} contains a $-\ktotm$ term whereas the right endpoint contains a
    $+\ktotm$ term (and there is no other dependency on $\ktotm$).
    Further, $\left|\bigcup_{v\in \Hv}\fragment{v-b}{v+b} \setminus \fragment{0}{n-m+k}
    \right| \leq 2b$, since $\Hv \subseteq \fragment{0}{n-m+k}$.
\end{proof}

Further, consider a partition of the positions of $T$ in $\fragment{0}{n-m+k}$ into heavy
and light using \cref{lem:heavy-alg}.
First, we show how to compute $(k,w)$-error occurrences that start at heavy positions;
this is a straightforward application of \cref{fct:rj,fct:verifyRj}.
Then, we reduce the problem of computing $(k,w)$-error occurrences that start at light
positions
to an instance $\swaps(T, P, k, w, Q, \A_T, \A_P, \rred{P}, \rred{T}, \alpha)$, where
$\rred{P}$ and $\rred{T}$
are both of size $\cO(kW)$ and contain all the internal pieces that (almost) overlap
locked fragments,
and $\eta \in \fragment{1}{kW}$.

\subsection{Computing Occurrences Starting at Heavy Positions}
\label{sus:heavy}

\begin{lemma}[\texttt{HeavyMatches($P$, $T$, $k$, $w$, $Q$, $\A_P$, $\A_T$, $\Hv$)},
    {\cite[see Lemma 6.1]{ckw22}}]
    \dglabel"{lem:heavy-total}[lem:heavy-alg,fct:rj,lem:verify,lem:heavy-bound]
    Given an instance of the \SM problem
    and the set $\Hv$ of heavy positions constructed using \cref{lem:heavy-alg},
    $\OccW_k(P,T)\cap \Hv$ can be computed in $\Ohtilde((kW/{\tradeoff}+1)k^3W^3)$ time in
    the \pillar model.
\end{lemma}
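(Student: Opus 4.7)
The approach is to reduce to the per-window verification machinery already available via \cref{fct:rj} and \cref{fct:verifyRj}, and to localize the work to those windows that actually contain heavy positions. Concretely, the algorithm does the following.
\begin{enumerate}
    \item Using the representation of $(r_j)_{j\in J}$ as $\Oh(d_T+1)=\Oh(k)$ arithmetic progressions with common difference $\tau=q\lceil\kappa/(2q)\rceil$ (from \cref{fct:rj}) and the representation of $\Hv$ as $\Oh(k^2W^2)$ disjoint ranges (from \cref{lem:heavy-alg}), compute the set
    \[J_{\Hv} \coloneqq \bigl\{j\in J : \bigl(r_j+\fragmentco{0}{|R_j|-m+k}\bigr)\cap \Hv \neq \emptyset\bigr\}\]
    by a straight merge of the two sorted structures.
    \item For each $j\in J_{\Hv}$, invoke the integer-weights variant of \cref{fct:verifyRj} on the pair $(P,R_j)$ to obtain $\OccW_k(P,R_j)$ in $\Ohtilde(k^2W)$ \pillar{} time.
    \item Return $\Hv \cap \bigcup_{j\in J_{\Hv}} \bigl(r_j+\OccW_k(P,R_j)\bigr)$.
\end{enumerate}

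Correctness follows from \cref{fct:rj}: every $v\in \OccW_k(P,T)\subseteq \OccE_k(P,T)$ (using \cref{fact:simple}) admits some $j\in J$ and $v'\in \OccW_k(P,R_j)\cap \fragmentco{0}{|R_j|-m+k}$ with $v=r_j+v'$; if additionally $v\in \Hv$ then $j\in J_{\Hv}$ by construction, so the algorithm reports $v$.

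The whole accounting hinges on bounding $|J_{\Hv}|$. Set $b\coloneqq |R_j|-m+k = \Oh(\kappa) = \Oh(k)$ (using $d_P,d_T=\Oh(k)$ from the \SM{} input). Every relevant $r_j$ lies in the dilated set $S\coloneqq \bigcup_{v\in\Hv}\fragment{v-b}{v}$, whose size is bounded by \cref{lem:heavy-bound} (applied with threshold $b$) as
\[|S| \;=\; \Oh\bigl((kW/\tradeoff+1)\cdot kW \cdot (\ktotm+b+|Q|)\bigr).\]
Since the $(r_j)_{j\in J}$ form $\Oh(k)$ arithmetic progressions of common difference $\tau$, the number of $r_j$'s in any union of $I$ intervals of total length $L$ is at most $\Oh(L/\tau+I+k)$. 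Hence
\[|J_{\Hv}| \;=\; \Oh\bigl(|S|/\tau+k^2W^2\bigr),\]
where the $\Oh(k^2W^2)$ term absorbs both the number of maximal heavy ranges (\cref{lem:heavy-alg}) and the boundary contribution of the arithmetic progressions. Using $\tau\ge q$ (so $|Q|/\tau\le 1$) and $\tau=\Omega(k)$ (so $b/\tau=\Oh(1)$), and taking the $\ktotm$ from the outer marking scheme, the first term simplifies to $\Oh((kW/\tradeoff+1)kW\cdot(\ktotm/\tau+W))$. In the regime relevant to the calling algorithm one has $\ktotm=\Oh(kW)$, so $\ktotm/\tau=\Oh(W)$, giving
\[|J_{\Hv}| \;=\; \Oh\bigl((kW/\tradeoff+1)\cdot kW^2 + k^2W^2\bigr) \;=\; \Oh\bigl((kW/\tradeoff+1)kW^2\bigr).\]
Multiplying by the $\Ohtilde(k^2W)$ per-call verification cost of \cref{fct:verifyRj} yields the claimed $\Ohtilde((kW/\tradeoff+1)k^3W^3)$ bound. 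Step~1 and the filtering in Step~3 both take $\Ohtilde(k^2W^2+|J_{\Hv}|\cdot k)$ time, which is dominated by the verification cost.

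\textbf{Main obstacle.} The delicate step is the arithmetic in bounding $|J_{\Hv}|$: the dilation factor $b=\Oh(\kappa)$, the denominator $\tau=q\lceil\kappa/(2q)\rceil$, and the parameter $\ktotm$ of the marking scheme must conspire to simultaneously absorb the $|Q|$ contribution (via $|Q|/q\le 1$), the $\ktotm$ contribution (via $\ktotm/\tau=\Oh(W)$ in the regime of interest), and the additive $\Oh(k^2W^2)$ boundary term coming from the number of disjoint heavy ranges. Once this counting is carried out cleanly, the rest is mechanical.
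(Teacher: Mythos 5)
Your overall strategy (restrict verification to windows that intersect $\Hv$, using $\OccW_k(P,T)\subseteq\OccE_k(P,T)$ and the decomposition of \cref{fct:rj}) matches the paper's, and your correctness argument is sound. The gap is in the counting of $|J_{\Hv}|$, and it is not cosmetic.

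Your bound $|J_{\Hv}| = \Oh(|S|/\tau + I + k)$ with $I = \Oh(k^2W^2)$ (the number of disjoint heavy ranges) is correct for counting $r_j$'s landing in a union of $I$ intervals of total length $L$, but it is genuinely weaker than what is needed. You then claim $k^2W^2 = \Oh\bigl((kW/\tradeoff+1)kW^2\bigr)$, which requires $k \le C(kW/\tradeoff + 1)$, i.e.\ $\tradeoff = \Oh(W)$. In the intended application (\cref{lem:solve_int_SM}) one takes $\tradeoff = \Theta(\sqrt{k}\,W) \gg W$, so the $k^2W^2$ term does \emph{not} absorb. Propagating it through the $\Ohtilde(k^2W)$ per-window cost yields an extra $\Ohtilde(k^4W^3)$, which exceeds the claimed $\Ohtilde\bigl((kW/\tradeoff+1)k^3W^3\bigr) = \Ohtilde(k^{3.5}W^3)$ at $\tradeoff=\Theta(\sqrt kW)$. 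Since the lemma is stated for an arbitrary $\tradeoff$, your bound does not establish it.

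The fix is the paper's slightly different accounting: rather than counting $r_j$'s inside a dilated heavy set, use the fact that the intervals $\fragmentco{r_j}{r_{j+1}}$ ($j \in J$) are pairwise disjoint. Define $J'$ via $\fragmentco{r_j}{r_{j+1}}\cap\Hv\neq\emptyset$; each such interval has length at most $\tau+d_T$, hence is fully contained in $\bigcup_{v\in\Hv}\fragment{v-\tau-d_T}{v+\tau+d_T}$. By disjointness, $\sum_{j\in J'\setminus\{\max J\}}|\fragmentco{r_j}{r_{j+1}}|$ is at most the measure of this dilated set, and the density property from \cref{fct:rj} (any $s$ of these intervals have total length $\ge s\tau-d_T$) converts this total length into a count $|J'| \le 2 + (\text{measure}+d_T)/\tau$ with \emph{no} additive per-range term. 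Plugging in \cref{lem:heavy-bound} with dilation $\tau+d_T$, together with $\ktotm,d_T=\Oh(kW)$, $|Q|\le\tau$, and $\tau=\Omega(k)$, gives $|J'|=\Oh\bigl((kW/\tradeoff+1)kW^2\bigr)$ directly, and the stated running time follows.
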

\begin{proof}
    Recall that the set $\Hv$ is represented as the union of $\Oh(k^2W^2)$ disjoint heavy
    ranges
    (listed in the left-to-right order).

    Our algorithm starts with an application \cref{fct:rj} to construct the sequence
    $(r_j)_{j\in J}$, represented as a concatenation of $\Oh(k)$ arithmetic progressions.
    Our first goal is to enumerate elements of the set $J' \coloneqq \{j\in J :
    \fragmentco{r_j}{r_{j+1}}\cap \Hv \ne \emptyset\}$.
    For this, we simultaneously traverse the sequence $(r_j)_{j\in J}$ along with the
    heavy ranges constituting $\Hv$.
    For each heavy range $H \subseteq \Hv$, we list all $j\in J$ such that
    $\fragmentco{r_j}{r_{j+1}}\cap H \ne \emptyset$
    (only the smallest such $j$ might have already been listed for an earlier heavy
    range).
    In the second phase, we compute $O\coloneqq\bigcup_{j\in J'} (r_j+\OccW_k(P,R_j))\cap
    \fragmentco{r_j}{r_{j+1}}$
    using \cref{lem:verify}, making sure that the positions are listed in the
    left-to-right order.
    Finally, we simultaneously scan~$O$ and the heavy ranges constituting $\Hv$, reporting
    all positions $v\in O\cap \Hv$.

    As for correctness, observe that $R_j=T\fragmentco{r_j}{r'_j}$,
    so $O \subseteq \OccW_k(P,T)$ and, due to the final filtering step, we output a subset of $\OccW_k(P,T)\cap \Hv$.
    To prove the converse inclusion, consider a position $v\in \OccW_k(P,T)\cap \Hv$.
    A combination of \cref{fct:rj} with \cref{fact:simple} yields that there exists $j\in
    J$ such that $v\in (r_j+\OccW_k(P,R_j))\cap \fragmentco{r_j}{r_{j+1}}$
    and, by the definition of $J'$, we also have $j\in J'$. Consequently, $v$ is indeed reported.

    As for the complexity analysis, let us first compute the running time in terms of $|J'|$.
    Constructing the sequence $(r_j)_{j\in J}$ costs $\Oh(k)$ time.
    The set $J'$ can be computed in $\Oh(|J'|+k+k^2W^2)=\Oh(|J'|+k^2W^2)$ time.
    The applications of \cref{lem:verify} cost $\Ohtilde(k^2 W)$ time each (in the \pillar
    model),
    for a total of $\Ohtilde(|J'|\cdot k^2W^2)$ time in the \pillar model.
    A (crude) upper bound on the output size is $\cO(|J'|k^2W^2)$, so the final filtering
    step works in $\Oh((1+|J'|)k^2W^2)$ time.
    Overall, the running time in the \pillar model is $\Oh((1+|J'|)k^2W^2)$.

    It remains to bound $|J'|$. Observe that each interval $\fragmentco{r_j}{r_{j+1}}$
    (possibly except for the last one with $j=\max J$) has length at most $\tau+d_T$.
    Moreover, the total length of any $s$ intervals $\fragmentco{r_j}{r_{j+1}}$ with $0<r_j < r_{j+1} < n$
    is at least $s\tau-d_T$.
    On top of that, there might be one non-empty interval with $r_j=0$ and one non-empty interval with $r_{j+1}=n$.
    We conclude that
    \[ |J'| \le 2 + \frac{\sum_{j\in J'\setminus \{\max J\}} |\fragmentco{r_j}{r_{j+1}}|+d_T}{\tau}
    \le 2+\frac{\left|\bigcup_{v\in \Hv} \fragment{v-\tau-d_T}{v+\tau+d_T}\right|+d_T}{\tau}.\]
    Since $\tau = \Theta(\max(|Q|,k))$ and $d_T,\ktotm=\Oh(kW)$,
    the bound of \cref{lem:heavy-bound} yields that $|J'|=\Oh(({kW}/{\tradeoff}+1)kW^2)$.
    Hence, the total running time is $\Ohtilde((kW/{\tradeoff}+1)k^3W^3)$ just as claimed.
\end{proof}

\subsection{Computing Occurrences Starting at Light Positions}
\label{sus:light}

\paragraph*{Combinatorial Insights}

The following fact follows directly from~\cref{def:light}.
\begin{fact}[{\cite[see Fact 6.2]{ckw22}}]\dglabel{fact:restrlckd}
    For any light position $v$ in $T$ and for any $x\in
    \fragmentoo{v+m-\ktotm}{v+m+\ktotm}$, we have
    $\res{\mathcal{L}^T}{\fragmentco{v}{x}}=\{T\fragmentco{\ell}{r} \in \mathcal{L}^T :
    \fragmentco{\ell}{r} \subseteq \fragmentco{v}{v+m}\}$.
\end{fact}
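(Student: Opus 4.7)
My plan is to establish equality of the two sets via a pair of inclusions, in each direction exploiting one of the three "no locked fragment near a critical position" conditions that the lightness of $v$ provides (the conditions at positions $v$ and $v+m$). Throughout, I will assume $\ktotm \ge 1$ (otherwise the statement is vacuous, as $\fragmentoo{v+m-\ktotm}{v+m+\ktotm}$ is empty).

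For the inclusion $\subseteq$, take any $T\fragmentco{\ell}{r} \in \res{\mathcal{L}^T}{\fragmentco{v}{x}}$, so that $\fragmentco{\ell}{r}\cap \fragmentco{v}{x}\ne \emptyset$; I need to prove $v \le \ell$ and $r \le v+m$. Suppose for contradiction that $\ell < v$: combined with the non-empty intersection, this forces $r > v$, and hence $v \in \fragmentco{\ell}{r}\cap \fragmentco{v-\ktotm}{v+\ktotm}$, contradicting the second condition of \cref{def:light}. Symmetrically, if $r > v+m$, I split on whether $\ell \le v+m$ or $\ell > v+m$. In the first subcase, $v+m\in \fragmentco{\ell}{r}\cap \fragmentco{v+m-\ktotm}{v+m+\ktotm}$; in the second, any witness $y$ of the intersection with $\fragmentco{v}{x}$ satisfies $v+m < \ell \le y < x < v+m+\ktotm$, so $\ell \in \fragmentco{\ell}{r}\cap \fragmentco{v+m-\ktotm}{v+m+\ktotm}$. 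Both contradict the third condition of \cref{def:light}.

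For the inclusion $\supseteq$, take any $T\fragmentco{\ell}{r}\in \mathcal{L}^T$ with $\fragmentco{\ell}{r}\subseteq \fragmentco{v}{v+m}$; I need $\fragmentco{\ell}{r}\cap \fragmentco{v}{x}\ne \emptyset$. Since $\ell \ge v$, it suffices to show $\ell < x$, which would put $\ell$ in the intersection. If instead $\ell \ge x$, then using $x > v+m-\ktotm$ and $\ell < r \le v+m$ I would obtain $\ell\in \fragmentco{\ell}{r}\cap \fragmentoo{v+m-\ktotm}{v+m}\subseteq \fragmentco{\ell}{r}\cap \fragmentco{v+m-\ktotm}{v+m+\ktotm}$, again contradicting the third condition of \cref{def:light}.

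I do not expect any genuine obstacle: the argument is a direct case analysis, and the key insight is simply that the "forbidden windows" $\fragmentco{v-\ktotm}{v+\ktotm}$ and $\fragmentco{v+m-\ktotm}{v+m+\ktotm}$ from the definition of lightness are exactly the windows in which a locked fragment of $T$ could be positioned so as to overlap $\fragmentco{v}{x}$ while violating the containment in $\fragmentco{v}{v+m}$ (or vice versa). The fourth condition of \cref{def:light} (the window around $v+|L_1^P|$) plays no role here; it is used elsewhere in the analysis.
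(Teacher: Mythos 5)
Your proof is correct, and it fills in exactly the routine details that the paper leaves implicit (the paper states that this fact "follows directly from Definition~\ref{def:light}" and gives no written argument). The two-inclusion case analysis hinging on the second and third lightness conditions is the intended reading, and you correctly identify that the fourth condition (the window around $v+|L_1^P|$) is irrelevant here. The only unstated assumption in both your argument and the paper's is that locked fragments are non-empty (otherwise an empty fragment $T\fragmentco{\ell}{\ell}$ would vacuously satisfy the containment on the right but never the intersection condition on the left); this holds for the constructions of \cref{lem:klocked,lem:klocked_comb} and is implicit throughout.
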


Let $\A_T^w : T \onto Q^\infty\fragmentco{x_T^w}{y_T^w}$ be a $w$-optimal alignment of
cost $\edwl{T}{Q}$.
For each position~$v$ of~$T$, set $\rho(v)\in \fragment{x_T^w}{y_T^w}$ such that
$\A_T^w(T\fragmentco{v}{n})=\Q\fragmentco{\rho(v)}{y_T^w}$.
Observe that
\begin{align*}
    \edwl{T}{Q}&=
        \edw{T\fragmentco{0}{v}}{\Q\fragmentco{x_T^w}{\rho(v)}}
        +\edw{T\fragmentco{v}{n}}{\Q\fragmentco{\rho(v)}{y_T^w}}\\
               &=\edws{T\fragmentco{0}{v}}{\rot^{-\rho(v)}(Q)}
           +\edwp{T\fragmentco{v}{n}}{\rot^{-\rho(v)}(Q)}.
\end{align*}

\begin{lemma}[{\cite[see Lemma 6.3]{ckw22}}]
    \dglabel{fact:rot}
    For any two positions $v<x$ of $T$ such that
    $\res{\mathcal{L}^T}{\position{v}}=\res{\mathcal{L}^T}{\position{x}}=\emptyset$,
    we have
    \[
        \edwl{T\fragmentco{v}{x}}{Q}=\edw{T\fragmentco{v}{x}}{\Q\fragmentco{\rho(v)}{\rho(x)}}
    = \sum_{L \in \Lck^T_{\fragmentco{v}{x}}} \edwl{L}{Q}.\qedhere\]
\end{lemma}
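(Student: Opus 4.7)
The plan is to sandwich both quantities between the sum $\sum_{L \in \Lck^T_{\fragmentco{v}{x}}} \edwl{L}{Q}$. First, I would unpack the locked decomposition supplied by \cref{lem:klocked} (via \cref{lem:klocked_comb,lem:locked}): $T = L_1 \cdot Q^{\alpha_1} \cdot L_2 \cdots Q^{\alpha_{\ell-1}} \cdot L_{\ell}$ with $\edwl{T}{Q} = \sum_{i=1}^{\ell} \edwl{L_i}{Q}$. Fixing the $w$-optimal alignment $\A_T^w : T \onto \Q\fragmentco{x_T^w}{y_T^w}$ and applying the equality part of \cref{fct:ali} to this decomposition gives
\[\edwl{T}{Q} = \edwa{\A_T^w}{T}{\A_T^w(T)} = \sum_{i=1}^{\ell} \edw{L_i}{\A_T^w(L_i)} + \sum_{i=1}^{\ell-1} \edw{Q^{\alpha_i}}{\A_T^w(Q^{\alpha_i})}.\]
Since each $\A_T^w(L_i)$ is a fragment of $Q^\infty$, we have $\edw{L_i}{\A_T^w(L_i)} \geq \edwl{L_i}{Q}$, and combining with the cost equality forces $\edw{Q^{\alpha_i}}{\A_T^w(Q^{\alpha_i})} = 0$ for every $i$. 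Because $w$ is normalized (every non-match edit costs at least $1$), this zero cost forces $\A_T^w$ to restrict to the identity on each plain block $Q^{\alpha_i}$, so $\A_T^w(Q^{\alpha_i}) = Q^{\alpha_i}$ exactly as strings.

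Because neither $v$ nor $x$ lies in any locked fragment, both positions are situated strictly inside plain blocks; by disjointness of the locked fragments, every element of $\Lck^T_{\fragmentco{v}{x}}$ is thus fully contained in $\fragmentco{v}{x}$. In particular, $T\fragmentco{v}{x}$ decomposes as $Q^{\gamma_0} \cdot L_{i_1} \cdot Q^{\gamma_1} \cdots L_{i_s} \cdot Q^{\gamma_s}$, where $\{L_{i_1}, \dots, L_{i_s}\} = \Lck^T_{\fragmentco{v}{x}}$. Restricting $\A_T^w$ to $T\fragmentco{v}{x}$ yields an alignment onto $\Q\fragmentco{\rho(v)}{\rho(x)}$; a second application of \cref{fct:ali} combined with the forced per-piece equalities above shows that its cost equals $\sum_{j=1}^{s} \edwl{L_{i_j}}{Q} = \sum_{L \in \Lck^T_{\fragmentco{v}{x}}} \edwl{L}{Q}$, delivering the upper bound
\[\edw{T\fragmentco{v}{x}}{\Q\fragmentco{\rho(v)}{\rho(x)}} \leq \sum_{L \in \Lck^T_{\fragmentco{v}{x}}} \edwl{L}{Q}.\]

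For the matching lower bound, I would take an arbitrary alignment $\A' : T\fragmentco{v}{x} \onto \Q\fragmentco{a}{b}$ realizing $\edwl{T\fragmentco{v}{x}}{Q}$. A third invocation of \cref{fct:ali} along the same decomposition, combined with $\edw{L_{i_j}}{\A'(L_{i_j})} \geq \edwl{L_{i_j}}{Q}$ (since $\A'(L_{i_j})$ is a substring of $Q^\infty$), yields $\edwl{T\fragmentco{v}{x}}{Q} \geq \sum_{L} \edwl{L}{Q}$. Chaining
\[\sum_{L \in \Lck^T_{\fragmentco{v}{x}}} \edwl{L}{Q} \leq \edwl{T\fragmentco{v}{x}}{Q} \leq \edw{T\fragmentco{v}{x}}{\Q\fragmentco{\rho(v)}{\rho(x)}} \leq \sum_{L \in \Lck^T_{\fragmentco{v}{x}}} \edwl{L}{Q}\]
forces equality throughout. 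The main obstacle is the first paragraph---specifically, the structural argument that $\A_T^w$ restricts to the identity on every plain block $Q^{\alpha_i}$. This step hinges essentially on both the normalization of $w$ and on the exact cost equality $\edwl{T}{Q} = \sum_i \edwl{L_i}{Q}$ supplied by the locked decomposition; it would fail for, e.g., scaled weights where zero-cost substitutions are possible.
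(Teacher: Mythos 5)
Your proposal is correct, but it takes a genuinely different route from the paper's proof, even though both ultimately close the same sandwich
\[
\sum_{L\in\Lck^T_{\fragmentco{v}{x}}} \edwl{L}{Q}
\le
\edwl{T\fragmentco{v}{x}}{Q}
\le
\edw{T\fragmentco{v}{x}}{\Q\fragmentco{\rho(v)}{\rho(x)}}
\le
\sum_{L\in\Lck^T_{\fragmentco{v}{x}}} \edwl{L}{Q}.
\]
The difference is in the key upper bound. The paper applies \cref{fct:ali} to $\A_T^w$ with the coarse three-part split $T=T\fragmentco{0}{v}\cdot T\fragmentco{v}{x}\cdot T\fragmentco{x}{n}$, then argues by \emph{subtraction}: the total cost equals $d_T^w=\sum_j\edwl{L^T_j}{Q}$, and each of the two flanking terms is lower-bounded by the sum of locked costs it contains (disjointness), so the middle term is at most the remaining sum. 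You instead apply \cref{fct:ali} along the \emph{full} decomposition $T=L_1 Q^{\alpha_1}L_2\cdots Q^{\alpha_{\ell-1}}L_\ell$, deduce the stronger structural fact that $\A_T^w$ is edit-free on every plain block $Q^{\alpha_i}$, and then tally the restriction of $\A_T^w$ to $T\fragmentco{v}{x}$ directly. Both are sound; the paper's version is leaner because it never needs the plain-block-identity fact, while yours is slightly heavier but establishes more structure, which can be pedagogically useful.

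One notational imprecision worth fixing: writing $T\fragmentco{v}{x}=Q^{\gamma_0}L_{i_1}Q^{\gamma_1}\cdots L_{i_s}Q^{\gamma_s}$ is not literally right, since $v$ and $x$ need not sit at $Q$-copy boundaries inside their plain blocks; the boundary pieces are general substrings of $Q^\infty$, not powers of $Q$. This does not break your argument --- all you use is that those pieces lie inside plain blocks where $\A_T^w$ makes no edits --- but the wording should say ``substrings of $Q^\infty$ contained in plain blocks'' rather than powers of $Q$. Your closing caveat (that the plain-block-identity step hinges on normalization and the exact locked-cost equality) is accurate; I would note only that the paper's subtraction route sidesteps this sensitivity by never needing that step at all.
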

\begin{proof}
    Set $\res{\mathcal{L}^T}{\fragmentco{v}{x}} = \{L^T_j \in \mathcal{L}^T: j \in
    \fragment{j_1}{j_2}\}$, and observe
    that all elements of this set are fragments of $T\fragmentco{v}{x}$.
    Now, we have
    \begin{align*}
        & \edw{T\fragmentco{v}{x}}{\Q\fragmentco{\rho(v)}{\rho(x)}}\\
        & \quad = \edw{T}{\Q\fragmentco{x_T^w}{y_T^w}} - \edw{T\fragmentco{0}{v}}{\Q\fragmentco{x_T^w}{\rho(v)}}\\
        & \qquad\qquad\qquad - \edw{T\fragmentco{x}{n}}{\Q\fragmentco{\rho(x)}{y_T^w}}\\
        & \quad \leq d_T - \sum_{j=1}^{j_1-1} \edwl{L^T_j}{Q} - \sum_{j=j_2+1}^{\ell^T} \edwl{L^T_j}{Q}\\
        & \quad = \sum_{j=j_1}^{j_2} \edwl{L^T_j}{Q}\\
        & \quad \leq \edwl{T\fragmentco{v}{x}}{Q}\\
        & \quad \leq \edw{T\fragmentco{v}{x}}{\Q\fragmentco{\rho(v)}{\rho(x)}}.\qedhere
    \end{align*}
\end{proof}

\begin{lemma}[{\cite[see Lemma 6.4]{ckw22}}]
    \dglabel{lem:ub}[lem:klocked,fact:restrlckd,fact:rot]
    Consider a light position $v$ of $T$.
    If $\edwp{L_1^P}{\rot^{-\rho(v)}(Q)} < \lpref$, then there is an $x\in
    \fragment{v}{n}$
    such that
    \[\edw{P}{T\fragmentco{v}{x}}\leq
    \edwp{L_1^P}{\rot^{-\rho(v)}(Q)}+\sum_{i=2}^{\ell^P}\edwl{L^P_i}{Q}+\sum_{L\in\res{\mathcal{L}^T}{\fragmentco{v}{v+m}}
    }\edwl{L}{Q}.\qedhere\]
\end{lemma}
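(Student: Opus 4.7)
The plan is to bound $\edw{P}{T\fragmentco{v}{x}}$ by routing through $Q^\infty$ and applying the metric triangle inequality for $\edw{\cdot}{\cdot}$, which is valid because $w$ is an integer metric (Fact \ref{7-16-2}). Concretely, I will construct an alignment $P\onto Q^\infty\fragmentco{\rho(v)}{y}$ of cost $c_1 \coloneqq \edwp{L_1^P}{\rot^{-\rho(v)}(Q)}+\sum_{i=2}^{\ell^P}\edwl{L_i^P}{Q}$, pair it with the canonical alignment $T\fragmentco{v}{x}\onto Q^\infty\fragmentco{\rho(v)}{y}$ supplied by $\A_T^w$ via Fact \ref{fact:rot}, and then tighten the resulting index set using Fact \ref{fact:restrlckd}.

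For the first alignment, the hypothesis $\edwp{L_1^P}{\rot^{-\rho(v)}(Q)}<\lpref$ together with the $\lpref$-lockedness of the prefix $L_1^P$ (Definition \ref{def:klocked}) yields an optimal alignment of $L_1^P$ to $Q^\infty\fragmentco{\rho(v)}{y_1}$ with $y_1\equiv 0 \pmod{|Q|}$. For each $i\in\fragmentoo{1}{\ell^P}$, $L_i^P$ is locked with $\edwl{L_i^P}{Q}>0$, so by the first clause of the definition of locked fragment it admits an optimal alignment to an integer power of $Q$; since the current position is a multiple of $|Q|$, this alignment can be spliced in verbatim at cost $\edwl{L_i^P}{Q}$. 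The suffix $L_{\ell^P}^P$ aligns to a prefix of $Q^\infty$ of matching starting position at cost $\edwl{L_{\ell^P}^P}{Q}$. Each intermediate $Q^{\alpha_i}$ block aligns identity at cost $0$. Concatenation produces the desired alignment and its endpoint $y$.

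Next I pick $x\in\fragmentoo{v+m-\ktotm}{v+m+\ktotm}$ with $\rho(x)=y$. Lightness of $v$ (Definition \ref{def:light}) gives $\res{\mathcal{L}^T}{\fragmentco{v+m-\ktotm}{v+m+\ktotm}}=\emptyset$, so $\A_T^w$ acts identity-wise on this window and $\rho$ is strictly increasing by $1$ there, sweeping out a consecutive integer interval of length $2\ktotm-1$. The alignment of the previous paragraph forces $|y-\rho(v)-m|\le c_1=\Oh(kW)$, whereas Fact \ref{fact:rot} applied to $v,v+m$ (both outside locked fragments, by lightness) bounds $|\rho(v+m)-\rho(v)-m|$ by $\sum_{L\in\res{\mathcal{L}^T}{\fragmentco{v}{v+m}}}\edwl{L}{Q}\le d_T^w=\Oh(kW)$; with $\ktotm$ chosen larger than the sum of these slacks (as ensured globally by Cor.~\ref{cor:klocked}), $y$ lies in the $\rho$-image and the desired $x$ exists, automatically satisfying $\res{\mathcal{L}^T}{\position{x}}=\emptyset$. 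Then Fact \ref{fact:rot} gives $\edw{T\fragmentco{v}{x}}{Q^\infty\fragmentco{\rho(v)}{y}}=\sum_{L\in\res{\mathcal{L}^T}{\fragmentco{v}{x}}}\edwl{L}{Q}$, Fact \ref{fact:restrlckd} trims the sum to $\res{\mathcal{L}^T}{\fragmentco{v}{v+m}}$, and the triangle inequality yields
\[
\edw{P}{T\fragmentco{v}{x}} \le \edw{P}{Q^\infty\fragmentco{\rho(v)}{y}} + \edw{Q^\infty\fragmentco{\rho(v)}{y}}{T\fragmentco{v}{x}} \le c_1 + \sum_{L\in\res{\mathcal{L}^T}{\fragmentco{v}{v+m}}}\edwl{L}{Q}.
\]

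The main obstacle is the middle step: reconciling the drift of $y$ from $\rho(v)+m$ (imposed by the $P$-side alignment) with the drift of $\rho(v+m)$ from $\rho(v)+m$ (imposed by $\A_T^w$), and verifying that both stay within the $\ktotm$ window supplied by lightness. Once the requisite $\Oh(kW)$ book-keeping is carried out, the remainder is a clean triangle-inequality chase built on the locked-alignment backbone.
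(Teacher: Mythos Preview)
Your proposal is correct and follows essentially the same approach as the paper: both route through $Q^\infty$ via the metric triangle inequality, build the $P$-side alignment piece-by-piece from the locked decomposition (exploiting $\lpref$-lockedness of $L_1^P$ so the endpoint lands at a multiple of $|Q|$), locate $x$ in the window $\fragmentoo{v+m-\ktotm}{v+m+\ktotm}$ where $\A_T^w$ acts identity-wise, and finish with \cref{fact:rot} and \cref{fact:restrlckd}. The paper packages the first step as an equality $\edwp{P}{\rot^{-\rho(v)}(Q)}=c_1$ (Claim~\ref{claim:rotate}) and locates the $\rho$-image of the window by a direct drift bound on $\A_T^w$ rather than via $\rho(v+m)$, but these are cosmetic differences; your bookkeeping on the $\ktotm$ slack is slightly hand-wavy (and the constant is fixed in the proof of \cref{lem:solve_int_SM}, not in \cref{cor:klocked}), but the argument goes through.
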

\begin{proof}
    We first prove two auxiliary claims.

    \begin{claim}
        \label{claim:rotate}
        We have
        $\edwp{P}{\rot^{-\rho(v)}(Q)} =
        \edwp{L_1^P}{\rot^{-\rho(v)}(Q)}+\sum_{i=2}^{\ell^P}\edwl{L^P_i}{Q} <
        \lpref+d^w_P$.
    \end{claim}
    \begin{claimproof}
        The inequality $\edwp{P}{\rot^{-\rho(v)}(Q)} \geq
        \edwp{L_1^P}{\rot^{-\rho(v)}(Q)}+\sum_{i=2}^{\ell^P}\edwl{L^P_i}{Q}$ holds
        trivially.

        By~\cref{lem:klocked}, we have $P=L_1^P Q^{\alpha_1} L_2^P Q^{\alpha_2} \cdots
        Q^{\alpha_{\ell^P-1}} L_{\ell^P}^P$ for some non-negative integers $\alpha_i$.
        Since $L_1^P$ is $\lpref$-locked, we have
        $\edwp{L_1^P}{\rot^{-\rho(v)}(Q)}=\edw{L_1^P}{Q^\infty\fragmentco{\rho(v)}{jq}}$ for some non-negative integer $j$.
        Further, for each $i \in \fragmentoo{1}{\ell^P}$,
        we have $\edwl{L^P_i}{Q}=\edw{L^P_i}{Q^{\beta_i}}$ for some non-negative integer $\beta_i$.
        Finally, we have
        $\edwl{L^P_{\ell^P}}{Q}=\edw{L^P_{\ell^P}}{Q^\infty\fragmentco{0}{y}}$ for some
        non-negative integer $y$.

        Set $\gamma = \alpha_1+\sum_{i=2}^{\ell^P-1}(\alpha_i+\beta_i)$.
        The above discussion implies that there is an alignment of $P$ with the prefix
        $Q\fragmentco{\rho(v)}{jq}Q^\gamma Q^\infty\fragmentco{0}{y}$
        of $Q\fragmentco{\rho(v)}{jq}Q^\infty$ that costs
        $\edwp{L_1^P}{\rot^{-\rho(v)}(Q)}+\sum_{i=2}^{\ell^P}\edwl{L^P_i}{Q}$, thus
        proving
        that $\edwp{P}{\rot^{-\rho(v)}(Q)} \leq
        \edwp{L_1^P}{\rot^{-\rho(v)}(Q)}+\sum_{i=2}^{\ell^P}\edwl{L^P_i}{Q}$.
        This concludes the proof of the claimed equality.

        The claimed inequality holds since $\edwp{L_1^P}{\rot^{-\rho(a)}(Q)}< \lpref$ and
        $\sum_{i=2}^{\ell^P}\edwl{L^P_i}{Q}\leq d^w_P$.
    \end{claimproof}

    \begin{claim}
        \label{claim:rhow}
        There is an $x\in \fragmentoo{v+m-\ktotm}{v+m+\ktotm}\cap \fragmentco{0}{n}$ such
        that $\edwp{P}{\rot^{-\rho(v)}(Q)}=\edw{P}{Q^\infty
        \fragmentco{\rho(v)}{\rho(x)}}$.
    \end{claim}
    \begin{claimproof}
        Recall that $\Lck^T_{\fragmentco{v+m-\ktotm}{v+m+\ktotm}} = \emptyset$
        holds because $v$ is a light position of $T$.
        Since $\Lck^T$ contains a suffix of $T$, we conclude that either $n \le
        v+m-\ktotm$ or $v+m+\ktot \le n$.
        The former case contradicts $v \le n-m+k < n-m+\ktotm$, so
        $T\fragmentco{v+m-\ktotm}{v+m+\ktotm}$ must be a fragment of $T$ disjoint with all
        locked fragments in $\Lck^T$.
        This means that $\A_T^w$ matches $T\fragmentco{v+m-\ktotm}{v+m+\ktotm}$ without
        any edits to $\A_T^w(T\fragmentco{v+m-\ktotm}{v+m+\ktotm})$.
        Since $(v,\rho(v))\in \A_T^w$ and the cost of $\A_T^w$ is $d^w_T$, the fragment
        $\A_T^w(T\fragmentco{v+m-\ktotm}{v+m+\ktotm})$ must contain
        $\Q\fragmentco{\rho(v)+m-\ktotm+d^w_T}{\rho(v)+m+\ktotm-d^w_T}$.
        Consequently, for every $y\in
        \fragmentoo{\rho(v)+m-\ktotm+d^w_T}{\rho(v)+m+\ktotm-d^w_T}$, there exists $x\in
        \fragmentoo{v+m-\ktotm}{v+m+\ktotm}\cap \fragmentco{0}{n}$ such that $y=\rho(x)$.
        Since $\edwp{P}{\rot^{-\rho(v)}(Q)}\le \lpref + d^w_P < \ktotm - d^w_T$ holds by
        \cref{claim:rotate},
        such $x$ exists in particular for $y$ chosen so that
        $\edwp{P}{\rot^{-\rho(v)}(Q)}=\edw{P}{Q^\infty \fragmentco{\rho(v)}{y}}$.
    \end{claimproof}

    We get the desired bound by combining
    \cref{claim:rhow,claim:rotate,fact:restrlckd,fact:rot} via the triangle inequality;
    observe that \cref{fact:rot} is applicable because $\Lck^T_{\position{x}}\subseteq
    \Lck^T_{\fragmentco{v+m-\ktotm}{v+m+\ktotm}}=\emptyset$.
    \begin{align*}
    &\edw{P}{T\fragmentco{v}{x}}\\
    &\quad \leq \edw{P}{Q^\infty \fragmentco{\rho(v)}{\rho(x)}} +
        \edw{T\fragmentco{v}{x}}{Q^\infty \fragmentco{\rho(v)}{\rho(x)}}\\
    &\null\hspace{25em} (\text{symmetry, triangle inequality})\\
    &\quad = \edwp{P}{\rot^{-\rho(v)}(Q)} + \edw{T\fragmentco{v}{x}}{Q^\infty
    \fragmentco{\rho(v)}{\rho(x)}}\\
    &\null\hspace{25em} (\text{\cref{claim:rhow}})\\
    &\quad = \edwp{L_1^P}{\rot^{-\rho(v)}(Q)}+\sum_{i=2}^{\ell^P}\edwl{L^P_i}{Q} +
        \sum_{L\in \res{\mathcal{L}^T}{\fragmentco{v}{x}}}\edwl{L}{Q}\\
    &\null\hspace{25em} (\text{\cref{claim:rotate,fact:rot}})\\
    &\quad = \edwp{L_1^P}{\rot^{-\rho(v)}(Q)}+\sum_{i=2}^{\ell^P}\edwl{L^P_i}{Q} +
    \sum_{L\in \res{\mathcal{L}^T}{\fragmentco{v}{v+m}}}\edwl{L}{Q}.\\
    &\null\hspace{25em} (\text{\cref{fact:restrlckd}})
    \end{align*}
    This completes the proof of the lemma.
\end{proof}

For a position $v$ and a locked fragment $L \in \mathcal{L}^P \cup \mathcal{L}^T$,
we write $\markf(v,L)$ for the number of marks placed in
$v$ due to pairs of locked fragments that contain $L$; formally:
\begin{align*}
\markf(v,L)= \begin{cases}
    \sum_{L^T \in \Lck^T} \markf(v,\ktotm ,L,L^T) & \text{if }L\in \Lck^P;\\
    \sum_{L^P \in \Lck^P} \markf(v,\ktotm ,L^P,L) & \text{if }L\in \Lck^T.\\
\end{cases}
\end{align*}

\begin{definition}[{\cite[see Definition 6.7]{ckw22}}]
    For a light position $v$ of $T$, set
    \[\mathcal{D}(v) \coloneqq \{L^P_1\} \cup \{L \in \mathcal{L}^P\cup \mathcal{L}^T_{\fragmentco{v}{v+m}} : \markf(v,L) < \edwl{L}{Q}\}.
    \qedhere\]
\end{definition}

Let us now provide some intuition on what follows.
Consider an alignment $\mathcal{A}: P \onto T\fragmentco{v}{w}$, where $v$ is a light position of $T$
and the cost of $\mathcal{A}$ with respect to $w$ is not larger than $k$.
In the next lemma, we essentially lower-bound the cost of the restriction of such an alignment $\A$
to each locked fragment. For instance, we lower-bound $\edw{L}{\A(L)}$ for each $L\in \mathcal{L}^P$.
Our lower bound is positive only for elements of $\mathcal{D}(v)$.
Consider some locked fragment $L \in \mathcal{D}(v) \cap \mathcal{L}^P$ other than $L^P_1$.
Roughly speaking, at most $\markf(v,L)$ errors of $L$ with a fragment $Q'$ of $Q^\infty$
cancel out with errors between $\mathcal{A}(L)$ and $Q'$, yielding a lower bound
$\edw{L}{\A(L)} \ge \edwl{L}{Q} - \markf(v,L)$.
Then, the definition of $\mathcal{D}(v)$ guarantees that,
for any $L \in \mathcal{D}(v) \cap \mathcal{L}^P$, $\A(L)$ is disjoint from all $L' \in \mathcal{D}(v) \cap \mathcal{L}^T$.
One can exploit this property to obtain a lower bound for the cost of~$\A$ by showing that we can
sum over the lower bounds for individual locked fragments in $\mathcal{D}(v)$.
In fact, we use this reasoning to lower bound the cost of an alignment between two other strings,
obtained from $P$ and $T$, respectively, via the deletion of some fragments;
this happens in the proof of \cref{lem:nofakelight} in \cref{sec:redundantsqrtk}.

\begin{lemma}[{\cite[see Lemma 6.8]{ckw22}}]
    \dglabel{lem:lb_local}[fact:rot,lem:klocked]
    Consider a light position $v$ of $T$
    and a locked fragment $L \in \mathcal{L}^P \cup \mathcal{L}^T_{\fragmentco{v}{v+m}}$.
    \begin{enumerate}
        \item If $L = T\fragmentco{v+\ell}{v+r} \in \mathcal{L}^T$, then every $U\in
            \{P\fragmentco{i}{j} : i \geq \ell-k \text{ and } j \leq r+k \}$ satisfies
            \[ \edw{L}{U}\geq \edwl{L}{Q} - \markf(v,L).\]
        \item If $L = P\fragmentco{\ell}{r} \in \mathcal{L}^P\setminus \{L^P_1\}$, then
            every $U\in \{T\fragmentco{i}{j} : i \geq v+\ell-k \text{ and } j \leq v+r+k
            \}$ satisfies
            \[\edw{L}{U} \geq \edwl{L}{Q} - \markf(v,L).\]
        \item If $L$ is $L^P_1$, then for every $U\in  \{ T\fragmentco{v}{v+j} : j \in
            \fragment{|L|-k}{|L|+k}\}$ satisfies
            \[\edw{L}{U}\geq \edwp{L}{\rot^{-\rho(v)}(Q)} - \markf(v,L).\]
    \end{enumerate}
\end{lemma}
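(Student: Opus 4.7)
The plan is to deduce all three inequalities from the metric property of weighted edit distance (\cref{7-16-2}, which requires $w$ to be a metric) combined with the locked-fragment decompositions of $P$ and $T$ provided by \cref{lem:klocked}. In Cases~1 and~2, the triangle inequality immediately yields $\edw{L}{U} \ge \edwl{L}{Q} - \edwl{U}{Q}$: take a substring $Q_U$ of $Q^\infty$ with $\edw{U}{Q_U}=\edwl{U}{Q}$ and chain via $\edwl{L}{Q}\le \edw{L}{Q_U}\le \edw{L}{U}+\edw{U}{Q_U}$. For Case~3 I would instead compose an optimal alignment $L\to U$ with an optimal alignment of $U$ to a suitable prefix of $\rot^{-\rho(v)}(Q)^\infty$, obtaining the prefix version $\edw{L}{U}\ge \edwp{L}{\rot^{-\rho(v)}(Q)}-\edwp{U}{\rot^{-\rho(v)}(Q)}$. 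In either form, the remaining task is to bound the second subtrahend by $\markf(v,L)$.

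Consider Case~1 as the blueprint. Writing $P = L_1^P Q^{\alpha_1} L_2^P\cdots L_{\ell^P}^P$ via \cref{lem:klocked} and restricting an alignment witnessing $\edwl{P}{Q}$ to the positions spanned by $U$ (the $Q^{\alpha_i}$ blocks align to themselves at zero cost) gives $\edwl{U}{Q}\le \sum \edwl{L^P}{Q}$, where the sum ranges over $L^P\in \LP$ that overlap $U$. Every such $L^P = P\fragmentco{\ell_P}{r_P}$ satisfies $r_P > i\ge \ell-k$ and $\ell_P < j\le r+k$, which---under the standard assumption $\ktotm\ge k$---matches the mark-eligibility condition $v\in \fragmentoo{\ell_T-\ktotm-r_P}{r_T+\ktotm-\ell_P}$ of \cref{def:marking}. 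Hence each such pair contributes $\min\{\edwl{L}{Q},\edwl{L^P}{Q}\}$ (or $\edwl{L}{Q}$ if $L^P=L_1^P$) to $\markf(v,L)$. A case split then closes the argument: if every overlapping $L^P$ satisfies $\edwl{L^P}{Q}\le \edwl{L}{Q}$, the mark sums dominate $\edwl{U}{Q}$; otherwise a single large overlap already forces $\markf(v,L)\ge \edwl{L}{Q}$, rendering the right-hand side of the target inequality nonpositive.

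Case~2 is structurally identical after swapping the roles of $P$ and $T$, applied to the locked decomposition of $T$ and the mirror mark-condition $v\in \fragmentoo{\ell_T-\ktotm-r}{r_T+\ktotm-\ell}$ for every $L^T$ overlapping $U$. Case~3 is slightly cleaner: the lightness of $v$ forbids locked fragments of $T$ at positions $v$ and $v+j$, so \cref{fact:rot} directly yields $\edwp{U}{\rot^{-\rho(v)}(Q)}\le \sum \edwl{L^T}{Q}$ over $L^T\in \LT$ overlapping $U$, and because $\ell_P=0$ the marking rule contributes the full $\edwl{L^T}{Q}$ to $\markf(v,L_1^P)$ for each such $L^T$, so no dichotomy is needed. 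The main obstacle throughout is the interplay between the $\min$ in the marking rule and the contributions aggregated in the bound on $\edwl{U}{Q}$; the dichotomy for Cases~1--2---that a single overlap with $\edwl{L^P}{Q}>\edwl{L}{Q}$ either meets the need or renders the target inequality vacuous---is the conceptual crux of the proof.
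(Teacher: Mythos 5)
Your proposal is correct and follows essentially the same route as the paper's proof: (i) route through a near-optimal substring $Q_U$ of $Q^\infty$ for $U$ (or, in Case~3, a prefix of $\rot^{-\rho(v)}(Q)^\infty$ chosen via lightness and \cref{fact:rot}) and apply the triangle inequality for $\edw{\cdot}{\cdot}$, which is valid because $w$ is a metric (\cref{7-16-2}); then (ii) bound $\edwl{U}{Q}$ from above by $\sum_K \edwl{K}{Q}$ over the locked fragments $K$ of the opposite string that overlap $U$, using the locked decomposition from \cref{lem:klocked}; finally (iii) compare that sum with $\markf(v,L)$. Where you add genuine value is in step (iii): the paper's inequality $\markf(v,L) \geq \sum_K \edwl{K}{Q}$ (its eq.~\eqref{eq:dom}) is, as stated, not true unconditionally---it can fail when some overlapping $K$ has $\edwl{K}{Q} > \edwl{L}{Q}$---but exactly in that scenario the single pair $(K,L)$ already contributes $\edwl{L}{Q}$ (the $\min$, or the full $\edwl{L}{Q}$ if $K=L_1^P$), pushing $\markf(v,L)\ge \edwl{L}{Q}$ and making the target inequality vacuous since $\edw{L}{U}\ge 0$. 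Your dichotomy makes this repair explicit where the paper leaves it implicit. You also correctly observe that Case~3 needs no dichotomy because the $\ell_P=0$ branch of \cref{def:marking} contributes the full $\edwl{L^T}{Q}$ without a minimum. The only cosmetic difference is that the paper unifies all three cases by a single choice of $Q^\infty\fragmentco{x}{y}$ (setting $x=\rho(v)$, $y=\rho(u_2)$ precisely when $L=L_1^P$), while you split the triangle-inequality step into a substring version and a prefix version; both are sound.
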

\begin{proof}
    Consider some $L \in \mathcal{L}^P\cup \mathcal{L}^T_{\fragmentco{v}{v+m}}$.
    If $L\in \mathcal{L}^T$, let $Y=P$; otherwise, let $Y=T$.
    Further, let $U \coloneqq Y\fragmentco{u_1}{u_2}$ denote
    any of the fragments specified in the statement of the lemma, and let
    \begin{align*}
        \zeta \coloneqq
        \begin{cases}
            \edwp{L_1^P}{\rot^{-\rho(v)}(Q)} & \text{if } L = L^P_1,\\
            \edwl{L}{Q} & \text{otherwise.}
        \end{cases}
    \end{align*}

    Let $x$ and $y$ denote integers that satisfy
    $\edw{U}{Q^\infty\fragmentco{x}{y}}=\edwl{U}{Q}$.
    In the case where $L=L^P_1$,
    we choose $x=\rho(v)$ and $y=\rho(u_2)$ so that $Q^\infty\fragmentco{x}{y}$ is a
    prefix of $\rot^{-\rho(v)}(Q)^\infty$.
    This is allowed by~\cref{fact:rot} because
    $\res{\mathcal{L}^T}{\fragmentco{v-\ktotm}{v+\ktotm}}=\res{\mathcal{L}^T}{\fragmentco{v+|L_1^P|-\ktotm}{v+|L_1^P|+\ktotm}}=\emptyset$,
    and it ensures that the following inequality holds in all three cases (the inequality
    holds trivially if $L\ne L^P_1$):
    \begin{equation}\label{eq:rot}
        \edw{L}{Q^\infty\fragmentco{x}{y}} \geq \zeta.
    \end{equation}
    Further, our marking scheme
    implies
    \begin{equation}\label{eq:dom}
        \markf(v,L) \geq \sum_{K \in \mathcal{L}^Y_{\fragmentco{u_1}{u_2}}} \edwl{K}{Q}.
    \end{equation}
    Let $Y\fragmentco{u_1'}{u_2'}$ be the fragment that is covered by
    $Y\fragmentco{u}{w}$ and the elements of $\mathcal{L}^Y_{\fragmentco{u_1}{u_2}}$.
    We have
    \begin{equation}\label{eq:extlock}
        \edw{U}{Q^\infty\fragmentco{x}{y}} = \edwl{U}{Q} \leq
        \edwl{Y\fragmentco{u_1'}{u_2'}}{Q} =\sum_{K \in
        \mathcal{L}^Y_{\fragmentco{u_1}{u_2}}} \edwl{K}{Q},
    \end{equation}
    where the last equality follows from the properties of locked fragments as computed by
    \cref{lem:klocked}.

    We are now ready to prove the claimed inequality.
    \begin{align*}
        \edw{L}{U} & \geq |\edw{L}{
        Q^\infty\fragmentco{x}{y}}-\edw{U}{Q^\infty\fragmentco{x}{y}}|
                   & \text{(symmetry, triangle inequality)}\\
                 &  \geq    \edw{L}{Q^\infty\fragmentco{x}{y}}-\edw{U}{Q^\infty\fragmentco{x}{y}}\\
                 &  \geq  \zeta -\edw{U}{Q^\infty\fragmentco{x}{y}}
                 &\text{(due to~\eqref{eq:rot})}\\
                 &  \geq  \zeta - \sum_{K \in \mathcal{L}^Y_{\fragmentco{i}{j}}}
                 \edwl{K}{Q}
                 &\text{(due to~\eqref{eq:extlock})}\\
                 &  \geq  \zeta - \markf(v,L).
                 &\text{(due to \eqref{eq:dom})}
    \end{align*}
    This completes the proof of the lemma.
\end{proof}

\paragraph*{Shrinking Runs of Plain Pairs}\label{sec:redundantsqrtk}

We set $\textsf{Red}(P)$ to be equal to
\[\textsf{Special}(P)\cup \{P_i : i\in\fragmentoo{1}{z} \text{ and }
\exists_{P\fragmentco{\ell}{r} \in \mathcal{L}^P}
\fragmentco{\ell-\tbd\ktotm}{r+\tbd\ktotm} \cap \fragmentco{p_i}{p_{i+1}+\Delta} \neq
\emptyset\}.\]
Similarly, we set $\textsf{Red}(T)$ to be equal to
\[\textsf{Special}(T)\cup \{T_i : i\in\fragmentoo{\min J + 1}{\max J + z} \text{ and }
\exists_{T\fragmentco{\ell}{r} \in \mathcal{L}^T}
\fragmentco{\ell-\tbd\ktotm}{r+\tbd\ktotm} \cap \fragmentco{t_i}{t_{i+1}+\Delta} \neq
\emptyset\}.\]
Next, we upper-bound the sizes of these two sets and show how to construct them
efficiently.

\begin{lemma}[{\cite[see Lemma 6.9]{ckw22}}]
    \dglabel{lem:fewreds}[lem:specialbound-simpl,lem:klocked,fct:rj]
    Given $P$, $T$, $\A_P$, $\A_T$, $\mathcal{L}^P$, and $\mathcal{L}^T$,
    the sets $\textsf{Red}(P)$ and $\textsf{Red}(T)$ are of size $\cO(kW^2)$ and can be
    constructed in time $\cOtilde(kW^2)$ in the \pillar model.
\end{lemma}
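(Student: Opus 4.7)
}

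The plan is to bound $|\textsf{Red}(P)|$ (and symmetrically $|\textsf{Red}(T)|$) by splitting it into its two defining components. By \cref{lem:specialbound-simpl}, the set $\textsf{Special}(P)$ has size $\cO(k)$ and is computable in $\cO(k)$ time in the \pillar model; the same holds for $\textsf{Special}(T)$. Hence it suffices to bound the number of internal tiles $P_i$ whose span $\fragmentco{p_i}{p_{i+1}+\Delta}$ meets the $\tbd\ktotm$-neighborhood of some locked fragment in $\mathcal{L}^P$, and analogously for $T$.

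The main ingredient is \cref{lem:klocked}, which guarantees that the locked fragments returned by \texttt{Locked} are disjoint and satisfy $\sum_{L\in \mathcal{L}^P}|L|\le (5|Q|+1)\,d^w_P+2(\lpref+1)|Q|$. Together with \cref{cor:klocked}, which says that $d^w_P,d^w_T=\cO(kW)$, and recalling that $\lpref=2kW$ and $\ktotm=\cO(kW)$, this gives $\sum_{L\in \mathcal{L}^P}|L|=\cO(kW\cdot|Q|)$. Moreover, since every element of $\mathcal{L}^P$ (except possibly the first and last) contributes at least one unit to $\edwl{P}{Q}$, we have $|\mathcal{L}^P|=\cO(d^w_P+1)=\cO(kW)$; the same bounds hold for $\mathcal{L}^T$. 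A single locked fragment $L\in \mathcal{L}^P$ together with its $\tbd\ktotm$-neighborhood has length $|L|+2\tbd\ktotm$, and therefore meets at most $\cO(1+(|L|+\ktotm)/\tau)$ tiles of the $\tau$-tile partition of $P$. Summing and using $\tau=\Omega(\max(q,k))\ge|Q|$ and $\tau=\Omega(k)$ yields
\[
|\textsf{Red}(P)|\le |\textsf{Special}(P)|+\sum_{L\in \mathcal{L}^P}\cO\!\left(1+\tfrac{|L|+\ktotm}{\tau}\right)=\cO\!\left(kW+\tfrac{kW\cdot|Q|}{\tau}+\tfrac{kW\cdot kW}{\tau}\right)=\cO(kW^2).
\]
The bound for $|\textsf{Red}(T)|$ follows the same way, using that the $\tau$-tile partition of $T$ is generated from $\A_T$ exactly as the one of $P$ is from $\A_P$, together with \cref{fct:rj}, which also ensures that the relevant tile endpoints of $T$ can be enumerated in $\cO(d_T+1)=\cO(kW)$ time from $\A_T$.

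For the construction, I would first read off $\textsf{Special}(P)$ and $\textsf{Special}(T)$ in $\cO(k)$ time via \cref{lem:specialbound-simpl}. For the locked part, observe that both $\mathcal{L}^P$ and $\mathcal{L}^T$ are already given as inputs, sorted in text order, and of size $\cO(kW)$. To enumerate the affected tiles in $P$, I would run a two-pointer sweep that marches simultaneously through $\mathcal{L}^P$ (left-to-right) and through the endpoints $p_i$ of the $\tau$-tile partition. The tile endpoints of $P$ do not need to be listed in full: by construction of the partition, outside the $\cO(d_P)$ breakpoints of $\A_P$ they form an arithmetic progression with difference $\tau$, so for each extended interval $\fragmentco{\ell-\tbd\ktotm}{r+\tbd\ktotm}$ the set of indices $i$ it intersects is a union of $\cO(d_P)$ intervals of integers, each computable in $\cO(1)$ time. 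Emitting these indices and then deduplicating (e.g., by bucket sort, since indices are integers bounded by $\cO(n/\tau)=\cO(n/q)$) yields the claimed complexity. Doing the same on $T$ using $\A_T$ gives $\textsf{Red}(T)$ in the stated time.

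The only subtle point I expect is the careful bookkeeping required to see that the total work of the sweep is dominated by the output size plus $\cO(|\mathcal{L}^P|+|\mathcal{L}^T|)=\cO(kW)$, so that the whole construction fits in $\cOtilde(kW^2)$ time in the \pillar model; the rest of the argument is a direct accounting using the length bound from \cref{lem:klocked} and $\tau=\Omega(\max(q,k))$.
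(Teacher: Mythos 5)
Your proposal follows essentially the same strategy as the paper: bound $|\textsf{Special}(\cdot)|$ via \cref{lem:specialbound-simpl}, bound the number of internal pieces near locked fragments via the length guarantee of \cref{lem:klocked}, and construct the sets by a linear sweep against the arithmetic-progression representation of tile endpoints. The final bound and complexity come out the same.

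One small imprecision is worth fixing. You claim that the $\tbd\ktotm$-neighborhood of a locked fragment $L$, having length $|L|+2\tbd\ktotm$, meets at most $\cO(1+(|L|+\ktotm)/\tau)$ tiles. This would be immediate if every internal tile had length $\Omega(\tau)$, but a tile $S\fragmentco{s_{i-1}}{s_i}$ is obtained as $\A_S^{-1}$ of a length-$\tau$ block of $Q^\infty$, so deletions inside that block can make the tile shorter than $\tau$. The per-fragment count therefore picks up an extra additive term proportional to the number of tiles that the alignment has shortened inside the neighborhood. The paper sidesteps this by arguing globally: letting $\Phi$ be the set of tiles fully covered by the union of extended locked fragments, it uses $\sum_{i\in\Phi}(\tau-|S\fragmentco{s_{i-1}}{s_i}|)\le \edl{S}{Q}$ to convert the $\tau\cdot|\Phi|$ lower bound on total covered length into a bound on $|\Phi|$, and then adds $2\ell^S$ for partially covered tiles. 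In your per-fragment accounting the fix is the same: the aggregate deficit over all tiles is $\cO(d_P)=\cO(k)$, contributing only $\cO(k\cdot kW/\tau)=\cO(kW)$ extra tiles across all locked fragments once $\tau=\Omega(k)$, so your final $\cO(kW^2)$ bound still holds, but you should state that you are charging the deficit globally rather than assuming each tile has length $\Omega(\tau)$.

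The construction part of your argument matches the paper's (sweep over locked fragments against the $\cO(k)$ arithmetic progressions of tile endpoints from \cref{fct:rj}). The one assumption you make — that $\mathcal{L}^P,\mathcal{L}^T$ are already in text order — is indeed justified by the structure guaranteed in \cref{lem:klocked}; the paper performs an explicit $\cOtilde(kW^2)$-time sort, which is harmless but not strictly necessary given that structure. Either way the stated time bound holds.
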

\begin{proof}
    We start with upper-bounding the size of each of the sets $\textsf{Red}(P)$ and
    $\textsf{Red}(T)$.
    $\textsf{Special}(P)$ and $\textsf{Special}(T)$ are of size $\cO(d)=\Oh(k)$ due to
    \cref{lem:specialbound-simpl}.
    The task is therefore to bound, for each of $P$ and~$T$, the number of internal pieces
    that are within $\tbd \ktotm=\Oh(kW)$ positions of a locked fragment.

    Let $(S,\beta_S)$ denote either of $(P,\beta_P)$ or $(T,\beta_T)$ and
    $\mathcal{L}^S=\{L\fragmentco{\ell_j}{h_j} : j \in \fragment{1}{\ell^S}\}$.
    It suffices to upper bound the number of tiles $S\fragmentco{s_{i-1}}{s_i}$ with $i
    \in \fragmentoo{1}{\beta_S}$
    in the $\tau$-tile partition of~$S$ (with respect to $\A_S$) such that
    $\fragmentco{s_{i-1}}{s_i}$ overlaps $\fragmentco{\ell_j-\Delta - \tbd \ktotm}{h_j+
    \tbd \ktotm}$ for some $j\in \fragment{1}{\ell^S}$;
    that is,
    \[|\{i \in \fragmentoo{1}{\beta_S} : \exists_{j \in \fragment{1}{\ell^S}}
    \fragmentco{\ell_j-\Delta -\tbd \ktotm}{h_j + \tbd \ktotm} \cap
\fragmentco{s_{i-1}}{s_{i}} \neq \emptyset\}|.\]

    Intuitively, we extend each locked fragment in $\mathcal{L}^S$ by $\Delta + \tbd
    \ktotm$ characters to the left and by $\tbd \ktotm$ characters to the right,
    thus covering at most $\|\mathcal{L}^S\|+ \ell^S \cdot
    (2\cdot\tbd+6)\ktotm=\|\mathcal{L}^S\|+ \tbc\ell^S \ktotm$ positions of $S$, since
    $\Delta=6 \ktot \leq 6 \ktotm$.
    Let us first upper bound the size $\phi$ of the set $\Phi$ of tiles (other than the
    first and last ones) that are fully covered by these ``extended'' locked fragments,
    that is, $\Phi \coloneqq \{i \in \fragmentoo{0}{\beta_S} : \fragmentco{s_{i-1}}{s_{i}}
    \subseteq \bigcup_{j  \in \fragment{1}{\ell^S}} \fragmentco{\ell_j-\Delta-\tbd
\ktotm}{h_j+\tbd \ktotm}\}$.
    We have
    \[\sum_{i \in \Phi} (\tau - |S\fragmentco{s_{i-1}}{s_i}|) \leq \sum_{i \in
    \Phi}\edu{S\fragmentco{s_{i-1}}{s_i}}{Q\fragmentco{(i-1)\tau}{i\tau}} \leq
\edl{S}{Q}\]
    and hence
    \[\phi \cdot \tau - \edl{S}{Q} \leq \sum_{i \in \Phi} |S\fragmentco{s_{i-1}}{s_i}|
    \leq \|\mathcal{L}^S\|+\tbc\ell^S \ktotm,\]
    which is equivalent to
    \[\phi \leq \frac{\|\mathcal{L}^S\|+\tbc\ell^S\ktotm +\edl{S}{Q}}{\tau}.\]
    Finally, we have to account for the at most $2\ell^S$ tiles that overlap ``extended''
    locked fragments but are not fully contained in them; we have at most two such tiles
    for each $L \in \mathcal{L}^S$.

    Since $\ell^S = \cO(\edwl{S}{Q})$ and $\|\mathcal{L}^S\|=\cO(\edwl{S}{Q} \cdot q)$
    by~\cref{lem:klocked},
    $\edwl{S}{Q} = \cO(kW)$, and $\tau = \Theta(\max\{k,q\})$,
    we have
    \[2\ell^S + \frac{\|\mathcal{L}^S\|+\tbc \ell^S\ktotm + \edl{S}{Q}}{\tau}  =
        \cO\left(kW + \frac{kWq + (kW)^2 + k}{\tau}\right)
                                                                            = \cO(kW^2).\]

    Let us now show how to efficiently construct the sets in scope.
    First, recall that $\textsf{Special}(P)$ and $\textsf{Special}(T)$ can be constructed
    in $\cO(kW^2)$ time in the \pillar model due to \cref{lem:specialbound-simpl}.
    The remaining elements of $\rred{P}$ (resp.~$\rred{T}$) can be computed in a
    simultaneous left-to-right scan of:
    \begin{itemize}
        \item the representation of starting positions of internal pieces $p_i$
            (resp.~$t_i$) as $\cO(k)$ arithmetic progressions, which can be computed in
            $\cO(k)$ time due 				to \cref{fct:rj};
        \item the $\cO(kW^2)$ locked fragments of $P$ (resp.~$T$) sorted with respect to their starting positions.
    \end{itemize}
    The $\Ohtilde(kW^2)$ time required for sorting the starting positions of the locked
    fragments is the bottleneck of the algorithm in the \pillar model.
\end{proof}

\begin{definition}[{\cite[see Definition 6.10]{ckw22}}]\dglabel{def:fjgj}
    For $j \in J$, set $\I'_j
    \coloneqq (F_{j,1},G_{j,1})\cdots(F_{j,z_j}, G_{j,z_j})
    \coloneqq \Comp(\I_j, \rred{P}, \rred{T}, 2\tradeoff + \slack)$,
    and \(F_j \coloneqq \val_{\Delta}( F_{j,1},\ldots, F_{j,z_j} )\) and
    \(G_j \coloneqq \val_{\Delta}( G_{j,1},\ldots, G_{j,z_j} )\).

    Further, let $F_{j,i}$ and $G_{j,i}$ correspond to the fragments
    $F_j\fragmentco{f_{j,i}}{f_{j,i+1}+\Delta}$ and $G_j\fragmentco{g_{j,i}}{g_{j,i+1}+\Delta}$,
    respectively, that is, we set $f_{j,1}=g_{j,1}=0$ and, for $i\in \fragment{2}{z_j+1}$,
    $f_{j,i}=\left(\sum_{x<i} |F_{j,x}| \right)-\Delta$ and $g_{j,i}=\left(\sum_{x<i}
    |G_{j,x}| \right)-\Delta$.
\end{definition}

The focus of the remainder of~\cref{sec:redundantsqrtk} is to prove the following lemma.

\begin{restatable*}[{\cite[see Lemma 6.11]{ckw22}}]{lemma}{correctlight}
    \dglabel{lem:correctlight}[cor:supset,lem:nofakelight]
    \[\OccW_k(P,T) \cap \light
    = \big(\bigcup_{j \in J} (r_j +  \OccW_k(\I'_j))\big) \cap \light
    =
    \big(\bigcup_{j \in J} (r_j +  \OccW_k(F_j,G_j))\big) \cap \light.\qedhere\]
\end{restatable*}

The combination of \cref{fct:rj} and \cref{lem:redundantk} directly yields the following.

\begin{corollary}\dglabel{cor:supset}[fct:rj,lem:redundantk]
    We have $\big(\bigcup_{j \in J} (r_j +  \OccW_k(F_j,G_j))\big) \supseteq \big(\bigcup_{j
    \in J} (r_j +  \OccW_k(P,R_j))\big)=\OccW_k(P,T)$.
    \lipicsEnd
\end{corollary}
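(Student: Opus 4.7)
The plan is to decompose the claimed chain of (in)equalities into three elementary steps, each matching an existing result. First, I would establish the right-hand equality $\bigcup_{j\in J}(r_j+\OccW_k(P,R_j)) = \OccW_k(P,T)$. The unweighted analog is stated in \cref{fct:rj}; its proof uses only that every approximate occurrence has length in $\fragment{m-k}{m+k}$ and hence sits inside at least one $R_j$. Since $w$ is normalized, every $(k,w)$-error occurrence satisfies the same length bound (via \cref{fact:simple}), and the argument of \cref{fct:rj} lifts verbatim to the weighted setting, so this first step is essentially immediate.

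Second, I would rewrite the inner occurrence sets in terms of the BCDPM-sequences. By \cref{fct:puzzleP,fct:puzzleT}, $P = \val_{\Delta}(P_1,\ldots,P_z)$ and $R_j = \val_{\Delta}(T_{j,1},\ldots,T_{j,z})$, so $\OccW_k(P,R_j) = \OccW_k(\I_j)$ directly from \cref{def:seq-i}; and by \cref{def:fjgj}, $\OccW_k(F_j,G_j) = \OccW_k(\I'_j)$. At this point the statement reduces to proving $\OccW_k(\I_j) \subseteq \OccW_k(\I'_j)$ for each $j \in J$.

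The remaining, and only non-trivial, step is this inclusion. By \cref{def:fjgj}, $\I'_j = \Comp(\I_j, \rred{P}, \rred{T}, 2\tradeoff+\slack)$ is produced from $\I_j$ by iteratively applying $\Comps$, each application removing one plain DPM-pair from an internal run of plain pairs of length at least $2\tradeoff+\slack+1$. Under the paper's convention (where $\tradeoff = \Theta(kW)$ and $\slack = 30$) this exceeds $k+2$, so each removal is an instance of the $\I \to \I^-$ operation of \cref{lem:redundantk} applied with $\hat{Q} = \Q\fragmentco{0}{\tau+\Delta}$. The torsion precondition $\tor(\I) \leq \Delta/2-k = 3\kappa-k$ required by \cref{lem:redundantk} is supplied for $\I_j$ by \cref{fact:lengths_diff}, and it is trivially preserved under trimming, since removing a pair $(\hat{Q},\hat{Q})$ leaves all terms $\big||U_i|-|V_i|\big|$ unchanged. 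Iterating \cref{lem:redundantk} finitely many times then gives $\OccW_k(\I_j) \subseteq \OccW_k(\I'_j)$, completing the corollary. The only mild obstacle is the routine bookkeeping needed to verify that the threshold $2\tradeoff+\slack+1 \geq k+2$ is valid throughout the iteration; no structural difficulty arises.
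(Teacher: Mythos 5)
Your approach is the same as the paper's: get the right-hand equality from \cref{fct:rj} (together with \cref{fact:simple}), rewrite via \cref{fct:puzzleP,fct:puzzleT,def:fjgj}, and iterate the $\I\to\I^-$ direction of \cref{lem:redundantk} along the trimming operation $\Comp$, using \cref{fact:lengths_diff} for the torsion bound. That skeleton is fine.

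However, the step you dismiss as ``routine bookkeeping'' is precisely where your argument fails. You assert that $\tradeoff = \Theta(kW)$, but the paper sets $\tradeoff \coloneqq \max\{1,\lfloor\sqrt{k}W\rfloor\} = \Theta(\sqrt{k}\,W)$ in \cref{lem:solve_int_SM} (you appear to have confused it with $\ktotm = \Theta(kW)$), and moreover $W\le\sqrt{k}$ is assumed there. Hence $2\tradeoff+\slack = 2\max\{1,\lfloor\sqrt{k}W\rfloor\}+30$ can be far below $k+2$ — for instance $\Theta(\sqrt{k})$ when $W=1$ and $k$ is large. This is not a cosmetic slip: \cref{lem:redundantk} only licenses removing a plain pair from a run that \emph{currently} contains at least $k+2$ plain pairs (the pigeonhole in its proof, ``at least two of the first components of the (at least $k+2$) DPM-pairs of $\X$ get aligned without any edits'', is what needs $k'\ge k+1$), so iterating it one can only shrink a run down to length $k+1$. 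But $\Comp(\I_j,\rred{P},\rred{T},2\tradeoff+\slack)$ shrinks runs down to length $2\tradeoff+\slack$, and if that value is below $k+1$, the final removals are outside the scope of \cref{lem:redundantk}. As written, your reduction of each $\Comps$ step to an invocation of \cref{lem:redundantk} therefore does not go through; you would need either an additional argument covering those last removals or a restriction on the parameter regime to complete the proof.
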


See \cite[Example 6.13]{ckw22} an example that illustrates that,
for some $j\in J$ and $p\in \OccE_k(F_j,G_j)$,
we might have $r_j + p \not\in \OccE_k(P,T)$ if $r_j+p\in \Hv$.

It remains to show that, for any light position $p \in \big(\bigcup_{j \in J} (r_j +
\OccW_k(F_j,G_j))\big)$,
we have $p \in \OccW_k(P,T)$.
The following lemma demonstrates that, for each $j \in J$, we can restrict our attention
to a subset of the positions of $R_j$.

\begin{lemma}[{\cite[see Lemma 6.14]{ckw22}}]
    \dglabel{lem:focus3k}[fact:lengths_diff,lem:lb_local]
    Consider some $j \in J$ and a position $p$ of $R_j$ such that $r_j+p$ is a light position of $T$.
    If $\edwp{L_1^P}{\rot^{-\rho(r_j+p)}(Q)} \geq \lpref$, then $p \not\in \OccW_k(F_j,G_j)$.
\end{lemma}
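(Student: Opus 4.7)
The plan is to argue by contradiction. Assume that $p \in \OccW_k(F_j, G_j)$; then there is a $w$-optimal alignment $\A : F_j \onto G_j\fragmentco{p}{x}$ of cost at most $k$, and I will derive a lower bound on that cost which exceeds $k$.

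My first task is to exhibit $L_1^P$ as the length-$|L_1^P|$ prefix of $F_j$. By the definition of $\rred{P}$, every internal piece $P_i$ that intersects the inflated window $\fragmentco{-\tbd\ktotm}{|L_1^P|+\tbd\ktotm}$ is red, so every pair of $\I_j$ covering this window has its P-side different from the canonical piece $\mathcal{Q}$ and is consequently not a plain pair. The trimming operator $\Comp$ leaves these pairs intact, so the first $|L_1^P|$ characters of $F_j$ coincide with the first $|L_1^P|$ characters of $P$, i.e., with $L_1^P$.

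Second, I would argue that the segment of $G_j$ lying to the right of position $p$, of length at most $|L_1^P| + k$, is an unaltered fragment of $T$ anchored at $r_j + p$. The two conditions $\res{\mathcal{L}^T}{\fragmentco{r_j+p-\ktotm}{r_j+p+\ktotm}} = \emptyset$ and $\res{\mathcal{L}^T}{\fragmentco{r_j+p+|L_1^P|-\ktotm}{r_j+p+|L_1^P|+\ktotm}} = \emptyset$, built into lightness, prevent any locked fragment of $T$ from being nearby; together with the fact that the corresponding P-pieces are all red, this rules out any plain pair in the stretch of $\I'_j$ that determines the characters of $G_j\fragmentco{p}{p+|L_1^P|+k}$, so no trimming has altered this stretch of $G_j$ relative to $R_j$ (equivalently, relative to $T$, shifted by $r_j$).

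With these two observations in hand, \cref{fct:ali} lets me restrict $\A$ to the prefix $L_1^P$ of $F_j$: the restriction is an alignment $L_1^P \onto U$ of weighted cost at most $k$, where $U$ is a fragment of $T$ starting at $r_j + p$ whose length lies in $\fragment{|L_1^P|-k}{|L_1^P|+k}$ (the bound on $|U|$ follows from the fact that $\A$ performs at most $k$ insertions/deletions). Now I plug this into the third case of \cref{lem:lb_local} with $v \coloneqq r_j + p$ and $L \coloneqq L_1^P$, obtaining
\[
\edw{L_1^P}{U} \ \ge\ \edwp{L_1^P}{\rot^{-\rho(r_j+p)}(Q)} \ -\ \markf(r_j+p, L_1^P).
\]
The hypothesis of the lemma gives $\edwp{L_1^P}{\rot^{-\rho(r_j+p)}(Q)} \ge \lpref$, and lightness at $r_j + p$ gives $\markf(r_j+p, L_1^P) \le \markf(r_j + p, \ktotm, \mathcal{L}^P, \mathcal{L}^T) < \tradeoff$. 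Chaining everything,
\[
k \ \ge\ \edw{L_1^P}{U} \ \ge\ \lpref - \tradeoff,
\]
which is absurd as soon as $\lpref > k + \tradeoff$; with $\lpref = 2kW$ and $\tradeoff$ chosen well below $2kW - k$, this yields the desired contradiction and proves $p \notin \OccW_k(F_j, G_j)$.

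The main obstacle is the second step: pinning down that the relevant slice of $G_j$ equals a fragment of $T$ anchored exactly at $r_j + p$. On the P-side the claim follows almost immediately from the red-piece definition, but on the T-side one must additionally use the light hypothesis (specifically the absence of locked T-fragments near $r_j + p$ and near $r_j + p + |L_1^P|$) to rule out plain pairs, and hence any trimming, in the stretch of $\I'_j$ underlying this part of $G_j$. Once this correspondence between positions of $G_j$ and positions of $T$ is rigorously established, the remainder of the proof is a direct application of \cref{fct:ali} and the third case of \cref{lem:lb_local} as described above.
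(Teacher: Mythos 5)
Your overall strategy mirrors the paper's: pin down $L^P_1$ as the prefix of $F_j$, identify the image of that prefix under the alignment as a fragment of $T$ starting at $r_j+p$, and apply item 3 of \cref{lem:lb_local} together with the lightness bound $\markf(r_j+p,L^P_1)<\tradeoff$. Two points need fixing.

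First, you never bound $p$, yet your Step 2 silently requires it. The unaltered prefix of $G_j$ has length about $|L^P_1| + \tbd\ktotm - 3\ktot$, so for the slice $G_j\fragmentco{p}{p+|L^P_1|+k}$ to lie inside it you need $p$ small. The paper supplies this at the outset: from $|G_j|-|F_j|=|R_j|-|P|$ and \cref{fact:lengths_diff}, one has $\OccW_k(F_j,G_j) \subseteq \fragment{0}{|G_j|-|F_j|+k} \subseteq \fragment{0}{3\ktot}$, and this bound must be established before your Step 2 can proceed.

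Second, your T-side justification in Step 2 (echoed in your final paragraph) is backwards. The lightness conditions say that no locked $T$-fragment is near $r_j+p$ or near $r_j+p+|L^P_1|$, which means the corresponding $T$-pieces are \emph{not} in $\rred{T}$; if anything this would permit plain pairs, not forbid them. What actually rules out trimming in this stretch is the $P$-side redness alone: the $P$-pieces overlapping the $\tbd\ktotm$-extension of $L^P_1$ all lie in $\rred{P}$, and a pair is plain only if both of its components are non-red, so every such pair is non-plain and hence untrimmed. Lightness of $r_j+p$ is still needed, but only because \cref{lem:lb_local} is stated for light positions and because lightness gives $\markf(r_j+p,L^P_1)\le\markf(r_j+p)<\tradeoff$; it plays no role in ruling out trimming. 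Modulo these two corrections, your argument matches the paper's, which uses the optimal companion $W$ of $L^P_1$ instead of restricting a fixed alignment and then splits cases on $|W|$---an inessential stylistic difference.
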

\begin{proof}
    By the definition of $\rred{T}$, it follows that $|F_j| \geq |L^P_1|$ and
    any prefix of $F_j$ of length at most $|L^P_1|+\tbd \ktotm$ is also a prefix of $P$.
    Combined with the upper bound on the sum of length-differences from \cref{fact:lengths_diff},
    this also implies that any prefix of $G_j$ of length at most $|L^P_1| + \tbd \ktotm - 3\ktot$
    is also a prefix of $R_j$.
    Further, recall that we have $|R_j|\leq |P| + 3\ktot - k$ due to \cref{fact:lengths_diff}.
    Consequently, since $|R_j|-|F_j| = |P| - |G_j|$, we have
    \[
        \OccW_k(F_j,G_j) \subseteq \fragment{0}{|G_j|-|F_j|+k} \subseteq \fragment{0}{3\ktot}.
    \]

    It thus suffices to consider $p \in \fragment{0}{3\ktot}$.
    Let $W$ denote a prefix of $G_j\fragmentco{p}{|G_j|}$ that satisfies
    \[
        \edw{F_j\fragmentco{0}{|L^P_1|}}{W} = \min_t \edw{F_j\fragmentco{0}{|L^P_1|}}{G_j\fragmentco{p}{t}}.
    \]
    We distinguish between two cases.
    \begin{itemize}
        \item If $|W| \in \fragment{|L^P_1|-k}{|L^P_1|+k}$, then
            $W$ is a prefix of $T\fragmentco{r_j+p}{n}$ since
            \[p+|W| \leq 3\ktot+|L^P_1|+k \leq |L^P_1| + \tbd \ktotm - 3\ktot.\]
            Thus, by~\cref{lem:lb_local} and since $\lpref = \Theta(kW)$, we have
            \begin{align*}
                \edw{L^P_1}{W} &\geq \edwp{L_1^P}{\rot^{-\rho(r_j+p)}(Q)} -
                \markf(r_j+p,L^P_1)\\
                               &\geq \lpref - \markf(r_j+p,L^P_1)\\
                               &= 2kW - \tradeoff > kW  \ge k.
            \end{align*}
        \item Otherwise, we have $\edwp{F_j\fragmentco{0}{|L^P_1|}}{W}\geq \big||W|-|L^P_1|\big| > k$.
    \end{itemize}

    To conclude the proof, it suffices to observe that \[
        \min_{t} \edw{F_j}{G_j\fragmentco{p}{t}} \geq \min_{t} \edw{L^P_1}{G_j\fragmentco{p}{t}} > k,
    \]
    and hence $p \not\in \OccW_k(F_j, G_j)$.
\end{proof}

In what follows, for convenience, we assume that, for each run of plain pairs of $\I_j$
that has been trimmed, the deleted pairs correspond to a suffix of this run.
This yields a natural mapping from pairs
of $\I'_j$ to pairs of $\I_j$.

\begin{definition}[{\cite[see Definition 6.15]{ckw22}}]
    For each $j\in J$ and each $i \in \fragment{1}{z_j}$, let $\orig(j,i)$ denote the
    number of pairs to the left of pair $(F_{j,i}, G_{j,i})$
    that were  deleted in the process of obtaining $\I'_j$ from $\I_j$.
    We say that pair $(F_{j,i}, G_{j,i})$ \emph{originates} from pair $(P_{i+\orig(j,i)}, T_{j,i+\orig(j,i)})$.
\end{definition}

Let each internal pair of pieces $(F_{j,i}, G_{j,i})$ inherit the color of
$(P_{i+\orig(j,i)}, T_{j, i+\orig(j,i)})$. In addition, mark the first and the last pairs
as not plain.

\begin{definition}[{\cite[see Definition 6.16]{ckw22}}]
    For $j\in J$ and $i\in \fragment{1}{z_j}$, we say that $F_j\fragmentco{x_1}{x_2}$
    (or~$G_j\fragmentco{x_1}{x_2}$) has an \emph{overlap} with a pair $(F_{j,i},G_{j,i})$
    if and only if
    $\fragmentco{x_1}{x_2} \cap \fragmentco{f_{j,i}}{f_{j,i+1}+\Delta} \neq \emptyset$
    (or ~$\fragmentco{x_1}{x_2} \cap \fragmentco{g_{j,i}}{g_{j,i+1}+\Delta} \neq \emptyset$).
\end{definition}

\begin{definition}[{\cite[see Definition 6.17]{ckw22}}]
    Consider some $j \in J$ and a contiguous sequence
    $\mathcal{M}=(F_{j,i_1},G_{j,i_1})\cdots (F_{j,i_2},G_{j,i_2})$ of pairs in $\I'_j$
    that are either all plain or all not plain (that is, \(\mathcal{M}\) is
    \emph{monochromatic}).
    Fix an $X \in \{F_j,G_j\}$.
    For $i \in \fragment{1}{z_j}$, if $X=F$, we set
    $x_{j,i}=f_{j,i}$; otherwise, we set
    $x_{j,i}=g_{j,i}$.

    For a non-negative integer $\gamma$, we say that a fragment
    $X\fragmentco{x_1}{x_2}$ of $X \in \{F_j,G_j\}$ is
    $\gamma$-\emph{contained} by $\mathcal{M}$ when
    the following two conditions are satisfied: (a)
    $x_1 \geq x_{j,i_1} + \gamma$ or $i_1 = 1$
    and (b)
    $x_2 \leq x_{j,i_2+1}+\Delta - \gamma$ or $i_2 = z_j$.
\end{definition}

\begin{fact}[{\cite[see Fact 6.18]{ckw22}}]\dglabel{obs:Dcont}
    If a fragment $U$ of $F$ or $G$ is $\Delta$-contained by a monochromatic
    sequence~$\mathcal{M}$ of contiguous pairs in $\I'_j$,
    then $U$ only overlaps pairs in $\mathcal{M}$.
\end{fact}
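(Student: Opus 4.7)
The plan is to prove the contrapositive: if $U = X\fragmentco{x_1}{x_2}$ (with $X \in \{F_j, G_j\}$) overlaps a pair $(F_{j,i}, G_{j,i})$ with $i \notin \fragment{i_1}{i_2}$, then $U$ fails to be $\Delta$-contained by $\mathcal{M}$. The monochromaticity of $\mathcal{M}$ plays no role; only the index range $\fragment{i_1}{i_2}$ matters. I split into two symmetric cases according to whether $i < i_1$ or $i > i_2$, using the indexing convention from \cref{def:fjgj} that the piece indexed by $i$ corresponds to $X\fragmentco{x_{j,i}}{x_{j,i+1}+\Delta}$, and the monotonicity (nondecreasing) of the sequence $(x_{j,\cdot})$.

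In the first case, $i < i_1$ forces $i_1 \geq 2$, so the ``or $i_1 = 1$'' escape clause in the definition of $\Delta$-containment is unavailable, and we must have $x_1 \geq x_{j,i_1} + \Delta$. On the other hand, $i + 1 \leq i_1$ together with monotonicity gives $x_{j,i+1} \leq x_{j,i_1}$. The hypothesis that $U$ overlaps $(F_{j,i}, G_{j,i})$ means $\fragmentco{x_1}{x_2} \cap \fragmentco{x_{j,i}}{x_{j,i+1}+\Delta} \neq \emptyset$, which in particular requires $x_1 < x_{j,i+1} + \Delta \leq x_{j,i_1} + \Delta$, contradicting $\Delta$-containment.

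The second case is symmetric: $i > i_2$ forces $i_2 \leq z_j - 1$, so the ``or $i_2 = z_j$'' escape clause is unavailable, and $\Delta$-containment demands $x_2 \leq x_{j,i_2+1} + \Delta - \Delta = x_{j,i_2+1}$. Monotonicity yields $x_{j,i} \geq x_{j,i_2+1}$, while overlap requires $x_2 > x_{j,i}$, so $x_2 > x_{j,i_2+1}$, again a contradiction.

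There is essentially no real obstacle beyond careful bookkeeping of the indexing and the boundary escape clauses; the calibration $\gamma = \Delta$ in the definition of $\gamma$-containment is precisely chosen so that the $\Delta$-overlap between consecutive pieces of the puzzle is absorbed, which is what makes the above two inequalities tight enough to reach a contradiction.
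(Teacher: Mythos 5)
Your proof is correct. The paper states this as a fact cited from \cite{ckw22} (their Fact~6.18) and does not reproduce a proof here, so there is no in-paper argument to compare against; your elementary contrapositive argument does the job cleanly. A few small points worth affirming: you correctly note that the ``$i_1=1$'' and ``$i_2=z_j$'' escape clauses in the definition of $\Delta$-containment cannot fire in the respective cases (since $1\le i<i_1$ forces $i_1\ge 2$, and $i_2<i\le z_j$ forces $i_2<z_j$); the monotonicity of $(x_{j,\cdot})$ does hold, because $x_{j,i+1}-x_{j,i}=|F_{j,i}|-\Delta\cdot[i=1]\ge 0$ by the $\Delta$-puzzle length constraint; and the overlap condition $\fragmentco{x_1}{x_2}\cap\fragmentco{x_{j,i}}{x_{j,i+1}+\Delta}\ne\emptyset$ is indeed equivalent to $x_1<x_{j,i+1}+\Delta$ and $x_{j,i}<x_2$, which are exactly the two inequalities your two cases use. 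Your remark that monochromaticity plays no mathematical role (it is only needed so that $\Delta$-containment is defined) is also accurate.
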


The proof of the following lemma is identical to that of \cite[Lemma 6.19]{ckw22}.

\begin{lemmaq}[{\cite[see Lemma 6.19]{ckw22}}]\dglabel{lem:monochrom}
    For $j\in J$, consider a monochromatic sequence $\mathcal{M}=(F_{j,i_1},G_{j,i_1})\cdots
    (F_{j,i_2},G_{j,i_2})$ of contiguous pairs in $\I'_j$.
    Let $\{X, Y\} = \{F_j, G_j\}$ and $X\fragmentco{x_1}{x_2}$ be a fragment of $X$ that is
    $\gamma$-contained by $\mathcal{M}$ for $\gamma \geq 13 \ktot$.
    All the pairs of $\I'_j$ that overlap $Y\fragmentco{y_1}{y_2} \coloneqq
    Y\fragmentco{\max\{0,x_1-4\ktot\}}{\min\{|Y|,x_2+4\ktot\}}$ belong to $\mathcal{M}$.
\end{lemmaq}

\begin{lemma}[{\cite[see Lemma 6.19]{ckw22}}]
    \dglabel{lem:nofakelight}[def:marking,lem:lb_local,lem:monochrom,lem:focus3k,lem:ub,obs:Dcont]
    Consider some $j\in J$ and a position $p \in \OccW_k(F_j,G_j)$ such that $r_j+p$ is a light position of $T$.
    Then, $r_j + p \in \OccW_k(P,T)$.
\end{lemma}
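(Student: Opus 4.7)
My plan is to directly construct an alignment $\B: P \onto T\fragmentco{v}{x}$ of cost at most $k$, where $v = r_j + p$ and $x$ is chosen suitably, thereby certifying $v \in \OccW_k(P,T)$. The construction extends the hypothesized optimal alignment $\A: F_j \onto G_j\fragmentco{p}{q}$ (of cost $d \le k$) by splicing back in the fragments removed during the trimming process $\I_j \mapsto \I'_j$. Since $F_j$ arises from $P$ by deleting, at each trimmed run of plain pairs, a sequence of $Q^\tau$ copies, and $G_j$ arises from $R_j$ by deleting the corresponding $Q^\tau$ copies (with $R_j = T\fragmentco{r_j}{r'_j}$), the spliced-in fragments can be matched character by character and contribute zero cost, provided $\A$ respects the ``cut points'' $(f_{j,i})$ and $(g_{j,i})$.

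First, I would invoke \cref{lem:focus3k} to reduce to the case $\edwp{L_1^P}{\rot^{-\rho(v)}(Q)} < \lpref$, since otherwise $p \notin \OccW_k(F_j,G_j)$, contradicting the hypothesis. This will also be needed to ensure that the prefix $L_1^P$ of $P$ behaves well when extending $\A$ past the leading piece.

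The main obstacle, which is the crux of the argument, is establishing the cut-respecting property: namely, that $\A$ maps each cut point $f_{j,i}$ (up to a small displacement that fits inside the $\Delta$-overlap) to the corresponding $g_{j,i}$. The plan for this is a monochromaticity argument combining \cref{lem:monochrom,lem:lb_local}. For each maximal monochromatic run $\mathcal{M}$ of non-plain pairs in $\I'_j$, \cref{lem:monochrom} forces the portions of $F_j$ and $G_j\fragmentco{p}{q}$ that are $13\ktot$-contained in $\mathcal{M}$ to interact only with pairs in $\mathcal{M}$ under $\A$; then \cref{lem:lb_local} applied to each locked fragment in $\mathcal{L}^P \cup \mathcal{L}^T_{\fragmentco{v}{v+m}}$ overlapping $\mathcal{M}$ gives a lower bound of $\edwl{L}{Q} - \markf(v,L)$ on the cost $\A$ must pay inside $\mathcal{M}$ for that fragment. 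Summing across all such $\mathcal{M}$ and all locked fragments, and using that $\markf(v) < \tradeoff$ by lightness of $v$, I get a lower bound on $\edwa{\A}{F_j}{G_j\fragmentco{p}{q}}$ that matches (up to $\tradeoff$) the upper bound on $\edw{P}{T\fragmentco{v}{x}}$ provided by \cref{lem:ub}. This forces $\A$ to spend its cost ``in the right places'' and to align the boundary vertices of each $\mathcal{M}$ with the corresponding boundary positions in $G_j\fragmentco{p}{q}$ to within $\Delta$ coordinates.

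Once the cut-respecting property is in hand, I will construct $\B$ explicitly: on each pair $(F_{j,i}, G_{j,i})$, $\B$ follows $\A$; at each cut between consecutive pairs where a run of $\zeta$ plain pairs was trimmed from $\I_j$, $\B$ inserts $Q^{\zeta\tau}\mapsto Q^{\zeta\tau}$ matched character by character (contributing cost $0$), possibly absorbing a displacement of size at most $\Delta$ into the adjacent puzzle overlaps. The resulting alignment satisfies $\edwa{\B}{P}{T\fragmentco{v}{x}} \le \edwa{\A}{F_j}{G_j\fragmentco{p}{q}} \le k$, proving $v \in \OccW_k(P,T)$. The combinatorial bookkeeping that connects the structural lower bound on $\A$ to the displacement budget at each cut is the delicate step; it uses that the trimming threshold was chosen as $2\tradeoff + \slack$, so any violation of the cut-respecting property would consume more than $\tradeoff$ of ``slack'' and contradict $d \le k$ given the matching upper bound from \cref{lem:ub}.
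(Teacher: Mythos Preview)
Your high-level plan---extend $\A: F_j \onto G_j\fragmentco{p}{q}$ to an alignment $P \onto T\fragmentco{v}{x}$ by splicing back the trimmed $Q^\tau$ copies at zero cost---matches the paper, and you correctly identify the relevant tools. But there is a genuine gap at the key step, and you miss a case split.

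Your ``cut-respecting property''---that $\A$ maps each boundary $f_{j,i}$ of a non-plain run to the corresponding $g_{j,i}$ up to a displacement absorbable in the $\Delta$-overlap---does not follow from the lower-bound argument. Summing the bounds from \cref{lem:lb_local} over the locked fragments in $\mathcal{D}(r_j+p)$ shows only that $\A$ spends at least $\Lambda - 2\markf(r_j+p)$ cost inside those (pairwise disjoint) regions, where $\Lambda$ is the quantity upper-bounded in \cref{lem:ub}. Combined with the total cost $\le k$, this caps the cost \emph{outside} the locked regions at below $2\tradeoff$---but only when $\Lambda > k$. That constrains the cost distribution, not the position correspondence at any particular cut point, so the inference ``this forces $\A$ to align the boundary vertices \ldots to within $\Delta$'' is unsupported. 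And ``absorbing a displacement into the overlap'' is not free: if $\A$ sends $f_{j,i}$ to some $g' \ne g_{j,i}$, the splice either targets a string other than $R_j$ or incurs $|g'-g_{j,i}|$ extra edits to reconnect, which you have not budgeted for.

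The paper's route avoids this entirely. First it splits on $\Lambda$: when $\Lambda \le k$, \cref{lem:ub} directly gives $\min_t \edw{P}{T\fragmentco{v}{t}} \le \Lambda \le k$, so no splicing is needed at all. When $\Lambda > k$, the sub-$2\tradeoff$ budget on the plain regions is exploited not at the run boundaries but via pigeonhole: each trimmed plain run retains $2\tradeoff+\slack$ pieces, so some \emph{interior} piece $\psi$ must satisfy $\edw{F\fragmentco{f_{j,\psi}}{f_{j,\psi+1}}}{\A(F\fragmentco{f_{j,\psi}}{f_{j,\psi+1}})}=0$. The deleted copies are spliced in at that exactly-matched interior piece, where $\A$ is necessarily diagonal and the glue costs genuinely zero. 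This pigeonhole step and the $\Lambda \le k$ shortcut are the ideas missing from your proposal.
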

\begin{proof}
    First, observe that we may assume that $\I'_j \neq \I_j$; otherwise, the statement follows trivially.
    To avoid clutter, we drop the subscript $j$ when referring to $F_j$ and $G_j$ and
    simply call them $F$ and~$G$, respectively.

    Let $b$ denote a position of $G$ and let
    $\mathcal{B}: F \onto G\fragmentco{p}{b}$ denote an alignment of cost $\min_t \edw{F}{G\fragmentco{p}{t}} \leq k$.
    We intend to show that, in this case, there exist a position $c$ of $R_j$ and an
    alignment $\mathcal{C}: P \onto R_j\fragmentco{p}{c}$ of the same cost.

    Set \[
        \res{\mathcal{L}^T}{\fragmentco{r_j+p}{r_j+p+m}}=\{L^T_i : i \in \fragment{i_1}{i_2}\}.
    \]
    Observe that there is a natural mapping of each locked fragment $L^P_i \in
    \mathcal{L}^P$ to a fragment of $F$, which we denote by~$L^{F}_i$;
    we denote the set of fragments in the image of this mapping by~$\mathcal{L}^F$.
    Similarly, there is a natural mapping of each locked fragment $L^T_i \in \{L^T_i : i
    \in \fragment{i_1}{i_2}\}$ to a fragment of~$G$, which we denote by~$L^{G}_i$;
    we denote the set of fragments in the image of this mapping by~$\mathcal{L}^G$.
    Let $\mathcal{D}'(p)$ consist of the images of the locked fragments in $\D(r_j+p)$ under these mappings.
    Observe that each fragment $L \in \mathcal{L}^F \cup \mathcal{L}^G$ only overlaps
    non-plain pairs.
    For a fragment $L^X_y \in \mathcal{L}^X$, where $X \in \{P,T,F,G\}$, let $L^X_y=X\fragmentco{\ell^X_y}{r^X_y}$.
    The following claim follows instantly.

    \begin{claim}[{\cite[see Claim 6.21]{ckw22}}]\label{claim:lockcont}
        Each fragment $L \in \mathcal{L}^F \cup \mathcal{L}^G$ is $\tbd \ktotm$-contained by a sequence
        of contiguous non-plain pairs in~$\I'_j$.
        \lipicsClaimEnd
    \end{claim}

    We next essentially show that, if we were to mark positions of $G_j$
    based on overlaps of pairs of fragments in $\mathcal{L}^F \times \mathcal{L}^G$, consistently with \cref{def:marking},
    position $p$ of $G_j$ would get the same number of marks as position $r_j+p$ of $T$.

    \begin{claim}[{\cite[see Claim 6.22]{ckw22}}]\label{claim:wip}
        For all $x \in \fragment{1}{|\mathcal{L}^P|}$ and $y \in \fragment{i_1}{i_2}$, we
        have
        \[\big|\fragmentco{\ell^G_y-\ktotm}{r^G_y+\ktotm} \cap \fragmentco{p + \ell^F_x}{p
            + r^F_x}\big| =
        \big|\fragmentco{\ell^T_y-\ktotm}{r^T_y+\ktotm} \cap \fragmentco{r_j + p +
    \ell^P_x}{r_j + p + r^P_x}\big|.\]
    \end{claim}
    \begin{claimproof}
        Consider sequences
        \[
            \mathcal{M}_x = (F_{j,w_1}, G_{j,w_1}) \cdots (F_{j,w_2}, G_{j,w_2})
            \quad\text{and}\quad
            \mathcal{M}_y = (F_{j,w_3}, G_{j,w_3}) \cdots (F_{j,w_4}, G_{j,w_4})
        \] of
        contiguous non-plain pairs of $\I'_j$
        that $\tbd \ktotm$-contain $L^F_x$ and $L^G_y$, respectively,
        and are maximal in the sense that they cannot be extended and remain monochromatic.
        (Such sequences exist by \cref{claim:lockcont}.)

        First, consider the case where $\mathcal{M}_x$ and $\mathcal{M}_y$ do not
        coincide.
        We treat the case where $\mathcal{M}_x$ lies to the left of~$\mathcal{M}_y$; the
        other case can be handled analogously.
        In this case, $w_2<z_j$ and $w_3>1$.
        Recall that $p \in \fragment{0}{3\ktot}$.
        By \cref{lem:monochrom}, we have
        that $G\fragmentco{p + \ell^F_x}{p + r^F_x}$
        only overlaps pairs in $\mathcal{M}_x$ and hence it is disjoint
        from $\fragmentco{\ell^G_y-\ktotm}{r^G_y+\ktotm}$ since
        $p + r^F_x \leq g_{j,w_3} \leq \ell^G_y-\ktotm$.
        Now, observe that $\orig(j,w_1) \leq \orig(j,w_3)$
        and hence
        \[r_j + p + r^P_x = r_j + \orig(j,w_1)\tau + p + r^F_x < r_j + \orig(j,w_3)\tau +
        \ell^G_y-\ktotm = \ell^T_y - \ktotm.\]
        Thus, in this case, both considered intersections are empty.

        Otherwise, $\mathcal{M}_x$ and $\mathcal{M}_y$ coincide.
        We then have
        \[r_j + \orig(j,w_1)\tau + \fragmentco{p+\ell^F_x}{p+r^F_x} = r_j +
        \fragmentco{p+\ell^P_x}{p+r^P_x} = \fragmentco{r_j + p+\ell^P_x}{r_j + p+r^P_x}\]
        and
        \[r_j + \orig(j,w_1)\tau + \fragmentco{\ell^G_y - \ktotm}{r^G_y + \ktotm} = r_j +
        \fragmentco{\ell^T_y - r_j - \ktotm}{r^T_y - r_j + \ktotm} = \fragmentco{\ell^T_y
    - \ktotm}{r^T_y + \ktotm}.\]
        The statement readily follows in the considered case.
    \end{claimproof}

    Let $\mathcal{B} = (f_v,g_v)_{v=0}^u$.
    For each $L_i^F\in \D'(p)$, let $\fragmentco{a^F_i}{b^F_i}\subseteq \fragment{0}{u}$
    so that $L^F_i = F\fragmentco{f_{a^F_i}}{f_{b^F_i}}$ and $\mathcal{B}(L^F_i) =
    G\fragmentco{g_{a^F_i}}{g_{b^F_i}}$.
    Symmetrically, for each  $L^G_i\in \D'(p)$, let $\fragmentco{a^G_i}{b^G_i}\subseteq
    \fragment{0}{u}$ so that
    $L^G_i = G\fragmentco{g_{a^G_i}}{g_{b^G_i}}$ and $\mathcal{B}^{-1}(L^G_i) =
    F\fragmentco{f_{a^G_i}}{f_{b^G_i}}$.
    Further, let $\mathcal{E}$ denote the multiset union of the multisets
    \[\mathcal{E}^F = \{\fragmentco{a^F_i}{b^F_i} : L_i^F\in \mathcal{L}^F \cap
    \mathcal{D}'(p)\} \quad \text{and} \quad \mathcal{E}^G = \{\fragmentco{a^G_i}{b^G_i} :
L_i^G\in \mathcal{L}^G\cap \D'(p)\}.\]
    In fact, the following claim implies that the multiplicity of each element of
    $\mathcal{E}$ is one.

    \begin{claim}[{\cite[see Claim 6.23]{ckw22}}]
        \label{claim:disjoint}
        The intervals in $\mathcal{E}$ are pairwise disjoint.
    \end{claim}
    \begin{claimproof}
        First, observe that the elements of each of $\mathcal{L}^F$ and $\mathcal{L}^G$
        are pairwise disjoint and hence the elements of each of $\mathcal{E}^F$ and $\mathcal{E}^G$
        are pairwise disjoint.

        Now, consider any two fragments $L^F_x \in \D'(p)$ and
        $L^G_y\in \D'(p)$. Observe that, since $L^P_x, L^T_y \in \D(r_j+p)$, we have
        $\markf(r_j+p,\ktotm,L^P_x,L^T_y)=0$.
        Hence, $\fragmentco{\ell^T_y-\ktotm}{r^T_y+\ktotm} \cap \fragmentco{r_j + p +
        \ell^P_x}{r_j + p + r^P_x}=\emptyset$.
        By \cref{claim:wip}, we also have $\fragmentco{g_{a^G_y}-\ktotm}{g_{b^G_y}+\ktotm}
        \cap \fragmentco{p + f_{a^F_x}}{p + f_{b^F_x}}=\emptyset$.

        Since the cost of $\mathcal{B}$ is no more than $k$,
        $g_{i}\in \fragment{p+f_i-k}{p+f_i+k}$ holds for all $i\in \fragment{0}{u}$.
        In particular, we have $\fragment{g_{a^G_y}-k-1}{g_{b^G_y}+k} \supseteq \fragmentco{p+f_{a^G_y}-1}{p+f_{b^G_y}+1}$.
        Since $\fragmentco{g_{a^G_y}-\ktotm}{g_{b^G_y}+\ktotm} \supseteq \fragment{g_{a^G_y}-k-1}{g_{b^G_y}+k}$,
        we then have $\fragmentco{f_{a^G_y}-1}{f_{b^G_y}+1}\cap \fragmentco{f_{a^F_x}}{f_{b^F_x}}=\emptyset$,
        that is, $\fragment{f_{a^G_y}}{f_{b^G_y}}\cap \fragment{f_{a^F_x}}{f_{b^F_x}}=\emptyset$.
        This implies $\fragmentco{a^G_y}{b^G_y}\cap \fragmentco{a^F_x}{b^F_x}=\emptyset$;
        consequently, the intervals in $\mathcal{E}$ are pairwise disjoint.
    \end{claimproof}

    \[\text{Set } \Lambda \coloneqq
    \edwp{L_1^P}{\rot^{-\rho(r_j+p)}(Q)}+\sum_{i=2}^{\ell^P}\edwl{L^P_i}{Q}+\sum_{i=i_1}^{i_2}\edwl{L^T_i}{Q}.\]

    \begin{claim}[{\cite[see Claim 6.24]{ckw22}}]\label{claim:lb}
        \[\sum_{L^{F}_i\in \mathcal{D}'_j(p)}\edw{L^{F}_i}{\mathcal{B}(L^{F}_i)} +
        \sum_{L^{G}_i \in \mathcal{D}'_j(p)}\edw{L^{G}_i}{\B^{-1}(L^{G}_i)} \geq \Lambda -
    2\markf(r_j+p).\]
    \end{claim}
    \begin{claimproof}
        Using~\cref{lem:lb_local}, we lower-bound each individual term of the left-hand
        side of the proved inequality.
        We consider three cases.
        \begin{enumerate}
            \item Consider some $L^{G}_i=G\fragmentco{p+\ell}{p+r} \in \mathcal{D}'_j(p)$
                and let $L^T_i=T\fragmentco{r_j + p +\ell'}{r_j + p + r'}$.
                Since $\edw{F}{G\fragmentco{p}{b}} \leq k$, we have that
                $\mathcal{B}^{-1}(L^{G}_i)$ is a substring of
                $F\fragmentco{\max\{0,\ell-k\}}{\min\{|F|,r+k\}}$.
                By \cref{claim:lockcont}, $L^G_i$ is $\tbd \ktotm$-contained by a sequence
                $\mathcal{M}$ of contiguous non-plain pairs in $\I'_j$.
                Then, by \cref{lem:monochrom},
                $F\fragmentco{\max\{0,\ell-k\}}{\min\{|F|,r+k\}}$ only overlaps pairs of
                $\I'_j$ that are in $\mathcal{M}$ and
                hence it is a substring of
                $P\fragmentco{(\ell' - \ell) + \max\{0,\ell-k\}}{(r' - r) +
                \min\{|F|,r+k\}}$, which in turn is a substring of
                $P\fragmentco{\max\{0,\ell'-k\}}{\min\{|P|,r'+k\}}$.
                Thus, $\edw{L^G_i}{\mathcal{B}^{-1}(L^{G}_i)} \geq \edwl{L^T_j}{Q} - \markf(r_j+p,L^T_j)$
                holds by~\cref{lem:lb_local}.
            \item Consider some $L^F_i = F\fragmentco{\ell}{r} \in
                \mathcal{D}'_j(p)\setminus \{L^P_1\}$ and let $L^P_i =
                P\fragmentco{\ell'}{r'}$.
                Since $\edw{F}{G\fragmentco{p}{b}} \leq k$, we that have
                $\mathcal{B}(L^{F}_i)$ is a substring of $
                G\fragmentco{\max\{p,p+\ell-k\}}{\min\{|G|,p+r+k\}}$.
                As before, by combining \cref{claim:lockcont} and \cref{lem:monochrom} we
                get that $G\fragmentco{\max\{p,p+\ell-k\}}{\min\{|G|,p+r+k\}}$
                is a substring of
                $T\fragmentco{r_j + (\ell' - \ell) + \max\{p,p+\ell-k\}}{r_j + (r' - r) +
                \min\{|G|,p+r+k\}}$, which in turn is a substring of
                $T\fragmentco{r_j + p + \max\{0,\ell'-k\}}{r_j + p + \min\{|T|,r'+k\}}$.
                Thus,
                $\edw{L^F_i}{\mathcal{B}(L^F_i)}\geq \edwl{L^P_i}{Q} -
                \markf(r_j+p,L^P_i)$ holds by~\cref{lem:lb_local}.
           \item Lastly, since $\edw{F}{G\fragmentco{p}{b}} \leq k$, we have that
                $\mathcal{B}(L^F_1)$ is a prefix of $T\fragmentco{r_j+p}{r_j+p+|L^P_1|+k}$,
                and hence
                $\edw{L^F_1}{\mathcal{B}(L^F_1)}\geq \edwp{L_1^P}{\rot^{-\rho(r_j+p)}(Q)} -
                \markf(r_j+p,L_1^P)$ holds by~\cref{lem:lb_local}.
        \end{enumerate}

        We now put everything together.
        In the following inequalities, we use the fact that,
        for each $L \in (\mathcal{L}^P \cup
        \mathcal{L}^T_{\fragmentco{r_j+p}{r_j+p+m}})\setminus{\mathcal{D}(r_j+p,B)}$, we have
        $\edwl{L}{Q} \leq \markf(r_j+p,L)$,
        and hence the sum of $\edwl{L}{Q} - \markf(r_j+p,L)$ over all such $L$ is at most
        zero.
        \begin{align*}
    &		\sum_{L^{F}_i\in \mathcal{D}'_j(p)}\edw{L^{F}_i}{\mathcal{B}(L^{F}_i)} +
    \sum_{L^{G}_i \in \mathcal{D}'_j(p)}\edw{\B^{-1}(L^{G}_i)}{L^{G}_i}\\
    &\quad  \geq \edwp{L^{P}_1}{\rot^{-\rho(r_j+p)}(Q)} - \markf(r_j+p,L^P_1) + \sum_{L^P_i \in \mathcal{D}(r_j+p) \setminus \{L^P_1\}} (\edwl{L^P_i}{Q} - \markf(r_j+p,L^P_i))\\
    &\qquad\qquad\qquad\qquad\qquad\qquad\qquad\qquad\qquad\qquad + \sum_{L^T_i \in \mathcal{D}(r_j+p)} (\edwl{L^T_i}{Q} - \markf(r_j+p,L^T_i))\\
    &\quad  \geq \edwp{L_1^P}{\rot^{-\rho(r_j+p)}(Q)} + \sum_{i=2}^{\ell^P}\edwl{L^P_i}{Q}+\sum_{i=i_1}^{i_2}\edwl{L^T_i}{Q} - \sum_{i=1}^{\ell^P}\markf(r_j+p,L^P_i) \\
    &\qquad\qquad\qquad\qquad\qquad\qquad\qquad\qquad\qquad\qquad\qquad\qquad\qquad\qquad\;\, - \sum_{i=i_1}^{i_2}\markf(r_j+p,L^T_i)\\
    &\quad  = \Lambda - \sum_{i=1}^{\ell^P}\sum_{v=1}^{\ell^T} \markf(r_j+p,\ktotm,L^P_v,L^T_i) - \sum_{i=i_1}^{i_2}\sum_{v=1}^{\ell^P} \markf(r_j+p,\ktotm,L^P_v,L^T_i) \\
    &\quad  \geq \Lambda - 2\markf(r_j+p).
        \end{align*}
        This concludes the proof of the claim.
    \end{claimproof}

    Next, due to \cref{lem:focus3k}, we have
    $\edwp{L_1^P}{\rot^{-\rho(r_j+p)}(Q)} < \lpref$.
    By a direct application of \cref{lem:ub}, we then have
    that $\min_t \edw{P}{T\fragmentco{r_j+p}{t}}\leq \Lambda$.
    If $\Lambda\leq k$, then we are done. For the remainder of the proof we thus consider
    the case where $\Lambda >k$, which, combined with the fact
    that $r_j + p$ is a light position of $T$, means that
    $\Lambda-2\markf(r_j+p) > k-2\markf(r_j+p) \geq k-2 \tradeoff$.

    \begin{claim}[{\cite[see Claim 6.25]{ckw22}}]
        \label{claim:plain_exact}
        For any run $M = (F_{j,i_1},G_{j,i_1}) \cdots (F_{j,i_2},G_{j,i_2})$ of
        $2\tradeoff + \slack$ consecutive plain pairs in $\I'_j$,
        there exists some $i\in \fragmentoo{i_1}{i_2}$ for which
        $\edw{F\fragmentco{f_{j,i}}{f_{j,i+1}}}{\mathcal{B}(F\fragmentco{f_{j,i}}{f_{j,i+1}})}
        = 0$.
    \end{claim}
    \begin{claimproof}
        We have $|f_{j,i_1+14} - f_{j,i_1}| \geq 14 \tau - d_T \geq 14 \ktot/2 - d_T \geq
        6 \ktot = \Delta$. Similarly,
        $|f_{j,i_2-13} - f_{j,i_2+1}| \geq \Delta$.
        Thus, $U=F\fragmentco{f_{j,i_1+14}}{f_{j,i_2-13}}$ is $\Delta$-contained by $M$
        and hence only overlaps pairs in $M$ by \cref{obs:Dcont}.

        Now, each fragment $L_i^F \in \mathcal{L}^F$ is $13\ktotm$-contained by a sequence of
        contiguous non-plain pairs and hence only overlaps non-plain pairs.
        Further, for each fragment $L_i^G \in \mathcal{L}^G$, as shown in the proof of
        \cref{claim:lb}, $\mathcal{B}^{-1}(L_i^G)$ only overlaps non-plain pairs.
        Observe that two fragments of $F$ are necessarily disjoint if the sets of pairs
        that they overlap are disjoint, and hence
        $\fragmentco{f_{j,i_1+14}}{f_{j,i_2-13}}$
        is disjoint from all elements of $\mathcal{E}$.
        As the intervals in $\mathcal{E}$ are pairwise disjoint by \cref{claim:disjoint},
        using \cref{claim:lb}, we obtain
        \begin{align*}
            k &\geq \edw{F}{G\fragmentco{p}{b}}\\
              &\geq \edw{U}{\mathcal{B}(U)} + \sum_{L^{F}_i\in
              \mathcal{D}'_j(p)}\edw{L^{F}_i}{\mathcal{B}(L^{F}_i)} + \sum_{L^{G}_i \in
          \mathcal{D}'_j(p)}\edw{\B^{-1}(L^{G}_i)}{L^{G}_i}\\
              &\geq \edw{U}{\mathcal{B}(U)} + \Lambda - 2\markf(r_j+p) \\
              &> \edw{U}{\mathcal{B}(U)} + k - 2\tradeoff.
        \end{align*}
        Hence, we have $\edw{U}{\mathcal{B}(U)} < 2\tradeoff$.
        This implies that
        \[\sum_{i=i_1+14}^{i_2-14}
        \edw{F\fragmentco{f_{j,i}}{f_{j,i+1}}}{\mathcal{B}(F\fragmentco{f_{j,i}}{f_{j,i+1}})}
    < 2\tradeoff,\]
        and hence, since the number of summands in in the left-hand side of the inequality
        is $i_2 - 14 - (i_1+14) + 1 = 2\tradeoff + \slack - 28 > \tradeoff$
        and each of these summands is a non-negative integer, the claim follows.
    \end{claimproof}

    We are now ready to conclude the proof of the lemma.
    Let $\mathcal{M}_1, \ldots, \mathcal{M}_w$ denote the trimmed runs of $2\tradeoff + \slack$
    consecutive plain pairs in $\I'_j$ such that, for all $i$, $\mathcal{M}_i$
    originated from a run with $\chi(i)$ more plain pairs. For $i \in \fragment{1}{w}$,
    let $\mathcal{M}_i=(F_{j,s_i},G_{j,s_i}) \cdots (F_{j,e_i},G_{j,e_i})$
    let $\psi(i) \in \fragmentoo{s_i}{e_i}$ be such that
    and $\edw{F\fragmentco{f_{j,\psi(i)}}{f_{j,\psi(i)+1}}}{
    \mathcal{B}(F\fragmentco{f_{j,\psi(i)}}{f_{j,\psi(i)+1}})} = 0$; observe that $\psi(i)$
    exists due to \cref{claim:plain_exact}.
    Let us now show how to construct $\mathcal{C}$ given $\mathcal{B} = (f_v,g_v)_{v=0}^u$.
    We intuitively achieve this by inserting $\chi(i)$ copies of
    $Q^{\infty}\fragmentco{0}{\tau}$ after each of
    $F\fragmentco{f_{j,\psi(i)}}{f_{j,\psi(i)+1}}$
    and $\mathcal{B}(F\fragmentco{f_{j,\psi(i)}}{f_{j,\psi(i)+1}})$ and aligning them
    without
    errors, thus restoring the original length of each trimmed plain run,
    without changing the cost of the alignment.
    Initially, set $\mathcal{C} \coloneqq \mathcal{B}$.
    Then, for each $\psi(i)$, in decreasing order
    \begin{itemize}
        \item replace each pair $(f_v,g_v)$ of $\mathcal{C}$ that satisfies $f_v \geq f_{j,\psi(i)+1}$
            with $(\chi(i)\cdot\tau + f_v, \chi(i)\cdot\tau + g_v)$, and
        \item insert $(f_{j,\psi(i)+1}+\phi,
            g_{j,\psi(i)+1}+\phi)_{\phi=0}^{\chi(i)\cdot\tau-1}$ after
            $(f_{j,\psi(i)+1}-1,g_{j,\psi(i)+1}-1)$.\qedhere
    \end{itemize}
\end{proof}

We conclude with the proof of \cref{lem:correctlight} as promised.

\correctlight
\begin{proof}
    $(\subseteq)$: This direction is an immediate consequence of \cref{cor:supset}.

    $(\supseteq)$: This direction is an immediate consequence of \cref{lem:nofakelight}.
\end{proof}

\subsection{Combining the Partial Results: Faster \SM}

We are ready to prove the main result of this section---\cref{lem:solve_int_SM}---which we restate here for
convenience.

\solveintSM
\begin{proof}
    If $W > \sqrt{k}$, \cref{lem:solve_SM} yields a better running time, so we henceforth assume $W \le \sqrt{k}$.
    Consistently with all previous sections, set \[
        \ktot \coloneqq k+d_P+d_T=\Theta(k)\quad\text{and}\quad
        \ktotm \coloneqq 112(kW+d_P^w+d_T^w)=\Oh(kW)\quad\text{and}\quad \tradeoff
        \coloneqq \max\{1 ,\lfloor{\sqrt{k}W}\rfloor\} \leq \ktotm.
    \]

    We proceed in roughly four steps.
    \begin{itemize}
        \item First, in a preprocessing step, we identify the heavy positions \(\Hv\) and the light
            positions \(\light\) in \(T\), as well as a filter \(\mq\) (according to
            \cref{fct:rj}) for where potential \(k\)-error occurrences may start.
        \item Next, we compute all \(k\)-error occurrences starting at a position in \(\Hv
            \), using the algorithm from \cref{sus:heavy}.
            In particular, we obtain \(\Hv \cap \OccW_k(P,T)\) as a set of positions.
        \item Next, we use \(\swaps\) from \cref{lem:swaps} to obtain a candidate set
            \(\lmore\) of potential starting positions of \(k\)-error occurrences
            (represented as disjoint arithmetic progressions with difference
            \(\tau\)).
            From \cref{sus:light}, we have
            \((\lmore \cap \mq) \setminus (\Hv \cap \mq) = \OccW_k(P, T) \cap \light\);
            hence we proceed to compute \(\Hv \cap \mq\) (represented as a set), and
            \(\lmore \cap \mq\) (represented as disjoint
            arithmetic progressions with difference~\(\tau\)), and thereafter compute
            their set difference to obtain \(\OccW_k(P, T) \cap \light\)
            (represented as disjoint arithmetic progressions with
            difference~\(\tau\)).
        \item In a post-processing step, we union the two sets \(\Hv \cap \OccW_k(P,T)\) and
            \(\OccW_k(P, T) \cap \light\) and compute a representation as disjoint
            arithmetic progressions with difference \(\tau\).
    \end{itemize}

    \subparagraph{Preprocessing.}
    We compute the sets of locked fragments $\LP = \locked(P,Q,d^w_P,\lpref,w)$ and
    $\LT = \locked(T,Q,d^w_T,0,w)$ in $\cO(d^2)$ time using \cref{cor:klocked} and call
    $\protect\heavy(P,T,d,k,Q,\LP,\LT,\ktotm,\tradeoff)$ to obtain the set $\Hv$ of
    heavy positions,
    represented as the union of $\Oh(k^2W^2)$ disjoint integer ranges,
    in $\Ohtilde(k^2W^3)$ time (cf.~\cref{lem:heavy-alg}).

    Define $J$ and each of $r_j$ and $\I'_j$, for $j\in J$, as
    in \cref{fct:rj,def:fjgj}, and set \[
        \lmore \coloneqq \bigcup_{j \in J} (r_j +  \OccW_k( \I'_j)).
    \]
    Observe that, by \cref{lem:correctlight},
    \[
        \OccW_k(P,T) = (\OccW_k(P,T) \cap \Hv) \cup (\OccW_k(P,T) \cap \light) =
         (\OccW_k(P,T) \cap \Hv) \cup (\lmore \cap \light).
    \]
    Finally, for our filter, \cref{fct:rj,fact:simple} yield
    \[
        \mq \coloneqq \bigcup_{j\in \mathbb{Z}} \fragment{j q-x_T-\kappa-d_T}{j
        q-x_T+\kappa+d_T} \supseteq \OccE_k(P,T) \supseteq \OccW_k(P,T).
    \]
    Observe that we thus have
    \begin{equation}\label{eq:mq}
        \OccW_k(P,T) \cap \light = (\lmore \cap \mq) \setminus (\Hv \cap \mq).
    \end{equation}

    \subparagraph{Heavy occurrences.}

    By \cref{lem:heavy-total}, we can compute $\OccW_k(P,T) \cap \Hv$ in time
    $\Ohtilde((kW/\eta+1)k^3W^3)=\Ohtilde(k^{3.5}W^4)$.

    \subparagraph{Light occurrences.}

    Let us now proceed to computing the remaining $(k,w)$-error occurrences, namely those that
    start at light positions.
    To this end, we first compute the $\cO(kW^2)$-size sets $\rred{P}$ and $\rred{T}$,
    defined in the beginning of \cref{sec:redundantsqrtk},
    in $\Ohtilde(kW^2)$ time (cf.~\cref{lem:fewreds}).
    Then, we call \(\swaps(T, P, k, Q, \A_P, \A_T, \rred{P}, \rred{T}, 2\tradeoff +
    \slack)\) from \cref{lem:swaps}, which returns
    a representation of \(\lmore\)
    as $\cO(k^3 W^4 \cdot \tradeoff)=\cO(k^{3.5}W^4)$ arithmetic progressions with difference
    $\tau \coloneqq q\ceil{{\kappa}/{2q}}$.
    Plugging in \cref{xxdsyqwfxu} for the \DPM data structure, \swaps runs in time \[
        \Ohtilde(k^3W + k \cdot k^2 W^4 \tradeoff + k^2W^4 \cdot kW)=\Ohtilde(k^{3.5}W^4+k^3W^5)=\Ohtilde(k^{3.5}W^4).
    \]

    We move on to remove surplus positions from the candidate set \(\lmore\).
    We start by computing \(\Hv \cap \mq\).

    \begin{claim}\label{claim:hmq}
        The set $\Hv \cap \mq$ is of size $\cO(k^{2.5}W^3)$ and can be computed in time $\cOtilde(k^{2.5}W^3)$.
    \end{claim}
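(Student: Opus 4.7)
The plan is to compute $\Hv$ explicitly as $\cO(k^2W^2)$ disjoint integer ranges via \cref{cor:klocked} followed by \cref{lem:heavy-alg} (both in time $\Ohtilde(k^2W^3)$), and then intersect $\Hv$ with $\mq$ by a joint sweep. Observe that $\mq$ is a union of intervals $I_j$ of length $\ell_\mq \coloneqq 2(\kappa+d_T)+1 = \cO(k)$ spaced with period $q$. I would split the analysis on whether $q \ge \ell_\mq$.

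If $q < \ell_\mq$, consecutive intervals of $\mq$ overlap, so $\mq \supseteq \fragment{0}{n-m+k} \supseteq \Hv$. Plugging $\tradeoff = \sqrt{k}W$, $\ktotm = \cO(kW)$, and $q \le \ell_\mq = \cO(k)$ into \cref{lem:heavy-bound} directly yields $|\Hv\cap\mq| = |\Hv| = \cO(\sqrt{k}\cdot kW\cdot(kW+k)) = \cO(k^{2.5}W^2)$.

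Otherwise $q \ge \ell_\mq$ and the intervals $I_j$ of $\mq$ are pairwise disjoint, so $\mq$ has density $\Theta(k/q)$. I would revisit the marking accounting from the proof of \cref{lem:mark-pos}: each marking interval $I$ of weight $w(I)$ and length $|I|$ contributes $w(I)\cdot|I\cap\mq| \le w(I)\cdot \cO(k|I|/q + k)$ to $\sum_{v\in\mq}\markf(v)$, whence
\[
    \sum_{v\in\mq}\markf(v) \;\le\; \cO(k/q)\sum_I w(I)|I| \;+\; \cO(k)\sum_I w(I) \;=\; \cO(k^3W^3),
\]
using $\sum_I w(I)|I|=\sum_v\markf(v)=\cO(k^2W^2(\ktotm+q))=\cO(k^3W^3+k^2W^2q)$ from \cref{lem:mark-pos}, the bound $\sum_I w(I) \le |\mathcal{L}^P|\cdot d^w_T + 3\tradeoff|\mathcal{L}^T| = \cO(k^2W^2)$, and $q \ge \ell_\mq = \Omega(k)$. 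Dividing by $\tradeoff=\sqrt{k}W$, there are $\cO(k^{2.5}W^2)$ positions of $\mq$ heavy via the first condition of \cref{def:light}. The remaining three conditions force heaviness only within the $\ktotm$-neighborhoods of locked fragments in $\mathcal{L}^T$, contributing at most $\cO(k/q)\sum_{L^T}(|L^T|+\ktotm) + \cO(k)|\mathcal{L}^T| = \cO(k^2W^2)$ positions. Hence $|\Hv\cap\mq|=\cO(k^{2.5}W^2)$ in both cases, comfortably within the claimed bound.

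For the computation, after the $\Ohtilde(k^2W^3)$ preprocessing I would perform a joint left-to-right sweep over the sorted heavy ranges and the relevant intervals of $\mq$: each heavy range of length $\ell$ meets $\cO(1+\ell/q)$ intervals $I_j$, summing (by the same density calculation, or trivially in the degenerate case) to $\cO(k^2W^2)$ intersecting pairs; for each such pair I would explicitly list the positions in $R\cap I_j$, for an output cost of $\cO(|\Hv\cap\mq|)=\cO(k^{2.5}W^2)$. All together the running time is $\Ohtilde(k^2W^3 + k^{2.5}W^2) = \Ohtilde(k^{2.5}W^3)$ under the standing assumption $W \le \sqrt{k}$. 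The main obstacle is extracting the extra $\sqrt{k}$ improvement in the disjoint case: the naive estimate $|\Hv\cap\mq|\le |\Hv|\cdot \cO(k/q)$ is too weak, and one must redistribute the density factor $k/q$ against the \emph{length} of each marking interval inside the proof of \cref{lem:mark-pos} rather than against the number of heavy positions it produces.
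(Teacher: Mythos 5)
Your proof is correct and, while it shares the same two-phase shape as the paper's (compute the $\cO(k^2W^2)$ heavy ranges from \cref{cor:klocked}+\cref{lem:heavy-alg}, then exploit the sparsity of $\mq$), the size bound in the disjoint-interval case is obtained by a genuinely different argument. The paper splits at $q = kW$: for $q \le kW$ it discards density entirely and just uses $|\Hv\cap\mq| \le |\Hv| = \cO(k^{2.5}W^3)$ from \cref{lem:heavy-bound}; for $q > kW$ it writes $|\Hv\cap\mq| = \cO(k^2W^2 + |\Hv|\cdot k/q)$. You instead split at $q = \ell_\mq = \cO(k)$ and, in the disjoint-interval regime, push the $k/q$ density factor \emph{inside} the accounting of \cref{lem:mark-pos}, bounding $\sum_{v\in\mq}\markf(v)$ directly against the weighted interval lengths $\sum_I w(I)|I|$ and the weighted interval count $\sum_I w(I)$. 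This is more work but buys you something real: it sidesteps the question of exactly how many extra positions the endpoints of the $\cO(k^2W^2)$ heavy ranges can contribute (the ``$\cO(k\cdot\#\text{ranges})$'' boundary term that the density estimate hides), and it yields the slightly sharper bound $\cO(k^{2.5}W^2)$.

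One minor point: your closing remark misidentifies the obstacle. The quantity $|\Hv|\cdot\cO(k/q)$ is \emph{not} too weak numerically --- for $q = \Omega(k)$, plugging \cref{lem:heavy-bound} gives $|\Hv|\cdot k/q = \cO(k^{2.5}W(kW/q + 1)) = \cO(k^{2.5}W^2)$, which is well within the claimed $\cO(k^{2.5}W^3)$. The actual issue is \emph{validity}: $|\Hv|\cdot\cO(k/q)$ is not by itself an upper bound on $|\Hv\cap\mq|$ without additive terms for ranges that straddle interval endpoints, and those additive terms (crudely $\cO(k)\cdot\#\text{ranges}$) are what could threaten the bound. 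Your direct marking argument quietly handles exactly this, because it charges each marking interval for its true overlap with $\mq$ (including the $+\cO(k)$ per interval), and only then divides by $\tradeoff$. So the instinct that a refinement is needed is sound, even if the stated diagnosis isn't quite right.
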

    \begin{claimproof}
        The proof is similar to a part of the proof of~\cref{lem:heavy-total}.

        Due to \cref{lem:heavy-alg}, our representation of $\Hv$ consists of $\Oh(k^2W^2)$ disjoint integer ranges.
        In addition, due to \cref{lem:heavy-bound}, we have \[
            |\Hv|=\cO((kW/\tradeoff + 1)kW(\ktotm+q)) = \cO(k^{1.5}W^2\cdot (kW+q)).
        \]

        Let us first upper-bound the size of \(|\Hv \cap \mq|\).
        We distinguish between two cases.
        \begin{itemize}
            \item First, if $q \leq kW$, we have $|\Hv \cap \mq| \leq |\Hv|=\cO(k^{2.5}W^3)$.
            \item As for the complementary case where $q > kW$,
                observe that only $\cO(k)$ out of any $\cO(q)$ consecutive integers are in
                $\mq$.
                Hence, $\Hv \cap \mq$ is of size $\cO(k^2W^2 + |\Hv|\cdot k/q) =
                \cO(k^2W^2+k^{2.5}W^2)=\cO(k^{2.5}W^3)$.
        \end{itemize}

        As for computing $\Hv \cap \mq$, we first
        sort the $\Oh(k^2W^2)$ integer ranges that comprise $\Hv$ with respect to their
        starting positions in $\Ohtilde(k^2W^2)$ time
        and then scan them from left to right, skipping positions in $\mathbb{Z} \setminus
        \mq$.
        The running time of the scan is proportional to the total number of input ranges
        and output positions, and hence we are done.
    \end{claimproof}

    We move on to compute \(\lmore \cap \mq\).
    In what follows, for any integer $x$, let us say that the \emph{residue modulo $x$} of
    a non-empty arithmetic progression whose difference is a multiple of $x$
    is the residue of any element of this arithmetic progression modulo $x$.

    \begin{claim}\label{claim:fmq}
        We can compute a representation of
        $\lmore \cap \mq$ as $\cO(k^{3.5}W^4)$ disjoint arithmetic progressions with
        difference~$\tau$,
        sorted according to their starting positions,
        in $\cOtilde(k^{3.5}W^4)$ time.
    \end{claim}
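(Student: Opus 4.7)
The plan is to exploit that by construction $\tau = q\lceil \kappa/(2q)\rceil$ is a positive multiple of $q$, while the filter $\mq = \bigcup_{j\in\Z}\fragment{jq-x_T-\kappa-d_T}{jq-x_T+\kappa+d_T}$ is invariant under translation by $q$. Hence every arithmetic progression $\{a+i\tau : i\in \fragmentco{0}{\ell}\}$ has all of its elements lying in a single residue class modulo $q$, and is consequently either entirely contained in $\mq$ or entirely disjoint from $\mq$. Whether a given integer $a$ lies in $\mq$ reduces to testing whether $(a+x_T)\bmod q$ is within $\kappa+d_T$ of $0$ in $\Z/q\Z$, which takes $\Oh(1)$ time (if $\kappa+d_T\ge q/2$ the test is trivially satisfied).

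The first step would then be to take the $\Oh(k^{3.5}W^4)$ arithmetic progressions representing $\lmore$ produced by the earlier call to \swaps and filter them using the modular test above, discarding every progression whose starting value lies outside $\mq$. This costs $\Oh(k^{3.5}W^4)$ time and keeps exactly the integers of $\lmore \cap \mq$.

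The second step is to turn the surviving collection into a pairwise disjoint one. Since all surviving progressions share the common difference $\tau$, two of them can intersect only if their starting values are congruent modulo $\tau$; within each such residue class, two progressions overlap or abut precisely when their integer ranges overlap or touch, and in that case their union is again an arithmetic progression with difference $\tau$. I would therefore bucket the surviving progressions by residue modulo $\tau$, sort each bucket by starting value in $\Oh(N\log N)$ total time, and perform a left-to-right sweep inside each bucket that replaces every maximal chain of overlapping-or-abutting progressions by the single longer progression they cover. A final global sort by starting value produces the required sorted list of pairwise disjoint progressions. The total work is $\Ohtilde(k^{3.5}W^4)$, and the output size is bounded by the input size $\Oh(k^{3.5}W^4)$.

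The only real subtlety I expect lies in the second step: one must verify that within a common residue class modulo $\tau$, merging two overlapping or abutting progressions indeed yields a single progression with difference $\tau$ (which is immediate, since both already skip by exactly $\tau$), and that the disjoint collection left after the sweep covers precisely the union of the inputs. No \pillar operations are required for this claim; the work is entirely on the numeric representations of the progressions.
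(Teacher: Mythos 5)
Your proposal is correct and follows essentially the same approach as the paper's proof: observe that $\tau$ is a multiple of $q$ and $\mq$ is $q$-periodic (so filtering a progression can be done by testing any single element modulo $q$), then bucket the surviving progressions by residue modulo $\tau$, and greedily merge overlapping or abutting progressions within each bucket to ensure disjointness. The only cosmetic difference is that the paper additionally notes the number of relevant residue classes modulo $\tau$ is $\cO(k^2)$, which you do not need since the bound $\Oh(k^{3.5}W^4)$ on the number of surviving progressions already controls the bucketing cost.
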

    \begin{claimproof}
        In a linear scan of the $\cO(k^{3.5}W^4)$ arithmetic progressions that comprise
        $\lmore$ as returned by the call to the algorithm $\swaps$,
        we delete any arithmetic progression whose elements are not in $\mq$.
        Then, we sort all arithmetic progressions according to their starting positions in
        time $\Ohtilde(k^{3.5}W^4)$
        and distribute them among $\cO(k/q\cdot \tau)=\cO(k^2)$ buckets according to their
        residues modulo~$\tau$.
        Finally, we scan linearly the arithmetic progressions in each bucket, greedily
        merging progressions that overlap (that is, we merge progressions if their union
        is also a valid arithmetic progression with difference \(\tau\)).
        The number of the resulting arithmetic progressions is clearly upper-bounded by the size
        of the input, that is, $\cO(k^{3.5}W^4)$;
        we sort them according to their starting positions in $\cO(k^{3.5}W^4)$ time.
    \end{claimproof}

    Finally, we compute the set difference \(
        (\lmore \cap \mq) \setminus (\Hv \cap \mq) =
        \OccW_k(P,T) \cap \light.
    \)

    \begin{claim}\label{claim:apdt}
        A representation of $\OccW_k(P,T) \cap \light$ as $\cO(k^{3.5}W^4)$ disjoint
        arithmetic progressions with difference~$\tau$,
        sorted according to their starting positions,
        can be computed in time $\cOtilde(k^{3.5}W^4)$ time the \pillar model.
    \end{claim}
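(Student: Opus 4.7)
The plan is to compute $(\lmore \cap \mq) \setminus (\Hv \cap \mq)$ by combining the two representations already produced by \cref{claim:hmq,claim:fmq} in a single simultaneous left-to-right sweep, since by \eqref{eq:mq} this set difference equals $\OccW_k(P,T) \cap \light$. At this point \cref{claim:fmq} has given us $\lmore \cap \mq$ as $\cO(k^{3.5}W^4)$ disjoint arithmetic progressions of common difference $\tau$, sorted by starting position, and \cref{claim:hmq} has given us $\Hv \cap \mq$ as an explicit set of $\cO(k^{2.5}W^3)$ positions.

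First I would sort the explicit set $\Hv \cap \mq$ in $\cOtilde(k^{2.5}W^3)$ time, so that both inputs can subsequently be traversed in lockstep. Then I would maintain a cursor into each of the two sorted lists and, for each $p \in \Hv \cap \mq$ encountered, advance the progression cursor until the unique progression (if any) whose range could contain $p$ is reached; either $p$ does not lie in the arithmetic progression (in which case there is nothing to do) or it does, in which case I replace that progression by up to two shorter arithmetic progressions of the same difference $\tau$, corresponding to its elements strictly less than $p$ and those strictly greater than $p$. Disjointness of the input progressions guarantees that each deleted point affects at most one progression, so the cursors advance monotonically and the sweep runs in time linear in the total input plus output size.

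For the output count, each of the $\cO(k^{2.5}W^3)$ deleted points can increase the number of arithmetic progressions by at most one, so the final representation consists of $\cO(k^{3.5}W^4 + k^{2.5}W^3) = \cO(k^{3.5}W^4)$ arithmetic progressions, produced already sorted by starting position. The total running time is therefore $\cOtilde(k^{3.5}W^4) + \cOtilde(k^{2.5}W^3) = \cOtilde(k^{3.5}W^4)$, matching the claim.

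There is no serious combinatorial obstacle here; the only point worth verifying is that the representation from \cref{claim:fmq} really is a list of \emph{disjoint} arithmetic progressions, so that a single excluded point splits at most one progression into two. This disjointness is indeed enforced by the bucket-and-merge step in the proof of \cref{claim:fmq}, which is what keeps both the output size and the sweep running time under control.
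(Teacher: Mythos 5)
Your approach is essentially correct in spirit — reduce to a sweep over the two representations from the preceding claims — but there is a genuine gap in the mechanics of the sweep, and the paper's proof patches it with an extra step you omitted.

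The issue is your claim that ``the cursors advance monotonically.'' The $\cO(k^{3.5}W^4)$ arithmetic progressions in $\lmore \cap \mq$ are disjoint \emph{as sets}, but because they can have different residues modulo $\tau$, they can still \emph{interleave} or overlap as intervals: e.g., one progression of difference $\tau$ can span $\{0, \tau, \dots, 99\tau\}$ while another spans $\{\tau/2, 3\tau/2, \dots\}$. These two are disjoint, sorted consecutively by starting position, yet a deleted point $\tau/2$ falls inside the interval covered by the first but belongs to the second, and a later deleted point $50\tau$ belongs to the first. If you advance past the first progression when you meet $\tau/2$, you will miss the second deletion. In general, the number of progressions ``open'' at a given position can be as large as the number of residue classes of $\mq$ modulo $\tau$, which can reach $\Theta(k^2)$, so a single sorted-by-starting-position sweep with a single cursor does not run in time linear in input plus output size.

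The paper avoids this by first distributing both $\lmore \cap \mq$ and $\Hv \cap \mq$ into $\cO(k^2)$ buckets by residue modulo $\tau$, and running the left-to-right splitting sweep separately inside each bucket. Within a fixed residue class the progressions are genuine non-overlapping intervals, so the monotone two-cursor sweep you describe is correct; after the per-bucket sweeps, the results are merged and re-sorted, which costs an additional $\Ohtilde(k^{3.5}W^4)$ but does not affect the asymptotics. Your output-size bound (each deleted point increases the count by at most one) and running-time calculation are otherwise correct once this batching is in place.
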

    \begin{claimproof}
        We intend to use equation \eqref{eq:mq}, relying on \cref{claim:hmq,claim:fmq}
        to compute $\Hv \cap \mq$ and a representation of $\lmore \cap \mq$ as
        $\cO(k^{3.5}W^4)$ disjoint arithmetic progressions with difference $\tau$ in
        $\Ohtilde(k^{3.5}W^4)$ time in total.
        Then, we process the elements of these two sets in $\cO(d/q\cdot \tau)=\cO(k^2)$
        batches, where each batch contains all elements with a specific residue modulo
        $\tau$.
        For such a fixed residue, we scan in parallel the arithmetic progressions in
        $\lmore \cap \mq$ and
        the positions in $\Hv \cap \mq$, both of which are sorted in increasing order.
        When some element of $\Hv \cap \mq$ is contained in some arithmetic progression in $\lmore \cap \mq$,
        this arithmetic progression is split into two parts, either of which may be empty.
        The number of the resulting arithmetic progressions is upper-bounded by the total
        size of the input, that is, $\cO(k^{3.5}W^4)$;
        we sort them according to their starting positions in $\Ohtilde(k^{3.5}W^4)$ time.
    \end{claimproof}
    This concludes the proof of this lemma.
\end{proof}

\section{Fast Data Structures for (Bleach-Commit) Dynamic Puzzle Matching}\label{sec:dpm}

\subsection{On Matrix Multiplication}

Before we discuss how to efficiently compute ferns,
it is instructive to take a step back to abstractly describe two different schemes that we
use to efficiently \emph{multiply} several matrices.

To this end, we introduce the problem \PDPM, which is of a similar flavor compared to
\DPM.
The main difference of \PDPM and \DPM is that the former directly operates on
matrices, while the latter operates on puzzle pieces (that is, pairs of strings).

\begin{problem}{Dynamic\-Matrix\-Multiplication}
    \label{tvssrtdhgs}
    \PObject{A sequence of matrices \(\F \coloneqq F_1, \dots, F_z\in \Rp^{q\times q}\) that share a common matrix
    property \(P\); see \cref{def:fernprop}.}%

    \PInit{\begin{description}[left=0em..0em,itemindent=-.5em]
        \item {\tt DMM-Init($q$, $\F$)}:
            Given a dimension \(q\) and an initial sequence $\F$ of matrices that
            satisfy $P$, initialize the sequence of matrices to \(\F\).
    \end{description}}

    \PUpdate{\begin{description}[left=0em..0em,itemindent=-.5em]
        \item {\tt DMM-Delete(\(i\))}:
            Delete the $i$-th element of $\F$.
        \item {\tt DMM-Insert($F$, \(i\))}:
            Insert a matrix $F\in \Rp^{q\times q}$ with property $P$ after the $i$-th element of $\F$.
        \item {\tt DMM-Substitute($F$, \(i\)):}
            Substitute the $i$-th element of $\F$ with a matrix $F\in \Rp^{q\times q}$ with property $P$.
        \item {\tt DMM-Split(\(i\))}:
            Split the sequence after the \(i\)-th element and return an instance of the
            data structure for the prefix and an instance of the data structure for the
            suffix.
        \item {\tt DMM-Merge(\(\F'\))}:
            Given an instance of \PDPM representing \(\F'\) (a sequence of matrices with property $P$), append the sequence \(\F'\) to the sequence \(\F\).
    \end{description}}

    \PQuery{\begin{description}[left=0em..0em,itemindent=-.5em]
        \item {\tt DMM-Matrix()}:
            Return \(F_1 \oplus \dots
            \oplus F_z\).
        \item {\tt DMM-Vector($v$)}:
            Given a vector \(v\in \Rp^{q\times 1}\), return \(F_1 \oplus \dots \oplus F_z \oplus v\).
    \end{description}}%
\end{problem}

We discuss two different implementations for \PDPM; we start with
an implementation that allows for fast {\tt DMM-Matrix} operations at the cost of
only moderately fast update operations.

\begin{lemma}
    \dglabel{lemma1}
    Let us fix a matrix property $P$ with \(T_M(q)\)-time matrix multiplication and
    \(T_V(q)\)-time vector multiplication.

    Then, there is an implementation for \PDPM such that
    \begin{itemize}
        \item {\tt DMM-Init($q$, $\F'$)} takes time
            \(\Oh(|\F'| \cdot T_M(q))\).
        \item Any update operation takes time
            \(\Oh(\log |\F| \cdot T_M(q))\).
        \item {\tt DMM-Matrix()}
            takes time proportional to the time required to write
            down the output (that is, typically \(\Oh(q^2)\)).
        \item {\tt DMM-Vector($v$)}
            takes time \(\Oh(T_V(q))\) plus time proportional to the time required to write
            down the output (that is, typically \(\Oh(q)\)).
    \end{itemize}
    Additionally, we support an operation {\tt DMM-Root()} with constant-time read-only
    access to a pointer to a matrix corresponding to the result of {\tt DMM-Matrix()}.
\end{lemma}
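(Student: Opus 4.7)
The plan is to implement \PDPM by maintaining a self-balancing binary tree $\mathcal{T}$ (e.g., a weight-balanced tree, a 2--3--4 tree, or a treap with fixed deterministic priorities for split/merge) whose leaves, read left to right, are the matrices $F_1,\ldots,F_z$ of the maintained sequence $\F$. At each internal node $v$, we cache the $(\min,+)$-product $F^v$ of the matrices stored at the leaves of $v$'s subtree. The matrix property $P$ is preserved under $(\min,+)$-multiplication by definition, so every such $F^v$ satisfies $P$, and in particular the root $r$ of $\mathcal{T}$ stores $F^r = F_1\oplus\cdots\oplus F_z$.

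For \texttt{DMM-Init}, we build a perfectly balanced tree over $\F'$ in $\cO(|\F'|)$ time and then fill in the cached products bottom-up with $|\F'|-1$ calls to the matrix-multiplication primitive, giving $\cO(|\F'|\cdot T_M(q))$ total time. For \texttt{DMM-Delete}, \texttt{DMM-Insert}, and \texttt{DMM-Substitute}, we first perform the structural update on $\mathcal{T}$ in the standard way, which touches only $\cO(\log|\F|)$ nodes (leaf plus ancestors, plus $\cO(1)$ rotations per level in a balanced scheme). Afterwards, we walk up the affected ancestors and recompute their cached products, one $T_M(q)$ multiplication per ancestor, giving the claimed $\cO(\log|\F|\cdot T_M(q))$ bound. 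For \texttt{DMM-Split} and \texttt{DMM-Merge}, we rely on the fact that the chosen tree supports split and concatenation in $\cO(\log|\F|)$ structural modifications (this is standard for, e.g., treaps with deterministic priorities, weight-balanced trees, or 2--3--4 trees with parent pointers); again every touched internal node is refreshed with one matrix multiplication, costing $\cO(\log|\F|\cdot T_M(q))$ in total.

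For the queries, \texttt{DMM-Matrix()} copies out the cached product at the root in time proportional to the output size (typically $\cO(q^2)$), while \texttt{DMM-Vector($v$)} reads $v$ and applies the $T_V(q)$-time matrix-vector primitive to the root matrix, which dominates up to the linear time to emit the output vector. For \texttt{DMM-Root()}, we simply expose a pointer to the cached matrix at the root node; since that pointer is maintained as a field of the root and each operation finishes by refreshing that field, constant-time read-only access is immediate.

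The only real subtlety is showing that split/merge can be carried out with $\cO(\log|\F|)$ \emph{cache refreshes} (and not merely $\cO(\log|\F|)$ \emph{structural} rotations amortized), since each refresh costs a full $T_M(q)$ and amortization across a query would be unacceptable. This is where the choice of a balanced tree supporting split/merge with worst-case $\cO(\log n)$ touched nodes matters; a 2--3--4 tree (or equivalently a red-black tree) with parent pointers suffices, as both operations can be decomposed into $\cO(\log n)$ joins of balanced subtrees along the spine, with the cached product at each spine node recomputed from those of its two children via one call to the matrix-multiplication primitive. With this choice the four claimed bounds all hold simultaneously, completing the implementation.
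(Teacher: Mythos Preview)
Your proposal is correct and follows essentially the same approach as the paper: maintain a balanced binary tree with $F_i$ at the leaves and the $(\min,+)$-product of the subtree cached at each internal node, so that the root stores the full product and all operations reduce to standard balanced-tree manipulations with one matrix multiplication per touched internal node. Your write-up is more detailed than the paper's (in particular on the worst-case split/merge bound), but the idea is identical.
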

\begin{proof}
    We maintain a balanced binary tree; we store \(F_i\) at leaf \(i\) and at an
    internal vertex we store the \((\min,+)\)-product of its (up to) two children.

    At initialization, we build the tree in a bottom-up fashion using a linear number of
    \((\min,+)\)-products in total.

    For updates, we insert, delete, or substitute the corresponding element of the
    sequence and then update its (former) parents and rebalance as necessary.
    Further, binary trees readily support both  split and merge operations using a logarithmic
    number of matrix multiplications.

    For the query operations, we either return the root of our tree (as we also do for the
    extra {\tt DMM-Root()}) or the root multiplied
    with the given vector.

    Associativity of the \((\min,+)\)-product ensures correctness;
    the running time guarantees follow from basic properties of balanced binary trees.
\end{proof}

Secondly, we discuss an even easier implementation with faster update and {\tt DMM-Vector}
operations at the cost of slower {\tt DMM-Matrix} operations.

\begin{lemma}
    \dglabel{lemma2}
    Let us fix a matrix property $P$ with \(T_M(q)\)-time matrix multiplication and
    \(T_V(q)\)-time vector multiplication.

    Then, there is an implementation for \PDPM such that
    \begin{itemize}
        \item {\tt DMM-Init($q$, $\F'$)} takes time
            \(\Oh(|\F'|)\)
            (where we assume that input matrices do not need to be copied).
        \item Any update operation takes time
            \(\Oh(1)\)
            (where we assume that input matrices do not need to be copied).
        \item {\tt DMM-Matrix()}
            takes time
            \(\Oh(|\F| \cdot T_M(q))\).
        \item {\tt DMM-Vector($v$)}
            takes time \(\Oh(|\F| \cdot T_V(q))\).
    \end{itemize}
\end{lemma}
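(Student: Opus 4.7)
\medskip

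\noindent\textbf{Proof proposal.} The plan is to implement \PDPM by the most naive data structure possible: a doubly-linked list storing pointers to the matrices $F_1,\ldots,F_z$, together with handles that identify positions in the list. Since the statement permits us to avoid copying the input matrices, every update just manipulates pointers in the list and never touches matrix entries, giving $\Oh(1)$ time per update provided that each update is handed a pointer to the relevant position. In particular, {\tt DMM-Insert}, {\tt DMM-Delete}, and {\tt DMM-Substitute} are standard $\Oh(1)$ linked-list operations, {\tt DMM-Split} cuts the list into two lists in $\Oh(1)$ time by severing a single pair of pointers, and {\tt DMM-Merge} concatenates two lists in $\Oh(1)$ time by linking the tail of one list to the head of the other. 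Initialization with $|\F'|$ pointers takes $\Oh(|\F'|)$ time and does not invoke any matrix operations.

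First I would implement {\tt DMM-Matrix()} by a left-to-right traversal: initialize an accumulator $A \gets F_1$, then, for $i=2,\ldots,z$, replace $A$ by $A \oplus F_i$ using the supplied $T_M(q)$-time matrix multiplication for property $P$. Since $P$ is a matrix property, every intermediate accumulator still satisfies $P$, so the multiplication algorithm remains applicable, and associativity of $(\min,+)$ yields the desired output. This costs $z-1$ matrix multiplications, so the total time is $\Oh(|\F| \cdot T_M(q))$.

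Next I would implement {\tt DMM-Vector($v$)} by a right-to-left traversal that keeps only a vector in memory: initialize $u \gets v$, then, for $i=z,z-1,\ldots,1$, replace $u$ by $F_i \oplus u$ using the $T_V(q)$-time matrix-vector multiplication algorithm for property $P$. By associativity of the $(\min,+)$-product, the final $u$ equals $F_1 \oplus \cdots \oplus F_z \oplus v$. This costs $z$ matrix-vector multiplications, for a total of $\Oh(|\F| \cdot T_V(q))$. Crucially, we never form an intermediate matrix product here, which is what saves a factor compared to the \cref{lemma1} implementation.

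There is no real obstacle to overcome: correctness follows from associativity of $(\min,+)$ and the fact that $P$ is closed under $(\min,+)$-products, and the running times follow by direct counting. The only mild subtlety worth flagging is that the matrix-vector query must be executed right-to-left (not left-to-right) so that each operation is a matrix-vector product rather than a matrix-matrix product; otherwise, the running-time bound $\Oh(|\F|\cdot T_V(q))$ would not be attained.
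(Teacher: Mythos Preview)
Your proposal is correct and essentially identical to the paper's proof: both maintain a linked list of pointers for $\Oh(1)$ updates, compute {\tt DMM-Matrix} by a naive fold over the list, and compute {\tt DMM-Vector} by associating right-to-left so that every step is a matrix--vector product. Your explicit remark that the intermediate accumulator in {\tt DMM-Matrix} still satisfies $P$ (so the supplied multiplication algorithm remains applicable) is a nice detail the paper leaves implicit.
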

\begin{proof}
    We maintain a (linked) list of pointers to the input matrices.
    For updates, we insert, delete, or substitute the corresponding element of the
    sequence by updating pointers or concatenating the (linked) lists of pointers.

    Now, for the query operations, we naively multiply the elements of the sequence.
    For the matrix operation, this involves \(|\F| - 1\) multiplications; for the vector
    operation, we exploit associativity and compute
    \(F_1 \oplus \cdots \oplus F_z \oplus v\)
    as
    \[F_1 \oplus (\cdots \oplus (F_z \oplus v) \cdots);\]
    that is, we start at the right and always \((\min,+)\)-multiply a matrix and a vector.

    Now, both correctness and running time guarantees are immediate.
\end{proof}

\subsection{Growing and Caring for Large Ferns}

In the setting of \BCDPM, we unfortunately have no direct guarantees on the lengths of the
input strings. Fortunately, we are able to show that the guaranteed bound on the summed
edit distance implies that there are large parts of the input strings that repeat in many
strings. This in turn allows us to split the input strings into shorter strings, compute
ferns for the shorter strings using \cref{lem:small_ed}, and then obtain ferns for the original
strings by applying \cref{lemma1}.

\begin{theorem}
    \dglabel{11-3-1}[dzefaaseda,cor:dproduct,lem:trim,fct:LValignment,lem:small_ed,gucbenbjuj,lm:build_mtx_dscomp,lem:etdyn,lemma1]
    For any family of strings $\S$ (of maximum length \(n\)) such that $\ed(\S)\leq d$, as
    well as a  positive integer $k$,
    there is an algorithm that given \(\S\), \(d\), \(k\), and
    oracle access to a normalized weight function
    \(w : \sqEsigma \to \intvl{0}{W}\), computes
    for each pair $(\dot{P},\dot{T}) \in \S \times \S$%
    \footnote{Recall from \cref{dzefaaseda} that \(\ed(\S) \le d\) implies \(|\S| = \Oh(d)\). }
    \begin{itemize}
        \item a Monge \((\min(|\dT|, d + k), k)\)-fern
            in total time $\Oh((d + k)^4 \log^2 (n(d+k)))$ in the \modelname model;
            and
        \item a \((W+1)\)-bounded difference Monge \((\min(|\dT|, d + k), k)\)-fern
            in total time
            $\Oh(W (d + k)^3 \log^2 (n(d+k)))$ in the \modelname model
            if all weights of \(w\) are integer.
    \end{itemize}
\end{theorem}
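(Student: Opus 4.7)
The plan is to leverage the fact that $\ed(\S)\le d$ implies $|\S|=O(d)$ by \cref{dzefaaseda} and that every string of $\S$ lies within $O(d)$ unweighted edit distance of any chosen reference $\hat S\in\S$, so a single ``skeleton'' preprocessing can be reused across all $|\S|^2=O(d^2)$ fern computations. First, I would pick an arbitrary $\hat S\in\S$ and, using \cref{fct:LValignment} for each $S\in\S$, compute an alignment $\A_S\colon S\onto\hat S$ of unweighted cost $O(d)$ in $O(d^2)$ time per string, for a total of $O(d^3)$ time. Each $\A_S$ identifies at most $O(d)$ edit positions in $\hat S$.

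Second, for each pair $(\dP,\dT)\in\S\times\S$, I would use $\A_\dP$ and $\A_\dT$ to decompose $(\dP,\dT)$ into an ordered sequence of puzzle pieces with overlap $\Delta=\Theta(d+k)$. Split points are placed at the edit positions contributed by $\A_\dP$ or $\A_\dT$, together with anchor points spaced $\Delta$ apart in $\hat S$; by merging maximal stretches of consecutive \emph{identity pairs} (where $\dP_i=\dT_i$ is a substring of $\hat S$ unedited by either alignment) into a single identity piece, the resulting sequence has length $O(d)$ and torsion $O(d)\le\Delta/2-k$. By \cref{cor:dproduct}, the $(\min,+)$-product of the leading, internal, and trailing piece ferns is $k$-equivalent to $\dmat{\dP}{\dT}{w}{\Delta}$, so it suffices to produce a Monge (respectively bounded-difference Monge) fern for each piece and multiply them together.

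Third, I would share work across pairs. For each $S\in\S$ and each of the $O(d)$ edit positions in $\A_S$, precompute once a Monge $(\Delta,k)$-fern of the length-$\Theta(d+k)$ neighborhood in $S$ against the corresponding fragment of $\hat S$ via \cref{lem:small_ed} (or directly via \cref{gucbenbjuj} on the $O((d+k)^2)$-sized alignment graph), restricting with \cref{lem:trim}; for integer weights, use the bounded-difference variant of \cref{lem:small_ed} and track the $(W+1)$-bounded-difference property via \cref{lem:bdtocore}. This yields $O(d^2)$ precomputed ferns in $O(d^2(d+k)^2\log(d+k))$ total time for general weights (respectively $O(d^2 W(d+k)^2\log(d+k))$ for the integer case). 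For identity pieces $(Y,Y)$ with $Y$ a substring of $\hat S$, one may exploit that the main diagonal of $\oAGw(Y,Y)$ is zero-weight: any bounded-cost shortest path enters the diagonal near the top, travels along it at zero cost, and exits near the bottom, so the fern depends only on $O(d+k)$ boundary characters of $Y$ and can be computed in $O((d+k)^2\log(d+k))$ time by MSSP on the corresponding length-$\Delta$ band subgraph.

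Fourth, for each pair I would feed the $O(d)$ piece ferns into the tree-based scheme of \cref{lemma1}, instantiated with the Monge property (\cref{uogiuvnbph}) for general weights or with the $(W+1)$-bounded-difference Monge property (\cref{azjqbhozka}) for integer weights, then read off the product and restrict via \cref{lem:trim}. By \cref{fct:monge-product-is-monge} and \cref{lem:bdproduct}, Mongeness and bounded-difference are preserved, and by \cref{cor:dproduct} the product is $k$-equivalent to the desired $(\Delta,k)$-fern. The hard part will be keeping the per-pair combination within budget: a naive per-pair initialization of \cref{lemma1} costs $O(d\cdot T_M(d+k))$, which summed over $O(d^2)$ pairs overshoots the target by roughly a factor of $d$. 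Overcoming this requires batching the $(\min,+)$-products of identity-piece ferns, which recur across many pairs sharing common identity endpoints in $\hat S$---concretely, by maintaining a segment-tree-like structure over the partition points in $\hat S$ whose nodes lazily cache ferns for dyadic identity substrings and are (re)used as fixed prefixes/suffixes across pairs, bringing the amortized cost per pair down to $O((d+k)^2\log(d+k))$ for general weights (respectively $O(W(d+k)\log(d+k))$ for integer weights) and matching the stated bounds.
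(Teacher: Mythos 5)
Your high-level plan---pick a reference string, align everything to it, break each pair into short puzzle pieces, compute local ferns, and combine via a balanced tree following \cref{lemma1} and \cref{cor:dproduct}---is in the right direction, but it has two serious gaps.

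First, and most importantly, you propose to ``pick an arbitrary $\hat S \in \S$.'' While indeed each $\ed(S,\hat S) = \Oh(d)$ individually, the \emph{sum} $\sum_{S\in\S}\ed(\hat S, S)$ can be as large as $\Theta(d^2)$ for a badly chosen $\hat S$. This matters because, at the final combining step, the number of ferns you need to substitute for a pair $(\dP,\dT)$ is $\Theta(\ed(\hat S,\dP) + \ed(\hat S,\dT))$, and the total across all $\Oh(d^2)$ pairs is $|\S|\cdot\sum_{S\in\S}\ed(\hat S,S)$. With an arbitrary $\hat S$ this is $\Oh(d^3)$ substitutions, each costing $T_M(d+k)$ and a logarithmic overhead, overshooting your stated bounds by a factor of $d$. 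The fix is to take $\hat S$ to be a 2-approximate median of $\S$, for which $\sum_{S}\ed(\hat S,S)\le 2\ed(\S)=\Oh(d)$. The paper computes exactly such a $\hat S$ (via all-pairs Landau--Vishkin) and relies on this bound to get $\Oh(d^2)$ total substitutions. Your segment-tree-with-dyadic-caching idea is essentially re-deriving the balanced tree already supplied by \cref{lemma1}; once the right center is chosen, the simple ``initialize with the all-$(\hat S_i,\hat S_i)$ sequence, substitute the $\Oh(\ed(\hat S,\dP)+\ed(\hat S,\dT))$ dirty ferns, query, undo'' scheme achieves the bound without any extra machinery.

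Second, your claim that for an identity piece $(Y,Y)$ the fern ``depends only on $\Oh(d+k)$ boundary characters of $Y$'' is not justified as stated. The argument that a bounded-cost shortest path ``enters the diagonal near the top \dots and exits near the bottom'' fails if $Y$ has a small period $\delta$: then diagonal edges at offset $\delta$ can be free, and a cost-$\le k$ path need not hug the main diagonal throughout the middle of the graph. You would need a more careful case analysis to push this through. The paper sidesteps the issue entirely: for ``pure'' (identity) tiles it simply invokes the general-purpose \growFern solver of \cref{lem:small_ed} directly on $(\hat S_i,\hat S_i)$, paying $\Oh(\nabla^3\log^2(n\nabla))$ per tile, which already fits the budget over the $\Oh(d)$ pure tiles; for the dirty tiles it bounds their lengths by $\Oh(\nabla)$ by construction (your decomposition should do the same, rather than keeping boundary-only approximations), so \cref{gucbenbjuj} or the dynamic data structure of \cref{lem:etdyn} applies directly.
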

\begin{proof}
    Set $\nabla \coloneqq 8d +2k$.

    Now, write $\Sr$ for a 2-approximate center of $\S$.
    That is, we have $\sum_{\dot{S} \in \S} \ed(\Sr,\dot{S}) \leq 2\ed(\S)$.%
    \footnote{It is worth highlighting that the approximate center is with respect to
    the unweighted edit distance.}
    Further, for each \(\dot{S} \in \S\) write $\A_{\dot{S}}: \hat{S} \onto \dot{S}$
    for an optimal alignment.

    Next, for a fragment $V=U\fragment{i}{j}$ of a string $U$, by $V^{+\ell}$, we denote the
    fragment $U\fragmentco{i}{\min\{j+\ell\},|U|}$.

    We start with structural characterizations and then discuss how to turn said
    characterizations into efficient algorithms.

    \begin{claim}[Decomposing $\hat{S}$]\label{cl:decompose}
        There is a decomposition $\hat{S} = \bigcirc_{i=0}^{\tau-1} \hat{S}_i$
        such that all of the following hold.
        \begin{enumerate}
            \item We have $\tau \in \cO(d)$. \label{tmeqhnuais}
            \item For every $i\in \fragmentco{0}{\tau - 1}$, we have
                $|\hat{S}_i| \ge \nabla$.
            \item \label{jynijvbyrx}
                For every $i\in \fragmentco{0}{\tau}$,
                either $|\hat{S}_i|=\cO(\nabla)$ or,
                for all $\dot{S} \in \S$, we have
                $\A_{\dot{S}}(\hat{S}_i^{+\nabla}) = \hat{S}_i^{+\nabla}$.
        \end{enumerate}
    \end{claim}
    \begin{claimproof}
        If \(|\Sr| < 100 \nabla\), we return it as-is.

        Otherwise, we decompose \(\hat{S}\) into length-\(\nabla\) fragments (where the last fragment
        has a length in \(\fragment{2\nabla}{4\nabla}\));
        \(\hat{S} \coloneqq \bar{S}_0 \cdots \bar{S}_{\bar{\tau}-1}\).
        In particular, we have \(|\bar{S}_i| = \nabla\) for all
        \(i\in\fragmentco{0}{\bar{\tau} - 1}\).

        Observe that each \(\bar{S}_i\) satisfies
        \cref{jynijvbyrx} due to \(|\bar{S}_i| = \Oh(\nabla)\).
        However, \(\bar{\tau}\) might not satisfy \cref{tmeqhnuais}.

        Hence, let us say that an integer $i \in \fragmentco{0}{\bar{\tau}}$ is \emph{dirty} if and
        only if
        $\bar{S}_i \neq \A_{\dot{S}}(\bar{S}_i)$ for any $\dot{S} \in \S$.
        An integer is \emph{pure} if it is not dirty.
        Now, for each maximal interval $\fragment{i}{j} \subseteq \fragmentco{0}{\bar{\tau}}$
        of pure integers, merge tiles with indices in $\fragment{i}{j-1}$ in the natural
        order and re-index the resulting tiles.

        Now, for a dirty \(i\), the string \(\bar{S}_i\) remains
        unchanged (albeit with a potentially new index) and has all claimed properties.
        Further, we only ever merge a run of pure integers;
        which ensures that the second condition of \cref{jynijvbyrx} is satisfied.
        In total, we thus obtain the claim.
    \end{claimproof}

    If \(|\Sr| < 100 \nabla\), in the following, we construct \((\min(|\dT|, d + k),
    k)\)-ferns; only in this case we might have \(|\dT| < d + k\).
    For notational simplicity, we hence ignore this case in our notation from here onward.

    Next, we prove useful properties of the decomposition of \cref{cl:decompose}.
    To this end, for each $i\in \fragmentco{0}{\tau}$,
    for each $\dot{S} \in \S$, set $\dot{S}_i \coloneqq \A_{\dot{S}}(\hat{S}_i)$;
    and set $\S_i \coloneqq \{\dot{S}_i^{+\nabla} : \dot{S} \in \S\}$.

    We refer to \(\S_i\) as the \(i\)-th tile collection.
    Further, we say that an index \(i\) is \emph{pure} if \(|\S_i| = 1\).
    Otherwise, \(i\) is \emph{dirty}.

    It is instructive to highlight an easy property of the decomposition of
    \cref{cl:decompose}.

    \begin{claimq}\label{obs:tbd}
        For all dirty $i$, we have
        $|\dot{S}| = \cO(\nabla)$
        for all $\dot{S} \in \S_i$.
    \end{claimq}

    Next, we bound the total size of tile collections of dirty indices.

    \begin{claim}
        \label{cl:small_edit}
        If \(i\) is pure, then \(\ed(\S_i) = 0\).
        Further, we have $\sum_i \ed(\S_i) = \cO(d)$.
    \end{claim}
    \begin{claimproof}
        First, for pure \(i\), the corresponding tile collections contain only a single
        element each; hence the claim follows.

        For the second claim, we prove that for all $\dot{S} \in \S$, we have
        \[\sum_{i=0}^{\tau-1} \ed(\hat{S}_i^{+\nabla}, \dot{S}_i^{+\nabla}) \leq
        3\ed(\hat{S}, \dot{S}).\]

        To this end, observe that aligning all $\hat{S}_i$'s with $\dot{S}_i$'s costs
        \(\ed(\hat{S}, \dot{S})\) by construction as partitions of \(\hat{S}\) and
        \(\dot{S}\).

        Next, we bound the contribution of the overlaps. To this end, observe that
        aligning $\hat{S}_i^{+\nabla} \setminus \hat{S}_i$ with
        $\dot{S}_i^{+\nabla} \setminus \dot{S}_i$ costs \(2\ed(\hat{S}, \dot{S})\), as
        in addition to the original edits (which we have to account for again),
        the alignment $\A_S$ might change the lengths of the overlaps by at most \(\ed(\hat{S}, \dot{S})\)
        characters in total.
    \end{claimproof}

    In particular, \cref{cl:small_edit} implies that \(\sum_{i=0}^{\tau-1} |\S_i| =
    \Oh(d)\) (recall \cref{dzefaaseda}).

    Next, we argue that we obtain good puzzles out of our tile collections.

    \begin{claim}[Bounded torsion]
        \label{cl:small_torsion}
        For every $\dot{P}, \dot{T} \in \S$,
        the torsion of the $\nabla$-puzzle
        $(P_0^{+\nabla},T_0^{+\nabla}), \dots$,
        $(P_{\tau-1}^{+\nabla},T_{\tau-1}^{+\nabla})$
        is at most $4d = \nabla/2 - k$.
    \end{claim}
    \begin{claimproof}
        For each $i \in \fragmentco{0}{\tau}$, we have
        \[ ||P_i^{+\nabla}|-|T_i^{+\nabla}|| = ||P_i|-|T_i|| \leq \ed(P_i, T_i) \le  \ed(\hat{S}_i,P_i) +
        \ed(\hat{S}_i,T_i). \]

        Hence, as \(\Sr\) is a 2-approximate center of \(\S\), we obtain
        \[ \tor((P_0^{+\nabla},T_0^{+\nabla}), \dots,
        (P_{\tau-1}^{+\nabla},T_{\tau-1}^{+\nabla}))= \sum_{i=0}^{\tau-1}
            ||P_i^{+\nabla}|-|T_i^{+\nabla}|| \leq 2d + 2d = 4d
        .\claimqedhere\]
    \end{claimproof}
    \begin{claim}
        Fix \(\dot{P}, \dot{T} \in \S\) and write
        \(\fml^{P_0,T_0}_k, \fmi^{P_1, T_1} \dots, \fmi^{P_{\tau-2},T_{\tau-2}}_k, \fmt^{P_{\tau-1},T_{\tau-1}}_k\)
        for \((\nabla,k)\)-ferns for
        the $\nabla$-puzzle
        $\P\P \coloneqq (P_0^{+\nabla},T_0^{+\nabla}), \dots$,
        $(P_{\tau-1}^{+\nabla},T_{\tau-1}^{+\nabla})$.
        Then,
        \[
            F_k^{\dot{P},\dot{T}} \coloneqq \fml^{P_0,T_0}_k
            \oplus \fmi^{P_1,T_1}_k
            \oplus
            \dots
            \oplus \fmi^{P_{\tau-2},T_{\tau-2}}_k
            \oplus
            \fmt^{P_{\tau-1},T_{\tau-1}}_k
        \] is a \((\nabla,k)\)-fern (and thus a \((d + k,k)\)-fern after suitable trimming) for \((\dot{P},\dot{T})\).
    \end{claim}
    \begin{claimproof}
        We wish to employ \cref{cor:dproduct}.
        To this end, it suffices to observe that indeed \(\nabla /2 - \tor(\P\P) \ge
        4d + k - 4d = k\).
        Finally, for the trimming we use \cref{lem:trim}.
    \end{claimproof}

    With our structural insights gathered, it is time to implement them.
    We start by recalling an algorithm to compute a 2-approximate center from
    \cite[Lemma~9.9]{ckw22}.

    \begin{claim}[{\cite[Lemma~9.9]{ckw22}}]
        \label{uqzrrhcclx}
        Given a string family $\S$, we can construct a string $\Sr \in \S$ with
        $\sum_{S\in S}\ed(S,\Sr)\le 2\ed(\S)$
        in $\Oh(1+\ed(\S)^3)$ time in the \modelname model.
        Further, in the same time, we can compute alignments \(\A_S : \Sr \onto S\) for
        every \(S\in\S\).
    \end{claim}
    \begin{claimproof}[Proof sketch.]
        For each pair of strings \(S, S' \in \S\), we use \cref{fct:LValignment}
        to compute \(\ed(S,S')\), as well as a corresponding alignment.
        We return the string \(\Sr\) that minimizes \(\sum_{S\in \S} \ed(S,\Sr)\),
        as well as the corresponding alignments.

        We obtain the correctness from the triangle inequality.

        For the running time, we compute
        \[
            \Oh\big(\sum_{S, S'\in\S} \ed(S,S')^2\big)
            = \Oh\big(\sum_{S,S'\in\S} \ed(S,\Sr)^2 + \ed(S',\Sr)^2\big)
            = \Oh(|\S| + \ed(\S)^2)
            = \Oh(1 + \ed(\S)^3).
            \claimqedhere
        \]
    \end{claimproof}

    Now, we run the algorithm from  \cref{uqzrrhcclx}: we compute a 2-approximate center
    \(\Sr\) and corresponding alignments \(\Sr \onto S\) for every \(S \in \S\).
    Using the alignments obtained from \cref{uqzrrhcclx} (which runs in time \(\Oh(1 +
    d^3)\) in our case),
    we then compute the decomposition of \cref{cl:decompose}, again
    in time \(\Oh(1 + d^3)\).
    This in turn then allows to compute the tile collections in
    the same time.

    Next, for each tile collection \(\S_i\), we compute ferns for each pair of strings
    \(\dP, \dT \in \S_i\); the details differ depending on the guarantees that we have for
    the weight function \(w\).

    \begin{description}
        \item[General weights.]
            For each pure index \(i\), we use {\tt\apmsmalled} of \cref{lem:small_ed}, on
            (suitably trimmed versions of)
            \(\hat{S}_i, \hat{S}_i, \nabla\), and \(w\) to obtain a Monge
            \((\nabla,\nabla)\) fern for this index.
            As there are at most \(\Oh(d)\) pure indices, this costs in total
            \(\Oh(d \nabla^3 \log^2 (n\nabla)) = \Oh((d + k)^4 \log^2(n(d + k)))\) time.

            For each dirty index \(i\), we use \cref{gucbenbjuj}, on each pair of
            \(\dot{P}, \dot{T} \in \S_i\). In total, we call
            \cref{gucbenbjuj} at most \(\Oh(d^2)\) times; each time on a pair of strings of total
            length \(\Oh(\nabla)\). Hence, this step takes in total \(\Oh(d^2 \nabla^2 \log
            \nabla) = \Oh((d + k)^4 \log(d + k))\) time.
        \item[Integer weights.]
            For each pure index \(i\), we use {\tt\apmsmalled} of \cref{lem:small_ed}, on
            (suitably trimmed versions of)
            \(\hat{S}_i, \hat{S}_i, \nabla\), and \(w\) to obtain a \((W+1)\)-bounded
            difference, Monge \((\nabla,\nabla)\) fern for this index.
            As there are at most \(\Oh(d)\) pure indices, this costs in total
            \(\Oh(d W \nabla^2 \log^2 (n\nabla) = \Oh(W (d + k)^3 \log^2(n(d + k)))\) time.
            After computing the ferns, we use \cref{lm:build_mtx_dscomp} to construct the
            corresponding core-based matrix oracles.

            For the dirty indices, we proceed akin to \cite{ckw22}, except that we use the
            dynamic weighted edit distance data structure from \cite[Theorem~5.10]{gk24} (see
            \cref{lem:etdyn}).
            We initialize the data structure with the piece \((\Sr_i, \Sr_i)\).
            Next, for each pair \((\dP_i,\dT_i) \in \S_i \times \S_i\), we use the update
            operations to make the data structure represent \((\dP_i,\dT_i)\),
            followed by a query operation to obtain a corresponding fern.
            Observe that the data structure already outputs the ferns as a core-based
            matrix oracle.

            Across all at most \(\Oh(d)\) dirty indices, the initializations take time
            \(\Oh(dW\nabla^2 \log \nabla) = \Oh(W(d+k)^3 \log (d+k))\).
            Further, in total, we query all data structures at most \(\Oh(d^2)\) times;
            each such query is preceded by a constant number of calls to update
            operations. In total, this takes time \(\Oh(W\nabla d^2 \log^2 \nabla) =
            \Oh(W(d+k)^3 \log^2 (d + k))\).
            For constructing the core-based matrix oracles (for pure indices),
            we spend \(\Oh((d+k)^2 + W(d+k)\log(W(d+k)))\)
            time per pure index (of which there are at most \(\Oh(d)\) in total), which
            is dominated by the running time for constructing the ferns.
    \end{description}

    Finally, to compute a fern for each pair \(\dot{P}, \dot{T} \in \S \times \S\), we
    use \cref{lemma1} and proceed as follows.
    We initialize the data structure  with ferns for%
    \footnote{Technically, we consider a sequence of ferns \(F_0, \dots, F_{\tau-1}\) with
        \[
            F_0 \in \fmatsl{\hat{S}_{0}^{+\nabla}}{\hat{S}_{0}^{+\nabla}}{w}{\nabla}{k},
            \quad
            F_i \in \fmatsi{\hat{S}_{i}^{+\nabla}}{\hat{S}_{i}^{+\nabla}}{w}{\nabla}{k},
            \quad
            F_{\tau-1} \in
            \fmatst{\hat{S}_{\tau-1}^{+\nabla}}{\hat{S}_{\tau-1}^{+\nabla}}{w}{\nabla}{k}.
    \]}
    \[
        (\hat{S}_{0}^{+\nabla},\hat{S}_{0}^{+\nabla}), \dots,
        (\hat{S}_{\tau-1}^{+\nabla},\hat{S}_{\tau-1}^{+\nabla}).
    \]
    Next, we iterate over each pair \(\dot{P}, \dT  \in \S \times \S\); substitute
    the corresponding ferns, call {\tt DMM-Matrix} and undo the substitution by
    substituting the original ferns.
    We then use \cref{lem:trim} to trim the obtained ferns to \((d+k, k)\)-ferns.

    We obtain different running times depending on the type of ferns that we multiply.

    \begin{description}
        \item[General weights.]
            As all of our ferns are Monge, they have \(\Oh((d+k)^2)\)-time multiplication;
            hence our in total \(\Oh(d^2)\) substitution and updates cost \(\Oh((d + k)^4 \log d)\)
            time in total.

            In total, we obtain all desired ferns in time
            \(
                \Oh((d + k)^4 \log^2 (n(d + k)) );
            \) thereby completing the proof.
        \item[Integer weights.]
            As all of our ferns are Monge and \((W+1)\)-bounded difference, they have
            \(\Oh(W (d+k) \log(d + k))\)-time multiplication;
            hence our in total \(\Oh(d^2)\) substitution and updates cost \(\Oh(W (d +
            k)^3 \log^2(d + k))\)
            time in total.

            In total, we obtain all desired ferns in time
            \(
                \Oh(W (d + k)^3 \log^2 (n(d + k)) );
            \) thereby completing the proof.
            \qedhere
    \end{description}
\end{proof}

\subsection{Using Ferns: (Bleach-Commit) Dynamic Puzzle Matching}

\cref{11-3-1} allows us to efficiently compute ferns for a given collection of tiles
(out of which then the puzzle pieces are built).
What remains is to show how to efficiently maintain a sequence of said ferns under the
operations to be supported in \BCDPM. This, we tackle next.

\rdpmds
\begin{proof}
    We maintain an instance of the data structure of \cref{lemma2} on ferns that we
    maintain using multiple copies of the data structure from \cref{lemma1}.

    \subparagraph*{Initialization.}

    Given \(k\), \(\Delta\), and the families $\Sb$, $\Sm$, $\Sf$,
    we use \cref{11-3-1} with \( d \coloneqq \Delta - k\) (and then \cref{lem:trim} to cut
    out the required subferns) to obtain%
    \footnote{The torsion bound indeed implies \(\Delta \ge 2k\) and thus \(\ed(\S) \le k
        \le \Delta - k = d.\)}
    \begin{itemize}
        \item a leading \((\Delta,k)\)-fern for each pair of strings from \(\Sb\),
        \item an internal \((\Delta,k)\)-fern for each pair of strings from \(\Sm\), and
        \item a trailing \((\Delta,k)\)-fern for each pair of strings from \(\Sf\).
    \end{itemize}

    Further, we compute the \emph{neutral fern} \(\N\) that
    corresponds to the canonical pair \(\QQ\) using \cref{lem:small_ed}.
    Observe that \(\N\) is Monge, and
    \((W+1)\)-bounded difference if \(w\) is integer.
    We then create an instance \(I_{\N}\) of the data structure of \cref{lemma1} for a sequence of
    \(\alpha\) pointers to \(\N\).%
    \footnote{One might say that we grow a neutral fern tree of size \(\alpha\).}

    In the following, whenever we need a fern (or an instance of the data structure from
    \cref{lemma1}) for a blank piece of order \(\ell \le \alpha\), we use
    a corresponding {\tt DMM-Split} operation on \(I_{\N}\) to split of the first \(\ell\)
    ferns.
    As we never change the underlying ferns of \(I_\N\), we are able to precompute such a
    split for each \(\ell \le \alpha\), and may thus assume that accessing a blank piece
    of order \(\ell\) is a constant-time operation.

    Next, we split the input sequence \(\Y'\) into consecutive subsequences of pairs of
    strings and blank pieces.
    For each subsequence of pairs of strings, we initialize a data structure of
    \cref{lemma1} (using the ferns computed earlier).

    Finally, we initialize an instance \(I_{\Y}\) of the data structure of \cref{lemma2} on the sequence of pointers
    {\tt DMM-Root()} of the data structures for the subsequences of pairs of strings and
    the blank pieces; so that the sequence stored in \(I_{\Y}\) corresponds to the
    sequence \(\Y'\).

    As for the running time, we distinguish between the different types of weights that we
    consider.
    \begin{description}
        \item[General weights.]
            As \(d = \Oh(k)\), the calls to \cref{11-3-1} take time
            \(\Oh(k^4 \log^2 (\lambda k))\) where $\lambda$ is an upper bound on the lengths of the strings
            in $\Sb \cup \Sm \cup \Sf$; we obtain Monge \((d+k,k)\)-ferns in this case,
            which have \(\Oh((d+k)^2) = \Oh(k^2)\) multiplication and \(\Oh(d + k) =
            \Oh(d)\)-time vector multiplication (see \cref{uogiuvnbph}).

            Next, precomputing (the powers of) the neutral fern for all blank
            pieces of all orders up to \(\alpha =
            \Oh(k)\) does not exceed  \(\Oh(k^4 \log^2 (\lambda k))\).

            Next, the initialization of the data structures of \cref{lemma1} in total takes
            time \(\Oh(|\Y'| k^2)\); the initialization of the
            data structure for \cref{lemma2} runs in time \(\Oh(|\Y'|)\).

            In total, {\tt BCDPM-Init} thus takes time \(\Oh(k^4 \log^2 (\lambda k) + |\Y'| k^2 )\),
            as claimed.
        \item[Integer weights.]
            As \(d = \Oh(k)\), the calls to \cref{11-3-1} take time
            \(\Oh(k^3W \log^2 (\lambda k))\) where $\lambda$ is an upper bound on the lengths of the strings
            in $\Sb \cup \Sm \cup \Sf$; we obtain \((W+1)\)-bounded difference and
            Monge \((d+k,k)\)-ferns in this case,
            which have \(\Oh((W+1)(d+k) \log(d+k)) = \Oh(Wk \log k)\) multiplication and \(\Oh(d + k) =
            \Oh(d)\)-time vector multiplication (see \cref{azjqbhozka}).

            Next, precomputing (the powers of) the neutral fern for all blank
            pieces of all orders up to \(\alpha =
            \Oh(k)\) does not exceed  \(\Oh(Wk^3 \log^2 (\lambda k))\).

            Next, the initialization of the data structures of \cref{lemma1} in total takes
            time \(\Oh(|\Y'| kW \log k)\); the initialization of the
            data structure for \cref{lemma2} runs in time \(\Oh(|\Y'|)\).

            In total, {\tt BCDPM-Init} thus takes time \(\Oh(k^3W \log^2 (\lambda k) +
            |\Y'| k W \log k )\),
            as claimed.
    \end{description}

    \subparagraph*{The {\tt BCDPM-Set} operation.}

    For the  {\tt BCDPM-Set} operation, we swap out the pointer to the old neutral fern for
    the pointer to the new neutral fern. This takes constant time.

    \subparagraph*{The {\tt BCDPM-Commit} operation.}

    We use the {\tt DMM-Merge} operation of the data structure corresponding to the blank
    piece and its two neighbors. This takes time
    \(\Oh(k^2 \log(|\Y| + \alpha)) = \Oh(k^2 \log(|\Y| + k))\) for general weights and
    time
    \(\Oh(kW \log k \log(|\Y| + \alpha)) = \Oh(kW \log^2(|\Y| + k))\) for integer weights.

    \subparagraph*{The {\tt BCDPM-Bleach} operation.}

    We use twice the {\tt DMM-Split} operation of the data structure corresponding to the
    subsequence of pairs of strings to cut out the blank piece.
    In \(I_{\Y}\), we use one {\tt DMM-Delete} to delete the pointer to the old instance of
    the data structure from \cref{lemma1} and then three calls to {\tt DMM-Insert} to
    insert pointers to the prefix, the corresponding neutral fern, and the suffix.

    The only part that has a significant time cost is the {\tt DMM-Split} operation, which
    again takes time
    \(\Oh(k^2 \log(|\Y| + \alpha)) = \Oh(k^2 \log(|\Y| + k))\) for general weights and
    time
    \(\Oh(kW \log k \log(|\Y| + \alpha)) = \Oh(kW \log^2(|\Y| + k))\) for integer weights.

    \subparagraph*{The remaining update operations.}
    We forward the update to the corresponding instance of the data structure from
    \cref{lemma1}; the claimed running times follow.%
    \footnote{While not explicitly forbidden, we typically assume that such operations do
    not target a blank piece. If they do, we potentially first split the corresponding
    neutral fern.}

    \subparagraph*{The {\tt BCDPM-Query} operation.}
    We call {\tt DMM-Vector} of \(I_{\Y}\) with the all-zeroes vector and return any index
    of the resulting vector whose entry is at most \(k\).

    The correctness of this operation follows from \cref{cor:dproduct}:
    we use \((\Delta,k)\)-ferns of a \(\Delta\) puzzle with \(k \le \Delta/2 -
    \tor(\unpack(\Y))\).

    For the running time,
    as at any point, \(I_{\Y}\) stores an alternating sequence of ordinary ferns and a
    (power of the) neutral fern, we have \(|I_\Y| = \rho\), where \(\rho\) is the number
    of blank pieces in \(\Y\).
    Hence, this operation takes time \(\Oh(k \rho)\) for general weights and \(\Oh(k \rho
    \log \Delta)\) for integer weights (observe that for integer weights we have to pay an
    extra logarithmic factor for explicitly producing the resulting vector).

    In total, this completes the proof.
\end{proof}

\section{Fast Algorithms in Important Settings}\label{sec:concl}

In this section, we obtain efficient algorithms for \PMWED in important settings
by combining the \pillar model algorithm from \cref{thm:main} with efficient
implementations of the \pillar model in those settings.

\pillark*

\subsection*{The Standard Setting}

The \pillar model admits an optimal implementation in the standard
setting~\cite{F97,Bender2000,IPM}; see \cite[Theorem 7.2]{ckw20}.

\begin{theoremq}[{\cite{F97,Bender2000,IPM}}]\dglabel{thm:pilis}
    After an $\cO(n)$-time preprocessing of~a collection of~strings of~total length $n$,
    each \pillar operation can be performed in $\Oh(1)$ time.
\end{theoremq}

Combined, the standard trick, \cref{thm:main,thm:pilis} yield \cref{thm:stalgmaink4,thm:stalgmaink35}.

\stalgmainkfour

\stalgmainkthreehalf

Recall that for the standard setting, we also have a non-\pillar-based algorithm that
outperforms \cref{thm:stalgmaink4,thm:stalgmaink35} for large values of \(k\).

\stalgmainnk*

We compliment our algorithms for the standard setting with an easy generalization of the
conditional lower bound of \cite{ckw23}.

\lbwed
\lbpmwed
\begin{proof}
    Let us first consider the case of $0.5 \le \kappa \le 1$, assuming $\delta \le 0.25$
    without loss of generality.
    Suppose that we are given strings $X,Y$ of length at most $M$ over an alphabet
    $\Sigma$, a threshold $d\in \mathbb{R}_{\ge 1}$
    satisfying $d\le M^\kappa$, and oracle access to a normalized weight function $w$.
    Define $k=\lceil d \rceil + 1$ and $N = \lceil k^{1/\kappa}\rceil = \Oh(d^{1/\kappa})=
    \Oh(M)$.
    We construct an instance of \PMWED with pattern $P = \$ X \$^{N+1}$, text $T = \$ Y
    \$^N \#$, where $\$, \# \not\in \Sigma$, integer threshold $k$, and the weight
    function $w$ extended so that $\w{\$}{\$} = \w{\#}{\#} = 0$, $\w{\$}{\#} = k - d \geq
    1$, and other edits involving $\$$ or $\#$ cost $10k$.

    If $\edw{X}{Y}\le d$, then $\edw{P}{T}\le
    \edw{\$}{\$}+\edw{X}{Y}+\edw{\$^{N+1}}{\$^N\#} \le {0 + d + k - d = k}$, so
    $\OccW_k(P,T)\ne \emptyset$.
    Conversely, if $\edw{P}{T\fragmentco{i}{j}}\le k$ holds for some fragment of~$T$, then
    every $\$$ in $P$ has to be matched with a $\$$ or a $\#$ in $T$ (otherwise, it would
    contribute at least $10k$), so ${\edw{X}{Y}} \le k - \edw{\$}{\#} = d$.
    Hence, to decide $\edw{X}{Y} \le d$, it suffices to check $\OccW_k(P,T)\ne \emptyset$
    using the hypothetical \PMWED algorithm, which is applicable because $m > N \ge
    k^{1/\kappa}$.
    This approach takes $\Oh(m+m^{\min(0.5+1.5\kappa,
    2.5\kappa)-\delta})=\Oh(M+M^{0.5+1.5\kappa-\delta})=\Oh(M^{0.5+1.5\kappa-\delta})$
    time, contradicting \cref{thm:lb_wed}.

    For $0.4 < \kappa \le 0.5$, we assume $\delta \le 1.25-0.5/\kappa$ without loss of
    generality and, instead of $d \le M^\kappa$, pick $d\le M^{0.5}$ so that $N =
    \Oh(d^{1/\kappa})=\Oh(M^{0.5/\kappa})$.
    The hypothetical approach takes $\Oh(m+m^{\min(0.5+1.5\kappa,
    2.5\kappa)-\delta})=\Oh(M^{0.5/\kappa}+M^{0.5/\kappa \cdot
    (2.5\kappa-\delta)})=\Oh(M^{1.25-\delta})$ time, contradicting \cref{thm:lb_wed}.

    In the final case of $0\le \kappa \le 0.4$, the hypothetical running time for \PMWED
    is $\Oh(m^{2.5\kappa-\delta})=\Oh(m^{1-\delta})$, so it suffices to observe that
    solving \PMWED requires reading the entire pattern.
\end{proof}

\subsection*{The Compressed Setting}

A straight-line program (SLP) is a context-free grammar $\G$ that consists of a set
$\Sigma$ of terminals and a set $N_\G \coloneqq \{A_1,\dots,A_n\}$ of non-terminals such that each
$A_i \in N_\G$ is associated with a unique production rule
$A_i \rightarrow f_\G(A_i) \in (\Sigma \cup \{A_j : j < i\})^*$.
We may assume, without loss of generality, that each production rule is of the form $A
\rightarrow BC$ for some symbols $B$ and~$C$.

Every symbol $A \in S_G:=N_G \cup\Sigma$ generates a unique string, which we denote by
$\gen(A) \in \Sigma^*$;
we obtain this string from $A$ by repeatedly replacing each non-terminal with its
production. We say that~$\G$ generates $\gen(A_n)$.

Combining a recent result of Duyster and Kociumaka~\cite[Theorem
1]{DBLP:conf/spire/DuysterK24} on answering $\ipmOpName$ queries over run-length straight-line
programs into~\cite[Theorem 7.13]{ckw20}, which summarizes results
from~\cite{BilleLRSSW15,I17,ckw20}, we obtain the following implementation of the \pillar
model in the compressed setting.%
\footnote{See also the discussion just
after~\cite[Corollary~2]{DBLP:conf/spire/DuysterK24}.}

\begin{theoremq}[{\cite{BilleLRSSW15,I17,ckw20,DBLP:conf/spire/DuysterK24}}]
    \dglabel{3-30-1}
    Given a collection of SLPs of total size $n$, generating
    strings of total length $N$, each PILLAR operation can be performed in $\cO(\log^2 N)$ time after an
    $O(n \log N)$-time preprocessing.
\end{theoremq}

Combining \cref{thm:main,3-30-1} in a non-trivial, but by now standard procedure
(see~\cite{bkw19,ckw20,ckw22}), we obtain the following result.

\begin{corollaryq}[Compressed Setting]
    \dglabel{cor:compalgmain}[thm:main,3-30-1]
    Let $\G_T$ denote an SLP of~size~$n$ generating a string~$T$,
    let~$\G_P$ denote an SLP  of~size~$m$ generating a string~$P$,
    and set $N \coloneqq |T|$ and $M \coloneqq |P|$.

    Given $\G_T$, $\G_P$, an integer threshold $k$, and oracle access to a normalized
    weight function $w: \sqEsigma \to \intvl{0}{W}$,
    we can compute $|\OccW_k(P, T)|$
    \begin{itemize}
        \item in time $\cOtilde(m + n\, k^{4})$ (for general \(w\)), and
        \item in time $\cOtilde(m + n\, k^{3.5} W^4)$ if $w$ is an integer metric weight function.
    \end{itemize}
    Further, we can report the elements of $\OccW_k(P, T)$ within $\cO(|\OccW_k(P, T)|)$
    extra time.
\end{corollaryq}

\subsection*{The Dynamic Setting}

Write $\mathcal{X}$ for an initially empty collection of strings that is subject
to the following updates.
\begin{itemize}
    \item $\texttt{Makestring}(U)$: Insert a non-empty string $U$ to $\mathcal{X}$.
    \item $\texttt{Concat}(U,V)$: Insert string $UV$ to $\mathcal{X}$, for $U,V\in \mathcal{X}$.
    \item $\texttt{Split}(U,i)$: Insert $U\fragmentco{0}{i}$ and $U\fragmentco{i}{|U|}$ to
        $\mathcal{X}$, for $U\in\mathcal{X}$ and $i\in \fragmentco{0}{|U|}$.
\end{itemize}
Let $N$ denote an upper bound on the cumulative length of all strings in $\mathcal{X}$
throughout the execution of the algorithm.

The collection $\mathcal{X}$ can be maintained with operations $\mathtt{Makestring}(U)$,
$\mathtt{Concat}(U,V)$, $\mathtt{Split}(U,i)$ requiring time $\Oh(\log N+|U|)$, $\Oh(\log
N)$ and $\Oh(\log N)$, respectively, so that \pillar operations can be performed in time
$\Oh(\log^2 N)$; see
\cite{DBLP:conf/soda/GawrychowskiKKL18,DBLP:conf/spire/DuysterK24}.\footnote{All stated
time complexities hold with probability $1-1/N^{\Omega(1)}$.}

Observe that Kempa and Kociumaka~\cite[Section~8]{DBLP:conf/stoc/KempaK22} presented an
alternative deterministic implementation at the cost of extra $\mathrm{poly}(\log \log
N)$-factor overheads in the running time.

By combining the data structures
of~\cite{DBLP:conf/soda/GawrychowskiKKL18,DBLP:conf/spire/DuysterK24,DBLP:conf/stoc/KempaK22},
\cref{thm:main}, and the standard trick (recall the first paragraph of \cref{sec:reduction})
we obtain the following result.

\begin{corollaryq}[Dynamic Setting]
    \dglabel{cor:dynalgmain}[rem:std,thm:main]
    We can maintain a collection $\X$ of non-empty persistent strings of~total length $N$
    subject to
        $\makestring(U)$ in time $\cO(\log N +|U|)$,
        $\concat(U,V)$ in time $\cO(\log N)$,
            and
        $\splitOp(U,i)$ in time $\cO(\log N)$,
    such that given two strings $P, T
    \in \X$, an integer threshold $k>0$, and oracle access to a weight function $w:
    \sqEsigma \to \intvl{0}{W}$, we can compute $\OccW_k(P, T)$%
    \footnote{All running time bounds hold with
        probability $1-1/N^{\Omega(1)}$. This result can be derandomized at the cost of a
        $\mathrm{poly}(\log \log N)$-factor overhead.}
    \begin{itemize}
        \item in time $\cOtilde(|T|/|P| \cdot k^{4})$ (for general \(w\)); and
        \item in time $\cOtilde(|T|/|P| \cdot k^{3.5} W^4)$ if $w$ is an integer metric weight function.
            \qedhere
    \end{itemize}
\end{corollaryq}

\subsection*{The Quantum Setting}
We say an algorithm on an input of size $n$ succeeds \emph{with high probability} if the
success probability can be made at least $1-1/n^c$ for any desired constant $c>1$.

Let us turn our attention to the quantum setting, where we assume that the input strings
can be accessed in a quantum query model~\cite{AMB04,DBLP:journals/tcs/BuhrmanW02}.
We are interested in the time complexity of quantum algorithms~\cite{BBCplus}.
The near-optimal quantum algorithm of Hariharan and
Vinay~\cite{DBLP:journals/jda/HariharanV03} for the decision version of exact pattern
matching, encapsulated in the following statement,
implies that \pillar operations on strings of length $\cO(m)$ can be performed in
$\cOtilde(\sqrt{m})$ time without any preprocessing. See \cite{kCPM} for details.

\begin{theoremq}[\cite{DBLP:journals/jda/HariharanV03}]\dglabel{the:quantumPM}
    Consider a string $T$ of length $n$ and a string $P$ of length $m \le n$.
    We can decide whether $P$ occurs in $T$ in $\cOtilde(\sqrt{n})$ time in the quantum
    model with high probability.
    If the answer is YES, then the algorithm returns a witness occurrence.
\end{theoremq}

Combined, the standard trick, \cref{thm:main,the:quantumPM} yield the following result.

\begin{corollaryq}[Quantum Setting]\dglabel{cor:quantum}[rem:std,thm:main,the:quantumPM]
    In the quantum model,
    \PMWED can be solved w.h.p.
    \begin{itemize}
        \item in time $\cOtilde(k^4 \cdot n/\sqrt{m})$ for general weights; and
        \item in time $\cOtilde(k^{3.5}W^4 \cdot n/\sqrt{m})$ if $w$ is an integer metric weight function.
            \qedhere
    \end{itemize}
\end{corollaryq}

An open question is whether faster quantum algorithms can be designed for \PMWED as in~\cite{KNW24,KNW25} for \PMED.

\subsection*{The Packed Setting}

In the Real RAM model of computation with word size $\Theta(\log n)$, we may store up to
$\Omega(\log_{|\Sigma|} n)$ characters in a single machine word.
The \emph{packed representation} of a string $S$ over an integer alphabet
$\fragmentco{0}{\sigma}$ is a list obtained by storing
$\Theta(\log_\sigma n)$ characters per machine word---thus allowing us to store $S$ in
just $O(|S|/\log_\sigma n)$ machine words.

\begin{theoremq}[{\cite{DBLP:conf/stoc/KempaK19,IPM}}]\dglabel{thm:packed}
    Given a packed representation of a string $S$ of length $n$, after an $\cO(n /
    \log_\sigma n)$-time preprocessing, we can perform
    \pillar operations in $\cO(1)$ time.
\end{theoremq}

Combined, the standard trick, \cref{thm:main,thm:packed} yield the following result.

\begin{corollaryq}[Packed Setting]\dglabel{cor:packed}[rem:std,thm:main,thm:packed]
    Given packed representations of a text $T$ of~length $n$ and a pattern $P$ of~length
    $m$ over alphabet $\fragmentco{0}{\sigma}$,
    an integer threshold $k$, and oracle access to a normalized weight function $w:
    \sqEsigma \to \intvl{0}{W}$,
    we can compute $\OccW_k(P,T)$
    \begin{itemize}
        \item in time $\cO(n / \log_{\sigma} n + k^4 \cdot n/m \cdot \log^2(mk))$ (for
            general \(w\)); and
        \item in time $\Oh(n / \log_{\sigma} n) + \Ohtilde(k^{3.5} W^4 \cdot n/m)$ if $w$ is an integer metric weight function.
            \qedhere
    \end{itemize}
\end{corollaryq}

\subsection*{The Read-only Setting}

Finally, we consider the read-only setting, where we have random access to $P$ and $T$,
but our extra working space is limited.

\begin{theoremq}[{\cite{DBLP:conf/cpm/BathieCS24,DBLP:conf/cpm/KosolobovS24}}]\dglabel{thm:small_space}
    Suppose that we have read-only random access to a $n$-length string $S$ of length $n$
    over an integer alphabet.
    For any $\tau \in \fragment{1}{n}$, after a preprocessing step that uses $\cOtilde(n)$
    time and $\cOtilde(n/\tau)$ extra space,
    \pillar operations can be performed in time $\cOtilde(\tau)$.
\end{theoremq}

Combined, the standard trick, \cref{thm:main,thm:small_space} yield the following result.

\begin{corollaryq}[Read-only
    Setting]\dglabel{cor:readonly}[rem:std,thm:main,thm:small_space]
    Suppose that we have read-only random access to a text~$T$ of~length $n$, a pattern
    $P$ of~length $m$, and .
    Given an integer threshold $k$, and oracle access to a normalized weight function $w:
    \sqEsigma \to \intvl{0}{W}$, for any $\tau \in \fragment{1}{n}$,
    we can compute $\OccW_k(P,T)$
    \begin{itemize}
        \item in time $\cOtilde(n+k^{4}\tau \cdot n/m)$
            using $\cOtilde(m/\tau+k^{4})$ extra space
            (for general \(w\)); and
        \item in time $\cOtilde(n+k^{3.5}W^4\tau \cdot n/m)$
            using $\cOtilde(m/\tau+k^{3.5}W^4)$ extra space
            if $w$ is an integer metric weight function.
            \qedhere
    \end{itemize}
\end{corollaryq}

\bibliographystyle{alphaurl}
\bibliography{main}
\tableofresults

\end{document}